\documentclass[11pt]{article}

\usepackage[margin=1in]{geometry}

\usepackage{amsmath,amssymb,amsthm,graphicx,stmaryrd}
\usepackage{verbatim}
\usepackage{enumitem}
\usepackage{xcolor}
\usepackage[enableskew]{youngtab}
\usepackage{tikz,tikz-cd,genyoungtabtikz,pgfplots, bbm, mathtools}
\usepackage{caption,subcaption}
\usetikzlibrary{patterns}
\usetikzlibrary{patterns.meta}
\usetikzlibrary{arrows.meta}
\usetikzlibrary{decorations.markings}
\usepackage[colorlinks=true, allcolors=purple]{hyperref}

\usepackage[utf8]{inputenc}
\usepackage{graphicx}
\usepackage{bbm}
\usepackage{url}
\usepackage{amsfonts}
\usepackage{mathtools}
\usepackage[T1]{fontenc}
\usepackage{enumitem}

\usepackage{lmodern}
\usepackage{authblk}

\usetikzlibrary{intersections}

\newcommand{\bplus}{%
  \tikz[baseline=-0.5ex] \draw[line width=2pt] (0,-0.1) -- (0,0.1) ( -0.1,0 ) -- (0.1,0);
}

\newcommand{\wplus}{%
  \tikz[baseline=-0.5ex]{
  \draw[line width=2pt] (0,-0.1) -- (0,0.1) ( -0.1,0 ) -- (0.1,0);;
    \draw[line width=1pt, white] (0,-0.08) -- (0,0.08) ( -0.08,0 ) -- (0.08,0);
    }
}

\newcommand{\bminus}{%
  \tikz[baseline=-0.5ex] \draw[line width=2pt] (-0.1,0) -- (0.1,0);
}

\newcommand{\wminus}{%
  \tikz[baseline=-0.5ex]{
  \draw[line width=2pt] (-0.1,0) -- (0.1,0);
  \draw[line width=1pt, white] (-0.08,0) -- (0.08,0);
  }
}

\newcommand{\cross}{\mathchoice
	{\vcenter{\hbox{\includegraphics[width=.8em]{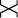}}}}
	{\vcenter{\hbox{\includegraphics[width=.8em]{cross.pdf}}}}
	{\vcenter{\hbox{\includegraphics[width=.6em]{cross.pdf}}}} 
	{\vcenter{\hbox{\includegraphics[width=.5em]{cross.pdf}}}} 
}
\newcommand{\dbar}{\mathchoice
	{\vcenter{\hbox{\includegraphics[width=.8em]{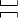}}}}
	{\vcenter{\hbox{\includegraphics[width=.8em]{dbar.pdf}}}}
	{\vcenter{\hbox{\includegraphics[width=.6em]{dbar.pdf}}}} 
	{\vcenter{\hbox{\includegraphics[width=.5em]{dbar.pdf}}}} 
}

\pgfdeclarepattern{
	name=hatch,
	parameters={\hatchsize,\hatchangle,\hatchlinewidth},
	bottom left={\pgfpoint{-.1pt}{-.1pt}},
	top right={\pgfpoint{\hatchsize+.1pt}{\hatchsize+.1pt}},
	tile size={\pgfpoint{\hatchsize}{\hatchsize}},
	tile transformation={\pgftransformrotate{\hatchangle}},
	code={
		\pgfsetlinewidth{\hatchlinewidth}
		\pgfpathmoveto{\pgfpoint{-.1pt}{-.1pt}}
		\pgfpathlineto{\pgfpoint{\hatchsize+.1pt}{\hatchsize+.1pt}}
		\pgfpathmoveto{\pgfpoint{-.1pt}{\hatchsize+.1pt}}
		\pgfpathlineto{\pgfpoint{\hatchsize+.1pt}{-.1pt}}
		\pgfusepath{stroke}
	}
}

\tikzset{
	hatch size/.store in=\hatchsize,
	hatch angle/.store in=\hatchangle,
	hatch line width/.store in=\hatchlinewidth,
	hatch size=5pt,
	hatch angle=0pt,
	hatch line width=.5pt,
}

\renewcommand{\d}{\mathrm{d}}
\newcommand{\ising}{\mathrm{ising}}
\newcommand{\XY}{\mathrm{XY}}
\newcommand{\spin}{\mathrm{spin}}
\newcommand{\horiz}{\mathrm{horiz}}
\renewcommand{\vert}{\mathrm{vert}}

\newcommand{\mir}{\mathrm{mir}}

\newcommand{\NW}{\mathrm{NW}}
\newcommand{\N}{\mathrm{N}}
\newcommand{\W}{\mathrm{W}}
\newcommand{\NE}{\mathrm{NE}}
\newcommand{\AT}{\mathrm{AT}}

\newcommand{\ATRC}{\mathrm{ATRC}}
\newcommand{\wind}{\mathrm{wind}}
\newcommand{\perc}{\mathrm{perc}}
\newcommand{\grad}{\mathrm{grad}}
\newcommand{\hf}{\mathrm{hf}}
\newcommand{\poi}{\mathrm{poi}}
\newcommand{\lp}{\mathrm{loop}}

\newcommand{\f}{\mathrm{f}}
\newcommand{\sv}{\mathrm{6v}}
\newcommand{\eiv}{\mathrm{8v}}
\newcommand{\eivRC}{\mathrm{8vRC}}

\newcommand{\ul}{\underline}

\newcommand{\Var}{\mathrm{Var}}

\newcommand{\CC}{\mathbb{C}}
\newcommand{\EE}{\mathbb{E}} 
\newcommand{\FF}{\mathbb{F}}
\newcommand{\GG}{\mathbb{G}}

\newcommand{\LL}{\mathbb{L}}
\newcommand{\NN}{\mathbb{N}}
\newcommand{\PP}{\mathbb{P}} 
\newcommand{\RR}{\mathbb{R}} 

\renewcommand{\SS}{\mathbb{S}}
\newcommand{\TT}{\mathbb{T}} 
\newcommand{\VV}{\mathbb{V}}

\newcommand{\ZZ}{\mathbb{Z}}

\renewcommand{\c}[1]{\mathcal{#1}}
\newcommand{\cA}{\mathcal{A}}
\newcommand{\cB}{\mathcal{B}} 

\newcommand{\cC}{\mathcal{C}}
 
\newcommand{\cF}{\mathcal{F}}
\newcommand{\cE}{\mathcal{E}}

\newcommand{\cH}{\mathcal{H}}
\newcommand{\cM}{\mathcal{M}}

\newcommand{\cO}{\mathcal{O}}

\newcommand{\cP}{\mathcal{P}} 

\newcommand{\cW}{\mathcal{W}}

\newcommand{\cZ}{\mathcal{Z}}

\newcommand{\D}{\Delta} 
\newcommand{\G}{\mathcal{G}}

\newcommand{\lam}{\lambda}

\newcommand{\Om}{\Omega}
\newcommand{\om}{\omega}

\newcommand{\eps}{\varepsilon}

\newcommand{\odd}{\mathrm{odd}}
\newcommand{\even}{\mathrm{even}}

\newcommand{\cyl}{\mathrm{cyl}}

\newcommand{\inn}{\mathrm{in}}
\newcommand{\outt}{\mathrm{out}}

\newcommand{\oo}{\infty}

\newcommand{\es}{\varnothing}

\newcommand{\p}{\mathbf{p}}

\newcommand{\id}{\mathrm{id}}

\newcommand{\Tr}{\mathrm{Tr}}

\newcommand{\one}{\hbox{\rm 1\kern-.27em I}}

\newcommand{\be}{\begin{equation}}
\newcommand{\ee}{\end{equation}}
\newcommand{\bes}{\begin{equation*}}
\newcommand{\ees}{\end{equation*}}

\newtheoremstyle{slthm}
{}
{\baselineskip}
{\slshape}
{\parindent}
{\scshape}
{.}
{ }
{}

\theoremstyle{slthm}
\newtheorem{definition}{Definition}[section]
\newtheorem{theorem}[definition]{Theorem}
\newtheorem{proposition}[definition]{Proposition}
\newtheorem{lemma}[definition]{Lemma}
\newtheorem{corollary}[definition]{Corollary}
\newtheorem{remark}[definition]{Remark}

\usepackage[
    backend=biber,
    style=numeric,
    maxnames=50
]{biblatex}

\renewbibmacro{in:}{}
\begin{document}
	\title{The spin-1/2 Heisenberg XXZ chain and the Lorentz mirror model with loop weight 2}
	\author{Kieran Ryan}
	\affil{Technical University of Vienna \\
    \href{mailto:kieran.ryan@tuwien.ac.at}{kieran.ryan@tuwien.ac.at}}
    
	\date{}
	\maketitle
	
	\begin{abstract}

    We prove that for the spin-$\tfrac{1}{2}$ Heisenberg XXZ chain in the range $\Delta\in[-1,\tfrac{1}{2}]$, the ground state on the torus of length $L$ converges to an infinite volume ground state $\langle\cdot\rangle$ as $L\to\infty$, and that the spin-spin correlation $\langle S_0^{(1)}S_x^{(1)}\rangle$ decays polynomially fast in $x$. In the range $\Delta\in[-1,0]$ we have the stronger results: that the convergence holds for several finite-volume, finite temperature states, that $L$ and $\beta=1/T$ can be taken to infinity in any order, that the convergence holds on (Euclidean) dynamic correlators, and that $\langle S_0^{(1)}S_x^{(1)}(t)\rangle$ decays to zero in $|(x,t)|$, and cannot decay exponentially fast.

    The convergence of the ground state in infinite volume is known rigorously by Bethe Ansatz methods, while our correlation decay results (apart from the points $\Delta=0,-1$), the convergence on dynamic correlators and the interchangeability of limits are new at the rigorous level. Moreover our methods are new and do not use any Bethe Ansatz techniques, using only the following related probabilistic models.
    
    In the Lorentz mirror model with loop weight 2, a model of random loops on $\ZZ^2$, in a large range of parameters we prove that connection probabilities tend to 0, and do so polynomially fast in the symmetric case of the model. Our proof uses two couplings of this mirror model with the six-vertex model, one of which is new. The main input is the delocalisation of the height function of the six-vertex model proved by several authors. 

    We further prove that the height function of a certain space-time version of the six-vertex model delocalises and then prove through analogous couplings to those mentioned above that connection probabilities converge to 0 in the loop representation of the XXZ model introduced by Ueltschi. 
    
    We also give the exact rate of decay of $\langle S_0^{(3)}S_x^{(3)}\rangle$ for $\Delta\in[-1,-\tfrac{1}{2}]$ as a corollary of recent work on the six vertex model by Duminil-Copin, Kozlowski, Lammers and Manolescu and of $\langle S_0^{(1)}S_x^{(1)}(t)\rangle$ for $\Delta=0$ as a corollary of work of Li and Mahfouf.

	\end{abstract}

\tableofcontents

\part{Introduction, results and background}

\section{Introduction and results}
This paper deals with the problem of classifying the infinite volume ground states of the spin-$\tfrac{1}{2}$ Heisenberg XXZ chain. The Hamiltonian of the model is given by 
\be\label{eq:ham-xxz}
    H_\Lambda=-\sum_{xy\in\c E(\Lambda)}
    \big[
    S^{(1)}_xS^{(1)}_y + S^{(2)}_xS^{(2)}_y + \Delta S^{(3)}_xS^{(3)}_y
    \big],
    \qquad
    \mathrm{acting \ on \ } (\CC^2)^\Lambda,
\ee
with anisotropy parameter $\Delta\in\RR$, where $\Lambda$ is a finite graph (here either an interval length $L$ in $\ZZ$ or a finite 1D torus length $L$), $E(\Lambda)$ its set of edges, and $S^{(i)}_x$, $i=1,2,3$, are the usual spin operators at a site $x\in\Lambda$, given by $\tfrac{1}{2}\sigma^{(i)}_x$, with $\sigma^{(i)}, i=1,2,3,$ the Pauli matrices. 
There are the special cases $\Delta=1$ (the
Heisenberg ferromagnet), $\Delta=-1$ (the Heisenberg antiferromagnet),
and $\Delta=0$ (the XY model). 
	
For $\Delta\in[-1,1)$, the XXZ chain is ``critical'' (also called massless), in the
sense that there is expected to be a unique, gapless ground
state with polynomially decaying correlations. This paper's aim to make progress
towards this conjecture, and in particular without appealing to the Bethe ansatz.\\

In finite volume, the ground state was shown to be unique by Yang and Yang \cite{yang-yang-1}, and Lieb, Schultz and Mattis \cite{lieb-schultz-mattis}
showed that the spectral gap is at most $\mathrm{const.}/L$,
with $L$ the length of the system. Note that this does not, however, prove that the infinite volume system is gapless. To the author's knowledge,
the only rigorous proof of a unique ground state in infinite
volume is for the XY model ($\Delta=0$) by Araki and Matsui
\cite{araki-matsui-XY}. Affleck and Lieb
\cite{affleck-lieb-XXZ} proved that a unique ground state
implies there is no spectral gap; their result extends to all
half-odd-integer spins and is some evidence for the Haldane
conjecture. Polynomial decay of correlations which are slow enough (in this case of order distance$^{-\alpha}$, $\alpha<1$)  also implies no
spectral gap, see for example Problem 6.1.a in
\cite{tasaki-book}. Let us also mention the probabilistic work of Duminil-Copin, Li and
Manolescu \cite{dc-li-mano-FK}, who prove that the associated loop model of the antiferromagnet
($\Delta=-1$) (see Section \ref{sec:loop-results}), has a unique
infinite volume Gibbs measure.

The XXZ chain is
widely studied using exact solutions methods; see for example
the textbooks \cite{KBI-book, sutherland-book, takahashi-book}. We will mention a selection of results which are relevant to our results; see futher references within these works for more detail. Firstly, the finite volume ground state was proved to converge to an infinite volume ground state (ie. convergence on all local observables) for $\Delta\in[0,1)$ and $\Delta<\Delta_0<-1$ by Dorlas and Samsonov \cite{dorlas-samsonov} and then for all $\Delta<1$ by Kozlowski \cite{karol1}, via proving condensation of the Bethe roots. Other proofs of condensation are given $\Delta<-1$ in \cite{dc-discont-q>4} and for all $\Delta<1$ in \cite{dc-free-energy}. 

Formulae in terms of multiple integrals for general local observables in the thermodynamic limit were previously given by Jimbo, Miki, Miwa and Nakayashiki \cite{jimbo-miki-miwa-nakayashiki} for $\Delta<-1$ and Jimbo and Miwa \cite{jimbo-miwa}, and later by Kitanine, Maillet and Terras \cite{kit-maillet-terras-form-factors, kit-maillet-terras-corr-fns}.

The large-distance asymptotics of the spin-spin correlation functions of the XXZ chain have been conjectured to great precision. The critical exponents of the chain in the critical regime $\Delta\in[-1,1)$ were first conjectured by Luther and Peschel \cite{luther-peschel}:
\bes
    \langle S_0^{(1)}S_x^{(1)}\rangle \sim  |x|^{-\frac{1}{\pi} \arccos(\Delta)},
    \qquad
    \langle S_0^{(3)}S_x^{(3)}\rangle =  \frac{-1}{4\pi\cdot\arccos(\Delta)}|x|^{-2} + o(|x|^{-2}),
\ees
where by $\sim$ we mean that $\langle S_0^{(1)}S_x^{(1)}\rangle \cdot |x|^{\frac{1}{\pi} \arccos(\Delta)}$ converges to a constant in $(0,\infty)$ as $|x|\to\infty$, and later were conjectured in a magnetic field by Haldane \cite{haldane-1, haldane-2, haldane-3}. Higher order terms known as amplitudes were given in works of Lukyanov and Zamolodchikov \cite{lukyanov-zamadolo}, Lukyanov \cite{lukyanov} and Lukyanov and Terras \cite{lukyanov-terras}. 

Alternative multiple integral representations of correlation functions which allow one to analyse asymptotics were derived by Kitanine, Kozlowski, Maillet, Slavnov and Terras \cite{KKMST}, and the works of Kozlowski \cite{karol2, karol3, karol4, karol5} bring these formulae to a single hypothesis that a certain series converges. While the formulae obtained by the Bethe ansatz methods are much more precise than ours, as far as the author is aware, our derivation of polynomial decay is the first rigorous proof of such behaviour.\\

Note that outside the critical regime there are more complete works on classifying the infinite volume ground states.	
For $\Delta<-1$ the $S^{(3)}$ term dominates and one expects behaviour like the antiferromagnetic Ising model ($\Delta=-\infty$). For $\Delta<-1$ and $|\Delta|$ sufficiently large, Matsui \cite{matsui-XXZ-antiferro} proved there are exactly two extremal ground states, and for all $\Delta<-1$, Aizenman, Duminil-Copin and Warzel \cite{ADCW} proved the existence of two distinct ground states (which should be the only two extremal ones) exhibiting Néel order. 
	
For $\Delta\ge1$ one has ferromagnetic behaviour. For
$\Delta>1$, the $S^{(3)}$ term dominates once again, and one
has the behaviour of the ferromagnetic Ising model ($\Delta=+\infty$). Here there are two
translation-invariant ground states (all spins up and all
spins down), and an infinite number of
non-translation-invariant ground states (all spins left of
$x\in\ZZ$ up (resp.\ down) and all spins right of $x$ down
(resp.\ up)), known as kink (or anti-kink) states. This was
proved to be a complete list of all the extremal ground states
by Matsui \cite{matsui-XXZ-ferro}, a result extended to all
spins by Koma and Nachtergaele
\cite{koma-nacht-XXZ-ferro}. The kink and antikink states were
discovered by Alcaraz, Salinas, 
and Wreszinski \cite{ASW}, and Gottstein and Werner \cite{gottstein-werner}. For $\Delta=1$ (the Heisenberg ferromagnet), the Ising behaviour disappears and for all spins, all ground states are translation-invariant \cite{koma-nacht-XXZ-ferro} (in fact for spin-$\tfrac{1}{2}$ they are exactly all of the permutation-invariant states). \\

\subsection{Results for the Heisenberg XXZ chain}
We consider the nearest-neighbour spin-$\tfrac{1}{2}$ quantum XXZ model with anisotropy parameter $\Delta\in\RR$, whose Hamiltonian is \eqref{eq:ham-xxz}. For finite temperature $T=\tfrac{1}{\beta}$, the Gibbs state $\langle\cdot\rangle_{\beta,\Lambda}$ mapping $L((\CC^2)^\Lambda)$ to $\CC$ is defined by 
\be
\langle A \rangle_{\beta,\Lambda} =
    \frac{1}{Z_{\beta,\Lambda}} \Tr[Ae^{-\beta H_{\Lambda}}],
\ee
where $Z_{\beta,\Lambda}=\Tr[e^{\beta H_{\Lambda}}]$. For a vector $\Psi\in(\CC^2)^\Lambda$, we define the seeded state as:
\be
\langle A \rangle_{\beta,\Lambda}^{\Psi} =
    \frac{\langle \Psi | e^{-\tfrac{1}{2}\beta H_{\Lambda}} A e^{-\tfrac{1}{2}\beta H_{\Lambda}} | \Psi \rangle}
    {\langle \Psi | e^{\beta H_{\Lambda}} | \Psi \rangle}.
\ee

From hereon in we consider the model in dimension $d=1$. For $L\in\NN$, we work with the finite graphs $\Lambda_L:=\{-L+1,\dots,L\}\subset\ZZ$ and the discrete torus $\TT_L$ given by adding an extra edge connecting $-L+1$ and $L$ in $\Lambda_L$. We further define
\be\label{eq:psi_1}
    \Psi^\bullet_L = 
        \frac{1}{\sqrt{n}} \sum_{\substack{i\in\Lambda_L \\ i \ \mathrm{even}}} \sum_{a=1}^2 | a,a \rangle_{i-1, i}
    \qquad \text{and} \qquad
    \Psi^\circ_L = 
        \frac{1}{\sqrt{n}} \sum_{\substack{i\in\Lambda_L \\ i \ \mathrm{even}}}   \sum_{a=1}^2 | a,a \rangle_{i, i+1},
\ee
where we interpret $L+1$ as $-L+1$ and $L-2$ as $L$ when they appear in the subscripts. The vector $\Psi^\bullet$ is the vector $\Psi^\circ_L$ shifted by one step to the right on the torus. We write $\Psi_L$ for $\Psi^\bullet_L$ when $L$ is even and $\Psi^\circ_L$ when $L$ is odd; this is the choice so that the pair $L,-L+1$ does not appear in the subscripts.

For $\Lambda$ a finite graph and $A$ a local observable, define $A(t)=A^\Lambda(t)=e^{-tH_\Lambda}A e^{tH_{\Lambda}}$. In the case that $A^\Lambda(t)$ converges in an infinite volume limit, we write its limit as simply $A(t)$.

\begin{theorem}\label{thm:main-xxz}
    Consider the spin-$\tfrac{1}{2}$ Heisenberg XXZ chain with parameter $\Delta$.
    \begin{enumerate}
        \item 
            Let $\Delta\in[-1,\tfrac{1}{2}]$. There exists an infinite volume ground state $\langle\cdot\rangle$, and we have the convergence: 
            \bes
                \langle \cdot \rangle = 
                    \lim_{L\to\infty}\lim_{\beta\to\infty} 
                    \langle \cdot \rangle_{\beta,\TT_L}
                    = \lim_{L\to\infty}\lim_{\beta\to\infty} 
                    \langle \cdot \rangle_{\beta,\TT_L}^{\Psi^*_L},
            \ees
            for $*=\bullet,\circ$. 
            The ground state $\langle\cdot\rangle$ is extremal and $\ZZ$-translation invariant. 
            Further, there exist $\alpha_1,\alpha_2,C_1,C_2 >0$ such that
            \be\label{eq:xxz-poly-decay}
                C_1|x|^{-\alpha_1}
                \le 
                \langle S^{(1)}_0 S^{(1)}_x\rangle 
                \le
                C_2|x|^{-\alpha_2}
                .
            \ee
            (Note $\langle S^{(1)}_0 S^{(1)}_x\rangle \ge |\langle S^{(3)}_0 S^{(3)}_x \rangle|$).
        \item 
            Let $\Delta\in[-1,0]$. We further have the convergence
            \bes
            \begin{split}
                \langle \cdot \rangle &= 
                \lim_{L\to\infty}\lim_{\beta\to\infty} \langle \cdot \rangle_{\beta,\TT_L}
                =
                \lim_{\beta,L\to\infty} \langle \cdot \rangle_{\beta,\TT_L}^{\Psi^*_L}
                =
                \lim_{\beta,L\to\infty} \langle \cdot \rangle_{\beta,\Lambda_L}
                =
                \lim_{\beta,L\to\infty} \langle \cdot \rangle_{\beta,\Lambda_L}^{\Psi_L},
            \end{split}
            \ees
            for $*=\bullet,\circ$ and where here, for all but the $\langle \cdot \rangle_{\beta,\TT_L}$ case, the limits in $\beta$ and $L$ can be taken in either order or simultaneously. Further, the convergence holds on all observables of the form
            \bes
                \langle \prod_{k=1}^r A_k(t_k)\rangle_\Lambda 
                \to 
                \langle \prod_{k=1}^r A_k(t_k)\rangle,
            \ees
            where here we write $\langle\cdot\rangle_\Lambda$ for any of the finite volume, finite temperature states for which the convergence above holds. The state $\langle \cdot \rangle$ is $\ZZ\times\RR$ invariant and
            we have for any local observables $A,B$ that $\langle A \cdot \tau_xB(t)\rangle \to \langle A \rangle \langle B \rangle$, where $\tau_x$ is the shift in $\ZZ$ by $x$, giving
            \bes
                \langle S^{(1)}_0 S^{(1)}_x(t) \rangle \to 0,
            \ees
            as $||(x,t)||\to\infty$. Finally, for all $\eps>0$, we have that for any fixed $(x_0,t_0)\in\ZZ\times\RR$,
            \be\label{eq:q:xxz-no-exp-decay}
                \sum_{n=1}^\infty n \cdot \langle S^{(i)}_0 S^{(i)}_{nx_0}(nt_0)\rangle^{1-\eps} =\infty.
            \ee
            \end{enumerate}
\end{theorem}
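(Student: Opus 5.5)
The plan is to pass through the probabilistic representations announced in the abstract. By the Tóth/Aizenman–Nachtergaele/Ueltschi construction, the Gibbs state $\langle\cdot\rangle_{\beta,\Lambda}$ of \eqref{eq:ham-xxz} with $\Delta\in[-1,0]$ is a measure on Poisson configurations of crosses and double bars on $\Lambda\times[0,\beta]_{\mathrm{per}}$. The weights are $\tfrac{1+\Delta}{2}$ for crosses and $\tfrac{1-\Delta}{2}$ for double bars, up to signs that are absorbed by a sublattice rotation when $\Delta\le0$. Configurations are further weighted by $2^{\#\mathrm{loops}}$. In this representation,
\bes
\langle S^{(3)}_0S^{(3)}_x(t)\rangle=\tfrac14\,\PP[(0,0)\leftrightarrow(x,t)]\cdot(\pm1),
\qquad
\langle S^{(1)}_0S^{(1)}_x(t)\rangle\asymp\PP[(0,0)\leftrightarrow(x,t)],
\ees
with a sign depending on the loop orientation. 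The seeded states $\langle\cdot\rangle^{\Psi}$ correspond to replacing the periodic time boundary by a fixed pairing at times $0$ and $\beta$, which is exactly the structure of $\Psi^\bullet,\Psi^\circ$. For the ground state part (part 1, $\Delta\in[-1,\tfrac12]$), I would instead discretise time and use the Lorentz mirror model with loop weight $2$ on $\ZZ^2$ in the limit where one mirror type is rare. For $\Delta>0$ signs appear, so I would work only at $\beta=\infty$ via a Trotter-type limit of the mirror model whose parameters remain in the range where the six-vertex coupling holds ($\Delta\le\tfrac12$, i.e.\ six-vertex $\Delta_{6v}\in[-1,-\tfrac12]$ type parameter, where delocalisation is known).

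The key steps, in order, are as follows.
\begin{enumerate}
\item Couple the loop model with loop weight $2$ to a six-vertex (space-time) height function by orienting loops independently. An orientation of a loop gives weight $1$ each, recovering the factor $2$. The resulting height function is delocalised, with variance $\asymp\log$ of the distance, by the cited results, or by the new space-time delocalisation for the continuous-time version.
\item Use a second coupling, in which loops are level lines of the height function, to show that a long loop connecting $0$ to $x$ forces the height to be localised along it. Together with delocalisation this gives $\PP[0\leftrightarrow x]\to0$. In the symmetric case, a quantitative RSW-type polynomial bound gives $\le C|x|^{-\alpha_2}$.
\item For the lower bound in \eqref{eq:xxz-poly-decay} and for \eqref{eq:q:xxz-no-exp-decay}, use that the height variance grows at most logarithmically. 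This means loops around the origin exist at all scales with probability bounded below, which gives polynomial lower bounds on connections. Applying Cauchy–Schwarz/summation along the ray $n(x_0,t_0)$ then yields divergence of $\sum n\,\PP^{1-\eps}$.
\item Deduce convergence of states. Every local (dynamic) observable is a function of the loop configuration in a bounded window plus the connectivity pattern among its endpoints. Vanishing of connection probabilities to far away implies that the boundary conditions (torus versus interval, seeds $\Psi^\bullet,\Psi^\circ$, or periodic time) influence the local pattern only through loops that travel far. The difference between states is therefore bounded by $\PP[\text{window}\leftrightarrow\text{distance }R]\to0$ uniformly in how $\beta,L\to\infty$. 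Uniqueness of the infinite-volume loop Gibbs measure then gives a common limit, which is translation invariant and has the mixing/extremality property $\langle A\,\tau_xB(t)\rangle\to\langle A\rangle\langle B\rangle$.
\end{enumerate}

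The main obstacle is step 4: the convergence of the finite-volume loop measures themselves, not just their connection probabilities. The weight $2^{\#\mathrm{loops}}$ is non-local, so I would avoid proving uniqueness directly. Instead I would push the convergence through the six-vertex coupling, where infinite-volume limits of the height gradient measure are known to exist and be unique in the delocalised regime. The loops would then be recovered as a local function of the gradient plus independent randomness. Handling the torus with its winding sectors, and the order of the $L,\beta$ limits in part 1 where only $\Delta\le\tfrac12$ at $\beta=\infty$ is available, is where I expect the most care to be needed.
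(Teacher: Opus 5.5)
Your step~4 has a genuine gap, and it is the core of the theorem. An observable such as $\prod_k E^{x_k}_{i_k^-,i_k^+}$ corresponds in the loop picture to $\mu[2^{-l_{E_X}(m)}\mathbbm{1}_{\{m\sim(X,\ul{i})\}}]$. This depends on how finitely many endpoints are paired by loops that may wander arbitrarily far, so it is not a local function of the loop configuration. Convergence of the loop measures on local events therefore does not give convergence of the states, even though you rightly propose to obtain that convergence through the six-vertex coupling, as in Proposition~\ref{prop:mir-conv-from-6v}. Your error term $\PP[\text{window}\leftrightarrow\text{distance }R]\to0$, uniformly in the volume, is an absence-of-infinite-loops statement. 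It is not implied by decay of two-point connection probabilities, and it is not known here because the loop model has no monotonicity. Likewise, uniqueness of the loop Gibbs measure is only available at special parameters, so it cannot supply the common limit or extremality. The missing idea is to write the observable as a six-vertex ratio with disorders, $Z^{\bullet,X,\ul{i}}/Z^{\bullet}$, and localise it with $\xi^\bullet$. On the event $\cO_{n,N}$ that a $\xi^\bullet$-circuit surrounds the disorders, the domain Markov property still holds with disorders inside. It turns the ratio into $\mu[\mathbbm{1}_{\cO_{n,N}}Z^{X,\ul{i}}_{\gamma_{\inn}}/Z_{\gamma_{\inn}}]$, with $\gamma$ the outermost circuit, which is a local function. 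On $\cO_{n,N}^c$ one proceeds in three steps:
\begin{itemize}
\item condition on the height along a contour around the disorders (well defined because sources equal sinks);
\item use FKG of the height function outside this contour to compare with a flat value on a larger circuit;
\item apply delocalisation to produce a circuit of large height, which gives a $\xi^\bullet$-circuit.
\end{itemize}
Extremality comes from running the same decomposition with two distant disorder clusters. In continuous time this also needs the space-time delocalisation, which holds only for $u\le\tfrac12$ since $\xi^\bullet$ is sampled at rate $1-2u$. It further needs control of the number of Poisson points on the conditioning contour.

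Part~1 for $\Delta\in(0,\tfrac12]$ also fails on your route. A Trotter limit forces weights $a=\eps$, $b\approx1-(1-u)\eps$, $c\approx1-u\eps$, and $b\le c$ means $\Delta\le0$. So you leave the regime $a,b\le c\le a+b$ where delocalisation is known. With $a=b$ that regime is exactly $\Delta\in[-1,\tfrac12]$; the range $[-1,-\tfrac12]$ is the GFF range. The paper takes no limit in the weights. The row transfer matrix at fixed $a=b\le c\le 2a$ commutes with $H_{\TT_L}$ and shares its unique ground state, so $\lim_\beta\langle\cdot\rangle_{\beta,\TT_L}=\lim_M\langle\cdot\rangle^{\sv}_{\TT_{L,M}}$. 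Seeded states agree because $\langle\Psi^*_L|\Psi^{\mathrm{grd}}_L\rangle>0$, which follows from a $\xi^\bullet$-path around the cylinder.

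Your step~2 is not a valid mechanism. For $p_\es>0$ loops cross and are not level lines of the height function, and RSW needs an FKG model, which the loop model is not. The paper colours loops rather than orienting them, giving an eight-vertex model in which $\PP[e_1\leftrightarrow e_2]=\mu^{\eiv}[\pi^\bullet_{x_1}\pi^\circ_{y_1}\pi^\bullet_{x_2}\pi^\circ_{y_2}]$. This is a double-connection event in the FKG 8vRC (Ashkin--Teller random-cluster) model. Its uniqueness comes from the Ashkin--Teller/six-vertex coupling and non-percolation of $\xi^\bullet$. Strong RSW in the symmetric case then gives the upper bound, and well-separated crossings of $\eta^\bullet$ and $\eta^\circ$ give the lower bound.

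In step~3, an upper bound on the variance is the wrong input for \eqref{eq:q:xxz-no-exp-decay}. What is needed is $\EE|h(f_0)-h(f_n)|\to\infty$, i.e.\ delocalisation. This is combined with bounding the height difference by the number of loops crossing the segments $E_i$ that wind around an endpoint. H\"older's inequality supplies the exponent $1-\eps$, because the number of loops through a vertical interval is unbounded.
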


As noted in the introduction above, the convergence in Part 1 of the theorem is known rigorously for all $\Delta<1$ due to \cite{dorlas-samsonov, karol1}. The polynomial decay \eqref{eq:xxz-poly-decay} is known for $\Delta=-1$ (the antiferromagnet) by \cite{dc-li-mano-FK} and known for $\Delta=0$ (the XY model) by \cite{mccoy68xy} with $\alpha_1=\alpha_2=\tfrac{1}{2}$ (see below). As far as the author is aware, the remaining correlation decay results and the results from Part 2 of the theorem that convergence holds on dynamic correlators and that the limits $\beta,L\to\infty$ can be interchanged are new.

Note that while we do not prove that the limits can be interchanged in the case of the periodic boundary conditions state $\langle \cdot \rangle_{\beta,\TT_L}$, this should be proveable. In the six-vertex model, the fact that the measure on the torus converges in infinite volume is a consequence of the recent delocalisation results and the work of Sheffield \cite{sheffield-thesis} (see Corollary \ref{cor:6v-unique-TI-gibbs} for more detail). One would therefore need to extend the work of Sheffield to suitable space-time models and combine it with our delocalisation result to obtain the interchangeability.

The XXZ model is the most general two-body $O(2)$-invariant spin system, where by invariant we mean that $g^{\otimes \Lambda} H_{\Lambda} (g^{-1})^{\otimes \Lambda} = H_{\Lambda}$ for all $g\in O(2)$, the complex orthogonal group. The general $O(n)$-invariant chain (here $n=2S+1$ where $S$ is the spin number) is expected to have different behaviour for $n\ge3$, indeed in the analogous range for $n\ge3$ dimerization is expected: two extremal ground states which are translations by 1 of each other. In fact the states $\langle \cdot \rangle_{\beta,\TT_L}^{\Psi^*_L}$ for $*=\bullet,\circ$ are expected to converge to the two extremal states, and this is proved for some regions of the parameters in \cite{ADCW, nacht-uelt, BMNU, bjornberg-ryan-dimerization}. See \cite{bjornberg-ryan-dimerization} for an overview of the expected behaviour for $n\ge3$. Our Theorem \ref{thm:main-xxz} shows in particular that this dimerization does not occur for $n=2$, as expected. Note also that the loop models studied in this paper can be seen as manifestations of the $O(2)$-invariance of the system.\\

Note a ground state is defined rigorously as follows. Let $\mathfrak{U}_{\mathrm{loc}}$ be the algebra of all local operators (linear operators on $(\CC^2)^{X}$ for all finite $X\subset \ZZ$) and let $\mathfrak{U}$ be the completion with respect to the operator norm. A state on $\mathfrak{U}$ is a linear map $\rho:\mathfrak{U}\to\CC$ such that $\rho(\id)=1$ and $\rho(A^\dagger A)=\ge 0$ for any $A\in\mathfrak{U}$. A ground state for the system \eqref{eq:ham-xxz} is then a state $\rho$ such that $\rho(A^\dagger [H_\Lambda,A])\ge 0$ for all $A\in\mathfrak{U}_{\mathrm{loc}}$ and for all $\Lambda$ containing the support of $A$, which is the set of $x\in\ZZ$ such that $A$ acts non-trivially on the copy of $\CC^2$ corresponding to $x$. It is an exercise to show that if it exists, the limit of a finite-volume ground state $\langle\cdot\rangle^{\mathrm{grd}}=\langle\Psi^{\mathrm{grd}} \cdot \Psi^{\mathrm{grd}}\rangle / \langle\Psi^{\mathrm{grd}} | \Psi^{\mathrm{grd}} \rangle$ for $\Psi^{\mathrm{grd}}$ the eigenvector of the lowest eigenvalue of $H_{\Lambda}$, as $\Lambda\to\ZZ$, is indeed an infinite volume ground state. We will not use this definition except for this fact.

\begin{theorem}\label{thm:xxz-decay-rates}
    Consider the XXZ model ground state $\langle\cdot\rangle$.
    \begin{enumerate}
        \item 
             Let $\Delta\in[-1,-\tfrac{1}{2}]$. We have that
            \be\label{eq:q:xxz-hugo}
                \langle S^{(3)}_0 S^{(3)}_x \rangle 
                =
                \frac{-1}{4\pi\cdot\arccos(\Delta)}
                |x|^{-2}
                + 
                o(|x|^{-2})
            \ee
            as $|x|\to\infty$.
        \item 
            In the special case $\Delta=0$, we have
            \be\label{eq:q:xy-decay-rate}
            	\langle S^{(1)}_0 S^{(1)}_x(t) \rangle \sim ||(x,t)||^{-\tfrac{1}{2}}
            \ee
            as $||(x,t)||\to\infty$.
    \end{enumerate}
\end{theorem}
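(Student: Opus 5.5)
The plan is to transfer both statements from known results on classical two-dimensional models, using the correspondence between the XXZ ground state and a six-vertex-type model that underlies Theorem~\ref{thm:main-xxz}. The key structural point is that $S^{(3)}$ is diagonal in the $|\uparrow\rangle,|\downarrow\rangle$ basis. Under the standard mapping, the ground state (obtained in Theorem~\ref{thm:main-xxz} as $\lim_L\lim_\beta$ of torus states) restricted to diagonal observables is the law of a single row of arrows in the infinite-volume six-vertex measure with $a=b=1$, $c=\sqrt{2+2\Delta}$, i.e.\ $\Delta_{6v}=(a^2+b^2-c^2)/(2ab)=-\Delta$. The precise route to this is the commuting-transfer-matrix fact that the XXZ Hamiltonian is the logarithmic derivative of the six-vertex transfer matrix. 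Consequently its ground state coincides with the top eigenvector of the transfer matrix for every $c$ with the same $\Delta_{6v}$, by Yang--Yang and Perron--Frobenius uniqueness on the correct sector.

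The first step is therefore to identify $\langle S^{(3)}_0S^{(3)}_x\rangle$ with $\tfrac14\mathbb{E}^{6v}[\sigma_0\sigma_x]$. Here $\sigma_y=\pm1$ is the orientation of the vertical edge above $y$ in a fixed row of the translation-invariant infinite-volume six-vertex measure; by Corollary~\ref{cor:6v-unique-TI-gibbs} this measure is unique and is the torus limit. In height-function terms, $\sigma_y=h(y+1)-h(y)$ up to normalisation. Hence
\bes
\mathbb{E}[\sigma_0\sigma_x]=\partial\partial\,\mathbb{E}[(h(x)-h(0))^2]
\ees
as a discrete second difference in $x$. The range $\Delta\in[-1,-\tfrac12]$ corresponds to $\Delta_{6v}\in[\tfrac12,1]$, i.e.\ $c\in[0,1]$. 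This is exactly the regime covered by Duminil-Copin, Kozlowski, Lammers and Manolescu. They give the asymptotics of the edge–edge (spin–spin) correlation in the six-vertex model as $-\frac{1}{\pi^2 g}|x|^{-2}+o(|x|^{-2})$ along rows, with the Coulomb-gas constant $g$ explicit in terms of $\arccos(-\Delta_{6v})$. I will cite this and check that the constants match: converting $\arccos(\Delta_{6v})=\pi-\arccos(\Delta)$ and including the factor $\tfrac14$ from $S^{(3)}=\tfrac12\sigma^{(3)}$ should give exactly $\frac{-1}{4\pi\arccos(\Delta)}$. Getting these constants to match is the error-prone bookkeeping step of Part~1.

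The main conceptual obstacle in Part~1 is justifying that the row-to-row transfer-matrix identification is compatible with the way the limit state of Theorem~\ref{thm:main-xxz} was constructed, which is via the space-time loop/six-vertex representation rather than a lattice six-vertex model. The approach is to verify this at finite $L$. On $\TT_L$ the finite-volume ground state is unique, so the $\beta\to\infty$ limit is the projection onto the Perron vector of the transfer matrix in the zero-magnetisation sector. That vector's diagonal marginal is the limit as $M\to\infty$ of the one-row marginal of the six-vertex model on an $L\times M$ torus. One then takes $L\to\infty$ using Sheffield's uniqueness together with delocalisation. For $\Delta=-1$ the six-vertex parameter $c=0$ is degenerate. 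There I would instead use continuity of both sides in $\Delta$, or treat that point separately via \cite{dc-li-mano-FK}.

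For Part~2, at $\Delta=0$ the chain is the XX model. Via Jordan--Wigner it is free fermions, and the Euclidean dynamic correlator $\langle S^{(1)}_0S^{(1)}_x(t)\rangle$ is expressed, through the space-time loop representation used for Part~2 of Theorem~\ref{thm:main-xxz}, as a connection probability or spin correlation in a continuum-time dimer/free-fermion model. I expect this to coincide with the continuum limit in one direction of the model studied by Li and Mahfouf, who prove $\|(x,t)\|^{-1/2}$ asymptotics. The step is to match our representation with theirs: the $\Delta=0$ loop weights make the model free-fermionic, so the correlator equals a Toeplitz-type or dimer monomer–monomer correlation in their setting. Their result, together with the convergence on dynamic correlators from Theorem~\ref{thm:main-xxz}(2) (valid since $0\in[-1,0]$), then gives \eqref{eq:q:xy-decay-rate}. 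The identification of models, including which direction degenerates and the resulting rescaling of $t$, is where I expect the most work.
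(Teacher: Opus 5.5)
For Part~1 your route is the paper's. Lemma~\ref{lem:spin-spins-are-mirror-connections} (Part~2), together with Proposition~\ref{prop:xxz-6v-same} and Lemma~\ref{cor:6v-unique-TI-gibbs}, gives $\langle S^{(3)}_0S^{(3)}_x\rangle=\tfrac14\mu^{\sv}_{a,b,c}[(h(u_0)-h(u_0'))(h(u_x)-h(u_x'))]$, and the asymptotics are then read off from \cite{dc-etal-GFF-convergence} in Proposition~\ref{prop:poly-decay}, Part~4. The compatibility problem you single out does not arise, because Part~1 of Theorem~\ref{thm:main-xxz} was obtained from the lattice transfer matrix $T_L$, not from the space-time model.

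However, your parameter identification has the wrong sign, and this breaks exactly the step where you invoke \cite{dc-etal-GFF-convergence}. For $H$ as in \eqref{eq:ham-xxz}, $T_L$ commutes with $H_{\TT_L}$ when $\Delta=(a^2+b^2-c^2)/(2ab)$. This is \eqref{eq:delta}, with no sign change: the ferromagnet $\Delta=1$ is $c=0$, and the antiferromagnet $\Delta=-1$ is $c=a+b$. So with $a=b=1$ you need $c=\sqrt{2-2\Delta}\in[\sqrt3,2]$. That is precisely the range of \cite{dc-etal-GFF-convergence}, and $\Delta=-1$ is not a degenerate point.

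Your choice $c=\sqrt{2+2\Delta}\in[0,1]$ gives a transfer matrix whose Perron vector is the ground state of the chain at anisotropy $-\Delta$. The sublattice rotation does not repair this, since it turns $H_{\TT_L}$ at anisotropy $\Delta$ into \emph{minus} $H_{\TT_L}$ at anisotropy $-\Delta$. That regime lies entirely outside \cite{dc-etal-GFF-convergence}. Apart from the endpoint $c=1$, it also lies outside the hypothesis $a,b\le c$ of Theorem~\ref{thm:6v-deloc} and Lemma~\ref{cor:6v-unique-TI-gibbs}. With the correct identification no conversion of $\arccos$ is needed: the paper's $\sigma^2=2/\arccos\Delta$ gives $\tfrac14\cdot(-\sigma^2/(2\pi))=-1/(4\pi\arccos\Delta)$ directly.

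Part~2 is where the key idea is missing. \cite{li-mahfouf-ising} prove that the spin two-point function of the critical quantum Ising chain decays like $\|(x,t)\|^{-1/4}$, not $\|(x,t)\|^{-1/2}$. So the XX correlator is not ``their quantity'' in another guise, and the matching of models you leave open is in fact the whole proof. The paper's mechanism is a factorisation:
\begin{enumerate}
\item By Ueltschi's identity \cite{ueltschi}, $\langle S^{(1)}_0S^{(1)}_x(t)\rangle=\tfrac14\lim_{F\nearrow\cF}\nu^{\lp}_{F,\Delta}[(0,0)\leftrightarrow(x,t)]$.
\item Colouring the loops red or blue (the space-time analogue of Proposition~\ref{prop:mir-8v-coupling}) turns this connection probability into the four-point function of the space-time eight-vertex model, \eqref{eq:q:loop-connections}.
\item At $u=\tfrac12$ the measure \eqref{eq:q:8v-measure} splits into two independent critical space-time Ising models on $\cF^\bullet$ and $\cF^\circ$; this is the decoupling of \cite{XY-ising-jullien-fields}.
\end{enumerate}
The four-point function is therefore a product of two Ising two-point functions, each $\sim\|(x,t)\|^{-1/4}$ by \cite{li-mahfouf-ising}. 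Jordan--Wigner fermions, dimers and Toeplitz determinants play no role. A genuinely free-fermion route would need its own asymptotic analysis of space-time determinants, which your proposal does not supply.

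Finally, be aware that Part~1 cannot hold as stated for the full lattice correlation, whatever the proof. At $\Delta=-1$, conjugating by $\prod_{y\ \mathrm{odd}}e^{i\pi S^{(3)}_y}$ maps $H_{\TT_L}$ to the $SU(2)$-invariant antiferromagnet, whose unique ground state is rotation invariant. Hence $\langle S^{(3)}_0S^{(3)}_x\rangle=(-1)^x\langle S^{(1)}_0S^{(1)}_x\rangle$. For even $x$ this is nonnegative by Lemma~\ref{lem:spin-spins-are-mirror-connections}, which is incompatible with a leading term $-\tfrac{1}{4\pi^2}|x|^{-2}$.

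This reflects a staggered component, predicted to be of order $(-1)^x|x|^{-\pi/\arccos\Delta}$, which dominates $|x|^{-2}$ on all of $[-1,-\tfrac12]$. The Gaussian-free-field asymptotics extracted from \cite{dc-etal-GFF-convergence} describe only the non-oscillating part. This caveat applies equally to the paper's Proposition~\ref{prop:poly-decay}, Part~4.
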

The decay \eqref{eq:q:xxz-hugo} is a consequence of the recently proved convergence of the height function of the six-vertex model to the GFF, for $\Delta\in[-1,-\tfrac{1}{2}]$ \cite{dc-etal-GFF-convergence}, by Duminil-Copin, Kozlowski, Lammers, and Manolescu. Note that this input is the only place where Bethe ansatz results are used in this paper. 

The decay \eqref{eq:q:xy-decay-rate} is a consequence of \cite{XY-ising-jullien-fields} (also our new coupling) which writes a 1D quantum XY model as two independent 1D quantum Ising models, and of the decay proved for the 1D quantum Ising model in \cite[Theorem 3.9]{li-mahfouf-ising}. Note that by the same method one could show the decay for multi-point functions. Note the decay \eqref{eq:q:xy-decay-rate} is classical \cite{mccoy68xy} for $t=0$. Real-time correlations are studied in \cite{mccoy-perk-shrock-1, mccoy-perk-shrock-2}. \\

\subsubsection{Proof ideas}

As noted above, our proofs do not use the Bethe ansatz. Instead they arise from our study of related probabilistic models, which we describe rigorously below.

We first give an brief overview of the ideas of the proofs for the XXZ model. There are two key identities which we establish. The first is showing that arbitrary static observables in the XXZ chain can we written as ratios of partition functions in the six-vertex model, in particular as disorder operators (see Section \ref{sec:xxz-conv}). One should think of these disorders as having a finite set of edges being split in two, with both half-edges being directed inwards or both outwards. The second key identity is that the spin-spin correlation function $\langle S^{(1)}_0 S^{(1)}_x\rangle$ can be written as the probability that two edges are connected by a loop in the Lorentz mirror model with loop weight 2 (see Lemma \ref{lem:spin-spins-are-mirror-connections})
\bes
    \langle S^{(1)}_0 S^{(1)}_x\rangle
    =
    \lim_{V\nearrow\ZZ^2} \mu^{\mir}_{V,\ul{p}}[e_0 \leftrightarrow e_x],
\ees
where $e_0$, $e_x$ are vertical edges of $\ZZ^2$ on the same horizontal row at distance $x$. Both of these identities make use of the fact that the Hamiltonian $H_{\Lambda}$ \eqref{eq:ham-xxz} and the transfer matrix of the six-vertex model share a unique ground state vector.

Both of these discrete models (six-vertex and mirror) can be thought of as discretised versions of 1+1D models (discrete in one direction and continuous in the other), which in this paper we call space-time models. There are two analogous identities to above: we show first that arbitrary local observables (including (Euclidean) dynamic ones) in the XXZ model can be written as disorder operators in the space-time six-vertex model (see Section \ref{sec:xxz-conv-cts}). Secondly, $\langle S^{(1)}_0 S^{(1)}_x(t)\rangle$ can be written as the probability that two points are connected by a loop in the space-time loop model (this was proved by Ueltschi \cite{ueltschi}): 
\bes
    \langle S^{(1)}_0 S^{(1)}_x(t) \rangle 
    =
    \frac{1}{4}\lim_{F\nearrow\cF}\nu^{\lp}_{\cF,\Delta}[(0,0)\leftrightarrow (x,t)].
\ees

Given these identities, the challenge is then to show that these disorder variables converge in infinite volume, and that the connection probabilities decay appropriately. Our proof exploits fully the recent delocalisation result in the six-vertex model by several authors, and the delocalisation for the space-time six-vertex model that we prove in Section \ref{sec:delocalisation}, and in particular the percolation process $\xi^\bullet$ and its domain Markov property. Much of the proof is then devoted to proving that even with disorders present, one still sees infinitely many circuits of $\xi^\bullet$. Note that the percolation process $\xi^\bullet$ was instrumental in the latest proof of delocalisation due to Glazman and Lammers \cite{glazman-lammers} and we use a version of it in our proof for the space-time model. The process appears in several papers \cite{lis-spins, lis-deloc, glaz-peled, lis-lopez-heeney, dc-etal-GFF-convergence, ray-spinka-2, chayes-mckellar-winn} under different names - see below \eqref{eq:sampling-xi} for more detail.

\subsection{The Lorentz mirror model with loop weight 2}	


The (unwieghted) Lorentz mirror model is a model of random loops on $\ZZ^2$. At each vertex, independently, one places one of three configurations: a north-west oriented mirror placed at $45^\circ$ to the lattice with probability $p_{\NW}$, a north-east oriented mirror with probability $p_{\NE}$, or no mirror with probability $p_{\es}$, where $p_\NW+ p_\NE+ p_\es=1$. Such a configuration at each site deterministically gives a sets of loops, which are set of edges of $\ZZ^2$: the loops reflect off the mirrors when they are present and pass straight through the vertex when there is no mirror. The main question of interest is finding the probability that there exist infinite loops. 

\begin{figure}[h]
\centering
    \resizebox{0.4\textwidth}{!}{%
        \includegraphics[]{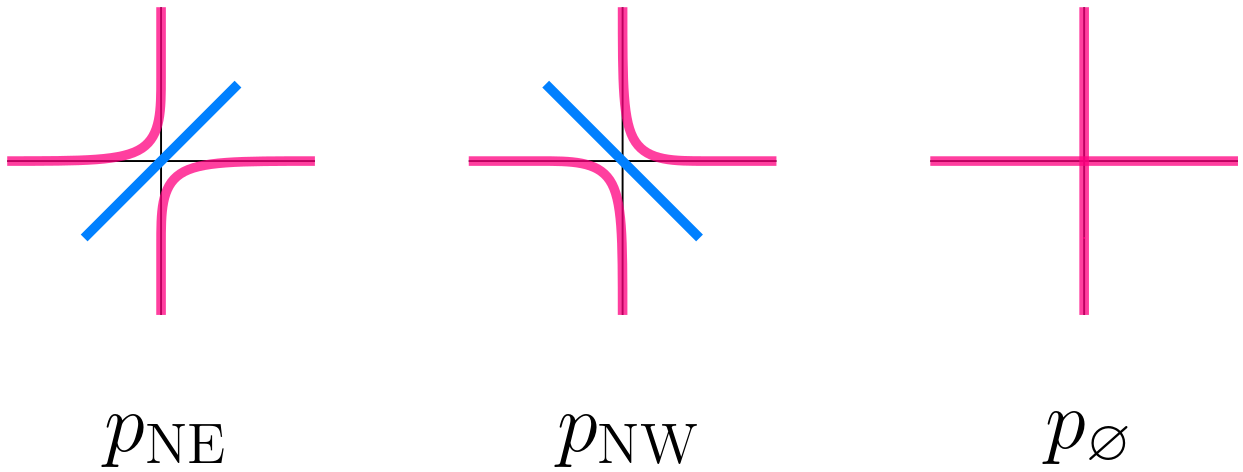}}
    \caption{The three configurations possible at a vertex in the mirror model}
    \label{fig:mirror}
\end{figure}

While the model is very simple, this problem is largely open: for $p_\es=0$ the model is equivalent to critical Bernoulli percolation on $\ZZ^2$, with loops being the boundaries between primal and dual clusters, and so it is known that there are no infinite loops almost surely. For $p_\es>0$ the problem is open. The main other rigorous result on the model is due to Kozma and Sidoravicius \cite{koz-sid}, that the probability that the origin is connected by a loop to the edge of the box $\Lambda_n=[-n,n]^2$ of size $n$ decays slowly: $\PP[0\leftrightarrow\partial\Lambda_n]\ge 1/(2n+1)$. 

Li \cite{li-mirror-poly} showed that in the model on the cylinder of width $n$, loops do not travel distance more than $n^{10}$, while in \cite{ryan-mirrors} the author showed that if the probability of a mirror ($p=p_{\NE}+p_{\NW}$) small (less than $1/n$), then the loops are at most length $p^{-2}$. Li \cite{li-manhattan-21} extended the range of parameters for which there are no infinite loops in the related Manhattan model. The mirror model has higher-dimensional analogues; for $d\ge3$ it is conjectured that there are infinite loops for $p$ small enough. In \cite{elboim-et-al-mirrors} it is proved that for $d\ge4$ and $p$ small enough, the trajectory of the loop at the origin remains diffusive for at least $p^{-M}$ steps, for any fixed $M>0$. \\


In the present paper we deal with a variant of the Lorentz mirror model, that is, a loop-weighted version. One defines the model on a finite portion of $\ZZ^2$ with some given boundary conditions and then re-weighs the measure by $n^{\# \text{loops}}$; $n=1$ recovers the original model. This model was studied in \cite{nienhuis-rietman, martins-nienhuis-rietman}. Here we conjecture a dichotomy in $n$: for all values of $p_\NW, p_\NE, p_\es$,
\be
\begin{cases}
    \PP[0\leftrightarrow x] \ \text{decays polynomially fast} & n\in[1,2]\\
    \PP[0\leftrightarrow x] \ \text{decays exponentially fast} & n>2,
\end{cases}
\ee
where $\PP$ is any infinite volume Gibbs measure of the model. This conjecture is proved for $p_{\es}=0$ \cite{dc-sid-tass-cont,dc-discont-q>4}; here the model is equivalent to critical planar FK percolation with parameter $q=n^2$, with the loops being the boundaries of the primal and dual clusters. For $n$ large the conjecture is provable using the techniques of \cite{dc-etal-exp-decay}, see \cite{bjornberg-ryan-dimerization} for the statement. Note that the torus limit for $p_{\es}=0, n\in(0,1]$ is proved to have no infinite loops by \cite{glazman-lammers}; it may be that the conjecture fully extends to $n\in(0,1]$. Apart from these cases, the conjecture is open.

Our result below studies the model with $n=2$, and proves that if $p_\NW,p_\NE\ge p_\es$, then connection probabilities tend to zero and do not decay exponentially fast, and further in the symmetric case $p_\NW=p_\NE$, the decay is polynomial. In particular this is the first result that shows connection probabilities go to 0 for this model outside of the ranges $\p_\es=0$ and $n$ large, and the first to show polynomial decay outside of the range $\p_\es=0$.\\

\subsubsection{Mirror model results}

Let $\ZZ^2=(\VV,\EE)$ be the 2-dimensional square lattice. Let the faces $\FF$ of $\ZZ^2$ have a black/white chessboard colouring, with the face centred at $(\tfrac{1}{2},\tfrac{1}{2})$ coloured black. Let $V$ be a simply connected (with respect to the graph connectivity), finite subset of $\VV$, such that $\VV\setminus V$ is connected. Let $\Om^{\mir}_V$ be the set of configurations which assign each vertex in $V$ either a north-west (NW) oriented mirror, a north-east (NE) oriented mirror, or no mirror; formally functions $m:V\to\{\NW,\NE,\es\}$. For $m\in\Om^{\mir}_V$, write $m_{\NW}$ (resp. $m_{\NE}$, $m_\es$) for the set of vertices assigned a NW mirror (resp. NE mirror, no mirror). 

Let $\LL^\bullet=(\FF^\bullet, \EE^\bullet)$ be the graph isomorphic to $\ZZ^2$ with vertex set given by the black faces of $\ZZ^2$, and edges given by nearest neighbours. Define $\LL^\circ=(\FF^\circ,\EE^\circ)$ similarly with the white faces. Each mirror can be interpreted as an edge in either $E^\bullet$ or $E^\circ$, we write $m^\bullet$ (resp. $m^\circ$) for the set of vertices assigned a mirror in $E^\bullet$ (resp. $E^\circ$).

A mirror configuration on $\VV$ gives a set of (closed) loops and bi-infinite paths - each edge of $\ZZ^2$ is on exactly one loop or path, and at vertices, loops/paths reflect off the mirror if it is present, and otherwise pass straight through the vertex. See Figure \ref{fig:mirror}.

Write $\cP$ for the set of vectors $\ul{p}=(p_\NW, p_\NE, p_\es)$ such that 
\be\label{eq:p-sum-to-1}
    p_\NW+ p_\NE+ p_\es=1,
\ee 
and $0<p_\NW, p_\NE<1$ and $0\le p_\es<1$. For $\ul{p}\in\cP$ and a configuration $m'\in\Om^\mir_\VV$, define a probability measure $\mu^{\mir;m'}_{V,\ul{p}}$ on $\Om^{\mir}_V$ as
\be\label{eq:mir-measure}
\mu^{\mir;m'}_{V,\ul{p}}[m]
\propto
p_\NW^{|m_\NW|} p_\NE^{|m_\NE|} p_\es^{|m_\es|}2^{l(m;m')},
\ee
where $l(m,m')$ is the number of loops or bi-infinite paths of the configuration $m|_V \cup m'|_{\VV\setminus V}$ passing through points of $V$.

We write $\mu^{\mir;\bullet}_{V,\ul{p}}$ for the measure when $m'$ has $(m')^\bullet=\VV$, that is, $m'$ has mirrors everywhere, such that the mirrors always join two black faces (ie. they lie in $E^\bullet$); we write $\mu^{\mir;\circ}_{V,\ul{p}}$ for when $(m')^\circ=\VV$. Write $\partial_E V=\{e=uv\in\EE \ : \ v\in V, u\notin V\}$. Write $\mu^{\mir}_{V,\ul{p}}$ for the model where in $m'$, all loops which after leaving $V$ through edges of $\partial_E V$, never return to $V$. One should think of this as free boundary conditions; one obtains the same measure by replacing $l(m;m')$ by the number of connected components of loops or bi-infinite paths passing through vertices of $V$. 

We write $\mu_{\TT_{L,M},\ul{p}}^{\mir}$ for the measure where $V={\TT}_{L,M}$ is the torus $\TT_{L,M}$ with vertex set $\{-L+1,\dots,L\}\times\{-M+1,\dots,M\}$, and $2^{l(m;m')}$ is replaced by $2^{l(m)}$, $l(m)$ being simply the number of loops in the configuration. Let $\cyl_{L,M}$ be the finite cylinder (periodic in the horizontal direction) also with vertex set $\{-L+1,\dots,L\}\times\{-M+1,\dots,M\}$; write $\partial^+=\{-L+1,\dots,L\}\times\{M\}$ and $\partial^+=\{-L+1,\dots,L\}\times\{-M+1\}$.
Write $\mu_{\cyl_{L,M},\ul{p}}^{\mir,\bullet}$ for the measure on the cylinder $\cyl_{L,M}$ with boundary conditions given by $v\in m^\bullet$ for all $v\in\partial^+\cup\partial^-$, and again $2^{l(m;m')}$ is replaced by $2^{l(m)}$. Define $\mu_{\cyl_{L,M},\ul{p}}^{\mir,\circ}$ similarly.

A measure $\mu$ on $\Om^\mir_{\VV}$ is called Gibbs if, for all finite $V\subset\VV$, 
\bes
\mu[m|_V=\eta \ | \ m|_{\VV\setminus V}=m']
=
\mu^{\mir;m'}_{V,\ul{p}}[\eta].
\ees

\begin{theorem}[The mirror model with loop weight 2]\label{thm:main-mirror} 
Consider the mirror model with parameters $\ul{p}$ and loop weight 2, where $\ul{p}\in\cP$ such that $p_\NW,p_\NE\ge p_\es$. 
\begin{enumerate}
    \item 
        The measures $\mu_{V,\ul{p}}^{\mir}$, $\mu^{\mir;\bullet}_{V,\ul{p}}$, $\mu^{\mir;\circ}_{V,\ul{p}}$, $\mu_{\TT_{L,M},\ul{p}}^{\mir}$, $\mu_{\cyl_{L,M},\ul{p}}^{\mir,\bullet}$, $\mu_{\cyl_{L,M},\ul{p}}^{\mir,\circ}$ converge weakly to the common limit $\mu_{\ul{p}}^\mir$ as $V\nearrow\ZZ^2$ or $L,M\to\infty$ (in either order or simultaneously). The limit is tail-trivial, $\ZZ^2$-ergodic and translation-invariant.
    \item 
        We have that for edges $e_1,e_2$ of $\ZZ^2$
        \bes
            \mu_{\ul{p}}^{\mir}[e_1\leftrightarrow e_2]
            \le 
           \lim_{V\nearrow\ZZ^2}
            \mu_{V,\ul{p}}^{\mir}
            [e_1\leftrightarrow e_2]
            \to 0
        \ees 
        as $||e_1-e_2||\to\infty$, where $e_1\leftrightarrow e_2$ is the event that $e_1$ and $e_2$ lie on the same loop or bi-infinite path. Further, for any edges $e_1,e_2$ of $\ZZ^2$, and any $x\in\ZZ^2$, we have
        \bes
    	   \sum_{n=1}^\infty n\cdot
           \lim_{V\nearrow\ZZ^2}
            \mu_{V,\ul{p}}^{\mir}
            [e_1\leftrightarrow \tau_{nx}e_2]
           = \infty,
        \ees 
        where $\tau_x$ is the translation by $x$ in $\ZZ^2$. In particular $\lim_{V\nearrow\ZZ^2}
            \mu_{V,\ul{p}}^{\mir}
            [e_1\leftrightarrow e_2]$ 
        cannot decay exponentially fast in the distance $||e_1-e_2||$.
    \item 
        If $p_\NW=p_\NE\ge p_\es$, there exist $\alpha_1,\alpha_2, C_1,C_2 >0$ such that for all edges $e_1,e_2$ of $\ZZ^2$ with $||e_1-e_2||=n$, we have 
        \be\label{eq:mir-poly-decay}
        \begin{split}
        	\mu_{\ul{p}}^\mir[e_1\leftrightarrow e_2]
        	&\le
        	C_2 n^{-\alpha_2},\\
        	C_1 n^{-\alpha_1}
        	\le
            \lim_{V\nearrow\ZZ^2}
            \mu_{V,\ul{p}}^{\mir}
            [e_1\leftrightarrow e_2]
            &\le
        	C_2 n^{-\alpha_2}.
        \end{split}
        \ee
        Fix a face $f_0$ and let $\tilde{e}_n$ denote the $n^{th}$ vertical edge to the right of $f_0$. Then \eqref{eq:mir-poly-decay} holds for $e_1=\tilde{e}_1$ and $e_2=\tilde{e}_n$ in the general case $\ul{p}\in\cP$ with $p_\NW,p_\NE\ge p_\es$.
    \item 
        In the case $p_\NW\cdot p_\NE= p_\es$, we also have \eqref{eq:mir-poly-decay} for all pairs of edges, with $\alpha_1=\alpha_2=\tfrac{1}{2}$. 
    \item 
        Finally, if $p_\NW\cdot p_\NE\le p_\es$, then the lower bound of \eqref{eq:mir-poly-decay} holds for all pairs of edges, and the measure $\mu_{\ul{p}}^\mir$ is Gibbs, and moreover is the unique Gibbs measure and the unique thermodynamic limit, that is, for all $m'\in\Om^\mir_{\ZZ^2}$, we have the weak convergence
        \be
        	\lim_{V \nearrow \ZZ^2} \mu^{\mir;m'}_{V,\ul{p}} = \mu_{\ul{p}}^\mir.
        \ee
\end{enumerate}
\end{theorem}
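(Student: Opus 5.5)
The plan is to pass through the six-vertex model, using the two couplings between the loop-weight-2 mirror model and the six-vertex model mentioned in the abstract.

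\textbf{First coupling (orientations).} Orient each loop uniformly and independently; this is exactly the factor $2^{l}$. At each vertex the two strands give an arrow configuration obeying the ice rule. Summing over mirror types compatible with a given arrow configuration gives six-vertex weights:
\begin{itemize}
\item $a$-type vertices (straight-through) have weight $p_\es+p_{\NW}$ or $p_\es+p_\NE$, depending on orientation;
\item $b$-type vertices have weight $p_\NW$ or $p_\NE$;
\item $c$-type vertices have weight $p_\NW$ or $p_\NE$.
\end{itemize}
After a gauge transformation these become symmetric six-vertex weights. The condition $p_\NW,p_\NE\ge p_\es$ should be exactly what places the resulting $\Delta=(a^2+b^2-c^2)/2ab$ in $[-1,\tfrac12]$. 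The analogue in this setting is $1/2\ge\Delta\ge -1$ with $c\le a+b$ and $c\ge |a-b|$. This is the regime where height-function delocalisation is known, via Glazman--Lammers and the companion results for $\Delta\in[-1,1/2]$.

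\textbf{Second coupling (new).} The mirrors in $m^\bullet$ or $m^\circ$, together with empty sites, are encoded as a percolation-type process. The connection event $e_1\leftrightarrow e_2$ is then expressed as a two-point (disorder) function of the six-vertex height function, i.e.\ a ratio of partition functions with the edges $e_1,e_2$ split.

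\textbf{Key steps, in order.}
\begin{enumerate}
\item Verify both couplings on finite domains with each listed boundary condition. The $\bullet/\circ$ boundary mirrors correspond to flat (zero-slope) six-vertex boundary conditions; the torus and cylinder correspond to periodic six-vertex measures.
\item Use the known convergence of the six-vertex measures to the unique translation-invariant zero-slope Gibbs measure (Sheffield together with delocalisation, cf.\ Corollary~\ref{cor:6v-unique-TI-gibbs}). This gives convergence of all mirror local observables, since these are measurable functions of the arrow configuration plus local independent randomness. Tail triviality and ergodicity are inherited. This yields part~1.
\item For part~2, observe that a connection $e_1\leftrightarrow e_2$ forces the height difference across a loop to be controlled. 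If a circuit of the $\xi^\bullet$ percolation surrounds $e_1$ but not $e_2$, the loop through $e_1$ is confined. Delocalisation gives infinitely many such circuits almost surely around any point, so the connection probability tends to $0$.
\item Non-exponential decay comes from a Kozma--Sidoravicius-type counting argument adapted to the weighted measure. The loop through $e_1$ must exit boxes; since the expected number of translates $\tau_{nx}e_2$ it hits must be infinite, this gives divergence of $\sum_n n\,\mu[\cdot]$. Alternatively, it follows from the fact that $\mathrm{Var}(h)$ grows logarithmically, expressed as a sum of connection probabilities.
\item For part~3, the symmetric case gives a $\pi/2$-rotation symmetry. This lets one use RSW-type polynomial bounds on $\xi^\bullet$ circuits (from the delocalisation literature, with logarithmic variance), so the number of disjoint circuits in annuli grows like $\log n$. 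This gives polynomial upper bounds, and the lower bound comes from a quasi-multiplicativity/log-variance argument. Along a horizontal row the same argument works without symmetry, via the transfer-matrix direction.
\item For part~4, $p_\NW p_\NE=p_\es$ corresponds to $\Delta=0$, i.e.\ free fermions/dimers. There the model factorises into two independent critical Ising models, which gives exponent $1/2$.
\item For part~5, $p_\NW p_\NE\le p_\es$ should correspond to $\Delta\le0$, where FKG/monotonicity of the height function holds. This gives uniqueness of the Gibbs measure for arbitrary $m'$ by sandwiching between the extremal boundary conditions.
\end{enumerate}

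\textbf{Main obstacle.} The hardest step is step~3 (and its quantitative version in step~5): showing that infinitely many (respectively, logarithmically many) $\xi^\bullet$ circuits still occur under the disorder-conditioned measures. The first approach would be to use the domain Markov property of $\xi^\bullet$ to decouple the disorders from annuli far away, and then apply the standard delocalisation circuit estimates.
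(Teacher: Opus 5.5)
Your Part~1 and the non-exponential-decay half of Part~2 follow the paper: the orientation coupling, the six-vertex convergence (with Sheffield for the torus), tail-triviality via $\xi^\bullet$-circuits, and the bound of $\mathbb{E}|h(f_0)-h(f_n)|$ by sums of connection probabilities. For the last of these only divergence of the height is needed, not logarithmic variance. One slip: summing over compatible mirrors gives directly $a=p_\es+p_\NE$, $b=p_\es+p_\NW$, $c=p_\NW+p_\NE$. Every vertex type admits exactly two mirror states, so no gauge is needed, and $p_\NW,p_\NE\ge p_\es$ is precisely $a,b\le c$, with $c\le a+b$ automatic.

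The genuine gap is everything after that: you are missing the paper's new coupling, and the mechanism you substitute does not work.

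\emph{Circuits do not confine loops.} Take a circuit $\gamma\subset\xi^\bullet$ and an edge $e\in\gamma\cap E^\bullet_\NE$. The vertex is of type $a$ or $c$, but its mirror is still sampled from $\{\NE,\es\}$, respectively $\{\NE,\NW\}$. So it lies along $e$ only with some probability, and loops cross $\gamma$ freely with zero net flux.

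\emph{What circuits do give.} They are useful in the disorder representation \eqref{eq:disorders-mirrors}, where a zero-flux circuit cannot separate the two unbalanced disorders at $e_1,e_2$. With Lemma~\ref{lem:disorders-estimate} and the mixing argument of Section~\ref{sec:extremality}, this yields decay to $0$ (one of the paper's two proofs), but no rate. Polynomial upper or lower bounds along your lines would need quantitative control of the disorder-weighted measure, e.g.\ uniform RSW for $\xi^\bullet$-circuits in the presence of a $\pm2$ height dislocation around $e_1$. That is not available.

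\emph{What the paper does instead.} It colours loops red/green. Red loops are domain walls of spins $(\pi^\bullet,\pi^\circ)$ forming a self-dual eight-vertex model (Proposition~\ref{prop:mir-8v-coupling}). Colour switching and an Edwards--Sokal expansion give
\[
\mu^{\mir}[e_1\leftrightarrow e_2]=\mu^{\eiv}[\pi^\bullet_{x_1}\pi^\circ_{y_1}\pi^\bullet_{x_2}\pi^\circ_{y_2}]=\mu^{\eivRC}[x_1\xleftrightarrow{\eta^\bullet}x_2,\ y_1\xleftrightarrow{\eta^\circ}y_2].
\]
The 8vRC model is FKG and self-dual, and has a unique Gibbs measure without infinite clusters (via the Ashkin--Teller/six-vertex coupling and Theorem~\ref{thm:6v-deloc}). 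For $p_\NW=p_\NE$ the Duminil-Copin--Sidoravicius--Tassion dichotomy gives strong RSW. This yields the upper bound from one-arm decay of $\eta^\bullet$, and the lower bound by building $\eta^\bullet$ and $\eta^\circ$ crossings in reflected regions. The horizontal-row claim for general $\ul{p}$ follows because that connection probability equals $4\langle S^{(1)}_0S^{(1)}_x\rangle$. This depends only on $\Delta$, so one passes to a symmetric $\ul{p}'$ with the same $\Delta$.

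Parts~4 and~5 also live in this eight-vertex picture.

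\emph{Part 4.} The condition $p_\NW p_\NE=p_\es$ is $W=0$ in \eqref{eq:8v-coupling-constants}, where $\pi^\bullet$ and $\pi^\circ$ are independent critical Ising models. Six-vertex spins are never independent, since their domain walls must be disjoint. So your factorisation has no source in your framework.

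\emph{Part 5.} The numerator of \eqref{eq:delta} is $p_\es-p_\NW p_\NE$, so $p_\NW p_\NE\le p_\es$ means $\Delta\ge0$, not $\Delta\le0$. More importantly, a general $m'$ imposes a pairing of boundary edges rather than a height boundary condition. The induced boundary heights are random and need not be flat, so six-vertex sandwiching does not apply. The paper turns the pairing into the event $A_{m_0}$ on eight-vertex spins, writes the measure as a convex combination over spin boundary conditions $\pi'$, and uses uniqueness of the eight-vertex thermodynamic limit obtained from the 8vRC model. The lower bound for all pairs is FKG-chaining of 8vRC connections.
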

\begin{remark}
    Note that while we do have that connection probabilities decay to 0 by Part 2 of Theorem \ref{thm:main-mirror}, we do not prove that the limiting measure $\mu^{\mir}_{\ul{p}}$ exhibits no infinite loops almost surely (although this is indeed expected). This is due to a lack of monotonicity in the model.
\end{remark}

We actually provide two proofs that connection probabilities in this mirror model tend to 0 in the distance. The first is a consequence of the XXZ convergence result; in particular the proof of extremality in Section \ref{sec:extremality}. The second uses our new coupling; one takes the delocalisation result through the Ashkin-Teller model and eight-vertex models to the mirror model - this proof is a consequence of Section \ref{sec:8v-AT}.

\subsubsection{Proof ideas}

The main input for proving Theorem \ref{thm:main-mirror} is the recent result of several authors that the height function of the six-vertex model delocalises. The six-vertex model is a measure on arrow configurations on $\ZZ^2$ such that each vertex has exactly two incoming and two outgoing arrows. There are six possible configurations at a given vertex (see Figure \ref{fig:six-vertex}), given weights $a,b,c>0$. This can be seen as a random height function (a map $h:F\to\ZZ$, $F$ the faces) where $h(y)=h(x) +1$ if when moving from the face $x$ to the adjacent face $y$, one crosses an arrow oriented left, and $h(y)=h(x) -1$ if it is oriented right. If one defines the model on a finite domain to have flat boundary conditions (height 0 or 1), the delocalisation states that the variance of this height function at any given face tends to infinity as the domain tends to $\ZZ^2$. See Theorem \ref{thm:6v-deloc} for a formal statement.

\begin{figure}[h]
\centering
    \resizebox{0.7\textwidth}{!}{%
        \includegraphics[]{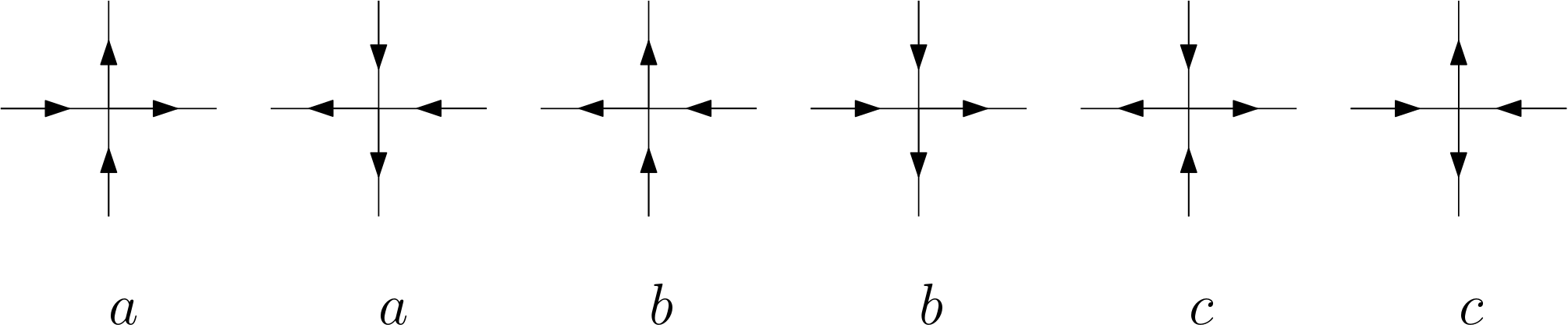}}
    \caption{The six possible configurations at a vertex in the six-vertex model, with given weights}
    \label{fig:six-vertex}
\end{figure}

Delocalisation is expected for all $|a-b| < c \le a+b$ (this is known as the disordered regime of the six-vertex model). The statement was proved for $c=a+b$ by Glazman and Peled \cite{glaz-peled}, for $a=b=c$ by Chandgotia, Peled, Sheffield and Tassy \cite{sheffield-square-ice} (later the stronger logarithmic delocalisation, that is, that the variance grows logarithmically in the diameter of the domain, was proved by Duminil-Copin, Harel, Laslier, Raoufi and Ray \cite{dc-square-ice}). 

Lis \cite{lis-spins} showed that delocalisation can be proved if one shows that a certain related percolation process has no infinite clusters. He then applied this to prove delocalisation for $(2+\sqrt{2})a=(2+\sqrt{2})b\le c \le a+b$ in \cite{lis-deloc}. Shortly afterwards Duminil-Copin, Karrila, Manolescu and Oulamara \cite{dckmo-6v-deloc} proved the statement for $a=b\le c\le a+b$ using Bethe ansatz methods (they also proved logarithmic delocalisation in this range). Finally Glazman and Lammers \cite{glazman-lammers} then proved the statement for the range $a,b\le c \le a+b$, recovering logarithmic delocalisation for $a=b\le c\le a+b$, using the percolation technique. Duminil-Copin, Kozlowski, Lammers and Manolescu have since proved the much stronger statement that the height function converges to the Gaussian Free Field for $\Delta\in[-1,-\tfrac{1}{2}]$. 

Our work takes the last of the delocalisation results, that of Glazman and Lammers, as its sole major input. Note that this result, as well as the present paper, makes no use of the Bethe ansatz. Our proof of delocalisation of the space-time six-vertex model follows the percolation proof of Glazman and Lammers closely.  \\

Crucial to the proof of Theorem \ref{thm:main-mirror} are two couplings of the Lorentz mirror model with the six-vertex model. The first has been known at least since the work of Chayes and Stengel \cite{chayes-shtengel}, and is extremely simple: from the loop model, independently and uniformly orient each loop; this gives an orientation to each edge and taking the marginal on these orientations one has the six-vertex model. Going in the other direction, one takes a six-vertex configuration and samples mirrors at each vertex with weights proportional to $p_\NW,p_\NE, p_\es$, conditional that the loops formed are consistently oriented. See Figure \ref{fig:mir-6v}.

\begin{figure}[h]
\centering
    \resizebox{0.9\textwidth}{!}{%
        \includegraphics[]{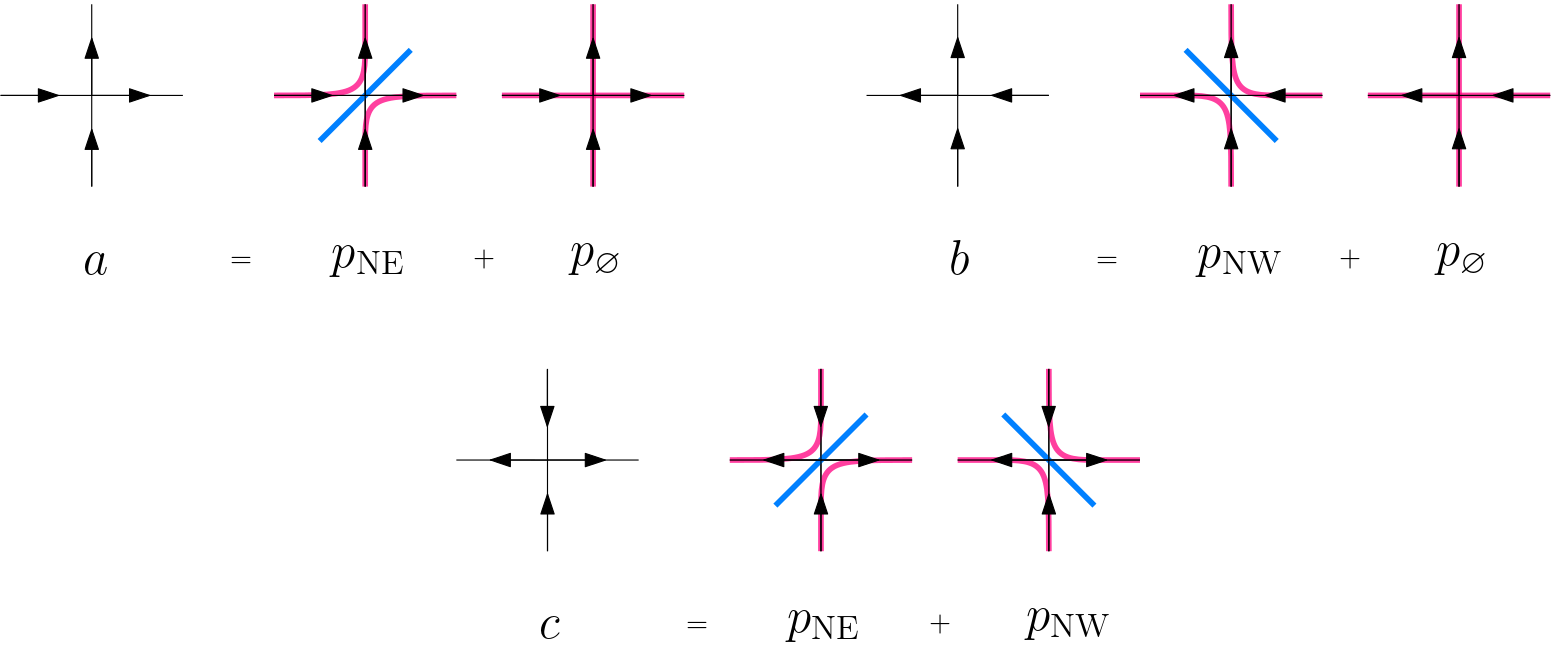}}
    \caption{The basic coupling between the mirror and six-vertex models}
    \label{fig:mir-6v}
\end{figure}

One obtains the relationship between the weights of the models: $a=p_\es+p_\NE,
	b=p_\es+p_\NW,
	c=p_\NW+p_\NE,$
where $\ul{p}\in\cP$. Note that the weights $a,b,c$ can be written in this form with $\ul{p}\in\cP$ if and only if they lie in the disordered regime of the six-vertex model, that is: $|a-b|< c \le a+b$. The existence of the coupling seems therefore to be a genuine feature of the disordered regime.

Note that the Baxter-Kelland-Wu coupling between the six-vertex model and FK-percolation also forms oriented loops from a six-vertex configuration, but this coupling is distinct; indeed, it produces only non-crossing loops, while for $p_\es>0$ the mirror model allows loops which cross one another. The two couplings coincide only when $p_\es=0$ ($c=a+b$ in six-vertex), where the loops are the loops separating primal and dual clusters of critical FK-percolation with parameter $q=4$, and the associated Ashkin-Teller model has $J=U$ and is a 4-state Potts model.

This coupling between the mirror model and six-vertex model is how we prove the convergence statements of \ref{thm:main-mirror} and also the fact that connection probabilities do not decay exponentially fast. Moreover it is used in our proof of convergence of the XXZ model states to the infinite volume ground state in Theorem \ref{thm:main-xxz}.\\

\begin{figure}[h]
\centering
    \resizebox{0.4\textwidth}{!}{%
        \includegraphics[]{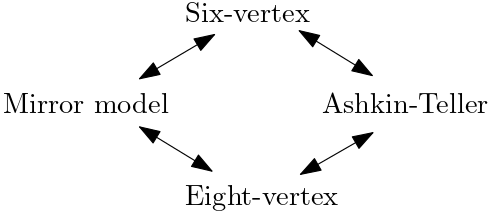}}
    \caption{The four models and the pairs that can be coupled}
    \label{fig:couplings-diagram}
\end{figure}

The second coupling goes through two other related models: an eight-vertex model and the Ashkin-Teller model. To be precise, our new coupling is one between the mirror model and the eight-vertex model, which in turn is naturally coupled with the Ashkin-Teller model, which in turn has a coupling with the six-vertex model. Our new coupling therefore completes an interesting circuit of couplings of the four models. See Figure \ref{fig:couplings-diagram} for an illustration.

\begin{figure}[h]
\centering
    \resizebox{0.9\textwidth}{!}{%
        \includegraphics[]{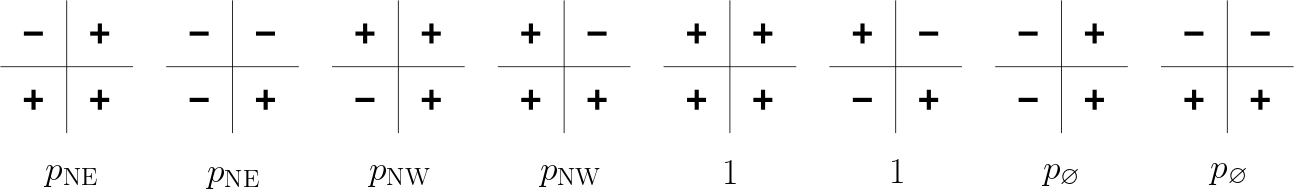}}
    \caption{The eight possible configurations at a vertex in the eight-vertex model (up to overall sign flip), where we fix the bottom right face to be +.}
    \label{fig:eight-vertex}
\end{figure}

The eight-vertex model is presented as two interacting Ising models $\pi^\bullet$ and $\pi^\circ$, one on the black faces of $\ZZ^2$ and one on the white faces. The weights at a vertex are given by the configuration in the four neighbouring faces. See Figure \ref{fig:eight-vertex} for all eight possibilities.

\begin{figure}[h]
\centering
    \resizebox{0.7\textwidth}{!}{%
        \includegraphics[]{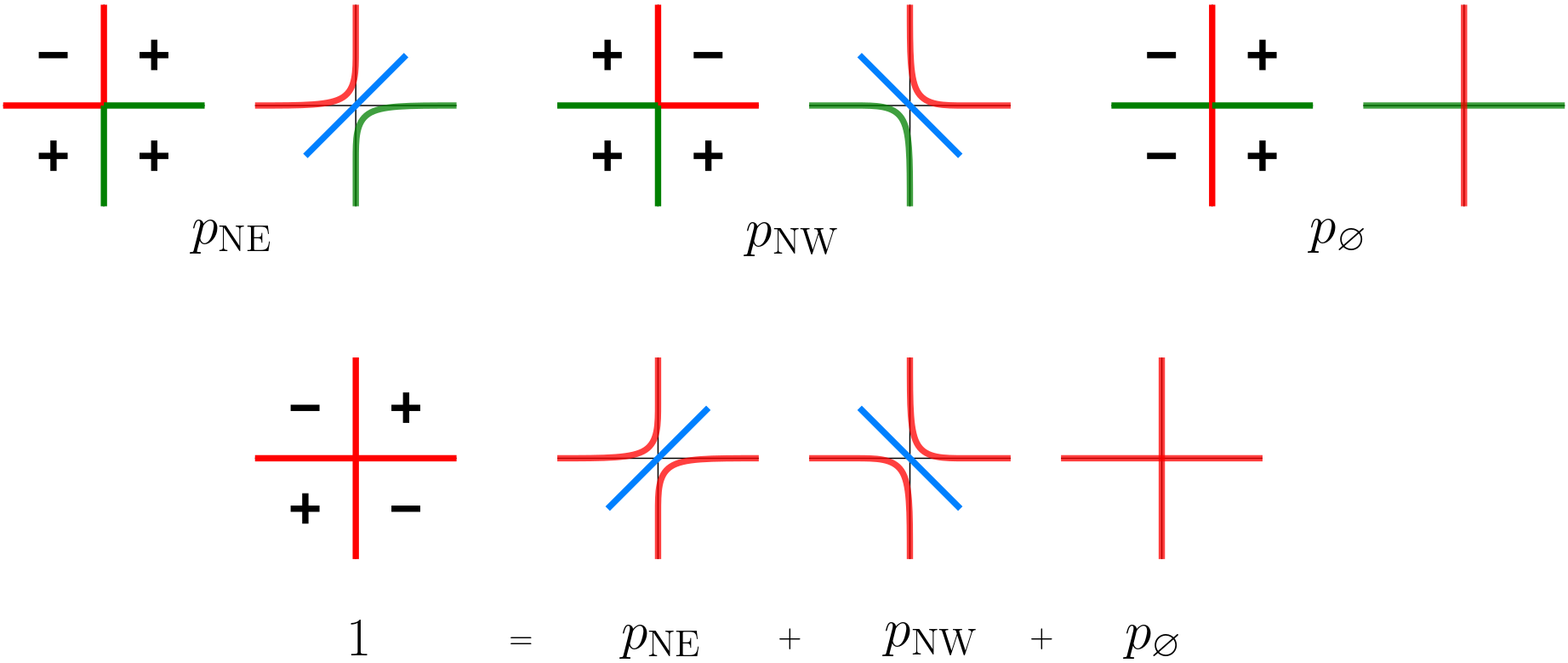}}
    \caption{The new coupling between the mirror and eight-vertex models}
    \label{fig:mir-8v}
\end{figure}

The new coupling is also very simple: instead of orienting the loops, one colours them (say, red or green) independently and uniformly. The red loops are then the domain walls of a spin configuration $\pi=(\pi^\bullet,\pi^\circ)$ on the faces of $\ZZ^2$. To go the other way, one colours the domain walls of $\pi$ red and other edges green, and then samples mirrors at each vertex such that loops are consistently coloured (note that some vertices have only one choice for their mirror configuration). See Figure \ref{fig:mir-8v}.

The main advantage of our new coupling is that the observable that two edges are connected in the mirror model can be analysed in the eight-vertex model. the connection probability is then a four-point correlation function in the eight-vertex model:
\bes
    \begin{split}
    \mu^{\mir}_{V,\ul{p}}[e_1\leftrightarrow e_2]
    =
    \mu^{\eiv}_{V,K,W}
    [\pi^\bullet_{x_1}\pi^\circ_{y_1}\pi^\bullet_{x_2}\pi^\circ_{y_2}],
    \end{split}
\ees
where $x_i,y_i$ are the (resp. black, white) faces of $\ZZ^2$ either side of the edges $e_i$. See Lemma \ref{lem:8v-connection-probs}. This is the crucial tool which allows us to prove the polynomial decay of connection probabilities in Theorem \ref{thm:main-mirror}, as well as the polynomial decay of the spin-spin correlations $\langle S^{(1)}_0 S^{(1)}_x\rangle$ in the XXZ chain.

The final step is to express the eight-vertex four-point function as a connection probability in a certain random cluster representation of the Ashkin-Teller model:
\bes
    \begin{split}
    \mu^{\eiv}_{V,K,W}
    [\pi^\bullet_{x_1}\pi^\circ_{y_1}\pi^\bullet_{x_2}\pi^\circ_{y_2}]
    =
    \mu^{\eivRC}_{\ul{p}}[x_1\xleftrightarrow{\eta^\bullet} x_2,
	y_1\xleftrightarrow{\eta^\circ} y_2].
    \end{split}
\ees
Here $\mu^{\eivRC}_{\ul{p}}$ is a percolation model with $\eta^\bullet$ a percolation on the graph with vertices given by the black faces of $\ZZ^2$, and $\eta^\circ$ one on the white faces. We can then apply a dichotomy argument similar to \cite{dc-sid-tass-cont, dc-tass-renorm} to obtain the fact that this connection probability converges to 0 and indeed does so polynomially fast.

For readers familiar with the Ashkin-Teller random cluster model, if one takes $(\om,\om')$ the representation by expanding in the two single spins $\tau,\tau'$, then $(\eta^\bullet,\eta^\circ)=(\om,(\om')^*)$, where $^*$ denotes the dual configuration. Moreover the eight-vertex model itself is obtained by fixing one spin, $\tau$, and applying duality in $\tau'$. Note that in contrast the six-vertex model is obtained from Ashkin-Teller by fixing the \textit{product} of spins $\tau\tau'$ and applying duality in $\tau$.


\subsection{Results for space-time models}

Throughout our work on space-time models we will make use of the parameter $u$, satisfying
\bes
    2u-1=\Delta.
\ees
Note $u\in[0,1)$ whenever $\Delta\in[-1,1)$.\\

Quantum spin systems have several representations as probabilistic models all having a common geometric feature that they are continuous in one direction: to be precise they are models on $\Lambda\times[-\beta/2,\beta/2]$ where $\Lambda$ is the graph on which the spin system is defined and $\beta$ is the inverse temperature of the spin system. The ground states of a quantum spin chain can therefore be studied via two-dimensional models on $\ZZ\times\RR$. In this work we refer to this general class of models as ``space-time'' models.

One of the most widely studied of these space-time representations is a model of random loops. We present results on this model in the case of the XXZ chain below in Section \ref{sec:loop-results}. First, we present a delocalisation result for a space-time version of the six-vertex model. 

Note that space-time representations on $\Lambda\times[-\beta/2,\beta/2]$ can generically be described as limits of truly discrete classical models on $\Lambda\times\ZZ_m$ as $m$ is taken to infinity (and thereby $\ZZ_m$ to $[-\beta/2,\beta/2]$) and certain parameters of the model are tuned. Our space-time six-vertex model is indeed such a limit of the ordinary six-vertex model, and the above-mentioned loop model is such a limit of the Lorentz mirror model with loop weight 2 (see Appendix \ref{appendix:FKG}). In fact, every feature of the expansive discrete 2d story of interconnected models including the six-vertex model, FK percolation, the Ising and Potts models, the Ashkin-Teller models, etc, should find its space-time analogue. In this paper we only cover the four models involved in the couplings in Figure \ref{fig:couplings-diagram}: The mirror model, the six-vertex model, the Ashkin-Teller model and the eight-vertex model (see Section \ref{sec:quantum-AT-8v} for our statements on the space-time Ashkin-Teller and eight-vertex models).

While the fact that versions of such a limiting procedure connects the six-vertex and XXZ models (for example in terms of transfer matrices) has been known for decades, to the best of author's knowledge this space-time six-vertex model has only been studied before in the work of Aizenman, Duminil-Copin and Warzel \cite{ADCW} for $\Delta<-1$, and has otherwise not been explicitly studied for $\Delta\in[-1,1)$ before.

\subsubsection{The space-time six vertex model}\label{sec:space-time-6v-results}
To define our space-time models we will need some geometric definitions. We define vertices, faces and black and white edges to be a certain limit of those in the discrete $\ZZ^2$ case described above. In particular, we rotate $\ZZ^2$ by $\pi/4$, scale and shift it so that its vertices (and the centres of faces) lie on $(\ZZ+\tfrac{1}{2})\times\RR$, with the (centres of the) black faces lying in $(2\ZZ+\tfrac{1}{2})\times\RR$. We then take a scaling limit in the vertical direction. With this in mind, we have the following definitions.

Let $\cF=(\ZZ+\tfrac{1}{2})\times\RR$. Let $\cF^\bullet=(2\ZZ+\tfrac{1}{2})\times\RR$ and $\cF^\circ = (2\ZZ+\tfrac{3}{2})\times\RR$. We define a graph connectivity on $\cF$ by saying that points are connected as usual within the copies of $\RR$, and also each $v\in\cF$ is connected to $x+(1,0)$ and $x+(-1,0)$. Define a metric on $\cF$ by restricting the $||\cdot||_1$ norm to $\cF$.


Let $V\subset\cF$ be simply connected with respect to the above connectivity, bounded, composed of a finite union of disjoint open intervals in $\cF$, such that for all $(x,t)\in\cF$, we have that if $(x-(1,0),t), (x+(1,0),t)\in V$ then $(x,t)\in V$. Let $F$ be the closure of $(\overline{V}\cup (V+(1,0)) \cup (V-(1,0)))$, where $\overline{V}$ is the closure of $V$ in $\cF$ (this is the continuous analogue of the faces with at least one corner in $V$). Define $\partial_\horiz F$ to be the points $x$ of $F$ which have $x+(1,0)$ or $x-(1,0)$ not in $V$ (note $\partial_\horiz F$ is a union of intervals). Let $\partial_\vert F= \overline{V}\setminus V$ (note $\partial_\vert F$ is a finite set of discrete points). Write $\partial F = \partial_\horiz F \cup \partial_\vert F$ (this is the continuous analogue of faces in $F$ with at least one corner not in $V$).

We say the pair $(V,F)$ constructed in this way is a domain, and we usually refer to just $F$ as the domain.\\

We define the height function of the space-time six-vertex model. The model can also be expressed in terms of arrow configurations like in the discrete model; see Section \ref{sec:delocalisation}. 

For a domain $F$, let $\ul{\Om}^{\sv-\hom}_{F}$ be the set of functions $h:F\to\ZZ$ which are piecewise constant with a finite number of points of discontinuity in any compact interval, and right continuous in each copy of $\RR$, such that $|h(v)-h(v+(1,0))|=1$ for all pairs $v,v+(1,0)\in F$, and that $h$ takes even values on $\cF^\bullet$ (and therefore odd on $\cF^\circ$). Notice that at each point of discontinuity in the vertical direction, the height function changes by $\pm2$. 

Write $\nu^{\poi}_{F,1}(\chi)$ for a Poisson point process of rate 1 on $F$, where $\chi$ is the random variable given by the set of Poisson points. For $\tilde{h}\in\ul{\Om}^{\sv-\hom}_{\cF}$, define the measure
\be\label{eq:q:6v-hom-measure}
\d \nu^{\sv-\hom;\tilde{h}}_{F,\Delta}
\propto
\d \nu^{\poi}_{F,1}(\chi)
\mathbbm{1}_{\{h\sim\chi\}}
\exp\left[u\left|\om^{b}[h]\right| + (1-u)\left|\om^{c}[h]\right|\right]
\mathbbm{1}_{h|_{\partial F} = \tilde{h}}
,
\ee
where $h\sim\chi$ if $\chi$ is precisely the points of discontinuity of $h$ (one should think of these points as the $a$ vertices of the six-vertex model). Here $\om^{b}[h]$ is the set of $v\in V$ such that $h(v+(1,0)) = h(v-(1,0))\pm2$ (one should think of these points as the $b$ vertices of the six-vertex model) and $\om^{c}[h]$ is the set of $v\in V$ such that $h(v+(1,0)) = h(v-(1,0))$ (these are the $c$ vertices). Write $\nu^{\sv-\hom;0,1}_{F,\Delta}$ for the same measure where $\tilde{h}$ takes only the values 0 or 1. Note that we can define the measure \eqref{eq:q:6v-hom-measure} for $\tilde{h}$ defined only on $\partial F$, provided $\tilde{h}$ is admissible, that is, there exists a valid height function $h\in\ul{\Om}^{\sv-\hom}_{F}$ with $h|_{\partial F}=\tilde{h}$.

\begin{theorem}[Delocalisation in the space-time six-vertex model]\label{thm:Q:delocalisation-basic}
	Let $\Delta\in[-1,0]$, so $u\in[0,\frac{1}{2}]$. For any fixed $x\in\cF=(\ZZ+\tfrac{1}{2})\times\RR$,
	\be\label{eq:Q:delocalisation}
	\lim_{F\nearrow \cF}\Var_{\nu^{\sv-\hom;0,1}_{F,\Delta}}
	\left[h(x)\right]=\infty.
	\ee
\end{theorem}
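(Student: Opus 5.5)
We adapt the percolation argument of Glazman and Lammers \cite{glazman-lammers} to the space-time setting.

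\textbf{Step 1: the space-time analogue of $\xi^\bullet$.} First we establish the structural properties needed on $\nu^{\sv-\hom;\tilde h}_{F,\Delta}$. Reading the model off the parity of $h$, we get on $\cF^\bullet$ the spin field $\sigma^\bullet=(-1)^{h/2}$, and similarly on $\cF^\circ$. The process $\xi^\bullet$ is then obtained by adding Poisson bonds (space-time FK-type bonds) conditionally on $h$. We would prove three properties of it.
\begin{itemize}
\item \emph{Domain Markov property.} Conditioning on a circuit of constant height isolates the interior.
\item \emph{FKG for the absolute height} $|h|$ (or for $h$ with respect to the boundary), in the range $u\le \tfrac12$. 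This is exactly where $\Delta\le 0$ enters: it is the space-time analogue of $a,b\le c$, since the weights $e^{u}$ for $b$-points and $e^{1-u}$ for $c$-points satisfy $u\le 1-u$. The cleanest route is to obtain it as a limit of the discrete six-vertex FKG from Appendix \ref{appendix:FKG}. Take $a=1$ and $b,c$ tuned as $b=\eps e^{u}$-type, $c=\eps e^{1-u}$-type on $\Lambda\times\ZZ_m$ with $\eps=\beta/m$. Weak convergence of the discrete measures to \eqref{eq:q:6v-hom-measure} then transfers FKG and the symmetry and monotonicity statements.
\item \emph{Invariance.} The model is invariant under translations of $\ZZ\times\RR$, reflections, and $h\mapsto -h$ or $h\mapsto 2-h$ type symmetries.
\end{itemize}

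\textbf{Step 2: dichotomy for $\xi^\bullet$ clusters.} Following \cite{glazman-lammers}, consider the measure on the cylinder or torus $\ZZ_L\times[-\beta/2,\beta/2]$ with periodic boundary conditions. We show that either $\xi^\bullet$ has no infinite cluster, or its complement has none. We then rule out coexistence and the existence of a unique infinite cluster via a Zhang-type argument combined with Burton--Keane uniqueness. Both arguments need only translation invariance, ergodicity, FKG and finite energy. The finite-energy property in continuum form says that Poisson points can be added or removed with controlled density ratio. The Zhang argument uses the symmetry between $\xi^\bullet$ and its dual, obtained by shifting by $(1,0)$ between $\cF^\bullet$ and $\cF^\circ$. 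This shift requires care because the space-time lattice is anisotropic. We would therefore use both the horizontal reflection and the vertical (time) reflection, together with the square-root trick on four arcs of a large box, so that the horizontal/vertical asymmetry of crossing probabilities is irrelevant.

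\textbf{Step 3: from infinitely many circuits to variance.} Once neither $\xi^\bullet$ nor its dual percolates, there are infinitely many nested circuits surrounding $x$ on which $|h|$ changes level. Using the domain Markov property and FKG, each successive circuit contributes a height increment that is symmetric and nondegenerate, with variance bounded below. Hence $\Var[h(x)]\ge c\cdot\#\{\text{circuits in }F\}\to\infty$ as $F\nearrow\cF$. The transfer from periodic to the $0,1$ boundary conditions in \eqref{eq:Q:delocalisation} uses monotonicity in boundary conditions, exactly as in the discrete case.

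\textbf{Main obstacle.} We expect the main difficulty to be Step 2, namely the non-coexistence/Zhang argument in the continuum anisotropic geometry, together with verifying FKG uniformly in the discretisation. If direct continuum arguments are awkward, the fallback is to run Steps 2--3 on the discrete approximants $\Lambda\times\ZZ_m$. The aim would then be to prove circuit-existence estimates uniform in $m$ and pass to the limit $m\to\infty$.
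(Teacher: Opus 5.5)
The gap is in Step 2. A Zhang-type argument needs a symmetry of the model that exchanges $\xi^\bullet$ with its dual $(\xi^\bullet)^*$. The shift by $(1,0)$ does not do this: it sends $\xi^\bullet$ to $\xi^\circ$, and $\xi^\circ$ is not $(\xi^\bullet)^*$. In the sampling rule \eqref{eq:q:sample-both-xis-from-6v}, off $\om[\sigma^\bullet]\cup\om[\sigma^\circ]$, the process $\xi^\circ$ is open on a rate-$1$ Poisson set $X$, while $\xi^\bullet$ is closed only on the $(1-2u)$-thinning $Y\subseteq X$. So you only have the super-duality $(\xi^\bullet)^*\subseteq\xi^\circ$ of \eqref{eq:q:super-duality}, with equality only at $u=0$ ($\Delta=-1$). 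With this inclusion, the $\xi^\circ$-arms in Zhang's construction do not block $\xi^\bullet$-connections; that would require $\xi^\circ\subseteq(\xi^\bullet)^*$. Hence a unique infinite cluster of $\xi^\bullet$ produces no contradiction.

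A sanity check confirms Step 2 cannot work with the inputs you list. Translation invariance, ergodicity, FKG and finite energy hold equally in the discrete model with $c>a+b$ (and in the space-time model with $u<0$), where the height function is localised. Your proposal never uses $\Delta\ge-1$.

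The paper, following Glazman--Lammers, handles the two halves of non-percolation by genuinely different arguments. Both take place in the monotone limit $\nu^{\sv,\bplus}_\Delta$, whose existence and ergodicity come from the FKG property of the triple $(\sigma^\bullet,\xi^{\bplus},-\xi^{\bminus})$. That property and FKG of $h$ itself, not FKG of $|h|$, are the monotonicity inputs actually used.

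\emph{Dual of $\xi^\bullet$.} Suppose $(\xi^\bullet)^*$ percolated. Super-duality, which is where $u\ge0$ enters, would make $\xi^\circ$ percolate. By comparison of boundary conditions and the shift, $\xi^\bullet$ and then $\xi^{\bplus}$ would percolate, coexisting with $(\xi^{\bplus})^*\supseteq(\xi^\bullet)^*$. This contradicts non-coexistence.

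\emph{$\xi^\bullet$ itself.} Non-percolation of $\xi^\bullet$ is not a duality statement at all, and this is the idea your proposal is missing. Assuming $\xi^\bullet$ percolates, pin $h=0$ on its infinite cluster to get a Gibbs, FKG height measure $\nu^{\sv-\hom,0}_\Delta$. Non-coexistence for $\{h\in\{1,3,\dots\}\}$ versus $\{h\in\{-1,-3,\dots\}\}$ gives circuits on which $h^0\ge1$. Coupling with $h^1(x)=1-h^0(x-(1,0))$ yields $\nu^{\sv-\hom,1}_\Delta\preceq\nu^{\sv-\hom,0}_\Delta$. Iterating gives $\nu^{\sv-\hom,4}_\Delta\preceq\nu^{\sv-\hom,0}_\Delta$, which is absurd.

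Once both facts are in place, your Step 3 matches the paper. There are alternating circuits of $\xi^\circ$ and $\xi^\bullet$, $h$ is constant on each and changes by a fair $\pm1$ between them, so $\Var[h(x)]=\EE[N_F]$.

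Separately, the discretisation in Step 1 is scaled the wrong way. With $a=1$ and $b,c=O(\eps)$ you have $c<a$, which is outside the FKG range $a,b\le c$, and the limit is not \eqref{eq:q:6v-hom-measure}. The $a$-vertices must be the rare ones, e.g.\ $a=\eps$, $b=1-(1-u)\eps$, $c=1-u\eps$ as in Appendix~\ref{appendix:FKG}.
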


The space-time six-vertex model can also be written in terms of arrow configurations, as well as in terms of a spin model of two interacting space-time Ising models. We give a more comprehensive statement including the behaviour of these representations in Theorem \ref{thm:Q:delocalisation}.

\subsubsection{Ueltschi's loop model}\label{sec:loop-results}

As noted above, one of the most widely studied space-time representation of quantum spin systems is the following loop model. The model is sometimes known as the Interchange process (with or without reversals); it is a model of random loops lying on $\Lambda\times [0,\beta]$ determined by two Poisson point processes, one of ``crosses'' and one of ``double bars'' on $E(\Lambda)\times[0,\beta]$ and then by re-weighing by $n^{\# \mathrm{loops}}$. See below for the formal definition.

The model with only crosses and $n=2$ was introduced by Tóth \cite{toth} as a representation of the spin-$\tfrac{1}{2}$ quantum Heisenberg ferromagnet ($\Delta=1$); the long-standing open problem of proving a phase transition in this model in $d\ge3$ is equivalent to proving there are infinite loops in the loop model. The model with only double bars was then studied by Aizenman and Nachtergaele \cite{aiz-nacht} as a representation of the antiferromagnet ($\Delta=-1$). They observed that in the dimension 1 case in the ground state (where the loop model is on $\ZZ\times\RR$), that the model is equivalent to a space-time FK percolation - the loops are the loops separating the primal and dual clusters (this is the analogue of the $p_{\es}=0$ case of the mirror model). They proved a dichotomy; that correlations decay exponentially fast or slowly in the sense: $\sum_{x\in\ZZ}|x|\langle \bf{S}_0\cdot \bf{S}_x\rangle$. This dichotomy was resolved by work on the space-time FK percolation model by Duminil-Copin, Li and Manolescu \cite{dc-li-mano-FK}.

Later Ueltschi \cite{ueltschi} introduced the more general model which combines that of Tóth and Aizenman-Nachtergaele. The model is a representation of general $O(n)$-invariant quantum spin systems ($n=2S+1$ here where $S$ is the spin number), where the spin-$\tfrac{1}{2}$ ($n=2$) model is precisely the XXZ model. The loop model for XXZ is probabilistic so long as $\Delta\in[-1,1]$; otherwise it has negative weights. Since Ueltschi's work, where he used the loop model to prove long-range order in these models in dimensions $d\ge3$ and general spin number, the loop model has been studied (including in its own right) in myriad settings; see references in \cite{bjornberg-ryan-dimerization,, BMNU, ryan-orth}. As far as the author is aware, our work is the first to cover the $d=1$ spin-$\tfrac{1}{2}$ (XXZ) case away from the point $\Delta=-1$, the antiferromagnet.\\

Let us define the loop model carefully. Recall the definition of a domain $(V,F)$, which we usually denote by just $F$. We say $F$ is a black domain (resp. white domain) if $(\partial_\horiz F \setminus V) \subset \cF^\bullet$ (resp. $(\partial_\horiz F \setminus V)\subset \cF^\circ$). Recall $E=(V+(\tfrac{1}{2},0))\cup (V-(\tfrac{1}{2},0))$.


The loop model is defined as follows. Let $\nu^{\poi}_{V,u}(m_{\cross})$ and $\nu^{\poi}_{V,1-u}(m_{\dbar})$ be two independent Poisson point processes on $V$ of rates $u$ and $1-u$ respectively, where we denote the points of the first process by the symbol $\cross$ and call them crosses, and those of the second by $\dbar$ and call them double bars. Write $\ul{\Om}^{\lp}_F$ for the set of all possible realisations (call them configurations) of this joint process. One can realise each configuration as a simple point measure on $F\times \{ \cross, \dbar \}$; let $\FF^{\lp}_F$ be the sigma algebra generated by the $w^\#$ topology generated by the metric $d^\#$ \cite[Section A2.6]{daley-verejones-1}. 

Each configuration $m=(m_{\cross}, m_{\dbar})$ produces a set of loops and paths, defined as follows. First, $m$ naturally partitions $E=(V+(\tfrac{1}{2},0))\cup (V-(\tfrac{1}{2},0))$ into a collection of intervals $y\times[a,b)$ such that there is a point of $m$ at $(y,a)+(\tfrac{1}{2},0)$ or $(y,a)-(\tfrac{1}{2},0)$, and similarly there is a point of $m$ at $(y,b)+(\tfrac{1}{2},0)$ or $(y,b)+(\tfrac{1}{2},0)$, and no point $c\in(a,b)$ has a point of $m$ at $(y,c)+(\tfrac{1}{2},0)$ or $(y,c)+(\tfrac{1}{2},0)$. 

Each point $(y+\tfrac{1}{2},b)$ in the configuration $m$ is then incident to four intervals: $y\times[a_0,b)$, $(y+1)\times[a_1,b)$, $y\times[b,c_0)$ and $(y+1)\times[b,c_1)$. We define a connectivity as follows:
\begin{enumerate}
    \item If $(y+\tfrac{1}{2},b)\in m_{\cross}$ then connect $y\times[a_0,b)$ to $(y+1)\times[b,c_1)$ and connect $(y+1)\times[a_1,b)$ to $y\times[b,c_0)$.
    \item If $(y+\tfrac{1}{2},b)\in m_{\dbar}$ then connect $y\times[a_0,b)$ to $(y+1)\times[a_1,b)$ and connect $(y+1)\times[b,c_0)$ to $y\times[b,c_1)$.
\end{enumerate} 

A loop or path is a maximal union of these intervals which are connected by this connectivity; we call this union a loop if it is a cycle under the connectivity and a path otherwise. We write $\{(x_1,t_1)\leftrightarrow (x_2,t_2)\}$ for the event that the two points $(x_1,t_1)$ and $(x_2,t_2)$ lie on the same loop or path. \\

Let $l(m)$ be the number of loops or paths of $m$. The loop measure is defined by re-weighing the Poisson point process by $2^{l(m)}$, that is, we define
\be\label{eq:quantum-loop-measure}
    \d\nu^{\lp}_{F,\Delta}(m) 
    \propto  
    \d\nu^{\poi}_{V,u}(m_{\cross}) \d\nu^{\poi}_{V,1-u}(m_{\dbar})  
    \cdot 2^{l(m)}.
\ee
For a configuration $m'\in\ul{\Om}^{\lp}_{\cF}$, let the measure with boundary conditions $m'$ be defined the same, but with $l(m)$ replaced with $l(m;m')$, where the latter is defined as the number of connected components of loops or paths of the infinite configuration $m|_F\sqcup m'|_{\cF\setminus F}$ which intersect $F$. Call this measure $\nu^{\lp,m'}_{F,\Delta}$. 

We define the measure $\nu^{\lp,\bullet}_{F,\Delta}$ to be the measure $\nu^{\lp,m'}_{F,\Delta}$ in the case that $F$ is a black domain, $m'=\dbar$ at every point of $\partial_\vert F$ in $\cF^\circ$ and $m'$ has no other points. We similarly define $\nu^{\lp,\circ}_{F,\Delta}$ to be the same with the roles of $\cF^\bullet$ and $\cF^\circ$ reversed. 

We define the model with periodic boundary conditions, firstly in the vertical direction. Let $L\in\NN$ and $\beta>0$. Let $
v=\cF\cap([-L+1,L]\times[-\tfrac{\beta}{2},\tfrac{\beta}{2}])$ 
and add a further connectivity on the intervals defining the number of loops: $(a,\tfrac{\beta}{2}]$ is connected to the interval $[-\tfrac{\beta}{2},c]$. Let $\nu^{\sv}_{\cyl^h_{L,\beta},\Delta}$ be the measure where loops are defined using this altered connectivity.

Finally now consider the model on $V=\cF\cap([-L+1,L]\times[-\tfrac{\beta}{2},\tfrac{\beta}{2}])$ but periodic in the horizontal direction, meaning we identify $L+1$ with $-L-1$ (and $-L-2$ with $L$). Let $\nu^{\sv,\bullet}_{\cyl_{L,\beta},\Delta}$ be this measure conditioned on there being points of $m_{\dbar}$ at the points $(x+\tfrac{1}{2},\pm \tfrac{\beta}{2})$ for all $x$ odd; let $\nu^{\sv,\circ}_{\cyl_{L,\beta},\Delta}$ be the same where for all $x$ even.\\


\begin{theorem}\label{thm:main:quantum-loop}
    Consider Ueltschi's loop model for the XXZ chain with $\Delta\in[-1,0]$, so $u\in[0,\tfrac{1}{2}]$.
\begin{enumerate}
    \item 
    As $F\nearrow \cF$, respectively as $\beta\to\infty$ and $L\to\oo$ either simultaneously or in either order,
        \be\label{eq:q:loop-measures-converge}
            \nu^{\lp}_{F,\Delta}, \nu^{\lp,*}_{F,\Delta},
            \nu^{\lp}_{\cyl^h_{L,\beta},\Delta}, 
            \nu^{\sv,*}_{\cyl_{L,\beta},\Delta},
            \to \nu^{\lp}_\Delta,
        \ee
    weakly in $(\ul{\Om}^{\lp}_{\cF}, \FF^{\lp,\#}_{\cF})$ in the $w^\#$ topology, for $*=\bullet,\circ$.
    The measure $\nu^{\lp}_\Delta$ is tail-trivial, $\ZZ\times\RR$-invariant and ergodic.
    \item 
    We have that
    \be\label{eq:q:loop-connections-go-to-0}
            \nu^{\lp}_\Delta[(0,0)\leftrightarrow (x_0,t_0)] 
            \le
            \lim_{F\nearrow \cF} \nu^{\lp}_{F,\Delta}[(0,0)\leftrightarrow (x_0,t_0)] 
            \to 0
        \ee
    as $||(x,t)||\to\infty$. Further, for $(x_0,t_0)\in\ZZ\times\RR$, we have
        \be\label{eq:q:loop-connections-no-exp-decay}
            \sum_{n=1}^{\infty}
            n\cdot
            \lim_{F\nearrow \cF} \nu^{\lp}_{F,\Delta}
            [(0,0)\leftrightarrow (nx_0,nt_0)] 
            = \infty,
        \ee
    and there exist $\alpha_1, \alpha_2 \in (0,\infty)$ such that 
    \be\label{eq:q:loop-connections-poly}
    \begin{split}
        \nu^{\lp}_{\Delta}[(0,0)\leftrightarrow (n,0)] &\le n^{-\alpha_2}\\
        n^{-\alpha_1} \le 
        \lim_{F\nearrow \cF} \nu^{\lp}_{F,\Delta}[(0,0)\leftrightarrow (n,0)] &\le n^{-\alpha_2}.
    \end{split}
    \ee 
    \item
    Finally, if $u=\tfrac{1}{2}$, $\nu^{\lp}_\Delta$ is Gibbs, and in fact is the unique Gibbs measure and the unique thermodynamic limit, that is, for all $m'\in\ul{\Om}^\lp_{F\nearrow \cF}$, we have as $F\nearrow(F\nearrow \cF)$ that
    \be\label{eq:q:unique-loop-thermo-limit}
        \nu^{\lp,m'}_{F,\Delta}
        \to
        \nu^\lp_\Delta,
    \ee
    and further we have that 
	    \be\label{eq:q:xy-corr-decay-loop}
	    \lim_{F\nearrow \cF} 
        \nu^{\lp}_{F,\Delta}[(0,0)\leftrightarrow(x,t)] 
        \sim ||(x,t)||^{-\tfrac{1}{2}}.
	    \ee
\end{enumerate}
\end{theorem}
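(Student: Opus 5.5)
The plan is to follow, in the space-time setting, the same route used for the discrete mirror model in Theorem \ref{thm:main-mirror}. We work through the space-time analogues of the two couplings of Figure \ref{fig:couplings-diagram}, and use Theorem \ref{thm:Q:delocalisation-basic} (delocalisation of the space-time six-vertex height function) as the sole major input.

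\textbf{Part 1 (convergence).} I would first set up the analogue of the Chayes--Shtengel coupling. Given $m\sim\nu^{\lp}_{F,\Delta}$, orient each loop independently and uniformly. The resulting arrow configuration should have as marginal the space-time six-vertex measure with the appropriate boundary conditions: crosses with rate $u$ become $b$-type points and double bars with rate $1-u$ become $c$-type points, with weight $e^{u|\om^b|+(1-u)|\om^c|}$ after integrating out the Poisson points. Conversely, given arrows, one resamples crosses and double bars consistently with the orientations. Local loop events then become expectations of local functions of the six-vertex configuration, and these are written as ratios of disorder partition functions. To show these ratios converge as $F\nearrow\cF$, I would use the percolation process $\xi^\bullet$ from the proof of Theorem \ref{thm:Q:delocalisation-basic} and its domain Markov property. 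Delocalisation should imply infinitely many $\xi^\bullet$-circuits around any fixed region, and the key claim is that this persists even in the presence of a finite disorder. Conditioning on an outermost circuit then decouples the inside from the boundary condition, which gives a common limit for all the listed boundary conditions and periodic cylinders, in any order of $L,\beta$. Translation invariance follows from the uniqueness of the limit. Tail-triviality and ergodicity follow from the same circuit-decoupling mixing estimate.

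\textbf{Part 2 (decay).} For \eqref{eq:q:loop-connections-go-to-0} I would use the second coupling, colouring loops red or green to obtain a space-time eight-vertex model and its Ashkin--Teller random cluster representation $(\eta^\bullet,\eta^\circ)$ (Section \ref{sec:quantum-AT-8v}). The connection probability then equals $\mu[x_1\xleftrightarrow{\eta^\bullet}x_2,\,y_1\xleftrightarrow{\eta^\circ}y_2]$. Since this is bounded by a one-arm probability, it tends to $0$ once one shows that $\eta^\bullet$ has no infinite cluster, which again should follow from the circuits of $\xi^\bullet$, whose dual circuits block $\eta^\bullet$. The non-exponential decay \eqref{eq:q:loop-connections-no-exp-decay} would come from an Aizenman--Nachtergaele style argument: if $\sum n\,\nu[(0,0)\leftrightarrow(nx_0,nt_0)]<\infty$, then the height function variance would stay bounded, since height differences are sums over loops crossing a segment. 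This contradicts \eqref{eq:Q:delocalisation}. For the polynomial bounds \eqref{eq:q:loop-connections-poly} along the horizontal axis, I would run a renormalisation dichotomy in the style of \cite{dc-sid-tass-cont, dc-tass-renorm} for $\eta^\bullet$ on rectangles, using FKG (Appendix \ref{appendix:FKG}) and the $\pi/2$-type symmetry available for horizontal connections. Either crossing probabilities are bounded below, which gives polynomial bounds, or they decay exponentially, which contradicts non-exponential decay.

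\textbf{Part 3 ($u=\tfrac12$, $\Delta=0$).} Here the XY identification of \cite{XY-ising-jullien-fields} splits the model into two independent space-time Ising models. The loop connection probability becomes a product of Ising two-point functions, whose monotonicity in boundary conditions gives uniqueness of the Gibbs measure and of the thermodynamic limit. The exponent $\tfrac12$ then follows from \cite[Theorem 3.9]{li-mahfouf-ising}.

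The main obstacle, I expect, is Part 1's claim that infinitely many $\xi^\bullet$-circuits survive in the presence of disorders and under arbitrary boundary conditions and periodic geometries. This requires a careful comparison of the disordered measure with the undisordered one on annuli, for which I would first try an FKG/absolute-continuity bound on the circuit events.
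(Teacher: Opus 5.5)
\textbf{Gap: the polynomial bounds \eqref{eq:q:loop-connections-poly}.} The renormalisation dichotomy of \cite{dc-sid-tass-cont} needs a box-crossing input. This is step \textbf{S2} of Section \ref{sec:symmetric-case}, taken from \cite{koehler-schindler-tassion}. That input, the self-duality identity for $H(n,n+1)$ in step \textbf{S1}, and the later renormalisation steps all use invariance under rotation by $\pi/2$. That invariance is exactly what $p_\NW=p_\NE$ provides in the discrete setting. The space-time model is discrete in space and continuous in time, so it has no rotational symmetry at all. There is no ``$\pi/2$-type symmetry for horizontal connections'', and not even a canonical aspect ratio for a ``square''. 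Self-duality alone only relates horizontal primal crossings of a rectangle to vertical dual crossings, and you have no means of comparing the two. The lower bound is harder still: it requires building $\eta^\bullet$- and $\eta^\circ$-connections simultaneously, which the discrete proof does using strong RSW uniformly in boundary conditions. A space-time RSW theory of this kind would be a new result.

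The missing idea explains why the statement only concerns separations $(n,0)$: equal-time connection probabilities are XXZ ground-state correlations.
\begin{itemize}
\item By Ueltschi's identity and the convergence of Section \ref{sec:xxz-conv-cts}, $\lim_{F\nearrow\cF}\nu^{\lp}_{F,\Delta}[(0,0)\leftrightarrow(n,0)]=4\langle S^{(1)}_0S^{(1)}_n\rangle$. Here $\langle\cdot\rangle$ is the same ground state as in Part 1 of Theorem \ref{thm:main-xxz}.
\item By Lemma \ref{lem:spin-spins-are-mirror-connections}, this correlation is a connection probability in the discrete mirror model for \emph{any} $\ul p$ with $\Delta(\ul p)=\Delta$.
\item Since $\Delta\in[-1,0]\subset[-1,\tfrac12]$, you may choose $p_\NW=p_\NE\ge p_\es$. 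There Propositions \ref{prop:strong-RSW} and \ref{prop:poly-decay} give both bounds.
\item The upper bound for $\nu^{\lp}_\Delta$ then follows from $\nu^{\lp}_\Delta[\cdot]\le\lim_F\nu^{\lp}_{F,\Delta}[\cdot]$.
\end{itemize}

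\textbf{Remaining parts.} Elsewhere your plan is close to the paper's, but the effort is misplaced.

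In Part 1 no disorders are needed. Loops are sampled locally from $\kappa$ via \eqref{eq:q:sample-loops-from-sv}, so a local loop event is the expectation of a local function of the \emph{undisordered} six-vertex configuration. Convergence under every listed boundary condition is already part of the full Theorem \ref{thm:Q:delocalisation}. It does not follow from the variance statement of Theorem \ref{thm:Q:delocalisation-basic} alone: the logic there is non-percolation of $\xi^\bullet,\xi^\circ$ $\Rightarrow$ circuits $\Rightarrow$ convergence and delocalisation, not the reverse. The circuits-with-disorders estimate you call the main obstacle is Lemma \ref{lem:q-disorders-estimate}. It is needed for off-diagonal XXZ observables, and enters this theorem only through the identification above.

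In the flux argument for \eqref{eq:q:loop-connections-no-exp-decay} with $t_0\neq0$, the number of loops crossing the vertical part of the segment is unbounded, so you need a moment bound. The paper uses H\"older together with Poisson domination. This only yields the sum with $\nu[\cdot]^{1-\eps}$, which is what Proposition \ref{prop:q:loop-conv-from-6v} proves. For $t_0=0$ the count is at most $x_0$, and your bounded-flux argument works as in the discrete case.

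Circuits of $\xi^\bullet$ do not pathwise block $\eta^\bullet$, since both live on the black lattice. Non-percolation of $\eta^\bullet$ comes instead from the identities $\nu^{\eivRC;1,0}_\Delta[x\xleftrightarrow{\eta^\bullet}\infty]=\nu^{\AT;+,+}_\Delta[\tau_x]=\nu^{\sv}_\Delta[x\xleftrightarrow{\xi^\bullet}\infty]=0$ of Lemmas \ref{lem:q:coupling-8v-AT} and \ref{lem:q:coupling-AT-6v}.

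In Part 3, a loop boundary condition $m'$ induces pairing constraints on the spins, not Ising boundary conditions. You therefore need the convex decomposition over boundary spins from Proposition \ref{prop:mir-main-part-2} before invoking uniqueness for the critical space-time Ising models.
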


While the proof of our space-time delocalisation result follows closely that of \cite{glazman-lammers} for the ordinary six-vertex model, our proof of Theorem \ref{thm:main:quantum-loop} closely follows that of the mirror model, Theorem \ref{thm:main-mirror}. In particular, we introduce analogies of the four couplings in the discrete, which all function in essentially the same was as the discrete counterparts. We obtain the circuit of four models which can be pairwise coupled; see Figure \ref{fig:couplings-diagram-space-time}. See Sections \ref{sec:quantum-loop-proofs} and \ref{sec:quantum-AT-8v}.

\begin{figure}[h]
\centering
    \resizebox{0.6\textwidth}{!}{%
        \includegraphics[]{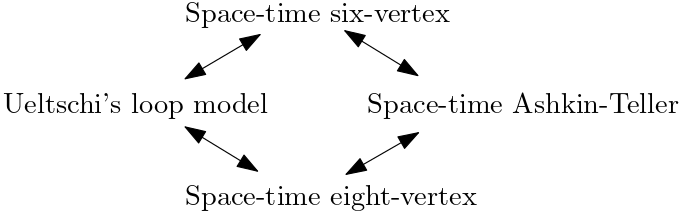}}
    \caption{The four space-time models and the pairs that can be coupled}
    \label{fig:couplings-diagram-space-time}
\end{figure}

\subsection{Organisation of the paper}
The paper is divided into three remaining parts. 

In Part \ref{part:convergence} we prove the convergence results for the mirror model and for XXZ with $\Delta\in[-1,\tfrac{1}{2}]$. This begins in Section \ref{sec:1st-coupling} where we recall the delocalisation results of the six-vertex model, show the first coupling between six-vertex and the mirror model, and then prove the convergence for the mirror model. In Section \ref{sec:xxz-conv} we prove the XXZ convergence and extremality.

In Part \ref{part:new-coupling} we prove the polynomial decay in the mirror and XXZ models. We introduce our new coupling in Section \ref{sec:new-coupling} and prove results for the Ashkin-Teller and eight-vertex models in Section \ref{sec:8v-AT}. In Section \ref{sec:symmetric-case} we prove the polynomial decay.

In Part \ref{part:space-time} we deal with the space-time models and the XXZ results for $\Delta\in[-1,0]$. We prove the delocalisation of the space-time six-vertex model in Section \ref{sec:delocalisation}, results for the loop model and convergence for XXZ in Section \ref{sec:quantum-loop-proofs}, and results for the space-time Ashkin-Teller and eight-vertex models in Section \ref{sec:quantum-AT-8v}.

\section*{Acknowledgements}
Many thanks to Marcin Lis, Lorca Heeney, Jakob Björnberg, Daniel Ueltschi, Moritz Dober and Sasha Glazman for helpful discussions, and to Bruno Nachtergaele, Piet Lammers and Karol Kozlowski for help understanding existing literature. Many thanks to Sasha Sodin for introducing me to the Lorentz mirror model. This work was supported by the FWF Standalone grants ``Spins, Loops and
Fields'' P 36298 and ``Order-disorder phase transition in 2D lattice
models'' P 34713,  the FWF SFB ``Discrete Random Structures'' 10.55776/F1002, and
the Academy of Finland Centre of Excellence Programme grant number
346315 ``Finnish centre of excellence in Randomness and STructures''
(FiRST).



\part{XXZ and mirror model convergence}\label{part:convergence}



\section{The mirror and six vertex models: the first coupling}\label{sec:1st-coupling}
In this section we take the existing results for the six-vertex model and prove the following proposition through the known, direct coupling between the six-vertex model and mirror model. 
\begin{proposition}\label{prop:mir-conv-from-6v}
    Let $\ul{p}\in\cP$ with $p_\NW,p_\NE\ge p_\es$. The measures $\mu_{V,\ul{p}}^{\mir}$, $\mu^{\mir;\bullet}_{V,\ul{p}}$, $\mu^{\mir;\circ}_{V,\ul{p}}$, $\mu_{\TT_{L,M},\ul{p}}^{\mir}$, $\mu_{\cyl_{L,M},\ul{p}}^{\mir,\bullet}$, $\mu_{\cyl_{L,M},\ul{p}}^{\mir,\circ}$ converge weakly to a common limit $\mu_{\ul{p}}^\mir$ as $V\nearrow\ZZ^2$ or $L,M\to\infty$ (in either order or simultaneously). The measure $\mu_{\ul{p}}^\mir$ is tail trivial, $\ZZ^2$-ergodic and translation-invariant. Further, for $e_1,e_2$ edges of $\ZZ^2$ and $x\in\ZZ^2$, 
    \be\label{eq:mir-escape-probs}
	\sum_{n=1}^\infty n\cdot
    \lim_{V\nearrow\ZZ^2}\mu_{V,\ul{p}}^{\mir}[e_1\leftrightarrow \tau_{nx}e_2] = \infty.
    \ee
\end{proposition}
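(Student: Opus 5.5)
We derive the whole statement from the six-vertex model through the Chayes--Shtengel coupling. The parameters are $a=p_\es+p_\NE$, $b=p_\es+p_\NW$, $c=p_\NW+p_\NE$. The hypothesis $p_\NW,p_\NE\ge p_\es$ is exactly $a,b\le c$, and $c\le a+b$ holds automatically, so we are in the range $a,b\le c\le a+b$ covered by Glazman--Lammers.

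\textbf{Step 1 (finite-volume coupling).} For each boundary condition we check that the joint measure on (mirrors, orientations) obtained by orienting each loop uniformly has six-vertex marginal with the matching boundary condition. The correspondence is as follows.
\begin{itemize}
\item Free mirror boundary conditions correspond to free arrows on $\partial_E V$.
\item The $\bullet$ and $\circ$ boundary conditions, where loops reflect off boundary mirrors, force height function boundary values $0/1$ (flat).
\item The torus corresponds to the six-vertex torus.
\item The cylinders correspond to flat boundary on $\partial^\pm$.
\end{itemize}
The weight check is local. The factor $2^{l}$ is absorbed by summing over orientations. At a vertex, the arrow configuration is compatible with exactly the mirrors giving weights $a,b,c$ (e.g.\ a $c$-vertex admits NW or NE, with total weight $p_\NW+p_\NE$). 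The conditional law of mirrors given arrows is then a product measure over vertices, with weights $p$ restricted to the locally consistent choices. It is therefore a local, translation-covariant, \emph{independent} function of the arrow configuration.

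\textbf{Step 2 (convergence and ergodicity).} We invoke the six-vertex input stated in the paper: the delocalisation theorem (Theorem~\ref{thm:6v-deloc}) and Corollary~\ref{cor:6v-unique-TI-gibbs}. Together these say that, for $a,b\le c\le a+b$, the arrow measures under flat, free, torus and cylinder boundary conditions all converge to the unique translation-invariant (ergodic, tail-trivial) six-vertex Gibbs measure. Convergence holds as $V\nearrow\ZZ^2$ or $L,M\to\infty$ in any order.

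Since mirrors are sampled independently given arrows via a local kernel, cylinder events of $m$ are expectations of local functions of the arrows. Weak convergence therefore transfers, giving a common limit $\mu^\mir_{\ul p}$. Translation invariance is inherited directly.

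Tail triviality follows because the mirror field is a factor of the arrows plus i.i.d.\ uniforms at each vertex. A tail event for $m$ lies in the tail of the pair, which is trivial, as the tail of the six-vertex limit is trivial and the uniforms are independent. Ergodicity then follows.

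The main obstacle is the free case and the boundary-condition matching. I would handle it by sandwiching via the height function: the paper's corollary presumably covers free and torus boundary conditions using Sheffield's uniqueness of translation-invariant Gibbs measures with zero slope. Otherwise, one takes subsequential limits, notes they are six-vertex Gibbs with slope zero by symmetry, and applies uniqueness.

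\textbf{Step 3 (no exponential decay).} Orient loops uniformly. For the limit $V\nearrow\ZZ^2$ with free boundary conditions, the height difference across a path from $f$ to $\tau_{nx} f$ is a sum of $\pm1$ arrow contributions. Its variance is
\[
\sum_{e,e'}\mathbb E[\sigma_e\sigma_{e'}],
\]
where $\sigma_e=\pm1$ is the arrow sign of $e$. Given the loops, $\mathbb E[\sigma_e\sigma_{e'}]$ is $\pm\mathbbm 1_{e\leftrightarrow e'}$ and vanishes otherwise, so the variance is at most a constant times $\sum_{e,e'}\mu[e\leftrightarrow e']$.

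If the sum in \eqref{eq:mir-escape-probs} were finite, a Cauchy--Schwarz/union argument over circuits would bound the height variance uniformly. Concretely, the variance at the origin in a box is controlled by a sum over edges of $\mu[e\leftrightarrow e_0]$, grouped by distance, which is $\sum_n n\,\mu[e\leftrightarrow \tau_{nx}e]$ after using translation invariance and comparing directions. A uniform bound contradicts delocalisation (Theorem~\ref{thm:6v-deloc}). Reducing an arbitrary fixed direction $x$ to the needed count of connections is the delicate part. I would use that a loop through $e_1$ reaching distance $n$ must cross the line through $\tau_{nx}e_2$ in a bounded-width strip, together with the finite-energy property of mirrors to move the crossing onto $\tau_{nx}e_2$ at constant cost.
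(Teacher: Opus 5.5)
\textbf{Steps 1 and 2.} These are essentially the paper's argument. You use the loop-orientation coupling with the local product kernel \eqref{eq:sample-mirror-from-6v}, and you transfer weak convergence from Theorem~\ref{thm:6v-deloc}, Lemma~\ref{lem:cylinder-6v-converges} and Lemma~\ref{cor:6v-unique-TI-gibbs} by conditioning on the arrows next to the support of a cylinder event.

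One small correction: the $\bullet$ mirror boundary condition corresponds to the six-vertex measure with $\sigma^\bullet\equiv+1$ on $\partial F^\bullet$. That means the height is constant on the black boundary faces only, with the white ones free. It is not flat $0/1$ heights, although this case is still covered by Theorem~\ref{thm:6v-deloc}.

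For tail triviality you take a different, valid route. Arrows are local functions of $(\sigma^\bullet,\sigma^\circ)$, whose limit is tail trivial by Theorem~\ref{thm:6v-deloc}, and mirrors are local functions of arrows and independent i.i.d.\ labels. The paper instead checks the mixing criterion \eqref{eq:mir-tail-triv} directly, using circuits of $\xi^\bullet$ and the domain Markov property \eqref{eq:6v-domain-markov-xi}. Your version is shorter, but it rests on a fact you should state: for independent decreasing families, $\bigcap_n(\mathcal A_n\vee\mathcal B_n)=\bigcap_n\mathcal A_n\vee\bigcap_n\mathcal B_n$ up to null sets. This is true (check it on products $g(\kappa)f(U)$ via reverse martingale convergence), but it fails without independence.

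\textbf{Step 3 has a genuine gap.} Bounding $\EE[\sigma_e\sigma_{e'}]$ by $\mu[e\leftrightarrow e']$ throws away all sign cancellations. Since $\mu[e\leftrightarrow e]=1$, the diagonal alone gives $\sum_{e,e'\in P}\mu[e\leftrightarrow e']\ge|P|\asymp n$. So this bound can never be uniform in $n$ and cannot contradict delocalisation.

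The promised ``Cauchy--Schwarz/union argument over circuits'' is not supplied. Your fallback, controlling the variance at the origin by $\sum_e\mu[e\leftrightarrow e_0]$ grouped by distance, is a two-dimensional sum over all directions, whereas \eqref{eq:mir-escape-probs} concerns one fixed direction $x$. The proposed repair also fails: a loop reaching distance $n$ may cross the line through $\tau_{nx}e_2$ anywhere on that line, so finite energy cannot move the crossing onto $\tau_{nx}e_2$ at a cost independent of $n$.

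The missing idea is the topological localisation of the flux, adapted from van Engelenburg--Lis.
\begin{enumerate}
\item Let $E_0$ be the edges crossed by a dual path from $f_0$ to $\tau_xf_0$, and set $E_i=\tau_{ix}E_0$, so $\bigcup_iE_i$ is a line in direction $x$.
\item Write $h(f_0)-h(f_n)$ as the flux through $\bigcup_{i=0}^{n-1}E_i$.
\item A closed loop has nonzero net crossing only if it winds around $f_0$ or $f_n$. It must then also use an edge of the ray $\bigcup_{j<0}E_j$ or of $\bigcup_{j\ge n}E_j$ on the same line.
\item Union bounds and translation invariance along $x$ alone give
\[
\EE|h(f_0)-h(f_n)|\le(|x_1|+|x_2|)\sum_{i=0}^{n-1}\sum_{j<0\text{ or }j\ge n}\mu[E_i\leftrightarrow E_j]\le 2(|x_1|+|x_2|)\sum_{k\ge1}k\,\mu[E_0\leftrightarrow E_k],
\]
uniformly in $n$, while the left side diverges by delocalisation.
\item Finite energy is then only used to pass from $E_0$, $E_k$ to $e_1$, $\tau_{kx}e_2$. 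These are at bounded distance, so the cost is constant.
\end{enumerate}
If you prefer variances, the same works via $w_\ell^2\le c^-_\ell c^+_\ell$. Here $w_\ell$ is the winding of the loop $\ell$ around the point, and $c^\pm_\ell$ are its numbers of crossings of the two rays.

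One caveat when you carry this out in finite volume and let $V\nearrow\ZZ^2$: moving $\lim_V$ inside the sum over $j$ interchanges a limit with an infinite sum. This is not automatic. It requires controlling loops that run very far, e.g.\ to $\partial V$, before returning to the line, and the paper's write-up is brief on this point.
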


\subsection{Definitions and existing results for six-vertex}
Recall $\LL^\bullet=(\FF^\bullet, \EE^\bullet)$ is the graph isomorphic to $\ZZ^2$ with vertex set given by the black faces of $\ZZ^2$, and edges given by nearest neighbours, and $\LL^\circ=(\FF^\circ,\EE^\circ)$ similar for the white faces. Let $V$ be a simply connected, finite subset of $\VV$, let $E$ be the set of edges in $\EE$ incident to at least one vertex in $V$, and let $F$ be the set of faces of $\ZZ^2$ incident to a vertex in $V$.

Write $G^\bullet=(F^\bullet,E^\bullet)$ for the graph with vertex set $F^\bullet$ the set of black faces in $F$, and edge set given by nearest neighbours whose edge passes through a vertex of $V$. Define $G^\circ=(F^\circ,E^\circ)$ similarly. Note $V$, $E^\bullet$ and $E^\circ$ are in bijection; each vertex has one edge of $E^\bullet$ and one edge of $E^\circ$ passing through it. For an edge $e\in E^\bullet$, we write $e^*$ for the corresponding edge in $E^\circ$, and say $e$ and $e^*$ are the dual of each other. 

We say $\partial F^\bullet$ is the set of black faces in $F^\bullet$ with at least one corner not in $V$, $\partial F^\circ$ similar, and $\partial F=\partial F^\bullet\cup\partial F^\circ$. We say $\partial E^\bullet$ is the set of edges in $\EE^\bullet$ between two faces in $\partial F^\bullet$. Note $\partial E^\bullet$ is a (possibly non-self-avoiding) circuit, with simply connected interior, since $V$ is simply connected.\\ 

The six-vertex model on $V$ is defined below. We give four different formulations of the measure. We start with the most well-known: in terms of arrow configurations.

Let $\Om^{\sv,\rightarrow}_V$ be the set of configurations $\kappa$ assigning an orientation to the edges in $E$, such that at each vertex in $V$, the number of incoming edges (and therefore the number of outgoing edges) is always 2 (this is the ice rule). The six-vertex model's measure on $\Om^{\sv}_V$ (with free boundary conditions) is given by
\bes
	\mu^{\sv,\rightarrow}_{V,a,b,c}[\kappa]
	\propto
	a^{n_1+n_2}b^{n_3+n_4}c^{n_5+n_6},
\ees
where $n_i$ is the number of vertices $v\in V$ with local configuration of $\kappa$ given by the $i^{th}$ from the left in the following Figure.

\begin{figure}[h]
\centering
    \resizebox{0.8\textwidth}{!}{%
        \includegraphics[]{figures/six-vertex-six-vertices.png}}
    \caption{The six-vertex model configurations and weights at a vertex}
    \label{fig:six-vertex-repeat}
\end{figure}

Let $\partial_EV$ is the set of edges with exactly one endpoint in $V$. For an arrow configuration $\kappa'$ on $\partial_EV$ which is the restriction of at least one valid configuration in $\Om^{\sv,\rightarrow}_V$, define
\bes
	\mu^{\sv,\rightarrow,\kappa'}_{V,a,b,c}[\kappa]
	\propto
	a^{n_1+n_2}b^{n_3+n_4}c^{n_5+n_6} \mathbbm{1}_{\{\kappa=\kappa' \ \mathrm{on} \ \partial_EV\}}.
\ees
Note that these measures satisfy a domain Markov property. Let $V'$ be a domain with $V'\subset V$. Then
\be\label{eq:6v-domain-markov-arrows}
    \mu^{\sv,\rightarrow}_{V,a,b,c}[ \ \cdot \ | \kappa|_{\partial_{E'}V'}=\kappa'] |_{V'}
    =
    \mu^{\sv,\rightarrow,\kappa'}_{V',a,b,c}.
\ee

The measure can be viewed as a measure on height functions (more specifically graph homomorphisms). Fix some $x_0\in F^\bullet$. For a configuration $\kappa\in\Om^{\sv}$, define a height function $h:F\to\ZZ$ as follows: set $h(x_0)=0$, and then if two faces $x,y$ are separated by an edge oriented by $\kappa$ to the left (when viewing from $x$ to $y$), set $h(y)=h(x)+1$. It is straightforward to check that this defines a unique height function.  

Write $\Om^{\sv-\hom}_V$ for the set of functions $h:F\to\ZZ$ with $h(x_0)=0$ and such that for any adjacent faces $x,y$, $|h(x)-h(y)|=1$. Write $\mu^{\sv-\hom}_{V,a,b,c}$ for the pushforward of the measure $\mu^{\sv,\rightarrow}_{V,a,b,c}$ to $\Om^{\sv-\hom}_V$ via the map defined above. 
More generally, let $h'\in\Om^{\sv-\hom}_{\VV}$; then
\bes
	\mu^{\sv-\hom,h'}_{V,a,b,c}[h]
	\propto
	a^{n_1+n_2}b^{n_3+n_4}c^{n_5+n_6} \mathbbm{1}_{\{h=h' \ \mathrm{on} \ \partial F\}}.
\ees
We write $\mu^{\sv-\hom,0,1}_{V,a,b,c}$ for the measure above where $h'$ takes values only in $\{0,1\}$ (it must be 0 on black faces and 1 on white faces).
\\

The spin representation of the six-vertex model is defined as follows. Let $\Om^{\sv-\spin}_V$ be the set of pairs of functions $\sigma^\bullet:F^\bullet\to\{\pm1\}$, and $\sigma^\circ:F^\circ\to\{\pm1\}$. Define the set of domain walls of $\sigma^\bullet$ as 
\be
    \om[\sigma^\bullet] = \{e^*\in E^\circ \ : \ \sigma^\bullet(x)\neq\sigma^\bullet(y) \ \mathrm{where} \ e=xy\},
\ee
and $\om[\sigma^\circ]$ similar. Write $\om_\NE$ for the edges in $\om[\sigma^\bullet]\cup\om[\sigma^\circ]$ oriented north-east; similarly define $\om_\NW$. Let $\om^c$ denote $V\setminus(\om[\sigma^\bullet]\cup\om[\sigma^\circ])$. 

The six-vertex model (spin) measure (with free boundary conditions) on $\Om^{\sv}_V$ is given by
\be\label{eq:6v-measure}
\begin{split}
	\mu^{\sv}_{V,a,b,c}[(\sigma^\bullet,\sigma^\circ)]
	\propto
	a^{|\om_\NE|}
	b^{|\om_\NW|}
	c^{|\om^c|}
	\mathbbm{1}\{\om[\sigma^\bullet] \cap \om[\sigma^\circ]=\es\}.
\end{split}
\ee

Note the spin representation $\mu^{\sv}_{V,a,b,c}$ is obtained from the measure $\mu^{\sv-\hom}_{V,a,b,c}$ by setting $\sigma^\bullet_{x_0}=\pm1$ uniformly on some $x_0\in F^\bullet$, and then for all neighbouring faces $x,y$ with $x\in F^\bullet$, $y\in F^\circ$, setting 
\be\label{eq:sv-spin-hf}
    \sigma^\bullet_x\sigma^\circ_y=h(x)-h(y).
\ee 

We define $\mu^{\sv,\bplus,\wplus}_{V,a,b,c}$ to be the measure $\mu^{\sv}_{V,a,b,c}$ conditioned on the event $\sigma^\bullet,\sigma^\circ\equiv1$ on $\partial F$. Similarly one can define $\mu^{\sv,\bplus}_{V,a,b,c}$, where only $\sigma^\bullet\equiv1$ on $\partial F^\bullet$, and $\mu^{\sv,\wplus}_{V,a,b,c}$, where only $\sigma^\circ\equiv1$ (or $h\equiv1$) on $\partial F^\circ$.\\ 

Finally, we recall the percolation process $\xi^\bullet$:
Given $\sigma\in\Om^{\sv}_{V}$, we define a percolation configuration $\xi\in\Om^{\perc}_{G^\bullet}$ as follows. For $e\in E^\bullet$, $e^*\in E^\circ$ its dual edge,
\be\label{eq:sampling-xi}
\begin{cases}
	\xi^\bullet(e)=1 \ \mathrm{deterministically} 
	& \mathrm{if} \ e\in\om(\sigma^\circ)\\
	\xi^\bullet(e)=0  \ \mathrm{deterministically}
	& \mathrm{if} \ e^*\in\om(\sigma^\bullet)\\
	\xi^\bullet(e)=1 \ \mathrm{w.p.} \ \tfrac{b}{c}=\tfrac{p_\NW+p_\es}{p_\NW+p_\NE}
	& \mathrm{if} \ e\in E^\bullet_\NW, \ e\notin\om(\sigma^\circ), \ e^*\notin\om(\sigma^\bullet)\\
	\xi^\bullet(e)=1 \ \mathrm{w.p.} \ \tfrac{a}{c}=\tfrac{p_\NE+p_\es}{p_\NW+p_\NE}
	& \mathrm{if} \ e\in E^\bullet_\NE, \ e\notin\om(\sigma^\circ), \ e^*\notin\om(\sigma^\bullet).
\end{cases}
\ee
Here w.p. is short for ``with probability''. We sometimes enlarge our probability space and write $\mu^{\sv}_{V,a,b,c}$ for the joint measure on pairs $(\sigma^\bullet,\sigma^\circ,\xi^\bullet)$. Note that the process $\xi^\bullet$ appears in several works under different names. In \cite{lis-spins, lis-deloc} its dual $(\xi^\bullet)^*$ is written as $\om$; in \cite{lis-lopez-heeney,dc-etal-GFF-convergence,chayes-mckellar-winn} $\xi^\bullet$ itself is written as $\om$; in \cite{ray-spinka-2} $(\xi^\bullet)^*$ is written as $\eta$; in \cite{glaz-peled} $(\xi^\bullet)^*$ is written as $\xi$.

Observe that $\xi$ can also be sampled from an arrow configuration (or indeed a height function) via the identification:
\be\label{eq:sample-xi-from-arrows}
\begin{split}
    \om[\sigma^\circ] &= \{ e \in E^\bullet \ | \ |h(x)-h(y)=2|, \ xy=e^*\in E^\circ\}\\
    \om[\sigma^\bullet] &= \{ e \in E^\circ \ | \ |h(x)-h(y)=2|, \ xy=e^*\in E^\bullet\}\\
\end{split}
\ee
It will be useful to note that one can write the marginal on $(\kappa,\xi^\bullet)$ as a product of local weights: $\mu^{\sv}_{V,a,b,c}(\kappa,\xi^\bullet) \propto \prod_{v\in V} w_v[\kappa,\xi^\bullet]$, where for $v\in V$ and $e,e^*$ its corresponding edges in $E^\bullet$, $E^\circ$ respectively,
\be\label{eq:product-weights-kappa-xi}
\begin{split}
    w_v[\kappa,\xi^\bullet]
    &=
    \mathbbm{1}_{\{\om[\sigma^\circ]_e\le\xi^\bullet_e\}} 
    \mathbbm{1}_{\{\om[\sigma^\bullet]_{e^*}\le(\xi^\bullet)^*_{e^*}\}}\\
    &\qquad\cdot
    \left[
    \mathbbm{1}_{\{e\in E^\bullet_{\NE}\}} \left(
    a^{\mathbbm{1}_{\{e\in\xi^\bullet\}}} 
    (c-a)^{\mathbbm{1}_{\{e^*\in(\xi^\bullet)^*\setminus\om[\sigma^\bullet]\}}} 
    b^{\mathbbm{1}_{\{e^*\in\om[\sigma^\bullet]\}}}
    \right) \right.\\
    &\qquad + \left.
    \mathbbm{1}_{\{e\in E^\bullet_{\NW}\}} \left(
    b^{\mathbbm{1}_{\{e\in\xi^\bullet\}}} 
    (c-b)^{\mathbbm{1}_{\{e^*\in(\xi^\bullet)^*\setminus\om[\sigma^\bullet]\}}} 
    a^{\mathbbm{1}_{\{e^*\in\om[\sigma^\bullet]\}}}
    \right)
    \right]
\end{split}
\ee
where $\om[\sigma^\circ]_e\in\{0,1\}$, thinking of $\om[\sigma^\circ]$ as a percolation configuration on $E^\bullet$, and similar for $\om[\sigma^\bullet]$ on $E^\circ$. 

We denote $\xi^{\bplus}$ the edges of $\xi^\bullet$ on whose endpoints one has $\sigma^\bullet=+1$. The process $\xi^\bullet$ satisfies a domain Markov property \cite[Lemma 4.8]{glazman-lammers}. Let $\gamma$ be a circuit in $E^\bullet$ and let $V(\gamma)$ be the domain given by all vertices of $\ZZ^2$ lying strictly inside $\gamma$. Then
\be\label{eq:6v-domain-markov-xi}
    \mu^{\sv}_{V,a,b,c}[\ \cdot \ | \ \gamma\subset\xi^{\bplus}]
    =
    \mu^{\sv,\bplus}_{V(\gamma),a,b,c},
\ee
and conditional on $\gamma\subset\xi^\bullet$, the spins strictly inside $\gamma$ and those strictly outside $\gamma$ are independent (the same statement holds for arrow configurations).\\

We work with the weights 
\be\label{eq:6v-weights}
\begin{split}
	a&=p_\es+p_\NE,\\
	b&=p_\es+p_\NW,\\
	c&=p_\NW+p_\NE,
\end{split}
\ee
where $\ul{p}\in\cP$. \\

\begin{theorem}[\cite{dckmo-6v-deloc,glazman-lammers,lis-deloc}]\label{thm:6v-deloc}
    Let $0< a,b\le c\le a+b$. For any fixed $x\in\FF$,
    \be\label{eq:delocalisation}
        \lim_{V\nearrow\ZZ^2}\Var_{\mu^{\sv-\hom;0,1}_{V,a,b,c}}
        \left[h(x)\right]=\infty.
    \ee    
    The measure $\mu^{\sv,\rightarrow}_{V,a,b,c}\to\mu^{\sv,\rightarrow}_{a,b,c}$ 
    weakly. Moreover, on the triple $(\sigma^\bullet,\sigma^\circ,\xi^\bullet)$,
    the measures $\mu^{\sv}_{V,a,b,c}$, $\mu^{\sv,\bplus,\wplus}_{V,a,b,c}$, $\mu^{\sv,\bplus}_{V,a,b,c}$, and $\mu^{\sv,\wplus}_{V,a,b,c}$ all converge to a common limit 
    $\mu^{\sv}_{a,b,c}$ weakly as $V\nearrow\ZZ^2$ and as $L,M\to\infty$. The limits $L,M$ can be taken in either order or simultaneously.
    
    The limit $\mu^{\sv}_{a,b,c}$ is Gibbs, $\ZZ^2$ translation-invariant and ergodic, and tail-trivial, and under this measure, the percolation processes $\xi^\bullet$ and $(\xi^\bullet)^*$ both do not percolate almost surely. As a consequence,
    \be\label{eq:6v-decay-correlations}
    \begin{split}
        \lim_{||x_1-x_2||\to\infty} \mu^\sv_{a,b,c}[\sigma^\bullet_{x_1}\sigma^\bullet_{x_2}]=0;
        \qquad
        \lim_{\substack{||x_1-x_2||\to\infty \\ ||y_1-y_2||\to\infty}}
        \mu^\sv_{a,b,c}[\sigma^\bullet_{x_1}\sigma^\bullet_{x_2}
        \sigma^\circ_{y_1}\sigma^\circ_{y_2}]=0.
    \end{split}
    \ee
\end{theorem}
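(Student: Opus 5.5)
This theorem is mostly a compilation of results already in the literature, so I would prove it by assembling those results rather than from scratch. The only genuinely derived parts are the convergence of the free, flat, and torus measures and the correlation decay.

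\emph{Step 1: delocalisation and non-percolation.} Delocalisation \eqref{eq:delocalisation} in the range $0<a,b\le c\le a+b$ is taken from \cite{glazman-lammers}, which covers the full range $a,b\le c\le a+b$. The cases $a=b$ are also available from \cite{dckmo-6v-deloc,lis-deloc}. The percolation argument of \cite{glazman-lammers}, building on \cite{lis-spins}, gives more than delocalisation: under any translation-invariant limit of the flat measures, neither $\xi^\bullet$ nor $(\xi^\bullet)^*$ has an infinite cluster. Equivalently, for every $n$ there are almost surely circuits of $\xi^{\bplus}$ (and, by the symmetric argument, dual circuits) surrounding $\Lambda_n$.

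\emph{Step 2: convergence to a common limit.} I would compare an arbitrary finite-volume measure with boundary conditions $\bplus$, $\bplus\wplus$, $\wplus$ or free against the flat measure, using the domain Markov property \eqref{eq:6v-domain-markov-xi}. Condition on the outermost $\xi^{\bplus}$-circuit $\gamma$ surrounding a fixed box $\Lambda_n$ that lies inside a larger box $\Lambda_N$. Given $\gamma$, the law inside $\gamma$ is $\mu^{\sv,\bplus}_{V(\gamma)}$, whatever the boundary condition outside.

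If such circuits exist with probability tending to one, uniformly in the outer domain, then local observables under any two boundary conditions differ by at most the probability that no circuit exists. The circuits of $\xi^\bullet$ are not immediately of $\xi^{\bplus}$ type, but spins are constant along a $\xi^\bullet$ path because $\xi^\bullet$ excludes domain walls of $\sigma^\bullet$. Global spin-flip symmetry then handles the sign.

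The uniformity across boundary conditions is the main obstacle. For it I would use the FKG/monotonicity of $\xi^\bullet$ in boundary conditions established in \cite{glazman-lammers}, together with the standard dichotomy: a limit in which $\xi^\bullet$ does not percolate forces crossings in annuli with probability bounded below uniformly in the domain. Tail-triviality and ergodicity then follow because the limit is squeezed between ordered boundary conditions that coincide. For the torus and cylinder limits, in either order, I would use the same circuit-insertion argument: toroidal non-contractible circuits play the role of $\gamma$, and the limit is invariant under translations. Convergence of the arrow measure $\mu^{\sv,\rightarrow}$ follows as the pushforward along \eqref{eq:sv-spin-hf}.

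\emph{Step 3: correlation decay.} The limits in \eqref{eq:6v-decay-correlations} follow from non-percolation. Consider $\sigma^\bullet_{x_1}\sigma^\bullet_{x_2}$. Condition on a $\xi^\bullet$-circuit, or a dual circuit of domain walls, separating $x_1$ from $x_2$. Inside such a circuit the measure is symmetric under flipping $\sigma^\bullet$, via the domain Markov property with spin symmetry, so the conditional correlation vanishes. Hence $|\mu[\sigma^\bullet_{x_1}\sigma^\bullet_{x_2}]|$ is bounded by the probability that no separating circuit exists, and this tends to $0$ as $\|x_1-x_2\|\to\infty$ by non-percolation and ergodicity.

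The four-point function is handled the same way. I would find separating circuits of $\xi^\bullet$ around $x_1,y_1$ and of $(\xi^\bullet)^*$, which carry constant $\sigma^\circ$, and flip the inner black and white spins independently.
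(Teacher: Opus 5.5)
The paper does not prove this theorem: it imports it from \cite{glazman-lammers} (with \cite{dckmo-6v-deloc,lis-deloc} for subranges), so your Step~1 is what the paper does. The problem is in the parts you derive yourself, which rely on a spin-flip symmetry that does not hold. A $\xi^\bullet$-circuit does not let you flip $\sigma^\bullet$. Conditionally on $\gamma\subset\xi^{\bplus}$, the inside law is $\mu^{\sv,\bplus}_{V(\gamma)}$, which is not flip-invariant, and the sign of $\gamma$ need not be independent of the outside. By \eqref{eq:product-weights-kappa-xi}, given $\xi^\bullet$, a closed edge $e\in E^\bullet_\NE$ carries weight $c-a$ or $b$ according as $\sigma^\bullet$ agrees or disagrees across it ($c-b$ or $a$ on $E^\bullet_\NW$). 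So the signs of $\sigma^\bullet$ on $\xi^\bullet$-clusters interact unless $c=a+b$.

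The symmetry that does hold is the other one. The weight depends on $\sigma^\circ$ only through the constraint $\om[\sigma^\circ]\subset\xi^\bullet$, so given $(\sigma^\bullet,\xi^\bullet)$ the spin $\sigma^\circ$ is i.i.d.\ uniform on clusters of $(\xi^\bullet)^*$. Symmetrically, with $\xi^\circ$ the white analogue of $\xi^\bullet$, $\sigma^\bullet$ is i.i.d.\ uniform on clusters of $(\xi^\circ)^*$ given $(\sigma^\circ,\xi^\circ)$. You have the roles reversed: $\xi^\bullet$-circuits let you flip $\sigma^\circ$, and $\xi^\circ$-circuits let you flip $\sigma^\bullet$. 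The correct Step~3 is
\[
\mu^{\sv}_{a,b,c}[\sigma^\bullet_{x_1}\sigma^\bullet_{x_2}]=\mu^{\sv}_{a,b,c}[x_1\xleftrightarrow{(\xi^\circ)^*}x_2],
\qquad
\bigl|\mu^{\sv}_{a,b,c}[\sigma^\bullet_{x_1}\sigma^\bullet_{x_2}\sigma^\circ_{y_1}\sigma^\circ_{y_2}]\bigr|\le\mu^{\sv}_{a,b,c}[y_1\xleftrightarrow{(\xi^\bullet)^*}y_2],
\]
and both right-hand sides vanish by non-percolation of $(\xi^\bullet)^*$ and of its translate $(\xi^\circ)^*$. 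Your white flip inside $\xi^\bullet$-circuits is valid and already suffices for the four-point function. The black flip inside $(\xi^\bullet)^*$-circuits is not a symmetry.

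The same gap is in Step~2. Everything there reduces to showing that the monotone limits $\mu^{\sv,\bplus}_{a,b,c}$ and $\mu^{\sv,\bminus}_{a,b,c}$ of $\mu^{\sv,\bplus}_V$ and $\mu^{\sv,\bminus}_V$ coincide; after that, all other boundary conditions are squeezed in $\preceq_\bullet$. Your claim that two boundary conditions differ by at most the no-circuit probability also ignores that the law of the outermost circuit depends on the boundary condition. You additionally need $\mu^{\sv,\bplus}_V$ to be nearly constant on $\Lambda_m$ over all $V\supset\Lambda_n$, which again comes from monotonicity. Global spin-flip symmetry cannot produce $\xi^{\bminus}$-circuits under $\bplus$ boundary conditions.

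The actual mechanism, which this paper reproduces in the space-time proof of Theorem~\ref{thm:Q:delocalisation} via Lemma~\ref{lem:q:6v+convergence} and \eqref{eq:q:super-duality}, runs as follows.
\begin{enumerate}
\item Under the $\bplus$ limit neither $\xi^\circ$ nor $(\xi^\circ)^*$ percolates. Hence there are infinitely many $(\xi^\circ)^*$-circuits around $\Lambda_n$ in distinct finite clusters, each carrying an independent uniform $\sigma^\bullet$-sign.
\item Super-duality $(\xi^\circ)^*\subset\xi^\bullet$ holds exactly because $c\le a+b$; it is the condition $1-\tfrac{a}{c}\le\tfrac{b}{c}$ for the monotone coupling at $c$-vertices. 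It makes some of these circuits $\xi^{\bminus}$-circuits.
\item The domain Markov property \eqref{eq:6v-domain-markov-xi} with comparison of boundary conditions then gives $\mu^{\sv,\bplus}_{a,b,c}\preceq_\bullet\mu^{\sv,\bminus}_{a,b,c}$.
\end{enumerate}
Your argument never uses $c\le a+b$ beyond the citation, which is a symptom of this missing step.

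Finally, the torus sketch fails. The height is constant on black faces along $\xi^\bullet$-paths, so non-contractible $\xi^\bullet$-circuits force zero winding, and nothing makes the winding on $\TT_{L,M}$ vanish with probability tending to one. The paper handles the torus separately (Lemma~\ref{cor:6v-unique-TI-gibbs}) using Sheffield's variational principle, strict convexity of the surface tension, and uniqueness of the rough zero-slope ergodic gradient Gibbs measure. It handles the cylinder by FKG on the cylinder plus sandwiching (Lemma~\ref{lem:cylinder-6v-converges}).
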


Note that Theorem \ref{thm:6v-deloc} holds for weights \eqref{eq:6v-weights} precisely for $\ul{p}\in\cP$ satisfying $p_\NW,p_\NE\ge p_\es$.\\

We also need that the convergence holds for the measures on the cylinder and torus. Starting with the cylinder $\cyl_{L,M}$ recall the vertices are $V=\{-L+1,\dots,L\}\times\{-M+1,\dots,M\}$, and the faces $F$ are those with at least one corner in $V$. Define $\partial^+_F,\partial^-_F$ to be the topmost (resp. bottommost) row of faces in the cylinder. Let $\mu^{\sv,\bullet}_{\cyl_{L,M},a,b,c}$ be the measure on spin configurations such that the spins on the black faces of $\partial^+$ are constant, and the spins on the black faces of $\partial^-$ are also constant (although perhaps different from those on $\partial^+$).

\begin{lemma}\label{lem:cylinder-6v-converges}
    The measure $\mu^{\sv,\bullet}_{\cyl_{L,M},a,b,c}$ converges to $\mu^{\sv}_{a,b,c}$ as $L,M\to\infty$ in any order.
\end{lemma}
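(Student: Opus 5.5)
We plan to prove the lemma by the standard sandwiching argument via the domain Markov property \eqref{eq:6v-domain-markov-xi} of the percolation process $\xi^\bullet$. We reduce everything to the boundary conditions already controlled by Theorem \ref{thm:6v-deloc}, namely the measures $\mu^{\sv,\bplus}_{V,a,b,c}$ on simply connected domains, which converge to $\mu^{\sv}_{a,b,c}$. By symmetry of the measure under global flip of $\sigma^\bullet$, we may condition on the signs of the top and bottom black boundary spins and treat each case separately. It suffices to show that for a fixed local event $A$ depending on spins (and $\xi^\bullet$) in a fixed box $\Lambda_k$, $\mu^{\sv,\bullet}_{\cyl_{L,M}}[A]\to\mu^\sv_{a,b,c}[A]$.

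\textbf{Step 1 (circuits exist with high probability).} The key claim is the following. For fixed $k$ and any $\eps>0$ there is $N$ such that, uniformly in $L,M$ large (in either order), with $\mu^{\sv,\bullet}_{\cyl_{L,M}}$-probability at least $1-\eps$ there is a circuit $\gamma$ of $\xi^{\bplus}$ or of $\xi^{\bminus}$ (edges of $\xi^\bullet$ with both endpoints of spin $-1$) in the annulus $\Lambda_N\setminus\Lambda_k$, surrounding $\Lambda_k$.

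To obtain this, we first use the infinite-volume input of Theorem \ref{thm:6v-deloc}. Under $\mu^\sv_{a,b,c}$, neither $(\xi^\bullet)^*$ nor $\xi^\bullet$ percolates. By ergodicity together with the fact that $\xi^\bullet$-clusters are spin-constant, for some $N$ there is a $\xi^\bullet$-circuit around $\Lambda_k$ inside $\Lambda_N$ with probability $\ge 1-\eps/2$. This event is an increasing-type local event depending on a finite region $\Lambda_N$.

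The obstacle is transferring this to the cylinder measure before knowing convergence. For this we plan to use the FKG/monotonicity available for $\xi^\bullet$ given boundary conditions, as used in \cite{glazman-lammers}. Conditional on the black boundary spins being $+$ (with $\sigma^\circ$ free), the law of $(\xi^{\bplus},\ldots)$ on the cylinder stochastically dominates that under the measure with $\bplus$ boundary on a large simply connected box strictly inside the cylinder away from the periodic seam. If the monotonicity is not directly available, the fallback is the following. Consider a box $B$ of side $\sqrt{\min(L,M)}$ around $\Lambda_k$, and argue via the exploration of the outermost $\xi^\bullet$-circuit inside $B$ combined with the Lis/Glazman--Lammers dichotomy: either such a circuit exists with high probability, or the dual $(\xi^\bullet)^*$ crosses the annulus, and the latter is ruled out by the same non-percolation estimate, uniformly in domains containing $B$. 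This uniformity in boundary conditions is the main obstacle, and we expect to rely on the quantitative crossing estimates of \cite{glazman-lammers}, which hold for any domain containing a given annulus.

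\textbf{Step 2 (conclude via domain Markov).} On the event of Step 1, explore the outermost such circuit $\gamma$ from outside; this is a stopping set. By \eqref{eq:6v-domain-markov-xi} (and its $\bminus$ analogue together with spin-flip symmetry), conditionally on $\gamma$ the law inside is $\mu^{\sv,\bplus}_{V(\gamma)}$, up to a flip. Averaging over $\gamma\subset\Lambda_N$ shows that $\mu^{\sv,\bullet}_{\cyl_{L,M}}[A]$ is within $\eps$ of a convex combination of $\mu^{\sv,\bplus}_{V(\gamma)}[A]$ and its flip. Applying the same argument to $\mu^{\sv,\bplus}_{\Lambda_n}$ for large $n$ gives the identical representation. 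Using Theorem \ref{thm:6v-deloc} (the limit is flip-invariant, since free and $\bplus$ limits agree), both quantities lie within $O(\eps)$ of $\mu^\sv_{a,b,c}[A]$. Since all estimates are uniform in how $L,M\to\infty$, the order of the limits is irrelevant.
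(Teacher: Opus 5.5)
Step 1 carries all of the content, and neither route you give for it works as stated.

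\textbf{The monotonicity you invoke goes the wrong way.} $\bplus$ is the maximal boundary condition. Comparison of boundary conditions therefore gives $\mu^{\sv,\bullet}_{\cyl_{L,M}}|_V\preceq_\bullet\mu^{\sv,\bplus}_V$ for a contractible $V$ inside the cylinder, not the reverse. Also, ``there is a $\xi^\bullet$-circuit around $\Lambda_k$'' is not an increasing event for the ordering of $(\sigma^\bullet,\xi^{\bplus},-\xi^{\bminus})$. It is the union of an increasing event (a $\xi^{\bplus}$-circuit) and a decreasing one (a $\xi^{\bminus}$-circuit).

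\textbf{The FKG inequality on the cylinder is assumed, not justified.} Any comparison on the cylinder needs FKG on a non-simply-connected domain. There the height function could a priori wind around the cylinder, and \cite{glazman-lammers} have to modify their argument on the torus.

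\textbf{The missing observation.} This is the paper's entire proof. Under the $\bullet$ boundary condition the height function \eqref{eq:sv-spin-hf} is globally well defined: it accumulates zero increment around the non-contractible cycle. Hence the FKG proof of \cite[Proposition 4.9]{glazman-lammers} applies unchanged. FKG together with the domain Markov property then gives the two-sided sandwich $\mu^{\sv,\bminus}_V\preceq_\bullet\mu^{\sv,\bullet}_{\cyl_{L,M}}|_V\preceq_\bullet\mu^{\sv,\bplus}_V$. Both outer measures converge to $\mu^{\sv}_{a,b,c}$ by Theorem \ref{thm:6v-deloc}, so convergence in any order of $L,M$ is immediate, and no circuit exploration is needed.

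\textbf{The fallback does not close the gap.} A $(\xi^\bullet)^*$-crossing of an annulus is not a monotone event either. So infinite-volume non-percolation does not transfer to arbitrary finite domains without exactly the comparison you are trying to avoid. A crossing bound uniform over all domains containing the annulus is essentially the conclusion you want, and \cite{glazman-lammers} states no such estimate.

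\textbf{Step 2 also needs repair.} The cylinder and $\mu^{\sv,\bplus}_{\Lambda_n}$ both produce convex combinations of $\mu^{\sv,\bplus}_{V(\gamma)}[A]$ and its flip. The weights differ, however: they are the laws of the outermost circuit in $\Lambda_N$ under the two measures. Showing those weights are close is again the convergence you are after. Moreover, if $\gamma$ only surrounds $\Lambda_k$, the individual terms $\mu^{\sv,\bplus}_{V(\gamma)}[A]$ need not be close to $\mu^{\sv}_{a,b,c}[A]$. The correct version requires the circuit to surround $\Lambda_m$ with $m$ large. It then uses two facts: $\mu^{\sv,\bplus}_V[A]\to\mu^{\sv}_{a,b,c}[A]$ uniformly over $V\supset\Lambda_m$, which follows from monotonicity in $V$; and the limit is flip-invariant. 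Once Step 1 is obtained from the sandwich, however, this step is redundant.
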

A similar statement holds when one uses the white faces of $\partial^+, \partial^-$, or a mixture of the two, or if one specifies the value of the constant spin on either or both boundaries, but we will not need these statements.
\begin{proof}
    Notice that under $\mu^{\sv,\bullet}_{\cyl_{L,M},a,b,c}$, the height function defined by \eqref{eq:sv-spin-hf} is well defined (up to overall shifts), that is, the height difference accumulated along any cycle (contractible or not) is 0. This is enough that the proof of FKG of the spins and percolation of \cite[Proposition 4.9]{glazman-lammers} can be applied to this case (see also Appendix \ref{appendix:FKG}). One then has for any contractible domain $V$ in $\cyl_{L,M}$,
    \bes
        \mu^{\sv,\bminus}_{V,a,b,c} \preceq_\bullet \mu^{\sv,\bullet}_{\cyl_{L,M},a,b,c}|_V
        \preceq_\bullet \mu^{\sv,\bplus}_{V,a,b,c},
    \ees
    where the stochastic domination $\preceq_\bullet$ is that of \cite{glazman-lammers}. The convergence follows.
\end{proof}

We now turn to the measure on the torus, which can be shown to converge to the same measure as above as a consequence of delocalisation and results of Sheffield \cite{sheffield-thesis}. We summarise how this is done. 

The six-vertex model measure can be interpreted as a gradient measure, with a potential which is convex (in the language of \cite{sheffield-thesis}, simply attractive) in the range $a,b\le c$. A result of Sheffield \cite[Lemma 8.7.1]{sheffield-thesis} then states that if there is a translation-invariant ergodic gradient Gibbs measure $\mu$ of zero slope (meaning $\mu(h(x)-h(y))=0$ for all faces $x\neq y$) which is rough (meaning it cannot be written as the gradient of a height function Gibbs measure), then $\mu$ is the unique translation-invariant ergodic gradient Gibbs measure, and is extremal. For the range $a,b\le c$, the model also satisfies an absolute-value FKG property (ie. $|h|$ is FKG), which implies that this roughness definition is equivalent to the delocalisation of the form:
\bes
    \lim_{V\nearrow\ZZ^2}\Var_{\mu^{\sv-\hom,0,1}_{V,a,b,c}}
        \left[h(x)\right]=\infty.
\ees
See \cite[Theorem 2.7]{ott-lammers}. Hence Theorem \ref{thm:6v-deloc} and these results imply that $\mu^{\sv,\rightarrow}_{a,b,c}$ is the unique translation-invariant ergodic gradient Gibbs measure. 

One can use this to prove that the measure on the torus converges to this measure.
\begin{lemma}\label{cor:6v-unique-TI-gibbs}
    For the six-vertex model with $a,b\le c\le a+b$ we have
    \bes
    \mu^{\sv,\rightarrow}_{\TT_{L,M},a,b,c} \to \mu^{\sv,\rightarrow}_{a,b,c}, 
    \qquad 
    \text{and so}
    \qquad
    \mu^{\sv}_{\TT_{L,M},a,b,c} \to \mu^{\sv}_{a,b,c}
    \ees
    as $L,M\to\infty$ (in either order or simultaneously).
\end{lemma}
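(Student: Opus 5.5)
We argue by compactness and uniqueness. The arrow configurations live in a compact space $\{\pm\}^{\EE}$, so every sequence of torus measures $\mu^{\sv,\rightarrow}_{\TT_{L,M},a,b,c}$ with $L,M\to\infty$ (in any order or simultaneously) has a weakly convergent subsequence. It therefore suffices to show that every subsequential limit $\mu$ equals $\mu^{\sv,\rightarrow}_{a,b,c}$. The strategy is to verify that $\mu$ meets the hypotheses of the uniqueness statement above: it should be a translation-invariant gradient Gibbs measure of zero slope. We then invoke Sheffield's result \cite[Lemma 8.7.1]{sheffield-thesis}, together with roughness of $\mu^{\sv,\rightarrow}_{a,b,c}$ (which follows from Theorem \ref{thm:6v-deloc} and \cite[Theorem 2.7]{ott-lammers}), to conclude that $\mu^{\sv,\rightarrow}_{a,b,c}$ is the unique translation-invariant ergodic gradient Gibbs measure.

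First, Gibbs property and invariance. The torus measure is invariant under all torus translations, so any weak limit is $\ZZ^2$-invariant. The torus measure also satisfies the arrow domain Markov property \eqref{eq:6v-domain-markov-arrows} for every contractible $V'$ once $L,M$ are large. This conditional-specification property involves only local events, so it passes to weak limits, and $\mu$ is Gibbs in the arrow (gradient) sense.

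Second, zero slope. By translation invariance on the torus, the mean height increment across any horizontal edge is the same for all edges in a given row. Summing these increments around a non-contractible horizontal cycle gives the winding, and each torus configuration has some winding number. The total weight is symmetric under the global reversal of all arrows, which maps $a\to a$, $b\to b$, $c\to c$. Hence $\mu$ assigns expected increment $0$ to every edge, so $\mu(h(x)-h(y))=0$ for all faces $x,y$. Slope zero in Sheffield's sense concerns the mean of the increments, so this symmetry argument is exactly what is needed.

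Third, ergodic decomposition, which I expect to be the main obstacle. The limit $\mu$ need not be ergodic, and its ergodic components could have nonzero slopes $s$ that average to zero. To rule this out, I would show that under the torus measure the empirical tilt (winding per unit length) concentrates at $0$. For this I would use Sheffield's large-deviation theory: the surface tension $\sigma(s)$ is strictly convex, and the torus measure with winding $sL$ has weight roughly $e^{-|\TT|\sigma(s)}$, so the winding per unit length concentrates on the minimiser $s=0$ (which is the minimiser by symmetry). It follows that almost every ergodic component of $\mu$ is a translation-invariant ergodic gradient Gibbs measure of slope zero. By Sheffield's uniqueness, each such component equals $\mu^{\sv,\rightarrow}_{a,b,c}$, hence $\mu=\mu^{\sv,\rightarrow}_{a,b,c}$.

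Finally, the spin statement. The pair $(\sigma^\bullet,\sigma^\circ)$ is obtained locally from the arrows via \eqref{eq:sv-spin-hf}, together with an independent uniform global sign; on the torus, parity of windings makes this well-defined up to that sign. The convergence $\mu^{\sv}_{\TT_{L,M},a,b,c}\to\mu^{\sv}_{a,b,c}$ therefore follows from the arrow convergence, with the $\xi^\bullet$ marginal following likewise from the local sampling rule \eqref{eq:sampling-xi}.
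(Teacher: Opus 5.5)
Your outline agrees with the paper's up to the ergodic-decomposition step: subsequential limit, Gibbs property, translation invariance, zero mean slope by arrow reversal, then Sheffield's uniqueness for rough zero-slope measures. The step you yourself flag does not go through as written.

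Concentration of the \emph{global} winding per unit length at $0$ does not imply that the ergodic components of the limit have slope zero. A torus configuration can have horizontal tilt $+s$ on the left half and $-s$ on the right half (a ``roof''). Every horizontal cycle then has winding $0$. Yet a subsequential limit built from such configurations would be the non-ergodic mixture $\tfrac12(\mu_{s}+\mu_{-s})$ of gradient Gibbs measures of slopes $\pm s$. That mixture has mean slope $0$ and is entirely consistent with your concentration statement. What excludes roofs is that they cost free energy $\sigma(s)-\sigma(0)>0$ per site on a macroscopic region. So you need control of the \emph{local} empirical tilt (the variational principle, or large deviations for the empirical gradient measure), not of the total winding. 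As it stands, the sentence ``It follows that almost every ergodic component of $\mu$ is \dots\ of slope zero'' is a non sequitur.

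The paper handles exactly this point via specific free energy. One shows that any subsequential limit $\nu$ has $\mathrm{SFE}(\nu)=\sigma(0)$, the minimal SFE among translation-invariant gradient measures of slope zero; here $\sigma(\cdot)$ is the surface tension. SFE is affine, and each ergodic component $\mu$ satisfies $\mathrm{SFE}(\mu)\ge\sigma(S(\mu))$. Writing $\nu=\int\mu\,w_\nu(\d\mu)$ gives
\[
\sigma(0)=\mathrm{SFE}(\nu)=\int\mathrm{SFE}(\mu)\,w_\nu(\d\mu)\ge\int\sigma(S(\mu))\,w_\nu(\d\mu)\ge\sigma\left(\int S(\mu)\,w_\nu(\d\mu)\right)=\sigma(0).
\]
Strict convexity of $\sigma$ then forces $S(\mu)=0$ and $\mathrm{SFE}(\mu)=\sigma(0)$ for almost every component. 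By \cite[Theorem 6.3.1]{sheffield-thesis} these components are Gibbs, and uniqueness identifies them with $\mu^{\sv,\rightarrow}_{a,b,c}$. Replacing your winding concentration by this argument repairs the proof.

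A smaller point in your last paragraph concerns the spins on the torus. The spins defined through $\sigma^\bullet_x\sigma^\circ_y=h(x)-h(y)$ exist only when both windings lie in $4\ZZ$, not merely when they are even: the product of these relations around a cycle of winding $W$ is $(-1)^{W/2}$. So the torus spin measure is the arrow measure conditioned on $W_h,W_v\in 4\ZZ$. You need the arrow convergence for this conditioned measure; the same SFE argument provides it, since the conditioning event is not exponentially unlikely.
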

To prove this, take a subsequential limit $\nu$ of the measures on the torus and observe that it must be Gibbs (as the interaction is local), translation invariant (as we're on the torus) and of zero slope (symmetry under flipping all arrows). It remains to show that the limit $\nu$ is ergodic. 

One first proves that the specific free energy (SFE) (see Section 2 of \cite{sheffield-thesis}) of the limit $\nu$ must be equal to that of $\mu^{\sv}_{a,b,c}$. One has that $SFE(\mu^{\sv}_{a,b,c})=\sigma(0)$, where $\sigma(0)$ is the surface tension at slope 0, defined as the infimum of $SFE(\mu)$ among translation-invariant gradient measures of slope 0. Hence $SFE(\nu)=\sigma(0)$. One then notes that the SFE is affine, so that when one decomposes $\nu$ into ergodic components: $\nu=\int w_\nu(\mu) \d\mu$, where the integral is over ergodic translation invariant gradient measures, then $SFE(\nu) = \int w_\nu (\sigma(S(\mu))) \d\mu$, where $S(\mu)$ is the slope of $\mu$. See \cite[Theorem 6.3.1 (Variational principle)]{sheffield-thesis}. Finally, the surface tension is known to be strictly convex \cite[Theorem 8.6.2]{sheffield-thesis}, hence all of the ergodic components $\mu$ of $\nu$ must have zero slope and must have $SFE(\mu)=\sigma(0)$, otherwise one would have $SFE(\nu)>\sigma(0)$. They are all therefore also Gibbs by \cite[Theorem 6.3.1]{sheffield-thesis}. But since there is a unique ergodic translation-invariant gradient Gibbs measure of zero slope, we have that $\nu$ must be it.  

\subsection{Mirror - six-vertex coupling}
The mirror model and six-vertex model can be coupled as follows. Take a mirror model configuration and spend the factor $2^{l(m;m')}$ by orienting the loops independently, uniformly in either direction. This produces a joint configuration of mirrors and oriented edges, such that the oriented edges give a consistent orientation to each loop. The marginal on the oriented edges is the six-vertex model. To go the other way, start with a six-vertex configuration $\kappa$ and sample mirrors at each vertex with appropriate weights and such that the resulting loops are consistently oriented:
\be\label{eq:sample-mirror-from-6v}
    \begin{cases}
        m_v \in m_\NE \ \mathrm{w.p.} \ \frac{p_{\NE}}{p_{\NE}+p_{\es}}=\frac{p_{\NE}}{a}, 
        \ \ \ \ \ \ \
        m_v \in m_\es \ \mathrm{w.p.} \ \frac{p_{\es}}{p_{\NE}+p_{\es}} =\frac{p_{\es}}{a}, 
        \ \ \ \  
        & \mathrm{if} \ v\in\kappa_a\\
        m_v \in m_\NW \ \mathrm{w.p.} \ \frac{p_{\NW}}{p_{\NW}+p_{\es}}=\frac{p_{\NW}}{b}, 
        \ \ \ \ \  
        m_v \in m_\es \ \mathrm{w.p.} \ \frac{p_{\es}}{p_{\NW}+p_{\es}}
        =\frac{p_{\es}}{b}, 
        \ \ \ \    
        & \mathrm{if} \ v\in\kappa_b\\
        m_v \in m_\NW \ \mathrm{w.p.} \ \frac{p_{\NW}}{p_{\NW}+p_{\NE}} =\frac{p_{\NW}}{c}, 
        \ \ \ \  
        m_v \in m_\NE \ \mathrm{w.p.} \ \frac{p_{\NE}}{p_{\NW}+p_{\NE}} =\frac{p_{\NE}}{c}, 
        \ \ \ \   
        & \mathrm{if} \ v\in\kappa_c,\\
    \end{cases}
\ee
where $\kappa_a$ is the set of vertices in $V$ with an $a$-type vertex in $\kappa$, and similar for $\kappa_b$ and $\kappa_c$. 
This gives back the joint measure, and the marginal on unoriented loops is the mirror model. 
We make this rigorous below.

\begin{proposition}\label{prop:mir-6v-coupling}
    Let $\ul{p}\in\cP$ and let $a,b,c$ satisfy \eqref{eq:6v-weights} and let $V\subset\ZZ^2$ be a domain. 
    \begin{enumerate}
    	\item If one takes $m\sim\mu^{\mir}_{V,\ul{p}}$ and uniformly and independently orients the loops, and then the marginal on oriented edges is $\mu^{\sv,\rightarrow}_{V,a,b,c}$. 
    	\item If one takes $\kappa\sim\mu^{\sv}_{V,a,b,c}$ and at each vertex independently samples mirrors according to the rule \eqref{eq:sample-mirror-from-6v}, then the marginal on mirrors is $m\sim\mu^{\mir}_{V,\ul{p}}$.
    	\item There exists an analogous coupling between $\mu^{\mir;\bullet}_{V,\ul{p}}$ and $\mu^{\sv,\bplus}_{V,a,b,c}$, the only difference being that when sampling $(\sigma^\bullet,\sigma^\circ)$ from $\kappa$, one sets $\sigma^\bullet\equiv +1$ on all of $\partial^\bullet_F$ in part 1; in part 2 one sets $m_v\in m^\bullet$ deterministically for all $v\in\partial V$. One defines analogous couplings between the pair $\mu^{\mir;\circ}_{V,\ul{p}}$ and $\mu^{\sv,\wplus}_{V,a,b,c}$, the pair $\mu^{\mir;\bullet}_{\cyl_{L,M},\ul{p}}$ and $\mu^{\sv,\bplus}_{\cyl_{L,M},a,b,c}$ and the pair $\mu^{\mir}_{\TT_{L,M},\ul{p}}$ and $\mu^{\sv}_{\TT_{L,M},a,b,c}$.
	\end{enumerate}
\end{proposition}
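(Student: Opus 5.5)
We construct a joint measure on pairs $(m,\kappa)$ of mirror configurations and arrow configurations, and check that both marginals are correct.

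\emph{Joint weight.} On $\Om^\mir_V\times\Om^{\sv,\rightarrow}_V$ set
\bes
\Phi(m,\kappa)=\prod_{v\in V} p_{m_v}\cdot \mathbbm{1}\{\kappa \text{ consistently orients every loop or path of } m\},
\ees
where for free boundary conditions the loops are read off in $V$, and a path leaving $V$ through $\partial_EV$ is never reconnected.

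\emph{Mirror marginal (part 1).} Fix $m$. The admissible $\kappa$ correspond bijectively to choices of orientation for each loop or path of $m$. There are $2^{l(m)}$ of them, where $l(m)$ counts connected components of loops or paths meeting $V$, exactly as in the free-boundary definition of $\mu^{\mir}_{V,\ul{p}}$. So $\sum_\kappa\Phi(m,\kappa)=\prod_v p_{m_v}\,2^{l(m)}$, and normalising gives $\mu^{\mir}_{V,\ul{p}}$. Conditionally on $m$, $\kappa$ is uniform over these orientations, i.e.\ the loops are oriented independently and uniformly. This proves part 1, provided we also have the $\kappa$-marginal, which is the next step.

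\emph{Arrow marginal and conditional law of $m$ (part 2).} Fix $\kappa$. Consistency of the orientation is a local condition: at each vertex $v$, each incoming edge must be paired by the mirror with an outgoing edge. This makes the sum over $m$ factorise over vertices. The main case check is which mirror types are compatible with each vertex type.
\begin{itemize}
\item At an $a$-vertex (arrows pass straight through in both directions with the ``NE'' convention of Figure \ref{fig:six-vertex-repeat}), the compatible choices are no mirror and the NE mirror. The local sum is $p_\es+p_\NE=a$.
\item At a $b$-vertex the compatible choices are no mirror and NW, giving local sum $b$.
\item At a $c$-vertex both incoming edges are opposite each other, so a mirror is forced, either NW or NE, giving local sum $c$.
\end{itemize}
This check has to be carried out against the figures, and the labelling of which of $a,b$ pairs with NE rather than NW must match \eqref{eq:6v-weights}. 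Granting it, $\sum_m\Phi(m,\kappa)=a^{n_1+n_2}b^{n_3+n_4}c^{n_5+n_6}$, so the $\kappa$-marginal is $\mu^{\sv,\rightarrow}_{V,a,b,c}$. Moreover, the conditional law of $m$ given $\kappa$ is a product over vertices with exactly the probabilities \eqref{eq:sample-mirror-from-6v}. Hence sampling $\kappa\sim\mu^{\sv}_{V,a,b,c}$ (via its arrow marginal) and then applying \eqref{eq:sample-mirror-from-6v} reproduces $\Phi$, whose mirror marginal is $\mu^{\mir}_{V,\ul{p}}$.

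\emph{Boundary conditions (part 3).} Take $\mu^{\mir;\bullet}_{V,\ul{p}}$. The boundary configuration $m'$ has black mirrors everywhere outside $V$, so every loop exiting $V$ bounces around a black face of $\partial F^\bullet$ and re-enters. Consistent orientation then forces the arrows around each such boundary black face to circulate, i.e.\ the height is constant on $\partial F^\bullet$. Equivalently, with $\sigma^\bullet$ normalised to $+1$ there, we are in the $\bplus$ condition. Conversely, under $\mu^{\sv,\bplus}$ the $\sigma^\bullet$ domain walls avoid the boundary, and the consistent completion by black mirrors is then the one prescribed in the statement. The same factorisation as before gives the weight correspondence, with $l(m;m')$ again counting the components to be oriented. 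The $\circ$ case, the cylinder and the torus are handled identically. On the torus, no boundary is involved; the only point to note is that loops winding around the torus are still oriented in two ways each, so the $2^{l(m)}$ bijection survives.

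The main obstacle is this boundary bookkeeping: verifying that $l(m;m')$ counts exactly the components whose orientations are free, and that the resulting spin constraint is precisely $\bplus$ (respectively $\wplus$), including the global normalisation $\sigma^\bullet=+1$ versus $\pm1$, which contributes only an overall factor of $2$.
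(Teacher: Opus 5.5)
Your joint-measure construction is the argument the paper intends; the paper only sketches it, leaving the details to the reader: spend $2^{l(m;m')}$ by orienting loops uniformly, and conversely sample mirrors locally subject to consistency. Parts 1 and 2 are correct as you wrote them. Consistency of $\kappa$ with $m$ is equivalent to the vertex-wise condition that each mirror pairs an incoming edge with an outgoing one, so the sum over $m$ factorises. The compatible sets $\{\es,\NE\}$, $\{\es,\NW\}$, $\{\NW,\NE\}$ at $a$-, $b$-, $c$-vertices match \eqref{eq:6v-weights} and \eqref{eq:sample-mirror-from-6v}.

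The boundary step in part 3 has black and white reversed, and as written it derives the wrong boundary condition.

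\emph{Geometry of the outside arcs.} A mirror in $m^\bullet$ lies along the diagonal joining the two black faces at its vertex. It therefore pairs the two edges of each \emph{white} face at that vertex. With black mirrors everywhere outside $V$, a loop leaving $V$ through $e\in\partial_E V$ runs around the white boundary face $y$ adjacent to $e$ and re-enters through another edge $e'$ of $y$.

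\emph{Height consequence.} Consistency says that $e$ and $e'$ circulate in the same sense around $y$. Crossing into $y$ through $e$ and out through $e'$ then gives zero net height change. Hence the black faces on the far sides of $e$ and $e'$ have equal height. Chain these equalities around the boundary, together with the trivial ones for consecutive boundary edges sharing a black face. This gives $h$ constant on $\partial F^\bullet$. That is equivalent to $\sigma^\bullet$ constant on $\partial F^\bullet$, because diagonally adjacent black faces carry equal spins iff they have equal heights.

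\emph{Why the written version fails.} If, as you wrote, the arrows circulated around boundary \emph{black} faces, the same computation would make $h$ constant on the white boundary faces. You would then get the $\wplus$ condition, i.e.\ the wrong pairing.

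\emph{A cleaner local form.} A mirror $m_v\in m^\bullet$ is compatible with $\kappa$ at $v$ iff the two black faces at $v$ have equal height. Conversely, $h$ constant on $\partial F^\bullet$ is exactly what makes the fixed outside arcs consistently oriented, with $l(m;m')$ counting the loops meeting $V$.

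With this correction your argument goes through.
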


The proof is elementary; we leave the details to the reader.\\

We're now able to prove the convergence of mirror model measures.

\begin{proof}[Proof of Proposition \ref{prop:mir-conv-from-6v}]
    Take the infinite volume measure $\mu^{\sv,\rightarrow}_{a,b,c}$ on arrow configurations. At each vertex of $\ZZ^2$, sample mirrors independently according the rules \eqref{eq:sample-mirror-from-6v}, and call the resulting marginal on mirror configurations $\mu^\mir_{\ul{p}}$, and the coupling $\mu^{\sv-\mir}_{\ul{p}}$. 

    We show that all the measures in the proposition converge to $\mu^{\mir}_{\ul{p}}$. Let $A$ be a cylinder event on mirrors, dependent only on the mirror configuration on vertices $V_0$. Let $E_0$ be the set of edges of $\ZZ^2$ with at least one endpoint in $V_0$. We have
    \be\label{eq:mir-conv-from-6v}
    \begin{split}
        \lim_{V\nearrow\ZZ^2} \mu^{\mir}_{V,\ul{p}}[A] 
        &= 
        \lim_{V\nearrow\ZZ^2} \sum_{\tilde{\kappa}} 
        \mu^{\sv-\mir}_{V,\ul{p}}[A \ | \ \kappa|_{E_0}=\tilde{\kappa}] 
        \mu^{\sv-\mir}_{V,\ul{p}}[\kappa|_{E_0}=\tilde{\kappa}]  \\
        &=
        \sum_{\tilde{\kappa}} 
        \mu^{\sv-\mir}_{\ul{p}}[A \ | \ \kappa|_{E_0}=\tilde{\kappa}]
        \lim_{V\nearrow\ZZ^2} 
        \mu^{\sv-\mir}_{V,\ul{p}}[\kappa|_{E_0}=\tilde{\kappa}]\\
        &=
        \sum_{\tilde{\kappa}} 
        \mu^{\sv-\mir}_{\ul{p}}[A \ | \ \kappa|_{E_0}=\tilde{\kappa}]
        \mu^{\sv,\rightarrow}_{\ul{p}}[\kappa|_{E_0}=\tilde{\kappa}]\\
        &=
        \mu^{\mir}_{\ul{p}}[A],
    \end{split}
    \ee
    where the second equality is from the fact that the sampling of mirror configurations given an arrow configuration is local and only dependent on the arrows on neighbouring edges, and the third equality is the six-vertex of Theorem \ref{thm:6v-deloc}. The proof is the same for the other measures; note for the torus and cylinder measures we use Corollary \ref{cor:6v-unique-TI-gibbs} and Lemma \ref{lem:cylinder-6v-converges} respectively.\\

    The limit $\mu^{\mir}_{\ul{p}}$ is translation-invariant, as the six-vertex measure from which it is sampled is itself translation-invariant. For any event $A$ in the mirror model dependent the configuration on a set of vertices $V'$, since the mirrors are sampled locally from the six-vertex configuration we have that $\mu^{\mir}_{\ul{p}}[A] = \mu^{\sv}_{a,b,c}[F_A]$  where $F_A$ is a random variable which is a function of the six-vertex configuration, dependent only on the arrow configuration on $V'$ (or the spin configuration on $F'$, the faces with at least one corner in $V'$). We can prove the following condition which is equivalent to tail-triviality \cite[Proposition 7.9]{georgii}: for all local events $A$,
    \be\label{eq:mir-tail-triv}
        \lim_{\Lambda\subset\ZZ^2} \sup_{B\in\tau_\Lambda} | \mu^{\mir}_{\ul{p}}[A\cap B] - \mu^{\mir}_{\ul{p}}[A]\mu^{\mir}_{\ul{p}}[B] | =0,
    \ee
    where $\tau_\Lambda$ is the sigma algebra generated by the configuration outside of $\Lambda$. Indeed, let $A$ be local and let $B\in\tau_\Lambda$. We have $\mu^{\mir}_{\ul{p}}[A\cap B] = \mu^{\sv}_{a,b,c}[F_A F_B]$. Let $\Lambda_n$ contain the support of $A$ and let $\cO_{n,N}$ be the event that there exists a circuit of $\xi^\bullet$ in the box $\Lambda_N$, surrounding $\Lambda_n$, where $N>n$ and $\Lambda_N\subset\Lambda$. We know that $\mu^{\sv}_{a,b,c}[\cO_{n,N}] \to1$ as $N\to\infty$ by Theorem \ref{thm:6v-deloc}, and by the domain Markov property \eqref{eq:6v-domain-markov-xi}, for any realisation $\gamma$ of the circuit, conditional on $\gamma\subset\xi^\bullet$, the random variables $F_A$ and $F_B$ are independent. This gives the condition \eqref{eq:mir-tail-triv}. Translation invariance and tail-triviality readily imply that $\mu^{\mir}_{\ul{p}}$ is ergodic.\\

    It remains to prove \eqref{eq:mir-escape-probs}. The proof here is inspired by that of \cite[Proposition 9]{eng-lis}, and we thank Marcin Lis for suggesting this method of proof. 
    
    Take a mirror configuration $m$ and an arrow configuration $\kappa$ such that $m\sim\kappa$ ($\kappa$ provides a valid orienting of the loops of $m$). Height differences are well-defined by $\kappa$. Let $x=(x_1,x_2)\in\ZZ^2$ and let $f_0$ be a fixed face, and  $f_n=\tau_{nx}f_0$. Let $E_0$ be the set of primal edges of $\ZZ^2$ dual to a path of length $x_1+x_2$ from $f_0$ to $f_1$ in the dual lattice. Let $E_n=\tau_{nx}E_0$ ($E_n$ joins $f_n$ and $f_{n+1}$). 
    
    Observe from the coupling that the height difference $h(f_0)-h(f_n)$ is equal to the total flux of arrows through the edges $\cup_{i=1}^n E_i$ (the number of right oriented arrows minus the number of left oriented arrows one encounters when moving from $f_0$ to $f_n$). Observe further that for each edge $e$ in $\cup_{i=1}^n E_i$, the loop through $e$ contributes 1 (or -1) to the flux at $e$ only if it has non-zero winding around at least one of $f_0$ and $f_n$, in which case it must pass through $\bigcup_{i<0; i\ge n}E_i$. Hence we have that
    \bes
    \begin{split}
        \lim_{V\nearrow\ZZ^2}\mu^{\sv}_{V,a,b,c}[|h(f_0)-h(f_n)|]
        &\le 
        (x_1+x_2) \sum_{i=0}^{n-1} 
        \lim_{V\nearrow\ZZ^2}\mu^{\mir}_{V,\ul{p}} 
        [ E_i\leftrightarrow
        \bigcup_{i<0; i\ge n}E_i
        ]\\
        &\le 
        (x_1+x_2) \sum_{i=0}^{n-1}
        \left(
        \sum_{j<0} 
        \lim_{V\nearrow\ZZ^2}\mu^{\mir}_{V,\ul{p}} [ E_i\leftrightarrow E_j]
        +
        \sum_{j\ge n} 
        \lim_{V\nearrow\ZZ^2}\mu^{\mir}_{V,\ul{p}} [ E_i\leftrightarrow E_j]
        \right)\\
        &\to
        2(x_1+x_2)\sum_{j=1}^\infty j \cdot 
        \lim_{V\nearrow\ZZ^2}\mu^{\mir}_{V,\ul{p}} [ E_0\leftrightarrow E_j],
    \end{split}
    \ees
    as $n\to\infty$. Here we used the translation invariance of the measure $\mu^{\mir}_{\ul{p}}$. From Theorem \ref{thm:6v-deloc}, the left hand side of the above equation tends to infinity as $n\to\infty$, so so does the right.

    Finally, we have that for $e_1,e_2$ edges of $\ZZ^2$,
    \bes
    \lim_{V\nearrow\ZZ^2}\mu^{\mir}_{\ul{p}} [ E_0\leftrightarrow E_j]
    \ge 
    C \cdot 
    \lim_{V\nearrow\ZZ^2}\mu^{\mir}_{\ul{p}} [ e_1\leftrightarrow \tau_{jx} e_2],
    \ees
    where $C$ depends on $e_1, e_2$ and $x$ but is independent of $j$. This is an application of finite energy. Indeed, on the event $E_0\leftrightarrow E_j$, let $l_1$ and $l_2$ be the loops connecting $E_0$ and $E_j$ which come closest to $e_1$, $\tau_{jx}e_2$, respectively. From those closest points, modify the mirror configuration in as straight a line as possible so that the loop $l_1$ passes through $e_1$ and the loop $l_2$ passes through $\tau_{jx}e_2$. This comes at a cost of $C'(\tfrac{1}{2}\min\{p_\NE,p_\NW,p_\es\})^{d(e_1,E_0)+d(e_2,E_0)}$ for some constant $C'$, which is uniform in $V$. This completes the proof of \eqref{eq:mir-escape-probs} and the proof of Proposition \ref{prop:mir-conv-from-6v}.
\end{proof}

\section{Convergence of the XXZ model in the range $\Delta\in[-1,1/2]$}\label{sec:xxz-conv}

In this section we prove Part 1 of Theorem \ref{thm:main-xxz}. The main remaining tool to introduce is the transfer matrix of the six-vertex model and how it relates to the Hamiltonian of the XXZ model. 

\begin{proof}[Proof of Part 1 of Theorem \ref{thm:main-xxz}]

Consider the six-vertex model on the torus $\TT_{L,M}$. Let $a,b,c>0$. Recall the parameter
\be\label{eq:delta}
    \Delta
    =
    \frac{a^2+b^2-c^2}{2ab}
    =
    \frac{p_{\varnothing}^2 + p_\varnothing p_{\NE} + p_\varnothing p_{\NW} - p_{\NE}p_{\NW}}
    {p_{\varnothing}^2 + p_\varnothing p_{\NE} + p_\varnothing p_{\NW} + p_{\NE}p_{\NW}}.
\ee
In this section we will only need to consider the case $a=b$. Let $V_0$ denote all the vertices of $\TT_{L,M}$ on the $x$ axis, that is, with vertical co-ordinate equal to 0. Let $E$ denote all edges incident to at least one vertex in $V_0$, and let $E_0$ denote vertical the edges in $E$ below the vertices $V_0$, while $E_1$ is the vertical edges in $E$ lying above the vertices $V_0$. For $\kappa_0$ an arrow configuration on $E_0$ and $\kappa_1$ an arrow configuration on $E_1$, define 
\bes
    (T_{L,a,b,c})_{\kappa_1,\kappa_2}
    =
    (T_{L})_{\kappa_1,\kappa_2}
    =
    \sum_{\substack{\kappa \\ \kappa|_{E_i}=\kappa_i}} 
    a^{n_1(\kappa)+n_2(\kappa)}b^{n_3(\kappa)+n_4(\kappa)}c^{n_5(\kappa)+n_6(\kappa)},
\ees
where the sum is over arrow configurations $\kappa$ on only the edges $E$, satisfying the ice rule at all vertices in $V_0$, such that $\kappa|_{E_i}=\kappa_i$ for $i=1,2$, and $n_j(\kappa)$ is the number of vertices of $V$ of type $j$ in $\kappa$.

Identifying an up arrow with the basis vector $|+\rangle$ of $\CC^2$ and a down arrow with the basis vector $|-\rangle$, one has that $T_L$ is a matrix acting on the Hilbert space $(\CC^2)^{\{-L+1,\dots,L\}}$. Similarly to the quantum models, define for a linear operator $A$ acting on $(\CC^2)^{\{-L+1,\dots,L\}}$,
\bes
    \langle A \rangle^{\sv}_{\TT_{L,M}}
    =
    \frac{\Tr[A T_L^M]}{\Tr[T_L^M]}
\ees
and define for a vector $\Psi\in(\CC^2)^{\{-L+1,\dots,L\}}$,
\bes
    \langle A \rangle^{\sv,\Psi}_{\TT_{L,M}}
    =
    \frac{\langle \Psi | T_L^{M/2} A T_L^{M/2} | \Psi \rangle}
    {\langle \Psi | T_L^M | \Psi \rangle}.
\ees

\subsection{Proof of convergence}
The following is the main proposition of this section. We write $\langle\cdot\rangle^{\mathrm{XXZ}}$ for states of the $XXZ$ model, for clarity.
\begin{proposition}\label{prop:xxz-6v-same}
    Consider the six-vertex model with parameters $a,b,c>0$ and the XXZ chain with parameter $\Delta$ satisfying \eqref{eq:delta}.
    \begin{enumerate}
    \item We have that for all linear operators $A$ on $(\CC^2)^{\{-L+1,\dots,L\}}$ and $*=\bullet,\circ$
        \bes
        \lim_{\beta\to\infty}
        \langle A \rangle^{\mathrm{XXZ}}_{\TT_L,\beta}
        =
        \lim_{M\to\infty}
        \langle A \rangle^{\sv}_{\TT_{L,M}}
        =
        \lim_{\beta\to\infty}
        \langle A \rangle^{\mathrm{XXZ},\Psi^*_L}_{\TT_L,\beta}
        =
        \lim_{M\to\infty}
        \langle A \rangle^{\sv,\Psi^*_L}_{\TT_{L,M}},
        \ees
    \item Now let $\Delta\in[-1,1/2]$. The above expressions converge as $L\to\infty$ for any fixed local operator $A$.
    \end{enumerate}
\end{proposition}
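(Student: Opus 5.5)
Part 1 is a finite-dimensional statement about a fixed $L$, and I would prove it by showing that $T_L$ and $H_{\TT_L}$ have the same non-degenerate extremal eigenvector in the relevant sector, after which all four limits equal $\langle\Psi^{\mathrm{grd}}|A|\Psi^{\mathrm{grd}}\rangle/\langle\Psi^{\mathrm{grd}}|\Psi^{\mathrm{grd}}\rangle$. The first step is the classical commutation: with $\Delta=(a^2+b^2-c^2)/(2ab)$, the transfer matrix $T_L$ commutes with the XXZ Hamiltonian $H_{\TT_L}$, via the logarithmic derivative of the commuting family $T_L(\lambda)$ at the shift point; this is standard Yang--Baxter theory. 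Both operators also commute with the total $S^{(3)}$. Second, I invoke Perron--Frobenius in each $S^{(3)}$-sector. After the usual sublattice rotation $U=\prod_{x\ \mathrm{odd}}\sigma^{(3)}_x$ (valid on the torus when $L$ is even, with the $\Psi_L$ convention handling parity), $-UH_{\TT_L}U$ has nonnegative off-diagonal entries and is irreducible in each sector. The entries of $T_L$ are nonnegative and irreducible within each sector, since $c>0$. Hence the top eigenvector of $T_L$ in a sector is the unique positive vector there, and it coincides with the ground state of $H$ in that sector. The Yang--Yang/Lieb--Schultz--Mattis results give that the global ground state of $H$ lies in the $S^{(3)}=0$ sector and is unique. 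I would then show that the top eigenvalue of $T_L$ is also attained in the $S^{(3)}=0$ sector, either by the same convexity-in-magnetisation argument or by citing Yang--Yang. Third, $\Psi^*_L$ has positive overlap with this Perron vector, because $\Psi^*_L$ lies in the $S^{(3)}=0$ sector and has nonnegative entries after the rotation $U$ (up to a global sign). The limits $M\to\infty$ and $\beta\to\infty$ then project onto the same vector for both the trace and the seeded states, which gives Part 1.

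For Part 2, I would rewrite $\langle A\rangle^{\sv}_{\TT_{L,M}}$ as a six-vertex observable. Expanding $A=\sum_{\kappa_0,\kappa_1}A_{\kappa_0\kappa_1}|\kappa_0\rangle\langle\kappa_1|$ over the finitely many sites of its support $S$, one gets $\Tr[AT_L^M]/\Tr[T_L^M]$ as a finite combination of ratios of partition functions. In each ratio the vertical edges above the row of $S$ are cut, with prescribed arrows $\kappa_0$ below the cut and $\kappa_1$ above it. This is exactly a disorder inserted on finitely many edges. Equivalently, the ratio is $\mu^{\sv,\rightarrow}_{\TT_{L,M}}$ of the event that the arrows on those edges equal $\kappa_1$, multiplied by a local weight ratio comparing the $\kappa_0$ and $\kappa_1$ weights, provided the disorder is written as a modified local weight on a strip. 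If $\kappa_0$ and $\kappa_1$ have different charge, the term is identically zero on the torus; otherwise the defect is charge-neutral and the height function remains well defined away from a finite segment.

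The main obstacle is showing that such disorder ratios converge as $L\to\infty$ (with $M\to\infty$ first). The approach I would take is to choose $a=b$ with $c\in[a,2a]$, which by \eqref{eq:delta} covers exactly $\Delta\in[-1,1/2]$ and lies in the range of Theorem \ref{thm:6v-deloc}. I would then mimic the proof of Proposition \ref{prop:mir-conv-from-6v}. First, by Lemma \ref{cor:6v-unique-TI-gibbs} the undisordered measure on $\TT_{L,M}$ converges. Second, to handle the disorder I would use the $\xi^\bullet$ circuits: with high probability there is a $\xi^{\bplus}$ circuit surrounding the defect in an annulus. By the domain Markov property \eqref{eq:6v-domain-markov-xi}, conditioning on the outermost such circuit reduces the ratio to a quantity inside a $+$-boundary domain that no longer depends on $L$. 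The real difficulty is proving that circuits still exist with high probability under the disordered measure, since absolute continuity with respect to the undisordered measure only holds up to a ratio that must be bounded uniformly in $L$. I would first try to bound this ratio by finite energy, using the fact that the defect is local and the disorder weight ratio is bounded, and then use an FKG/monotonicity comparison of $\xi^\bullet$ outside the defect.
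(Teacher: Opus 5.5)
\textbf{Part 1.} This is correct. It differs from the paper only in how you obtain $\langle\Psi^*_L|\Psi^{\mathrm{grd}}_L\rangle>0$.

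\begin{itemize}
\item The paper argues probabilistically. It identifies the $M\to\infty$ limits of $\mu^{\sv}_{\TT_{L,M}}$ and $\mu^{\sv,\bullet}_{\cyl_{L,M}}$ via the balanced-row results of \cite{dc-etal-GFF-convergence}. It then uses FKG and comparison of boundary conditions to show $\lim_M\mu^{\sv}_{\TT_{L,M}}[\gamma\subset\xi^\bullet]>0$, and reads the overlaps off the transfer matrix.
\item You use Perron--Frobenius positivity of the common top eigenvector against a nonnegative seed. This is shorter and needs no probabilistic input.
\end{itemize}

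You must drop the sublattice rotation, however. With the paper's sign convention $H_\Lambda=-\sum[\cdots]$, the off-diagonal entries of $H_{\TT_L}$ in the arrow basis are already $-\tfrac12$. Likewise $\Psi^*_L\propto\bigotimes(|{+}{-}\rangle+|{-}{+}\rangle)$ is already entrywise nonnegative there, which is the same basis in which $T_L\ge 0$. After conjugating by $U$, $-UH_{\TT_L}U$ has nonpositive off-diagonal entries and $U\Psi^*_L$ is a product of singlets with entries of both signs. As written, you would be matching Perron vectors taken in two different bases. You also still need two further facts:
\begin{itemize}
\item the top eigenvalue of $T_L$ is attained only in the $S^{(3)}=0$ sector (the paper cites \cite{dc-etal-bethe-ansatz-exposition} for this);
\item primitivity in each sector, so that no other eigenvalue has the same modulus.
\end{itemize}

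\textbf{Part 2} has a genuine gap, exactly at the step you call the real difficulty. You never show that disordered configurations with no $\xi^\bullet$ circuit in $\Lambda_N\setminus\Lambda_n$ carry weight, normalised by the disorder-free partition function, that tends to $0$ uniformly in $L,M$. Finite energy cannot give this.

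Condition on the configuration outside a circuit $\ell$ around $E_X$. The disordered inner partition function is then positive, but the disorder-free one can vanish, because the relative height $g$ induced on $\ell$ need not extend to a disorder-free height function inside (the paper remarks on this). This persists at every scale. For example, put a source--source and a sink--sink disorder in one row, and let all horizontal arrows in a box around them point the suitable way, so that heights rise by one per row. Paths avoiding the segment joining the two disorders then gain an extra $2$. The heights at the top and bottom of the box therefore differ by more than their graph distance in the box, for a box of any size. So there is no uniform Radon--Nikodym bound.

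Your fallback, an FKG comparison for $\xi^\bullet$, is not available as stated either, for two reasons:
\begin{itemize}
\item $\cO_{n,N}$ is the union of an increasing and a decreasing event in the ordering of $(\sigma^\bullet,\xi^{\bplus},-\xi^{\bminus})$.
\item The monotonicity of \cite{glazman-lammers} is set up for boundary data with $\sigma^\circ$ fixed; you would have to establish it for general $g$, along which $\sigma^\circ$ varies.
\end{itemize}
In addition, on $\TT_{L,M}$ the height function is not single-valued, since rows can be unbalanced. So height-based monotonicity is unavailable there too.

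What the paper does instead:
\begin{enumerate}
\item It first moves to the cylinder with $\bullet$ boundary conditions, using Part 1 and \eqref{eq:disorders} with the seed $\Psi^\bullet_L$, so that $h$ is globally defined.
\item The part on $\cO_{n,N}$ is handled as you propose, by domain Markov across the outermost circuit.
\item For the complement, it conditions on $g$. Outside $\ell$ one then has an honest disorder-free height measure, which is FKG in $h$ by Lemma~\ref{lem:cyl-hf-fkg}; this lemma needs extra moves shifting a whole boundary row.
\item It builds a circuit $\ell'$ around $\ell$ and a constant $C\le 0$ such that $\{h|_{\ell'}\in\{C,C-1\}\}$ is the lowest possible configuration on $\ell'$. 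This event is therefore decreasing and has positive probability.
\item A circuit of $\{h\ge b_n\}$ lies in $\om[\sigma^\circ]\subset\xi^\bullet$, so $\cO^c_{n,N}\subset D^c_{b_n,N}$, which is decreasing. FKG and domain Markov then give
\[
\mu^{\sv,\bullet,g}_{\cyl_{L,M}\setminus\ell_{\inn}}[D^c_{b_n,N}]
\le \mu^{\sv,\bullet,g}_{\cyl_{L,M}\setminus\ell_{\inn}}\bigl[D^c_{b_n,N}\,\big|\,h|_{\ell'}\in\{C,C-1\}\bigr]
= \mu^{\sv,\bullet}_{\cyl_{L,M}}\bigl[D^c_{b_n,N}\,\big|\,h|_{\ell'}\in\{C,C-1\}\bigr]
\le C_g\,\mu^{\sv,\bullet}_{\cyl_{L,M}}[D^c_{b_n,N}].
\]
Here $C_g$ is bounded uniformly in $L,M$ by Lemma~\ref{lem:cylinder-6v-converges}, and the right-hand side vanishes as $L,M\to\infty$ and then $N\to\infty$, by delocalisation.
\item The remaining prefactor is at most $Z^{\sv,\bullet,X,\ul{i}}_{\cyl_{L,M}}/Z^{\sv,\bullet}_{\cyl_{L,M}}\le 1$, by \eqref{eq:disorders-mirrors-2}.
\end{enumerate}

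Separately, your claim that an off-diagonal term equals an event probability times a local weight ratio is false. The cut configuration violates the ice rule, and, for example, $\langle S^{(1)}_0S^{(1)}_x\rangle$ is a mirror-model connection probability, not a local observable.
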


\begin{proof}[Proof of Part 1 of Proposition \ref{prop:xxz-6v-same}]

    It is well known that $T_L$ and $H_{\TT_L}$ commute (see for example \cite[Lemma 5.1]{dc-etal-bethe-ansatz-exposition}). They therefore have the same eigenvectors. It is also well known (see for example \cite[Theorem 2.3]{dc-etal-bethe-ansatz-exposition}) that the largest eigenvalue of $T_L$ has a unique eigenvector $\Psi^{\mathrm{grd}}_L$, and that this is the unique eigenvector of the smallest eigenvalue of $H_{\TT_L}$. 
    
    It follows that 
    \be\label{eq:xxz-6v-equal-1}
        \lim_{\beta\to\infty}
        \langle A \rangle^{\mathrm{XXZ}}_{\TT_L,\beta}
        =
        \lim_{M\to\infty}
        \langle A \rangle^{\sv}_{\TT_{L,M}}
        =
        \frac{\langle \Psi^{\mathrm{grd}}_L | A | \Psi^{\mathrm{grd}}_L\rangle}{\langle \Psi^{\mathrm{grd}}_L| \Psi^{\mathrm{grd}}_L\rangle}.
    \ee
    A similar equality holds for the states seeded by $\Psi^*_L$:
    \be\label{eq:xxz-6v-equal-2}
        \lim_{\beta\to\infty}
        \langle A \rangle^{\mathrm{XXZ},\Psi^*_L}_{\TT_L,\beta}
        =
        \lim_{M\to\infty}
        \langle A \rangle^{\sv,\Psi^*_L}_{\TT_{L,M}}
        =
        \frac{\sum_{i} \langle \Psi^*_L | \Psi^{i}_L \rangle \langle \Psi^{i}_L | A | \Psi^{i}_L\rangle}
        {\langle \Psi^*_L | \Psi^{i}_L \rangle \langle \Psi^{i}_L| \Psi^{i}_L\rangle},
    \ee
    where the sum is over eigenvectors $\Psi^i_L$ of $T_L$ with largest eigenvalue among those which satisfy $\langle \Psi^*_L | \Psi^{i}_L \rangle>0$. To prove that \eqref{eq:xxz-6v-equal-1} and \eqref{eq:xxz-6v-equal-2} are equal it therefore suffices to prove that $\langle \Psi^*_L | \Psi^{\mathrm{grd}}_L \rangle>0$. 

    We claim that $\mu^{\sv}_{\TT_{L,M}}$ and $\mu^{\sv,\bullet}_{\cyl_{L,M}}$ both converge to the same measure on $\{-L+1,\dots,L\}\times\ZZ$ as $M\to\infty$. Indeed, $\mu^{\sv}_{\TT_{L,M}}$ and $\mu^{\sv}_{\TT_{L,M}}[\ \cdot \ | \{\text{balanced}\}]$ converge to the same measure as $M\to\infty$ by \cite[Lemma 4.11 and Lemma 18.7]{dc-etal-GFF-convergence}, where $\{\text{balanced}\}$ is the condition that on each row of vertical edges, there are an equal number of up and down arrows. This balanced condition naturally holds under $\mu^{\sv,\bullet}_{\cyl_{L,M}}$. The convergence follows from considering the system as a recurrent Markov chain on the arrow configurations on rows of vertical edges. 
    
    From this claim, for operators $A$ for which $\langle A \rangle^{\sv}_{\TT_{L,M}} = \mu^{\sv}_{\TT_{L,M}}[A']$ for some function $A'$, we have that \eqref{eq:xxz-6v-equal-1} and \eqref{eq:xxz-6v-equal-2} are equal. We claim that the operator $|\Psi^\bullet_L\rangle \langle \Psi^\circ_L |$ is such an operator. Indeed, let $\gamma$ be the path of edges of $E^\bullet$ whose midpoints lie on the $x$ axis and consider the event $\gamma\subset\xi^\bullet$. We have that by the FKG inequality and comparison of boundary conditions,
    \bes
    \begin{split}
    \mu^{\sv,\bullet}_{\cyl_{L,M}}[\gamma\subset\xi^\bullet]
    \ge 
    \mu^{\sv,\bullet}_{\cyl_{L,M}}[\gamma\subset\xi^{\bplus}]
    \ge
    \mu^{\sv,\bminus}_{\cyl_{L,M}}[\gamma\subset\xi^{\bplus}]>0,
    \end{split}
    \ees
    and the last expression is increasing in $M$ by comparison of boundary conditions, so the limit is strictly positive. Finally,  
    \bes
        \lim_{M\to\infty}\mu^{\sv}_{\TT_{L,M}}[\gamma\subset\xi^\bullet]
        =
        \frac{\Tr \left[ T_L^{M/2-1} | \Psi^\bullet_L\rangle \langle \Psi^\circ_L | T_L^{M/2} \right]}
            {\Tr[ T_L^M ]}
        =
        \frac{\langle \Psi^{\mathrm{grd}}_L | \Psi^\bullet_L\rangle \langle \Psi^\circ_L  | \Psi^{\mathrm{grd}}_L\rangle}
        {\langle \Psi^{\mathrm{grd}}_L| \Psi^{\mathrm{grd}}_L\rangle}>0,
    \ees
    Here we use that $\gamma\subset\xi^\bullet$ induces $\bullet$ boundary conditions below $\gamma$ and above $\gamma$, meaning: for every white face in the row of faces with centres having vertical coordinate $-\tfrac{1}{2}$ (resp. $+\tfrac{1}{2}$), the vertical edge on its left side is oriented upwards if and only if the vertical edge on its right side is oriented downwards. This is the same as the condition given by the vector $\Psi^\bullet_L$ for the row at height $-\tfrac{1}{2}$ (and the vector $\Psi^\circ_L$ for the row at height $+\tfrac{1}{2}$). 
    Hence the proof that \eqref{eq:xxz-6v-equal-1} and \eqref{eq:xxz-6v-equal-2} are equal is complete, and this finishes the proof of part 1 of Proposition \ref{prop:xxz-6v-same}.
\end{proof}    
    We turn to part 2 of the proposition, the convergence in infinite volume. It suffices to prove the proposition when $A=\prod_{k=1}^r E_{i_k^-,i_k^+}^{x_k}$, where $E_{i_k^-,i_k^+}^{x_k}$ is the $2\times2$ elementary matrix with its $(i_k^-,i_k^+)$ entry equal to 1 and other entries equal to 0, with support at the single vertex $x_k\in\ZZ$. Let $X=\{x_k\}_{k=1}^r$ and $\ul{i}=(i_1^-,i_1^+,\dots,i_r^-,i_r^+)$. Indeed, any elementary matrix with support $X$ can be written as such a product, and so any linear operator with support $X$ can be written as a linear combination of such products.

    
    We write $E_X$ for the set of vertical edges in $\TT_{L,M}$ with midpoints at $X\times\{\tfrac{1}{2}\}$; for $k=1,\dots,r$ write $e_k$ for the edge with midpoint $z_k:=(x_k,\tfrac{1}{2})$. Write $E'_X$ for the $2r$ half-edges of the edges in $E_X$, with $e_k^+$ the upper half-edge of $e_k$ and $e_k^-$ the lower half-edge. To the vector $\ul{i}$ we associate a configuration of arrows to the half-edges in $E'_X$: if $i_k^\pm=1$, we orient $e_k^\pm$ upwards, and otherwise downwards. We say such a half-edge which is oriented towards the midpoint $z_k$ is a sink, and is a source if if is oriented away from $z_k$. We call an edge a disorder if both of its half-edges are sinks or both are sources.
    
    
    Write $\Om^{\sv,\bullet,X,\ul{i}}_{\cyl_{L,M}}$ for the set of arrow configurations on the edges of $\cyl_{L,M}$ (all edges with at least one end in $V=\{-L+1,\dots,L\}\times\{-M+1,\dots,M\}$) with the boundary conditions $\bullet$ at the top and bottom of the cylinder, and with the half-edges in $E'_X$ fixed to be those given by $\ul{i}$ above, and which satisfy the ice rule. Let us be clear again what we mean by boundary conditions $\bullet$ in terms of arrow configurations: we mean that for every white face in $\partial_F^+$ (resp. $\partial_F^-$) (faces with centres having vertical coordinate $M+\tfrac{1}{2}$ (resp. $-M+\tfrac{1}{2}$), the vertical edge on its left side is oriented upwards if and only if the vertical edge on its right side is oriented downwards. Note that in terms of the height function, this condition is that the height function is constant on the black faces of $\partial_F^+$ (and also constant on the black faces of $\partial_F^-$). Write $\Om^{\sv,\bullet}_{\cyl_{L,M}}$ for the same with no disorders ($X=\es$). Write
    \bes
    	Z^{\sv,\bullet,X,\ul{i}}_{\cyl_{L,M}}
    	=
    	\sum_{\kappa\in \Om^{\sv,\bullet,X,\ul{i}}_{\cyl_{L,M}} }
    	a^{n_1+n_2}b^{n_3+n_4}c^{n_5+n_6},
        \qquad
        Z^{\sv,\bullet}_{\cyl_{L,M}}
    	=
    	\sum_{\kappa\in \Om^{\sv,\bullet}_{\cyl_{L,M}}}
    	a^{n_1+n_2}b^{n_3+n_4}c^{n_5+n_6},
    \ees
    the latter expression is the partition function of the measure $\mu^{\sv,\bullet}_{\cyl_{L,M}}$. Similarly, define $Z^{\sv,\bullet,X,\ul{i}}_{V}$ and $Z^{\sv,\bullet}_{V}$ for the same in the planar domain $V$. It is easy to prove that for $M$ odd,
    \be\label{eq:disorders}
    	\frac{Z^{\sv,\bullet,X,\ul{i}}_{\cyl_{L,M}}}
    	{Z^{\sv,\bullet}_{\cyl_{L,M}}}
    	=
    	\langle \prod_{k=1}^r E_{i_k^-,i_k^+}^{x_k} \rangle^{\sv, \Psi^\bullet_L}_{\TT_{L,M}}.
    \ee
    Indeed, the vector $\Psi^\bullet_L = 
    \frac{1}{\sqrt{n}} \sum_{\substack{i\in\Lambda_L \\ i \ \mathrm{even}}} \sum_{a=1}^2 | a,a \rangle_{i-1, i}$ implies that the vertical arrows in the top row of $\cyl_{L,M}$ on each pair of columns $i-1, i$ ($i$ even) are different (one up, one down); this is precisely the $\bullet$ boundary condition, providing $M$ is odd. If $M$ is even, we get $\circ$ in the superscripts on the left hand side; the proof below will be identical either way, so we assume $M$ is odd for the rest of this section. \\
        
    Note that \eqref{eq:disorders} is non-zero only if the number of sources equals the number of sinks. We will show that \eqref{eq:disorders} converges as $L,M\to\infty$ in any order (including the required $M\to\infty$ and then $L\to\infty$). \\

    Before we begin the proof, let us note that if we could prove that the mirror model $\mu^{\mir}_{\ul{p}}$ on $\ZZ^2$ had no infinite loops almost surely, then the convergence of \eqref{eq:disorders} would be simple. We describe briefly how this proof would work, as it contains a tool for our actual proof.

    The key observation is that \eqref{eq:disorders} can be written as an expectation in the mirror model. Indeed:
    \be\label{eq:disorders-mirrors}
    	\frac{Z^{\sv,\bullet,X,\ul{i}}_{\cyl_{L,M}}}
    	{Z^{\sv,\bullet}_{\cyl_{L,M}}}
    	=
    	\langle \prod_{k=1}^r E_{i_k^-,i_k^+}^{x_k}  \rangle^{\sv, \Psi^\bullet_L}_{\TT_{L,M}}
        =
        \mu^{\mir,\bullet}_{\cyl_{L,M}}[2^{-l_{E_X}(m)} \mathbbm{1}_{\{m\sim(X,\ul{i})\}}],
    \ee
    where $l_{E_X}(m)$ is the number of loops in the mirror model configuration which pass through the edges $E_X$, and $m\sim(X,\ul{i})$ if and only if each half-edge in $E'_X$ which is a sink according to $\ul{i}$ is connected to a source by the loops of $m$, and vice-versa.
    
    The proof is a switching bijection. Given $\kappa\in\Om^{\sv,\bullet,X,\ul{i}}_{\cyl_{L,M}}$, sample a mirror configuration $m$ according to the rules \eqref{eq:sample-mirror-from-6v}, with $v\in m^\bullet$ for all $v\notin\cyl_{L,M}$. Then in the joint configuration $(\kappa,m)$ there is a set of oriented paths starting at sources in $E'_X$ and ending at sinks. Reversing the orientation of the arrows along some of these paths turns the configuration into one in $\Om^{\sv,\bullet}_{\cyl_{L,M}}$ (no disorders), ie. where in each pair of half-edges $e_k^+$ and $e_k^-$, one is a source and one is a sink. This reversal also preserves the weight of the configuration. One can then take the marginal on $m$, and one only needs to remember that the loops passing through $E_X$ have a deterministic orientation, leading to the factor $2^{-l_{E_X}(m)}$. A similar proof was used in \cite[Lemma 2.4]{bjornberg-ryan-dimerization}. 

    Now, if one had that $\mu^{\mir}_{\ul{p}}$ had no infinite loops almost surely, it would follow that the right hand side of \eqref{eq:disorders-mirrors} converges as $L,M\to\infty$, and the XXZ convergence would be proved.
    Since we do not have this, we take a different route. The key idea is to use the percolation process $\xi^\bullet$ \eqref{eq:sampling-xi} and the fact that there are almost surely infinitely many circuits of $\xi^\bullet$ around the origin due to Theorem \ref{thm:6v-deloc}. Crucial will be the domain Markov property enjoyed by $\xi^\bullet$.

    Before we begin, we note a key tool we will use: uniformly in domains $V$,
    \be\label{eq:disorders-mirrors-2}
    	\frac{Z^{\sv,\bullet,X,\ul{i}}_{V}}
    	{Z^{\sv,\bullet}_{V}}
        =
        \mu^{\mir,\bullet}_{V}[2^{-l_{E_X}(m)} \mathbbm{1}_{\{m\sim(X,\ul{i})\}}]
        \le 1.
    \ee

\begin{proof}[Proof of Part 2 of Proposition \ref{prop:xxz-6v-same}] 
The proof will boil down to the following Lemma.
    \begin{lemma}\label{lem:disorders-estimate}
    	There exists a constant $C_X$ independent of $L,M$ such that for all $N\in\NN$ and all $L,M\ge N$,
    	\be\label{eq:disorders-estimate}
    		\left| 
    		\frac{Z^{\sv,\bullet,X,\ul{i}}_{\cyl_{L,M}}}
    		{Z^{\sv,\bullet}_{\cyl_{L,M}}}
    		-
    		\mu^{\sv,\bullet}_{\cyl_{L,M}}[f_N]
    		\right|
    		\le 
    		C_X
    		\mu^{\sv,\bullet}_{\cyl_{L,M}}[\eps_N],
    	\ee
    	where $f_N$ and $\eps_N$ are local functions of $(\kappa,\xi^\bullet)$, dependent only on the configuration in $\Lambda_N$, and 
    	\bes
    	\lim_{N\to\infty}\lim_{L,M\to\infty} \mu^{\sv,\bullet}_{\cyl_{L,M}}[\eps_N] =0.
    	\ees
    \end{lemma}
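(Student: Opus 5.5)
The plan is to use circuits of $\xi^\bullet$ surrounding the disorders, together with the domain Markov property \eqref{eq:6v-domain-markov-xi}, so that the disorder ratio factorises into a quantity computed entirely inside $\Lambda_N$. Let $\cO_N$ be the event that there is a circuit of $\xi^{\bplus}$ (or of $\xi^\bullet$, then flip spins globally by symmetry) in $\Lambda_N\setminus\Lambda_{n_0}$ surrounding $\Lambda_{n_0}\supset E_X$. On $\cO_N$, explore from the boundary of $\Lambda_N$ inwards to reveal the outermost such circuit $\gamma$. This is a stopping-set exploration, so conditioning on $\gamma$ and on the configuration outside it leaves, inside, the measure $\mu^{\sv,\bplus}_{V(\gamma)}$.

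The key identity is a disorder analogue of domain Markov. Write $Z^{\sv,\bullet,X,\ul{i}}_{\cyl_{L,M}}$ as a sum over $(\kappa,\xi^\bullet)$ using the product weights \eqref{eq:product-weights-kappa-xi}. These weights still make sense when the half-edge arrows at $E'_X$ are prescribed, because the disorder only alters the ice rule locally at $z_k$. Split the sum according to the outermost circuit $\gamma$. The weights factorise across $\gamma$, since $\gamma\subset\xi^\bullet$ decouples inside and outside exactly as in \cite[Lemma 4.8]{glazman-lammers}; one has to check this still holds with disorders strictly inside $\gamma$. We then get
\[
\frac{Z^{\sv,\bullet,X,\ul{i}}_{\cyl_{L,M}}}{Z^{\sv,\bullet}_{\cyl_{L,M}}}
=\mu^{\sv,\bullet}_{\cyl_{L,M}}\Big[\mathbbm{1}_{\cO_N}\,\frac{Z^{\sv,\bplus,X,\ul{i}}_{V(\gamma)}}{Z^{\sv,\bplus}_{V(\gamma)}}\Big]
+\big(\text{terms on }\cO_N^c\big).
\]
We therefore set $f_N=\mathbbm{1}_{\cO_N}\,Z^{\sv,\bplus,X,\ul{i}}_{V(\gamma)}/Z^{\sv,\bplus}_{V(\gamma)}$, which is a local function of $(\kappa,\xi^\bullet)$ in $\Lambda_N$ because $\gamma$ is determined inside $\Lambda_N$. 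We also set $\eps_N=\mathbbm{1}_{\cO_N^c}$.

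For the error on $\cO_N^c$ we need the disorder ratio restricted to $\cO_N^c$ to be at most $C_X\,\mu[\cO_N^c]$. The plan is to use a switching argument. Via the mirror coupling, each disorder configuration is paired with a non-disorder configuration by reversing orientations along the paths joining sources to sinks, as in \eqref{eq:disorders-mirrors}; this reversal is weight-preserving. It does not preserve $\xi^\bullet$ exactly, so instead we resample $\xi^\bullet$ locally and lose a constant. Concretely, the map changes arrows only along loops through $E_X$, and it changes $\xi^\bullet$ only where spins flip. A cleaner route is to bound the disorder partition function on $\cO_N^c$ through a local modification near $E_X$: fix the arrows on $E'_X$ by a bounded modification that changes the weight by a factor at most $C_X$ depending only on $a,b,c$ and $r$. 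This maps disorder configurations to ordinary ones at most $C_X$-to-one, in a way that does not affect the event $\cO_N^c$ once $n_0$ is chosen large enough that the modification lies inside $\Lambda_{n_0}$. Combined with \eqref{eq:disorders-mirrors-2}, this gives \eqref{eq:disorders-estimate}.

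Finally, $\lim_{L,M}\mu^{\sv,\bullet}_{\cyl_{L,M}}[\eps_N]=\mu^{\sv}_{a,b,c}[\cO_N^c]$ by Lemma \ref{lem:cylinder-6v-converges}. This tends to $0$ as $N\to\infty$ because $(\xi^\bullet)^*$ does not percolate (Theorem \ref{thm:6v-deloc}), and by the symmetry $\sigma\to-\sigma$ a $\xi^\bullet$ circuit can be taken to be $\xi^{\bplus}$ up to a global flip that leaves the ratio unchanged. I expect the main obstacle to be the error step: showing that the disorder weight on $\cO_N^c$ is controlled by the non-disorder probability of $\cO_N^c$, uniformly in $L,M$. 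A secondary difficulty is verifying that the circuit decoupling of \cite{glazman-lammers} survives the presence of disorders inside the circuit.
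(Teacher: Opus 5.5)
Your main term is the paper's. You condition on the outermost $\xi^\bullet$-circuit $\gamma$, use that $\{e\in\xi^\bullet\}$ splits into one arrow condition on each side of $\gamma$ (so the weights factorise even with disorders inside), and take $f_N=\mathbbm{1}_{\cO_{n,N}}Z^{\sv,\bullet,X,\ul{i}}_{\gamma_{\inn}}/Z^{\sv,\bullet}_{\gamma_{\inn}}$.

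The gap is the error term on $\cO_{n,N}^c$. The bounded modification inside a fixed box $\Lambda_{n_0}$ that you invoke does not exist. Disorders carry height monodromy, so the exterior data of a disorder configuration on $\partial\Lambda_{n_0}$ need not admit \emph{any} disorder-free interior, for any $n_0$. Concretely, take a sink--sink disorder at $e_0$ and a source--source disorder at $e_x$. The faces of the row between them behave as if split into an upper and a lower copy whose heights differ by exactly $2$. Let $h$ be the graph distance, in this slit domain, from the top face of a column of faces crossing that row (or from the bottom face, depending on the sign of the jump), truncated to a flat profile far from $\Lambda_{n_0}$. This gives a valid configuration in which the top and bottom faces of that column in $\partial\Lambda_{n_0}$ have height difference equal to their distance plus $2$. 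No disorder-free configuration on $\Lambda_{n_0}$ can match that.

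Such configurations have no image under your map. Enlarging the modification region in a configuration-dependent way loses both the bounded multiplicity and the insensitivity to $\cO_{n,N}$. The loop-reversal route fails too, as you note: the loops through $E_X$ are not local, and reversing them changes vertex types, hence the law of $\xi^\bullet$, arbitrarily far away. This is exactly the obstruction the paper records when it says $g$ cannot necessarily be extended without disorders inside $\ell$.

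The missing idea is to replace a deterministic map by monotonicity in $h$. The paper proceeds as follows.
\begin{enumerate}
\item It conditions on the height $g$ along a minimal circuit $\ell$ around $E_X$; this is well defined since sinks and sources balance.
\item It uses the arrow domain Markov property to reduce to the disorder-free measure $\mu^{\sv,\bullet,g}_{\cyl_{L,M}\setminus\ell_{\inn}}$, and proves this measure is FKG in $h$ (Lemma \ref{lem:cyl-hf-fkg}, with an extra move for the constant boundary rows of the cylinder).
\item It constructs a circuit $\ell'$, depending only on $g$, on which the minimal event $\{h|_{\ell'}\in\{C,C-1\}\}$ is decreasing and has positive probability.
\item Since $\cO_{n,N}$ is not monotone in $h$, it replaces it by the increasing event $D_{b_n,N}$ that $\Lambda_N$ contains a circuit of heights at least $b_n$. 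This event forces a circuit of $\om[\sigma^\circ]\subset\xi^\bullet$.
\end{enumerate}
Then
\[
\mu^{\sv,\bullet,g}_{\cyl_{L,M}\setminus\ell_{\inn}}[\cO^c_{n,N}]
\le\mu^{\sv,\bullet,g}_{\cyl_{L,M}\setminus\ell_{\inn}}[D^c_{b_n,N}\mid h|_{\ell'}\in\{C,C-1\}]
=\mu^{\sv,\bullet}_{\cyl_{L,M}}[D^c_{b_n,N}\mid h|_{\ell'}\in\{C,C-1\}]
\le C_g\,\mu^{\sv,\bullet}_{\cyl_{L,M}}[D^c_{b_n,N}].
\]
Summing over $g$, and using \eqref{eq:disorders-mirrors-2} to bound the remaining partition-function ratio by $1$, gives the lemma with $\eps_N=\mathbbm{1}_{D^c_{b_n,N}}$ (worst case over $g$). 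Its vanishing follows from delocalisation, Theorem \ref{thm:6v-deloc}, rather than from $\mathbbm{1}_{\cO_{n,N}^c}$ as you proposed.
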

    
    We prove part 2 of Proposition \ref{prop:xxz-6v-same} from the lemma; in particular we prove that \eqref{eq:disorders} is Cauchy. Fix $\eps>0$. There exists an $N_0,N_1\in\NN$ such that for all $L,M\ge N_1$, 
    \be\label{eq:disorders-working-1}
    	\mu^{\sv,\bullet}_{\cyl_{L,M}}[\eps_{N_0}]<\eps/3C_X;
    \ee
    indeed, find $N_0$ such that 
    $\lim_{L,M\to\infty} \mu^{\sv,\bullet}_{\cyl_{L,M}}[\eps_{N_0}]<\eps/6C_X$, and then let $L,M$ large enough so that 
    \bes
    	\left|
    	\mu^{\sv,\bullet}_{\cyl_{L,M}}[\eps_{N_0}]
    	-
    	\lim_{L,M\to\infty} \mu^{\sv,\bullet}_{\cyl_{L,M}}[\eps_{N_0}]
    	\right|
    	<\eps/6C_X.
    \ees
    Now with $N_0$ fixed, there is an $N_2\in\NN$ such that for all $L,M,L',M'\ge N_2$,
    \be\label{eq:disorders-working-2}
    	\left| 
    	\mu^{\sv,\bullet}_{\cyl_{L,M}}[f_{N_0}]
    	-
    	\mu^{\sv,\bullet}_{\cyl_{L',M'}}[f_{N_0}]
    	\right|
    	\le \eps/3,
    \ee
    where we use that $\mu^{\sv,\bullet}_{\cyl_{L,M}}$ is convergent as $L,M\to\infty$. 
	Taking $L,M,L',M'>\max\{N_1,N_2\}$, we find that by combining \eqref{eq:disorders-working-1}, \eqref{eq:disorders-working-2} and \eqref{eq:disorders-estimate}, we have
	\bes
		\left| 
		\frac{Z^{\sv,\bullet,X,\ul{i}}_{\cyl_{L,M}}}
		{Z^{\sv,\bullet}_{\cyl_{L,M}}}
		-
		\frac{Z^{\sv,\bullet,X,\ul{i}}_{\cyl_{L',M'}}}
		{Z^{\sv,\bullet}_{\cyl_{L',M'}}}
		\right|
		\le \eps.
	\ees
	This concludes the proof of Proposition \ref{prop:xxz-6v-same}.
\end{proof}

\begin{proof}[Proof of Lemma \ref{lem:disorders-estimate}]
	We write $\kappa\sim(X,\ul{i})$ to mean $\kappa\in \Om^{\sv,\bullet,X,\ul{i}}_{\cyl_{L,M}} $ for brevity.

    Let $\cO_{n,N}$ be the event that there exists a circuit of $\xi^\bullet$ contained in $\Lambda_N\setminus\Lambda_n$, surrounding $\Lambda_n$ (n will depend only on $X$ and we will fix it later; for now we only require that $n$ be large enough so that $\Lambda_n$ contains $E_X$). Then
	\be\label{eq:disorders-good-bad}
	\begin{split}
	\frac{Z^{\sv,\bullet,X,\ul{i}}_{\cyl_{L,M}}}
	{Z^{\sv,\bullet}_{\cyl_{L,M}}}
	&=
	\frac{1}
	{Z^{\sv,\bullet}_{\cyl_{L,M}}}
	\sum_{\kappa\sim(X,\ul{i}) }
	\sum_{\xi^\bullet\sim\kappa}
	\mathbbm{1}(\cO_{n,N})
	w[\kappa,\xi^\bullet]\\
	&\qquad+
	\frac{1}
	{Z^{\sv,\bullet}_{\cyl_{L,M}}}
	\sum_{\kappa\sim(X,\ul{i}) }
	\sum_{\xi^\bullet\sim\kappa}
	\mathbbm{1}(\cO^c_{n,N})
	w[\kappa,\xi^\bullet],
	\end{split}
	\ee
    where $\xi^\bullet\sim\kappa$ if $\xi^\bullet$ is a valid configuration sampled via \eqref{eq:sampling-xi}, \eqref{eq:sample-xi-from-arrows} from $\kappa$. Here $w[\kappa,\xi^\bullet] =  \prod_{v\in\cyl_{L,M}}w_v[\kappa,\xi^\bullet]$, where $w_v[\kappa,\xi^\bullet]$ is given by \eqref{eq:product-weights-kappa-xi}.    

    We will show that 
    \be\label{eq:disorders-estimate-part-1}
	\begin{split}
	\frac{1}
	{Z^{\sv,\bullet}_{\cyl_{L,M}}}
	\sum_{\kappa\sim(X,\ul{i}) }
	\sum_{\xi^\bullet\sim\kappa}
	\mathbbm{1}(\cO_{n,N})
	w[\kappa,\xi^\bullet]
    =
    \mu^{\sv,\bullet}_{\cyl_{L,M}}[f_N]
	\end{split}
	\ee
    and that 
    \be\label{eq:disorders-estimate-part-2}
	\begin{split}
	\frac{1}
	{Z^{\sv,\bullet}_{\cyl_{L,M}}}
	\sum_{\kappa\sim(X,\ul{i}) }
	\sum_{\xi^\bullet\sim\kappa}
	\mathbbm{1}(\cO^c_{n,N})
	w[\kappa,\xi^\bullet]
    \le
    C_X
    \mu^{\sv,\bullet}_{\cyl_{L,M}}[\eps_N],
	\end{split}
    \ee
    where $f_N$, $\eps_N$ are local functions dependent only on the configurations in $\Lambda_N$, and we have $\lim_{L,M\to\infty} \mu^{\sv,\bullet}_{\cyl_{L,M}}[\eps_N] \to0$ as $N\to\infty$.\\

    We prove \eqref{eq:disorders-estimate-part-1} first. Let $\gamma^\bullet_N$ be the outermost circuit of $\xi^\bullet$ in $\Lambda_N$ surrounding $\Lambda_n$. Conditioning on $\gamma_N^\bullet=\gamma$ gives a domain Markov property across $\gamma$ (see \eqref{eq:6v-domain-markov-xi}). We observe that this domain Markov property extends to our case where there are disorders on $E_X$. Indeed, the condition that $e\in E^\bullet_{NE}$ lies in $\xi^\bullet$ can be written in terms of the arrow configuration $\kappa$: $v\in\kappa_a$ or $v\in\kappa_c$, where both options carry the same weight $a$ ($v\in\ZZ^2$ is the midpoint of $e$). This condition can be rewritten as a product of two conditions, one on the arrows on one side of $e$ and one on those on the other side. Indeed, let $e_{\mathrm{north}},e_{\mathrm{south}},e_{\mathrm{east}},e_{\mathrm{west}}$ be the four edges of $\ZZ^2$ incident to $v$, denoted by the direction they point from $v$. Then $e\in E^\bullet_{NE}$ if and only if both of the following hold:
    \be\label{eq:domain-markov-ref-pos-condition}
         \begin{split}
             e_{\mathrm{north}} \ \text{oriented towards} \ v  \ & \Leftrightarrow  \ e_{\mathrm{west}} \ \text{oriented away from} \ v\\
             e_{\mathrm{south}} \ \text{oriented towards} \ v  \ & \Leftrightarrow  \ e_{\mathrm{east}} \ \text{oriented away from} \ v.
         \end{split}
    \ee
    Similar holds for $e\in E^\bullet_{NW}$ with west and east exchanged in the condition. Using this, we find that
    
    \be\label{eq:disorders-estimate-part-1-working-1}
	\begin{split}
	&\frac{1}
	{Z^{\sv,\bullet}_{\cyl_{L,M}}}
	\sum_{\kappa\sim(X,\ul{i}) }
	\sum_{\xi^\bullet\sim\kappa}
	\mathbbm{1}_{\cO_{n,N}}
	w[\kappa,\xi^\bullet]\\
    &\qquad=
    \frac{1}
	{Z^{\sv,\bullet}_{\cyl_{L,M}}}
    \sum_\gamma
    a^{|\gamma\cap E_{\NE}|}b^{|\gamma\cap E_{\NW}|}
    \left[
	\sum_{\substack{\kappa_{\inn}\sim(X,\ul{i}) \\
    \kappa_{\inn}\sim\gamma} }
    w[\kappa_{\inn}]
    \right]
    \left[
    \sum_{\kappa_{\outt} \sim \gamma}
	\sum_{\xi_{\outt}^\bullet\sim\kappa_{\outt}}
	\mathbbm{1}_{\{\gamma=\gamma^\bullet_N(\xi_{\outt}^\bullet)\}}
	w[\kappa_{\outt},\xi_{\outt}^\bullet]
    \right],\\
    &\qquad=
    \frac{1}
	{Z^{\sv,\bullet}_{\cyl_{L,M}}}
    \sum_\gamma
    a^{|\gamma\cap E_{\NE}|}b^{|\gamma\cap E_{\NW}|}
    Z^{\sv,\bullet,X,\ul{i}}_{\gamma_{\inn}}
    \sum_{\kappa_{\outt} \sim \gamma}
	\sum_{\xi_{\outt}^\bullet\sim\kappa_{\outt}}
	\mathbbm{1}_{\{\gamma=\gamma^\bullet_N(\xi_{\outt}^\bullet)\}}
	w[\kappa_{\outt},\xi_{\outt}^\bullet],\\
    &\qquad=
    \frac{1}
	{Z^{\sv,\bullet}_{\cyl_{L,M}}}
    \sum_\gamma
    a^{|\gamma\cap E_{\NE}|}b^{|\gamma\cap E_{\NW}|}
    \frac{Z^{\sv,\bullet,X,\ul{i}}_{\gamma_{\inn}}}
    {Z^{\sv,\bullet}_{\gamma_{\inn}}}
    \left[
	\sum_{\substack{\kappa_{\inn}\sim\gamma} }
    w[\kappa_{\inn}]
    \right]
    \sum_{\kappa_{\outt} \sim \gamma}
	\sum_{\xi_{\outt}^\bullet\sim\kappa_{\outt}}
	\mathbbm{1}_{\{\gamma=\gamma^\bullet_N(\xi_{\outt}^\bullet)\}}
	w[\kappa_{\outt},\xi_{\outt}^\bullet],\\
    &\qquad=
    \mu^{\sv,\bullet}_{\cyl_{L,M}}\left[ 
    \mathbbm{1}_{\cO_{n,N}}
    \frac{Z^{\sv,\bullet,X,\ul{i}}_{(\gamma^\bullet_N)_{\inn}}}
    {Z^{\sv,\bullet}_{(\gamma^\bullet_N)_{\inn}}}
    \right].\\
	\end{split}
	\ee
    Here sum over $\gamma$ is the sum over realisations of $\gamma^\bullet_N$, $\gamma_{\inn}$ is the domain of vertices lying strictly in the interior of $\gamma$. Further $\kappa_{\inn}$ denotes arrow configurations on edges of $\cyl_{L,M}$ lying in the interior of $\gamma$ satisfying the ice rule; $\kappa_{\inn}\sim(X,\ul{i})$ means the half-edges of $E'_X$ are oriented according to $\ul{i}$; $\kappa_{\inn}\sim\gamma$ means at each $v\in\gamma$ the one condition from \eqref{eq:domain-markov-ref-pos-condition} pertaining to the pair of edges lying inside $\gamma$ holds. Further $\kappa_{\outt}$ is an arrow configuration on edges outside of $\gamma$ satisfying the ice rule,  $\kappa_{\outt}\sim\gamma$ is analogous to $\kappa_{\inn}\sim\gamma$, and $\xi^\bullet_{\outt}$ is a percolation configuration on edges of $E^\bullet$ lying on or outside of $\gamma$. Finally
    $w[\kappa_{\inn}] 
    = \prod_{v\in\gamma_{\inn}}a^{\mathbbm{1}_{\{v\in\kappa_a\}}} b^{\mathbbm{1}_{\{v\in\kappa_b\}}} c^{\mathbbm{1}_{\{v\in\kappa_c\}}}$. The fact that in the first equality we can write the weights as a product of the two factors is precisely the domain Markov property. In the last equality we use the domain Markov property again in reverse.
    This proves \eqref{eq:disorders-estimate-part-1} with 
        $f_N=
    \mathbbm{1}_{\cO_{n,N}} 
    Z^{\sv,\bullet,X,\ul{i}}_{(\gamma^\bullet_N)_{\inn}} / 
    Z^{\sv,\bullet}_{(\gamma^\bullet_N)_{\inn}}.$\\

    Now we prove \eqref{eq:disorders-estimate-part-2}. Let $\ell$ be a circuit of faces of $\cyl_{L,M}$ which surrounds $E_X$ with minimal length. Since the number of sources on $E_X$ equals the number of sinks on $E_X$, for any $\kappa\sim(X,\ul{i})$, $\kappa$ has a well-defined height function along $\ell$ (up to overall shift). Note though that this height function cannot necessarily be extended to a valid height function with no disorders on the interior of $\ell$. For $g$ a given height function along $\ell$ (up to overall shift), write $\mu^{\sv,\bullet,g}_{\cyl_{L,M}\setminus \ell_{\inn}}$ for the measure on arrow configurations on $\cyl_{L,M}\setminus \ell_{\inn}$, $\ell_{\inn}$ the faces on the interior of $\ell$, whose height function is $g$ on $\ell$ and has constant height function on the black faces of each of $\partial^+_{F^\bullet}$ and $\partial^-_{F^\bullet}$, the black faces in $\partial^+_F$ and $\partial^-_F$ respectively. By the domain Markov property of arrow configurations \eqref{eq:6v-domain-markov-arrows} (which also applies in this case with disorders),
    \be\label{eq:disorders-estimate-part-2-working-1}
	\begin{split}
	\frac{1}
	{Z^{\sv,\bullet}_{\cyl_{L,M}}}
	\sum_{\kappa\sim(X,\ul{i}) }
	\sum_{\xi^\bullet\sim\kappa}
	\mathbbm{1}(\cO^c_{n,N})
	w[\kappa,\xi^\bullet]
    &=
    \sum_g
    \sum_{\substack{\kappa_\inn\sim(X,\ul{i}) \\
    \kappa_\inn \sim g}}
    w[\kappa_{\inn}]
    \cdot
    \frac{Z^{\sv,\bullet,g}_{\cyl_{L,M}\setminus \ell_{\inn}}}
	{Z^{\sv,\bullet}_{\cyl_{L,M}}}
    \mu^{\sv,\bullet,g}_{\cyl_{L,M}\setminus \ell_{\inn}}
    [\cO^c_{n,N}],\\
	\end{split}
    \ee
    where $Z^{\sv,\bullet,g}_{\cyl_{L,M}\setminus \ell_{\inn}}$ is the partition function of the measure $\mu^{\sv,\bullet,g}_{\cyl_{L,M}\setminus \ell_{\inn}}$.
    If we define the height function to take the value 0 at the point of maximal height along $\ell$ according to $g$, we have a well-defined height function $h$.

    \begin{lemma}\label{lem:cyl-hf-fkg}
        The measure $\mu^{\sv,\bullet,g}_{\cyl_{L,M}\setminus \ell_{\inn}}$ satisfies the FKG inequality in the height function $h$.
    \end{lemma}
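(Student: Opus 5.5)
The plan is to apply the standard FKG criterion for height functions of the six-vertex model with $a,b\le c$: verify the Holley (lattice) condition for the measure $\mu^{\sv,\bullet,g}_{\cyl_{L,M}\setminus \ell_{\inn}}$, viewed as a measure on height functions $h$ on the faces of $\cyl_{L,M}\setminus\ell_{\inn}$. The state space is a distributive lattice under pointwise $\max$ and $\min$. Two facts must be checked. First, the set of admissible $h$ is closed under $\vee,\wedge$. Second, the weight $w(h)$ is log-supermodular, i.e. $w(h\vee h')w(h\wedge h')\ge w(h)w(h')$. The conclusion then follows from Holley/FKG, exactly as in \cite[Proposition 4.9]{glazman-lammers} (see also Appendix \ref{appendix:FKG}).

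For the lattice property, I would list the constraints on $h$:
\begin{itemize}[label={}]
\item adjacent faces differ by exactly $1$, with parity fixed by colour;
\item $h=g$ on the faces of $\ell$;
\item $h$ is constant on the black faces of $\partial^+_{F^\bullet}$, and separately constant on the black faces of $\partial^-_{F^\bullet}$.
\end{itemize}
The first two are classical. The max or min of two graph homomorphisms with the same parity structure is again such a homomorphism, and fixed values on $\ell$ are preserved. For the third, note that if $h$ and $h'$ are each constant on a set $S$, then so are $h\vee h'$ and $h\wedge h'$. Crucially, the normalisation removes the global additive ambiguity: $h$ is fixed to be $0$ at the maximal point of $g$ on $\ell$, so $h$ is genuinely a function rather than a function up to shift.

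One must also check that $h$ is well defined on the annular cylinder domain. The height change around any non-contractible cycle must vanish, and the height change around $\ell$ must be consistent. For the cycle around $\ell$, this holds because the disorders have equally many sources and sinks, so $g$ is a single-valued function along $\ell$. For the horizontal non-contractible cycles, the $\bullet$ boundary condition forces zero net flux across a row of the top boundary. By the ice rule (flux conservation), the flux is the same across every homologous cycle, hence zero everywhere. Hence $h$ is single-valued, as the paper already notes in Lemma \ref{lem:cylinder-6v-converges}.

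For log-supermodularity, the weight factorises over vertices $v$. Each local factor depends only on the four heights around $v$, through whether the diagonal pairs differ by $2$ or coincide:
\begin{itemize}[label={}]
\item weight $c$ if both diagonal pairs are equal;
\item weight $a$ or $b$ according to which diagonal has a jump.
\end{itemize}
With $a=b$ (the case used here), the weight is $c^{\mathbbm{1}\{h(x_1)=h(x_2)\}}a^{\mathbbm{1}\{|h(x_1)-h(x_2)|=2\}}$ along one diagonal $x_1x_2$, equivalently along the other. This is a function $\phi(h(x_1)-h(x_2))$ with $\phi(0)=c\ge\phi(\pm2)=a$, i.e. a convex potential $-\log\phi$ of the gradient. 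A standard four-case check shows that a nearest-neighbour pair interaction $\phi(h(x)-h(y))$ with $-\log\phi$ convex on the allowed values $\{-2,0,2\}$ is log-supermodular. In general one uses the local check of \cite{glazman-lammers}, which requires $a,b\le c$.

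The main obstacle is therefore not the local computation, which is inherited verbatim. It is verifying that the unusual global constraints preserve the lattice structure: the inner boundary $g$ coming from a disordered interior, and the cylinder's top and bottom "constant on black faces" conditions with possibly different constants. These constraints are of the form "equal to a fixed value" or "equal to each other", and both are preserved by pointwise $\max$ and $\min$. I expect the only subtle point to be ensuring that $h$ is single-valued, which is handled by the flux argument above.
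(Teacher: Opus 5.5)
Your route is correct and genuinely different from the paper's.

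\textbf{How the two routes differ.} The paper adapts the Markov-chain proof of \cite[Proposition 2.2]{dckmo-6v-deloc}. It checks the local FKG criterion for single-face moves $h\mapsto h\pm2$. The constraint that $h$ be constant on the black faces of $\partial^+_{F^\bullet}$ (and of $\partial^-_{F^\bullet}$) destroys irreducibility under single-face moves, so the paper adds ``type 2'' moves that shift a whole boundary row by $\pm2$, and then proves irreducibility using both move types.

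You instead check the global lattice condition. The admissible height functions form a finite sublattice of $\ZZ^F$, and the weight is log-supermodular. FKG then follows from the FKG (Ahlswede--Daykin) theorem on finite distributive lattices, which needs neither strict positivity nor any connectivity of the support. In your route the inner boundary $g$ and the top/bottom constraints enter only as sublattice conditions, so the row-shift moves and the irreducibility argument are not needed. The paper's route has the advantage of reusing the existing local criterion verbatim.

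\textbf{One step needs correcting.} The vertex weight is not a function of a single diagonal. If $h(x_1)=h(x_2)$ but $|h(y_1)-h(y_2)|=2$, the weight is $a$, not $c$. So your displayed formula, and the claim ``equivalently along the other'', are wrong; your own three-case description shows the two diagonals carry independent information.

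The correct observation is that the two diagonals never jump simultaneously. Hence $w_v=\phi(h(x_1)-h(x_2))\,\phi(h(y_1)-h(y_2))$, with $\phi(0)=\sqrt{c}$ and $\phi(\pm2)=a/\sqrt{c}$. For $a\neq b$, use $a/\sqrt{c}$ on one diagonal and $b/\sqrt{c}$ on the other, according to the orientation of the edges at $v$. Then:
\begin{itemize}
\item The condition $\phi(0)^2\ge\phi(2)\phi(-2)$ is exactly $c\ge a$ (resp.\ $c\ge a,b$).
\item Each factor is log-supermodular by the usual convexity argument: for $s\ge s'$ and $t\le t'$, the values $s-t'$ and $s'-t$ lie in $\{-2,0,2\}$, between $s'-t'$ and $s-t$, and have the same sum.
\item Hence the product is log-supermodular, and with this fix your argument goes through.
\end{itemize}
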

    \begin{proof}[Proof of Lemma \ref{lem:cyl-hf-fkg}]
    The proof of FKG for the measure $\mu^{\sv,\bullet,g}_{\cyl_{L,M}\setminus \ell_{\inn}}$ is very similar to that for the measure on a planar domain given in \cite[Proposition 2.2]{dckmo-6v-deloc}. One needs to modify it to create a Markov chain that deals with the boundary conditions on $\partial^+_{F^\bullet}$ and $\partial^-_{F^\bullet}$. 
    
    The measure $\mu^{\sv,\bullet,g}_{\cyl_{L,M}\setminus \ell_{\inn}}$ is irreducible with respect to the following two types of moves: type 1 being the usual changing of the value of $h$ at one square to $h\pm2$ (when the result is a valid configuration), and type 2 changing the value of $h$ on \textit{all} of $\partial^+_{F^\bullet}$ (resp. $\partial^-_{F^\bullet}$) to $h\pm2$ (again when the result is a valid configuration). Indeed, conditioning on $C^+, C^-$ being the values of $h$ on $\partial^+_{F^\bullet}$, $\partial^-_{F^\bullet}$, the measure is irreducible with respect to type 1 moves by the same proof as in the planar case. Then if $C^+, C^->0$, it is easy to show that a configuration exists from which one can use a type 2 move to reduce the larger of $C^+, C^-$ by 2. If both are negative one can similarly increase the lower by 2, and if they differ in sign, one can reduce the absolute value of either of them. By this process one can reduce to the case $C^+= C^-=0$, and irreducibility follows. 

    The FKG criterion (equation (A.2) of \cite{dckmo-6v-deloc}) then also holds for each of these moves; for type 1 moves by the same proof as the planar case, and for type 2 moves as they are actually a concatenation of several type 1 moves. The FKG property follows.
\end{proof}

    Let us now prove \eqref{eq:disorders-estimate-part-2} given this FKG property. We need two claims. The first claim is that that there exists a second circuit of faces $\ell'$ and a constant $C\le0$, depending on $g$ and $\ell$, such that $\ell'$ surrounds $\ell$ and such that the event $h\in\{C,C-1\}$ on all of $\ell'$ (ie. $C$ on even faces and $C-1$ on odd faces) has non-zero probability and is a decreasing event; that is, $h$ cannot take lower values than this on any of $\ell'$. 

    Indeed, let $h_0$ be a valid height function configuration. Using the moves of type 1 and 2 from the FKG proof, transform the configuration into one $h_1$ where there are no faces of height $\ge -1$, apart from on $\ell$ and on faces sharing an edge with those on $\ell$ where $g=0$. At this point, there is a circuit $\ell_1$ which is identical to $\ell$ except that it passes directly around all faces of $\ell$ where $g=0$. Define $g_1$ to be the height function $h_1$ restricted to $\ell_1$, and note that there is no configuration which takes lower values than $h_1$ on $\ell_1$. Now repeat this process; at each step a new $\ell_i$ is defined, on which $h_i$ takes values strictly less than the largest value $g_{i-1}$ takes, and the minimal value on all $\ell_i$ is the same. This process must terminate on a circuit $\ell':=\ell_k$ where $h_k$ takes values in $\{C,C-1\}$ for some $C\le0$, providing $L,M$ are large enough. Finally, note that at every step, the height function cannot possibly take lower values on $\ell_i$ than $h_i$ does, so the event is indeed decreasing.

    From now we take $n$ large enough so that $\Lambda_n$ contains $\ell'$. For $b\in\ZZ$, let $D_{b,N}$ be the event that there is a circuit of $h\ge b$ in $\Lambda_N$. The second claim we need is that for $n$ such that $\ell'\subset\Lambda_n$, there is a constant $b_n$ such that event $D_{b_n,N}$ occuring implies $\cO_{n,N}$ occurs. Indeed, there is a $b_n=b_n(g)$ such that if $D_{b_n,N}$ occurs, then there is a circuit $\gamma$ in $E^\bullet$ in $\Lambda_N\setminus\Lambda_n$ such that on the white faces adjacent to and outside $\gamma$, the height is $b_n$, and on the white faces adjacent to and inside $\gamma$, it is $b_n-2$. The circuit $\gamma$ then lives in $\om[\sigma^\circ]$ (see \eqref{eq:sample-xi-from-arrows}), so by definition of $\xi^\bullet$ \eqref{eq:sampling-xi}, $\gamma\subset\xi^\bullet$.
    
    Using both claims, we have:
    \be\label{eq:disorders-estimate-part-2-working-2}
	\begin{split}
    \mu^{\sv,\bullet,g}_{\cyl_{L,M}\setminus \ell_{\inn}}
    [\cO^c_{n,N}]
    &\le
    \mu^{\sv,\bullet,g}_{\cyl_{L,M}\setminus \ell_{\inn}}
    [D^c_{b_n,N}]\\
    &\le
    \mu^{\sv,\bullet,g}_{\cyl_{L,M}\setminus \ell_{\inn}}
    [D^c_{b_n,N} \ | \ h|_{\ell'}\in\{C,C-1\}]\\
    &=
    \mu^{\sv,\bullet}_{\cyl_{L,M}}
    [D^c_{b_n,N} \ | \ h|_{\ell'}\in\{C,C-1\}]\\
    &\le 
    C_g\cdot
    \mu^{\sv,\bullet}_{\cyl_{L,M}}
    [D^c_{b_n,N}],
	\end{split}
    \ee
    where the first inequality is the second claim above, the second is due to the FKG inequality and the first claim above, and the equality is the domain Markov property for arrow configurations. In the final expression we fix the height function to be $C$ at some point inside $\ell'$ so that the height function and the event $D^c_{b_n,N}$ are well-defined, and $C_g = \sup_{L,M}(\mu^{\sv,\bullet}_{\cyl_{L,M}}[h|_{\ell'}\in\{C,C-1\}])^{-1}$, which is bounded, since the probability in question converges as $L,M\to\infty$ due to Lemma \ref{lem:cylinder-6v-converges}, as it is a local function of the gradient, and the limit is bounded away from 0. Indeed there exist $n,N$ with $\ell'\subset\Lambda_n$ and $\mu^{\sv,\bullet}[\cO_{n,N}]>\tfrac{1}{2}$, and by conditioning on and restricting to the interior of $\gamma_N^\bullet$, one has $h|_{\ell'}\in\{C,C-1\}]$ with strictly positive probability. \\
    
    From the delocalisation of the height function of the six-vertex model, Theorem \ref{thm:6v-deloc}, the probability in the final expression tends to 0 as $L,M\to\infty$ and then $N\to \infty$, as required; the event $\eps_N$ in \eqref{eq:disorders-estimate-part-2} is the event $D^c_{b_n(g),N}$, which has largest probability as $g$ runs over its possible values. 
    
    Putting this back into \eqref{eq:disorders-estimate-part-2-working-1}, we have \be\label{eq:disorders-estimate-part-2-working-3}
	\begin{split}
	\frac{1}
	{Z^{\sv,\bullet}_{\cyl_{L,M}}}
	\sum_{\kappa\sim(X,\ul{i}) }
	\sum_{\xi^\bullet\sim\kappa}
	\mathbbm{1}(\cO^c_{n,N})
	w[\kappa,\xi^\bullet]
    &\le
    \mu^{\sv,\bullet}_{\cyl_{L,M}}
    [\eps_N]
    \max_g\{C_g\}
    \frac{Z^{\sv,\bullet,X,\ul{i}}_{\cyl_{L,M}}}
	{Z^{\sv,\bullet}_{\cyl_{L,M}}}.\\
	\end{split}
    \ee
    The ratio of partition functions is at most 1 by \eqref{eq:disorders-mirrors-2}, uniformly in $L,M,X,\ul{i}$.
    This finishes the proof of \eqref{eq:disorders-estimate-part-2}, and so finishes the proof of Lemma \ref{lem:disorders-estimate}. 
\end{proof}

    We have proved the following convergence of the XXZ model:
    \bes
        \langle\cdot\rangle 
            = \lim_{L\to\infty}\lim_{\beta\to\infty} \langle \cdot \rangle_{\beta,\TT_L}
            = \lim_{L\to\infty}\lim_{\beta\to\infty} \langle \cdot \rangle_{\beta,\TT_L}^{\Psi^*_L},
    \ees
    for $*=\bullet,\circ$. Note also that $\langle\cdot\rangle$ is translation invariant, since the six-vertex limit is also translation-invariant.

\subsection{Proof of extremality}\label{sec:extremality}
    
    We prove next that $\langle\cdot\rangle$ is extremal. We prove the equivalent condition (for translation-invariant states) that $\langle\cdot\rangle$ is mixing, that is, for all local observables $A$, $B$ and for $\tau_z$ the shift in $\ZZ$ by $z$, we have
    \be\label{eq:xxz-6v:mixing}
        \lim_{|z|\to\infty} \langle A \cdot (\tau_z B) \rangle = \langle A \rangle  \langle B \rangle.
    \ee
    See for example \cite[Corollary IV.2.3]{israel-book} or one can adapt the ideas of \cite{friedli-velenik}. Again it suffices to prove this for $A=\prod_{k=1}^r E_{i_k^-,i_k^+}^{x_k}$, $B=\prod_{k=1}^s E_{j_k^-,j_k^+}^{y_k}$. Write $X=\{x_k\}_{k=1}^r$ and $\ul{i}=(i_1^-,i_1^+,\dots,i_r^-,i_r^+)$, $Y=\{y_k\}_{k=1}^s$ and $\ul{j}=(j_1^-,j_1^+,\dots,j_s^-,j_s^+)$, and then the union of the two lists $\ul{ij}=(i_1^-,i_1^+,\dots,i_r^-,i_r^+,j_1^-,j_1^+,\dots,j_s^-,j_s^+)$.
    By \eqref{eq:disorders}, it suffices to prove that 
    \be\label{eq:xxz-mixing-disorders}
    \lim_{L,M\to\infty}
        \frac{Z^{\sv,\bullet,X\cup\tau_z Y, \ul{ij}}_{\cyl_{L,M}}}
	{Z^{\sv,\bullet}_{\cyl_{L,M}}}
    	\to
        \lim_{L,M\to\infty}
        \frac{Z^{\sv,\bullet,X,\ul{i}}_{\cyl_{L,M}}}
	       {Z^{\sv,\bullet}_{\cyl_{L,M}}}
            \cdot 
        \lim_{L,M\to\infty}
        \frac{Z^{\sv,\bullet,Y,\ul{j}}_{\cyl_{L,M}}}
	{Z^{\sv,\bullet}_{\cyl_{L,M}}}
    \ee
    as $|z|\to\infty$.

    We know from our proof of convergence that for all $\eps>0$, there exists an $N\in\NN$ such that
    \be\label{eq:extremal-work-1}
        \left| 
        \lim_{L,M\to\infty}
        \frac{Z^{\sv,\bullet,X,\ul{i}}_{\cyl_{L,M}}}
	    {Z^{\sv,\bullet}_{\cyl_{L,M}}}
        -
        \mu^{\sv}[f_N^{X,\ul{i}}]
        \right|
        \le \eps,
        \qquad
        \left| 
        \lim_{L,M\to\infty}
        \frac{Z^{\sv,\bullet,Y,\ul{j}}_{\cyl_{L,M}}}
	    {Z^{\sv,\bullet}_{\cyl_{L,M}}}
        -
        \mu^{\sv}[f_N^{Y,\ul{j}}]
        \right|
        \le \eps,
    \ee
    where we write $f_N^{X,\ul{i}}=\mathbbm{1}_{\cO_{n,N}} 
    Z^{\sv,\bullet,X,\ul{i}}_{(\gamma^\bullet_N)_{\inn}} / 
    Z^{\sv,\bullet}_{(\gamma^\bullet_N)_{\inn}}$ 
    and 
    $f_N^{Y,\ul{j}}=\mathbbm{1}_{\cO_{n,N}} 
    Z^{\sv,\bullet,Y,\ul{j}}_{(\gamma^\bullet_N)_{\inn}} / 
    Z^{\sv,\bullet}_{(\gamma^\bullet_N)_{\inn}}$.
    for $f_N$ exactly as written in \eqref{eq:disorders-f_N} and $f_N^B$ for the same with $\ul{i^-},\ul{i^+}$ replaced by $\ul{j^-},\ul{j^+}$, and we write $\mu^{\sv}$ for the infinite volume measure on $(\kappa,\xi^\bullet)$. 

    We will show that for $|z|$ large enough,
    \be\label{eq:extremal-work-2}
    \left|
    \lim_{L,M\to\infty}
        \frac{Z^{\sv,\bullet,X\cup\tau_z Y, \ul{ij}}_{\cyl_{L,M}}}
	{Z^{\sv,\bullet}_{\cyl_{L,M}}}
    -
    \mu^{\sv}\left[f_N^{X,\ul{i}}\right]
    \mu^{\sv}\left[f_N^{Y,\ul{j}}\right]
    \right|
    \le \eps,
    \ee
    which will suffice to prove \eqref{eq:xxz-mixing-disorders}. \\

    Let $\cO^{AB}_{n,N}=\cO_{n,N}\cap\tau_z\cO_{n,N}$ be the event that there is a circuit of $\xi^\bullet$ in $\Lambda_N$ surrounding $\Lambda_n$, and another circuit in $\tau_z\Lambda_N$ surrounding $\tau_z\Lambda_n$, where we take $n$ large enough so that $E_X,E_Y\subset\Lambda_n$. Similar to \eqref{eq:disorders-good-bad}, we have
    \be\label{eq:extremal-good-bad}
	\begin{split}
	\frac{Z^{\sv,\bullet,X\cup\tau_z Y, \ul{ij}}_{\cyl_{L,M}}}
	{Z^{\sv,\bullet}_{\cyl_{L,M}}}
	&=
	\frac{1}
	{Z^{\sv,\bullet}_{\cyl_{L,M}}}
	\sum_{\kappa\sim (X\cup\tau_z Y, \ul{ij}) }
	\sum_{\xi^\bullet\sim\kappa}
	\mathbbm{1}(\cO^{AB}_{n,N})
	w[\kappa,\xi^\bullet]\\
	&\qquad+
	\frac{1}
	{Z^{\sv,\bullet}_{\cyl_{L,M}}}
	\sum_{\kappa\sim (X\cup\tau_z Y, \ul{ij}) }
	\sum_{\xi^\bullet\sim\kappa}
	\mathbbm{1}((\cO^{AB}_{n,N})^c)
	w[\kappa,\xi^\bullet].
	\end{split}
	\ee

    An argument identical to the previous section shows that for $|z|$ large enough so that $\Lambda_N\cap\tau_z\Lambda_N=\es$,
    \bes
	\begin{split}
	\frac{1}
	{Z^{\sv,\bullet}_{\cyl_{L,M}}}
	\sum_{\kappa\sim (X\cup\tau_z Y, \ul{ij}) }
	\sum_{\xi^\bullet\sim\kappa}
	\mathbbm{1}(\cO^{AB}_{n,N})
	w[\kappa,\xi^\bullet]
    =
    \mu^{\sv,\bullet}_{\cyl_{L,M}}[f_N^{X,\ul{i}} \cdot \tau_x f_N^{Y,\ul{j}}].
	\end{split}
	\ees
    This converges to $\mu^{\sv}[f_N^{X,\ul{i}} \cdot \tau_x f_N^{Y,\ul{j}}]$ as $L,M\to\infty$ as the function is local, and then since the limiting measure $\mu^{\sv}$ on $(\kappa,\xi^\bullet)$ has the mixing property, this in turn tends to 
    $\mu^{\sv}[f_N^{X,\ul{i}}]
    \mu^{\sv}[f_N^{Y,\ul{j}}]$
    as $|z|\to\infty$. 
    
    Hence it suffices now to show that for some $N$ large enough, we have that the second term in \eqref{eq:extremal-good-bad} is smaller than $\eps$ for all $|z|$ large enough. We have 
    \bes
    \begin{split}
    \frac{1}
	{Z^{\sv,\bullet}_{\cyl_{L,M}}}
	\sum_{\kappa\sim (X\cup\tau_z Y, \ul{ij})}
	\sum_{\xi^\bullet\sim\kappa}
	\mathbbm{1}((\cO^{AB}_{n,N})^c)
	w[\kappa,\xi^\bullet]
    &\le 
    \frac{1}
	{Z^{\sv,\bullet}_{\cyl_{L,M}}}
	\sum_{\kappa\sim (X\cup\tau_z Y, \ul{ij}) }
	\sum_{\xi^\bullet\sim\kappa}
	\mathbbm{1}((\cO_{n,N})^c)
	w[\kappa,\xi^\bullet]\\
    & \qquad 
    +
    \frac{1}
	{Z^{\sv,\bullet}_{\cyl_{L,M}}}
	\sum_{\kappa\sim (X\cup\tau_z Y, \ul{ij})}
	\sum_{\xi^\bullet\sim\kappa}
	\mathbbm{1}((\tau_z\cO_{n,N})^c)
	w[\kappa,\xi^\bullet].
    \end{split}
    \ees
    We consider the first of these terms; the second is dealt with in an identical way. Similarly to \eqref{eq:disorders-estimate-part-2-working-1}, we take some circuits of faces $\ell_X$, $\ell_Y$ of faces surrounding $E_X$ and $\tau_z E_Y$ and condition on the relative heights $g_X$ and $g_Y$ along them, giving:
    \be\label{eq:extremal-working-3}
	\begin{split}
	\frac{1}
	{Z^{\sv,\bullet}_{\cyl_{L,M}}}
	\sum_{\kappa\sim  (X\cup\tau_z Y, \ul{ij})}
	\sum_{\xi^\bullet\sim\kappa}
	\mathbbm{1}((\cO_{n,N})^c)
	w[\kappa,\xi^\bullet]
    &=
    \sum_{g_X,g_Y}
    \sum_{\substack{\kappa_\inn\sim (X\cup\tau_z Y, \ul{ij})\\
    \kappa_\inn \sim g_X,g_Y}}
    w[\kappa_{\inn}]
    \cdot
    \frac{Z^{\sv,\bullet,g}_{\cyl_{L,M}\setminus \ell_{\inn}}}
	{Z^{\sv,\bullet}_{\cyl_{L,M}}}
    \mu^{\sv,\bullet,g}_{\cyl_{L,M}\setminus \ell_{\inn}}
    [(\cO_{n,N})^c],\\
	\end{split}
    \ee
    where now the $\mu^{\sv,\bullet,g}_{\cyl_{L,M}\setminus \ell_{\inn}}$ is the measure on height functions defined outside of both $\ell_X$ and $\ell_Y$ such that the height gradient on $\ell_X$ is $g_X$ and on $\ell_Y$ is $g_Y$, and has the usual flat boundary conditions on the boundaries $\partial^+_F$ and $\partial^-_F$ of the cylinder, and we fix the height to be 0 at its highest value on $\ell_X$. The factor $Z^{\sv,\bullet,g}_{\cyl_{L,M}\setminus \ell_{\inn}}$ is its partition function. 

    The same proof as Lemma \ref{lem:cyl-hf-fkg} shows that the height function under $\mu^{\sv,\bullet,g}_{\cyl_{L,M}\setminus \ell_{\inn}}$ satisfies the FKG inequality. Then the same argument as in the previous section tells us that 
    \bes
    \mu^{\sv,\bullet,g}_{\cyl_{L,M}\setminus \ell_{\inn}}
    [(\cO_{n,N})^c]
    \le 
    C_{X,Y} \cdot
    \mu^{\sv,\bullet,g_Y}_{\cyl_{L,M}\setminus (\ell_Y)_{\inn}}
    [(D_{b_n,N})^c],
    \ees
    where $\mu^{\sv,\bullet,g_Y}_{\cyl_{L,M}\setminus (\ell_Y)_{\inn}}$ is the measure on height functions now only on the outside of $\ell_Y$, with the height fixed to be 0 at some given point in the interior of $\ell_X$, with the same boundary conditions on $\ell_Y$, $\partial^+_F,\partial^-_F$ as before, and $D_{b_n,N}$ is the existence of a circuit of height $b_n$ in $\Lambda_N$, surrounding $\Lambda_n$, and $b_n=b_n(g_X)>0$ depends only on $n$ and $g_X$. 

    Finally, the proof of convergence in the previous section showed that 
    for $N$ large enough, 
    \bes
    \mu^{\sv,\bullet,g_Y}_{\cyl_{L,M}\setminus (\ell_Y)_{\inn}}
    [\tau_z(\cO_{n,N})^c] <\eps,
    \ees
    and further from mixing and the domain Markov property on arrow configurations, we have
    \bes
    \mu^{\sv,\bullet,g_Y}_{\cyl_{L,M}\setminus (\ell_Y)_{\inn}}
    [(D_{b_n,N})^c \ | \ \tau_z\cO_{n,N}]
    \le C'_{X,Y}
    \mu^{\sv,\bullet}_{\cyl_{L,M}}
    [(D_{b_n,N})^c \ | \ \tau_z\cO_{n,N}]
    \to 
    \mu^{\sv,\bullet}_{\cyl_{L,M}}
    [(D_{b_n,N})^c]
    \ees
    as $|z|\to\infty$. The final expression is less than $\eps$ for $N$ large enough, and so combining the above two inequalities and changing the value of $\eps$ above, we have $\mu^{\sv,\bullet,g}_{\cyl_{L,M}\setminus \ell_{\inn}}
    [(\cO_{n,N})^c]<\eps$. This finishes the proof of \eqref{eq:extremal-work-2}, which proves \eqref{eq:xxz-mixing-disorders}, which proves \eqref{eq:xxz-6v:mixing}, and finishes the proof that the ground state $\langle\cdot\rangle$ is extremal.


\subsection{Spin-spin correlations and connection probabilities in the mirror model}

We finish this section with a lemma relating the spin-spin correlations in the XXZ chain and connection probabilities in the mirror model. It is a simple corollary of \eqref{eq:disorders-mirrors}. An analogous statement for the continuous loop model was proved by Ueltschi \cite[Theorem 3.3]{ueltschi}.
\begin{lemma}\label{lem:spin-spins-are-mirror-connections}
	Set $\ul{p}\in\cP$, noting that $\Delta=\Delta(\ul{p})$ given by \eqref{eq:6v-weights} and \eqref{eq:delta} lies in $[-1,1)$. 
    \begin{enumerate}
        \item We have that 
	\bes
		\lim_{\beta\to\infty} \langle S_0^{(1)} S_x^{(1)} \rangle^{\mathrm{XXZ}}_{\TT_L,\beta} 
		=
		\frac{1}{4} 
		\lim_{M\to\infty} \mu^{\mir}_{\TT_{L,M},\ul{p}}[e_0 \leftrightarrow e_x],
	\ees
	where $e_x$ is the vertical edge in $\TT_{L,M}$ with midpoint at $(x,\tfrac{1}{2})$. Moreover for $\ul{p}$ satisfying $p_{\NW},p_{\NE}\ge p_{\es}$, 
    \be\label{eq:mir-conn-prob-decay-to-0}
		\mu^{\mir}_{\ul{p}}[e \leftrightarrow e'] \le \lim_{V\nearrow\ZZ^2} \mu^{\mir}_{V,\ul{p}}[e \leftrightarrow e']
        \to 0
	\ee
    as $||e-e'||\to\infty$ for any pair of edges $e,e'\in \EE(\ZZ^2)$.
    \item We have 
	\bes
	\begin{split}
	\lim_{\beta\to\infty} \langle S_0^{(3)} S_x^{(3)} \rangle^{\mathrm{XXZ}}_{\TT_L,\beta} 
	&=
	\frac{1}{4} (-1)^{x} \lim_{M\to\infty} \mu^{\sv}_{\TT_{L,M},a,b,c}[\sigma^\bullet_{y_0} \sigma^{\circ}_{z_0}\sigma^\bullet_{y_x} \sigma^{\circ}_{z_x}]\\
	&=
	\frac{1}{4}  \left[
	\lim_{M\to\infty} \mu^{\mir}_{\TT_{L,M},\ul{p}}[e_0 \xleftrightarrow{+} e_x]
	-
	\mu^{\mir}_{\TT_{L,M},\ul{p}}[e_0 \xleftrightarrow{-} e_x] 
	\right]\\
    &=
	\frac{1}{4} 
	\lim_{M\to\infty} \mu^{\sv}_{\TT_{L,M},a,b,c}[(h(u_0)-h(u_0'))(h(u_x)-h(u_x')],
	\end{split}
	\ees
    where $y_i$ (resp. $z_i$) is the black (resp. white) face adjacent to the edge $e_i$, while $u_i$ (resp. $u_i'$) is the face adjacent to and to the left of (resp. right of) $e_i$; where $e_0 \xleftrightarrow{+} e_x$ if $e_0 \leftrightarrow e_x$ and, upon orienting the loop through $e_0$ and $e_x$, the edges $e_0$ and $e_x$ are both oriented upwards or both downwards, and $e_0 \xleftrightarrow{-} e_x$ if exactly one is oriented upwards. Note the right hand side converges to $0$ as $|x|\to\infty$.
	\end{enumerate}
\end{lemma}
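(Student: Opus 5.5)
The plan is to reduce everything to the disorder identity \eqref{eq:disorders-mirrors}, now on the torus rather than the cylinder, and to Proposition \ref{prop:xxz-6v-same}(1).

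For Part 1, write $S^{(1)}_0S^{(1)}_x=\tfrac14\sigma^{(1)}_0\sigma^{(1)}_x$. Since $\sigma^{(1)}=E_{1,2}+E_{2,1}$, the product expands as a sum of four terms $\prod_{k}E^{x_k}_{i_k^-,i_k^+}$ with $X=\{0,x\}$, in each of which the two half-edges of $e_0$ have opposite arrows and likewise for $e_x$. By Proposition \ref{prop:xxz-6v-same}(1), the $\beta\to\infty$ limit of the XXZ state equals $\lim_M\langle\cdot\rangle^{\sv}_{\TT_{L,M}}$. Next I apply the torus analogue of \eqref{eq:disorders-mirrors}: the trace $\Tr[A T_L^M]/\Tr[T_L^M]$ is a ratio $Z^{X,\ul i}_{\TT_{L,M}}/Z_{\TT_{L,M}}$. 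The switching bijection (sample mirrors via \eqref{eq:sample-mirror-from-6v}, then reverse source-to-sink paths) gives $\mu^{\mir}_{\TT_{L,M}}[2^{-l_{E_X}}\mathbbm 1\{m\sim(X,\ul i)\}]$.

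I would then sum over the four $\ul i$. If $e_0$ and $e_x$ lie on distinct loops, then no sink at $e_0$ can be paired with a source through $e_0$'s own loop, since each disordered edge is a double source or double sink. So the indicator vanishes unless $e_0\leftrightarrow e_x$. On that event, $l_{E_X}=1$, and exactly two of the four assignments (the two orientations of the loop, one of which matches each choice) are compatible. This gives $\tfrac14\cdot 2\cdot\tfrac12\,\mu^{\mir}[e_0\leftrightarrow e_x]$. The bookkeeping must be checked, so that the constant comes out to $\tfrac14$.

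For \eqref{eq:mir-conn-prob-decay-to-0}, the first inequality is weak convergence, Proposition \ref{prop:mir-conv-from-6v}, applied to the increasing limit of the connection events truncated to finite boxes. The decay to $0$ is the main obstacle. I would use the extremality/mixing of $\langle\cdot\rangle$ from Section \ref{sec:extremality}, more precisely its disorder form \eqref{eq:xxz-mixing-disorders} applied to free or planar domains. The connection probability is the disorder ratio for $X=\{0,x\}$ with $\ul i$ making both edges disorders. By the mixing proof, this ratio factorizes into the product of the single-disorder ratios. Each single-disorder ratio is $0$, since one disordered edge alone has unequal sources and sinks. For general edge pairs (horizontal or not aligned), I would use the same disorder construction, which places the two-edge disorder anywhere, together with the circuit-of-$\xi^\bullet$ argument of Lemma \ref{lem:disorders-estimate}.

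For Part 2, $\sigma^{(3)}$ is diagonal, so $\langle\sigma^{(3)}_0\sigma^{(3)}_x\rangle^{\sv}$ is the expectation of the product of the vertical arrow signs at $e_0$ and $e_x$. Each sign equals $h(u)-h(u')=\pm1$, which gives the height-function form. Via \eqref{eq:sv-spin-hf}, it also equals the product $\sigma^\bullet_y\sigma^\circ_z$, up to the parity sign $(-1)^x$ coming from which of $u,u'$ is black. Through the mirror coupling (Proposition \ref{prop:mir-6v-coupling}), the loops are oriented independently, so the product averages to $0$ unless $e_0\leftrightarrow e_x$. On that event it is $+1$ or $-1$ according to $\xleftrightarrow{\pm}$. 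Decay to $0$ then follows from \eqref{eq:6v-decay-correlations}, or from Part 1.
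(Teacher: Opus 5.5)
Your derivations of the identities in Parts 1 and 2 are correct and match the paper. Your Part 2, via the diagonality of $\sigma^{(3)}$ and independent loop orientations, is a slightly cleaner version, and getting the $S^{(3)}$ decay from \eqref{eq:6v-decay-correlations} is a valid shortcut. The truncation argument for the first inequality in \eqref{eq:mir-conn-prob-decay-to-0} is also the paper's.

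The gap is the decay to $0$ of the connection probability. Your step ``by the mixing proof, this ratio factorizes into the product of the single-disorder ratios'' does not go through, even though the paper's own proof appeals to the same section.
\begin{itemize}
\item The argument of Section \ref{sec:extremality} bounds its error term by conditioning on heights $g_X,g_Y$ along circuits $\ell_X,\ell_Y$. It then uses FKG for a single-valued height function outside them, compared with the disorder-free measure. This requires each of $X$ and $\tau_zY$ to contain as many sources as sinks.
\item A single double-sink edge carries flux $\pm2$, so $g_X$ is not even a function on $\ell_X$.
\item Worse, if $\gamma\subset\xi^\bullet$ then $h$ is constant on the black faces along $\gamma$, because an open edge $e$ has $e^*\notin\om[\sigma^\bullet]$. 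So a $\xi^\bullet$-circuit carries zero flux and can never separate $e_0$ from $e_x$ under the two-disorder weight.
\end{itemize}
Consequently the ``good'' term vanishes identically, and the ``bad'' term is the whole ratio $Z^{\sv,\bullet,X,\ul{i}}_{\cyl_{L,M}}/Z^{\sv,\bullet}_{\cyl_{L,M}}$ with $X=\{0,x\}$. The bound $\mu^{\sv,\bullet,g}_{\cyl_{L,M}\setminus\ell_{\inn}}[\cO_{n,N}^c]<\eps$ meant to close the argument is false: that probability equals $1$. The claimed factorization is therefore just a restatement of the decay you are trying to prove.

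Clustering of balanced observables cannot substitute for it. A $U(1)$-symmetric mixture of ordered states clusters on those observables but not on $S^+_0S^-_x$.

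Running the argument with free boundary conditions, as the middle term of \eqref{eq:mir-conn-prob-decay-to-0} requires, is worse still. There a disorder's flux can exit through $\partial_E V$, so single-disorder ratios are nonzero. The two-disorder ratio also picks up configurations where $e_0$ and $e_x$ lie on distinct paths to $\partial V$.

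What you need is a representation in which the connection event becomes a percolation event controlled by a no-infinite-cluster statement. $\xi^\bullet$ is not such a representation, since a loop passes straight through an open $\xi^\bullet$-edge at a vertex carrying no mirror. The paper's second coupling provides it, and it does not depend on this lemma. By Lemma \ref{lem:8v-connection-probs} (its colour-switching proof works with free boundary conditions, paths included) and Proposition \ref{prop:8v-AT},
\[
\lim_{V\nearrow\ZZ^2}\mu^{\mir}_{V,\ul{p}}[e\leftrightarrow e']
=\mu^{\eiv}_{K,W}[\pi^\bullet_{x_1}\pi^\circ_{y_1}\pi^\bullet_{x_2}\pi^\circ_{y_2}]
=\mu^{\eivRC}_{\ul{p}}[x_1\xleftrightarrow{\eta^\bullet}x_2,\ y_1\xleftrightarrow{\eta^\circ}y_2]
\le\mu^{\eivRC}_{\ul{p}}[x_1\xleftrightarrow{\eta^\bullet}x_2].
\]
The last quantity tends to $0$ because $\eta^\bullet$ has no infinite cluster (Proposition \ref{prop:8v-AT-RC}). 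This gives existence of the free-boundary limit and decay for every pair of edges at once.
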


\begin{proof}
    By the definition of the spin operator $S^{(1)}$ and Proposition \ref{prop:xxz-6v-same}, $\lim_{\beta\to\infty} \langle S_0^{(1)} S_x^{(1)} \rangle^{\mathrm{XXZ}}_{\TT_L,\beta}$ is equal to $1/4$ times the sum of four terms $\lim_{M\to\infty} \langle E_{(\ul{i^-},\ul{i^+})} \rangle^{\sv}_{\TT_{L,M}}$ corresponding to there being disorders on the edges $e_0$ and $e_x$ which are both either two sinks or two sources. The cases where both are two sinks or both are two sources give 0, as there is no valid oriented loop configuration. In both other cases, the formula on the right hand side of \eqref{eq:disorders-mirrors} is $\tfrac{1}{2}\mu^{\mir}_{\TT_{L,M},\ul{p}}[e_0 \leftrightarrow e_x]$, giving the result. 

    Similarly, $\lim_{\beta\to\infty} \langle S_0^{(3)} S_x^{(3)} \rangle^{\mathrm{XXZ}}_{\TT_L,\beta}$ is $1/4$ times the sum of four terms $\lim_{M\to\infty} \langle E_{(\ul{i^-},\ul{i^+})} \rangle^{\sv}_{\TT_{L,M}}$, corresponding to there being disorders on the edges $e_0$ and $e_x$, which are both two upward arrows or two downward arrows, and each term is multiplied by $(-1)$ for each edge which are two downward arrows. The result follows from analysing \eqref{eq:disorders-mirrors}. The equalities giving the expressions in terms of the spins and height functions follow from the definition of $\sigma$ and \eqref{eq:sv-spin-hf}.

    The mixing property we proved in Section \ref{sec:extremality} proves the decay to 0 of both spin-spin correlations. Further, the proof of extremality did not require that the disorders in the six-vertex model lie on edges in the same horizontal row; hence we have $\lim_{L,M\to\infty} \mu^{\mir}_{\TT_{L,M},\ul{p}}[e \leftrightarrow e']\to 0$ as $||e-e'||\to\infty$ for any $e,e'$. Finally, 
    for all $\eps>0$ there exists an $n\in\NN$ such that $\mu^{\mir}_{\ul{p}}[e \xleftrightarrow{\Lambda_n} e'] \ge (1-\eps) \mu^{\mir}_{\ul{p}}[e \leftrightarrow e']$, where $\{e \xleftrightarrow{\Lambda_n} e'\}$ is the event that $e,e'$ are connected by a path which stays in $\Lambda_n$. Then we have $\mu^{\mir}_{\ul{p}}[e \xleftrightarrow{\Lambda_n} e'] = \lim_{L,M\to\infty} \mu^{\mir}_{\TT_{L,M},\ul{p}}[e \xleftrightarrow{\Lambda_n} e']$ as the event is local, and the inequality in \eqref{eq:mir-conn-prob-decay-to-0} follows. 
\end{proof}

The polynomial decay of the correlator $\langle S^{(1)}_0 S^{(1)}_x\rangle$ \eqref{eq:xxz-poly-decay} is, by the lemma above, a consequence of the polynomial decay of connection probability in the mirror model, part 3 of Theorem \ref{thm:main-mirror}. This completes the proof of part 1 of Theorem \ref{thm:main-xxz}.
\end{proof}

We also remark that the convergence we've proved in this section implies that for all $e_1,e_2$ edges in $\ZZ^2$, the limit $\lim_{V\nearrow\ZZ^2}\mu^{\mir}_{V,\ul{p}}[e_1\leftrightarrow e_2]$ exists, so the claims in Theorem \ref{thm:main-mirror} using it are well-defined.

\part{New coupling through eight-vertex and Ashkin-Teller}\label{part:new-coupling}

In this part we prove the rest of our result on the mirror model Theorem \ref{thm:main-mirror}. The main two things still to be proved are that for $p_\NW\cdot p_\NE\le p_\es$ the limiting measure is the unique Gibbs measure, and that connection probabilities decay to 0 polynomially fast in the symmetric case. This in turn gives that the spin-spin correlations in the XXZ chain decay polynomially fast.

In this section we prove more of Theorem \ref{thm:main-mirror}, which we formulate in the following proposition.
\begin{proposition}\label{prop:mir-main-part-2}
    Let $\ul{p}\in\cP$ with $p_\NW,p_\NE\ge p_\es$,
    with $p_\NW\cdot p_\NE\le p_\es$, the measure $\mu_{\ul{p}}^\mir$ is Gibbs and is the unique Gibbs measure and the unique thermodynamic limit, that is, for all $m'\in\Om^\mir_{\ZZ^2}$, we have the weak convergence
        \be
        	\lim_{V \nearrow \ZZ^2} \mu^{\mir;m'}_{V,\ul{p}} = \mu_{\ul{p}}^\mir.
        \ee
\end{proposition}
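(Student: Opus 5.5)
The plan is to transfer the problem, via the new mirror--eight-vertex coupling (red/green colouring of loops, Figure~\ref{fig:mir-8v}), to a statement about the eight-vertex spin pair $\pi=(\pi^\bullet,\pi^\circ)$, and to run a sandwiching argument there.

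\textbf{Step 1: coupling with boundary conditions.} First I would make the coupling precise with arbitrary mirror boundary conditions $m'$. The boundary configuration $m'|_{\VV\setminus V}$ pairs up the edges of $\partial_E V$ through exterior arcs. Colouring every loop (including those using exterior arcs) red or green uniformly therefore gives a mixture, over colourings of the exterior arcs, of eight-vertex measures on $F$. Each such measure has a boundary condition of domain-wall type on $\partial F$: prescribed equalities or disequalities of $\pi^\bullet$ and $\pi^\circ$ across boundary edges. Conversely, the mirrors inside $V$ are resampled locally from $\pi$, exactly as in \eqref{eq:sample-mirror-from-6v}. So, as in the proof of Proposition~\ref{prop:mir-conv-from-6v}, weak convergence of the mirror measure on cylinder events reduces to weak convergence of the corresponding eight-vertex measures on local functions of $\pi$.

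\textbf{Step 2: monotonicity.} I expect the hypothesis $p_\NW p_\NE\le p_\es$ to be exactly the condition under which the eight-vertex weights (equivalently, the Ashkin--Teller weights obtained by duality in one spin) satisfy the FKG lattice condition. This should make $\pi^\bullet$ and $\pi^\circ$, after possibly flipping one sublattice, jointly positively associated, and each also FKG conditionally on the other. I would verify this by checking the four-point lattice inequality on the single-vertex weights of Figure~\ref{fig:eight-vertex}.

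From FKG, each boundary measure arising in Step 1 is stochastically sandwiched between the all-plus and all-minus eight-vertex measures. More precisely, a mixture of domain-wall boundary conditions is dominated by fixing the spins on $\partial F$ to $+$ (respectively $-$), which is the maximal (respectively minimal) element. The plus/minus measures are monotone in the domain, so they converge, and the task becomes showing that the plus and minus limits coincide.

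\textbf{Step 3: plus and minus limits coincide.} By symmetry it suffices to show that the plus limit has zero magnetisation, $\langle\pi^\bullet_x\rangle^+=\langle\pi^\circ_y\rangle^+=0$; then the standard FKG argument (coinciding one-point functions of increasing functions in a monotone coupling) gives equality of the two measures. Zero magnetisation should follow from the delocalisation input, Theorem~\ref{thm:6v-deloc}, transported through the circuit of couplings: eight-vertex $\to$ Ashkin--Teller $\to$ six-vertex. Concretely, the plus eight-vertex measure corresponds to Ashkin--Teller with $\tau\equiv+$ on the boundary. Its $\tau$-magnetisation is a connection-to-boundary probability in the $\eta^\bullet$ random-cluster representation. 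Its vanishing should be read off from the non-percolation of $\xi^\bullet$ and $(\xi^\bullet)^*$ in Theorem~\ref{thm:6v-deloc}, either through the decay \eqref{eq:6v-decay-correlations} or by a direct comparison showing that $\eta^\bullet$ is dominated by a $\xi$-type process.

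Once Step 3 holds, all boundary conditions give the same limit, and this limit is $\mu^\mir_{\ul p}$ by Proposition~\ref{prop:mir-conv-from-6v}. It is the unique thermodynamic limit.

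\textbf{Step 4: Gibbs property.} The mirror specification is not quasilocal, because the loop count depends on the exterior. So the Gibbs property cannot be taken for free from weak convergence. I would derive it from the uniform convergence over $m'$. For $W\subset V$, condition $\mu^{\mir;m'}_{V,\ul p}$ on the configuration in $V\setminus W$; by the consistency of finite-volume specifications this yields $\mu^{\mir;m''}_{W,\ul p}$. Then pass $V\nearrow\ZZ^2$, using that the convergence in Step 3 is uniform in the boundary condition. Uniqueness of Gibbs measures follows: any Gibbs $\mu$ is a mixture of the $\mu^{\mir;m'}_{V,\ul p}$, all of which converge to $\mu^\mir_{\ul p}$.

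\textbf{Main obstacle.} I expect the hardest parts to be Step 2 and the first half of Step 3: checking that FKG holds precisely when $p_\NW p_\NE\le p_\es$, and turning delocalisation of the six-vertex height into vanishing magnetisation of the plus eight-vertex measure. I would first try to establish a stochastic comparison between $\eta^\bullet$ under plus boundary conditions and $\xi^\bullet$ under $\mu^{\sv,\bplus}$.
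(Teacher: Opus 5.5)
Your Steps 1 and 3 use the same ingredients as the paper: the colouring coupling with boundary conditions, and zero magnetisation transported eight-vertex $\to$ Ashkin--Teller $\to$ six-vertex from the non-percolation of $\xi^\bullet$. However, the key claim of Step 2 is false, and the parts of Step 3 that rest on it fall with it.

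The spin pair $(\pi^\bullet,\pi^\circ)$ does not satisfy the FKG lattice condition for any $W\neq0$, i.e.\ for any $p_\NW p_\NE\neq p_\es$, whichever sites you flip. At a vertex with black faces $x,x'$ and white faces $y,y'$ the log-weight is $K\pi^\bullet_x\pi^\bullet_{x'}+K'\pi^\circ_y\pi^\circ_{y'}+W\pi^\bullet_x\pi^\bullet_{x'}\pi^\circ_y\pi^\circ_{y'}$. Its mixed second difference in $(\pi^\bullet_x,\pi^\circ_y)$ is $4W\pi^\bullet_{x'}\pi^\circ_{y'}$. An adjacent black/white pair of faces shares two vertices, so the total is $4W(\pi^\bullet_{x_1'}\pi^\circ_{y_1'}+\pi^\bullet_{x_2'}\pi^\circ_{y_2'})$, which takes both signs as the four other spins vary. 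A fixed sign change at some sites only turns $W$ into $\pm W$ vertex by vertex and cannot repair this. (Conditionally on one colour the other is a ferromagnetic Ising model, since $K\ge|W|$ here, but that is not joint monotonicity.) So three things you need have no justification at the level of spins: the sandwich of your domain-wall boundary conditions between all-plus and all-minus, the monotone existence of the plus and minus limits, and the inference ``zero magnetisation $\Rightarrow$ plus $=$ minus''.

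The missing idea is that the monotonicity lives on the random-cluster pair $(\eta^\bullet,\eta^\circ)$. The single-edge 8vRC weights are $w(0,0)=p_\es$, $w(1,1)=2p_\es$, $w(1,0)=p_\NW-p_\es$, $w(0,1)=p_\NE-p_\es$. They satisfy $w(1,1)w(0,0)\ge w(1,0)w(0,1)$ iff $2p_\es^2\ge(p_\NW-p_\es)(p_\NE-p_\es)$, which by \eqref{eq:p-sum-to-1} is exactly $p_\NW p_\NE\le p_\es$. So your hypothesis is the FKG condition for 8vRC in the order $\le$. (Lemma \ref{lem:8vRC-FKG} and the paper's proof write this inequality the other way round, but the argument needs $\le$-FKG, which is what the stated hypothesis provides.) The paper then proceeds in three steps:
\begin{enumerate}
\item It uses zero magnetisation, $\mu^{\eivRC;1,1}_{\ul p}[x\xleftrightarrow{\eta^\bullet}\infty]=\mu^{\AT;+,\f}_{J,U}[\tau_x]=0$, to identify the wired and free 8vRC limits.
\item It treats an arbitrary spin boundary condition $\pi'$ by writing the induced 8vRC measure as a boundary-wired measure conditioned on the disconnection event $\cB$. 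This event is $\le$-decreasing, so the measure is dominated by $\mu^{\eivRC;1,1}$. Exploring from outside the outermost $\eta^\bullet$-closed region, and then the $\eta^\circ$-closed one, removes the conditioning and leaves an unconditioned 8vRC measure inside, which dominates the free one (as in Grimmett's Theorem 11.3). This gives uniqueness of the eight-vertex thermodynamic limit for every fixed $\pi'$.
\item For mirror boundary conditions it uses no monotonicity at all. $\mu^{\eiv;m_0}_{V,K,W}[B]$ is a convex combination of $\mu^{\eiv;\pi'}_{V,K,W}[B]$ over $\pi'\in A_{m_0}$, hence lies between its values at extremal $\pi^{\min},\pi^{\max}$. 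These are realised as single infinite configurations along domains with disjoint boundaries.
\end{enumerate}

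Separately, Step 4 does not go through as written. Write $V'\subset V$ for your inner domain. Uniform convergence in $m'$ does not let you pass to the limit in $\int\mu^{\mir;m''}_{V',\ul p}[A]\,\d\mu^{\mir;m'}_{V,\ul p}(m'')$. The map $m''\mapsto\mu^{\mir;m''}_{V',\ul p}[A]$ is discontinuous at configurations with an exterior arc from $\partial_E V'$ escaping to infinity, and the absence of infinite paths under $\mu^\mir_{\ul p}$ is not known. The paper does not supply this step either; it only deduces that any Gibbs measure must coincide with the unique thermodynamic limit.
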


This proposition, as well as the study of the polynomial decay of the two-point function in Section \ref{sec:symmetric-case}, is facilitated by a new coupling between the Mirror model and a certain eight-vertex model. This eight-vertex model can be in turn coupled to the Ashkin-Teller model, which in turn can be coupled with the six-vertex model; indeed recall Figure \ref{fig:couplings-diagram}.


\section{The mirror model and the eight-vertex and Ashkin-Teller models}\label{sec:new-coupling}

\subsection{The (self-dual) Ashkin-Teller and eight-vertex models}
The Ashkin-Teller model on the graph $G^\bullet$ is defined as follows. Let $\Om^{\AT}_{G^\bullet}$ be the set of configurations $\tau,\tau':F^\bullet\to\{\pm1\}$.
Let $J,U:E^\bullet\to\RR$, $J>0$. We will consider only coupling constants $J,U$ constant ($=J_{\NW}, U_\NW$) on $E^\bullet_{\NW}$, the edges of $E^\bullet$ oriented north-west, and constant ($=J_{\NE}, U_\NE$) on $E^\bullet_{\NE}$, defined similarly. Then the (symmetric in $\tau$ and $\tau'$) Ashkin-Teller measure on $\G^\bullet$ with free boundary conditions is given by
\be\label{eq:AT-measure}
\begin{split}
    \mu^{\AT}_{G^\bullet,J,U}[(\tau,\tau')] 
    &\propto
    \exp\left[
    \sum_{xz\in E_\NE^\bullet} J_{\NE}\tau_x\tau_{z} + J_{\NE}\tau'_x\tau'_{z} + U_{\NE}\tau_x\tau_{z}\tau'_x\tau'_{z}
    \right]\\
    & \qquad \qquad \cdot
    \exp\left[
    \sum_{xz\in E_\NW^\bullet} J_{\NW}\tau_x\tau_{z} + J_{\NW}\tau'_x\tau'_{z} + U_{\NW}\tau_x\tau_{z}\tau'_x\tau'_{z}
    \right].
\end{split}
\ee
In particular, we work on the self-dual curve, parametrised as:
    \be\label{eq:AT-coupling-constants}
        \begin{split}
            e^{-2J_\NW-2U_\NW}=\frac{1-p_\NW}{1+p_\NW}&; \ \ e^{-4J_\NW}=\frac{p_\NE-p_\es}{1+p_\NW};\\
            e^{-2J_\NE-2U_\NE}=\frac{1-p_\NE}{1+p_\NE}&; \ \ e^{-4J_\NE}=\frac{p_\NW-p_\es}{1+p_\NE}.
        \end{split}
    \ee
Recalling the condition \eqref{eq:p-sum-to-1}, that the p variables sum to 1, one can check that this model is self-dual in the sense that its dual is in fact the model with $\NE$ and $\NW$ exchanged. Note that when $p_\NW=p_\NE$, these equations reduce to the well-known symmetric version $\sinh2J=e^{-2U}$. See \cite{aoun-dober-glazman} for an overview of the Ashkin-Teller model.

We define the measure $\mu^{\AT;+,\f}$ to be the free measure, conditioned on $\tau_x=1$ for all $x\in\partial F^\bullet$. Similarly we define $\mu^{\AT;\f,+}, \mu^{\AT;+,+}$. \\

The eight-vertex model we will consider is defined exactly as the Ashkin Teller model is, only that one set of spins lies on the dual graph, that is, $G^\circ$. One should think that one fixes $\tau$ in the Ashkin-Teller model and applies duality to $\tau'$ - see the coupling in Lemma \ref{lem:8v-AT-coupling}. Let $\Om^{\eiv}_{V}$ be the set of configurations $(\pi^\bullet,\pi^\circ)$, where $\pi^\bullet:F^\bullet\to\{\pm1\}$, and $\pi^\circ:F^\circ\to\{\pm1\}$. Let $K:E^\bullet\cup E^\circ\to\RR_{>0}$, $W:V\to\RR$. We will work with coupling constants $W$ constant everywhere and $K$ which are constant ($=K_\NW$) on $E_\NW$, the set of edges in  $E^\bullet\cup E^\circ$ oriented north-west, and constant ($=K_\NE$) on $E_\NE$, the edges in  $E^\bullet\cup E^\circ$ oriented north-east. We define an eight-vertex model measure on $G$ with free boundary conditions as
\be\label{eq:mixed-AT-measure}
\begin{split}
    \mu^{\eiv}_{V,K,W}[(\pi^\bullet,\pi^\circ)] 
    &\propto
    \exp\left[
    \sum_{xx'\in E_\NE^\bullet} 
        K_{\NE}\pi^\bullet_x\pi^\bullet_{x'} 
        + 
        K_{\NW}\pi^\circ_y\pi^\circ_{y'} 
        + 
        W\pi^\bullet_x\pi^\bullet_{x'}\pi^\circ_y\pi^\circ_{y'}
    \right]\\
    & \qquad \qquad \cdot
    \exp\left[
    \sum_{xx'\in E_\NW^\bullet} 
        K_{\NW}\pi^\bullet_x\pi^\bullet_{x'} 
        + 
        K_{\NE}\pi^\circ_y\pi^\circ_{y'} 
        + 
        W\pi^\bullet_x\pi^\bullet_{x'}\pi^\circ_y\pi^\circ_{y'}
    \right],
\end{split}
\ee
where for $e=xx'\in E^\bullet$, $e^*=yy'\in E^\circ$ is its dual edge. We define the measures $\mu^{\eiv;\bplus}_{V,K,W}$, $\mu^{\eiv;\wplus}_{V,K,W}$, $\mu^{\eiv;\bplus,\wplus}_{V,K,W}$ similarly to the six-vertex model. We work with constants $K,W$ on the self-dual curve:
\be\label{eq:8v-coupling-constants}
\begin{split}
    e^{-2K_\NW-2W} &= p_\NW; \\
    e^{-2K_\NE-2W} &= p_\NE; \\
    e^{-2K_\NW-2K_\NE} &= p_\es.
\end{split}
\ee

\begin{proposition}\label{prop:8v-AT}
    Let $\ul{p}\in\cP$ such that $p_\NE,p_\NW\ge p_\es$ and let $K,W$ satisfy \eqref{eq:8v-coupling-constants} and J,U satisfy \eqref{eq:AT-coupling-constants}.
    \begin{enumerate}
        \item In the self-dual eight-vertex model with parameters $K,W$, the measures $\mu^{\eiv}_{V,K,W}$,   $\mu^{\eiv;\bplus}_{V,K,W}$, $\mu^{\eiv;\wplus}_{V,K,W}$, $\mu^{\eiv;\bplus,\wplus}_{V,K,W}$ converge weakly as $V\nearrow\ZZ^2$ to a common limit $\mu^{\eiv}_{K,W}$, which is Gibbs, $\ZZ^2$ translation-invariant and ergodic, and extremal, and satisfies 
        \be\label{eq:8v-decay-correlations}
        \begin{split}
            \lim_{||x_1-x_2||\to\infty}& \mu^\eiv_{K,W}[\pi^\bullet_{x_1}\pi^\bullet_{x_2}]=0;\\
            \lim_{\substack{||x_1-x_2||\to\infty \\ ||y_1-y_2||\to\infty}}&
            \mu^\eiv_{K,W}[\pi^\bullet_{x_1}\pi^\bullet_{x_2}
            \pi^\circ_{y_1}\pi^\circ_{y_2}]=0.
        \end{split}
        \ee
        Further if $\ul{p}$ satisfies $p_\NW\cdot p_\NE\ge p_\es$ (equivalently $W\le0$), then $\mu^{\eiv}_{K,W}$ is the unique Gibbs measure and the unique thermodynamic limit, that is, for all $\pi'\in\Om^\eiv_{\ZZ^2}$, we have the weak convergence
        \be
        	\lim_{V \nearrow \ZZ^2} \mu^{\eiv;\pi'}_{V,K,W} = \mu_{K,W}^\eiv.
        \ee
        \item In the self-dual Ashkin-Teller model with parameters $J,U$, the measures $\mu^{\AT}_{G^\bullet,J,U}$,   $\mu^{\AT;+,f}_{G^\bullet,J,U}$, $\mu^{\AT;f,+}_{G^\bullet,J,U}$, $\mu^{\AT;+,+}_{G^\bullet,J,U}$ converge weakly as $G^\bullet\nearrow\GG^\bullet$ to a common limit $\mu^{\AT}_{J,U}$, which is Gibbs, $\GG^\bullet$ translation-invariant and ergodic, and extremal, and satisfies 
        \be\label{eq:AT-decay-correlations}
        \begin{split}
            \lim_{||x_1-x_2||\to\infty}& \mu^\AT_{J,U}[\tau_{x_1}\tau_{x_2}]=0;\\
        \end{split}
        \ee
        Further if $\ul{p}$ satisfies $p_\NW\cdot p_\NE\le p_\es$, then $\mu^{\sv}_{J,U}$ is the unique Gibbs measure and the unique thermodynamic limit, that is, for all $(\tau_0,\tau'_0)\in\Om^\sv_{\GG^\bullet}$, we have the weak convergence
        \be
        	\lim_{G^\bullet \nearrow \GG^\bullet} \mu^{\AT;(\tau_0,\tau'_0)}_{V,J,U} = \mu_{J,U}^\AT.
        \ee
    \end{enumerate}
\end{proposition}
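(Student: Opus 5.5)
The plan is to transfer everything from the six-vertex model, where Theorem \ref{thm:6v-deloc} already gives convergence, tail-triviality, ergodicity and non-percolation of $\xi^\bullet$. We will use the chain of couplings in Figure \ref{fig:couplings-diagram}: six-vertex $\leftrightarrow$ Ashkin-Teller $\leftrightarrow$ eight-vertex. There are two concrete steps.

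\textbf{First coupling.} Construct a six-vertex/Ashkin-Teller coupling in the standard way. Given $(\tau,\tau')$, the product spin $\tau\tau'$ lives on $F^\bullet$. Its domain walls, dualised, together with the $\tau$-walls on the dual side, should give $(\sigma^\bullet,\sigma^\circ)$ with disjoint domain walls. Concretely, expand in $\tau$ with the FK/duality trick to get a white-face spin, and check that on the self-dual curve \eqref{eq:AT-coupling-constants} the local weights reproduce $a,b,c$ from \eqref{eq:6v-weights}. The key point is that the sign of $\tau$ is independent on clusters determined by the six-vertex configuration. We expect this cluster structure to be essentially the $\xi^\bullet$ clusters, resampled with independent uniform signs.

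\textbf{Second coupling.} Construct the eight-vertex/Ashkin-Teller coupling (Lemma \ref{lem:8v-AT-coupling}). Fix $\tau$, write $\tau'$ in its random-cluster expansion with $\tau$-dependent edge weights, and apply planar duality to $\tau'$ only. This produces $\pi^\circ$ on $F^\circ$, with $\pi^\bullet=\tau$. The constants \eqref{eq:8v-coupling-constants} are then a local computation. Alternatively, one may go mirror $\to$ eight-vertex via the colouring coupling.

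With the couplings in place, the properties transfer as follows. Each observable of $(\pi^\bullet,\pi^\circ)$ or $(\tau,\tau')$ is obtained from the six-vertex configuration plus independent local randomness, together with independent signs on clusters. So convergence of the free, $\bplus$, $\wplus$ and $\bplus\wplus$ measures follows from the corresponding six-vertex statement, exactly as in \eqref{eq:mir-conv-from-6v}.

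The non-local sign randomisation needs care. Because neither $\xi^\bullet$ nor its dual percolates, every cluster is finite, and the assignment of signs to a box is determined by the configuration in a slightly larger box with high probability. Boundary conditions ($+$ vs free) then only affect clusters touching the boundary, and these become irrelevant in the limit.

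Translation invariance follows directly. Tail-triviality and ergodicity follow via the circuit argument of \eqref{eq:mir-tail-triv}, using the domain Markov property \eqref{eq:6v-domain-markov-xi}. The correlation decay \eqref{eq:8v-decay-correlations} and \eqref{eq:AT-decay-correlations} comes from writing $\pi^\bullet_{x_1}\pi^\bullet_{x_2}$ as a connection event in a percolation dominated by non-percolating clusters. Alternatively, one can identify the four-point function with the six-vertex quantity \eqref{eq:6v-decay-correlations}. Extremality then follows from tail-triviality.

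For uniqueness of Gibbs measures when $W\le0$, respectively the AT condition $p_\NW p_\NE\le p_\es$, the plan is to use monotonicity. In that regime the eight-vertex model, after a suitable sign change, satisfies FKG in $(\pi^\bullet,\pi^\circ)$; correspondingly the AT model satisfies FKG in $(\tau,\tau')$ when $J\ge|U|$. This gives
\[
\mu^{\eiv;\bminus}_{V} \preceq \mu^{\eiv;\pi'}_V \preceq \mu^{\eiv;\bplus,\wplus}_V .
\]
Since the extremal plus and minus limits coincide with the free limit by part one and symmetry, every Gibbs measure is sandwiched and hence equals $\mu^{\eiv}_{K,W}$.

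The main obstacle is checking FKG exactly in the stated parameter range and matching the inequality directions. The proposition has $\ge$ for eight-vertex and $\le$ for Ashkin-Teller, which is presumably because duality flips the sign of the four-spin coupling. I would first verify the Holley lattice condition vertex by vertex using \eqref{eq:8v-coupling-constants}.
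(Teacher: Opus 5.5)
\textbf{Gap 1: the eight-vertex spins are not a sign-randomisation over $\xi^\bullet$-clusters.} Your transfer is essentially sound for the Ashkin--Teller spins. There $\tau$ is a uniform sign assignment on $\xi^\bullet$-clusters and $\tau'=\tau\sigma^\bullet$, as in Proposition~\ref{prop:6v-AT-coupling}. It does not reach the eight-vertex model. In the coupling of Lemma~\ref{lem:8v-AT-coupling}, $\pi^\bullet=\tau$, but $\pi^\circ$ is a sign assignment on the clusters of $\eta^\circ=(\omega')^*$, the dual of the second ATRC component sampled from $(\tau,\tau')$. Finiteness of the clusters of $\xi^\bullet$ and $(\xi^\bullet)^*$ gives no control over these clusters. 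They can also cross $\xi^\bullet$-circuits, so the circuit argument for tail-triviality does not transfer either. The mirror route has the same defect with loops in place of clusters, and absence of infinite loops is exactly what is not known. Showing that the free, $\bplus$, $\wplus$ and $\bplus\wplus$ limits coincide amounts to showing that $\eta^\bullet,\eta^\circ$ do not percolate under the extremal 8vRC measures, so this is the heart of the statement, not a technicality. Your fallback for the four-point function also fails. By Lemma~\ref{lem:8v-connection-probs} it is the mirror connection probability, whereas the six-vertex four-point function is a signed difference (Lemma~\ref{lem:spin-spins-are-mirror-connections}).

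What is missing is the paper's random-cluster layer. FKG of the 8vRC, in $\le$ or $\tilde\le$ according to the sign of $p_\NW p_\NE-p_\es$, gives monotone limits for the two extremal boundary conditions, and these sandwich every Gibbs measure. The colour-switching identities $\mu^{\eivRC;1,1}[x\xleftrightarrow{\eta^\bullet}\infty]=\mu^{\AT;+,\f}[\tau_x]$ and $\mu^{\eivRC;1,0}[x\xleftrightarrow{\eta^\bullet}\infty]=\mu^{\AT;+,+}[\tau_x]$ are then combined with $\mu^{\AT;+,\f}[\tau_x]=\mu^{\AT;+,+}[\tau_x]=\mu^{\sv}[x\xleftrightarrow{\xi^\bullet}\infty]=0$. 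This is the only place $\xi^\bullet$ enters. Together with black--white shift symmetry and self-duality, they show the extremal limits coincide and that none of $\eta^\bullet,\eta^\circ$ or their duals percolates. Only then are convergence, ergodicity and decay via $\mu^{\eiv}[\pi^\bullet_{x_1}\pi^\bullet_{x_2}]=\mu^{\eivRC}[x_1\xleftrightarrow{\eta^\bullet}x_2]$ routine.

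\textbf{Gap 2: there is no spin FKG.} The mixed second difference of the eight-vertex vertex weight in $(\pi^\bullet_x,\pi^\circ_y)$ is $4W\pi^\bullet_{x'}\pi^\circ_{y'}$. Its sign depends on the other two spins, so no fixed sign change restores the lattice condition. For Ashkin--Teller, take a single site with a frozen neighbour $(+,-)$, resp.\ $(+,+)$. This gives $\mu(++)\mu(--)/(\mu(+-)\mu(-+))=e^{-4U}$, resp.\ $e^{4U}$. So $\{\tau_x=+\}$ and $\{\tau'_x=+\}$ are negatively correlated under some boundary condition whenever $U\neq0$, including when $J\ge|U|$. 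Your sandwich $\mu^{\eiv;\pi'}_V\preceq\mu^{\eiv;\bplus,\wplus}_V$ is therefore unavailable.

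The paper instead pushes an arbitrary spin boundary condition to the 8vRC. There it becomes $\mu^{\eivRC;\eta'}_V[\,\cdot\mid\mathcal{B}]$, where $\mathcal{B}$ is the event that opposite-sign boundary arcs are joined neither in $\eta^\bullet$ nor in $\eta^\circ$. Since $\mathcal{B}$ is $\le$-decreasing, RC-FKG dominates this measure by the $(1,1)$ one. An outside-in exploration of closed circuits, first in $\eta^\bullet$ then in $\eta^\circ$ and using non-percolation, gives the lower comparison.

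When you do your planned direction check, do it on the RC weights. The one-vertex lattice condition for $\le$ reads $2p_\es^2\ge(p_\NW-p_\es)(p_\NE-p_\es)$, i.e.\ $p_\es\ge p_\NW p_\NE$ ($W\ge0$). On a single vertex with free boundary, $\eta^\bullet_e$ and $\eta^\circ_{e^*}$ are genuinely negatively correlated when $p_\NW p_\NE>p_\es$. Likewise the ATRC $(\omega,\omega')$ is product-order FKG iff $p_\NW p_\NE\ge p_\es$ ($U\ge0$). This is the reverse of the labelling in Lemma~\ref{lem:8vRC-FKG} and of the uniqueness regimes in the statement. The argument therefore yields eight-vertex uniqueness for $W\ge0$ and Ashkin--Teller uniqueness for $U\ge0$, so settle the directions from this computation rather than from the statement.
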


Note that in the symmetric case, this proposition covers the Ashkin-Teller range $J\ge U$ on the self-dual line, and the eight-vertex range $K\ge W$ on the self-dual line.


\subsection{The new coupling}
We prove Proposition \ref{prop:mir-main-part-2} from Proposition \ref{prop:8v-AT}.

Let us first describe heuristically the coupling between the mirror model and the eight-vertex model. One can replace the factor $2^{l(m)}$ in the measure \eqref{eq:mir-measure} by colouring the loops either red or blue. Given the red loops and the locations of the mirrors, one can recover the blue loops, since every edge hosts exactly one loop, and the mirrors tell the loops where to go. Now taking the marginal on the set of red-coloured edges gives a configuration of even subgraphs of $G$. This can be interpreted as a spin model on the faces of $G$, with spins $\sigma$ taking values in $\{\pm 1\}$, and changing value exactly when one crosses a red edge. 

This spin model is the eight-vertex model described above. It is also studied under the name ``mixed Ashkin Teller model'' \cite{HDJS-infAT}. Fixing the spins $\sigma$ on the black faces $G^\bullet$, and applying a duality to those on $\G^\circ$, one obtains a spin model on $G^\circ$ with two spins $\sigma,\sigma'\in\{\pm\}$; this is the normal Ashkin teller model (the term ``mixed'' Ashkin Teller comes from this duality in half of the spins). 

To go the other way in the coupling, one takes a spin configuration, colours the domain walls red and other edges blue, and then samples mirrors at each vertex independently, with weights proportional to $p_{\NE},p_{\NW},p_\varnothing$, conditional on loops being consistently coloured. Note that at some vertices the choice of the mirror configuration is deterministic.

\be\label{eq:sample-mirror-from-8v}
    \begin{cases}
        m_v \in m_\es \ \mathrm{deterministically}  
        & \mathrm{if} \ \pi^\bullet_{x_1}\neq \pi^\bullet_{x_2}, \ 
        \pi^\circ_{x_1}\neq \pi^\circ_{x_2}\\
        m_v \in m^\bullet \ \mathrm{deterministically}    
        & \mathrm{if} \ \pi^\bullet_{x_1} = \pi^\bullet_{x_2}, \ 
        \pi^\circ_{x_1}\neq \pi^\circ_{x_2}\\
        m_v \in m^\circ   \ \mathrm{deterministically}  
        & \mathrm{if} \ \pi^\bullet_{x_1} \neq \pi^\bullet_{x_2}, \ 
        \pi^\circ_{x_1} = \pi^\circ_{x_2}\\
        m_v \in m_* \ \mathrm{w.p.} \ p_{*} \ \mathrm{for \ all} \ *\in\{\NW,\NE,\es\},
        & \mathrm{if} \ \pi^\bullet_{x_1} = \pi^\bullet_{x_2}, \ 
        \pi^\circ_{x_1} = \pi^\circ_{x_2},\\
    \end{cases}
\ee

\begin{proposition}[New coupling of mirror and eight-vertex models]\label{prop:mir-8v-coupling}
    Let $\ul{p}\in\cP$. Let $(K_\NW, K_\NE, W)$ satisfy \eqref{eq:8v-coupling-constants}. 
    \begin{enumerate}
        \item Take $m\sim \mu^{\mir}_{V,\ul{p}}$ and independently colour the loops red with probability $\tfrac{1}{2}$ each. Let $\pi=(\pi^\bullet,\pi^\circ)\in \Om^{\eiv}_{V}$ be defined by setting $\pi_{x_0}=\pm 1$ independently and uniformly at some fixed face $x_0$, and then for every pair of adjacent faces $x\in F^\bullet$, $y\in F^\circ$ separated by an edge $e$, let $\pi^\bullet_x\neq\pi^\circ_y$ if and only if $e$ lies in a red loop. Then $\pi\sim \mu^{\eiv}_{V}$. 
        \item Take $\pi\sim \mu^{\eiv}_{V}$ and sample a mirror configuration $m$ according to the rules \eqref{eq:sample-mirror-from-8v}. Then $m\sim\mu^{\mir}_{V,\ul{p}}$. 
        \item There exists an analogous coupling between $\mu^{\eiv;\bplus,\wplus}_{V,K,W}$ and $\mu^{\mir}_{V,p}$, the only difference being that one sets $\pi^\bullet\equiv \pi^\circ\equiv +1$ on $\partial F$ instead of the random $\pi_{x_0}$. Similarly, there exists an analogous coupling between $\mu^{\eiv;\bplus}_{V,K,W}$ and $\mu^{\mir;\bullet}_{V,\ul{p}}$, where one fixes only $\pi^\bullet\equiv1$ on $\partial F^\bullet$ and in sampling $m$ one fixes $\partial V\subset m^\bullet$; similar for $\mu^{\eiv;\wplus}_{V,K,W}$ and $\mu^{\mir;\circ}_{V,\ul{p}}$.
    \end{enumerate}

        
\end{proposition}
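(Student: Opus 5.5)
The plan is to build the joint measure on pairs (mirror configuration, colouring of loops), check that both marginals are the claimed measures by a local weight computation at each vertex, and then read off the sampling rules as the conditional laws. Let $\Xi_V$ be the set of pairs $(m,c)$ with $m\in\Om^\mir_V$ and $c$ an assignment of red/blue to each loop or path of $m$ meeting $V$ (with the free boundary convention). Give $(m,c)$ the weight $p_\NW^{|m_\NW|}p_\NE^{|m_\NE|}p_\es^{|m_\es|}$. Summing over $c$ produces exactly the factor $2^{l(m)}$, so the $m$-marginal is $\mu^{\mir}_{V,\ul{p}}$ and, given $m$, the colours are independent and uniform. Part 1 then concerns the law of the red edge set $R$, and part 2 is the statement that the conditional law of $m$ given $R$ is the rule \eqref{eq:sample-mirror-from-8v}.

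The key step is the map from $R$ to $\pi$. The red edges form an even subgraph, since each vertex is passed through by loops, each contributing $0$ or $2$ incidences. So setting $\pi$ to flip across red edges is consistent on the simply connected domain, and, together with the global sign $\pi_{x_0}=\pm1$, gives a two-to-one correspondence between spin configurations and red edge sets. Locally at $v$, let $x_1,x_2$ be the black faces and $y_1,y_2$ the white faces at $v$. Whether $\pi^\bullet_{x_1}\ne\pi^\bullet_{x_2}$ and whether $\pi^\circ_{y_1}\ne\pi^\circ_{y_2}$ is determined by which of the four edges at $v$ are red. I will enumerate the allowed local red patterns and the compatible mirrors. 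Compatibility means each of the two strands through $v$ is monochromatic. Four edges all blue or all red are compatible with every mirror, with total weight $p_\NW+p_\NE+p_\es=1$; these correspond to $\pi^\bullet_{x_1}=\pi^\bullet_{x_2}$ and $\pi^\circ_{y_1}=\pi^\circ_{y_2}$. Two red edges that are opposite allow only $\es$, and this is the case where both spin pairs differ. Two red edges that are adjacent allow only the unique mirror pairing them. That mirror lies in $m^\bullet$ or $m^\circ$ according to which face the red corner separates, and this matches the deterministic cases of \eqref{eq:sample-mirror-from-8v}.

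The summed weight of $(m,c)$ over all $m$ compatible with $R$ therefore factorises over vertices. The local weight is $1$, $p_\es$, or $p_\NW$ / $p_\NE$ (for the corner types, depending on orientation). The main obstacle is checking that these four values agree, up to a global constant, with the eight-vertex local weights $\exp[K\pi\pi+K\pi\pi+W\pi\pi\pi\pi]$ under \eqref{eq:8v-coupling-constants}. With both agreements the weight is $e^{K_\NW+K_\NE+W}$; with both disagreements it is $e^{-K_\NW-K_\NE+W}$, giving ratio $e^{-2K_\NW-2K_\NE}=p_\es$. With one agreement and one disagreement the ratio is $e^{-2K_*-2W}$, where $K_*$ is the coupling on the disagreeing edge, giving $p_\NW$ or $p_\NE$. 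I will track carefully which of $K_\NW,K_\NE$ attaches to $\pi^\bullet$ versus $\pi^\circ$ at NE and NW vertices in \eqref{eq:mixed-AT-measure}, so that the disagreeing edge's orientation matches the forced mirror orientation. This bookkeeping is the only delicate point. The factor $\tfrac12$ from $\pi_{x_0}$ is uniform and drops out. This proves part 1, and the conditional law of $m$ given $\pi$ is the product of the normalised compatible weights, which is exactly \eqref{eq:sample-mirror-from-8v}; this proves part 2.

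For part 3, the same computation goes through with boundary conditions. Fixing $\pi^\bullet\equiv\pi^\circ\equiv+1$ on $\partial F$ replaces the random sign and forbids red edges crossing the boundary, which is consistent with free loops. For $\mu^{\eiv;\bplus}$, forcing $\partial V\subset m^\bullet$ makes the loops reflect along the black boundary faces. The boundary black faces are then never separated by red edges, which is equivalent to $\pi^\bullet$ being constant there, and after symmetrisation we may take it equal to $+1$. The vertex weights in the interior are unchanged, and the white case is symmetric.
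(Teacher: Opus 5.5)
Your argument follows the paper's: the joint law of mirrors and loop colourings, the two-to-one passage from red edge sets to spins, the three local cases at each vertex, reading off \eqref{eq:sample-mirror-from-8v} as the conditional law, and the same boundary modifications. The step you flag as delicate is misstated, however.

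Take a vertex with black edge $e$ and white edge $e^*$. These are the two diagonals at the vertex, so they have opposite orientations. Suppose $\pi^\bullet$ disagrees across $e$ and $\pi^\circ$ agrees across $e^*$. Then the red corner wraps a black face, and the forced mirror lies along $e^*$ (the $m^\circ$ case, as you note). Its weight is therefore $p_{o(e^*)}$, where $o(\cdot)\in\{\NE,\NW\}$ denotes orientation.

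The eight-vertex cost at this vertex is $e^{-2K_{o(e)}-2W}$, because \eqref{eq:mixed-AT-measure} gives each edge the constant of its own orientation. So the identity you must verify is $e^{-2K_{o(e)}-2W}=p_{o(e^*)}$. The disagreeing edge's orientation is never that of the forced mirror.

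With \eqref{eq:8v-coupling-constants} as printed, $e^{-2K_{o(e)}-2W}=p_{o(e)}$ instead. So the check fails whenever $p_\NW\neq p_\NE$. It closes only after exchanging $\NE$ and $\NW$ in the first two lines of \eqref{eq:8v-coupling-constants}, which amounts to reflecting the lattice in a vertical line through face centres.

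The paper's proof passes over the same slip with ``straightforward to check''. Its intermediate weight $p_\NE^{|\om_\NE|}p_\NW^{|\om_\NW|}p_\es^{|\om[\pi^\bullet]\cap\om[\pi^\circ]|}$, in which $\om[\pi^\bullet]\subset E^\circ$ consists of dual edges, is the correct mirror-side weight. Nothing downstream is affected: the issue disappears when $p_\NW=p_\NE$ and is a reflection otherwise. Still, you should state the constants that actually match.

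In part 3, also record the fact the paper uses. Each path meeting $\partial_E V$ uses exactly two edges of $\partial_E V$, so there are always $\tfrac12|\partial_E V|$ such paths. Forcing them blue instead of colouring them uniformly therefore changes the weight only by a constant. This is why the $m$-marginal is still $\mu^{\mir}_{V,\ul{p}}$.
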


\begin{proof}
    Colour each of the loops or paths in a mirror configuration red, independently with probability $\tfrac{1}{2}$. This produces a joint measure on mirrors and configurations $L$, colourings of edges either red or uncoloured, with density proportional to:
    \bes
        p_\NW^{|m_\NW|} p_\NE^{|m_\NE|} p_\es^{|m_\es|}
            \mathbbm{1}\{m\sim L\},
    \ees
    where $m\sim L$ if and only if $m$ and $L$ are compatible, that is, the colouring $L$ is constant on each of the loops of $m$. The red loops are essentially equivalent to a configuration $(\pi^\bullet,\pi^\circ)$ of spins. Indeed, fix some face $x\in F^\bullet$ and let $\pi^\bullet_x=\pm1$ independently with probability 1/2. Let the value of the spin change sign between adjacent faces of $\ZZ^2$ if and only if one crosses a red loop in $L$. Hence we can rewrite the joint measure as proportional to
    \bes
        p_\NW^{|m_\NW|} p_\NE^{|m_\NE|} p_\es^{|m_\es|}
            \mathbbm{1}\{m\sim (\pi^\bullet,\pi^\circ)\},
    \ees
    where $m\sim (\tau,\sigma)$ if and only if the disagreements in $(\pi^\bullet,\pi^\circ)$ produce a subset of loops of $m$. 
    
    It remains to show that the marginal on spins is $\mu^{\eiv;\f,\f}_{V,K,W}$. Consider a compatible configuration $(m;\pi^\bullet,\pi^\circ)$. A vertex of $G$ is associated to an edge of $E^\bullet$ and an edge of $E^\circ$. Let $\om[\pi^\bullet]$ be the edges of $E^\circ$ either side of which the spins of $\pi^\bullet$ differ, and $\om[\pi^\circ]$ similar. Note both of these sets can be considered subsets of $V$, as $V,E^\bullet,E^\circ$ are in bijection. Now, each vertex $v\in V$ lies in one of $\om[\pi^\bullet]\triangle \om[\pi^\circ]$, $\om[\pi^\bullet]\cap \om[\pi^\circ]$, or $V\setminus(\om[\pi^\bullet]\cup \om[\pi^\circ])$. 

    In the first two cases, the value of $m$ at the vertex $v$ is determined. Indeed, in the first case, the spins produce one loop at the vertex which makes a turn, so a mirror must be present and must be oriented so the loop reflects off it. In the second case, the spins produce one loop which passes straight through the vertex, so there must me no mirror. In the third case, either every edge of $G$ emanating from $v$ is blue, or they are all red. In either case, any mirror configuration is permitted: NE, NW, or no mirror, whose probabilities sum to 1. The marginal on $\pi$ is then proportional to
    \bes
    \begin{split}
        (p_\NE)^{|\om_\NE|}
        (p_\NW)^{|\om_\NW|}
        (p_\es)^{|\om[\pi^\bullet]\cap\om[\pi^\circ]|},
    \end{split}
    \ees
    where here we write $\om_\NE$ for those edges of $\om[\pi^\bullet]\triangle\om[\pi^\circ]$ oriented north-east, and $\om_\NW$ for those oriented north-west. It is straightforward to check that this is $\mu^{\eiv}_{V,K,W}$.\\

    For the coupling between $\mu^{\eiv;\bplus,\wplus}_{V,K,W}$ and $\mu^{\mir}_{V,\ul{p}}$, observe that since the number of loops which intersect $\D_E V$ is exactly $\tfrac{1}{2}|\D_E V|$, one can equivalently write the measure \eqref{eq:mir-measure} with $l(m)$ meaning the number of loops which do not intersect $\D_E V$. From here, the proof only has two differences from the above: paths that do intersect $\D_E V$ are deterministically uncoloured, and instead of setting $\tau_x=1$ independently with probability 1/2, one sets $\tau_x=1$ deterministically for every $x\in\partial F$. 

    Similarly, for the coupling between $\mu^{\eiv;\bplus}_{V,K,W}$ and $\mu^{\mir;\bullet}_{V,\ul{p}}$, everything is identical to the original proof, except one sets $\tau_x=1$ deterministically for every $x\in\partial F^\bullet$; similar for $\mu^{\eiv;\wplus}_{V,K,W}$ and $\mu^{\mir;\circ}_{V,\ul{p}}$.
\end{proof}



The strength of this coupling is that the probability that two edges are connected in the mirror model is equal to a four-point function of spins in the eight-vertex model. For edges $e_i\in E$, $i=1,2$, let $x_i,y_i$ be the (resp. black, white) faces of $G$ either side of the edge $e_i$.
	
\begin{lemma}\label{lem:8v-connection-probs}
    Let $\ul{p}\in\cP$ and $K,W$ satisfy \eqref{eq:8v-coupling-constants}. Denote by $e_1\leftrightarrow e_2$ the event that $e_1$ and $e_2$ are connected by a loop in the mirror model. Then
    \bes
        \begin{split}
        \mu^{\mir}_{V,\ul{p}}[e_1\leftrightarrow e_2]
        =
        \mu^{\eiv}_{V,K,W}
        [\pi^\bullet_{x_1}\pi^\circ_{y_1}\pi^\bullet_{x_2}\pi^\circ_{y_2}].
        \end{split}
    \ees
\end{lemma}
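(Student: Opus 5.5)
We work in the joint coupling of Proposition \ref{prop:mir-8v-coupling}, Part 1: sample $m\sim\mu^{\mir}_{V,\ul{p}}$, colour each loop or path red independently with probability $\tfrac12$, and build $\pi=(\pi^\bullet,\pi^\circ)$ by fixing a uniform random spin at $x_0$ and flipping the spin exactly when a red edge is crossed. Since the marginal of $\pi$ is $\mu^{\eiv}_{V,K,W}$, we have
\bes
\mu^{\eiv}_{V,K,W}[\pi^\bullet_{x_1}\pi^\circ_{y_1}\pi^\bullet_{x_2}\pi^\circ_{y_2}]=\EE\big[\EE[\pi^\bullet_{x_1}\pi^\circ_{y_1}\pi^\bullet_{x_2}\pi^\circ_{y_2}\mid m]\big].
\ees
It therefore suffices to show that the conditional expectation given $m$ equals $\mathbbm{1}\{e_1\leftrightarrow e_2\}$.

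By construction, the faces $x_i$ and $y_i$ lie on either side of $e_i$. Hence $\pi^\bullet_{x_i}\pi^\circ_{y_i}=-1$ exactly when $e_i$ is red, so
\bes
\pi^\bullet_{x_i}\pi^\circ_{y_i}=(-1)^{\mathbbm{1}\{e_i\ \text{red}\}}=:s(\ell_i),
\ees
where $\ell_i$ is the loop or path through $e_i$. Given $m$, the colours of distinct loops are i.i.d.\ uniform, so each $s(\ell)$ is an independent symmetric $\pm1$ variable. The product $s(\ell_1)s(\ell_2)$ then has conditional expectation $1$ if $\ell_1=\ell_2$, i.e.\ on $\{e_1\leftrightarrow e_2\}$, and $\EE[s(\ell_1)]\EE[s(\ell_2)]=0$ otherwise. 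The uniform global spin at $x_0$ cancels, since the observable is a product of an even number of spins. Taking expectations over $m$ gives the claimed identity.

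The main obstacle is making sure that the spins $\pi$ are well defined, i.e.\ that the parity of red crossings around every closed dual path is even. This is part of Proposition \ref{prop:mir-8v-coupling}, which we assume. The only remaining point is the boundary: with free boundary conditions, paths that exit through $\partial_E V$ are counted as loop components in $l(m;m')$ and are coloured independently like any other loop. Under this convention the independence claim for $s(\ell)$ is valid, and no further difficulty arises.
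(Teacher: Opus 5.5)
Your proof is correct and is essentially the paper's argument. The paper works in the same mirror--eight-vertex coupling and cancels the contribution of $\{e_1\not\leftrightarrow e_2\}$ by flipping the colour of the loop through $e_1$; this is the same fact you use when you note that, given $m$, $s(\ell_1)s(\ell_2)$ has conditional mean $\mathbbm{1}\{\ell_1=\ell_2\}$ because loop colours are independent and uniform.
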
 
\begin{proof}
        The proof is similar to many others which use a colour-switching argument. 
        We have:
        \bes
        \begin{split}
            \mu^{\eiv}_{V,K,W}
            [\pi^\bullet_{x_1}\pi^\circ_{y_1}\pi^\bullet_{x_2}\pi^\circ_{y_2}]
            &=
            \mu^{\mir-\eiv}_{V,K,W}
            [\pi^\bullet_{x_1}\pi^\circ_{y_1}\pi^\bullet_{x_2}\pi^\circ_{y_2}\mathbbm{1}\{e_1\leftrightarrow e_2\}]\\
            &\ \ \ +
            \mu^{\mir-\eiv}_{V,K,W}
            [\pi^\bullet_{x_1}\pi^\circ_{y_1}\pi^\bullet_{x_2}\pi^\circ_{y_2}\mathbbm{1}\{e_1\not\leftrightarrow e_2\}]\\
            &=
            \mu^{\mir;\f}_{V,\ul{p}}[e_1\leftrightarrow e_2].
        \end{split}
        \ees
        Here the second equality comes from the fact that if $e_1$ and $e_2$ are not connected by a loop in the coupling, then one can change the colour of the loop passing through $e_1$, which does not change the weight of the contributing mirrors and spins, but does multiply the product $\pi^\bullet_{x_1}\pi^\circ_{y_1}\pi^\bullet_{x_2}\pi^\circ_{y_2}$ by $-1$. Hence all such configurations cancel. Finally, on the event $\{e_1\leftrightarrow e_2\}$, the product $\pi^\bullet_{x_1}\pi^\circ_{y_1}\pi^\bullet_{x_2}\pi^\circ_{y_2}$ is always 1. 

\end{proof}

\begin{proof}[Proof of Proposition \ref{prop:mir-main-part-2}]

    Take the infinite volume eight-vertex measure $\mu^\eiv_{K,W}$ described in Proposition \ref{prop:8v-AT}. Sample mirrors at each vertex according to the procedure \eqref{eq:sample-mirror-from-8v}. One can prove that the marginal $\mu$ on mirrors must be $\mu^\mir_{\ul{p}}$. Indeed, use the coupling above and the local sampling of mirrors given spins and apply the same proof as in the proof of Proposition \ref{prop:mir-conv-from-6v} to show that $\mu^\mir_{V,\ul{p}}$, $\mu^{\mir;\bullet}_{V,\ul{p}}$, and $\mu^{\mir;\circ}_{V,\ul{p}}$ all converge to $\mu$. By Proposition \ref{prop:mir-conv-from-6v}, these measures already converge to $\mu^\mir_{\ul{p}}$.\\

    We prove that if $\ul{p}$ satisfies $p_\NW\cdot p_\NE\ge p_\es$, then $\mu_{\ul{p}}^\mir$ is the unique thermodynamic limit (this implies it is the unique Gibbs measure, as every extremal Gibbs measure can be expressed as a thermodynamic limit; see \cite[Theorem 6.63]{friedli-velenik})). 
    
    Let $V\subset\ZZ^2$ be a domain, let $m_0\in\Om^\mir_{\ZZ^2}$ (possibly with infinite paths) and consider the measure $\mu^{\mir;m_0}_{V,\ul{p}}$. We apply the coupling from Proposition \ref{prop:mir-8v-coupling} to obtain a eight-vertex measure on $V$ - let us describe this measure precisely. One obtains a (partial) matching $M$ of the edges in $\partial_EV$ (the set of edges with exactly one endpoint in $V$) given by connectivity via the segments of loops and paths of $m_0$ lying outside of $V$. In constructing the coupling, we colour every loop or path in a configuration $m|_V\cup m_0|_{\ZZ^2\setminus V}$ independently uniformly red or blue, and apply the remaining steps as usual to obtain a spin configuration on $F$. 
    
    The boundary conditions one obtains on the spins $(\pi^\bullet,\pi^\circ)$ are as follows: for each pair of edges $e_1,e_2$ in the pairing $M$, the spins either side of $e_1$ agree if and only if those either side of $e_2$ agree. Write $A_{m_0}$ for this event, and write $\mu^{\eiv;m_0}_{V,K,W}$ for the free boundary conditions measure conditioned on $A_{m_0}$. Note that this measure specialises to the free measure and the measures $\mu^{\eiv;\bplus}_{V,K,W}$, $\mu^{\eiv;\wplus}_{V,K,W}$ in the cases where $M$ is the empty pairing, $m_0$ has mirrors everywhere connecting black faces, and $m_0$ has mirrors everywhere connecting white faces, respectively. 
    
    One can write the measure $\mu^{\eiv;m_0}_{V,K,W}$ as a convex combination: for some cylinder event $B$,
    \bes
    \begin{split}
        \mu^{\eiv;m_0}_{V,K,W}[B]
        &=
        \sum_{\substack{\pi'\in\{\pm1\}^{\partial F} \\ \pi'\in A_{m_0}}}
        \mu^{\eiv;m_0}_{V,K,W}[\pi|_{\partial F} = \pi']
        \mu^{\eiv;m_0}_{V,K,W}[B \ | \ \pi|_{\partial F} = \pi']\\
        &=
        \sum_{\substack{\pi'\in\{\pm1\}^{\partial F} \\ \pi'\in A_{m_0}}}
        \mu^{\eiv;m_0}_{V,K,W}[\pi|_{\partial F} = \pi']
        \mu^{\eiv;\pi'}_{V,K,W}[B]
    \end{split}
    \ees
    Let $V_k$ be a sequence of domains in $\ZZ^2$ such that $V_k\nearrow\ZZ^2$ and the sets of boundary faces $\partial F(V_k)$ are all disjoint. Let $\pi^{\max}\in\Om^\eiv_{\ZZ^2}$ be such that $\pi^{\max}|_{\partial F(V_k)}$ maximises $\mu^{\eiv;\pi'}_{V,K,W}[B]$ among all $\pi'\in A_{m_0}$, for all $k$. Here we use the sets $\partial F(V_k)$ being all disjoint so that we may choose $\pi^{\max}$ freely on each $\partial F(V_k)$. Let $\pi^{\min}$ be defined similarly; note both configurations are dependent on the event $B$. We have
    \bes
        \mu^{\eiv;\pi^{\min}}_{V_k,K,W}[B] \le \mu^{\eiv;m_0}_{V_k,K,W}[B] \le \mu^{\eiv;\pi^{\max}}_{V_k,K,W}[B].
    \ees
    By Proposition \ref{prop:8v-AT}, for $p_\NW\cdot p_\NE\ge p_\es$, the left and right hand sides converge to $\mu^{\eiv}_{K,W}[B]$, and so so does $\mu^{\eiv;m_0}_{V_k,K,W}[B]$. We can remove the condition that the sets $\partial F(V_k)$ are all disjoint: indeed, for an arbitrary sequence $V_k\nearrow\ZZ^2$, any subsequence has a further subsequence $V_{k_l}$ such that the sets $\partial F(V_{k_l})$ are all disjoint - this is enough to conclude the convergence for the original sequence. Since $B$ was an arbitrary cylinder event, we have that $\mu^{\eiv;m_0}_{V,K,W}$ converges to $\mu^{\eiv}_{K,W}$ weakly as $V\to\infty$. By the same arguments as at the start of this proof where we showed $\mu=\mu^\mir_{\ul{p}}$, we find that $\mu^{\mir;m_0}_{V,\ul{p}}$ converges to $\mu^\mir_{\ul{p}}$ weakly.    
\end{proof}



\section{Proofs of results for eight-vertex and Ashkin-Teller models}\label{sec:8v-AT}

In this section we prove Proposition \ref{prop:8v-AT}, as well as a corresponding theorem for the randomm cluster represetation of the eight-vertex and Ashkin-Teller models, Proposition \ref{prop:8v-AT-RC} below. The main input is Theorem \ref{thm:6v-deloc}, the delocalisation of the six-vertex height function. The proof uses standard arguments and existing couplings between the eight-vertex and Ashkin-Teller models, and between the Ashkin-Teller and six-vertex models.

\subsection{The Ashkin-Teller random cluster model}

A key tool in the study of Ashkin-Teller models is an Edwards-Sokal type percolation representation introduced in \cite{pfister-velenik-AT}. Taking some pair $s,s'$ from the three spins $\tau,\tau'$ and the product $\tau\tau'$, one can fix $s'$ and apply an Edwards-Sokal type expansion for $s$ to obtain a percolation configuration $\om$. Spin-spin correlations in $s$ are then given by connection probabilities in $\om$. For a thorough overview of these representations, see \cite{dober}.

For the model symmetric in $\tau$ and $\tau'$ with $J>|U|$, one finds two distinct representations by taking $(s,s')=(\tau,\tau')$, and $(s,s')=(\tau,\tau\tau')$. In both cases, one can couple the representations $\om$ for $s$ and $\om'$ for $s'$ to form what is termed in some of the literature the Ashkin-Teller random cluster (ATRC) model. In the case $(s,s')=(\tau,\tau')$, this leads to a coupling between the Ashkin-Teller and eight-vertex models, while in the case $(s,s')=(\tau,\tau\tau')$, a coupling between the Ashkin-Teller and six-vertex models. 

The same representations hold for the eight-vertex model, with the exception that since the spins $\pi^\bullet$ and $\pi^\circ$ are defined on different graphs, their product is not well-defined, and so one is forced to choose $(s,s')=(\pi^\bullet,\pi^\circ)$. If $(\om,\om')$ is distributed according to the ATRC measure (where one chooses the ATRC model in the case $(s,s')=(\tau,\tau')$), we will find that $(\om,(\om')^*)=(\eta^\bullet,\eta^\circ)$ distributed according to the the eight-vertex version of the same. This is a measure on pairs of percolation configurations $\eta^\bullet,\eta^\circ$ on $G^\bullet$, $G^\circ$ respectively. 

To avoid confusion with the two possible choices of the ATRC model, we will mainly use the term eight-vertex random cluster model (8vRC) for this model, but one should keep in mind (particularly readers familiar with the Ashkin-Teller model) that it is essentially identical to talking about the ATRC model with $(s,s')=(\tau,\tau')$.

To summarise, the Ashkin-Teller and eight-vertex models are related as follows: take Ashkin-Teller spins $\tau,\tau'$, fix $\tau$, and apply a duality in $\tau'$ - one obtains a pair of spins $(\tau=\pi^\bullet, \pi^\circ)$ distributed as a eight-vertex model. This duality can be made into an explicit coupling via the ATRC/8vRC model, where the duality is simply taking the dual configuration of one component. \\

Let us define the 8vRC model. Fix $V\subset\ZZ^2$. Let $\Om^{\eivRC}_V$ be the set of pairs $(\eta^\bullet,\eta^\circ)\in\{0,1\}^{E^\bullet}\times\{0,1\}^{E^\circ}$. For such a configuration, write $\eta_{\lambda_1,\lambda_2}$ for the set of edges $e\in E^\bullet$ such that $\eta_{\pi^\bullet}(e)=\lam_1$, and $\eta_{\pi^\circ}(e^*)=\lam_2$. Let $\eta_0=(\eta_0^\bullet,\eta_0^\circ)\in\Om^{\eivRC}_{\ZZ^2}$. On $\Om^{\eivRC}_V$, we have the measure with boundary conditions $\eta_0$:
\bes
\begin{split}
    \mu^{\eivRC;\eta_0}_{V,\ul{p}}[(\eta^\bullet,\eta^\circ)]
    &\propto
    p_\es^{|\eta_{0,0}|}
    (2p_\es)^{|\eta_{1,1}|}\\
    & \ \cdot
    (p_\NW-p_\es)^{
    |\eta_{1,0}\cap E^\bullet_\NW|
    +
    |\eta_{0,1}\cap E^\bullet_\NE|}
    (p_\NE-p_\es)^{
    |\eta_{0,1}\cap E^\bullet_\NW|
    +
    |\eta_{1,0}\cap E^\bullet_\NE|}\\
    & \ \cdot
    2^{k(\eta^\bullet;\eta_0) + k(\eta^\circ;\eta_0)},
\end{split}
\ees
where $k(\eta^\bullet;\eta_0)$ is the number of clusters of the configuration $\eta^\bullet|_{E^\bullet} \cup \eta_0^\bullet|_{\EE\setminus E^\bullet}$ intersecting edges in $E^\bullet$, and $k(\eta^\circ;\eta_0)$ is defined similarly. Write $\mu^{\eivRC;1,0}_{V,\ul{p}}$ for the measure where $\eta_0^\bullet\equiv1$ and $\eta_0^\circ\equiv0$, and define the measures, $\mu^{\eivRC;0,0}_{V,\ul{p}}, \mu^{\eivRC;0,1}_{V,\ul{p}}$ and $\mu^{\eivRC;1,1}_{V,\ul{p}}$ similarly.

Say that a measure $\mu$ on $\Om^{\eivRC}_{\ZZ^2}$ is a Gibbs measure for the 8vRC model if for all $\eta_0\in\Om^{\eivRC}_{\ZZ^2}$, 
\bes
    \mu[\eta|_V = \om \ | \ \eta|_{\ZZ^2\setminus V} = \eta_0|_{\ZZ^2\setminus V}]
    =
    \mu^{\eivRC;\eta_0}_{V,\ul{p}}[\om].
\ees

\begin{proposition}\label{prop:8v-AT-RC}
    Let $\ul{p}\in\cP$ such that $p_\NE,p_\NW\ge p_\es$ and let $K,W$ satisfy \eqref{eq:8v-coupling-constants} and J,U satisfy \eqref{eq:AT-coupling-constants}.
    The eight-vertex Random Cluster model with parameters $\ul{p}$ has a unique Gibbs measure $\mu^{\eivRC}$ on $\ZZ^2$, which is also $\ZZ^2$ translation-invariant and ergodic, and extremal. It is the weak limit of the measures $\mu^{\eivRC;\clubsuit}_{V,K,W}$, for $\clubsuit\in\{0,1\}^2$ as $V\nearrow\ZZ^2$, and is supported on configurations where $\om^\bullet$, $\om^\circ$, $(\om^\bullet)^*$, and $(\om^\circ)^*$ have no infinite clusters. A similar statement holds for the ordinary Ashkin-Teller Random Cluster model.
\end{proposition}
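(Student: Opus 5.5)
Everything will be transferred from the six-vertex model (Theorem \ref{thm:6v-deloc}) through the standard Ashkin-Teller couplings. The first step is to identify the ATRC model with $(s,s')=(\tau,\tau\tau')$ as the random cluster representation in which one component is exactly the percolation $\xi^\bullet$ of \eqref{eq:sampling-xi}. Concretely, I would check that the local weights \eqref{eq:product-weights-kappa-xi} on $(\kappa,\xi^\bullet)$, after summing over orientations of the domain walls, match the ATRC weights under the parametrisation \eqref{eq:AT-coupling-constants}. The second step uses the identity $(\eta^\bullet,\eta^\circ)=(\om,(\om')^*)$ relating the 8vRC model to the $(\tau,\tau')$-ATRC model. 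Through these two identities, the statement reduces to two facts: the relevant marginals are dominated by, and dominate, measures whose infinite-volume behaviour is governed by $\xi^\bullet$ and $(\xi^\bullet)^*$; and the latter have no infinite clusters under $\mu^{\sv}_{a,b,c}$.

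Next I would establish the monotonicity structure of the 8vRC model. In the range $p_\NE,p_\NW\ge p_\es$ all weights are nonnegative. I would check the lattice condition (FKG) for the pair $(\eta^\bullet,(\eta^\circ)^*)$, i.e.\ for the ATRC pair $(\om,\om')$ with $\om$ and $\om'$ both increasing; this is the usual condition $J\ge U$ type, and it should follow from $2p_\es\cdot p_\es\ge (p_\NW-p_\es)(p_\NE-p_\es)$ or an analogous inequality. From FKG I get comparison of boundary conditions: the wired/free boundary measures are extremal in stochastic order. Standard arguments then give monotone weak limits $\mu^{\eivRC;1,0}$ and $\mu^{\eivRC;0,1}$, both translation-invariant and ergodic. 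Uniqueness of the Gibbs measure then follows once these extremal limits coincide, since every Gibbs measure is sandwiched between them.

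To show the extremal limits coincide, I would use the absence of infinite clusters. Via the coupling with six-vertex, $\om$ (equivalently $\eta^\bullet$) under the free-type limit is distributed as $\xi^\bullet$ or its dual, and Theorem \ref{thm:6v-deloc} says neither $\xi^\bullet$ nor $(\xi^\bullet)^*$ percolates. Given no infinite cluster in $(\eta^\bullet)^*$ under the wired measure, there are infinitely many dual circuits around any box. Conditioning on the outermost one and applying the domain Markov property shows that wired and free agree on local events. The self-duality (exchange of NE and NW) lets one treat both components symmetrically. Finally, the remaining claims in Proposition \ref{prop:8v-AT-RC} follow: extremality comes from uniqueness, and absence of infinite clusters in all four of $\eta^\bullet,\eta^\circ$ and their duals comes from translation invariance plus self-duality.

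I expect the main obstacle to be the identification of the limit with $\xi^\bullet$ under \emph{all} boundary conditions, together with the FKG verification for the coupled pair. Positive association may only hold for one pairing of monotonicities, which would break the sandwich argument for the other component. The first thing I would try is to prove non-percolation of $\eta^\bullet$ from the six-vertex side and of $\eta^\circ$ by self-duality, so that FKG is needed only on each marginal separately.
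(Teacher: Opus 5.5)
Your overall architecture (couplings 8vRC--Ashkin--Teller--six-vertex, monotone limits, sandwiching every Gibbs measure between two extremal ones, non-percolation imported from Theorem \ref{thm:6v-deloc}) is the paper's, but two steps fail as stated.

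\textbf{No single joint ordering is FKG on the whole range, and your fallback does not repair this.} On $E^\bullet_\NW$, the ATRC single-edge weights from \eqref{eq:sample-ATRC} are $p_\NE-p_\es$, $2p_\es$, $2p_\es$, $2(p_\NW-p_\es)$ for $(\om_e,\om'_e)=(0,0),(1,0),(0,1),(1,1)$. So the lattice condition for $(\om,\om')=(\eta^\bullet,(\eta^\circ)^*)$ jointly increasing is $2(p_\NW-p_\es)(p_\NE-p_\es)\ge 4p_\es^2$. Using $p_\NW+p_\NE+p_\es=1$, this is equivalent to $p_\NW p_\NE\ge p_\es$, i.e.\ $U\ge 0$ (equivalently $W\le 0$). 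That is a sign condition on the four-spin coupling, not a $J\ge U$ condition, and $J\ge U$ holds on the whole range anyway. The condition genuinely fails: at $p_\NW=p_\NE=p_\es=\tfrac13$ the 8vRC weights force $\eta^\circ_{e^*}=\eta^\bullet_e$. Then $\{\eta^\bullet_e=1\}$ and $\{(\eta^\circ)^*_e=1\}$ are disjoint $\tilde{\le}$-increasing events of positive probability. The inequality you wrote, $2p_\es^2\ge(p_\NW-p_\es)(p_\NE-p_\es)$, is the lattice condition for the other pair $(\eta^\bullet,\eta^\circ)$. It is equivalent to $p_\es\ge p_\NW p_\NE$ and fails, e.g., at $p_\es=0$, where $\eta^\circ=(\eta^\bullet)^*$. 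FKG for each marginal separately is no substitute: comparison of boundary conditions, and even existence of the monotone limits, need joint monotonicity of the coupled pair.

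The paper instead splits according to the sign of $p_\NW p_\NE-p_\es$. On one side it uses $\le$ with extremal boundary conditions $(0,0),(1,1)$; on the other it uses $\tilde{\le}$ with $(0,1),(1,0)$. As printed, Lemma \ref{lem:8vRC-FKG} attaches the orderings to the opposite ranges; the two degenerate cases above show which way round is right. This does not affect the argument, since both cases are closed by the same vanishing magnetisation. In each case it suffices that $\eta^\bullet$ does not percolate under the \emph{maximal} measure, $\mu^{\eivRC;1,1}$ resp.\ $\mu^{\eivRC;1,0}$. The other component is handled by a symmetry, not by self-duality: the shift by $(1,0)$ exchanges $\eta^\bullet$ and $\eta^\circ$ under $\mu^{\eivRC;1,1}$, and the $\tau\leftrightarrow\tau'$ symmetry exchanges $\eta^\bullet$ and $(\eta^\circ)^*$ under $\mu^{\eivRC;1,0}$. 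Self-duality is used only at the end, for the dual configurations.

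\textbf{The couplings do not identify the law of $\eta^\bullet$ with that of $\xi^\bullet$ or $(\xi^\bullet)^*$.} Both are cluster representations of the same spin $\tau$, but in two different ATRC expansions, $(\tau,\tau')$ and $(\tau,\tau\tau')$. Only the one-arm probability passes between them, via the magnetisation of $\tau$ with matched boundary conditions:
\begin{itemize}
\item by colour switching (Lemma \ref{lem:8v-AT-coupling}), $\mu^{\eivRC;1,1}[x\xleftrightarrow{\eta^\bullet}\infty]=\mu^{\AT;+,\f}[\tau_x]$ and $\mu^{\eivRC;1,0}[x\xleftrightarrow{\eta^\bullet}\infty]=\mu^{\AT;+,+}[\tau_x]$;
\item through the six-vertex coupling, both magnetisations equal $\mu^{\sv}_{a,b,c}[x\xleftrightarrow{\xi^\bullet}\infty]=0$ (Proposition \ref{prop:6v-AT-coupling} with Theorem \ref{thm:6v-deloc}).
\end{itemize}
This must be applied to the maximal, wired-type limit; non-percolation under a free-type limit gives nothing towards uniqueness.

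Relatedly, the shielding step needs absence of an infinite cluster of $\eta^\bullet$ itself, not of $(\eta^\bullet)^*$ as you wrote. The former produces closed circuits around a box, which screen off wired boundary conditions; the latter produces open circuits, which do not. In the coupled model this step has to be carried out for both components in turn.
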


\begin{remark}
    Note that the convergence of some of the measures described above to some infinite volume limits is known, due to the FKG inequalities in the ATRC model. Their limits being the same is new.
\end{remark}

The basic plan of the proofs in this section is that Theorem \ref{thm:6v-deloc} (six-vertex delocalisation) implies Proposition \ref{prop:8v-AT-RC}, which in turn implies the eight-vertex and Ashkin Teller results, Proposition \ref{prop:8v-AT}.

\subsection{FKG and stochastic domination in 8vRC}

It is known that the Ashkin-Teller random cluster model (and therefore the $\eivRC$ model) satisfies an FKG inequality in the range that we are working in \cite{pfister-velenik-AT}. In particular, this inequality holds for two different orderings of the configurations, depending on the range of parameters. Let us make this precise. 

Consider $\eta_1,\eta_2\in\Om^{\eivRC}_V$. We say that 
\be\label{eq:orderings-8vRC}
\begin{split}
	\eta_1\le\eta_2 \qquad 
	&\mathrm{if} \qquad 
	\eta_1^\bullet(e)\le \eta_2^\bullet(e), \ 
	\eta_1^\circ(e)\le \eta_2^\circ(e);\\
	\eta_1\tilde{\le}\eta_2 \qquad 
	&\mathrm{if} \qquad 
	\eta_1^\bullet(e)\le \eta_2^\bullet(e), \ 
	\eta_1^\circ(e)\ge \eta_2^\circ(e),\\
\end{split}
\ee
for all $e\in E^\bullet$. We say an event $A$ is $\bar{\le}$-increasing with respect to an ordering $\bar{\le}$ if $\eta_1\in A$ and $\eta_1\bar{\le}\eta_2$ imply that $\eta_2\in A$. A measure $\mu$ on $\Om^\eivRC_V$ satisfies the FKG inequality for the ordering $\bar{\le}$ if for all $\bar{\le}$-increasing events $A,B$, one has:
\bes
	\mu[A\cap B] \ge \mu[A]\mu[B].
\ees

\begin{lemma}[\cite{pfister-velenik-AT}]\label{lem:8vRC-FKG}
	Let $\ul{p}\in\cP$ with $p_\NE,p_\NW \ge p_\es$, and let $K,W$ satisfy \eqref{eq:8v-coupling-constants}. The measures $\mu^{\eivRC;i,j}_{G,K,W}$ for $i,j\in\{0,1\}$ satisfy the FKG inequality:
	\begin{enumerate}
		\item for the ordering $\le$ if $p_\NE\cdot p_\NW \ge p_\es$;
		\item for the ordering $\tilde{\le}$ if $p_\NE\cdot p_\NW \le p_\es$.
	\end{enumerate}
\end{lemma}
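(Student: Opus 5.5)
The plan is to verify the Holley/FKG lattice condition edge by edge, using the standard criterion: a strictly positive measure on a product of $\{0,1\}$ coordinates satisfies FKG for a coordinatewise partial order if it is monotone, i.e.\ if $\mu(\eta^{e}_{1}\vee\cdot)\,\mu(\cdot)\ge\mu(\cdot\vee)\dots$, equivalently if the single-coordinate conditional probabilities of the ``increasing'' value are increasing in the configuration elsewhere, together with irreducibility. Since the orderings $\le$ and $\tilde\le$ differ only by replacing $\eta^\circ$ with $1-\eta^\circ$ (i.e.\ with $(\eta^\circ)^*$ on $E^\bullet$), it suffices to treat $\le$ for the pair $(\eta^\bullet,\eta^\circ)$ and $\le$ for the pair $(\eta^\bullet,(\eta^\circ)^*)$, the latter rewriting $2^{k(\eta^\circ)}$ via Euler's formula as $2^{k((\eta^\circ)^*)}$ times $2^{|\eta^\circ|}$-type factors, which only modifies the local edge weights. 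The boundary conditions $i,j\in\{0,1\}$ are wired or free, so this Euler-formula transfer is clean (planar, simply connected $V$).

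The key steps: first write the weight as $\prod_{e\in E^\bullet}\phi_e(\eta^\bullet_e,\eta^\circ_{e^*})\cdot 2^{k(\eta^\bullet)+k(\eta^\circ)}$, where $\phi_e$ is the $2\times2$ table $\phi(0,0)=p_\es$, $\phi(1,1)=2p_\es$, $\phi(1,0),\phi(0,1)\in\{p_\NW-p_\es,p_\NE-p_\es\}$ depending on the orientation of $e$, all nonnegative by $p_\NW,p_\NE\ge p_\es$. It is a standard fact that $2^{k(\cdot)}$ (cluster weight $q=2\ge1$) is log-supermodular in the edge variables. Products of log-supermodular functions are log-supermodular, and functions of disjoint coordinate sets are trivially so, so the whole question reduces to the single-edge interaction $\phi_e$ on the two coordinates $(\eta^\bullet_e,\eta^\circ_{e^*})$. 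For $\le$, the condition is $\phi(1,1)\phi(0,0)\ge\phi(1,0)\phi(0,1)$, i.e.\ $2p_\es^2\ge(p_\NW-p_\es)(p_\NE-p_\es)$. Using $p_\NW+p_\NE+p_\es=1$ this rearranges to $p_\es\ge p_\NW p_\NE$... so I must track the sign carefully: expanding, $(p_\NW-p_\es)(p_\NE-p_\es)=p_\NW p_\NE-p_\es(p_\NW+p_\NE)+p_\es^2=p_\NW p_\NE-p_\es+2p_\es^2$, hence the condition becomes $p_\es\ge p_\NW p_\NE$. This is the $\tilde\le$-type regime, signalling that the $2^{k}$ bookkeeping must be done after dualising one component.

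So the real computation is: after substituting $\eta^\circ\mapsto(\eta^\circ)^*$ and using Euler's formula $k(\eta^\circ)=k((\eta^\circ)^*)+|\eta^\circ|-\text{const}$ (with faces/cycles correction absorbed), the single-edge table becomes $\tilde\phi(x,y)=\phi(x,1-y)\cdot 2^{\pm(\cdot)}$, and I then recheck both $2\times2$ determinant conditions for $\phi$ and $\tilde\phi$ to match them to the two regimes $p_\NW p_\NE\ge p_\es$ and $\le p_\es$ respectively; the powers of $2$ from Euler's formula are exactly what exchange the roles, and I expect one regime's determinant to be $2p_\es^2$ vs.\ $(p_\NW-p_\es)(p_\NE-p_\es)\cdot 4$ or similar, giving $p_\NW p_\NE$ versus $p_\es$ after using the sum constraint. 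Finally, irreducibility/positivity: when some factor vanishes (e.g.\ $p_\NW=p_\es$), use the version of Holley's criterion allowing zero weights on a distributive lattice of allowed configurations, checking the support is a sublattice.

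The main obstacle is the bookkeeping of the Euler-formula correction for $k(\eta^\circ)$ under duality on a domain with the given boundary conditions (whether $(\eta^\circ)^*$ is wired or free, and the constant from the outer face), and confirming which determinant corresponds to which ordering; I would first check the fully symmetric case $p_\NW=p_\NE$ against the known Pfister–Velenik condition $J\ge|U|$ on the self-dual line as a sanity test.
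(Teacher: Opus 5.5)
The step that fails is your Euler-formula manoeuvre for ``exchanging the roles'' of the two regimes. Your first determinant computation is correct and already decisive. Whether a measure satisfies the lattice condition for a given ordering does not depend on how its weight is written. Passing from $\le$ to $\tilde\le$ just reverses the order on the $\eta^\circ$-block:
\begin{itemize}
\item $2^{k(\eta^\circ;\eta_0)}$ satisfies $f(\omega\vee\omega')f(\omega\wedge\omega')\ge f(\omega)f(\omega')$. This inequality is symmetric under $\vee\leftrightarrow\wedge$, so it holds for both orders.
\item Euler's formula only trades $2^{k(\eta^\circ)}$ for $2^{k((\eta^\circ)^*)}$ times constants and factors $2^{\pm|\eta^\circ|}$. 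These are products of one-coordinate terms and satisfy the lattice condition with equality.
\end{itemize}
So the only ordering-sensitive factor is the cross term $\phi_e$. Here $\phi(1,1)\phi(0,0)-\phi(1,0)\phi(0,1)=2p_\es^2-(p_\NW-p_\es)(p_\NE-p_\es)=p_\es-p_\NW p_\NE$, which has the sign of $\Delta$ in \eqref{eq:delta}. Thus the lattice condition holds for $\le$ exactly when $p_\NW p_\NE\le p_\es$. For $\tilde\le$, via $\tilde\phi(x,y)=\phi(x,1-y)$, it holds exactly when $p_\NW p_\NE\ge p_\es$. No bookkeeping will produce the factor $4$ you are hoping for.

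Consequently the lemma as printed has its two regimes interchanged and cannot be proved; three checks show this.
\begin{itemize}
\item At $p_\es=0$ (so $p_\NW p_\NE>p_\es$), $\phi(0,0)=\phi(1,1)=0$ forces $\eta^\circ_{e^*}=1-\eta^\bullet_e$ a.s. Then $\{\eta^\bullet_e=1\}$ and $\{\eta^\circ_{e^*}=1\}$ are disjoint $\le$-increasing events of positive probability.
\item At $p_\NW=p_\NE=p_\es=\tfrac13$ (so $p_\NW p_\NE<p_\es$), $\phi(1,0)=\phi(0,1)=0$ forces $\eta^\circ_{e^*}=\eta^\bullet_e$ a.s., and $\tilde\le$-FKG fails the same way.
\item On a one-vertex domain, with any of the four boundary conditions, the covariance of $\eta^\bullet_e$ and $\eta^\circ_{e^*}$ has the sign of $p_\es-p_\NW p_\NE$.
\end{itemize}
What your first two paragraphs do prove is the corrected statement: $\le$ for $p_\NW p_\NE\le p_\es$ and $\tilde\le$ for $p_\NW p_\NE\ge p_\es$. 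Combine them with the Ahlswede--Daykin four functions theorem, which needs no positivity, so the degenerate cases $p_\NW=p_\es$ or $p_\NE=p_\es$ need no irreducibility argument.

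The corrected version is also the one the rest of the paper agrees with in several places:
\begin{itemize}
\item In Section \ref{sec:symmetric-case}, the bound $\mu^{\eivRC}_{\ul{p}}[x_1\leftrightarrow x_2,\,y_1\leftrightarrow y_2]\ge\mu^{\eivRC}_{\ul{p}}[x_1\leftrightarrow x_2]\,\mu^{\eivRC}_{\ul{p}}[y_1\leftrightarrow y_2]$ is used for $p_\NW p_\NE\le p_\es$.
\item The space-time Lemma \ref{lem:q:FKG-8vRC} asserts $\tilde\le$ for $u\le\tfrac12$, i.e.\ $\Delta\le 0$.
\item At $p_\es=0$ one recovers $\eta^\circ=(\eta^\bullet)^*$ with $\eta^\bullet$ an FK configuration with $q=4$.
\end{itemize}
You should state and prove that version. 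The places that invoke $\le$ in the regime $p_\NW p_\NE\ge p_\es$, namely \eqref{eq:8vRC-stoch-dom} and the uniqueness argument in the proof of Proposition \ref{prop:8v-AT}, then need to be revisited.
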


Let $\mu,\nu$ be two measures on $\Om^{\eivRC}_V$. We say that $\mu$ stochastically dominates $\nu$ with respect to the ordering $\le$ if for all $\le$-increasing events $A$, we have $\mu[A]\ge\nu[A]$. We write $\nu \prec\mu$ for this condition, and $\nu \ \tilde{\prec} \ \mu$ for the same condition with respect to the ordering $\tilde{\le}$. It follows straightforwardly that for all $i,j,i',j'\in\{0,1\}$,
\be\label{eq:8vRC-stoch-dom}
\begin{split}
	\mu^{\eivRC;i,j}_{V,K,W} &\prec \mu^{\eivRC;i',j'}_{V,K,W} \ \mathrm{for} \ i\le i', \ j\le j' 
	\qquad \mathrm{if} \ p_\NE\cdot p_\NW \ge p_\es;\\
	\mu^{\eivRC;i,j}_{V,K,W} \ &\tilde{\prec} \ \mu^{\eivRC;i',j'}_{V,K,W} \ \mathrm{for} \ i\le i', \ j\ge j' 
	\qquad \mathrm{if} \ p_\NE\cdot p_\NW \le p_\es.
\end{split}
\ee
Standard arguments then show that for $p_\NE\cdot p_\NW \ge p_\es$, the measures $\mu^{\eivRC;1,1}_{V,K,W}$ and $\mu^{\eivRC;0,0}_{V,K,W}$ have weak infinite volume limits $\mu^{\eivRC;1,1}_{K,W}$ and $\mu^{\eivRC;0,0}_{K,W}$ respectively, while for $p_\NE\cdot p_\NW \le p_\es$, the measures $\mu^{\eivRC;1,0}_{V,K,W}$ and $\mu^{\eivRC;0,1}_{V,K,W}$ have weak infinite volume limits $\mu^{\eivRC;1,0}_{K,W}$ and $\mu^{\eivRC;0,1}_{K,W}$ respectively. It is straightforward (see eg. \cite[Section 4]{grimmett-RC}) to show that these limits are Gibbs, $\GG^\bullet$-invariant and ergodic, and extremal.

\subsection{Unique Gibbs measure for 8vRC}

Further standard arguments (see eg. \cite[Section 4]{grimmett-RC}) show that any Gibbs measure $\mu$ for the 8vRC model satisfies:
\bes
\begin{split}
	\mu^{\eivRC;0,0}_{K,W} &\prec \mu \prec \mu^{\eivRC;1,1}_{V,K,W}  
	\qquad \mathrm{if} \ p_\NE\cdot p_\NW \ge p_\es;\\
	\mu^{\eivRC;0,1}_{K,W} \ &\tilde{\prec} \ \mu \ \tilde{\prec} \ \mu^{\eivRC;1,0}_{K,W}
	\qquad \mathrm{if} \ p_\NE\cdot p_\NW \le p_\es.
\end{split}
\ees
In particular if the right and left hand sides of these stochastic inequalities are equal, then there exists a unique Gibbs measure. Further again, standard arguments show that in the first case, if $\mu^{\eivRC;1,1}$ exhibits no infinite cluster in both $\eta^\bullet$ and $\eta^\circ$ almost surely, then $\mu^{\eivRC;1,1}=\mu^{\eivRC;0,0}$. Note that shifting $\eta^\bullet$ by $(1,0)$ under $\mu^{\eivRC;1,1}$ has the same law as $\eta^\circ$, so it suffices here to show that there is almost surely no infinite cluster in $\eta^\bullet$. 

In the second case, it suffices to show that $\mu^{\eivRC;1,0}$ exhibits no infinite cluster in both $\eta^\bullet$ and $(\eta^\circ)^*$ almost surely. Under the measure $\mu^{\eivRC;1,0}$, $\eta^\bullet$ and $(\eta^\circ)^*$ have the same marginal; indeed, they are jointly distributed as the usual ATRC measure under $++$ boundary conditions, which is symmetric in its two components. Hence here also it suffices to show that $\eta^\bullet$ has no infinite cluster almost surely.\\

Let us briefly describe the rest of the proof. We will first show how the eight-vertex model, the Ashkin-Teller model and the 8vRC model can be coupled. From this coupling we'll show that
\be\label{eq:8v-8vRC-connections}
	\mu^{\eivRC;1,1}_{\ul{p}}[x\xleftrightarrow{\eta^\bullet} \infty] 
	=
	\mu^{\eiv;\bplus,\wplus}_{K,W}[\pi^\bullet_x]
	=
	\mu^{\AT;+,\f}_{J,U}[\tau_x],
\ee
where $\mu^{\eiv;\bplus,\wplus}_{K,W}, \mu^{\AT;+,\f}_{J,U}$ are the infinite volume weak limits of $\mu^{\eiv;\bplus,\wplus}_{V,K,W}$ and $\mu^{\AT;+,\f}_{V,J,U}$, respectively, and are shown to exist in the relevant regime via the coupling. Analogous equalities hold for the measures $\mu^{\eivRC;1,0}_{\ul{p}}, \mu^{\eiv;\bplus}_{K,W}$, and $\mu^{\AT;+,+}_{J,U}[\tau_x]$. To show that the 8vRC model has a unique Gibbs measure, it therefore suffices to show that $\mu^{\AT;+,+}_{J,U}[\tau_x] = \mu^{\AT;+,\f}_{J,U}[\tau_x] =0$. This will follow from a further coupling between the Ashkin-Teller model and the six-vertex model and Theorem \ref{thm:6v-deloc}.  \\

\subsection{Coupling eight-vertex, 8vRC and Ashkin-Teller}
We describe how the three models are coupled. First we describe how to sample the random cluster representation $\eta$ from the eight-vertex model, and $\om$ (satisfying $(\om,(\om')^*)=(\eta^\bullet,\eta^\circ)$) from the Ashkin-Teller model. 

Recall there are bijections between the sets $V,E^\bullet,E^\circ$. In the following we identify them via these bijections. For $\pi\in\Om^{\eiv}_V$, say $v\in\om[\pi^\bullet]$ if $\pi^\bullet_x\neq\pi^\bullet_{x'}$ where $x,x'\in F^\bullet$ are the black faces touching $v$. Say $v\in\om[\pi^\circ]$ similarly. Sample $\eta\in\Om^{\eivRC}_{V}$ at $v$ (corresponding to $e\in E^\bullet_\NW$ and $e^*\in E^\circ_\NE$) as:
\be\label{eq:sample-8vRC}
\begin{cases}
        \eta^\bullet_e=\eta^\circ_{e^*}=0 \ \mathrm{deterministically}  
        & \mathrm{if} \ v \in \om[\pi^\bullet]\cap\om[\pi^\circ]\\
        \eta^\bullet_e=0 \ \mathrm{deterministically}, 
        \ \eta^\circ_{e^*}=1 \ \mathrm{w.p.} \ \tfrac{p_{\NE}-p_\es}{p_{\NE}}     
        & \mathrm{if} \ v \in \om[\pi^\bullet]\setminus\om[\pi^\circ]\\
        \eta^\circ_{e^*}=0 \ \mathrm{deterministically}, 
        \ \eta^\bullet_e=1 \ \mathrm{w.p.} \ \tfrac{p_{\NW}-p_\es}{p_{\NW}} 
        & \mathrm{if} \ v \in \om[\pi^\circ]\setminus\om[\pi^\bullet]\\
        (\eta^\bullet_e,\eta^\circ_{e^*})=(0,0) \ \text{w.p.} \ p_\es, \
        (\eta^\bullet_e,\eta^\circ_{e^*})=(1,0) \ \text{w.p.} \ p_\NW-p_\es, \\
        \qquad (\eta^\bullet_e,\eta^\circ_{e^*})=(0,1) \ \text{w.p.} \ p_\NE-p_\es, \
        (\eta^\bullet_e,\eta^\circ_{e^*})=(1,1) \ \text{w.p.} \ 2p_\es 
        & \mathrm{if} \ v \notin \om[\pi^\bullet]\cup\om[\pi^\circ]\\
        \text{The same with $\NW$ and $\NE$ exchanged} 
        & \mathrm{if} \ e\in E^\bullet_\NE, \ e^*\in E^\circ_\NW.
\end{cases}
\ee

Now we describe how to sample $(\om,\om')$ from an Ashkin-Teller configuration $(\tau,\tau')$. For $e\in E^\bullet_\NW$, 
\be\label{eq:sample-ATRC}
\begin{cases}
        \om_e=\om'_e=0 \ \mathrm{deterministically}  
        & \mathrm{if} \ e^* \in \om[\tau]\cap\om[\tau']\\
        \om_e=0 \ \mathrm{deterministically}, 
        \ \om'_{e}=1 \ \mathrm{w.p.} \ \tfrac{2p_\es}{p_{\NE}+p_\es}     
        & \mathrm{if} \ e^* \in \om[\tau]\setminus\om[\tau']\\
        \om'_e=0 \ \mathrm{deterministically}, 
        \ \om_{e}=1 \ \mathrm{w.p.} \ \tfrac{2p_\es}{p_{\NE}+p_\es}     
        & \mathrm{if} \ e^*v \in \om[\tau']\setminus\om[\tau]\\
        (\om_e,\om'_{e})=(0,0) \ \text{w.p.} \ \tfrac{p_\NE-p_\es}{1+p_\NW}, \
        (\om_e,\om'_{e})=(1,0) \ \text{w.p.} \ \tfrac{2p_\es}{1+p_\NW}, \\
        \qquad (\om_e,\om'_{e})=(0,1) \ \text{w.p.} \ \tfrac{2p_\es}{1+p_\NW}, \
        (\om_e,\om'_{e})=(1,1) \ \text{w.p.} \ \tfrac{2(p_{\NW}-p_\es)}{1+p_\NW}
        & \mathrm{if} \ e^* \notin \om[\tau]\cup\om[\tau']\\
        \text{The same with $\NW$ and $\NE$ exchanged} 
        & \mathrm{if} \ e\in E^\bullet_\NE.
\end{cases}
\ee

\begin{lemma}\label{lem:8v-AT-coupling}
    Let $\ul{p}\in\cP$ with $p_{\NW},p_{\NE}\ge p_\es$, let $K,W$ satisfy \eqref{eq:8v-coupling-constants}, and let $J,U$ satisfy \eqref{eq:AT-coupling-constants}.
\begin{enumerate}
    \item If one takes $\eta\sim\mu^{\eivRC,0,0}_{V,\ul{p}}$ and samples $\pi\in\Om^{\eiv}_{V}$ by independently and uniformly assigning spins $\pm1$ to the clusters of $\eta^\bullet$ and of $\eta^\circ$, then $\pi\sim\mu^{\eiv}_{V,K,W}$.
    \item If one takes $\eta\sim\mu^{\eivRC,0,0}_{V,\ul{p}}$ and samples $(\tau,\tau')\in\Om^{\AT}_{G^\bullet}$ by assigning spins $\pm 1$ to each cluster of $(\eta^\bullet)$ and each cluster of $(\eta^\circ)^*$ independently and uniformly, with the exception that all clusters of $(\eta^\circ)^*$ touching $\partial V^\bullet$ are given the spin $+1$ deterministically, then $(\tau,\tau')\sim\mu^{\AT;\f,+}_{G^\bullet,J,U}$. 
    \item If one takes $\pi\sim\mu^{\eiv}_{V,K,W}$ and samples a percolation configuration $\eta\in\Om^{\eivRC}_{V}$ according to the rules \eqref{eq:sample-8vRC}, or if one takes  $(\tau,\tau')\sim\mu^{\AT;\f,+}_{G^\bullet,J,U}$ and samples a percolation configuration $\om\in\Om^{\ATRC}_{V}$ according to the rules \eqref{eq:sample-ATRC} (and sets $(\eta^\bullet,\eta^\circ)=(\om,(\om')^*)$), then $\eta\sim\mu^{\eivRC,0,0}_{V,\ul{p}}$.
    
    \item Similar couplings hold between the measures $\pi\sim\mu^{\eiv;\spadesuit}_{V,K,W}$, $\eta\sim\mu^{\eivRC;\clubsuit}_{V,\ul{p}}$ and $(\tau,\tau')\sim\mu^{\AT;\diamond}_{G^\bullet,J,U}$, where $\spadesuit=\bplus\wplus, \diamond=+,\f$ if $\clubsuit=1,1$, and $\spadesuit=\bplus, \diamond=+,+$ if $\clubsuit=1,0$, and $\spadesuit=\wplus, \diamond=\f,\f$ if $\clubsuit=0,1$. 
    \item Further, if the following infinite volume limits exist,
    \be\label{eq:8v-8vRC-connections-full}
    \begin{split}
	\mu^{\eivRC,1,1}_{\ul{p}}[x\xleftrightarrow{\eta^\bullet} \infty] 
	&=
	\mu^{\eiv,\bplus,\wplus}_{K,W}[\pi^\bullet_x]
	=
	\mu^{\AT,+,\f}_{J,U}[\tau_x]; \\
    \mu^{\eivRC,1,0}_{\ul{p}}[x\xleftrightarrow{\eta^\bullet} \infty] 
	&=
	\mu^{\eiv,\bplus}_{K,W}[\pi^\bullet_x]
	=
	\mu^{\AT,+,+}_{J,U}[\tau_x].
    \end{split}
    \ee
\end{enumerate}
\end{lemma}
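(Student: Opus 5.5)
The approach is the standard Edwards–Sokal style joint measure. I would define, on $V$, a joint weight on triples $(\pi,\eta)$ (respectively $(\tau,\tau',\omega,\omega')$) as a product of local vertex weights. Both marginals are then computed by summing out one component, and the sampling rules \eqref{eq:sample-8vRC}, \eqref{eq:sample-ATRC} are read off as the resulting conditional laws.

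For the eight-vertex case, take the joint density proportional to
\[
\prod_{v\in V} w_v(\eta^\bullet_e,\eta^\circ_{e^*})\,\mathbbm{1}\{\eta^\bullet_e=1\Rightarrow \pi^\bullet \text{ constant across } e\}\,\mathbbm{1}\{\eta^\circ_{e^*}=1\Rightarrow \pi^\circ \text{ constant across } e^*\}.
\]
For $e\in E^\bullet_\NW$ the weights are $w_v(0,0)=p_\es$, $w_v(1,0)=p_\NW-p_\es$, $w_v(0,1)=p_\NE-p_\es$ and $w_v(1,1)=2p_\es$, with NW and NE swapped on $E^\bullet_\NE$.

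\emph{Summing over $\pi$.} Each cluster of $\eta^\bullet$ and of $\eta^\circ$ can take either spin freely, which produces the factor $2^{k(\eta^\bullet)+k(\eta^\circ)}$. The result is exactly $\mu^{\eivRC;0,0}_{V,\ul p}$. The conditional law of $\pi$ given $\eta$ is then uniform independent spins on clusters, which gives part 1.

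\emph{Summing over $\eta$ at fixed $\pi$.} The computation is vertex by vertex.
\begin{itemize}
\item If both spins are constant across $v$, the allowed weights sum to $p_\es+(p_\NW-p_\es)+(p_\NE-p_\es)+2p_\es=1$.
\item If only $\pi^\circ$ is constant, only $\eta^\bullet=0$ is allowed, and the sum is $p_\es+(p_\NE-p_\es)=p_\NE$.
\item If only $\pi^\bullet$ is constant, the sum is $p_\NW$.
\item If neither is constant, the sum is $p_\es$.
\end{itemize}
The ratios of these conditional weights reproduce \eqref{eq:sample-8vRC}. The resulting vertex weights $1,p_\NE,p_\NW,p_\es$ must be matched with \eqref{eq:mixed-AT-measure} under \eqref{eq:8v-coupling-constants}. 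Normalizing by the all-agree weight $e^{K_\NW+K_\NE+W}$, the three other configurations give $e^{-2K_\NW-2W}$, $e^{-2K_\NE-2W}$ and $e^{-2K_\NW-2K_\NE}$, which is precisely \eqref{eq:8v-coupling-constants}. This is the same check as in Proposition \ref{prop:mir-8v-coupling}, and it yields part 3 for $\pi$.

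For the Ashkin–Teller case, set $(\omega,\omega')=(\eta^\bullet,(\eta^\circ)^*)$. In the 8vRC weight I would rewrite $2^{k(\eta^\circ)}$ via Euler's formula on the planar graph as $2^{k((\eta^\circ)^*)}$ times $2^{|\eta^\circ|}$ times a constant, taking care with the wired/free duality at the boundary. Under this duality, free for $\eta^\circ$ becomes wired for $(\eta^\circ)^*$, which is why $\tau'\equiv+1$ on boundary clusters and why the measure is $\mu^{\AT;\f,+}$. After absorbing the factor $2^{\mathbbm 1\{\eta^\circ_{e^*}=1\}}=2^{\mathbbm 1\{\omega'_e=0\}}$, the ATRC edge weights become
\[
(\omega,\omega')\colon\quad (0,0)\mapsto 2(p_\NE-p_\es),\quad (1,0)\mapsto 4p_\es,\quad (0,1)\mapsto 2p_\es,\quad (1,1)\mapsto p_\NW-p_\es,
\]
up to rescaling. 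I expect this bookkeeping to be the main obstacle: getting these factors exactly right so that they match the probabilities in \eqref{eq:sample-ATRC} and produce $J,U$ as in \eqref{eq:AT-coupling-constants}. The plan here is to redo the same vertex-sum with spins $\tau,\tau'$ on $G^\bullet$. Summing over $(\omega,\omega')$ at fixed $(\tau,\tau')$ gives four edge weights, indexed by whether each of $\tau,\tau'$ disagrees across $e$. The checks are that their ratios equal $e^{-2J-2U}$ (one disagrees) and $e^{-4J}$ (both disagree), after using $p_\NW+p_\NE+p_\es=1$ to simplify, for instance $1+p_\NW=2p_\NW+p_\NE+p_\es$. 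Summing over spins gives the cluster factors, so parts 2 and 3 follow in the same way as before.

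Part 4 comes from repeating the argument with boundary conditions. Wired in $\eta^\bullet$ corresponds to fixing $\pi^\bullet=+$ on $\partial F^\bullet$, and to $\tau=+$ on the boundary. Wired in $\eta^\circ$ corresponds to $\pi^\circ=+$, and dually to free $\tau'$. This produces the stated correspondences $\spadesuit,\clubsuit,\diamond$.

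For part 5, in the coupling with $\clubsuit=1,1$, the spin $\pi^\bullet_x$ is $+1$ if $x$ is connected to the boundary in $\eta^\bullet$, and otherwise is a symmetric sign. Hence $\mu[\pi^\bullet_x]=\mu[x\leftrightarrow\partial]$, and the same holds for $\tau_x$. Passing to the limit $V\nearrow\ZZ^2$, where the limits are assumed to exist, and using that $\{x\leftrightarrow\partial V\}$ decreases to $\{x\leftrightarrow\infty\}$ under weak convergence of the monotone RC measures, gives \eqref{eq:8v-8vRC-connections-full}.
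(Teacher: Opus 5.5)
Your route is the paper's own. The paper defers parts 1--3 to the Edwards--Sokal couplings of Pfister--Velenik and obtains part 5 by colour switching. Your eight-vertex computation is right: the vertex sums $1,p_\NE,p_\NW,p_\es$ and the resulting conditional laws reproduce \eqref{eq:sample-8vRC}.

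The Ashkin--Teller bookkeeping, however, is wrong, and with your table the check you propose fails. Euler's formula has the opposite sign. With $(\eta^\circ)^*$ wired on $\partial F^\bullet$ one has $k(\eta^\circ)=k((\eta^\circ)^*)-|\eta^\circ|+\mathrm{const}$. Hence $2^{k(\eta^\circ)}\propto 2^{k(\om')}\,2^{|\om'|}$, and the factor to absorb is $2^{\mathbbm{1}\{\om'_e=1\}}$, not $2^{\mathbbm{1}\{\om'_e=0\}}$. Your $(0,1)$ entry does not follow from your own rule either: $(\om,\om')=(0,1)$ is $(\eta^\bullet,\eta^\circ)=(0,0)$, which has weight $p_\es$.

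Your table is not symmetric under $\om\leftrightarrow\om'$, so it cannot be the random-cluster weight of an Ashkin--Teller model symmetric in $\tau,\tau'$. Concretely, summing it at fixed spins gives $2p_\NE$ when only $\tau$ disagrees across $e$, but $2p_\NE+2p_\es$ when only $\tau'$ does. So for $p_\es>0$ no choice of $J,U$ matches.

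The correct weights at $e\in E^\bullet_\NW$ are
\[
(0,0)\mapsto p_\NE-p_\es,\qquad (1,0)\mapsto 2p_\es,\qquad (0,1)\mapsto 2p_\es,\qquad (1,1)\mapsto 2(p_\NW-p_\es).
\]
Their total is $p_\NE+p_\es+2p_\NW=1+p_\NW$, and once normalised they are exactly the last case of \eqref{eq:sample-ATRC}. Summing at fixed spins gives:
\begin{itemize}
\item $1+p_\NW$ if both spins agree;
\item $p_\NE+p_\es=1-p_\NW$ if exactly one disagrees, and the surviving bond is then open with probability $2p_\es/(p_\NE+p_\es)$, as in \eqref{eq:sample-ATRC};
\item $p_\NE-p_\es$ if both disagree.
\end{itemize}
The ratios are $e^{-2J_\NW-2U_\NW}$ and $e^{-4J_\NW}$, as in \eqref{eq:AT-coupling-constants}.

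With this correction the rest of your plan goes through. Wiring of $(\eta^\circ)^*$ gives $\mu^{\AT;\f,+}$, your boundary-condition dictionary gives part 4, and colour switching gives part 5. The passage to $\{x\leftrightarrow\infty\}$ in part 5 uses comparison of boundary conditions from Lemma \ref{lem:8vRC-FKG}. It is therefore available for $(1,1)$ when $p_\NE p_\NW\ge p_\es$ and for $(1,0)$ when $p_\NE p_\NW\le p_\es$, which is where the paper uses it.

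One small caveat on the eight-vertex side. At $e\in E^\bullet_\NW$ your sum gives $p_\NE$ when only $\pi^\bullet$ disagrees. By contrast, \eqref{eq:mixed-AT-measure} gives that configuration relative weight $e^{-2K_\NW-2W}=p_\NW$. So the match with \eqref{eq:8v-coupling-constants} requires exchanging $\NE$ and $\NW$ in one of the two conventions. This labelling mismatch is inherited from the paper and affects Proposition \ref{prop:mir-8v-coupling} in the same way. It is harmless, but your claim that the match is ``precisely'' \eqref{eq:8v-coupling-constants} does not hold as written.
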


The proof of the first three parts are either contained in or are straightforward adaptations of the proofs in \cite{pfister-velenik-AT}, so we leave them to the reader. We note that when the infinite volume limit of the 8vRC model with boundary conditions $\clubsuit\in\{0,1\}^2$ exists, it follows that the limit for the two spin measures exist; this is a standard application of a standard Burton-Keane argument \cite{burton-keane}. The converse also holds as the sampling of the random cluster model is local from each of the spin models. By the FKG inequalities of Lemma \ref{lem:8vRC-FKG}, the limit for $\clubsuit=(1,1)$ (resp. $\clubsuit=(1,0)$) exists for $p_\NE \cdot p_\NW\ge p_\es$ (resp. $p_\NE \cdot p_\NW\le p_\es$).
Finally, a standard colour-switching argument shows that \eqref{eq:8v-8vRC-connections-full} holds. \\

\subsection{Coupling Ashkin-Teller and six-vertex}
As noted above, the Ashkin-Teller model and the six-vertex model have a very similar coupling to that of the Ashkin-Teller model and the eight-vertex model. This was given in \cite[Proposition 8.1]{glaz-peled}. The coupling is facilitated by the second ATRC-type percolation configuration, this time where $(s,s')$ = $(\tau,\tau\tau')$.

Meanwhile one can also sample $\xi^\bullet\in\Om^{\perc}_{G^\bullet}$ from an Ashkin-Teller configuration $(\tau,\tau')\in\Om^{\AT}_{G^\bullet}$.
\be\label{eq:sampling-xi-from-AT}
\begin{cases}
	\xi^\bullet(e)=1 \ \mathrm{w.p.} \ \tfrac{2(p_\NE+p_\es)}{1+p_{\NE}}
	& \mathrm{if} \ e\in E^\bullet_\NE, \ e^*\in\om[\tau]\cap\om[\tau']\\
	\xi^\bullet(e)=1 \ \mathrm{w.p.} \ \tfrac{2(p_\NW+p_\es)}{1+p_{\NW}}
	& \mathrm{if} \ e\in E^\bullet_\NW, \ e^*\in\om[\tau]\cap\om[\tau']\\
	\xi^\bullet(e)=0 \ \mathrm{deterministically} 
	& \text{otherwise}\\
\end{cases}
\ee

\begin{proposition}\label{prop:6v-AT-coupling}
Let $\ul{p}\in\cP$ with $p_\NW,p_\NE\ge p_\es$, and let $a,b,c$ satisfy \eqref{eq:6v-weights} and $J=(J_\NW,J_\NE),U=(U_\NW,U_\NE)$ satisfy \eqref{eq:AT-coupling-constants}. For all $V$ finite and also for $V=\ZZ^2$, we have
\begin{enumerate}
    \item If one takes $\sigma\sim\mu^{\sv}_{V,a,b,c}$ and samples $\xi^\bullet$ according to the rule \eqref{eq:sampling-xi}, and then samples a spin configuration $\tau\in\{\pm1\}^{F^\bullet}$ by assigning $+$ or $-$ to the clusters of $\xi^\bullet$ independently and uniformly, then $(\tau, \tau\sigma^\bullet)\sim\mu_{G^\bullet,J,U}^{\AT,+,\f}$. That is, $\tau$ is distributed as a simple Ashkin-Teller spin, and $\sigma^\bullet$ is distributed as the product of Ashkin-Teller spins.
    \item If one takes $(\tau,\tau')\sim\mu_{G^\bullet,J,U}^{\AT,+,\f}$ and samples $\xi^\bullet$ according to the rule \eqref{eq:sampling-xi-from-AT}, then samples $\sigma^\circ\in\{\pm1\}^{F^\circ}$ by assigning $+$ or $-$ independently and uniformly to clusters of $(\xi^\bullet)^*$, then $(\tau\tau',\sigma^\circ)\sim\mu_{V,a,b,c}^{\sv}$. 
    \item One has the same relationship between the measures $\mu^{\sv,\bplus}_{V,a,b,c}$ and $\mu^{\AT;+,+}_{V,J,U}$.
    \item One has that in infinite volume,
    \bes
        \mu^{\AT;+,\f}_{J,U}[\tau_x] 
	=
	\mu^{\AT;++}_{J,U}[\tau_x] 
	=
	\mu^{\sv}_{a,b,c}[x \xleftrightarrow{\xi^\bullet} \infty]
	=
	0.
    \ees
\end{enumerate}
\end{proposition}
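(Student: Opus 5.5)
The plan is to verify parts 1–3 by direct weight computations at a single vertex, exactly as in \cite[Proposition 8.1]{glaz-peled}. Part 4 is then a colour-switching identity combined with Theorem \ref{thm:6v-deloc}.

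\textbf{Parts 1–3.} I would write a joint density on triples $(\sigma^\bullet,\sigma^\circ,\xi^\bullet,\tau)$ as a product of local weights, using the factorisation \eqref{eq:product-weights-kappa-xi} of the marginal on $(\kappa,\xi^\bullet)$. Assigning independent uniform signs $\tau$ to the clusters of $\xi^\bullet$ is the same as imposing the constraint $\tau_x=\tau_z$ on every open edge $xz\in\xi^\bullet$ and dividing by $2^{k(\xi^\bullet)}$. Here one must check that the $2^{k}$ factor is already absorbed: the spins $\sigma^\bullet$ are constant on $\xi^\bullet$-clusters, because $\xi^\bullet(e)=0$ whenever $e^*\in\om[\sigma^\bullet]$.

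Next set $\tau'=\tau\sigma^\bullet$ and sum out $\sigma^\circ$ and $\xi^\bullet$ edge by edge. Fix $e\in E^\bullet_\NW$ and look at the pair $(\tau_x\tau_z,\tau'_x\tau'_z)$. There are four cases:
\begin{itemize}
\item If both agree, then $\sigma^\bullet$ agrees across $e$, and $\xi^\bullet(e)$ may be open or closed. Summing over whether $e\in\om[\sigma^\circ]$ and over $\xi^\bullet_e$ gives the weight $c+$ (something).
\item If $\tau$ disagrees and $\tau'$ disagrees, then $\sigma^\bullet$ agrees across $e$ but $\xi^\bullet(e)=0$ is forced.
\item If exactly one of them disagrees, then $\sigma^\bullet$ disagrees across $e$, so $\xi^\bullet(e)=0$ and $e^*\in\om[\sigma^\bullet]$.
\end{itemize}
Comparing the four resulting edge weights with $\exp[J\tau\tau+J\tau'\tau'+U\tau\tau\tau'\tau']$ yields three ratio equations. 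I expect these to reduce, under \eqref{eq:6v-weights} and $p_\NW+p_\NE+p_\es=1$, exactly to \eqref{eq:AT-coupling-constants}.

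A subtlety is the summation over $\sigma^\circ$. The constraint $\om[\sigma^\bullet]\cap\om[\sigma^\circ]=\es$, together with $\om[\sigma^\circ]\subset\xi^\bullet$, means $\sigma^\circ$ is a uniform sign on each cluster of $(\xi^\bullet)^*$. This contributes $2^{k((\xi^\bullet)^*)}$, which by Euler's formula equals $2^{k(\xi^\bullet)}$ times a power of 2 depending on $|\xi^\bullet|$. This must be absorbed into the edge weights, and I expect this to be the only delicate bookkeeping.

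The reverse direction in part 2 is then Bayes' rule: the conditional law of $\xi^\bullet$ given $(\tau,\tau')$ reads off the local weights, giving \eqref{eq:sampling-xi-from-AT}. For part 3, the $+$ boundary condition on $\tau$ and on $\sigma^\bullet$ corresponds to wiring $\xi^\bullet$ on $\partial F^\bullet$. This passes to $\tau'=\tau\sigma^\bullet\equiv+$ on the boundary, and the computation is otherwise unchanged. The statement for $V=\ZZ^2$ follows by passing to limits, using Theorem \ref{thm:6v-deloc} for the convergence of $\mu^{\sv}_{V,a,b,c}$ and $\mu^{\sv,\bplus}_{V,a,b,c}$ on the triple $(\sigma,\xi^\bullet)$, and the locality of the sampling. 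This shows the Ashkin-Teller limits exist and coincide with the images of the common six-vertex limit.

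\textbf{Part 4.} In finite volume, colour switching on clusters of $\xi^\bullet$ gives
\[
\mu^{\AT;+,\f}_{G^\bullet,J,U}[\tau_x]=\mu^{\sv}_{V,a,b,c}[x\xleftrightarrow{\xi^\bullet}\partial],
\]
since $\tau$ is uniform on every cluster not touching the boundary. The same holds for $++$ using $\mu^{\sv,\bplus}_{V,a,b,c}$. The event $\{x\leftrightarrow\partial\Lambda_N\}$ is decreasing in $V$ only loosely, so I would bound
\[
\limsup_V\mu_V[x\leftrightarrow\partial V]\le\lim_V\mu_V[x\leftrightarrow\partial\Lambda_N]=\mu^{\sv}_{a,b,c}[x\leftrightarrow\partial\Lambda_N]
\]
for each $N$. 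The right-hand side tends to $\mu^{\sv}_{a,b,c}[x\leftrightarrow\infty]$ as $N\to\infty$, which is $0$ by Theorem \ref{thm:6v-deloc} (no percolation of $\xi^\bullet$). Because both six-vertex boundary conditions share this common limit, the two Ashkin-Teller magnetisations are both $0$.

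The main obstacle is the per-edge weight matching in part 1, including the Euler-formula power-of-2 bookkeeping for $\sigma^\circ$.
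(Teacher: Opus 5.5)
Your finite-volume argument follows the paper's route; the paper simply defers parts 1--3 to \cite[Proposition 8.1]{glaz-peled}, which is a local weight computation of the kind you describe. Your weights are correct. After the Euler factor $2^{|\xi^\bullet|}$, an edge $e\in E^\bullet_\NE$ carries weight $a+c$, $c-a$ or $b$ according as $\tau,\tau'$ both agree, both disagree, or exactly one disagrees across $e$. This gives $e^{-4J_\NE}=\frac{p_\NW-p_\es}{1+p_\NE}$ and $e^{-2J_\NE-2U_\NE}=\frac{1-p_\NE}{1+p_\NE}$, as in \eqref{eq:AT-coupling-constants}.

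On the bookkeeping you flagged: since $\sigma^\circ$ is free on $\partial F^\circ$, $G^\circ$ is the planar dual of $G^\bullet$ with $\partial F^\bullet$ wired. So $2^{k((\xi^\bullet)^*)}$ equals $2^{|\xi^\bullet|}\,2^{k^{\mathrm{w}}(\xi^\bullet)}$ up to a constant, where $k^{\mathrm{w}}$ counts all boundary-touching clusters as one. Hence every cluster touching $\partial F^\bullet$ must receive one common sign, fixed to $+$. This is where the $+$ in $\mu^{\AT;+,\f}$ comes from, and you use it implicitly in part 4.

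The genuine gap is the passage to $V=\ZZ^2$ in parts 1--3, which does not follow from ``locality of the sampling''. Drawing $\xi^\bullet$ given $\sigma$ is local. But assigning signs to the clusters of $\xi^\bullet$ (to get $\tau$), and to the clusters of $(\xi^\bullet)^*$ (to get $\sigma^\circ$ in part 2), is not local, and in finite volume the boundary-touching cluster is pinned to $+$. If $\xi^\bullet$ percolated under $\mu^{\sv}_{a,b,c}$, the finite-volume measures $\mu^{\AT;+,\f}_{G^\bullet,J,U}$ would give the infinite cluster sign $+$. Their limit would then not be ``uniform signs on the clusters of the limiting $\xi^\bullet$'', and the infinite-volume claim would be false.

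The missing input is exactly what the paper's proof singles out: under $\mu^{\sv}_{a,b,c}$ neither $\xi^\bullet$ nor $(\xi^\bullet)^*$ has an infinite cluster (Theorem \ref{thm:6v-deloc}, from \cite{glazman-lammers}). With it, take an event depending on the signs in $\Lambda_n$. Off the event $\{\Lambda_n\leftrightarrow\partial\Lambda_N\}$, those signs are a local function of the configuration in $\Lambda_N$ plus independent coin flips. Moreover $\lim_N\lim_V\mu_V[\Lambda_n\leftrightarrow\partial\Lambda_N]=\mu^{\sv}_{a,b,c}[\Lambda_n\leftrightarrow\infty]=0$, and the same holds for $(\xi^\bullet)^*$. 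This is the truncation you already carry out correctly in part 4. You need it, for both $\xi^\bullet$ and its dual, before part 4, because part 4 presupposes that the infinite-volume Ashkin--Teller measures exist and are the images of $\mu^{\sv}_{a,b,c}$.
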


The proof is identical to that of \cite[Proposition 8.1]{glaz-peled}, so we omit it. To extend to infinite volume, one needs to use that $\xi^\bullet$ (and its dual) exhibits no infinite clusters almost surely, which is proved in the proofs of Theorem 3 and Proposition 4.11 of \cite{glazman-lammers} (the delocalisation in the six-vertex model). The equality in part 4 above follows by a standard colour-switching argument. 

Combining this with \eqref{eq:8v-8vRC-connections-full}, one concludes that the 8vRC model has a unique Gibbs measure, which we call $\mu^{\eivRC}_{\ul{p}}$. The stochastic domination \eqref{eq:8vRC-stoch-dom} shows that the remaining measures from $\mu^{\eivRC;i,j}_{V,\ul{p}}$ for $i,j\in\{0,1\}$ converge to $\mu^{\eivRC}_{\ul{p}}$. The symmetry between $\eta^\bullet$ and $\eta^\circ$ under $\mu^{\eivRC;1,1}_{V,\ul{p}}$ upgrades the $G^\bullet$ translation-invariance and ergodicity to $\ZZ^2$. Finally, by self-duality, if $\eta\sim\mu^{\eivRC}_{\ul{p}}$ then $\eta^*\sim\mu^{\eivRC}_{\ul{p}}$ too, so $\eta^*$ also has no infinite cluster in either component almost surely. This completes the proof of part 2 of Proposition \ref{prop:8v-AT}.\\

\subsection{Eight-vertex and Ashkin-Teller results from 8vRC results}

\begin{proof}[Proof of Proposition \ref{prop:8v-AT}]

We prove part 1, the eight-vertex model part (part 2 is essentially identical). We know that the weak limits of all four 8vRC measures exist and are equal by Proposition \ref{prop:8v-AT-RC} (write $\mu^{\eivRC}_{\ul{p}}$ for this common measure). Let $\mu^{\eiv}_{K,W}$ be the measure on eight-vertex spins given by assigning $\pm1$ uniformly and independently to every cluster of $\eta^\bullet$ and every cluster of $\eta^\circ$, where $\eta\sim\mu^{\eivRC}_{\ul{p}}$. By standard arguments (see \cite[Theorem 4.91]{grimmett-RC}), it follows that the weak limits of $\mu^{\eiv}_{V,K,W}$, $\mu^{\eiv;\bplus}_{V,K,W}$, $\mu^{\eiv;\wplus}_{V,K,W}$, $\mu^{\eiv;\bplus,\wplus}_{V,K,W}$ all exist and are equal to $\mu^{\eiv}_{K,W}$. The limit is straightforwardly shown to be Gibbs (since the interaction is local), $\ZZ^2$ translation-invariant (since $\mu^{\eivRC}_{\ul{p}}$ is), and since $\eta\sim\mu^{\eivRC}_{K,W}$ has at most one infinite cluster almost surely (indeed it has zero), $\mu^{\eiv}_{K,W}$ is ergodic; together these properties imply extremality (see e.g. \cite[Section 14]{georgii}). By the coupling and another standard colour-switching argument, we have
\be\label{eq:8v-8vRC-2pt-4pt}
\begin{split}
	\mu^\eiv_{K,W}[\pi^\bullet_{x_1}\pi^\bullet_{x_2}]
	&=
	\mu^{\eivRC}_{\ul{p}}[x_1\xleftrightarrow{\eta^\bullet} x_2];\\
	\mu^\eiv_{K,W}[\pi^\bullet_{x_1}\pi^\bullet_{x_2}
	\pi^\circ_{y_1}\pi^\circ_{y_2}]
	&=
	\mu^{\eivRC}_{\ul{p}}[x_1\xleftrightarrow{\eta^\bullet} x_2,
	y_1\xleftrightarrow{\eta^\circ} y_2],
\end{split}
\ee
and the right hand side of both of these equations tends to 0 as $||x_1-x_2||_1,||y_1-y_2||_1\to\infty$.\\

It remains to prove that for $\ul{p}$ satisfies $p_\NW\cdot p_\NE\ge p_\es$, then $\mu^{\eiv}_{K,W}$ is the unique Gibbs measure and the unique thermodynamic limit. This proof is based on that of Theorem 11.3 of \cite{grimmett-RC}, although requires us to be a little more technically careful. It relies crucially on the joint FKG with respect to $\le$ of $\eta^\bullet,\eta^\circ$.

Let $\pi'\in\Om^{\eiv}_{\ZZ^2}$ and let $\pi\sim\mu^{\eiv;\pi'}_{V,K,W}$. Let $B^{\bplus}_{\pi'}$ be the set of faces in $\partial F^\bullet$ such that $(\pi')^\bullet=1$, and similarly define $B^{\wplus}_{\pi'}, B^{\bminus}_{\pi'}$ and $B^{\wminus}_{\pi'}$. Sampling an edge configuration $\eta$ from $\pi$ on the edges $E$ in the usual way \eqref{eq:sample-8vRC}, one finds that $\eta$ is distributed according to the measure $\mu^{\eivRC;\eta'}_{V,\ul{p}}[\cdot \ | \ \c B]$, where $\eta'$ is any configuration such that outside $V$, all vertices of $B^{\bplus}_{\pi'}\cup B^{\bminus}_{\pi'}$ are connected in $\eta^\bullet$ and all vertices of $B^{\wplus}_{\pi'}\cup B^{\wminus}_{\pi'}$ are connected in $\eta^\circ$, and $\c B$ is the event that $B^{\bplus}_{\pi'}$ and $B^{\bminus}_{\pi'}$ are not connected in $\eta^\bullet$ inside $V$, and $B^{\wplus}_{\pi'}$ and $B^{\wminus}_{\pi'}$ are not connected in $\eta^\circ$ inside $V$.

Crucially, the event $\c B$ is a decreasing event with respect to the ordering $\le$, and so in the range $p_\NW\cdot p_\NE\ge p_\es$ we can use the joint FKG property of $\eta^\bullet$ and $\eta^\circ$ with respect to $\le$ to prove that $\mu^{\eivRC;\eta'}_{V,\ul{p}}[\cdot \ | \ \c B]$ is stochastically dominated by $\mu^{\eivRC;1,1}_{V,\ul{p}}$. Let $A$ be some cylinder event for the eight-vertex model, and $\Lambda\subset\ZZ^2$ some set containing the support of $A$ (think of $\Lambda$ as very large. We proved earlier that the limit of of the measure $\mu^{\eivRC;1,1}_{V,\ul{p}}$ displays no infinite clusters almost surely. It follows that as $k\to\infty$, with probability tending to 1, the measure $\mu^{\eivRC;\eta'}_{\Lambda_k,\ul{p}}[\cdot \ | \ \c B]$ has no connection between $\Lambda$ and $\partial\Lambda_k$ in either $\eta^\bullet$ or $\eta^\circ$.

Let $\Gamma^\bullet\subset \Lambda_k$ be the largest set containing $\Lambda$ such that $\eta^\bullet\equiv0$ on $\Delta_{\EE^\bullet}(\ZZ^2\setminus\Gamma^\bullet)$, which is the edges of $\EE^\bullet$ outside of $\Gamma^\bullet$ incident to an edge of $\Gamma^\bullet$. The event $\Gamma^\bullet=\gamma^\bullet$ is measurable with respect to $\eta^\bullet|_{\Lambda_k\setminus\gamma^\bullet}$. Let $\om^\circ$ be a configuration on $\EE^\circ\setminus\gamma^\bullet$ which agrees with $\eta'$ on $\EE^\circ\setminus\Lambda_k$. Let $B^{\wplus}_{\om^\circ}$ be the set of vertices on the boundary of $\gamma^\bullet$ which are connected to a vertex of $B^{\wplus}_{\pi'}$ by $\om^\circ$, and $B^{\wminus}_{\om^\circ}$ similar. 

One can show that conditioned on the event $\Gamma^\bullet=\gamma^\bullet$ and the event $\eta^\circ|_{\Lambda_k\setminus\gamma^\bullet}=\om^\circ|_{\Lambda_k\setminus\gamma^\bullet}$, the measure 
$\mu^{\eivRC;\eta'}_{\Lambda_k,\ul{p}}[\cdot \ | \ \c B]$
restricted to $\gamma^\bullet$ is 
$\mu^{\eivRC;\eta''}_{\gamma^\bullet,\ul{p}}[\cdot \ | \ \c B'']$,
where $(\eta'')^\bullet=0$, $(\eta'')^\circ|_{\EE\setminus\gamma^\bullet} = \om^\circ$, and where $\c B''$ is the event that $B^{\wplus}_{\om^\circ}$ and $B^{\wminus}_{\om^\circ}$ are not connected by $\eta^\circ$ inside $\gamma^\bullet$. \\

Consider now this new measure $\mu^{\eivRC;\eta''}_{\gamma^\bullet,\ul{p}}[\cdot \ | \ \c B'']$. We've managed to remove the conditioning on $\eta^\bullet$; we now repeat the process with $\eta^\circ$ to remove the conditioning on the event $\c B''$. Let $\Gamma^\circ\subset\gamma^\circ$ be the largest set containing $\Lambda$ such that $\eta^\circ\equiv0$ on the edges of $\EE^\bullet$ outside of $\Gamma^\circ$ which are incident to an edge of $\Gamma^\circ$. The event $\Gamma^\circ=\gamma^\circ$ is measurable with respect to $\eta^\circ|_{\gamma^\bullet\setminus\gamma^\circ}$. Let $\om^\bullet$ be some configuration on $\gamma^\bullet\setminus\gamma^\circ$ which is zero outside of $\gamma^\bullet$. 

One can show that conditioned on the event $\Gamma^\circ=\gamma^\circ$ and the event $\eta^\bullet|_{\gamma^\bullet\setminus\gamma^\circ}=\om^\bullet$, the measure $\mu^{\eivRC;\eta''}_{\gamma^\bullet,\ul{p}}[\cdot \ | \ \c B'']$ restricted to $\gamma^\circ$ is the measure $\mu^{\eivRC;\eta'''}_{\gamma^\circ,\ul{p}}$, where $(\eta''')^\bullet=\om^\bullet$ and $(\eta''')^\circ=0$. \\

Now since $\Lambda$ was arbitrary, we have that conditioned on the event $\Lambda\not\leftrightarrow\partial\Lambda_k$,
$\mu^{\eivRC;\eta'}_{\Lambda_k,\ul{p}}[A \ | \ \c B]$,  is some convex combination of $\mu^{\eivRC;\eta'''}_{\gamma^\circ,\ul{p}}[A]$ over configurations $\eta'''$ and domains $\gamma^\circ\subset\Lambda_k$ containing $\Lambda$. By comparison of boundary conditions, this measure restricted to $\Lambda$ stochastically dominates $\mu^{\eivRC;0,0}_{\Lambda,\ul{p}}$. We've already noted that $\mu^{\eivRC;\eta'}_{\Lambda_k,\ul{p}}[A \ | \ \c B]$ is stochastically dominated by $\mu^{\eivRC;1,1}_{\Lambda_k,\ul{p}}$, whose restriction to $\Lambda$ is dominated by $\mu^{\eivRC;1,1}_{\Lambda,\ul{p}}$. Now by part 2 of Proposition \ref{prop:8v-AT}, and since $\Lambda\not\leftrightarrow\partial\Lambda_k$ has probability arbitrarily close to 1 for large enough $k$, one has that $\mu^{\eivRC;\eta'}_{\Lambda_k,\ul{p}}[A \ | \ \c B]$ converges to $\mu^{\eivRC}_{\ul{p}}$. By the same arguments as used earlier in the paper, it follows that our original measure on spins, $\mu^{\eiv;\pi'}_{V,K,W}$, converges to $\mu^{\eiv}_{K,W}$, which completes the proof of part 1 of Proposition \ref{prop:8v-AT}.\\

The proof of part 2 of Proposition \ref{prop:8v-AT} (the Ashkin-Teller part) follows the blueprint of part 1; we leave the details to the reader.
    
\end{proof}



\section{Polynomial decay of correlations}\label{sec:symmetric-case}
In this section we finish the proof of Theorem \ref{thm:main-mirror}.

\begin{proposition}\label{prop:poly-decay}We have the following. 
\begin{enumerate}
    \item 
        If $p_\NW=p_\NE\ge p_\es$, there exist $\alpha_1,\alpha_2,C_1,C_2>0$ such that for all edges $e_1,e_2$ of $\ZZ^2$ with $||e_1-e_2||=n$, we have 
        \be\label{eq:mir-poly-decay-repeat}
        \begin{split}
        	\mu_{\ul{p}}^\mir[e_1\leftrightarrow e_2]
        	&\le
        	C_2n^{-\alpha_2},\\
        	C_1n^{-\alpha_1}
        	\le
            \lim_{V\nearrow\ZZ^2}
            \mu_{V,\ul{p}}^{\mir}
            [e_1\leftrightarrow e_2]
            &\le
        	C_2n^{-\alpha_2}.
        \end{split}
        \ee
    \item 
        In the case $p_\NW\cdot p_\NE= p_\es$, we also have \eqref{eq:mir-poly-decay-repeat} for all pairs of edges, and moreover we have $\alpha_1=\alpha_2=\tfrac{1}{2}$. 
    \item 
        If $p_\NW\cdot p_\NE\le p_\es$, then the lower bound $C_1n^{-\alpha_1} \le 
            \lim_{V\nearrow\ZZ^2}
            \mu_{V,\ul{p}}^{\mir}
            [e_1\leftrightarrow e_2]$ holds for all pairs of edges.
    \item If $\Delta(a,b,c)=\Delta(\ul{p})\in[-1,-\tfrac{1}{2}]$, then for $e_1,e_2$ vertical edges on the same horizontal row at distance $n$ from each other,
    \be\label{eq:mir-exact-decay-hugo}
        \begin{split}
            \lim_{V\nearrow\ZZ^2}
            \left|\mu^{\mir}_{V,\ul{p}}[e_1 \xleftrightarrow{+} e_2]
	           -
	           \mu^{\mir}_{V,\ul{p}}[e_1 \xleftrightarrow{-} e_2] 
               \right|
            &=
            \mu^{\sv}_{a,b,c}[(h(u_1)-h(u_1'))(h(u_2)-h(u_2')]\\
            &=
        	-\frac{1}{\pi\cdot\arccos(\Delta)}
            n^{-2}
            + 
            o(n^{-2}).
        \end{split}
        \ee
\end{enumerate}
\end{proposition}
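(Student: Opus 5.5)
Everything will be routed through Lemma \ref{lem:8v-connection-probs} and \eqref{eq:8v-8vRC-2pt-4pt}. Together they give
\bes
\lim_{V\nearrow\ZZ^2}\mu^{\mir}_{V,\ul{p}}[e_1\leftrightarrow e_2]
=\mu^{\eivRC}_{\ul{p}}[x_1\xleftrightarrow{\eta^\bullet}x_2,\ y_1\xleftrightarrow{\eta^\circ}y_2],
\ees
where passing to the limit uses the convergence of $\mu^{\eiv}_{V,K,W}$ from Proposition \ref{prop:8v-AT}. For $\mu^{\mir}_{\ul{p}}$ itself we use the inequality already established in Lemma \ref{lem:spin-spins-are-mirror-connections}.

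\textbf{Upper bound.} The four-point connection event is contained in $\{x_1\xleftrightarrow{\eta^\bullet}x_2\}$, so it suffices to prove polynomial decay of $\mu^{\eivRC}_{\ul{p}}[x_1\xleftrightarrow{\eta^\bullet}x_2]$. This in turn follows from one-arm decay $\mu^{\eivRC}[x\leftrightarrow\partial\Lambda_n]\le Cn^{-\alpha}$. To get one-arm decay I would run the renormalisation dichotomy of \cite{dc-sid-tass-cont, dc-tass-renorm}. The inputs are:
\begin{itemize}
\item FKG for $\eta^\bullet$. In the regime of Lemma \ref{lem:8vRC-FKG}, the marginal on $\eta^\bullet$ is FKG under either ordering.
\item Comparison between boundary conditions (the domination \eqref{eq:8vRC-stoch-dom}).
\item Self-duality. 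In the symmetric case $p_\NW=p_\NE$, the dual of $\eta^\bullet$ has the same law as $\eta^\circ$ shifted, which is the law of $\eta^\bullet$ again. This gives a box-crossing probability bounded below under mixed boundary conditions.
\end{itemize}
The dichotomy says that crossing probabilities either decay exponentially or satisfy RSW uniformly in boundary conditions. Exponential decay is ruled out by the non-summability statement \eqref{eq:mir-escape-probs}. Hence RSW holds, and RSW yields polynomial one-arm bounds on both sides.

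\textbf{Lower bound.} In the RSW regime I would build a circuit of $\eta^\bullet$ and a circuit of $\eta^\circ$ in each dyadic annulus, using FKG to combine events. This requires the joint FKG for the ordering $\le$ when $p_\NW p_\NE\ge p_\es$. When $p_\NW p_\NE\le p_\es$, I would instead use $\tilde\le$, with $(\eta^\circ)^*$ dual crossings. The gluing gives $C_1n^{-\alpha_1}$.

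For Part 3, uniqueness of the thermodynamic limit from Proposition \ref{prop:8v-AT} lets the $\tilde\le$-FKG argument go through for every pair of edges.

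For the general-$\ul{p}$ statement about horizontal edges $\tilde e_1,\tilde e_n$, I would use the NE/NW duality symmetry along a single row. The self-duality of \eqref{eq:AT-coupling-constants} maps $\NE\leftrightarrow\NW$, which suffices for crossings of horizontal rectangles in one orientation.

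\textbf{Part 2.} At $p_\NW p_\NE=p_\es$ we have $W=0$, so the eight-vertex model decouples into two independent critical Ising models (on $\LL^\bullet$ and $\LL^\circ$). The four-point function then factorises as a product of two Ising two-point functions. Each of these decays as $n^{-1/4}$ by the classical results (possibly anisotropic, but still critical), so the product is $\asymp n^{-1/2}$.

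\textbf{Part 4.} This follows from Lemma \ref{lem:spin-spins-are-mirror-connections}, Part 2, which identifies the signed connection difference with the height-gradient correlation. Then the GFF convergence \cite{dc-etal-GFF-convergence} gives the $n^{-2}$ asymptotic with the stated constant. The one extra step is to justify convergence of second moments of gradients.

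\textbf{Main obstacle.} The hardest step is the RSW dichotomy for the 8vRC. Its inputs are monotonic only in one ordering at a time. In addition, the needed symmetry under $\pi/2$ rotation fails unless $p_\NW=p_\NE$, which is exactly why the symmetric case is singled out. I would handle this by following \cite{dc-tass-renorm} verbatim for $\eta^\bullet$ alone.
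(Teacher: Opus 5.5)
Your upper bound follows the paper's route: the renormalisation dichotomy for $\eta^\bullet$, then one-arm decay, then $\{x_1\xleftrightarrow{\eta^\bullet}x_2,\ y_1\xleftrightarrow{\eta^\circ}y_2\}\subset\{x_1\xleftrightarrow{\eta^\bullet}x_2\}$. The one difference is that you rule out the stretched-exponential alternative with \eqref{eq:mir-escape-probs}, whereas the paper (step S1) uses uniqueness of the 8vRC Gibbs measure plus self-duality, $\mu^{\eivRC}_{\ul{p}}[H(n,n+1)]=\tfrac12$; both work. Part 2 matches the paper.

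\textbf{Lower bound in Part 1.} Here there is a genuine gap. Gluing an $\eta^\bullet$-connection of $x_1,x_2$ with an $\eta^\circ$-connection of $y_1,y_2$ by FKG needs positive association for $\le$. The single-edge lattice condition $2p_\es\cdot p_\es\ge(p_\NW-p_\es)(p_\NE-p_\es)$ is equivalent to $p_\es\ge p_\NW p_\NE$. So joint $\le$-FKG is available for $p_\NW p_\NE\le p_\es$, and otherwise only $\tilde\le$. This is how the paper actually uses it, e.g.\ in Part 3, although the two cases of Lemma \ref{lem:8vRC-FKG} are printed the other way round. However one labels them, the symmetric family $p_\NW=p_\NE=p\in[\tfrac13,\tfrac12]$ meets both regimes, with threshold $p=\sqrt2-1$. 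Where only $\tilde\le$-FKG holds, $\{x_1\xleftrightarrow{\eta^\bullet}x_2\}$ is $\tilde\le$-increasing and $\{y_1\xleftrightarrow{\eta^\circ}y_2\}$ is $\tilde\le$-decreasing. FKG then gives $\mu[A\cap B]\le\mu[A]\mu[B]$, the wrong direction. Crossings of $(\eta^\circ)^*$ do not help, because the target event asks for primal $\eta^\circ$-connections.

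The paper avoids any correlation inequality between the two components. It builds the $\eta^\bullet$-path from $x_1$ to $x_2$ from RSW events on one side of the line through $e_1,e_2$, and the $\eta^\circ$-path from their mirror images on the other side. Each event of the second family is bounded below conditionally on the first, using that Proposition \ref{prop:strong-RSW} holds uniformly over the boundary condition $\eta'$. This costs $(c^2)^K$ with $K\asymp\log_2 n$. Uniformity of RSW in boundary conditions, combined with spatial separation, is the ingredient your plan is missing.

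\textbf{Part 3.} This is stated for general, possibly asymmetric $\ul{p}$, where you have no RSW at all, so your FKG argument has nothing to start from. Moreover, $\tilde\le$ is the wrong ordering for a jointly increasing event. The missing idea is a transfer through $\Delta$. By Lemma \ref{lem:spin-spins-are-mirror-connections} and Proposition \ref{prop:xxz-6v-same}, the connection probability of two vertical edges in a common row equals, in the limit, $4\langle S^{(1)}_0S^{(1)}_x\rangle$. It therefore depends on $\ul{p}$ only through $\Delta(\ul{p})=\frac{p_\es-p_\NW p_\NE}{p_\es+p_\NW p_\NE}$. So $p_\NW p_\NE\le p_\es$ means $\Delta\in[0,\tfrac12]$, a range attained by symmetric parameters, and the symmetric lower bound transfers to same-row pairs. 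Joint $\le$-FKG, available in this regime, then bounds the connection probability from $e_1$ to $e_3$ below by the product of those from $e_1$ to an intermediate $e_2$ and from $e_2$ to $e_3$; with finite energy this covers arbitrary pairs. The same transfer, not an $\NE/\NW$ reflection (which gives no RSW for anisotropic weights), yields the $\tilde e_1,\tilde e_n$ statement for general $\ul{p}$.

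\textbf{Part 4.} The extra step is not convergence of second moments. GFF convergence of the field tested against smooth functions does not by itself give the pointwise $n^{-2}$ asymptotic of a lattice gradient correlation. The paper instead represents the row correlation as $\int -a^2(1-a)^n\,\mathrm{d}\mu_\infty(a,b)$ with a positive measure $\mu_\infty$. It then uses the convergence of the rescaled measures $\mu_\infty(\tfrac1n\,\cdot\,)$ from \cite{dc-etal-GFF-convergence}.
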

Recall $u_i$ (resp. $u_i'$) is the face adjacent to and to the left of (resp. right of) $e_i$.
Combining this proposition with Lemma \ref{lem:spin-spins-are-mirror-connections}, and noting that $\ul{p}\in\cP$ and $p_\NW=p_\NE\ge p_\es$ implies $\Delta=\Delta(\ul{p})\in[-1,\tfrac{1}{2}]$, one proves the polynomial decay of the spin-spin correlations in the XXZ chain: for $\Delta\in[-1,\frac{1}{2}]$,
\be\label{eq:xxz-poly-decay-repeat}
    C_1|x|^{-\alpha_1}
    \le
    \langle S^{(1)}_0 S^{(1)}_x\rangle 
    \le
    C_2|x|^{-\alpha_2},
\ee
as well as proving Part 1 of Theorem \ref{thm:xxz-decay-rates} as well as Part 2 of Theorem \ref{thm:xxz-decay-rates} in the case $t=0$.\\

We prove the upper bound of \eqref{eq:mir-poly-decay-repeat} by proving a strong RSW property in the 8vRC model, and we prove the lower bound of \eqref{eq:mir-poly-decay-repeat} by using the existing strong RSW property in the six-vertex model.

\subsection{Strong RSW in 8vRC}
In this subsection, every square $\Lambda_n$ and every rectangle is in the lattice $\GG^\bullet$, identifying it with $\ZZ^2$.
 
\begin{proposition}\label{prop:strong-RSW}
    Let $\ul{p}\in\c P$ with $p_\NE=p_\NW\ge p_\es$ and let $K,W$ satisfy \eqref{eq:8v-coupling-constants}. Let $\rho>0$. There exists $c=c(\rho)$ such that for all $n\in\NN$, and all $\eta'\in\Om^{\eivRC}_{\ZZ^2}$,
    \bes
        c< \mu^{\eivRC;\eta'}_{R,\ul{p}}[H(n,\rho n)] < 1-c,
    \ees
    where $H(n,\rho n)$ is the event that $\eta^\bullet$ connects the left side of the rectangle $[0,\rho n]\times[0,n]$ to the right side via a crossing inside that rectangle, and $R$ is the larger rectangle $[-n,(\rho+1)n]\times[-n,2n]$.
\end{proposition}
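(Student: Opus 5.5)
The plan is to follow the renormalisation/dichotomy strategy of Duminil-Copin--Sidoravicius--Tassion \cite{dc-sid-tass-cont} and Duminil-Copin--Tassion \cite{dc-tass-renorm}, applied to the marginal $\eta^\bullet$ of the 8vRC model. The inputs are the FKG inequality, the self-duality, and the absence of infinite clusters from Proposition \ref{prop:8v-AT-RC}. In the symmetric case $p_\NE=p_\NW$, the self-duality of the model (the dual of $\eta$ is again an 8vRC configuration, with NE and NW exchanged, which here is the same model) combined with the $\pi/2$-rotation symmetry gives crossing probabilities of squares. For the square $[0,n]^2$ with symmetric boundary conditions (e.g.\ $\eta'$ self-dual up to rotation), an $\eta^\bullet$ horizontal crossing fails exactly when there is a vertical crossing of $(\eta^\bullet)^*$. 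The marginal of $(\eta^\bullet)^*$ should be the marginal of an 8vRC component under the dual boundary condition. This yields square-crossing probabilities bounded away from $0$ and $1$ for the appropriate boundary conditions.

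The first step is to fix the FKG ordering. If $p_\NE p_\NW\ge p_\es$ we use $\le$, and otherwise we use $\tilde\le$. In both cases, the event $H(n,\rho n)$ and the boundary conditions on $\eta^\bullet$ are monotone in the right direction. Specifically, $\mu^{\eivRC;\eta'}_R$ restricted to events in $\eta^\bullet$ is sandwiched between the two extremal measures ($(0,0)$ and $(1,1)$, or $(0,1)$ and $(1,0)$) on $R$. It therefore suffices to prove the lower bound under the minimal boundary condition and the upper bound under the maximal one. The upper bound follows from the lower bound for the dual via self-duality, since the dual of the maximal measure is the minimal measure of the dual model, which is the same model.

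The second step is to prove the lower bound $\mu^{\eivRC;\min}_R[H(n,\rho n)]>c$. Here I would run the Duminil-Copin--Sidoravicius--Tassion dichotomy, using the sequence $u_n=\mu^{\min}_{\Lambda_{2n}}[\Lambda_n\leftrightarrow\partial\Lambda_{2n}]$ and the renormalisation inequality from the quad-crossing ``pushing'' argument. That argument needs FKG, the domain Markov property, and comparison of boundary conditions across the $\eta^\bullet$ marginal. The subtlety is that $\eta^\bullet$ alone is not Markov: conditioning on an $\eta^\bullet$-path also requires controlling $\eta^\circ$ near it. I would handle this by revealing dual $\eta^\circ$ circuits as well, conditioning on the outermost circuit of the relevant component, which yields boundary conditions comparable to the extremal ones by the joint FKG. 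The dichotomy says that either crossing probabilities under free-type boundary conditions decay exponentially in $n$, or strong RSW holds uniformly in boundary conditions. Exponential decay under the minimal measure is then ruled out. It would give, by self-duality, exponentially fast decay of dual connections under the maximal measure, hence an infinite cluster in $\eta^\bullet$ under $\mu^{\eivRC;1,1}$ (resp.\ $1,0$). That contradicts Proposition \ref{prop:8v-AT-RC}. Alternatively, exponential decay would give $\sum_n n\,\mu^\mir[e_1\leftrightarrow\tau_{nx}e_2]<\infty$ via Lemma \ref{lem:8v-connection-probs}, contradicting \eqref{eq:mir-escape-probs}.

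The main obstacle is the renormalisation step: carrying out the DST gluing and pushing arguments for a two-component model in which only joint FKG (or twisted FKG) holds, and in which the "free" boundary condition for $\eta^\bullet$ must be chosen jointly with one for $\eta^\circ$. I would first try to prove the key comparison of boundary conditions for $\eta^\bullet$-events given an exploration of both components, using the monotone ordering from Lemma \ref{lem:8vRC-FKG}. Once this is in place, uniformity in $\eta'$ and in $\rho$ follows from standard square-to-rectangle RSW gluing by FKG.
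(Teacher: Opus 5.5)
Your overall architecture matches the paper's. You run the Duminil-Copin--Sidoravicius--Tassion dichotomy on $\eta^\bullet$ using FKG, comparison of boundary conditions and self-duality (this is where $p_\NE=p_\NW$ enters), and you exclude the subcritical alternative with Proposition~\ref{prop:8v-AT-RC}.

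Your exclusion step differs from the paper's and is valid. The paper's step S1 combines $\mu^{\eivRC;\ul 0}_{\ul p}=\mu^{\eivRC;\ul 1}_{\ul p}$ with $\mu^{\eivRC;\ul 0}_{\ul p}[H(n,n+1)]+\mu^{\eivRC;\ul 1}_{\ul p}[H(n,n+1)]=1$ to obtain $n\,\mu^{\eivRC;\ul 0}_{\ul p}[0\leftrightarrow\partial\Lambda_n]\ge\tfrac12$. You instead derive percolation of $(\eta^\bullet)^*$ under the unique measure, or summability contradicting \eqref{eq:mir-escape-probs}. Both routes work.

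Your remark that the $\eta^\bullet$-marginal is not Markov is well taken. Explorations have to be handled through the joint configuration, with extremal boundary conditions on the unrevealed component; the paper passes over this point.

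The gap is the passage from squares to rectangles. Self-duality only gives crossing bounds for (near-)squares. The later steps of the dichotomy all take as input bounds $c(\rho)<\mu^{\eivRC}_{\ul p}[H(n,\rho n)]<1-c(\rho)$ for the infinite-volume measure and every aspect ratio $\rho$ (step S2 in the paper). These later steps are the strip $\ZZ\times[-n,2n]$ with $\ul 0$ and $\ul 1$ boundary conditions, the pushing lemma, and the renormalisation inequality $u_{7n}\le Cu_n^2$. So this input cannot be deferred to a final step.

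It is also not ``standard gluing by FKG.'' FKG only lets you intersect crossing events, and every gluing scheme needs a crossing of some non-square rectangle in its hard direction. Producing that from square crossings is exactly the RSW theorem. For a dependent measure it does not follow from the Bernoulli proof, which relies on independence of the region above the lowest crossing.

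The paper obtains S2 from K\"ohler-Schindler and Tassion \cite{koehler-schindler-tassion}. Its hypotheses are FKG for $\eta^\bullet$ and invariance of $\mu^{\eivRC}_{\ul p}$ under translations, $\pi/2$-rotations and reflections, and these hold precisely because $p_\NE=p_\NW$. You need to supply this input before the dichotomy can be started. A Beffara--Duminil-Copin type symmetric-domain argument adapted to the 8vRC model would also do.
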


The proof follows the dichotomy argument of \cite{dc-sid-tass-cont} and \cite{hugo-notes} and is very similar to its counterpart for the critical random cluster model on $\ZZ^2$ with $q\in[1,4]$. We sketch it here. 

For shorthand we will write $\ul{0}$ (resp. $\ul{1}$) for the boundary conditions $0,0$ (resp. $1,1$) in the regime $p_\NE\cdotp_\NW\ge p_\es$, and $0,1$ (resp. $1,0$). These are the minimal (resp. maximal) boundary conditions with respect to the ordering $\le$ or $\tilde{\le}$ for the relevant regime. We follow as much as possible the notation of \cite{hugo-notes}.

To help with following the proof, we break it into steps. We start with the fact that the measures $\mu^{\eivRC;i,j}_{\ul{p}}$ are all equal for $i,j\in\{0,1\}$. 
\begin{enumerate}
	\item[\bf{S1}] The escape probability in $\mu^{\eivRC}_{\ul{p}}$ decays slowly: for any black face $x$,
	\bes
		\lim_{k\to\infty}\tfrac{1}{k^{1/3}} \log 
		\mu^{\eivRC}_{\ul{p}}[x\xleftrightarrow{\eta^\bullet} \partial\Lambda_k(x)]=0.
	\ees
	\item[\bf{S2}] We have the box-crossing property in the infinite volume measure: for any $\rho>0$, there exists $c=c(\rho)$ such that for all $n\in\NN$,
	\bes
		c< \mu^{\eivRC}_{\ul{p}}[H(n,\rho n)] < 1-c.
	\ees
	\item[\bf{S3}] We have the box-crossing property in a strip: let $\SS=\ZZ\times[-n,2n]$ and let $\eta'$ denote boundary conditions $\ul{0}$ on the bottom side $\ZZ\times\{-n\}$ and $\ul{1}$ on the top side $\ZZ\times\{2n\}$. Then
	\bes
	\mu^{\eivRC;\eta'}_{\SS,\ul{p}}[H(n,\rho n)] >c.
	\ees
	\item[\bf{S4}] We have the box-crossing property in a closer domain (this is also known as the Pushing Lemma): Let $D_{\rho,l}=[0,\rho n]\times[-n,ln]$ and let $\eta'$ denote boundary conditions $\ul{0}$ on the bottom side of $D_{\rho,l}$ and $\ul{1}$ on the other three sides. For all $\rho>0$, $l\ge2$, there exists $c=c(\rho,l)$ such that for all $n\in\NN$,
	such that 
	\bes
	\mu^{\eivRC;\eta'}_{D_{\rho,l},\ul{p}}[H(n,\rho n)] >c.
	\ees
	\item[\bf{S5}] We have a renormalisation inequality. Let $\c A_n$ be the event that there exists a circuit of open edges in $\eta^\bullet$ in $\Lambda_{2n}\setminus\Lambda_n$ surrounding the origin, and set $u_n = \mu^{\eivRC;\ul{0}}_{\Lambda_{8n},\ul{p}}$. Then there exists $0<C<\infty$ such that 
	\bes
		u_{7n} \le C u_n^2.
	\ees
	\item[\bf{S6}] The quantity $u_n$ is either bounded away from 0 uniformly in $n$, or $u_{7^kn}$ decays stretched-exponentially fast in $k$. If the former holds then Proposition \ref{prop:strong-RSW} holds, and if the latter holds, then $\mu^{\eivRC}_{\ul{p}}[x_0\leftrightarrow \partial\Lambda_n]$ decays stretched-exponentially fast in $n$.
\end{enumerate}

	The statement $\bf{S1}$ is straightforward to prove: by self-duality (and crucially the symmetry $p_\NE=p_\NW$), $\mu^{\eivRC;\ul{0}}_{\ul{p}}[H(n,n+1)] + \mu^{\eivRC;\ul{1}}_{\ul{p}}[H(n,n+1)]=1$, and then by stochastic domination, 
	\bes
		\mu^{\eivRC;\ul{1}}_{\ul{p}}[H(n,n)]
		\ge 
		\mu^{\eivRC;\ul{1}}_{\ul{p}}[H(n,n+1)]
		\ge 
		\frac{1}{2}
		\ge 
		\mu^{\eivRC;\ul{0}}_{\ul{p}}[H(n,n+1)],
	\ees
	and further
	\bes
		n \mu^{\eivRC;\ul{0}}_{\ul{p}}[0\leftrightarrow \partial\Lambda_n]
		\ge 
		\mu^{\eivRC;\ul{0}}_{\ul{p}}[H(n,n+1)]
		=
		\mu^{\eivRC;\ul{1}}_{\ul{p}}[H(n,n+1)]
		=\frac{1}{2},
	\ees
	where in the first equality we use $\mu^{\eivRC;\ul{0}}_{\ul{p}} =\mu^{\eivRC;\ul{1}}_{\ul{p}}$ from Proposition \ref{prop:8v-AT}. The statement $\bf{S1}$ follows readily.\\
	
	The statement $\bf{S2}$ follows from the general proof of Köhler-Schindler and Tassion \cite{koehler-schindler-tassion}, using the FKG property of $\eta^\bullet$ and, crucially, the symmetry $p_\NE=p_\NW$.\\
	
	The proofs of $\bf{S2}\implies\bf{S3}\implies\bf{S4}\implies\bf{S5}\implies\bf{S6}$ are identical to those for critical FK percolation \cite{dc-sid-tass-cont}, \cite{hugo-notes}; they rely on the FKG inequality, the comparison of boundary conditions, the domain Markov property, and the symmetry $p_\NE=p_\NW$. \\

\subsection{Polynomial decay}
We prove the upper bound in \eqref{eq:mir-poly-decay-repeat}. Using Proposition \ref{prop:strong-RSW}, and in the same way as with FK percolation, one proves that
\bes
	\mu^{\eivRC}_{\ul{p}}[0\xleftrightarrow{\eta^\bullet} \partial\Lambda_n] \le n^{-\alpha}.
\ees
Indeed, Proposition \ref{prop:strong-RSW} and shows that $\mu^{\eivRC;\ul{1}}_{\Lambda_{4n}\setminus\Lambda_n}[\c H([2n,3n]\times[-3n,3n])]>c$; applying the FKG inequality to the 4 rotated (by multiples of $\tfrac{\pi}{2}$) versions of this event, one finds that 
\bes
\mu^{\eivRC;\ul{1}}_{\Lambda_{4n}\setminus\Lambda_n}
[\partial\Lambda_n\xleftrightarrow{\eta^\bullet}\partial\Lambda_{4n}]
\le 
1-
\mu^{\eivRC;\ul{1}}_{\Lambda_{4n}\setminus\Lambda_n}[\c H([2n,3n]\times[-3n,3n])]^4
\le 
1-c^4.
\ees
Then many applications of the domain markov property and comparison of boundary conditions gives:
\bes
\begin{split}
	\mu^{\eivRC}_{\ul{p}}[0\xleftrightarrow{\eta^\bullet} \partial\Lambda_n]
	\le 
	\mu^{\eivRC;\ul{1}}_{\Lambda_n,\ul{p}}[0\xleftrightarrow{\eta^\bullet} \partial\Lambda_n]
	&\le 
	\prod_{4^k\le n} \mu^{\eivRC;\ul{1}}_{\Lambda_{4^k}\setminus\Lambda_{4^{k-1}},\ul{p}}
	[\partial\Lambda_{4^{k-1}}\xleftrightarrow{\eta^\bullet} \partial\Lambda_{4^k}]\\
	&\le 
	(1-c^4)^{\lfloor \log_4 n\rfloor} \le n^{-\alpha}.
\end{split}
\ees
From this it follows that
\bes
\begin{split}
	\lim_{V\nearrow\ZZ^2}\mu^{\mir}_{V,\ul{p}}[e_1\leftrightarrow e_2]
	=
	\mu^\eiv_{K,W}[\pi^\bullet_{x_1}\pi^\bullet_{x_2}
	\pi^\circ_{y_1}\pi^\circ_{y_2}]
	&=
	\mu^{\eivRC}_{\ul{p}}[x_1\xleftrightarrow{\eta^\bullet} x_2,
	y_1\xleftrightarrow{\eta^\circ} y_2]\\
	&\le
	\mu^{\eivRC}_{\ul{p}}[x_1\xleftrightarrow{\eta^\bullet} x_2]\\
	&\le
	\mu^{\eivRC}_{\ul{p}}[0\leftrightarrow \partial\Lambda_{|x_1-x_2|}],
\end{split}
\ees
and we have the upper bound of \eqref{eq:mir-poly-decay-repeat}.\\

Obtaining lower bound in the regime $p_\NE\cdot p_\NW\le p_\es$ is also fairly straightforward after one notices that from the FKG inequality one has
\bes
\mu^{\eivRC}_{\ul{p}}[x_1\xleftrightarrow{\eta^\bullet} x_2,
y_1\xleftrightarrow{\eta^\circ} y_2]\\
\ge
\mu^{\eivRC}_{\ul{p}}[x_1\xleftrightarrow{\eta^\bullet} x_2]
\mu^{\eivRC}_{\ul{p}}[y_1\xleftrightarrow{\eta^\circ} y_2].
\ees
However since this inequality does not hold for the range $p_\NE\cdot p_\NW\ge p_\es$, we turn to the following proof which makes full use of the strong RSW property of Proposition \ref{prop:strong-RSW}, and works for the whole range $p_\NE,p_\NW\ge p_\es$. The idea is to estimate the probability of a crossing in $\eta^\bullet$ from $x_1$ to $x_2$ and a crossing in $\eta^\circ$ from $y_1$ to $y_2$ which stay away from one another. Let's make this precise.

For ease of notation, we work with the case where $e_1$ is the horizontal edge with left endpoint the origin (with $x_1$ above and $y_1$ below), and $e_2$ is the shift of $e_2$ by $(2^K,0)$. For other cases an analogous proof holds. For the rest of this section, the sets $\Lambda_n$ and all rectangles are in the original lattice $\ZZ^2$. 

Let $T_k=\Lambda_{2^k}\setminus\Lambda_{2^{k-1}}\cap\{[0,\infty)\times[2^{k-2},\infty)\}$. Let $A_k$ be the event that $2^{k-1}\times[2^{k-2},2^{k-1}]$ is connected in $\eta^\bullet$ within $A_k\cup A_{k+1}$ to $2^{k+1}\times[2^{k},2^{k+1}]$. Let $B_k$ be the event that the top and bottom of the rectangle $[2^{k-1},2^k]\times[2^{k-2},2^{k+2}]$ are connected in $\eta^\bullet$. 

By Proposition \ref{prop:strong-RSW}, we have that there exists $c>0$ such that for all $k\in\NN$ and all $\eta'\in\Om^{\eivRC}_{\ZZ^2}$, 
\bes
	\mu^{\eivRC}_{\ul{p}}[A_k] \ge c; \qquad
	\mu^{\eivRC}_{\ul{p}}[B_k] \ge c.
\ees
Let $A_k'$ (resp. $B_k'$) be $A_k$ (resp. $B_k$) reflected in the line $x=2^{K-1}$. Write $\c A =\bigcap_{k\le K-1} A_k$, $\cB = \bigcap_{k\le K} B_k$, and $\c A'$, $\c B'$ their reflections in the line $x=2^{K-1}$. Let $C$ be the event $H([2^{K-2},3\cdot 2^{K-2}]\times[2^{K-3},2^{K-2}])$. Observe that on the event $\cA \cap \c A' \cap \c B \cap \c B' \cap C$, the faces $x_1$ and $x_2$ are connected in $\eta^\bullet$. Further, by the FKG inequality and Proposition \ref{prop:strong-RSW}, there exists $c>0$ such that uniformly in $K$,
\bes
	\mu^{\eivRC}_{\ul{p}}[\cA \cap \c A' \cap \c B \cap \c B' \cap C] \ge c^K.
\ees

For an event $D$, write $D^*$ for its reflection in the $x$ axis. By Proposition \ref{prop:strong-RSW}, we have that conditional on $\cA \cap \c A' \cap \c B \cap \c B' \cap C$, each of the events $\cA^*,  (\c A')^*,\c B^*, (\c B')^*, C^*$ all still have probability at least $c$. One therefore finds that 
\bes
\mu^{\eivRC}_{\ul{p}}[\cA \cap \c A' \cap \c B \cap \c B' \cap C \cap (\cA \cap \c A' \cap \c B \cap \c B' \cap C)^*] \ge (c^2)^K.
\ees
On this event $x_1$ and $x_2$ are connected in $\eta^\bullet$ and $y_1$ and $y_2$ are connected in $\eta^\circ$, and so
\bes
	\lim_{V\nearrow\ZZ^2}\mu^{\mir}_{V,\ul{p}}[e_1\leftrightarrow e_2]
	=
	\mu^\eiv_{K,W}[\pi^\bullet_{x_1}\pi^\bullet_{x_2}
	\pi^\circ_{y_1}\pi^\circ_{y_2}]
	=
	\mu^{\eivRC}_{\ul{p}}[x_1\xleftrightarrow{\eta^\bullet} x_2,
	y_1\xleftrightarrow{\eta^\circ} y_2] \ge (c^2)^K \ge n^{-\alpha_1}.
\ees
We can replace $\mu^{\mir}_{V,\ul{p}}$ in the first expression by the mirror measure with any boundary conditions for which the convergence of part 1 of Theorem \ref{thm:main-mirror} holds.
This completes the proof of \eqref{eq:mir-poly-decay-repeat}. \\

\subsection{Exact decay rates}
In this section we prove parts $2,3$, and 4 of Proposition \ref{prop:poly-decay}.

\begin{proof}[Proof of Part 2 of Proposition \ref{prop:poly-decay}]
    In the case $p_\NW\cdot p_\NE= p_\es$ observe that if $\pi=(\pi^\bullet,\pi^\circ)\sim\mu^{\eiv}_{K,W}$ then $\pi^\bullet$ and $\pi^\circ$ are independent and both distributed as critical Ising models on the square lattice. Then
\bes
    \lim_{V\nearrow\ZZ^2}\mu^{\mir}_{V,\ul{p}}[e_1\leftrightarrow e_2]
	=
	\mu^\eiv_{K,W}[\pi^\bullet_{x_1}\pi^\bullet_{x_2}
	\pi^\circ_{y_1}\pi^\circ_{y_2}]
    =
    \mu^{\mathrm{Ising}}_{\ZZ^2,\beta_c}[\pi^\bullet_{x_1}\pi^\bullet_{x_2}]\cdot
	\mu^{\mathrm{Ising}}_{\ZZ^2,\beta_c}[\pi^\circ_{y_1}\pi^\circ_{y_2}]
    \sim
    (||e_1-e_2||^{-1/4})^2.
\ees
The decay here classical, see \cite{mccoy-wu} or the more recent \cite{chel-hong-izy}.
\end{proof}

\begin{proof}[Proof of Part 3 of Proposition \ref{prop:poly-decay}]
    Our work in Lemma \ref{lem:spin-spins-are-mirror-connections} shows that
    for $e_1$, $e_2$ vertical edges on the same horizontal row (or horizontal edges on the same vertical column), we have
    \bes
        \lim_{V\nearrow\ZZ^2}\mu^{\mir}_{V,\ul{p}}[e_1\leftrightarrow e_2]
    	=
    	\mu^{\eivRC}_{\ul{p}}[y_1\xleftrightarrow{\eta^\bullet} y_2,
	       z_1\xleftrightarrow{\eta^\circ} z_2]
        =
        \mu^{\eivRC}_{\ul{p'}}[y_1\xleftrightarrow{\eta^\bullet} y_2,
	       z_1\xleftrightarrow{\eta^\circ} z_2]
    	=
    	\mu^\eiv_{K',W'}[\pi^\bullet_{x_1}\pi^\bullet_{x_2}
    	\pi^\circ_{y_1}\pi^\circ_{y_2}],
    \ees
    for all $\ul{p}$, $\ul{p'}\in\cP$ such that $\Delta(\ul{p})=\Delta(\ul{p'})$. By Part 1 of Proposition \ref{prop:poly-decay}, we have the lower bound $C_1n^{-\alpha_1} \le 
            \lim_{V\nearrow\ZZ^2}
            \mu_{V,\ul{p}}^{\mir}
            [e_1\leftrightarrow e_2]$ in the case $e_1$, $e_2$ vertical edges on the same horizontal row (or horizontal edges on the same vertical column), for $p_\NW=p_\NE\ge p_\es$, so we have it for all $\ul{p}\in\cP$ with $\Delta\in[0,\tfrac{1}{2}]$.

    By using finite energy, the lower bound for all pairs $e_1$, $e_2$ follows from
    \bes
    \begin{split}
        \lim_{V\nearrow\ZZ^2}\mu^{\mir}_{V,\ul{p}}[e_1\leftrightarrow e_3]
    	&=
    	\mu^{\eivRC}_{\ul{p}}[y_1\xleftrightarrow{\eta^\bullet} y_3,
	       z_1\xleftrightarrow{\eta^\circ} z_3]\\
        &\ge 
        \mu^{\eivRC}_{\ul{p}}[y_1\xleftrightarrow{\eta^\bullet} y_2 \xleftrightarrow{\eta^\bullet} y_3,
	       z_1\xleftrightarrow{\eta^\circ} z_2 \xleftrightarrow{\eta^\circ} z_3]\\
        &\ge 
        \mu^{\eivRC}_{\ul{p}}[y_1\xleftrightarrow{\eta^\bullet} y_2,
	       z_1\xleftrightarrow{\eta^\circ} z_2]
        \mu^{\eivRC}_{\ul{p}}[y_2\xleftrightarrow{\eta^\bullet} y_3,
	       z_2\xleftrightarrow{\eta^\circ} z_3]\\
    	&=
    	\lim_{V\nearrow\ZZ^2}\mu^{\mir}_{V,\ul{p}}[e_1\leftrightarrow e_2]
        \lim_{V\nearrow\ZZ^2}\mu^{\mir}_{V,\ul{p}}[e_2\leftrightarrow e_3],
    \end{split}
    \ees
    where in the second inequality we use the FKG inequality for $\eta$, Lemma \ref{lem:8vRC-FKG}, which holds for $p_\NW,p_\NE\ge p_\es$ and $p_\NW\cdot p_\NE\le p_\es$. 
\end{proof}

\begin{proof}[Proof of Part 4 of Proposition \ref{prop:poly-decay}]
    We claim that the decay is a consequence of the convergence of the height function of the six-vertex model with $\Delta\in[-1,-\tfrac{1}{2}]$ to the Gaussian free field, recently proved in \cite{dc-etal-GFF-convergence}. 

    Recall $e_1,e_2$ are vertical edges on the same horizontal row at distance $n$ from each other, and $u_i$ is (resp. $u_i'$) is the face adjacent to and to the left of (resp. right of) $e_i$.
    Theorem 4.12 and Lemma 6.6 of \cite{dc-etal-GFF-convergence} together prove that 
    \bes
        \mu^{\sv}_{a,b,c}[(h(u_0)-h(u_0'))(h(u_x)-h(u_x')]
        =
        \int -a^2 (1-a)^n \d \mu_\infty(a,b),
    \ees
    where $\mu_\infty$ is a specific finite positive measure on $\RR_{>0}\times\RR$, supported on $(0,2]\times[-\pi,\pi]$. Writing $\delta=1/n$, one has that the above is equal to
    \bes
        \int -(a/\delta)^2 (1-(a/\delta))^{1/\delta} \d \mu_\infty^{(\delta)}(a,b),
    \ees
    where $\mu_\infty^{(\delta)}(U):=\mu_\infty(\delta U)$ for all $U\subset\RR_{>0}\times\RR$ measurable. Then Theorem 9.1 of \cite{dc-etal-GFF-convergence} shows that $\mu_\infty^{(\delta)}\to\mu$ in the space of measures $\cM$ (see Definition 6.3 of \cite{dc-etal-GFF-convergence}), where $\mu=\tfrac{1}{2}(\delta_{b=a}+\delta_{b=-a})\frac{\sigma^2}{2\pi a} \d a$. Here $\sigma^2 = \frac{2}{\arccos(\Delta)}$. This, together with an identical proof to that of Lemma 6.8 of \cite{dc-etal-GFF-convergence} gives (taking out a factor of $\delta^{-2}=n^2$):
    \bes
        \int -a^2 (1-(a/\delta))^{1/\delta} \d \mu_\infty^{(\delta)}(a,b)
        \to
        \int -a^2 e^{-a} \d \mu 
        =
        -\frac{\sigma^2}{2\pi} \int_0^\infty ae^{-a} \d a
        =
        -\frac{\sigma^2}{2\pi}
    \ees
    as $\delta\to0$.
    This completes the proof of Proposition \ref{prop:poly-decay} and also the proof of Part 1 of Theorem \ref{thm:xxz-decay-rates}.
\end{proof}


\part{Space-time models}\label{part:space-time}
\section{The space-time six-vertex height function delocalises}\label{sec:delocalisation}

\subsection{Definitions of space-time six-vertex models}
In this section we define the further representations of the space-time six vertex model, in terms of arrow configurations and Ising spins, as well as the percolation configurations $\xi^\bullet$, $\xi^\circ$. To keep track of notation, all of the sets of configurations for space-time models will be denoted $\ul{\Om}$ instead of $\Om$, and the measures denoted $\nu$ instead of $\mu$.\\

We start with arrow configurations. Let $F=(V,F)$ be a domain. 
Define $E=(\overline{V}+(\tfrac{1}{2},0))\cup (\overline{V}-(\tfrac{1}{2},0))$. Define $\partial_\horiz E$ to be those points $x$ of $\partial E$ with one of $x\pm(\tfrac{1}{2},0)$ not in $V$ (note $\partial_\horiz E$ is a union of intervals). Define $\partial_\vert E$ to be $\partial E \setminus (V+(\tfrac{1}{2},0))\cup (V-(\tfrac{1}{2},0))$ (note $\partial_\vert E$ is a finite set of discrete points). Let $\partial E = \partial_\horiz E \cup \partial_\vert E$ (this is the continuous analogue of the set of edges with exactly one endpoint in $V$).

Let $\ul{\Om}^{\sv,\rightarrow}_{F}$ be the set of functions $\kappa:E\to\{\uparrow,\downarrow\}$ such that the restriction to any interval is piecewise constant and right continuous, and the points of discontinuity of $\kappa$ can be partitioned into pairs at adjacent points $v\pm(\tfrac{1}{2},0)$, $v\in V$, and $\kappa$ always takes different values at the points $v\pm(\tfrac{1}{2},0)$. We also assign an orientation to each point of discontinuity of $\kappa$; namely we say $v\in\kappa_{\rightarrow}$ if $\kappa_{v-(\tfrac{1}{2},0)}=\downarrow$ and $\kappa_{v+(\tfrac{1}{2},0)}=\uparrow$, and vice-versa for $v\in\kappa_{\leftarrow}$.

Note that the condition that $\kappa$ always takes different values at the points $v\pm(\tfrac{1}{2},0)$, together with right continuity, forms the analogue of the ice rule in this setting. Define the associated sigma-algebra $\FF^{\sv,\rightarrow}_{F}$ analogously to the height function case, using the $w^\#$ topology \cite[Section A2.6]{daley-verejones-1}. 

For $\kappa\in\ul{\Om}^{\sv,\rightarrow}_{F}$, let $\om^{\uparrow\downarrow}[\kappa]$ be the set of $v\in V$ such that $\kappa$ takes differing values on $v\pm(\tfrac{1}{2},0)$, and let $\om^{\uparrow\uparrow}[\kappa]$ be those $v\in V$ such that it takes the same value. Define the six-vertex orientation measure on $\ul{\Om}^{\sv,\rightarrow}_{F}$ as $\nu^{\sv,\rightarrow}_{F,\Delta}$ defined by
\bes
    \d \nu^{\sv,\rightarrow}_{F,\Delta}
    \propto
    \d \nu^{\poi}_{V,1}(\chi)
    \mathbbm{1}_{\{\kappa\sim\chi\}}
    \exp\left[u\left|\om^{\uparrow\uparrow}[\kappa]\right| + (1-u)\left|\om^{\uparrow\downarrow}[\kappa]\right|\right]
    ,
\ees
where $\chi$ is a Poisson point process of rate 1 and where $\kappa\sim\chi$ if and only if the points where $\kappa$ changes value is exactly the points $v\pm(\tfrac{1}{2},0)$ for $v\in\chi$.

The points of $\chi$ here are analogous to the $a$-type vertices in the discrete six-vertex model. The points $\om^{\uparrow\uparrow}[\kappa]$ are the $b$-type vertices and the points $\om^{\uparrow\downarrow}[\kappa]$ are the $c$-type vertices. 

Finally for $\kappa'\in\ul{\Om}^{\sv,\rightarrow}_{\cF}$, we define 
\bes
    \d \nu^{\sv,\rightarrow}_{F,\Delta}
    \propto
    \d \nu^{\poi}_{V,1}(\chi)
    \mathbbm{1}_{\{\kappa\sim\chi\}}
    \exp\left[u\left|\om^{\uparrow\uparrow}[\kappa]\right| + (1-u)\left|\om^{\uparrow\downarrow}[\kappa]\right|\right]
    \mathbbm{1}_{\{\kappa = \kappa'\ \text{on} \ \partial E\}}
\ees
to be the measure with boundary conditions $\kappa'$.\\

We turn to the spin representation. Recall that the discrete six-vertex model can be written as two Ising models, one on the black faces of $\ZZ^2$ and one on the white faces, such that their domain walls do not intersect. We do the same here, replacing classical Ising models with (the space-time representation of) quantum Ising models. 
The quantum Ising model is a quantum spin system, and when defined on a given graph $G$, has a probabilistic representation in terms of space-time configurations of classical spins, which can be seen as a scaling limit in one direction of a classical Ising model. We call this the space-time Ising model.

Let $F=(V,F)$ be a domain. Recall $\ul{\Om}^{\ising}_{F}$ is the set of functions  $\sigma:F\to\{\pm1\}$ whose restriction to any interval in $\{x\}\times\RR$ is right continuous and has finitely many points of discontinuity in any compact interval. 
We endow this space with the $w^\#$ topology \cite[Section A2.6]{daley-verejones-1} and write $\FF^{\ising}_{F}$ for the Borel sigma algebra generated by this topology.


For a configuration $\sigma\in\ul{\Om}^{\ising}_{F}$ recall $\om[\sigma]=\om_{\vert}[\sigma]\cup\om_{\horiz}[\sigma]$ is the set of domain walls of $\sigma$. That is, $\om_{\vert}[\sigma]$ is the set of points $(x,t)\in\ZZ\times\RR$ such that $\sigma_{(x-\tfrac{1}{2},t)} \neq \sigma_{(x+\tfrac{1}{2},t)}$, and $\om_{\horiz}[\sigma]$ the set of points of discontinuity of $\sigma$. 

Let $\lambda,J\in\RR$, $\lambda>0$. Define the space-time Ising model (with free boundary conditions) as the measure $\nu^{\ising}_{F,\Lambda,J}$ given by
\bes
	\d\nu^{\ising}_{F,\lambda,J}(\sigma)
	\propto
    \d\nu^\poi_{F,\lambda}(\chi)
	\mathbbm{1}_{\{\om_{\horiz}[\sigma]=\chi\}}
    e^{-J|\om_{\vert}[\sigma]|}
	,
\ees
where the sum is over all $\sigma\in\ul{\Om}^\ising_F$ satisfying the given condition. One should think of $J$ as the coupling constant in the horizontal direction (higher $J$ induces higher cost of a horizontal disagreement) and $\lambda$ the coupling constant in the vertical direction (higher $\lambda$ gives more disagreements in the vertical direction). 

Define the model with plus boundary conditions $\nu^{\ising,+}_{F,\lambda,J}$ (resp. minus $\nu^{\ising,-}_{F,\lambda,J}$) as the same model conditioned on $\sigma=1$ (resp. $\sigma=-1$) everywhere on $\partial F$.\\

We can now define the six-vertex spin representation. Let $F$ be a domain. Notice that $F^\bullet$ and $F^\circ$ are also domains, just rescaled horizontally by a factor of 2. 

Let $\Delta\in[-1,1)$, so $u\in[0,1)$. 
\bes
\begin{split}
	\nu^{\sv}_{F,\Delta}(\sigma^\bullet,\sigma^\circ)[\ \cdot \ ]
    =
    \nu^{\ising}_{F^\bullet,1,-\Delta}(\sigma^\bullet)
    \otimes
    \nu^{\ising}_{F^\circ,1,-\Delta}(\sigma^\circ)
    [ \ \cdot \ | \ \om[\sigma^\bullet]\cap\om[\sigma^\circ]=\varnothing].
\end{split}
\ees

One can write this explicitly as

\be\label{eq:q:6v-measure}
\begin{split}
	\d\nu^{\sv}_{F,\Delta}(\sigma^\bullet,\sigma^\circ)
	&\propto
	\d\nu^\poi_{F^\bullet,1}(\chi^\bullet)\d\nu^\poi_{F^\circ,1}(\chi^\circ)
	\mathbbm{1}_{\{\om_{\horiz}[\sigma^\bullet]=\chi^\bullet\}}
	\mathbbm{1}_{\{\om_{\horiz}[\sigma^\circ]=\chi^\circ\}}
	\mathbbm{1}_{\{\om[\sigma^\bullet]\cap\om[\sigma^\circ]=\es\}}
	e^{\Delta|\om_\vert[\sigma]|}.
\end{split}
\ee



We define $\nu^{\sv,\bplus,\wplus}_{F,\Delta}$ to be the measure $\nu^{\sv}_{F,\Delta}$ conditioned on the event $\sigma^\bullet,\sigma^\circ\equiv1$ on $\partial F$. Similarly one can define $\nu^{\sv,\bplus}_{F,\Delta}$, where only $\sigma^\bullet\equiv1$ on $\partial F^\bullet$, and $\nu^{\sv,\wplus}_{F,\Delta}$ similarly. The measures $\nu^{\sv,\bminus,\wminus}_{F,\Delta}$, $\nu^{\sv,\bminus}_{F,\Delta}$, $\nu^{\sv,\wminus}_{F,\Delta}$ are defined similarly. 

Finally, we also need the model with periodic boundary conditions in either the vertical or horizontal direction. Define $\nu^{\sv,\bullet}_{\cyl_{L,\beta}^h,
\Delta}$ to be the measure \eqref{eq:q:6v-measure} when $F$ is the cylinder $\cF\cap[-L+1,L]\times[-\tfrac{\beta}{2},\tfrac{\beta}{2}]$ where we identify the top and bottom points, $L$ is odd, and we condition that $\sigma^\bullet$ is constant on $\{L-\tfrac{1}{2}\}\times[-\tfrac{\beta}{2},\tfrac{\beta}{2}]$ and also constant on $\{-L-\tfrac{1}{2}\}\times[-\tfrac{\beta}{2},\tfrac{\beta}{2}]$ (note these two values can differ). Let $\nu^{\sv,\circ}_{\cyl_{L,\beta}^h,\Delta}$ be defined analogously with $\sigma^\circ$ and $L$ even. 

Let $\nu^{\sv,\bullet}_{\cyl_{L,\beta},\Delta}$ be the measure \eqref{eq:q:6v-measure} where again we let $F=\cF\cap[-L+1,L]\times[-\tfrac{\beta}{2},\tfrac{\beta}{2}]$, but this time we let the spins on $\{L-\tfrac{1}{2}\}\times[-\tfrac{\beta}{2},\tfrac{\beta}{2}]$ interact with those on $\{-L-\tfrac{1}{2}\}\times[-\tfrac{\beta}{2},\tfrac{\beta}{2}]$ (giving periodicity in the horizontal direction) and we condition on $\sigma^\bullet$ to be constant on $\partial^+ F^\bullet=\cF^\bullet\cap(\ZZ\times\{\tfrac{\beta}{2}\})$ and also constant on $\partial^- F^\bullet=\cF^\bullet\cap(\ZZ\times\{-\tfrac{\beta}{2}\})$ (these two values can differ). Let $\nu^{\sv,\circ}_{\cyl_{L,\beta},\Delta}$ be defined analogously using $\sigma^\circ$.

 Note that one can obtain the measure on height functions $\nu^{\sv-\hom,0,1}_{F,\Delta}$ from the measure on spins $\nu^{\sv,\bplus,\wplus}_{F,\Delta}$ by setting $h$ to take the values $0$ and $1$ on $\partial F$, and
\be\label{eq:q:h-from-sigma}
\begin{split}
    &1. \ \ \ \sigma^\bullet_x\sigma^\circ_y=h(x)-h(y) \qquad \mathrm{for} \  x\in F^\bullet, \  y=x\pm(1,0)\\
    &2. \ \ \  h \ \mathrm{constant \ on \ intervals \ where \ } \sigma \ \mathrm{is \ constant}.
\end{split}
\ee
The sigma algebra $\FF^{\sv,\#}_{F}$ we use is that generated by the $w^\#$ topology \cite[Section A2.6]{daley-verejones-1}.\\

We now define the bond percolations $\xi^\bullet$ and $\xi^\circ$ which are central to the argument of \cite{glazman-lammers}. 
Recall that $\om[\sigma^\bullet]$ can be thought of as a bond percolation configuration on $V^\circ$, and similar for $\om[\sigma^\circ]$ on $V^\bullet$. The configuration $\xi^\bullet$ will be a bond percolation configuration on $V^\bullet$, and $\xi^\circ$ is a bond percolation configuration on $V^\circ$. We define them as follows, conditional on a spin configuration $\sigma=(\sigma^\bullet,\sigma^\circ)$:
\be\label{eq:q:sample-both-xis-from-6v}
\begin{split}
    \text{On $\om[\sigma^\bullet]$}: &
    \begin{cases}
        \text{Set $\xi^\circ=1$, $\xi^\bullet=0$ deterministically}
    \end{cases}\\
    \text{On $\om[\sigma^\circ]$}: &
    \begin{cases}
        \text{Set $\xi^\bullet=1$, $\xi^\circ=0$ deterministically}
    \end{cases}\\
    \text{On $E^\bullet_{\vert}\setminus(\om[\sigma^\bullet]\cup\om[\sigma^\circ])$}: &
    \begin{cases}
        \text{Sample $X$ a PPP rate 1} \\
        \text{Set each $x\in X$ to be in $Y$ w.p. $1-2u$ independently.} \\
        \text{Set $\xi^\circ=1$ on $X$ and $\xi^\circ=0$ outside $X$.} \\ 
        \text{Set $\xi^\bullet=0$ on $Y$ and $\xi^\bullet=1$ outside $Y$.}
    \end{cases}\\
    \text{On $E^\circ_\vert\setminus(\om[\sigma^\bullet]\cup\om[\sigma^\circ])$}: &
    \begin{cases}
        \text{Do the same with black and white exchanged}
    \end{cases}
\end{split}
\ee

It is straightforward to check that for any configuration of $(\xi^\bullet,\xi^\circ)$ sampled as above, we have that if $\Delta\in[-1,0]$ (so $u\in[0,\tfrac{1}{2}]$), then
\be\label{eq:q:super-duality}
    (\xi^\bullet)^*\subset \xi^\circ \qquad \mathrm{and} \qquad 
    (\xi^\circ)^*\subset \xi^\bullet.
\ee
This property is sometimes called the ``super-duality'' of $\xi^\bullet$ and $\xi^\circ$. We note for clarity that if one wants to sample $\xi^\bullet$ and not $\xi^\circ$, one does the following:
\be\label{eq:q:sample-xi-black-from-6v}
\begin{split}
    \text{On $\om[\sigma^\bullet]$}: &
    \begin{cases}
        \text{Set $\xi^\bullet=0$ deterministically}
    \end{cases}\\
    \text{On $\om[\sigma^\circ]$}: &
    \begin{cases}
        \text{Set $\xi^\bullet=1$ deterministically}
    \end{cases}\\
    \text{On $E^\bullet_{\vert}\setminus(\om[\sigma^\bullet]\cup\om[\sigma^\circ])$}: &
    \begin{cases}
        \text{Sample $Y$ a PPP rate $1-2u$} \\
        \text{Set $\xi^\bullet=0$ on $Y$ and $\xi^\bullet=1$ outside $Y$.}
    \end{cases}\\
    \text{On $E^\circ_\vert\setminus(\om[\sigma^\bullet]\cup\om[\sigma^\circ])$}: &
    \begin{cases}
        \text{Sample $X$ a PPP rate 1} \\
        \text{Set $\xi^\bullet=1$ on $X$ and $\xi^\bullet=0$ outside $X$.} 
    \end{cases}
\end{split}
\ee

We will often write $\nu^{\sv}_{F,\Delta}$ for the measure on $(\sigma^\bullet,\sigma^\circ,\xi^\bullet,\xi^\circ)$ or $(\sigma^\bullet,\sigma^\circ,\xi^\bullet)$, depending on what's needed.

\begin{theorem}[Delocalisation]\label{thm:Q:delocalisation}
    Let $\Delta\in[-1,0]$, so $u\in[0,\frac{1}{2}]$. For any fixed $x\in\cF$,
    \be\label{eq:Q:delocalisation-repeat}
        \lim_{F\nearrow(\cF)}\Var_{\nu^{\sv-\hom;0,1}_{F,\Delta}}
        \left[h(x)\right]=\infty.
    \ee    
    The measure $\nu^{\sv,\rightarrow}_{F,\Delta}\to\nu^{\sv,\rightarrow}_{\Delta}$ 
    weakly in $(\ul{\Om}^{\sv,\rightarrow}_{\cF}, \FF^{\sv,\rightarrow,\#}_{\cF})$. Moreover, on the quadruple  $(\sigma^\bullet,\sigma^\circ,\xi^\bullet,\xi^\circ)$,
    the measures $\nu^{\sv}_{F,\Delta}, \nu^{\sv,\bplus\wplus}_{F,\Delta}, \nu^{\sv,\bplus}_{F,\Delta}$, $\nu^{\sv,\wplus}_{F,\Delta}$, $\nu^{\sv,*}_{\cyl_{L,\beta},\Delta}$, $\nu^{\sv,*}_{\cyl_{L,\beta}^h,\Delta}$ all converge to a common limit 
    $\nu^{\sv}_{\Delta}$ weakly in $(\ul{\Om}^{\sv}_{\cF}, \FF^{\sv,\#}_{\cF})$ as $F\nearrow(\cF)$ and as $L,\beta\to\infty$, for $*=\bullet,\circ$. The limits $L,\beta$ can be taken in either order or simultaneously, and $F$ can be taken to converge first to $\cF\cap([L_1,L_2]\times\RR)$ or $\cF\cap(\ZZ\times[\alpha_1,\alpha_2])$ and then to $\cF$. 
    
    The limit $\nu^{\sv}_{\Delta}$ is Gibbs, $\ZZ\times\RR$ translation-invariant and ergodic, and tail-trivial, and under this measure, the percolation processes $\xi^\dagger$ and $(\xi^\dagger)^*$ ($\dagger=\bullet,\circ)$ both do not percolate almost surely. As a consequence,
    \be\label{eq:Q:6v-decay-correlations}
    \begin{split}
        \lim_{||x_1-x_2||\to\infty} \nu^\sv_{\Delta}[\sigma^\bullet_{x_1}\sigma^\bullet_{x_2}]=0 
        \qquad
        \mathrm{and} 
        \qquad
        \lim_{\substack{||x_1-x_2||\to\infty \\ ||y_1-y_2||\to\infty}}
        \nu^\sv_{\Delta}[\sigma^\bullet_{x_1}\sigma^\bullet_{x_2}
        \sigma^\circ_{y_1}\sigma^\circ_{y_2}]=0.
    \end{split}
    \ee
\end{theorem}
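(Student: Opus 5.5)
I will follow the percolation proof of Glazman and Lammers \cite{glazman-lammers} for the discrete six-vertex model, transporting each ingredient to the space-time setting. The central object is the pair $(\xi^\bullet,\xi^\circ)$ sampled by \eqref{eq:q:sample-both-xis-from-6v}. The approach is to establish the structural properties on which the discrete argument rests, run the non-coexistence argument, and then deduce delocalisation, uniqueness of the limit and the listed properties.

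\textbf{Step 1: structural properties.} For $u\in[0,\tfrac12]$ I need four things.
\begin{enumerate}
\item The super-duality \eqref{eq:q:super-duality}, which holds by construction.
\item A domain Markov property for $\xi^\bullet$ analogous to \eqref{eq:6v-domain-markov-xi}. Conditionally on a circuit $\gamma\subset\xi^{\bplus}$, the inside is distributed as $\nu^{\sv,\bplus}_{\gamma_{\inn},\Delta}$ and is independent of the outside. I expect this to follow from a product form of the joint density, analogous to \eqref{eq:product-weights-kappa-xi}, written for Poisson processes.
\item FKG and monotonicity in boundary conditions for $(\sigma^\bullet,\xi^\bullet)$ with respect to the ordering of \cite{glazman-lammers}. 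I would prove this as a limit of the discrete statements: take $\ZZ\times\ZZ_m$ approximations with weights tuned so that $\mu^{\sv}$ converges to $\nu^{\sv}$ as $m\to\infty$, as sketched in Appendix \ref{appendix:FKG}. FKG for increasing cylinder events then passes to the limit in the $w^\#$ topology. The condition $\Delta\le0$ is used here, since it corresponds to the discrete regime $a,b\le c$.
\item Symmetries: translation invariance in $\ZZ\times\RR$, reflection in the vertical and horizontal axes, and the colour symmetry $\bullet\leftrightarrow\circ$ given by the horizontal shift by $1$.
\end{enumerate}

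\textbf{Step 2: non-coexistence and no percolation.} First I take infinite-volume limits $\nu^{\sv,\bplus}_\Delta$ and $\nu^{\sv,\bminus}_\Delta$ along monotone sequences, as justified by FKG. Then I adapt the Zhang-type argument of \cite{glazman-lammers} to show that $\xi^\bullet$ and $(\xi^\bullet)^*$ cannot both contain infinite clusters. This uses ergodicity, the Burton–Keane uniqueness argument (for which a finite energy property in the Poisson setting suffices), and the reflection symmetries in a square box.

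Super-duality converts the existence of an infinite $\xi^\bullet$ cluster into a bound on the $\xi^\circ$ side. Combined with the colour symmetry, this rules out percolation of $\xi^\bullet$ under any of the limits. Consequently there are infinitely many $\xi^\bullet$-circuits around any point.

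The main obstacle is this step. Box-crossing arguments in $\RR\times\ZZ$ are anisotropic, so squares must be replaced by rectangles whose vertical side is scaled suitably. Moreover, the reflection symmetry exchanging the two axes is absent; only the separate horizontal and vertical reflections survive. I would first try to follow \cite{glazman-lammers} using only those two reflections together with the $\pi$-rotation, which is the route the non-coexistence theorem needs.

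\textbf{Step 3: delocalisation and the remaining claims.} Each $\xi^\bullet$-circuit carrying spin $+$ or $-$ is, by the domain Markov property, an independent fair coin for the height increment. Infinitely many such circuits therefore make $\Var[h(x)]$ diverge as $F\nearrow\cF$, as in \cite{lis-spins,glazman-lammers}; this gives \eqref{eq:Q:delocalisation-repeat}.

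Sandwiching every boundary condition between $\bplus$ and $\bminus$ via FKG, and using the circuits to decouple, shows that all the listed measures have the common limit $\nu^{\sv}_\Delta$. For the cylinder measures $\nu^{\sv,*}_{\cyl_{L,\beta},\Delta}$ and $\nu^{\sv,*}_{\cyl^h_{L,\beta},\Delta}$, I would use the cylinder FKG argument of Lemma \ref{lem:cylinder-6v-converges}. Since that argument is monotone in each limit separately, the limits may be taken in any order.

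Tail triviality and ergodicity follow exactly as in the proof of Proposition \ref{prop:mir-conv-from-6v}, by conditioning on a circuit $\gamma\subset\xi^\bullet$. Finally, \eqref{eq:Q:6v-decay-correlations} follows from the absence of infinite clusters combined with a colour-switching argument on the clusters of $\xi^\bullet$.
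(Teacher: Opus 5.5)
\textbf{The gap is in Step 2.} Super-duality, colour symmetry and non-coexistence do not rule out percolation of $\xi^\bullet$.

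Super-duality $(\xi^\bullet)^*\subset\xi^\circ$ (equivalently $(\xi^\circ)^*\subset\xi^\bullet$) only transfers percolation of a \emph{dual} process to a primal one. The chain it supports is: $(\xi^\bullet)^*$ percolates $\Rightarrow$ $\xi^\circ$ percolates $\Rightarrow$ (by domination and the shift by $(1,0)$) $\xi^{\bplus}$ percolates under $\nu^{\sv,\bplus}_\Delta$. Together with $(\xi^{\bplus})^*\supset(\xi^\bullet)^*$, this contradicts non-coexistence.

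What you actually obtain is therefore:
\begin{itemize}
\item $(\xi^\bullet)^*$ does not percolate, hence there are infinitely many $\xi^\bullet$-circuits around each point. It is non-percolation of the dual that produces circuits, not non-percolation of $\xi^\bullet$.
\item $\xi^\circ$ does not percolate under $\nu^{\sv,\bplus}_\Delta$, by the $\pm$-symmetry of $\sigma^\circ$ there together with ergodicity.
\end{itemize}

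The situation in which $\xi^{\bplus}$ has an infinite cluster under $\nu^{\sv,\bplus}_\Delta$ while its dual does not is consistent with all of these soft properties. It is precisely the localisation scenario the theorem has to exclude. In that situation all your $\xi^\bullet$-circuits may lie in the single infinite $+$ cluster and carry the same sign. You then obtain neither $\nu^{\sv,\bplus}_\Delta=\nu^{\sv,\bminus}_\Delta$, on which your sandwiching and the common limit in Step 3 rest, nor independent fair signs along nested circuits. The random-walk argument for $\Var[h(x)]$ then collapses.

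\textbf{The missing ingredient} is the height-function comparison of \cite[Theorem 3]{glazman-lammers} (the $\TT$-circuit argument), which the paper adapts in Lemma \ref{lem:q:xi-bullet-no-perc}. It runs as follows.
\begin{itemize}
\item Assume $\xi^{\bplus}$ percolates, and define $\nu^{\sv-\hom,0}_\Delta$ by setting $h=0$ on the infinite $\xi^{\bplus}$-cluster. Check that it is a limit of finite-volume height measures, hence Gibbs and FKG. Note that $h^1:=1-h^0(\cdot-(1,0))$ has law $\nu^{\sv-\hom,1}_\Delta$.
\item Apply non-coexistence to the site percolation $\{h^0\le-1\}$ on $\cF^\circ$, whose dual $\{h^0\ge1\}$ has the same law by $h\mapsto-h$. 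This gives circuits of $\{h^0\ge 1\}$ around any box.
\item Keep $h^1$ outside the shifted circuit and resample it inside. Comparison of boundary conditions then gives $h^1\le h^0$ on large boxes with high probability, so $\nu^{\sv-\hom,1}_\Delta\preceq\nu^{\sv-\hom,0}_\Delta$.
\item Iterating this contradicts the fact that these measures are height shifts of one another.
\end{itemize}

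\textbf{Two smaller points.} Non-coexistence (Proposition \ref{prop:non-coexistence}) must be applied to $\xi^{\bplus}$, which is FKG jointly with $\sigma^\bullet$ and $-\xi^{\bminus}$. It cannot be applied to $\xi^\bullet$, for which no FKG inequality is available. Also, it needs only FKG and $\ZZ\times\RR$-invariance, so no box-crossing or Zhang-type reflection argument is required there, and the anisotropy you identify as the main obstacle does not arise.
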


The rest of this section is devoted to proving this theorem.

\subsection{Burton-Keane and Sheffield non-coexistence}
We need a couple of established results for ordinary bond and site percolation models in the space-time setting. Let's define what we mean by bond and site percolation here.

Let $G=(V,E)$ be a connected graph. Define a metric on $(V\times\RR)\cup(E\times\RR)$ given by the product of the graph distance on $G$ (edges are distance $\tfrac{1}{2}$ from each of their vertices) and the usual distance metric on $\RR$. 

Let $\mathfrak{B}$ be the set of (space-time) bond percolation configurations on $G\times\RR$, that is, the set of functions $\om:(V\times\RR)\cup(E\times\RR)\to\{0,1\}$ such that the points of $V\times\RR$ sent to 0 are locally finite in each copy of $\RR$, and the same for those points of $E\times\RR$ sent to 1. We say a measure $\mu$ on $\mathfrak{B}$ is a (space-time) bond percolation model on $G\times\RR$. We often identify a configuration $\om$ with the preimage of 1 under the map $\om$. We equip $\mathfrak{B}$ with the $w^\#$ topology generated by the metric $d^\#$ \cite[Section A2.6]{daley-verejones-1}. Here we identify $\om\in\mathfrak{B}$ with a simple point measure with point masses at each point of $V\times\RR$ sent to 0 and each point of $E\times\RR$ sent to 1 so that $d^\#$ is well defined. We write $\FF^\#(\mathfrak{B})$ for the Borel	$\sigma$-algebra generated by this topology.

Let $\mathfrak{S}$ be the set of (space-time) site percolation configurations, that is, the set of functions $\om:V\times\RR\to{0,1}$ which are right continuous in each copy of $\RR$ and have that the set points of discontinuity in each copy of $\RR$ is locally finite. We say a measure $\mu$ on $\mathfrak{S}$ is a (space-time) site percolation model. We often identify a configuration $\om$ with the preimage of 1 under the map $\om$. We equip $\mathfrak{S}$ with the $w^\#$ topology as before, setting the points of discontinuity of $\om$ to be the point masses. We write $\FF^\#(\mathfrak{S})$ for the Borel $\sigma$-algebra generated by this topology. 

\begin{remark}
	Note that if one takes Bernoulli site percolation on a discrete lattice and takes a scaling limit in one direction, one cannot obtain a space-time site percolation model as defined here. Indeed, one needs that in the vertical direction, the value of the percolation model is almost always the same, a behaviour not obtainable when the sites are independent. One should instead think of the space-time Ising model as the prototypical example of a space-time site percolation model.
\end{remark}

For a set $I\subset V\times\RR$ or $E\times\RR$, write $\tau_I$ for the sigma algebra generated by events dependent on the configuration outside of $I$. We say a measure $\nu$ on $\mathfrak{B}$ has the finite energy property if for all intervals $I\subset V\times\RR$ and $J\subset E\times\RR$, there exists a constants $c_1,c_2,c_3,c_4\in(0,1)$ such that
\bes
    c_1<\nu[I\subset\om]<c_2; \qquad c_3<\nu[J\cap\om=\es]<c_4.
\ees
Similarly we say that a measure $\nu$ on $\mathfrak{S}$ has the finite energy property if for all intervals $I\subset V\times\RR$, there exists a constants $c_1,c_2,c_3,c_4\in(0,1)$ such that
\bes
    c_1<\nu[I\subset\om]<c_2; \qquad c_3<\nu[I\subset(V\times\RR)\setminus\om]<c_4.\\
\ees    
\begin{proposition}[Burton-Keane]\label{prop:burton-keane}
    Let $G$ be a locally finite transitive amenable graph and let $\nu$ be a measure on $\mathfrak{B}$ or $\mathfrak{S}$ which is invariant under translations of both the graph $G$ and $\RR$, and satisfies the finite energy property. Then the number of infinite clusters of $\om$ under $\nu$ is almost surely 0 or 1.
\end{proposition}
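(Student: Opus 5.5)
We adapt the classical Burton--Keane argument to the space-time setting: first show the number of infinite clusters $N$ is almost surely constant, then rule out $2\le N<\infty$ by local modification, and finally rule out $N=\infty$ by counting trifurcations against a boundary size.

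\emph{Step 1: $N$ is a.s.\ constant.} The event $\{N=k\}$ is invariant under translations of $G$ and of $\RR$. If $\nu$ is ergodic under $\mathrm{Aut}(G)\times\RR$, this event has probability $0$ or $1$. Otherwise we decompose $\nu$ into ergodic components. Each component is still translation invariant and, after restricting to components where the finite-energy constants are positive, still has finite energy. So it suffices to treat the ergodic case, and this reduction is routine.

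\emph{Step 2: exclude $2\le N=k<\infty$.} Fix a large space-time box $B_n=\Lambda_n\times[-n,n]$, with $\Lambda_n$ a graph ball. For $n$ large, with positive probability all $k$ infinite clusters intersect $B_n$. The key input is a space-time version of finite energy: we need to open all of $B_n$ (or close it, for the dual statement) with conditional probability bounded below, given the configuration outside. In the bond case, $B_n$ consists of finitely many vertex-intervals and edge-intervals. Requiring that no vertex-interval contains a point sent to $0$, and that every edge-interval is in $\om$, connects everything in $B_n$. In the site case, $B_n$ is finitely many intervals $I$ with $I\subset\om$. The finite-energy property as stated gives positive unconditional probability $\nu[I\subset\om]>c_1$. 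We use it in the conditional form $\nu[I\subset\om\mid\tau_I]\ge c_1$ a.s., and iterate over the finitely many intervals. The modified configuration has $N=1$, contradicting $\nu[N=k]=1$.

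\emph{Step 3: exclude $N=\infty$.} Define a trifurcation box as a translate $B_n+z$, with $z\in V\times\ZZ$ on a discrete sublattice of times, such that removing (closing) the box splits its infinite cluster into at least three infinite pieces. Using finite energy inside a box that meets three infinite clusters, we rewire the interior, opening a tree-like connection of three disjoint space-time paths and closing the rest. This gives $\nu[B_n \text{ is a trifurcation box}]=\delta>0$. By invariance under the discrete subgroup of translations, the expected number of disjoint trifurcation boxes inside $\Lambda_m\times[-m,m]$ is at least $\delta\,|\Lambda_m|\cdot m/|\Lambda_n|n$. The standard combinatorial lemma bounds the number of trifurcations by the number of boundary points through which infinite paths can exit. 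That count is at most $C_n\bigl(|\partial\Lambda_m|\,m+|\Lambda_m|\bigr)$, where the last term counts exits through the top and bottom time slices. Since $G$ is amenable, $|\partial\Lambda_m|/|\Lambda_m|\to 0$ along a Følner sequence. The time-boundary term is $|\Lambda_m|$ against a volume $|\Lambda_m|m$, which is negligible for large $m$. So the lower bound outgrows the upper bound, a contradiction.

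The main obstacle is Step 3's local modification. A single box must be rewired so that it produces a genuine trifurcation in continuous time, which requires controlling finitely many intervals simultaneously with conditional finite energy. Care is also needed that the space-time connectivity notion (vertical intervals plus horizontal edges at single times) makes "disjoint paths meeting only in the box" well defined. We would handle this by discretising time into unit slabs and treating the space-time box as a finite union of intervals, as in Steps 2 and 3.
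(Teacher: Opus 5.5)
Your outline follows the paper's route; the paper simply cites the space-time Burton--Keane argument from Bj\"ornberg's thesis. You are right to use finite energy in its conditional form $\nu[I\subset\om\mid\tau_I]\ge c_1$. With only the unconditional bounds as literally written, the statement fails: take the invariant mixture $\tfrac12\mu+\tfrac12\delta_{\om_0}$, where $\mu$ is a Poisson measure and $\om_0$ has no cuts and no bridges. This mixture satisfies those bounds, yet with probability $\tfrac12$ it has infinitely many infinite clusters.

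\emph{The gap is the exit count in Step 3.} The top and bottom slices of $\Lambda_m\times[-m,m]$ do give at most $2|\Lambda_m|$ exits deterministically, since each vertex line crosses a slice once. Lateral exits are different. In the bond case they are bridges on the edges of $\partial_E\Lambda_m$ at times in $[-m,m]$; in the site case they are open sub-intervals of the boundary vertex lines. Distinct branches need distinct such points, and their number is an unbounded random variable. So $C_n(|\partial\Lambda_m|m+|\Lambda_m|)$ is not a pointwise bound. As a bound in expectation, it needs the number of points of $\om$ per unit length of a line to be integrable. Invariance plus the interval finite-energy bounds do not give this, because a stationary, locally finite point process can have infinite intensity. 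In that case $\nu[\#\mathrm{exits}]=\infty$, and comparing with $\delta|\Lambda_m|m/(|\Lambda_n|n)$ yields no contradiction.

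\emph{Two repairs.} Either add finite intensity as a hypothesis, or avoid it when $G=\ZZ$. Finite intensity holds for every measure the paper applies the proposition to, since their point counts are Poisson-dominated (compare Lemma \ref{lem:q:pp-dom}). It is also what the random-cluster setting of the cited proof supplies. For $G=\ZZ$, the only case actually used, proceed as follows:
\begin{enumerate}
\item Fix the time extent $T$ first and let only the interval $\Lambda$ grow.
\item The lateral boundary is then two lines over $[-T,T]$, carrying counts $X_1,X_2$ whose laws do not depend on $\Lambda$.
\item The number of disjoint trifurcation boxes is also at most $M=c|\Lambda|T$ deterministically.
\item Hence $\nu[\#\mathrm{tri}]\le 2|\Lambda|+\sum_{i}\nu[\min(X_i,M)]=2|\Lambda|+o(|\Lambda|)$.
\end{enumerate}
This contradicts your lower bound of order $\delta'|\Lambda|T$ once $T>2/\delta'$.

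\emph{A smaller slip.} In $\mathfrak{B}$ an edge interval can never lie inside $\om$, because bridges are locally finite. So opening $B_n$ in Step 2 must mean no cut on each vertex interval and at least one bridge on each edge interval, which uses $c_4<1$. Likewise, your closing moves in Step 3 need at least one cut on the relevant vertex intervals, which uses $c_2<1$.
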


A proof identical to that of Theorem 2.3.10 of \cite{bjornberg-thesis} applies here.\\

We need a version of Sheffield's non-coexistence theorem for space-time percolation models. Let $G=\ZZ$, and let $\G^*=\ZZ+\frac{1}{2}$. Identify the edges $G^*$ with the vertices of $G$ (write $v^*$ for the edge of $G^*$ associated to each $v\in V(G)$), and vice-versa ($e^*$ for the edge of $G^*$ associated to each $e\in E(G)$). For a bond percolation configuration $\om\in\mathfrak{B}$, $\om^*$ is the bond percolation configuration on $G^*\times\RR$ given by $\om(v^*,t)=1-\om(v,t)$ and $\om(e^*,t)=1-\om(e,t)$ for all $v\in V(G),e\in E(G)$ and all $t\in\RR$. For a site percolation configuration $\om\in\mathfrak{S}$ let $\om^*=1-\om$. 

There exists a partial order on $\mathfrak{B}$ and $\mathfrak{S}$ given by: $\om\le\om'$ if $\om(a,t)\le\om'(a,t)$ for all $a\in V,E$, $t\in\RR$. We say a function on $\mathfrak{B}$ and $\mathfrak{S}$ is increasing if it is $\om\le\om'$ implies $f(\om)\le f(\om')$. We say that a measure $\nu$ on $\mathfrak{B}$ or $\mathfrak{S}$ satisfies the FKG inequality if for all measurable increasing functions $f,g$, we have $\int f g \d\nu\ge\int f\d\nu \cdot \int g\d\nu$. 

We say $\om$ percolates as shorthand for the event that $\om$ contains an infinite cluster.

\begin{proposition}[Sheffield non-coexistence]\label{prop:non-coexistence}
    Let $G=\ZZ$ and let $\nu$ be a measure on $\mathfrak{B}$ or $\mathfrak{S}$ which is $\ZZ\times\RR$ translation invariant and satisfies the FKG inequality. Then
    \bes
        \nu[\om \ \mathrm{and} \ \om^* \ \mathrm{both \ percolate}]<1.
    \ees
\end{proposition}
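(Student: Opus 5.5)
The plan is to adapt Sheffield's original non-coexistence argument, in the form given by Duminil-Copin, Raoufi and Tassion, to the strip-like space-time geometry $\ZZ\times\RR$. We argue by contradiction. Suppose $\nu[\om$ and $\om^*$ both percolate$]=1$. First we reduce to the ergodic case. Decompose $\nu$ into $\ZZ\times\RR$-ergodic components. FKG is not preserved by ergodic decomposition in general. However, the event that $\om$ percolates and the event that $\om^*$ does not percolate are both increasing and translation invariant. Moreover, under an FKG translation-invariant measure, a standard argument shows that conditioning on translation-invariant increasing events preserves FKG. So we may work with a measure in which both events have probability one, keeping translation invariance and FKG. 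We also use Proposition~\ref{prop:burton-keane}, which requires finite energy. If finite energy is not assumed, we instead use only the planar topology argument below, which does not need uniqueness. By ergodicity of the translation action, each of $\om$ and $\om^*$ then has infinite clusters almost surely.

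The key step is the standard ``four arms'' contradiction. Take a large box $B_n=[-n,n]\times[-n,n]$ in $\ZZ\times\RR$. Since $\om$ percolates almost surely, $\nu[B_n\leftrightarrow\infty \text{ in }\om]\to 1$. Split the boundary of a large box $B_N$ into its four sides: top, bottom, left and right. Consider the events that $B_n$ is connected to infinity in $\om$ by a path whose first exit from $B_N$ is through a given side. By the square-root trick (these are increasing events whose union has probability close to $1$), each has probability at least $1-\epsilon^{1/4}$. Here the asymmetry between the space and time directions is not an issue, because we only use translation invariance and FKG, not rotation invariance. The same holds for $\om^*$ with decreasing events. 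Choosing the top and bottom sides for $\om$ and the left and right sides for $\om^*$, we get with positive probability (by a union bound) four disjoint arms: two primal arms going up and down, and two dual arms going left and right, all starting from $B_n$ and reaching infinity.

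Planarity and the uniqueness of infinite clusters then give the contradiction. By Burton--Keane, both $\om$ and $\om^*$ have at most one infinite cluster. The two primal arms therefore belong to the same infinite cluster, and are connected by a primal path. Together with the arms, this path forms a bi-infinite primal curve, which separates the two dual arms into different components of the plane minus the curve. Hence the two dual arms lie in distinct infinite dual clusters, contradicting uniqueness. In the space-time setting, the duality relation $\om^*$ between $G\times\RR$ and $G^*\times\RR$ needs care: a primal path crossing between columns uses an edge $e\times\{t\}$, and a dual path cannot cross it at that time. The main obstacle is making this planar separation rigorous for space-time configurations. The plan is to discretise time finely enough that the locally finite discontinuity sets are resolved, which turns the situation into a planar graph-and-dual pair. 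Then we apply the usual Jordan curve argument, checking that connectivity in $\om$ and in $\om^*$ corresponds to primal and dual connectivity in this discretisation.

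If Burton--Keane is unavailable because finite energy fails, we fall back on Sheffield's/DRT's argument. We use the translation-invariant ordering of infinite clusters in the time direction together with FKG, and show that infinitely many disjoint infinite clusters would contradict amenability via a mass-transport argument in $\RR$. That case would be the hardest part.
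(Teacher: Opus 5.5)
\paragraph{The four-arm step.} The step that fails is your use of the square-root trick. Take increasing events $A_{\mathrm{top}},A_{\mathrm{bot}},A_{\mathrm{left}},A_{\mathrm{right}}$ whose union has probability at least $1-\epsilon$. FKG then gives only $\max_i\nu[A_i]\ge 1-\epsilon^{1/4}$. Zhang's argument makes \emph{each} $\nu[A_i]$ large by using a lattice symmetry that carries the sides of the box onto one another. $\ZZ\times\RR$ has no symmetry exchanging the space and time directions.

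So the anisotropy is not harmless, as you claim. It is exactly what decides which sides the primal and dual arms use. Nothing in your argument forces the alternating pattern (primal arms vertical, dual arms horizontal) that your separation step needs.

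In fact no argument from translation invariance and FKG alone can succeed, because the statement fails under those hypotheses. Take $\nu=\delta_{\om_0}$ with $\om_0\equiv 1$ on $V\times\RR$ and $\om_0\equiv 0$ on $E\times\RR$. This is a point of $\mathfrak{B}$, and $\delta_{\om_0}$ is $\ZZ\times\RR$-invariant, ergodic and trivially FKG. Its dual $\om_0^*$ is identically open on the dual columns $(\ZZ+\tfrac12)\times\RR$ and closed on the dual horizontal edges. Hence every column is an infinite cluster of $\om_0$ and every dual column is an infinite cluster of $\om_0^*$. Both $\om_0$ and $\om_0^*$ leave any box only through its top and bottom, so your left/right events for $\om^*$ have probability $0$. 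For $\mathfrak{S}$, take $\om(x,t)=\zeta_x$ with $(\zeta_x)_{x\in\ZZ}$ i.i.d.\ Bernoulli$(\tfrac12)$; this is again invariant, ergodic and FKG, with coexistence.

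\paragraph{The other ingredients.} The same example disposes of your other tools.
\begin{itemize}
\item Burton--Keane (Proposition~\ref{prop:burton-keane}) needs finite energy, which is not assumed.
\item You say the separation argument ``does not need uniqueness'', but your next paragraph uses uniqueness to join the two primal arms.
\item Your fallback claim, that infinitely many disjoint infinite clusters contradict amenability, is false: $\om_0$ has infinitely many under a translation-invariant FKG measure. Burton--Keane-type contradictions need finite energy to manufacture encounter points.
\item Your reduction to the ergodic case is vacuous. Under your contradiction hypothesis both events already have probability one, and the counterexamples above are ergodic anyway.
\end{itemize}

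\paragraph{Relation to the paper.} The paper's own proof is a one-line deferral to the proof of Theorem~1.5 of Duminil-Copin--Raoufi--Tassion. It does not say what replaces the planar symmetries either, so this is a defect of the statement as written, not only of your route. A provable version needs an extra hypothesis that excludes such examples, e.g.\ finite energy in the usual conditional sense. Even then, your Zhang-type argument must handle the case where $\om$ and $\om^*$ both favour the vertical sides (or both the horizontal ones). The reflections $x\mapsto -x$ and $t\mapsto -t$ available in the paper's applications only pair top with bottom and left with right. In that parallel case the separation argument, as you have written it, does not close.
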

An essentially identical proof to that of Theorem 1.5 of \cite{dc--raoufi-tassion} holds here.

\subsection{Basic properties of the space-time six vertex model: height function}
\begin{lemma}\label{lem:q:hf-fkg-markov} Let $\Delta\le0$, so $u\le\tfrac{1}{2}$.
\begin{enumerate}
    \item \normalfont{(FKG)} Let $h\sim\nu^{\sv-\hom,\tilde{h}}_{F,\Delta}$ \eqref{eq:q:6v-hom-measure}. Then $h$ is FKG: for all increasing, measurable functions $f,g$ of $h$, $\nu^{\sv-\hom,\tilde{h}}_{F,\Delta} [f\cdot g] \ge \nu^{\sv-\hom,\tilde{h}}_{F,\Delta}[f] \cdot \nu^{\sv-\hom,\tilde{h}}_{F,\Delta}[g]$.
    \item \normalfont{(Comparison of boundary conditions)} Let $\tilde{h}_0, \tilde{h}_1$ be (possibly random) height functions defined on $\partial F$ which are almost surely admissible (they can be extended to valid height functions on $F$), coupled such that $\tilde{h}_0\le\tilde{h}_1$ almost surely. Define $h_0\sim\nu^{\sv-\hom,\tilde{h}_0}_{F,\Delta}$ and $h_1\sim\nu^{\sv-\hom,\tilde{h}_1}_{F,\Delta}$. Then $h_0\preceq h_1$ ($h_0$ is stochastically dominated by $h_1$).
    \item \normalfont{(Domain Markov)} Let $h\sim\nu^{\sv-\hom,\tilde{h}}_{F,\Delta}$ and let $F'\subset F$ be a domain. Then
    \bes
        h|_{F'} \sim \nu^{\sv-\hom,h|_{\partial F'}}_{F',\Delta},
    \ees
    where $h|_{\partial F'}$ is the value of $h$ taken on $\partial F'$, which is a random variable measurable with respect to $\tau_{F\setminus F'}$, the sigma algebra generated by $h$ restricted to $F\setminus F'$.
    Written differently, for $A,B$ functions of the $h$ restricted to $F'$, $F\setminus F'$ respectively, we have 
    \bes
        \hat{\nu}^{\sv-\hom,\tilde{h}}_{F,\Delta}[A\cdot B]
        =
        \sum_{m\ge0}
        \int_{\partial F'} \d\mu^{\odot m}(\eta) \sum _{g \sim \eta} 
        \hat{\nu}^{\sv-\hom,g}_{F',\Delta}[A]
        \hat{\nu}^{\sv-\hom,g,\tilde{h}}_{F\setminus F',\Delta}[B],
    \ees
    where in each case $\hat{\nu}[f]=\cZ\nu[f]$ the unnormalised measure, and the second sum is over admissible height functions $g$ defined on $\partial F'$ whose points of discontinuity are exactly those of $\eta$.
    \item \normalfont{(Limits are Gibbs)} If $\nu^{\sv-\hom,\tilde{h}_k}_{F_k,\Delta}\to \nu$ weakly in $(\ul{\Om}^{\sv-\hom}_{\cF}, \FF^{\sv-\hom,\#}_{\cF})$ as $k\to \infty$, with $F_k\nearrow\cF$ as $k\to\infty$ and $\tilde{h}_k$ some sequence in $\ul{\Om}^{\sv-\hom}_{\cF}$, then $\nu$ is Gibbs.
\end{enumerate}
\end{lemma}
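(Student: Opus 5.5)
Approach: prove all four parts by discretisation. I approximate the space-time height-function measure by discrete six-vertex measures on $\ZZ\times\eps\ZZ$ in the rotated geometry, and carry the corresponding properties of the discrete model (FKG and comparison from \cite[Proposition 2.2]{dckmo-6v-deloc}, domain Markov by locality) to the limit $\eps\to0$. Concretely, for fixed domain $F$ and boundary data $\tilde h$, I take the six-vertex model on the $\eps$-discretisation of $F$ with weights $a_\eps=\eps$, $b_\eps=1+\eps u$, $c_\eps=1+\eps(1-u)$, up to a global normalisation. The points of $\chi$ then arise as $a$-vertices with density $\eps$ per unit length, and the exponential weights arise from products of $(1+\eps u)$ and $(1+\eps(1-u))$, as in Appendix \ref{appendix:FKG}. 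For $u\le\tfrac12$ and small $\eps$ one has $a_\eps,b_\eps\le c_\eps$, which is the regime where the discrete height function is FKG. It is not needed that $c_\eps\le a_\eps+b_\eps$; the FKG proof only uses the lattice condition $a,b\le c$ (equivalently $\Delta\le 0$ asymptotically).

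Step 1 (FKG). For increasing local functions $f,g$ depending on $h$ at finitely many points, FKG holds for every discrete approximation. The discrete measures converge weakly in the $w^\#$ topology to $\nu^{\sv-\hom,\tilde h}_{F,\Delta}$, and $f,g$ at fixed points are a.s. continuous because a fixed point is a.s. not a discontinuity. Hence FKG passes to the limit for such $f,g$. A monotone-class argument then extends it to all bounded increasing measurable functions: approximate $h$ on a dense countable set, using right-continuity. \emph{Main obstacle:} establishing the weak convergence of the discretisations itself. I would do this with a direct computation. The discrete partition function restricted to configurations with a given discontinuity set converges to the Poisson density times the exponential factor. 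This is a finite-dimensional Riemann-sum argument at each fixed number of jumps $m$, with uniform tail control on $m$ from the Poisson bound.

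Alternatively, if the convergence is too messy, I would prove FKG intrinsically. The measure is a Poisson process of jumps tilted by the local density $e^{u|\om^b|+(1-u)|\om^c|}$. I would check the Holley/FKG lattice condition $\rho(h\vee h')\rho(h\wedge h')\ge\rho(h)\rho(h')$ pointwise, with respect to the product reference measure on jump sets. This reduces to the vertex-local inequality coming from $u\le 1-u$: at each point of $V$ the $c$-weight $e^{1-u}$ dominates the $b$-weight $e^{u}$, exactly mirroring $b\le c$. I would then apply the continuum FKG theorem for Poisson-based point processes.

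Step 2 (comparison). With FKG in hand, comparison follows as in the discrete case. The measure with boundary data $\tilde h_1$ is obtained from the one on a larger domain by conditioning on an increasing-type event, or equivalently monotone coupling holds pointwise in the discretisations. For random $\tilde h_0\le\tilde h_1$, I condition on the coupled pair and integrate; stochastic domination is preserved under mixtures with a monotone coupling.

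Step 3 (domain Markov and Gibbs). Domain Markov is immediate from the product structure of the density. The Poisson process on $V$ splits into independent processes on $V'$ and $V\setminus V'$. The interaction terms $\om^b,\om^c$ at $v$ involve only $h(v\pm(1,0))$, and these lie in $F'$ or $\partial F'$ for $v\in V'$. Hence the unnormalised density factorises once $h|_{\partial F'}$ is fixed. The displayed formula follows by integrating over the Janossy decomposition $\sum_m\int\d\mu^{\odot m}$ of jumps on $\partial F'$ and summing over admissible $g$ with those jumps. For part 4, the DLR equation for a cylinder event $A$ in $F'$ holds for each $\nu_k$ once $F_k\supset F'$, by part 3. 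The map $\eta\mapsto\nu^{\sv-\hom,\eta|_{\partial F'}}_{F',\Delta}[A]$ is continuous in the $w^\#$ topology at boundary data a.s. under $\nu$, because finitely many jumps move continuously. Passing to the limit gives the Gibbs property.
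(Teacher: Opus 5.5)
Your proposal is correct and is essentially the paper's proof. Parts 1 and 2 are the discrete six-vertex FKG and boundary comparison (valid for $a,b\le c$) carried through the $\eps\to0$ limit of Appendix~\ref{appendix:FKG}. Your weights agree with the paper's $a=\eps$, $b=1-(1-u)\eps$, $c=1-u\eps$ up to normalisation and to first order in $\eps$. Part 3 is the Mecke/Janossy factorisation of the unnormalised density across $\partial F'$, and part 4 is deduced from part 3.

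In Step 2, rely only on your second mechanism: discrete monotone coupling for discretised $\tilde h_0\le\tilde h_1$, then pass to the limit. The first mechanism does not work as stated. Pinning $h$ to $\tilde h_1$ on $\partial F$ inside a larger domain is a null event for the Poisson jumps, and it is not an increasing event to which FKG applies directly.
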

\begin{proof}
    Parts 1 and 2 are consequences of the same statements for the discrete model, due to Lemmas \ref{lem:appendix:FKG-proofs} and \ref{lem:appendix-convergence}. Part 4 is a simple consequence of part 3. We prove part 3. 
    To prove this we'll need Mecke's formula (see \cite[Theorem 4.4]{last-penrose}), which allows one to condition on the location of a finite number of Poisson points.

\begin{lemma}\label{lem:mecke}[Mecke's formula]
    For any $f$ a bounded local function of a Poisson point process $\chi$ on a set $V\subset\cF$, we have 
    \bes
    \nu^{\poi}_{V,1}\left[\sum_{\eta\subset\chi,  |\eta=k|} f(\chi,\eta)\right]
    =
    \int_V \d\mu^{\odot k}(\eta) \nu^{\poi}_{V,1}[f(\chi\cup\eta,\eta)].
    \ees
    where 
    \bes
     \d\mu^{\odot k}(\{x_1,\dotsc,x_k\})
    =\frac1{k!} \sum_{\pi\in S_k}
    \d\mu^{\otimes k}(x_{\pi(1)},\dotsc,x_{\pi(k)})
    \ees
    is the symmetrized $k$-fold product measure.
\end{lemma}

We have
\bes
\begin{split}
\hat{\nu}^{\sv-\hom;\tilde{h}}_{F,\Delta}[A\cdot B]
&=
\nu^{\poi}_{F,1}(\chi)
\left[
\sum_{h\sim\chi}
W_{F,\Delta}(h)
\mathbbm{1}_{h|_{\partial F} = \tilde{h}}
A\cdot B (h)
\right]\\
&=
\sum_{m\ge0}
\nu^{\poi}_{F,1}(\chi)
\left[
\sum_{h\sim\chi}
W_{F,\Delta}(h)
\mathbbm{1}_{h|_{\partial F} = \tilde{h}}
\mathbbm{1}_{\{m \ \mathrm{points \ of \ } \chi \ \mathrm{on} \ \partial F'\}}
A\cdot B (h)
\right]\\
&=
\sum_{m\ge0}
\int_{\partial F'} \d\mu^{\odot m}(\eta) \sum _{g \sim \eta} 
\nu^{\poi}_{F,1}(\chi)
\left[
\sum_{h\sim\chi}
W_{F,\Delta}(h)
\mathbbm{1}_{h|_{\partial F} = \tilde{h}, \ h|_{\partial F'} = g}
\mathbbm{1}_{\{\chi \cap \partial F'=\es\}}
A\cdot B (h)
\right]
\\
&=
\sum_{m\ge0}
\int_{\partial F'} \d\mu^{\odot m}(\eta) \sum _{g \sim \eta} 
\nu^{\poi}_{F',1}(\chi_1)
\left[
\sum_{h_1\sim\chi_1}
W_{F',\Delta}(h_1)
\mathbbm{1}_{h_1|_{\partial F'} = g}
\mathbbm{1}_{\{\chi_1 \cap \partial F'=\es\}}
A(h_1)
\right]\\
&\qquad\qquad\cdot
\nu^{\poi}_{F\setminus F',1}(\chi_2)
\left[
\sum_{h_2\sim\chi_2}
W_{F\setminus F',\Delta}(h_2)
\mathbbm{1}_{h_2|_{\partial F} = \tilde{h}, \ h_2|_{\partial F'} = g}
\mathbbm{1}_{\{\chi_2 \cap \partial F'=\es\}}
B(h_2)
\right]
\\
&=
\sum_{m\ge0}
\int_{\partial F'} \d\mu^{\odot m}(\eta) \sum _{g \sim \eta} 
\hat{\nu}^{\sv-\hom,g}_{F',\Delta}[A]
\hat{\nu}^{\sv-\hom,g,\tilde{h}}_{F\setminus F',\Delta}[B],
\end{split}
\ees
where the third equality is Mecke's formula and the fourth is the fact that everything factorises over the two domains $F'$ and $F\setminus F'$. Note that we can give a precise formula:
\bes
    \nu^{\sv-\hom,h|_{\partial F'}}_{F',\Delta}[A]
    =
    \sum_{m\ge0}
    \int_{\partial F'} \d\mu^{\odot m}(\eta) \sum _{g \sim \eta} 
    \frac{\cZ^{\sv-\hom,g,\tilde{h}}_{F\setminus F',\Delta}}
    {\cZ^{\sv-\hom,\tilde{h}}_{F,\Delta}}
    \hat{\nu}^{\sv-\hom,g}_{F',\Delta}[A].
\ees
\end{proof}

\subsection{Basic properties of the space-time six vertex model: spins and bond percolations}

For a spin configuration $\sigma\in\ul{\Om}^{\ising}_{F^\bullet}$ and a percolation configuration $\xi^\bullet\in\ul{\Om}^{\perc}_{F^\bullet}$, we say $\sigma^\bullet\perp\xi^\bullet$ if $\sigma^\bullet$ is constant on clusters of $\xi^\bullet$. The percolation $\xi^\bullet$ has a dual configuration $(\xi^\bullet)^*\in\ul{\Om}^{\perc}_{F^\circ}$, and we define $\sigma^\circ\perp(\xi^\bullet)^*$ similarly.

\begin{lemma}\label{lem:q:quantum-6v-triple}
    Let $\Delta\le0$, so $u\le\tfrac{1}{2}$. The marginal of $\nu^{\sv}_{F,\Delta}$ on the triple $(\sigma^\bullet,\sigma^\circ,\xi^\bullet)$ is
        \be\label{eq:quantum-6v-triple}
    \begin{split}
        \d\nu^\sv_{F,\Delta}(\sigma^\bullet,\xi^\bullet) \propto
        & \ 
        \d\nu^\poi_{E^\bullet_\vert,1}(\chi_1)
        \d\nu^\poi_{E^\bullet_\vert,1-2u}(\chi_2) 
        \d\nu^\poi_{E^\circ_\vert,1}(\chi_3)\\
        & \qquad\cdot
        \mathbbm{1}_{\{\sigma^\bullet\perp\xi^\bullet\}}
        \mathbbm{1}_{\{\chi_1=\om_{\horiz}[\sigma^\bullet]\}}
        \mathbbm{1}_{\{\chi_1\cup \chi_2=(\xi^\bullet)^*_\horiz\}}
        \mathbbm{1}_{\{\chi_3=\xi^\bullet_{\horiz}\}}\\
        & \qquad \cdot
        e^{u\left|\xi^\bullet_\vert\right|}
        e^{u\left|\om_\vert[\sigma^\bullet]\right|}
        e^{-u\left|E^\circ_\vert\setminus
        (\om_\vert[\sigma^\bullet]\cup\xi^\bullet_\horiz)\right|}
        \cdot
        2^{k((\xi^\bullet)^*)},
    \end{split}
    \ee
    where $k((\xi^\bullet)^*)$ is the number of clusters of $(\xi^\bullet)^*$. In particular, conditional on $\sigma^\bullet$ and $\xi^\bullet$, $\sigma^\circ$ is obtained by independently and uniformly assigning $\pm1$ to each cluster of $(\xi^\bullet)^*$. 
\end{lemma}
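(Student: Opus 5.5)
The plan is to start from the explicit density \eqref{eq:q:6v-measure} of $(\sigma^\bullet,\sigma^\circ)$, multiply by the conditional law \eqref{eq:q:sample-xi-black-from-6v} of $\xi^\bullet$ given the spins, and then integrate out $\sigma^\circ$. This mirrors the discrete identity \eqref{eq:product-weights-kappa-xi}.

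\textbf{Step 1: rewrite the spin weight.} Split $|\om_\vert[\sigma]|=|\om_\vert[\sigma^\bullet]|+|\om_\vert[\sigma^\circ]|$. Write the factor $e^{\Delta|\om_\vert[\sigma]|}$ as $e^{(2u-1)|\cdot|}$. On a compact domain the total length $|E_\vert|$ is a constant, so this factor can be traded for products of $e^{u}$ and $e^{-u}$ over the various types of vertical points. The rate-1 Poisson measures $\chi^\bullet,\chi^\circ$ of horizontal domain walls remain.

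\textbf{Step 2: the joint density of $(\sigma,\xi^\bullet)$.} Sampling $\xi^\bullet$ adds further ingredients:
\begin{itemize}
\item a rate-$(1-2u)$ process $Y$ of holes on the black vertical segments $E^\bullet_\vert$ away from the walls;
\item a rate-1 process $X$ of open points on the white vertical segments $E^\circ_\vert$.
\end{itemize}
Each is expressed against a Poisson reference measure with an explicit Radon--Nikodym factor, exactly as when thinning a Poisson process.

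Next I identify $\chi_1=\om_\horiz[\sigma^\bullet]$ and $\chi_2=Y$. Their union is precisely the set of closed points of $\xi^\bullet$ on black columns, which is the same as the horizontal part of $(\xi^\bullet)^*$. Likewise, $\chi_3$ is the union of $X$ with the walls of $\sigma^\circ$ on white columns, and this union is exactly $\xi^\bullet_\horiz$.

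The constraint $\om[\sigma^\bullet]\cap\om[\sigma^\circ]=\es$, together with the deterministic rules on the walls, translates into two compatibility conditions:
\begin{itemize}
\item $\sigma^\bullet\perp\xi^\bullet$;
\item $\sigma^\circ\perp(\xi^\bullet)^*$, i.e.\ the walls of $\sigma^\circ$ lie inside $\xi^\bullet$.
\end{itemize}
Tracking the leftover exponentials gives the three factors in \eqref{eq:quantum-6v-triple}:
\begin{itemize}
\item $e^{u|\xi^\bullet_\vert|}$;
\item $e^{u|\om_\vert[\sigma^\bullet]|}$;
\item the penalty $e^{-u|\cdot|}$ on the white vertical set not covered by $\om_\vert[\sigma^\bullet]\cup\xi^\bullet_\horiz$.
\end{itemize}
The key point is that, once $(\sigma^\bullet,\xi^\bullet)$ is fixed, the resulting weight does not depend on $\sigma^\circ$.

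\textbf{Step 3: sum over $\sigma^\circ$.} Given $(\sigma^\bullet,\xi^\bullet)$, the admissible $\sigma^\circ$ are exactly the functions constant on the clusters of $(\xi^\bullet)^*$. Since all of them carry the same weight, the sum contributes $2^{k((\xi^\bullet)^*)}$. The same computation shows that the conditional law of $\sigma^\circ$ is uniform and independent over these clusters.

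\textbf{Main obstacle.} The hard part is making Step 2 rigorous, because $\sigma^\circ$'s horizontal walls are absorbed into $\chi_3$ and the conditioning is on measure-zero coincidences. I would first verify the formula on the discretisation $\Lambda\times\ZZ_m$, using \eqref{eq:product-weights-kappa-xi} with weights tuned as in Appendix~\ref{appendix:FKG}, and then pass to the limit with Lemma~\ref{lem:appendix-convergence}. As a backup, Mecke's formula (Lemma~\ref{lem:mecke}) can be used to condition on finitely many points, as in the proof of Lemma~\ref{lem:q:hf-fkg-markov}. The exact bookkeeping of the $e^{\pm u}$ constants is the likely source of error; it should be fixed by comparison with the discrete weights $a$, $b$, $c-a$.
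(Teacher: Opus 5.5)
Your computation is essentially the paper's. You start from \eqref{eq:q:6v-measure}, adjoin the processes $X,Y$ of \eqref{eq:q:sample-xi-black-from-6v}, note that the weight no longer depends on $\sigma^\circ$, and sum over $\sigma^\circ\perp(\xi^\bullet)^*$.

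The paper makes Step 2 rigorous with what you call the backup. It uses Mecke's formula to condition on the finitely many points of $\om_\horiz[\sigma^\bullet]$, $\om_\horiz[\sigma^\circ]$, $X$ and $Y$. It then merges the white-column points via
\[
\sum_{k_\circ+k_X=k}\int\d\mu^{\odot k_\circ}(\eta^\circ)\int\d\mu^{\odot k_X}(\eta_X)\,g(\eta^\circ,\eta_X)=\int\d\mu^{\odot k}(\eta^\dagger)\sum_{\eta^\circ\subset\eta^\dagger}g(\eta^\circ,\eta^\dagger\setminus\eta^\circ).
\]
Since $\sigma^\circ$ determines which points of $\eta^\dagger=\xi^\bullet_\horiz$ are its walls, the sum over splittings and over $\sigma^\circ$ collapses to a single sum over $\sigma^\circ$ with $\om_\horiz[\sigma^\circ]\subset\xi^\bullet_\horiz$. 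This is why $\chi_3$ has rate $1$ and not $2$; you should make this step explicit.

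The bookkeeping is short when done directly, with no need to compare against $a,b,c-a$:
\begin{itemize}
\item Restricting $Y$ off $\om_\vert[\sigma^\circ]$ produces $e^{(1-2u)|\om_\vert[\sigma^\circ]|}$. This cancels the spin weight $e^{(2u-1)|\om_\vert[\sigma^\circ]|}$, which is exactly the $\sigma^\circ$-independence you need.
\item Restricting $X$ off $\om_\vert[\sigma^\bullet]$ turns $e^{(2u-1)|\om_\vert[\sigma^\bullet]|}$ into $e^{2u|\om_\vert[\sigma^\bullet]|}$.
\item The result equals the lemma's three exponentials up to a constant, because $|\xi^\bullet_\vert|=|E^\bullet_\vert|$ and $\xi^\bullet_\horiz$ has zero length.
\end{itemize}

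Do not make the discretisation your primary route. In this paper, the proof of Lemma~\ref{lem:appendix-convergence}(1) identifies the limit of the discrete measures with $\nu^\sv_{F,\Delta}$ by quoting the very lemma you are proving. Invoking it here is therefore circular. To run that route honestly, you would have to prove separately that the discrete joint law of $(\sigma^\bullet,\sigma^\circ,\xi^\bullet)$ converges to the space-time law defined by \eqref{eq:q:6v-measure} and \eqref{eq:q:sample-xi-black-from-6v}. That is strictly more work than the direct Mecke computation.
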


\begin{proof}[Proof of Lemma \ref{lem:q:quantum-6v-triple}]
    Upon renormalising, the measure $\nu^{\sv}_{F,\Delta}$ on $(\sigma^\bullet,\sigma^\circ)$ can we written as
    \be\label{eq:q:6v-measure-renormalised}
    \begin{split}
    	\d\nu^{\sv}_{F,\Delta}(\sigma^\bullet,\sigma^\circ)
    	&\propto
    	\d\nu^\poi_{E^\bullet_\vert,1}(\chi^\bullet)\d\nu^\poi_{E^\circ_\vert,1}(\chi^\circ)\\
    	&\qquad\qquad\cdot
        \mathbbm{1}_{\{\om_{\horiz}[\sigma^\bullet]=\chi^\bullet\}}
        \mathbbm{1}_{\{\om_{\horiz}[\sigma^\circ]=\chi^\circ\}}
    	\mathbbm{1}_{\{\om[\sigma^\bullet]\cap\om[\sigma^\circ]=\es\}}
    	e^{(1-u)(V\setminus|\om_\vert[\sigma]|)
        + u|\om_\vert[\sigma]| }.
    \end{split}
    \ee
    Using Mecke's formula and the sampling \eqref{eq:q:sample-xi-black-from-6v}, we have:
    \bes
    \begin{split}
        \nu^{\sv}_{F,\Delta}[f(\sigma^\bullet,\sigma^\circ,\xi^\bullet)]
        &=
        \sum_{k_\bullet,k_\circ\ge 0} 
        \int_{E^\bullet_\vert} \d\mu^{\odot k_\bullet}(\eta^\bullet)
        \int_{E^\circ_\vert} \d\mu^{\odot k_\circ}(\eta^\circ)
        \tfrac{1}{\cZ^{\sv}_{F,\Delta}}\\
        &\qquad\cdot
        \sum_{\sigma^\bullet\perp\sigma^\circ}
        \mathbbm{1}_{\{\eta^\bullet=\om_\horiz[\sigma^\bullet]\}}
        \mathbbm{1}_{\{\eta^\circ=\om_\horiz[\sigma^\circ]\}}
        e^{(1-u)(V\setminus|\om_\vert[\sigma]|)
        + u|\om_\vert[\sigma]| }\\
        &\qquad\cdot
        \nu^{\poi}_{E^\bullet_\vert,1}(Y)\otimes\nu^\poi_{E^\circ_\vert,1}(X)
        \left[ 
        \tfrac{1}{\cZ_Y}\tfrac{1}{\cZ_X}
        (1-2u)^{|Y|}
        \mathbbm{1}_{\{Y\cap\om[\sigma]=\es\}}
        \mathbbm{1}_{\{X\cap\om[\sigma]=\es\}}
        \right]\\
        &\qquad\cdot
        \sum_{\xi^\bullet}
        \mathbbm{1}_{\{ \xi^\bullet_\horiz = X \cup \om_\horiz[\om^\circ] \}}
        \mathbbm{1}_{\{ (\xi^\bullet)^*_\horiz = Y \cup \om_\horiz[\om^\bullet] \}}
        f(\sigma^\bullet,\sigma^\circ,\xi^\bullet),
    \end{split}
    \ees
    where $\cZ_X = e^{|E^\circ_\vert\setminus\om[\sigma]|}$, $\cZ_Y=e^{(1-2u)|E^\bullet_\vert\setminus\om[\sigma]|}$, and we use $\om[\sigma]=\om[\sigma^\bullet]\cup\om[\sigma^\circ]$. Now using Mecke's formula again, we have:
    \bes
    \begin{split}
        \nu^{\sv}_{F,\Delta}[f(\sigma^\bullet,\sigma^\circ,\xi^\bullet)]
        &=
        \sum_{k_\bullet,k_\circ\ge 0} 
        \int_{E^\bullet_\vert} \d\mu^{\odot k_\bullet}(\eta^\bullet)
        \int_{E^\circ_\vert} \d\mu^{\odot k_\circ}(\eta^\circ)
        \tfrac{1}{\cZ^{\sv}_{F,\Delta}}\\
        &\qquad\cdot
        \sum_{\sigma^\bullet\perp\sigma^\circ}
        \mathbbm{1}_{\{\eta^\bullet=\om_\horiz[\sigma^\bullet]\}}
        \mathbbm{1}_{\{\eta^\circ=\om_\horiz[\sigma^\circ]\}}
        e^{(1-u)(V\setminus|\om_\vert[\sigma]|)
        + u|\om_\vert[\sigma]| }\\
        &\qquad\cdot
        \sum_{k_X,k_Y\ge0}
        \int_{E^\bullet_\vert} \d\mu^{\odot k_X}(\eta_X)
        \int_{E^\circ_\vert} \d\mu^{\odot k_Y}(\eta_Y)\\
        &\qquad\cdot
        \left[ 
        \tfrac{1}{\cZ_Y}\tfrac{1}{\cZ_X}
        (1-2u)^{k_Y}
        \mathbbm{1}_{\{\eta_Y\cap\om[\sigma]=\es\}}
        \mathbbm{1}_{\{\eta_X\cap\om[\sigma]=\es\}}
        \right]\\
        &\qquad\cdot
        \sum_{\xi^\bullet}
        \mathbbm{1}_{\{ \xi^\bullet_\horiz = \eta_X \cup \eta^\circ \}}
        \mathbbm{1}_{\{ (\xi^\bullet)^*_\horiz = \eta_Y \cup \eta^\bullet \}}
        f(\sigma^\bullet,\sigma^\circ,\xi^\bullet)\\
            &=
        \sum_{k^\dagger,k_\circ,k_Y\ge 0} 
        \int_{E^\circ_\vert} \d\mu^{\odot k^\dagger}(\eta^\dagger)
        \int_{E^\bullet_\vert} \d\mu^{\odot k_\bullet}(\eta^\bullet)
        \int_{E^\bullet_\vert} \d\mu^{\odot k_Y}(\eta_Y)
        \tfrac{1}{\cZ^{\sv}_{F,\Delta}}
        (1-2u)^{k_Y}\\
        &\qquad\cdot
        \sum_{\substack{\sigma^\bullet\perp\xi^\bullet \\ \sigma^\circ\perp(\xi^\bullet)^*}}
        \mathbbm{1}_{\{\eta^\bullet=\om_\horiz[\sigma^\bullet]\}}
        \mathbbm{1}_{\{\om_\horiz[\sigma^\circ]\subset \eta^\dagger\}}
        \mathbbm{1}_{\{\eta_Y\cap\om_\vert[\sigma^\bullet]=\es\}}\\
        &\qquad\cdot
        \sum_{\xi^\bullet}
        \mathbbm{1}_{\{ \xi^\bullet_\horiz = \eta^\dagger \}}
        \mathbbm{1}_{\{ (\xi^\bullet)^*_\horiz = \eta_Y \cup \eta^\bullet \}}
        e^{u|\xi^\bullet_\vert|}
        e^{ u|\om_\vert[\sigma]| }
        e^{-u|E^\circ_\vert\setminus|\om_\vert[\sigma^\bullet]|}
        f(\sigma^\bullet,\sigma^\circ,\xi^\bullet)
    \end{split}
    \ees
    Here final equality is given by setting $k^\dagger=k_\circ+k_X$ and $\eta^\dagger=\eta^\circ\cup\eta_X$. Now using Mecke's formula one final time in reverse, one obtains \eqref{eq:quantum-6v-triple}, noticing that $\sigma^\circ\perp(\xi^\bullet)^*$ implies $\om_\horiz[\sigma^\circ]\subset\eta^\dagger=\xi^\bullet_\horiz$. The only condition on $\sigma^\circ$ that remains is $\sigma^\circ\perp(\xi^\bullet)^*$, which gives the property that $\sigma^\circ$ can be sampled by assigning independent uniform spins $\pm1$ to the clusters of $(\xi^\bullet)^*$.
\end{proof}

\begin{lemma}[Domain Markov]\label{lem:q:domain-markov}
    Let $\Delta\le0$, so $u\le\tfrac{1}{2}$. Let $F,F'$ be domains with $F\subset F'$. Let $A_n$ be a sequence of events dependent on the configuration in $F'$ such that $A_{n+1}\subset A_n$ for all $n$, $\nu^\sv_{F',u}[A_n]>0$ for all $n$ and $\bigcap_n A_n = \{\partial F \subset \xi^\bullet\}$. Then
    \bes
    	\nu^\sv_{F',\Delta}\left[\ \cdot \ | \ \{\partial F \subset \xi^\bullet\}\right]|_{F}
    	:=
    	\lim_{n\to\infty} \nu^\sv_{F',\Delta}[\ \cdot \ | \ A_n] |_F
    \ees
    exists, is independent of the configuration outside of $F$, and is distributed as $\nu^{\sv,\bplus}_{F,\Delta}$. The same holds when boundary conditions are imposed on $\nu^\sv_{F',u}$, and also when this measure is replaced by $\nu^{\sv,*}_{\cyl_{L,\beta},\Delta}$ or $\nu^{\sv,*}_{\cyl_{L,\beta}^h,\Delta}$, for $*=\bullet,\circ$ and $F\subset R_{L,\beta}$, where $R_{L,\beta}=\cF\cap[-L+1,L]\times[-\tfrac{\beta}{2},\tfrac{\beta}{2}]$.

    Also, let $\cO_{n,N}$ be the event that there exists a circuit of $\xi^\bullet$ in $\Lambda_N$ surrounding $\Lambda_n$, and let $\gamma_N^\bullet$ be the outermost such circuit. Then
    \bes
        \nu^{\sv}_{F,\Delta}[ \ \cdot \ | \cO_{n,N}]|_{(\gamma_N^\bullet)_{\inn}}
        =
        \nu^{\sv,\bplus}_{(\gamma_N^\bullet)_{\inn},\Delta}[ \ \cdot \ ].
    \ees
\end{lemma}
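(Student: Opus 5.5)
We will mirror the discrete argument behind \eqref{eq:6v-domain-markov-xi} (\cite[Lemma 4.8]{glazman-lammers}), working with the explicit density of Lemma \ref{lem:q:quantum-6v-triple} for the triple $(\sigma^\bullet,\sigma^\circ,\xi^\bullet)$. The event $\{\partial F\subset\xi^\bullet\}$ typically has probability zero in the space-time setting, because $\partial_\vert F$ consists of discrete points on $E^\circ_\vert$ where $\xi^\bullet$ is open only at Poisson points. We therefore interpret it through the approximating events $A_n$ and identify the limit.

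\textbf{Step 1: the limiting weight factorises.} Open $\xi^\bullet$ on $\partial_\horiz F$ is a positive-probability event. For the finitely many points of $\partial_\vert F$, we use Mecke's formula (Lemma \ref{lem:mecke}) to insert Poisson points of $\chi_3=\xi^\bullet_\horiz$ at those locations. Conditioning on $A_n$ and letting $n\to\infty$ should then produce a Palm-type measure in which $\chi_3$ contains $\partial_\vert F$ deterministically and $\xi^\bullet\supset\partial F$. The ratio of unnormalised weights must converge, and monotonicity of $A_n$ with dominated convergence should give this.

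In the resulting density \eqref{eq:quantum-6v-triple}, consider each term in turn:
\begin{itemize}
\item The constraint $\sigma^\bullet\perp\xi^\bullet$ forces $\sigma^\bullet$ to be constant on $\partial F$. By the global spin-flip symmetry of $\sigma^\bullet$, this constant can be taken to be $+1$ on restriction, giving the $\bplus$ boundary condition.
\item The Poisson factors, and the exponential factors $e^{u|\xi^\bullet_\vert|}$, $e^{u|\om_\vert[\sigma^\bullet]|}$ and $e^{-u|\cdots|}$, are all local. They therefore split into a product over the inside and the outside of $F$.
\item The factor $2^{k((\xi^\bullet)^*)}$ needs separate treatment. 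Since $\partial F\subset\xi^\bullet$, no cluster of $(\xi^\bullet)^*$ crosses $\partial F$. Hence $k((\xi^\bullet)^*)=k_{\inn}+k_{\outt}$.
\end{itemize}
The density is thus a product of a function of the configuration inside $F$ and a function of the configuration outside. It follows that the conditional law inside $F$ does not depend on the outside.

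\textbf{Step 2: identify the inside law as $\nu^{\sv,\bplus}_{F,\Delta}$.} The inside factor is exactly the triple density of Lemma \ref{lem:q:quantum-6v-triple} on $F$ with $\sigma^\bullet\equiv+1$ on $\partial F$. Its spin marginal, after summing over $\xi^\bullet$, is $\nu^{\sv,\bplus}_{F,\Delta}$. We check this by running the computation of that lemma with the boundary condition imposed.

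The same argument applies:
\begin{itemize}
\item when boundary conditions are imposed on $F'$, since these only affect the outside factor;
\item on the cylinders $\cyl_{L,\beta}$ and $\cyl_{L,\beta}^h$. Here periodicity and the constancy conditions on $\partial^\pm$ are also outside $F$, and clusters of $(\xi^\bullet)^*$ still cannot wrap through $F$.
\end{itemize}

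\textbf{Step 3: the outermost circuit.} We decompose $\cO_{n,N}$ according to the realisation $\gamma$ of $\gamma_N^\bullet$. The event $\{\gamma_N^\bullet=\gamma\}$ equals $\{\gamma\subset\xi^\bullet\}$ intersected with an event measurable outside $\gamma_{\inn}$, namely that there is no larger circuit in $\Lambda_N$. Since circuits live in a continuum, we work with the disintegration over $\gamma$ through Mecke's formula applied to the horizontal (Poisson) crossings of $\gamma$. On each piece we apply Step 1 with $F=\gamma_{\inn}$, then integrate over $\gamma$.

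The main obstacle is making rigorous the limit of conditioning on the measure-zero events, together with the measurability of the outermost circuit in the $w^\#$ topology. We would first handle $\partial_\vert F$ via the Palm/Mecke representation at finitely many fixed points, and for $\gamma_N^\bullet$ discretise the vertical direction, approximating circuits by ones whose horizontal steps lie in a mesh. Outermost-ness is decided by exploration from $\partial\Lambda_N$ inward, which gives measurability.
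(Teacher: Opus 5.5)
Your strategy is the natural one: apply Mecke's formula at the finitely many horizontal crossings of the black circuit, then factorise the density \eqref{eq:quantum-6v-triple} across the open circuit. The paper gives no separate proof of this lemma, but it uses exactly this mechanism for Part 3 of Lemma \ref{lem:q:hf-fkg-markov}, and in Section \ref{sec:xxz-conv-cts} to condition on the location of $\gamma^\bullet_N$.

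\textbf{Gap in Step 1.} You correctly note that $\{\partial F\subset\xi^\bullet\}$ is a null event: the horizontal steps of the circuit occur at prescribed times, where $\xi^\bullet_\horiz$ has no atoms. You then assert that monotonicity of $A_n$ and dominated convergence give convergence of $\nu^{\sv}_{F',\Delta}[\,\cdot\cap A_n]/\nu^{\sv}_{F',\Delta}[A_n]$. Dominated convergence only shows that numerator and denominator both tend to $0$; it says nothing about their ratio. In fact, for arbitrary $A_n$ as in the statement, the conclusion is false, by the Borel--Kolmogorov paradox. Let $B_n$ be the event that $\sigma^\bullet$ has at least $n$ discontinuities in a fixed interval of $F^\bullet$, and set $A_n=\{\partial F\subset\xi^\bullet\}\cup B_n$. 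These $A_n$ are decreasing, have positive probability, and intersect to $\{\partial F\subset\xi^\bullet\}$. Yet $\nu^{\sv}_{F',\Delta}[\,\cdot\mid A_n]=\nu^{\sv}_{F',\Delta}[\,\cdot\mid B_n]$, which is not even tight.

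You therefore have two options:
\begin{enumerate}
\item Fix a specific approximating family, for instance: a point of $\xi^\bullet_\horiz$ within $1/n$ of each prescribed crossing, with the adjoining vertical segments open. Then actually compute the limit. By Mecke, numerator and denominator are both of order $n^{-k}$. You then need continuity of the unnormalised density in the crossing locations, that is, continuity of the inside measure as the circuit is perturbed.
\item Define the conditional law directly as the Palm kernel produced by Mecke's formula. This is all that Step 3 and the later applications use.
\end{enumerate}

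\textbf{Gap concerning the sign.} The constraint $\sigma^\bullet\perp\xi^\bullet$ only makes $\sigma^\bullet$ constant on $\partial F$. So the factorisation gives conditional independence of inside and outside \emph{given that constant}, and the law on $F$ is a mixture $s\,\nu^{\sv,\bplus}_{F,\Delta}+(1-s)\,\nu^{\sv,\bminus}_{F,\Delta}$.
\begin{itemize}
\item Global flip symmetry of $\sigma^\bullet$ gives $s=\tfrac12$ for free or cylinder boundary conditions. It does not hold under $\bplus$ boundary conditions on $F'$, which the statement also covers.
\item Even when $s=\tfrac12$, the mixture is not $\nu^{\sv,\bplus}_{F,\Delta}$.
\item The inside is not independent of the outside: spins on both sides near $\partial F$ are correlated with the common sign, so your product form has a shared variable.
\end{itemize}
What your argument actually proves is either the version conditioned on $\{\partial F\subset\xi^{\bplus}\}$, as in \eqref{eq:6v-domain-markov-xi}, or the version for sign-blind data (arrows and $\xi^\bullet$). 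Step 3 inherits both of these issues, for each realisation of $\gamma^\bullet_N$.
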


Recall that $\sigma^\bullet$ is constant on clusters of $\xi^\bullet$. Define $\xi^{\bplus}$ to be the part of $\xi^\bullet$ where $\sigma^\bullet=+1$ and $\xi^{\bminus}$ to be the part of $\xi^\bullet$ where $\sigma^\bullet=-1$.

\begin{lemma}[FKG]\label{lem:q:FKG} 
    Let $\Delta\le0$, so $u\le\tfrac{1}{2}$. The triple $(\sigma^\bullet,\xi^{\bplus},-\xi^{\bminus})$ satisfies the FKG inequality under the measure $\nu^\sv_{F,\Delta}$, as well as with the boundary conditions $\bplus$, $\bminus$ and/or $\wplus$, $\wminus$, and also under the measures $\nu^{\sv,*}_{\cyl_{L,\beta},\Delta}$, $\nu^{\sv,*}_{\cyl_{L,\beta}^h,\Delta}$, for $*=\bullet,\circ$. The same holds for the triple $(\sigma^\circ,\xi^{\wplus},-\xi^{\wminus})$. 
\end{lemma}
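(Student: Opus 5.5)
The plan is to deduce the statement from its discrete counterpart, by realising the space-time model as a limit of discrete six-vertex models, in the same way Lemma \ref{lem:q:hf-fkg-markov} parts 1 and 2 were obtained. Concretely, I would discretise the vertical direction with mesh $\delta=1/m$. On each vertical segment of length $\delta$ I place a vertex of a discrete six-vertex model on a rotated lattice. The weights are tuned so that $a$-type vertices occur at rate $1$ per unit length and $b$/$c$ vertices carry weights $1+u\delta$ and $1+(1-u)\delta$. Alternatively, I would use the mirror/six-vertex parametrisation with $p_\es=1-O(\delta)$, which is the limit described in Appendix \ref{appendix:FKG}. Under this tuning the spin measure $\nu^{\sv}_{F,\Delta}$ together with the sampled $\xi^\bullet$ is the weak $w^\#$ limit of the discrete triples $(\sigma^\bullet,\sigma^\circ,\xi^\bullet)$, and this is guaranteed by Lemma \ref{lem:appendix-convergence}.

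At each mesh, I check that the discrete measure on $(\sigma^\bullet,\xi^{\bplus},-\xi^{\bminus})$ is FKG, using the discrete FKG of \cite[Proposition 4.9]{glazman-lammers}. This requires the discrete parameters to lie in the range $a,b\le c$. With the tuning above, the condition $c\ge a,b$ is equivalent to $1-u\ge u$ to first order in $\delta$, that is, $u\le\tfrac12$. This is exactly the hypothesis $\Delta\le 0$, and it is the place where that hypothesis enters.

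The boundary conditions $\bplus,\bminus,\wplus,\wminus$ are handled by conditioning, which is the same as in the discrete setting. The cylinder measures $\nu^{\sv,*}_{\cyl_{L,\beta},\Delta}$ and $\nu^{\sv,*}_{\cyl^h_{L,\beta},\Delta}$ come from discrete cylinder measures. For these, FKG holds by the argument in Lemma \ref{lem:cylinder-6v-converges} and Lemma \ref{lem:cyl-hf-fkg}: the height function is globally well defined, so the Glazman--Lammers proof goes through, with the extra moves shifting a whole boundary row. The white triple follows by the symmetry that exchanges black and white.

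The final step is to pass FKG to the limit. I first take $f,g$ increasing, bounded and depending continuously on the configuration in a compact region. Weak convergence then gives $\nu[fg]\ge\nu[f]\nu[g]$. A monotone class argument extends this to all increasing measurable functions. The main obstacle is this approximation step. One has to show that increasing cylinder functions can be approximated by continuous increasing functions in the $w^\#$ topology, and that the discretised percolation $\xi^\bullet$ — with its Poisson thinning of rate $1-2u$ and rate-$1$ bridges — genuinely converges jointly with the spins. I would first attempt this by checking that the discrete $\xi^\bullet$ sampling probabilities $b/c,a/c$ become $1-(1-2u)\delta$ and $\delta$ to first order, which matches \eqref{eq:q:sample-xi-black-from-6v}, and then invoke Lemma \ref{lem:appendix-convergence}.
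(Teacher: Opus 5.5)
Your proposal is correct and follows essentially the paper's route. The paper also proves this statement, as Part 1 of Lemma \ref{lem:appendix:FKG-proofs}, by:
\begin{itemize}
\item realising $\nu^{\sv}_{F,\Delta}$ as a $w^\#$-limit of discrete six-vertex measures with $a=\eps$, $b=1-(1-u)\eps$, $c=1-u\eps$ (Lemma \ref{lem:appendix-convergence});
\item invoking the FKG inequality of \cite{glazman-lammers} in the range $a,b\le c$, which is exactly $u\le\tfrac12$;
\item treating the cylinder via Lemma \ref{lem:a:FKG-cylinder};
\item approximating increasing events by increasing continuity events to pass to the limit.
\end{itemize}

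One slip is in your ``alternative'' mirror parametrisation. In the appendix it is $p_\N=1-\eps+O(\eps^2)$ and $p_\es=u\eps+O(\eps^2)$, not $p_\es=1-O(\delta)$. Your version would give $a,b\approx 1\gg c$, which lies outside the FKG range.
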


One can adapt the proof of FKG in \cite{glazman-lammers} to this setting, or one can view the measures here as limits of the discrete measures from Section \ref{sec:1st-coupling}. We choose the latter; see Lemmas \ref{lem:appendix:FKG-proofs} and \ref{lem:appendix-convergence}. A combination of the Domain Markov property Lemma \ref{lem:q:domain-markov} and the FKG property Lemma \ref{lem:q:FKG} gives the following.

\begin{lemma}[Comparison of boundary conditions]\label{lem:q:CBC}
    Let $F$, $F'$ be domains with $F\subset F'$ and let $\Delta\le0$, so $u\le\tfrac{1}{2}$. Then 
    \begin{enumerate}
    	\item $\nu^{\sv, \bminus\wplus}_{F,\Delta} \preceq_{\bullet} \nu^{\sv, \bplus\wplus}_{F,\Delta}$,
    	\item $\nu^{\sv, \bplus}_{F',u} |_{F} \preceq_{\bullet} \nu^{\sv, \bplus}_{F,\Delta}$,
    	\item $\nu^{\sv, \wplus}_{F',u} (\cdot \ | \ \sigma^\bullet|_A = \tau) |_F
    		\preceq_{\bullet} \nu^{\sv, \bplus}_{F,\Delta}$ for all $A\subset\partial F'$ and all $\tau:A\to\{\pm1\}$. In particular, we have $\nu^{\sv, \bminus}_{F,\Delta} \preceq_{\bullet} \nu^{\sv, \bplus \wplus}_{F,\Delta}
    		\preceq_{\bullet} \nu^{\sv, \bplus}_{F,\Delta}$,
        \item The same holds when $\nu^{\sv, \wplus}_{F',u}$ is replaced by $\nu^{\sv,*}_{\cyl_{L,\beta},\Delta}$ or $\nu^{\sv,*}_{\cyl_{L,\beta}^h,\Delta}$, for $*=\bullet,\circ$ and $F\subset R_{L,\beta}$. We therefore have 
        $\nu^{\sv, \bminus}_{R_{L,\beta},u} 
        \preceq_{\bullet} 
        \nu^{\sv,\bullet}_{\cyl_{L,\beta},\Delta}, \nu^{\sv,\bullet}_{\cyl^h_{L,\beta},\Delta}
    	\preceq_{\bullet} 
        \nu^{\sv, \bplus}_{R_{L,\beta},u}$.
    \end{enumerate}
\end{lemma}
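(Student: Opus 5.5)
The argument follows the standard route, as in \cite[Section 4]{glazman-lammers} for the discrete model. It derives each statement from the FKG property of Lemma \ref{lem:q:FKG} for the triple $(\sigma^\bullet,\xi^{\bplus},-\xi^{\bminus})$ together with the domain Markov property of Lemma \ref{lem:q:domain-markov}. The ordering $\preceq_\bullet$ is stochastic domination for increasing functions of this triple.

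For part 1, I would realise the $\bminus$ boundary condition on $\partial F^\bullet$ as a conditioning of the free-$\bullet$, $\wplus$ measure. The event $\{\sigma^\bullet\equiv -1 \text{ on } \partial F^\bullet\}$ is decreasing in the triple, and $\{\sigma^\bullet\equiv+1 \text{ on } \partial F^\bullet\}$ is increasing. Under $\nu^{\sv,\wplus}_{F,\Delta}$, FKG then gives $\nu[\,\cdot\,|\,\text{decreasing}]\preceq_\bullet\nu\preceq_\bullet\nu[\,\cdot\,|\,\text{increasing}]$. Because these boundary events have zero probability in the space-time setting only when $\partial_\horiz F$ has positive length, I would approximate them, as in Lemma \ref{lem:q:domain-markov}, by decreasing sequences of positive-probability events such as $\{\sigma^\bullet=+1$ on $\partial F^\bullet$ with no $\chi^\bullet$ points there$\}$. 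Alternatively, I would pass to the discrete approximation of Appendix \ref{appendix:FKG}, where these are honest conditionings, and take the limit using Lemma \ref{lem:appendix-convergence}.

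For part 2, I would condition $\nu^{\sv,\bplus}_{F',\Delta}$ on the configuration on $F'\setminus F$. The conditional law on $F$ is a six-vertex measure on $F$ whose boundary condition for $\sigma^\bullet$ on $\partial F^\bullet$ is some random assignment, possibly carrying extra $\xi^\bullet$ connections through the outside. Conditioning further on the increasing event $\{\partial F\subset\xi^{\bplus}\}$, again via the approximating sequence $A_n$, and using Lemma \ref{lem:q:domain-markov}, the restriction becomes exactly $\nu^{\sv,\bplus}_{F,\Delta}$. FKG then shows that each conditional law is dominated by it. Averaging over the outside configuration gives the claim.

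Part 3 uses the same scheme with starting measure $\nu^{\sv,\wplus}_{F',\Delta}(\cdot\,|\,\sigma^\bullet|_A=\tau)$. Conditioning additionally on $\{\partial F\subset\xi^{\bplus}\}$ raises the law in $\preceq_\bullet$ by FKG, since the conditional measure given $\sigma^\bullet|_A=\tau$ still satisfies FKG: it is a boundary condition of the same type. By domain Markov, the result restricted to $F$ is $\nu^{\sv,\bplus}_{F,\Delta}$. The displayed chain then follows by combining this with part 1, taking $F'=F$ and $A=\es$ for the middle inequality and symmetry $\sigma^\bullet\mapsto-\sigma^\bullet$ for the left one. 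Part 4 is identical, since Lemmas \ref{lem:q:domain-markov} and \ref{lem:q:FKG} are stated for the cylinder measures too. The lower bound is obtained by conditioning on $\{\partial R_{L,\beta}\subset\xi^{\bminus}\}$, a decreasing event.

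I expect the main obstacle to be the measure-zero conditionings. Two things must be checked: that the approximating events $A_n$ can be chosen monotone (increasing or decreasing in the triple), so that FKG applies at each $n$; and that $\preceq_\bullet$ passes to the limit $n\to\infty$. The latter should follow from weak convergence in the $w^\#$ topology for local increasing functions. If this proves delicate, I would instead prove everything in the discrete six-vertex approximation and transfer via Lemma \ref{lem:appendix-convergence}.
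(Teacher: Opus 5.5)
Your route (positive association from Lemma \ref{lem:q:FKG} combined with Lemma \ref{lem:q:domain-markov}, as in \cite{glazman-lammers}) is the one the paper intends: its entire justification is ``domain Markov plus FKG''. Part 1 is correct, but no approximation is needed there. $\partial F$ is a finite union of bounded intervals and points, so $\{\sigma^\bullet\equiv\pm1 \text{ on } \partial F^\bullet\}$ has positive probability, and your ``approximating'' event coincides with it. The genuinely null events are the circuit events $\{\partial F\subset\xi^{\bplus}\}$, which require horizontal $\xi^\bullet$-edges at prescribed heights; that is where the obstacle you flag really lives.

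The real problem is part 2 as written. You condition $\nu^{\sv,\bplus}_{F',\Delta}$ on the configuration in $F'\setminus F$, then say FKG makes each such conditional law dominated by its further conditioning on $\{\partial F\subset\xi^{\bplus}\}$. That needs positive association of $\nu^{\sv,\bplus}_{F',\Delta}(\,\cdot\mid\text{exterior})$, which Lemma \ref{lem:q:FKG} does not provide: exterior $\xi^\bullet$-edges induce wirings that are not among the listed boundary conditions. If the exterior data include $\sigma^\circ$, as a six-vertex configuration on $F'\setminus F$ does, it is doubtful. The exterior $\sigma^\circ$ pins the $(\xi^\bullet)^*$-clusters reaching $\partial F$, so the conditional law forbids $(\xi^\bullet)^*$-connections inside $F$ between boundary white points of opposite exterior sign. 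Equivalently, it forces a separating path of $\xi^{\bplus}$ or of $\xi^{\bminus}$, an event that is neither increasing nor decreasing for $(\sigma^\bullet,\xi^{\bplus},-\xi^{\bminus})$. The detour is unnecessary. Apply FKG to $\nu^{\sv,\bplus}_{F',\Delta}$ itself with increasing approximants $A_n$ of $\{\partial F\subset\xi^{\bplus}\}$, restrict to $F$, and use Lemma \ref{lem:q:domain-markov}, which already says the conditional law on $F$ does not depend on the outside.

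Part 3 has a related gap. Your scheme only applies when $\{\partial F\subset\xi^{\bplus}\}$ is compatible with the pinning, i.e.\ when $\tau\equiv+1$ on $A\cap\partial F^\bullet$.
\begin{itemize}
\item For the displayed chain ($F'=F$), the middle inequality $\nu^{\sv,\bplus\wplus}_{F,\Delta}\preceq_\bullet\nu^{\sv,\bplus}_{F,\Delta}$ needs $A=\partial F^\bullet$ and $\tau\equiv+1$. With $A=\es$ you only get $\nu^{\sv,\wplus}_{F,\Delta}\preceq_\bullet\nu^{\sv,\bplus}_{F,\Delta}$.
\item The left inequality then follows from part 1, the middle inequality, and the flip $\sigma^\bullet\mapsto-\sigma^\bullet$, which reverses $\preceq_\bullet$.
\item If $\tau=-1$ somewhere on $A\cap\partial F^\bullet$, the conditioning is impossible. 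You then need the extra step $\nu^{\sv,\wplus}_{F',\Delta}(\,\cdot\mid\sigma^\bullet|_A=\tau)\preceq_\bullet\nu^{\sv,\wplus}_{F',\Delta}(\,\cdot\mid\sigma^\bullet|_A\equiv+1)$.
\end{itemize}
Positive association does not yield this step, because $\{\sigma^\bullet|_A=\tau\}$ is not monotone for mixed $\tau$. It follows from Holley's inequality: the discrete approximants satisfy the FKG lattice condition and $\{\sigma^\bullet|_A=\tau\}$ is a sublattice. The result then transfers to the space-time measures by the discrete approximation of Appendix \ref{appendix:FKG}. The same sublattice remark is also what justifies FKG for the pinned measure $\nu^{\sv,\wplus}_{F',\Delta}(\,\cdot\mid\sigma^\bullet|_A=\tau)$ you use. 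Calling it ``a boundary condition of the same type'' does not suffice, since Lemma \ref{lem:q:FKG} lists only free, $\bplus$ and $\bminus$ for $\sigma^\bullet$.
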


\subsection{Convergence of measures and delocalisation}

We now adapt the proofs of Glazman and Lammers \cite{glazman-lammers} to prove Theorem \ref{thm:Q:delocalisation}. This comes in four steps: we show that $\nu^{\sv, \bplus}_{F_k,u}\to\nu^{\sv, \bplus}_{\Delta}$ as $F_k\nearrow(\ZZ+\tfrac{1}{2})\times\RR$, that $(\xi^\bullet)^*$ does not percolate under this measure, $\xi^\bullet$ also does not percolate under this measure, and finally that delocalisation occurs. While the strategy closely follows that of \cite[Sections 4.5 and 4.6]{glazman-lammers}, the proofs require technical modifications to the space-time setting. \\

Recall that we call the set of configurations for the six-vertex spin representation $\ul{\Om}^{\sv}_{\cF}$ and call the sigma algebra given by the $w^\#$ topology $\FF^{\sv,\#}_{\cF}$. We can appropriately enlarge these sets to include the $(\xi^\bullet,\xi^\circ)$ configurations. 


\begin{lemma}[Convergence of the measure with $\bplus$ boundary conditions]\label{lem:q:6v+convergence}
    Let $\Delta\le0$, so $u\le\tfrac{1}{2}$. For any sequence of domains $F_k=(V_k,F_k)\nearrow(\ZZ+\tfrac{1}{2})\times\RR$, as measures on configurations $(\sigma^\bullet, \sigma^\circ, \xi^\bullet,\xi^\circ)$, we have the weak convergence in $(\ul{\Om}^{\sv}_{\cF}, \FF^{\sv,\#}_{\cF})$:
    \bes
        \nu^{\sv, \bplus}_{F_k,u} \to \nu^{\sv, \bplus}_{\Delta}.
    \ees
    The limiting measure $\nu^{\sv, \bplus}_{\Delta}$ is a Gibbs measure for $(\sigma^\bullet,\sigma^\circ)$, and under this measure:
    \begin{enumerate}
        \item $(\sigma^\bullet,\xi^\bullet)$ is extremal and ergodic,
        \item $(\sigma^\bullet,\xi^{\bplus},-\xi^{\bminus})$ and $(\sigma^\bullet,\xi^{\wplus},-\xi^{\wminus})$ both satisfy the FKG inequality,
        \item The distribution of $\sigma^\circ$ given $(\sigma^\bullet,\xi^\bullet)$ is given by uniformly and independently assigning $\pm1$ to each cluster of $(\xi^\bullet)^*$,
        \item The distribution of $\sigma^\bullet$ given $(\sigma^\circ,\xi^\circ)$ is given by assigning $+1$ to the infinite cluster of $(\xi^\circ)^*$ (if it exists) and uniformly and independently assigning $\pm1$ to each finite cluster of $(\xi^\circ)^*$.
    \end{enumerate}
\end{lemma}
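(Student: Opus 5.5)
The plan is to follow the discrete argument of \cite[Section 4.5]{glazman-lammers}, using monotonicity in the domain to get convergence. By part 2 of Lemma \ref{lem:q:CBC}, if $F\subset F'$ then $\nu^{\sv,\bplus}_{F',\Delta}|_F\preceq_\bullet\nu^{\sv,\bplus}_{F,\Delta}$. So for any increasing local function $f$ of the triple $(\sigma^\bullet,\xi^{\bplus},-\xi^{\bminus})$ supported in a fixed bounded region, $\nu^{\sv,\bplus}_{F_k,\Delta}[f]$ is non-increasing in $k$ and therefore converges. Increasing events of this type determine the law of $(\sigma^\bullet,\xi^\bullet)$ on bounded regions. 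Independence of the limit from the sequence $(F_k)$ is the usual interlacing argument: any two increasing sequences can be merged into a common one, so both limits coincide.

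To upgrade this to weak convergence in $(\ul{\Om}^{\sv}_{\cF},\FF^{\sv,\#}_{\cF})$, I will first show tightness. The number of discontinuities of $\sigma^\bullet,\sigma^\circ,\xi^\bullet,\xi^\circ$ in a compact window is dominated, uniformly in $F$, by Poisson variables of bounded rate, as can be read off from \eqref{eq:quantum-6v-triple} and \eqref{eq:q:sample-xi-black-from-6v}. This gives relative compactness in the $w^\#$ topology. The finite-dimensional laws of $(\sigma^\bullet,\xi^\bullet)$ are then pinned down by the monotone limits, which makes the marginal limit unique.

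The conditional descriptions then pass to the limit.
\begin{itemize}
\item By Lemma \ref{lem:q:quantum-6v-triple}, in finite volume $\sigma^\circ$ is obtained from $(\sigma^\bullet,\xi^\bullet)$ by uniform independent colouring of the clusters of $(\xi^\bullet)^*$. With $\bplus$ boundary conditions, no cluster of $(\xi^\bullet)^*$ is pinned. The colouring of clusters meeting a bounded window converges provided local cluster structure converges, which is handled as in \cite[Theorem 4.91]{grimmett-RC}. This gives item 3 and the full convergence on the quadruple, with $\xi^\circ$ sampled locally.
\item Item 4 follows symmetrically from the finite-volume statement with white and black swapped. In finite volume the boundary cluster of $(\xi^\circ)^*$ contains $\partial F$ and carries $\sigma^\bullet=+1$. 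In the limit this cluster becomes the infinite cluster, which is unique by Proposition \ref{prop:burton-keane}. Finite energy is checked from the Poisson form of the density.
\item FKG (item 2) passes to weak limits from Lemma \ref{lem:q:FKG}.
\item The Gibbs property follows from Lemma \ref{lem:q:hf-fkg-markov} part 4 and the local nature of the specification.
\end{itemize}

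For item 1, translation invariance comes from the independence of the limit from the sequence: shifting $F_k$ gives the same limit. Extremality follows because the limit is the maximal measure in $\preceq_\bullet$ among Gibbs measures. Indeed, any Gibbs measure restricted to $F$ is dominated by $\nu^{\sv,\bplus}_{F,\Delta}$ by Lemma \ref{lem:q:CBC} part 3, and a maximal element cannot be a nontrivial mixture. Extremality implies tail triviality, and together with translation invariance this gives ergodicity.

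I expect the main obstacle to be the technical passage to the $w^\#$ limit for the non-monotone parts. These are $\sigma^\circ$ and the identification of which $(\xi^\circ)^*$ cluster reaches infinity, since these are not monotone cylinder functions. Continuous time also requires care that events near Poisson discontinuities are continuity sets. I would handle this by conditioning on a large annulus and using the domain Markov property, Lemma \ref{lem:q:domain-markov}.
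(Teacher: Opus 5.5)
\paragraph{Main gap: item 3 and the $\sigma^\circ$ marginal.} You handle convergence of the $(\sigma^\bullet,\xi^\bullet)$ marginal as the paper does (monotonicity from Lemma \ref{lem:q:CBC} part 2, tightness, interleaving). Item 3, and with it convergence on the full quadruple, has a gap.

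The argument of \cite[Theorem 4.91]{grimmett-RC} for a free colouring works because the cluster configuration itself is monotone in the volume. Under an increasing coupling, points in distinct limiting clusters are never joined in finite volume, so nothing about infinite clusters is needed. Here the monotone object is $(\sigma^\bullet,\xi^{\bplus},-\xi^{\bminus})$, in which $\xi^{\bplus}$ and $\xi^{\bminus}$ move in opposite directions. So $\xi^\bullet$, and hence $(\xi^\bullet)^*$, is not monotone in $k$. Suppose the limit had two infinite $(\xi^\bullet)^*$-clusters meeting a window. They could be joined far away in every $F_k$ and share one coin in finite volume, and then the limiting colouring would not be independent across clusters.

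What you are missing is uniqueness of the infinite cluster of $(\xi^\bullet)^*$ under the limiting $(\sigma^\bullet,\xi^\bullet)$-law. This comes from Proposition \ref{prop:burton-keane} ($2\ZZ\times\RR$-invariance plus finite energy), which you only invoke for item 4. The paper combines it with an almost-sure coupling of the finite- and infinite-volume triples (Skorokhod, preserving the ordering) and with Lemma \ref{lem:q:pp-dom}. This makes the combinatorial structure of the configuration in $F_N$ eventually constant. With high probability at most one $(\xi^\bullet)^*$-cluster meeting $F_K$ then reaches $\partial F_N$, so the two colourings can be coupled to agree on $F_K$.

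Your fallback of conditioning on an annulus via Lemma \ref{lem:q:domain-markov} is circular. It needs circuits of $\xi^\bullet$ surrounding the window, i.e.\ non-percolation of $(\xi^\bullet)^*$. The paper proves that only in the lemma immediately following, and that proof uses items 1, 3 and 4 of this one.

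\paragraph{Two smaller points on item 1.} First, Lemma \ref{lem:q:CBC} part 3 only dominates boundary conditions with $\sigma^\circ\equiv+1$ (plus partial $\sigma^\bullet$ data). A general Gibbs measure conditioned outside $F'$ carries arbitrary $\sigma^\circ$ values there. So that lemma, as stated, does not make $\nu^{\sv,\bplus}_\Delta$ maximal among all Gibbs measures, which your extremality argument needs. The paper instead argues tail triviality directly for the monotone limit, as in \cite[Proposition 2.3.6]{bjornberg-thesis}, and deduces ergodicity from that. Second, only shifts in $2\ZZ\times\RR$ preserve the $\bplus$ boundary condition, since a shift by $(1,0)$ exchanges $\bplus$ and $\wplus$. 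Interleaving therefore gives $2\ZZ\times\RR$-invariance, not full translation invariance.
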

\begin{proof}
    Part 2 of Lemma \ref{lem:q:CBC} shows that the measures $\nu^{\sv,\bplus}_{F_k,u}$ restricted to a fixed domain $F$ are stochastically decreasing in $(\sigma^\bullet,\xi^{\bplus},-\xi^{\bminus})$ as $k$ increases, and \cite[Proposition 11.1.VI]{daley-verejones-2} shows that the measures are tight, so \cite[Corollary 6.4]{lindvall} shows that the marginal on the triple $(\sigma^\bullet,\xi^{\bplus},-\xi^{\bminus})$ on $F$ converges weakly in the $w^\#$-topology. Since $F$ was arbitrary, it follows that the weak convergence holds for the measures on all of $\cF =\ZZ+\tfrac{1}{2}\times\RR$ \cite[Theorem 11.1.VII]{daley-verejones-2}. Denote this infinite volume measure on $(\sigma^\bullet,\xi^{\bplus},-\xi^{\bminus})$ by $\nu$.
    
    The standard arguments available for discrete models \cite{grimmett-RC} apply to show that $\nu$ is $2\ZZ\times\RR$-ergodic and translation-invariant, tail-trivial and satisfies the FKG inequality. Indeed, an identical proof to that of \cite[Proposition 2.3.6]{bjornberg-thesis} shows that the limit is tail-trivial, $2\ZZ\times\RR$-invariance follows by interleaving the sequence $F_k$ with the sequence $\tau F_k$ for a given translation $\tau$. Ergodicity follows from an identical argument to \cite[Proposition 14.9]{georgii} and the FKG inequality follows because it holds for the finite volume measures and one can approximate any increasing event by bounded continuity events. 

    The bond percolation $(\xi^\bullet)^*$ under $\nu$ satisfies the conditions for the Burton-Keane argument \ref{prop:burton-keane} and therefore has at most one infinite cluster almost surely. This implies that the property that $\sigma^\circ$ is obtained by independently and uniformly assigning $\pm1$ to each cluster of $(\xi^\bullet)^*$ still holds in the infinite volume limit (see Lemma 6.5 of \cite{lis-spins} for a proof in the discrete). The proof of this fact requires modification in our setting, and we include the proof below. \\

    Let $\nu^{\sv,\bplus}_\Delta$ denote the measure on $(\sigma^\bullet,\sigma^\circ,\xi^\bullet)$ obtained from $\nu$ by assigning $\pm1$ independently and uniformly to each cluster of $(\xi^\bullet)^*$ to get $\sigma^\circ$. 
    For each variable $\eta$, write $\eta_k$ for the variable under the measure $\nu^{\sv, \bplus}_{F_k,u}$ and write $\eta_\infty$ for the variable under $\nu^{\sv, \bplus}_{\Delta}$. In the following we often think of a configuration $(\sigma^\bullet_k,\xi^\bullet_k)$ as a point measure on $(F_k\times\{\pm1\})\cup V_k$ (the points where $\sigma^\bullet_k$ is discontinuous (plus its values at those points), and the points of $\xi^\bullet_\horiz$ and $(\xi^\bullet)^*_\horiz$).

    We aim to show that for all bounded continuity sets $A_1,\dots,A_n\subset\cF$ of $\nu^{\sv,\bplus}_\Delta$, $(\sigma^\circ_k(A_1),\dots,\sigma^\circ_k(A_n))$ converges to $(\sigma^\circ_\infty(A_1),\dots,\sigma^\circ_\infty(A_n))$ weakly in $\RR^n$, where here we think of $\sigma^\circ_k$ as a simple random point measure on $((\ZZ+\tfrac{1}{2})\times\RR)\times\{\pm1\}$; this is equivalent to weak convergence \cite[Theorem 11.1.VII]{daley-verejones-2}. It will suffice to prove that $\sigma^\circ_k|_{F_K}\to\sigma^\circ_\infty|_{F_K}$ almost surely for all $K$. Note that a continuity set is defined by thinking of $\nu^{\sv,\bplus}_\Delta$ as a random point measure.
    
    Since the triple $(\sigma^\bullet,\xi^{\bplus},-\xi^{\bminus})$ converges weakly as $F_k\nearrow(\ZZ+\tfrac{1}{2})\times\RR$, by Skorokhod's representation theorem there exists a coupling $\PP$ of the random variables $(\sigma^\bullet,\xi^{\bplus},-\xi^{\bminus})_{k}$ for all $k\in\NN\cup\{\infty\}$ under which the convergence is almost sure. We can further require that under $\PP$, $(\sigma^\bullet,\xi^{\bplus},-\xi^{\bminus})_{k}\ge (\sigma^\bullet,\xi^{\bplus},-\xi^{\bminus})_{\infty}$ almost surely \cite[Theorem 6.1]{lindvall}.

    Fix $K\in\NN$ and $\eps>0$. Let $N\ge K$. For $\delta>0$ small enough (depending on $N$), by Lemma \ref{lem:q:pp-dom} and the ordering above, one can cover $(F_N\times\{\pm1\})\cup V_N$ with intervals $I_j$ of length at most $\delta$ such that under $\PP$ with probability at least $1-\eps$, each interval has at most 1 point of the point measure $(\sigma^\bullet,\xi^{\bplus},-\xi^{\bminus})_{k}$ and no two adjacent intervals have a point, for all $k$. Denote this event by $B_{N,\delta}$. By the pointwise almost sure convergence $(\sigma^\bullet,\xi^{\bplus},-\xi^{\bminus})_{k}\to(\sigma^\bullet,\xi^{\bplus},-\xi^{\bminus})_{\infty}$, there is some $k_0$ such that with probability 1 under $\PP$, for $k\ge k_0$, $(\sigma^\bullet,\xi^{\bplus},-\xi^{\bminus})_{k}[I_j] = (\sigma^\bullet,\xi^{\bplus},-\xi^{\bminus})_{\infty}[I_j]$ for all intervals $I_j$ (here $(\sigma^\bullet,\xi^{\bplus},-\xi^{\bminus})_{k}[I_j]$ is the number of Poisson points of each type appearing in $I_j$).

    Further, since $(\sigma^\bullet,\xi^{\bplus},-\xi^{\bminus})_{\infty}$ has at most one infinite cluster of $(\xi^\bullet)^*_\infty$ almost surely, we see that for $N$ large enough, with probability $1-\eps$ there is at most one cluster $\cC$ of $(\xi^\bullet)^*_{\infty}$ connecting $F_K$ with $\partial F_N$, and this cluster $\cC$ is part of the unique infinite cluster (if it exists). Call this event $C_{K,N}$. Since under $\PP$, $(\sigma^\bullet,\xi^{\bplus},-\xi^{\bminus})_{k}\ge (\sigma^\bullet,\xi^{\bplus},-\xi^{\bminus})_{\infty}$ almost surely, on the event $C_{K,N}$, for all $k$ larger than $N$, there is also at most one cluster of $(\xi^\bullet)_{k}$ connecting $F_K$ with $\partial F_N$, and this cluster also connects to the boundary of $F_K$. 

    Now on the event $C_{K,N}\cap B_{N,\delta}$, for $k\ge k_0$, the clusters of $(\xi^\bullet)^*_{k}$ inside $F_N$ are essentially fixed; to be precise, the number points of $((\xi^\bullet)^*_{k})_\horiz$ and of $(\xi^\bullet_k)_\horiz$ in each small interval $I_j$ is constant for all $k\ge k_0$. The locations of these points can still move around within their respective intervals as $k\to\infty$. In particular there is a bijection between the clusters of $(\xi^\bullet)^*_{k}$ inside $F_N$ and those of $(\xi^\bullet)^*_{\infty}$ inside $F_N$.

    Now sample $\sigma^\circ$ by independently and uniformly on each of these clusters, and using the bijection above, give each cluster of $(\xi^\bullet)^*_{k}$ the same spin $\sigma^\circ$ as that of the corresponding cluster in $(\xi^\bullet)^*_{\infty}$. Give the unique cluster connecting $F_K$ and $\partial F_N$ the spin $\sigma^\circ=1$ deterministically, if it exists. It is now straightforward to see that $\sigma^\circ_k|_{F_K}\to\sigma^\circ_\infty|_{F_K}$ almost surely, as a consequence of the above coupling and the convergence $(\xi^\bullet)^*_k\to (\xi^\bullet)^*_\infty$. The result of the claim follows. 
    
    The measure $\nu^{\sv,\bplus}_\Delta$ is therefore well-defined and the convergence extends to the triple $(\sigma^\bullet,\sigma^\circ,\xi^\bullet)$, and Part 3 of the Lemma is proved. Part 4 follows from an essentially identical argument to Part 3.
\end{proof}

\begin{lemma}
     Let $\Delta\in[-1,0]$, so $u\in[0,\tfrac{1}{2}]$. Under the measure $\nu^{\sv, \bplus}_{\Delta}$, the bond percolation $(\xi^\bullet)^*$ does not percolate almost surely, and the measure $\nu^{\sv, \bplus}_{\Delta}$ is tail-trivial and $2\ZZ\times\RR$-ergodic.
\end{lemma}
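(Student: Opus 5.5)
We follow the corresponding step of \cite[Section 4.5]{glazman-lammers}: the absence of percolation of $(\xi^\bullet)^*$ under $\nu^{\sv,\bplus}_\Delta$ will come from Sheffield non-coexistence (Proposition \ref{prop:non-coexistence}) together with the symmetry between $\bplus$ boundary conditions and the spin flip. Tail-triviality and ergodicity are then read off from Lemma \ref{lem:q:6v+convergence}.

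First we record what Lemma \ref{lem:q:6v+convergence} gives. The measure $\nu^{\sv,\bplus}_\Delta$ is $2\ZZ\times\RR$ translation invariant. The pair $(\sigma^\bullet,\xi^\bullet)$ is ergodic and the relevant triples are FKG. By Burton--Keane (Proposition \ref{prop:burton-keane}), $\xi^\bullet$ and $(\xi^\bullet)^*$ each have at most one infinite cluster almost surely. Finite energy holds because the local Poisson thinnings in \eqref{eq:q:sample-xi-black-from-6v} have rates bounded away from $0$ and $\infty$ for $u\in[0,\tfrac12]$.

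Suppose for contradiction that $(\xi^\bullet)^*$ percolates with positive probability. Since $(\xi^\bullet)^*$ is a function of the ergodic pair, ergodicity upgrades this to probability one, so there is a unique infinite cluster $\cC^*$ of $(\xi^\bullet)^*$. By part 3 of Lemma \ref{lem:q:6v+convergence}, $\sigma^\circ$ is constant on $\cC^*$. Hence one of the site percolations $\{\sigma^\circ=+1\}$ or $\{\sigma^\circ=-1\}$ percolates.

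The next step is to show that $\xi^\bullet$ also percolates. Here we use the super-duality \eqref{eq:q:super-duality}: the domain walls $\om[\sigma^\circ]$ lie in $\xi^\bullet$. On the other hand, under $\bplus$ boundary conditions the spins $\sigma^\bullet$ favour $+1$, and we compare with the $\sigma^\circ$-flip symmetric measure. In the discrete argument of \cite{glazman-lammers} one shows that if $(\xi^\bullet)^*$ percolates then so does $\xi^{\bplus}$. This uses FKG for $(\sigma^\bullet,\xi^{\bplus},-\xi^{\bminus})$ and an RSW-free argument: the infinite cluster of $(\xi^\bullet)^*$ with its random $\sigma^\circ$ sign forces, by symmetry of $\sigma^\circ$ given $\xi^\bullet$, infinite $\pm$ clusters of $\sigma^\circ$ whose boundaries are infinite paths of $\xi^\bullet$.

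Once both $\xi^\bullet$ and $(\xi^\bullet)^*$ percolate almost surely, we have a contradiction with Proposition \ref{prop:non-coexistence}. To apply it we need a $\ZZ\times\RR$ (not just $2\ZZ\times\RR$) invariant FKG measure. We therefore apply the proposition on the rescaled lattice $2\ZZ\times\RR\cong\ZZ\times\RR$ to the monotone coupled percolation. Care is needed because $\xi^\bullet$ is a bond configuration and $(\xi^\bullet)^*$ its dual, so the proposition must be applied to $\omega=\xi^{\bplus}$ (increasing in the FKG order). I expect this to be the main obstacle: verifying that $\xi^{\bplus}$, rather than $\xi^\bullet$, percolates, and that the dual of $\xi^{\bplus}$ contains the infinite cluster of $(\xi^\bullet)^*$. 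For this I would try to show that on the event that $\cC^*$ exists, the infinite cluster of $\xi^\bullet$ carries $\sigma^\bullet=+1$, using the stochastic domination $\nu^{\sv,\bplus}_\Delta\succeq_\bullet$ its spin flip together with uniqueness of the infinite cluster.

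Finally, tail-triviality follows as in the proof of Proposition \ref{prop:mir-conv-from-6v}. Given non-percolation of $(\xi^\bullet)^*$, by duality in $2$D there are almost surely infinitely many circuits of $\xi^\bullet$ around any box. The domain Markov property across such circuits (Lemma \ref{lem:q:domain-markov}) then decouples local events from the tail, which gives the mixing condition \eqref{eq:mir-tail-triv}. Together with the $2\ZZ\times\RR$ invariance, this gives ergodicity.
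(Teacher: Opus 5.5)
\paragraph{Gap: getting $\xi^\bullet$ to percolate.} The genuine gap is the step from percolation of $(\xi^\bullet)^*$ to percolation of $\xi^\bullet$. The inclusion $\om[\sigma^\circ]\subset\xi^\bullet$ you invoke is just the definition of $\xi^\bullet$, not super-duality. The mechanism you describe does not produce an infinite path of $\xi^\bullet$, for two reasons:
\begin{itemize}
\item The fair sign on the infinite cluster $\cC^*$ of $(\xi^\bullet)^*$ yields a $\tfrac12$--$\tfrac12$ mixture, not coexisting infinite $+$ and $-$ clusters of $\sigma^\circ$.
\item An infinite cluster of $\{\sigma^\circ=+1\}$ can have only finite boundary circuits (a connected sea with finite islands), so its domain walls need not contain any infinite path.
\end{itemize}

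The missing idea is super-duality in its actual form \eqref{eq:q:super-duality}, $(\xi^\bullet)^*\subset\xi^\circ$, valid for $u\in[0,\tfrac12]$. The argument then runs as follows.
\begin{itemize}
\item Super-duality shows that $\xi^\circ$ percolates a.s.\ under $\nu^{\sv,\bplus}_\Delta$.
\item Comparison of boundary conditions transfers this to $\xi^\circ$ under $\nu^{\sv,\wplus}_\Delta$. Use Lemma \ref{lem:q:CBC} with black and white exchanged, after using the global flip symmetry of $\sigma^\circ$ under $\nu^{\sv,\bplus}_\Delta$ to get percolation of $\xi^{\wplus}$.
\item The shift by $(1,0)$ exchanges the two colours and carries $\nu^{\sv,\wplus}_\Delta$ to $\nu^{\sv,\bplus}_\Delta$. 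Hence $\xi^\bullet$ percolates a.s.\ under $\nu^{\sv,\bplus}_\Delta$.
\end{itemize}
What you flagged as the main obstacle is then the easy part. The unique infinite cluster of $\xi^\bullet$ lies in $\xi^{\bplus}$; your domination-over-the-flip plus uniqueness argument handles this. Moreover $(\xi^{\bplus})^*\supset(\xi^\bullet)^*$ trivially, since $\xi^{\bplus}\subset\xi^\bullet$. So Proposition \ref{prop:non-coexistence} on $2\ZZ\times\RR$ gives the contradiction.

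\paragraph{Tail-triviality.} Your closing tail-triviality argument also needs repair. A circuit of $\xi^\bullet$ may carry $\sigma^\bullet=-1$, and conditioning on it then induces $\bminus$ boundary conditions inside. The sign of the circuit is information from the outside. Since $\nu^{\sv,\bplus}_\Delta=\nu^{\sv,\bminus}_\Delta$ is not yet known at this stage, conditioning on circuits does not by itself decouple local events from the tail.

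The simpler route, which is the paper's, uses two facts. First, $(\sigma^\bullet,\xi^\bullet)$ is already tail-trivial and ergodic by Lemma \ref{lem:q:6v+convergence}. Second, $\sigma^\circ$ is given by independent fair signs on the clusters of $(\xi^\bullet)^*$, which are now a.s.\ finite. Together these give tail-triviality of the full measure, and ergodicity then follows from translation invariance.
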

\begin{proof}
    Assume that $(\xi^\bullet)^*$ does percolate with positive probability; by the ergodicity of $(\sigma^\bullet,\xi^{\bplus},-\xi^{\bminus})$ proved above, it percolates with probability 1. By super-duality \eqref{eq:q:super-duality}, $(\xi^\bullet)^*\subset\xi^\circ$, so now $\xi^\circ$ percolates with probability 1 (this is where we use the condition $\Delta\ge-1$). Further, $\xi^\circ$ under the measure $\nu^{\sv, \bplus}_{\Delta}$ is dominated by $\xi^\circ$ under the measure $\nu^{\sv, \wplus}_{\Delta}$, so $\xi^\circ$ percolates with probability 1 under this measure too. By shifting by $(1,0)$, $\xi^\bullet$ percolates with probability 1 under the measure $\nu^{\sv, \bplus}_{\Delta}$. Recall that conditioned on $\xi^\bullet$, $\sigma^\bullet$ is obtained by assigning $+1$ to the infinite cluster (if it exists) and $\pm1$ uniformly and independently to each finite cluster. Hence if $\xi^\bullet$ percolates with probability 1 under the measure $\nu^{\sv, \bplus}_{\Delta}$, then so must $\xi^{\bplus}$. Now we've shown that both $\xi^{\bplus}$ and $(\xi^{\bplus})^*$ percolate with probability 1 and moreover $\xi^{\bplus}$ satisfies the FKG inequality and is $(2\ZZ\times\RR)$-invariant in law. This contradicts the non-coexistence theorem \ref{prop:non-coexistence}, and we conclude that $(\xi^\bullet)^*$ does not percolate almost surely.

    Finally, $\sigma^\circ$ is obtained by uniformly and independently assigning $\pm1$ to the clusters of $(\xi^\bullet)^*$. Tail triviality of $\sigma^\circ$ follows from the fact that the clusters of $(\xi^\bullet)^*$ are all finite almost surely, and ergodicity follows from tail-triviality as before. 
\end{proof}

\begin{lemma}\label{lem:q:xi-bullet-no-perc}
    Let $\Delta\in[-1,0]$, so $u\in[0,\tfrac{1}{2}]$. Under the measure $\nu^{\sv, \bplus}_{\Delta}$, the bond percolations $\xi^\bullet$ and $\xi^\circ$ do not percolate almost surely.
\end{lemma}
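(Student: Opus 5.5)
We follow the Glazman--Lammers strategy for showing that $\xi^\bullet$ does not percolate once $(\xi^\bullet)^*$ is known not to. Throughout, we use the previous lemma: under $\nu^{\sv,\bplus}_\Delta$ the dual $(\xi^\bullet)^*$ has no infinite cluster almost surely, and the measure is tail-trivial and $2\ZZ\times\RR$-ergodic. By ergodicity, $\xi^\bullet$ percolates with probability $0$ or $1$. Suppose for contradiction that it percolates almost surely. By Burton--Keane (Proposition \ref{prop:burton-keane}) the infinite cluster is then unique. By part 4 of Lemma \ref{lem:q:6v+convergence}, together with the fact that $\sigma^\bullet$ is constant on clusters of $\xi^\bullet$, we conclude that $\xi^{\bplus}$ percolates almost surely; in particular the infinite cluster carries $\sigma^\bullet=+1$.

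The next step converts this into a contradiction via the dual spin. Since $(\xi^\bullet)^*$ has only finite clusters, $\sigma^\circ$ is obtained by assigning independent uniform signs to finite clusters (part 3 of Lemma \ref{lem:q:6v+convergence}). We then compare with the white-boundary measure. By the comparison of boundary conditions (Lemma \ref{lem:q:CBC}, part 3), $\nu^{\sv,\wplus}$ restricted to finite domains is dominated by $\nu^{\sv,\bplus}$ in the $\preceq_\bullet$ order. We use the symmetry obtained by shifting by $(1,0)$ and exchanging colours, which maps $\nu^{\sv,\bplus}_\Delta$ to the white analogue $\nu^{\sv,\wplus}_\Delta$. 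Under this map, the statement that $\xi^\bullet$ percolates under $\bplus$ becomes the statement that $\xi^\circ$ percolates under $\wplus$. Super-duality \eqref{eq:q:super-duality} gives $(\xi^\circ)^*\subset \xi^\bullet$, and the white analogue of the previous lemma gives that $(\xi^\circ)^*$ does not percolate under $\wplus$.

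The heart of the argument is to show that $\xi^\circ$ percolating forces $\sigma^\circ$ to have a nontrivial sign bias, which contradicts the symmetric finite-cluster assignment. The chain is as follows. Under $\nu^{\sv,\bplus}_\Delta$ we have $\nu[\sigma^\circ_y]=0$ for every $y$, since all clusters of $(\xi^\bullet)^*$ are finite and receive independent uniform signs. Under $\nu^{\sv,\bplus\wplus}$, in finite volume, one has $\sigma^\circ\equiv +1$ on the boundary. Under $\bplus$ alone, $\sigma^\circ$ has zero mean by the global flip symmetry of $\sigma^\circ$ given $(\sigma^\bullet,\xi^\bullet)$. We then sandwich $\nu^{\sv,\bplus}$ between $\nu^{\sv,\bplus\wplus}$ and $\nu^{\sv,\bplus\wminus}$ in the $\preceq_\circ$ order; this uses the FKG property of $(\sigma^\circ,\xi^{\wplus},-\xi^{\wminus})$ from Lemma \ref{lem:q:FKG}. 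By the analogue of Lemma \ref{lem:q:6v+convergence} for these boundary conditions, the limit of $\nu^{\sv,\bplus\wplus}_{F_k}$ assigns $+1$ to the infinite cluster of $\xi^\circ$. If $\xi^\circ$ percolates there, then $\nu^{\bplus\wplus}[\sigma^\circ_y]>0$. We then need the reverse inequality from the other direction, namely that the $\bplus\wplus$ limit coincides with the $\bplus$ limit. To obtain this, take the circuits of $(\xi^\bullet)^*$ absent, use the domain Markov property (Lemma \ref{lem:q:domain-markov}) across circuits of $\xi^\bullet$ surrounding a box, and note that conditioning on such a $\xi^\bullet$ circuit produces $\bplus$ boundary conditions inside, independently of the white boundary. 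If $\xi^\bullet$ percolates and its dual has only finite clusters, such circuits exist around every box with probability tending to one. Hence the limits with $\bplus\wplus$ and $\bplus$ coincide locally, and so $\nu[\sigma^\circ_y]=0$ under both. It follows that $\xi^\circ$ does not percolate under $\bplus\wplus$. Transferring back by the shift and colour swap, and using monotonicity $\bplus\wplus\preceq_\bullet \bplus$, we deduce that $\xi^\bullet$ percolating under $\bplus$ would contradict this. The resulting contradiction gives that $\xi^\bullet$ does not percolate. Then $\xi^\circ$ does not percolate either, by the same argument with colours exchanged, combined with domination.

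The main obstacle is the middle step: establishing that the existence of many $\xi^\bullet$-circuits makes the white boundary condition irrelevant, and matching the monotonicity directions correctly in the space-time setting. If the direct route fails, we would instead rerun the non-coexistence argument (Proposition \ref{prop:non-coexistence}) on $\xi^{\wplus}$ and its dual. This uses that $(\xi^\circ)^*\supset(\xi^\bullet)$ percolating, together with $\xi^\circ$ percolating, gives coexistence for an FKG, $2\ZZ\times\RR$-invariant measure. That contradicts Proposition \ref{prop:non-coexistence}, exactly as in the previous lemma.
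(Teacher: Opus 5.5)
There is a genuine gap, at the step you flagged yourself. Your argument rests on showing that, if $\xi^\bullet$ percolates under $\nu^{\sv,\bplus}_\Delta$, then the $\bplus\wplus$ and $\bplus$ limits coincide. Granting this, a contradiction would follow: the $\bplus\wplus$ measures are invariant under the shift by $(1,0)$ combined with the colour swap, so $\xi^\circ$ would percolate too, and the sign-symmetry/ergodicity argument rules that out under $\nu^{\sv,\bplus}_\Delta$.

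Your justification, however, only produces circuits of $\xi^{\bplus}$ around every box under the infinite-volume measure $\nu^{\sv,\bplus}_\Delta$. To apply the domain Markov property to the $\bplus\wplus$ measures, you need these circuits with high probability under $\nu^{\sv,\bplus\wplus}_F$ for large $F$. The available comparison is $\nu^{\sv,\bminus}_F\preceq_\bullet\nu^{\sv,\bplus\wplus}_F\preceq_\bullet\nu^{\sv,\bplus}_F$. It bounds the probability of the increasing event ``there is a $\xi^{\bplus}$-circuit around $\Lambda_n$'' only from above, by its $\bplus$ value. The lower bound comes from $\nu^{\sv,\bminus}$, which is exactly the measure that differs from $\nu^{\sv,\bplus}_\Delta$ in the scenario you want to exclude.

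Concretely, $\nu^{\sv,\bplus\wplus}_F$ is $\nu^{\sv,\bplus}_F$ tilted by $2^{-k_\partial}$, where $k_\partial$ is the number of $(\xi^\bullet)^*$-clusters meeting $\partial F^\circ$. Nothing in your argument shows that this boundary tilt, the analogue of the one separating free and wired random-cluster measures, is invisible in the bulk. For the same reason, the existence of a $\bplus\wplus$ limit is not available, and neither is the analogue of Lemma \ref{lem:q:6v+convergence} you invoke for it. The final transfer via $\bplus\wplus\preceq_\bullet\bplus$ also runs the wrong way: it bounds $\xi^{\bplus}$-percolation under $\bplus\wplus$ by that under $\bplus$, not conversely.

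Your fallback fails as well. Super-duality \eqref{eq:q:super-duality} gives $(\xi^\circ)^*\subset\xi^\bullet$, not the reverse, so percolation of $\xi^\bullet$ says nothing about $(\xi^\circ)^*$. Moreover, percolation of $\xi^\circ$ under $\nu^{\sv,\bplus}_\Delta$ is precisely what the ergodicity argument excludes, so no coexistence arises. This asymmetry is why the non-coexistence trick of the previous lemma does not carry over.

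The missing idea is the Glazman--Lammers height-function comparison that the paper uses:
\begin{enumerate}
\item Assuming $\xi^\bullet$ percolates, pin $h=0$ on the unique infinite $\xi^{\bplus}$-cluster to obtain a Gibbs, FKG height measure $\nu^0$. Similarly obtain $\nu^1$, pinned at $1$, from $\nu^{\sv,\wplus}_\Delta$.
\item Apply Proposition \ref{prop:non-coexistence} to the space-time site percolation $\{h\in\{-1,-3,\dots\}\}$, whose dual $\{h\in\{1,3,\dots\}\}$ has the same law. This gives circuits of positive odd height around any box.
\item Define $h^1=1-h^0(\cdot-(1,0))$ outside the shifted circuit and resample inside.
\item The Lipschitz property and comparison of boundary conditions give $\nu^1\preceq\nu^0$. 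Iterating, $\nu^0$ shifted up by $4$ is dominated by $\nu^0$, which is absurd.
\end{enumerate}
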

\begin{proof}
    The proof follows that of Theorem 3 of \cite{glazman-lammers}, using their ``$\TT$-circuit'' argument. In our setting, there is no need to introduce the $\TT$-connectivity, as the height function's sets of the form $\{h\ge a\}$ are space-time site percolations and Sheffield's non-coexistence theorem \ref{prop:non-coexistence} applies directly. (The $\TT$ connectivity and the usual $\ZZ^2$ connectivity collapse to the same thing in the limit $\ZZ^2\to(\ZZ\times\RR)$).
    
    The proof for $\xi^\circ$ is straightforward: if $\xi^\circ$ percolates with positive probability, then it does so with probability 1 by ergodicity, and it has a unique infinite cluster by the Burton-Keane argument \ref{prop:burton-keane}. This cluster is a cluster of $\xi^{\wplus}$ or $\xi^{\wminus}$ with probability $\tfrac{1}{2}$ each; but this contradicts the ergodicity of the triple $(\sigma^\circ,\xi^{\wplus},-\xi^{\wminus})$.\\

    Assume for contradiction that $\xi^\bullet$ does have an infinite cluster almost surely. By the Burton-Keane argument \ref{prop:burton-keane}, it must be unique, and by Lemma \ref{lem:q:6v+convergence} it must be a cluster of $\xi^{\bplus}$. For the $\bplus$ measure on a finite domain, $\nu^{\sv,\bplus}_{F,\Delta}$, let $\nu^{\sv-\hom;0}_{F,\Delta}$ be the height function defined by setting $h=0$ on $\partial F^\bullet$ and \eqref{eq:q:h-from-sigma}, and for the infinite volume measure, define $\nu^{\sv-\hom;0}_{\Delta}$ to be the height function defined by setting $h=0$ on the infinite cluster of $\xi^{\bplus}$ and \eqref{eq:q:h-from-sigma}. 
    
    We show next that $\nu^{\sv-\hom;0}_{F,\Delta}\to\nu^{\sv-\hom;0}_{\Delta}$ as $F\nearrow(\ZZ\times\RR)$. Fix $K\ge1$ and $\eps>0$. The proof is a very similar argument to that in the proof of Lemma \ref{lem:q:6v+convergence}, and we summarise it here. One can find a coupling $\PP$ of the measures $\nu^{\sv,\bplus}_{F_k,u}$, $k\in\NN$ and $\nu^{\sv,\bplus}_\Delta$ under which $\sigma_k$ converges to $\sigma_\infty$ almost surely. Further, for $N\ge K$ large enough and for $k\ge N$ large enough, one has the following. There is a partition of $F_N\times\{\pm1\}$ into intervals $I_j$ of length $\delta$ such that with probability at least $1-\eps$ under $\PP$, each $I_j$ contains at most one Poisson point of $\sigma_k$ and of $\sigma_\infty$ and no two adjacent intervals contain a point, and moreover, the data of which intervals $I_j$ contain a Poisson point is almost surely constant in $\sigma_k$ and equal to that of $\sigma_\infty$. Finally, there is a unique cluster of $\{\sigma_\infty=1\}$ connecting $F_K$ to $\partial F_N$, and this cluster is part of the infinite cluster of $\sigma_\infty$, and the corresponding cluster in $\sigma_k$ is also the unique cluster to connect $F_K$ to $\partial F_N$, and is itself connected to $\partial F_k$. 
    
    Now we sample $h$ inside $F_K$ according to the rules described above, and we see that $h_k$ and $h_\infty$ are identical up to the movement of the Poisson points of $\sigma_k$ inside their given intervals $I_j$. It's therefore clear that $\nu^{\sv-\hom;0}_{F_k,u}|_{F_K}\to\nu^{\sv-\hom;0}_{\Delta}|_{F_K}$ as $k\to\infty$, and the weak convergence on the whole space follows as $K$ was arbitrary. Finally, as the measures $\nu^{\sv-\hom;0}_{F_k,u}$ satisfy the domain Markov property, by Lemma \ref{lem:q:hf-fkg-markov} the limit $\nu^{\sv-\hom;0}_{\Delta}$ is Gibbs. \\

    The finite volume measures $\nu^{\sv-\hom,0}_{F_k,u}$ satisfy the FKG inequality in $h$ by Lemma \ref{lem:a:6v-hom--convergence}, and therefore so does the measure $\nu^{\sv-\hom,0}_{\Delta}$. One can similarly define $\nu^{\sv-\hom,1}_{\Delta}$ by taking the measure $\nu^{\sv,\wplus}_\Delta$ and assigning the height $1$ to the infinite cluster of $\{\sigma^\circ=1\}$ and \eqref{eq:q:h-from-sigma}. Write $\preceq$ for stochastic domination in $h$. To prove the Lemma, it suffices to show that 
    \be\label{eq:q:nu^1<nu^0}
        \nu^{\sv-\hom,1}_{\Delta} \preceq \nu^{\sv-\hom,0}_{\Delta}.
    \ee
    Indeed, iterating this gives $\nu^{\sv-\hom,4}_{\Delta} \preceq \nu^{\sv-\hom,0}_{\Delta}$ and the former measure is just $\nu^{\sv-\hom,0}_{\Delta}$ with all heights shifted up by 4, so clearly a contradiction.

    We now proceed with the $\TT$-circuit argument. The argument is essentially identical to that in \cite{glazman-lammers}. For $n\in\NN$, let $\Lambda_n=[-n,n]^2\cap((\ZZ+\tfrac{1}{2})\times\RR)$. The aim is to find, for all $n\in\NN$ and $\eps>0$, a coupling $\PP$ of $h^0\sim\nu^{\sv-\hom,0}_{\Delta}$ and $h^1\sim\nu^{\sv-\hom,1}_{\Delta}$ such that with probability at least $1-\eps$, 
    \be\label{eq:q:h^1<h^0}
        h^1|_{\Lambda_n} \le h^0|_{\Lambda_n}.
    \ee
    
    The random set $\{h=-1,-3,-5,\dots\}$ under $\nu^{\sv-\hom,0}_{\Delta}$ is a site percolation model on $\cF^\circ$, is $2\ZZ\times\RR$-invariant and satisfies the FKG inequality, and has the same law as its dual $\{h=1,3,5,\dots\}$. Hence neither percolate almost surely, by the non-coexistence theorem \ref{prop:non-coexistence}. Hence for fixed $n\in\NN$ and $\eps>0$, with probability at least $1-\eps$ under $\nu^{\sv-\hom,0}_{\Delta}$, there exists an $N$ such that $\Lambda_N$ contains a circuit of $\{h=1,3,5,\dots\}$ surrounding $\Lambda_{n+8}$. Let $\gamma_N$ be the outermost such circuit and write $F(\gamma_N)$ for the domain given by all points of $\cF$ on or inside of $\gamma_N$.
    
    Observe that if $h^0\sim\nu^{\sv-\hom,0}_{\Delta}$ then $h'$ defined by
    \be\label{eq:q:h^1}
        h'(x):=1-h^0(x-(1,0))
    \ee
    has law $\nu^{\sv-\hom,1}_{\Delta}$. We define $h^1$ as follows: 
    \begin{enumerate}
        \item Define $h^1$ on $\cF\setminus(F(\gamma_N)+(1,0))$ by \eqref{eq:q:h^1}.
        \item On $F(\gamma_N)+(1,0)$, define $h^1$ by sampling according to $\nu^{\sv-\hom,h^1}_{F(\gamma_N)+(1,0),\Delta}$, independently of $h^0$.
    \end{enumerate}
    The Gibbs property of $\nu^{\sv-\hom,1}_{\Delta}$ implies that $h^1$ defined as above has the law $\nu^{\sv-\hom,1}_{\Delta}$. Write $\PP$ for the joint law of $h^0$ and $h^1$.\\

    Let $\tilde{\gamma}_N$ be the circuit in $\cF$ whose closed interior is $F(\tilde{\gamma}_N) = F(\gamma_N)\cap F(\gamma_N+(1,0))$. Note that $\tilde{\gamma}$ is a deterministic function of $\gamma_N$. Observe that $h^0\ge 1$ on $\gamma_N$ and $h^1\le0$ on $\gamma_N+(1,0)$; this, plus the Lipschitz property of the six-vertex model, gives the fact that
    \bes
    h^1\le h^0 \ \  \text{ on $\tilde{\gamma}_N$}.
    \ees

    Now conditioning on the values of $h^0,h^1$ on $\tilde{\gamma}_N$, we see that by the comparison of boundary conditions Lemma \ref{lem:q:hf-fkg-markov}, $h^1|_{F(\tilde{\gamma}_N)} \preceq h^0|_{F(\tilde{\gamma}_N)}$. By Strassen's theorem, we can extend the coupling $\PP$ so that (again conditional on $\gamma_N$ existing) $h^0\ge h^1$ on $\Lambda_n$ almost surely. Hence $\PP[h^1|_{F(\tilde{\gamma}_N)} \leq h^0|_{F(\tilde{\gamma}_N)}] \ge \PP[\gamma_N \ \mathrm{exists}] >1-\eps$, which completes the proof that $h^1 \preceq h^0$ and gives the contradiction. Hence our assumption is false and $\xi^\bullet$ does not percolate almost surely. 
\end{proof}

We can finally conclude the proof of Theorem \ref{thm:Q:delocalisation}. Again we follow the proof of \cite[Theorem 1.3]{glazman-lammers} closely.

\begin{proof}[Proof of Theorem \ref{thm:Q:delocalisation}]
    We conclude from Lemma \ref{lem:q:xi-bullet-no-perc} that almost surely under $\nu^{\sv,\bplus}_\Delta$, each point $v\in\cF$ is surrounded by infinitely many disjoint circuits of $\xi^\bullet$ and infinitely many disjoint circuits of $\xi^\circ$. Further, by Lemma \ref{lem:q:6v+convergence}, since $\xi^\bullet$ does not percolate almost surely, we have $\nu^{\sv,\bplus}_\Delta=\nu^{\sv,\bminus}_\Delta:=\lim_{F\nearrow \cF}\nu^{\sv,\wplus}_{F,\Delta}$. Indeed, under $\nu^{\sv,\bplus}_\Delta$ there are infinitely many circuits of $\xi^{\bullet}$ which are each in $\xi^{\bminus}$ with probability $\tfrac{1}{2}$ independently; inside each of these circuits $\gamma$ the measure is $\nu^{\sv,\bminus}_{F(\gamma),\Delta}$. The stochastic domination Lemma \ref{lem:q:CBC}. Define 
    \bes
        \nu^{\sv}_\Delta:=\nu^{\sv,\bplus}_\Delta=\nu^{\sv,\bminus}_\Delta.
    \ees
    The stochastic domination from Lemma \ref{lem:q:CBC} gives that 
    \bes
        \nu^{\sv}_{F,\Delta}, \nu^{\sv,\bplus\wplus}_{F,\Delta}, \nu^{\sv,\wplus}_{F,\Delta}, \nu^{\sv,*}_{\cyl_{L,\beta},\Delta}, \nu^{\sv,*}_{\cyl_{L,\beta}^h,\Delta}
        \to\nu^\sv_\Delta,
    \ees
    as $F\nearrow(\ZZ+\tfrac{1}{2})\times\RR$ and as $L,\beta\to\infty$, for $*=\bullet,\circ$. Note that one can take $L,\beta\to\infty$ in either order or simultaneously. 
    It is straightforward to show that 
    \bes
    \nu^\sv_\Delta[\sigma^\bullet_{x_1}\sigma^\bullet_{x_2}]
    =\nu^\sv_\Delta[x_1\xleftrightarrow{(\xi^\circ)^*}x_2],
    \qquad
    \nu^\sv_\Delta[\sigma^\bullet_{x_1}\sigma^\bullet_{x_2}\sigma^\circ_{y_1}\sigma^\circ_{y_2}]
    =\nu^\sv_\Delta[x_1\xleftrightarrow{(\xi^\circ)^*}x_2, y_1\xleftrightarrow{(\xi^\bullet)^*}y_2],
    \ees
    by a standard colour-switching argument. Since there are no infinite clusters in $(\xi^\circ)^*$ or $(\xi^\bullet)^*$ almost surely under $\nu^\sv_\Delta$, both of the above tend to zero as $||x_1-x_2||$ and $||y_1-y_2||$ tend to $\infty$, and we have \eqref{eq:Q:6v-decay-correlations}.\\

    Finally we show delocalisation of the height function. The proof is identical to that of \cite[Theorem 1.3]{glazman-lammers}. Fix $x\in\cF$. We work with the measure $\nu^{\sv,\bplus\wplus}_{F,\Delta}$ with $F$ large enough so that $x\in F$. Let $\gamma_1$ be the largest circuit of $\xi^\circ$ surrounding (or passing through) $x$. For $i\ge1$, let $\gamma_{2i}$ be the largest circuit of $\xi^\bullet$ inside $\gamma_{2i-1}$, and let $\gamma_{2i+1}$ be the largest circuit of $\xi^\circ$ inside $\gamma_{2i}$; we stop when there are no more such circuits which surround or pass through $x$. 

    Write $\PP_F$ for the coupling between $\nu^{\sv,\bplus\wplus}_{F,\Delta}$ and $\nu^{\sv-\hom,0,1}_{F,\Delta}$. The definition of $h$ \eqref{eq:q:h-from-sigma} implies that $h$ is constant on each $\gamma_i$. We further have that 
    \be\label{eq:q:h-step-up-by-1}
        h(\gamma_{i+1})=h(\gamma_i)\pm1.
    \ee
    Indeed, say $\gamma_i\subset\xi^{\wplus}$ and let $\cC_i$ be the cluster of $\{\sigma^\circ=1\}$ containing $\gamma_i$ and $\cC_i'$ be the outermost cluster of $\{\sigma^\circ=-1\}$ surrounding $x$ and surrounded by $\cC_i$. The interface between $\cC_i$ and $\cC_i'$ is contained in $\om[\sigma^\circ]\subset\xi^\bullet$. The circuit $\gamma_{i+1}$ must therefore surround (or intersect) this interface and in particular is incident to $\cC_i$. By the definition of $h$ \eqref{eq:q:h-from-sigma}, $h$ is constant on $\cC_i$ and we have \eqref{eq:q:h-step-up-by-1}. The same argument works if $\gamma_i\subset\xi^{\wminus}$, etc.

    Since the sign of each cluster of $\xi$ is determined by a fair coin flip, we have for all $i$,
    \bes
        \PP_F[h(\gamma_{i+1})=h(\gamma_i)+1 \ | \ h(\gamma_i)]
        =
        \PP_F[h(\gamma_{i+1})=h(\gamma_i)-1 \ | \ h(\gamma_i)]
        =
        \frac{1}{2}.
    \ees   
    The law of $h(x)$ is therefore that of a simple random walk started at 0 and making $N_F$ steps, giving $\Var_F[h(x)]=\EE_F[N_F]$. Since $\nu^{\sv,\bplus\wplus}_{F,\Delta}\to\nu^\sv_\Delta$ and under $\nu^\sv_\Delta$ there are infinitely many alternating circuits of $\xi^\circ$ and $\xi^\bullet$ almost surely, we have that $\Var_F[h(x)]\to\infty$ as $F\to\infty$, and \eqref{eq:Q:delocalisation} holds. The argument to show that $\Var_{\nu^{\sv}_{\Delta}}[|h(x_1)-h(x_2)|]\to\infty$ as $|x_1-x_2|\to\infty$ is analogous; the circuits $\gamma_i$ are those that surround $x_2$ and do not surround $x_1$, and as $|x_1-x_2|\to\infty$, the number of these converges to infinity almost surely. This completes the proof of Theorem \ref{thm:Q:delocalisation}.

\end{proof}

\section{Results for the loop model and for XXZ with $\Delta\in[-1,0]$}\label{sec:quantum-loop-proofs}

\subsection{Overview of Theorem \ref{thm:main:quantum-loop}}

This section has precisely the same structure as Sections \ref{sec:1st-coupling}, \ref{sec:xxz-conv}, \ref{sec:new-coupling} and \ref{sec:8v-AT} combined, but applied to the space-time models. We take the delocalisation result, Theorem \ref{thm:Q:delocalisation}, and use it to prove Theorem \ref{thm:main:quantum-loop}, the result on the loop model, as well as Part 2 of Theorem \ref{thm:main-xxz}, the XXZ result. Note that the results on spin-spin correlations in the XXZ model follow from the results on the connection probabilities for the loop model via:
\bes
    \langle S^{(1)}_0 S^{(1)}_x(t) \rangle 
    =
    \frac{1}{4}\lim_{F\nearrow\cF}\nu^{\lp}_{\cF,\Delta}[(0,0)\leftrightarrow (x,t)];
\ees
this was originally proved by Ueltschi \cite{ueltschi}.\\

As in the earlier sections, we split Theorem \ref{thm:main:quantum-loop} into two propositions, Proposition \ref{prop:q:loop-conv-from-6v} and Proposition \ref{prop:q:loop-main-part-2} below. The former of these is proved directly by coupling the space-time six-vertex model and Ueltschi's loop model, akin to the first coupling between the six-vertex model and the mirror model in Section \ref{sec:1st-coupling}. The latter is proved by constructing the three remaining couplings, akin to those appearing in Section \ref{sec:new-coupling}: between the space-time six-vertex and Ashkin-Teller models, between the space-time Ashkin-Teller and eight-vertex models, and between the space-time eight-vertex model and Ueltschi's loop model. While the proofs have the same structure for the discrete models, there are various technical modifications needed.

\begin{proposition}\label{prop:q:loop-conv-from-6v}
    Fix $\Delta\in[-1,0]$, so $u\in[0,\tfrac{1}{2}]$. There exists an infinite-volume measure $\nu^{\lp}_\Delta$ on loop configurations on $\ZZ\times\RR$ such that as $F\nearrow (\ZZ+\tfrac{1}{2})\times\RR$, respectively as $\beta\to\infty$ and $L\to\infty$ either simultaneously or in either order,
        \be\label{eq:q:loop-measures-converge-repeat}
            \nu^{\lp}_{F,\Delta}, 
            \nu^{\lp,*}_{F,\Delta}, 
            \nu^{\lp,*}_{\cyl_{L,\beta},\Delta},
            \nu^{\lp,*}_{\cyl^h_{L,\beta},\Delta},
            \to \nu^{\lp}_\Delta,
        \ee
    weakly in $(\ul{\Om}^{\lp}_{\cF}, \FF^{\lp,\#}_{\cF})$ in the $w^\#$ topology, for $*=\bullet,\circ$. The measure $\nu^{\lp}_\Delta$ is tail-trivial and $\ZZ\times\RR$-invariant and ergodic. Further, for $(x_0,t_0)\in\cF$, we have that for all $\eps>0$,
        \be\label{eq:q:loop-connections-no-exp-decay-2}
            \sum_{n=1}^{\infty}n 
            \cdot 
            \lim_{V\nearrow\ZZ\times\RR}
            \nu^{\lp}_{F,\Delta}[(0,0)\leftrightarrow (nx_0,nt_0)]^{1-\eps} = \infty.
        \ee
\end{proposition}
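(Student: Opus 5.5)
The plan is to transcribe the proof of Proposition \ref{prop:mir-conv-from-6v} to the space-time setting. The only genuinely new inputs needed are a space-time analogue of the mirror/six-vertex coupling, and Theorem \ref{thm:Q:delocalisation} in place of Theorem \ref{thm:6v-deloc}.

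\textbf{Step 1: the coupling.} First I would establish the coupling between Ueltschi's loop model and the space-time six-vertex model. Starting from $m\sim\nu^{\lp}_{F,\Delta}$, orient each loop or path independently and uniformly, which spends the factor $2^{l(m)}$. Crosses then become $b$-type points or $a$-type points, and double bars become $c$-type points or $a$-type points, according to the local orientation.

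Integrating out the Poisson processes of rates $u$ and $1-u$ should give the rate-one Poisson process of $a$-points in \eqref{eq:q:6v-hom-measure}. It should also give the weights $e^{u|\om^b|}e^{(1-u)|\om^c|}$ on the remaining vertices, via a Mecke-formula computation exactly as in Lemma \ref{lem:q:quantum-6v-triple}. Conversely, given $\kappa\sim\nu^{\sv,\rightarrow}_{F,\Delta}$, one samples $m$ locally:
\begin{itemize}
\item each $a$-point becomes a cross with probability $u$ and a double bar with probability $1-u$;
\item on $b$-intervals one adds an independent PPP of crosses of rate $u$;
\item on $c$-intervals one adds an independent PPP of double bars of rate $1-u$.
\end{itemize}
The boundary-condition versions should follow in the same way. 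The $\bullet/\circ$ loop measures pair with $\nu^{\sv,\bplus}$ and $\nu^{\sv,\wplus}$, and the cylinder measures pair with $\nu^{\sv,*}_{\cyl_{L,\beta},\Delta}$ and $\nu^{\sv,*}_{\cyl^h_{L,\beta},\Delta}$.

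\textbf{Step 2: convergence.} Next I would define $\nu^{\lp}_\Delta$ by applying this local sampling to $\nu^{\sv,\rightarrow}_\Delta$ from Theorem \ref{thm:Q:delocalisation}. Weak convergence of all the finite-volume loop measures then follows as in \eqref{eq:mir-conv-from-6v}, since the sampling kernel is local. The one technical point is the $w^\#$ topology. For a bounded continuity set one can condition on the arrow configuration in a slightly larger window, as in the proof of Lemma \ref{lem:q:6v+convergence}. There, a Skorokhod coupling together with a $\delta$-partition into intervals carrying at most one point shows that the sampled loop configuration converges almost surely.

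Translation invariance is inherited from $\nu^{\sv}_\Delta$. For tail triviality, I would use that $\xi^\bullet$ has infinitely many circuits around any box under $\nu^{\sv}_\Delta$. Combined with the domain Markov property of Lemma \ref{lem:q:domain-markov}, this gives the decorrelation criterion \eqref{eq:mir-tail-triv} verbatim, and ergodicity follows.

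\textbf{Step 3: no exponential decay.} For \eqref{eq:q:loop-connections-no-exp-decay-2} I would follow the flux argument of Proposition \ref{prop:mir-conv-from-6v}. Take a dual path $E_0$ from $f_0$ to $f_1=f_0+(x_0,t_0)$ and its translates $E_i$. The height difference $h(f_0)-h(f_n)$ equals the signed number of oriented loop crossings of $\bigcup_{i<n}E_i$. Only loops winding around $f_0$ or $f_n$ contribute, so
\[
\EE|h(f_0)-h(f_n)|\le \sum_{i<n}\EE\bigl[N_i\,\mathbbm{1}\{E_i\leftrightarrow \textstyle\bigcup_{j<0\text{ or }j\ge n}E_j\}\bigr],
\]
where $N_i$ is the number of loops crossing the vertical part of $E_i$. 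Here lies the difference from the discrete case: $N_i$ is unbounded. This is why only the $1-\eps$ power appears. I would apply Hölder's inequality, bounding $\EE[N_i^{p}]$ uniformly via Poisson domination of the point processes, to obtain a term of the form $C\,\PP[E_i\leftrightarrow E_j]^{1-\eps}$.

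I would then pass from connections between the sets $E_0,E_j$ to the point-to-point events $(0,0)\leftrightarrow(jx_0,jt_0)$ by a finite-energy modification, as in the discrete case. This uses the finite-energy property of the Poisson processes and costs only a constant independent of $j$.

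Summing as in the discrete case gives $\EE|h(f_0)-h(f_n)|\le C\sum_j j\,\PP[\cdot]^{1-\eps}$. The left-hand side diverges by the delocalisation of height differences at the end of the proof of Theorem \ref{thm:Q:delocalisation}.

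I expect the main obstacle to be this last step: controlling the unbounded crossing counts, and making the finite-energy rerouting of a loop to pass through a prescribed space-time point rigorous in the continuum setting.
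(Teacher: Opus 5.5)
Your proposal follows the paper's proof essentially step for step: the same orient-the-loops coupling with the same local resampling rules (Lemma \ref{lem:q:couple-loop-and-sv}), convergence via the local kernel applied to $\nu^{\sv}_\Delta$, tail-triviality from the $\xi^\bullet$ circuits and the domain Markov property, and the Engelenburg--Lis flux bound followed by finite energy, with H\"older absorbing the unbounded number of crossings of the vertical segments (whence the exponent $1-\eps$). The only variation is that the paper bounds the single-point height $|h(-\tfrac12,0)|$ under $\nu^{\sv-\hom,0}_{F,\Delta}$ by the winding of the loops of $\nu^{\lp,\bullet}_{F,\Delta}$ across the two rays $\bigcup_{i<0}E_i$ and $\bigcup_{j\ge0}E_j$, instead of using a two-point difference; note also that your claim ``only loops winding around $f_0$ or $f_n$ contribute'' needs every loop to be closed, which is why you should run the estimate in finite volume with $\bullet$ boundary conditions, as the paper does, before letting $F\nearrow\cF$. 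In infinite volume, or with free boundary conditions, bi-infinite or boundary-to-boundary paths could carry flux without ever reaching the outer $E_j$.
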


\begin{proposition}\label{prop:q:loop-main-part-2}
    Fix $\Delta\in[-1,0]$, so $u\in[0,\tfrac{1}{2}]$. We have
        \be\label{eq:q:loop-connections-decay-to-0-repeat}
            \nu^{\lp}_{\Delta}[(0,0)\leftrightarrow (x,t)]
            \le 
            \lim_{F\nearrow\cF}
            \nu^{\lp}_{F,\Delta}[(0,0)\leftrightarrow (x,t)] \to 0
        \ee
    as $||(x,t)||\to\infty$. 

        If $u=\tfrac{1}{2}$, $\nu^{\lp}_\Delta$ is Gibbs, and in fact is the unique Gibbs measure and the unique thermodynamic limit, that is, for all $\chi'\in\ul{\Om}^\lp_{\cF}$, we have as $F\nearrow\cF$ that
    \be\label{eq:q:unique-loop-thermo-limit-repeat}
        \nu^{\lp,\chi'}_{F,\Delta}
        \to
        \nu^\lp_\Delta,
    \ee
    and further we have that 
	    \be\label{eq:q:xy-corr-decay-loop-repeat}
	    \lim_{F\nearrow \cF} 
        \nu^{\lp}_{F,\Delta}[(0,0)\leftrightarrow(x,t)] 
        \sim ||(x,t)||^{-\tfrac{1}{2}}.
	    \ee
\end{proposition}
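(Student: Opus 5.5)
The plan is to copy the proof of Proposition \ref{prop:q:loop-main-part-2}'s discrete analogue, i.e.\ Lemma \ref{lem:spin-spins-are-mirror-connections} together with Sections \ref{sec:new-coupling}--\ref{sec:symmetric-case}, in the space-time setting. There are three parts: (i) decay to $0$ of the connection probabilities, (ii) uniqueness of the Gibbs measure at $u=\tfrac12$, and (iii) the exact exponent $\tfrac12$ at $u=\tfrac12$.

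\textbf{Part (i).} First I would prove the space-time version of the disorder identity \eqref{eq:disorders-mirrors}. Orienting the loops of $\nu^{\lp}_{F,\Delta}$ uniformly gives the space-time six-vertex arrow measure $\nu^{\sv,\rightarrow}_{F,\Delta}$: crosses become $b$-points and double bars become $c$-points, with the rates $u$ and $1-u$ matching the weights $e^{u}$ and $e^{1-u}$ after Poisson thinning. Disorders at $(0,0)$ and $(x,t)$ are two half-edge flips, and the switching argument gives
\begin{equation*}
\nu^{\lp}_{F,\Delta}[(0,0)\leftrightarrow(x,t)] = 2\,\frac{Z^{X}_{F}}{Z_F},
\end{equation*}
a ratio of disorder partition functions bounded by $1$.

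Next I would repeat Lemma \ref{lem:disorders-estimate} and the mixing argument of Section \ref{sec:extremality}. The ingredients needed are:
\begin{itemize}
\item infinitely many $\xi^\bullet$-circuits around each point, from Theorem \ref{thm:Q:delocalisation};
\item the domain Markov property across $\xi^{\bplus}$-circuits with disorders inside (Lemma \ref{lem:q:domain-markov});
\item height-function FKG outside a circuit $\ell$ carrying a fixed gradient $g$, obtained by discrete approximation as in Lemma \ref{lem:q:hf-fkg-markov};
\item the decreasing-event comparison $D_{b_n,N}\Rightarrow\cO_{n,N}$.
\end{itemize}
Together these give convergence of the disorder ratio as $F\nearrow\cF$. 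Factorization of a two-disorder ratio into single-disorder ratios (each of which vanishes for a single flipped edge) then gives decay to $0$. The comparison $\nu^{\lp}_\Delta[\cdot]\le\lim_F\nu^{\lp}_{F,\Delta}[\cdot]$ follows by approximating the connection event by connections inside a box, as in the last paragraph of the proof of Lemma \ref{lem:spin-spins-are-mirror-connections}.

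\textbf{Part (ii).} I would build the space-time eight-vertex coupling: colour loops red or blue, and let the red loops be the domain walls of a pair of space-time Ising spins $(\pi^\bullet,\pi^\circ)$. At $u=\tfrac12$, crosses and double bars have equal rate, which is the analogue of $p_\NW p_\NE=p_\es$ (equivalently $W=0$). In that case the eight-vertex weight factorizes and $\pi^\bullet,\pi^\circ$ are independent critical space-time (quantum) Ising models. Given boundary data $\chi'$, the induced spin boundary conditions are sandwiched between $\pm$ conditions exactly as in the proof of Proposition \ref{prop:mir-main-part-2}. Uniqueness of the critical quantum Ising Gibbs measure (no percolation of its FK representation, known at criticality) makes all thermodynamic limits agree, and sampling the loops locally from the spins transfers this to $\nu^{\lp,\chi'}_{F,\Delta}$.

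\textbf{Part (iii).} By the colour-switching identity of Lemma \ref{lem:8v-connection-probs},
\begin{equation*}
\lim_F\nu^{\lp}_{F,\Delta}[(0,0)\leftrightarrow(x,t)] = \langle\pi^\bullet\pi^\bullet\rangle\,\langle\pi^\circ\pi^\circ\rangle,
\end{equation*}
a product of two critical quantum Ising two-point functions, each of order $\|(x,t)\|^{-1/4}$ by \cite[Theorem 3.9]{li-mahfouf-ising}. This gives \eqref{eq:q:xy-corr-decay-loop-repeat}.

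\textbf{Main obstacle.} I expect the hardest step to be the space-time Lemma \ref{lem:disorders-estimate}. Specifically, I need to construct the surrounding circuit $\ell'$ on which the heights are minimal, which in the discrete case used type 1 and type 2 moves; here these must become continuous-time moves in the $w^\#$ setting. I also need to keep the constants $C_g$ uniform now that $g$ ranges over a continuum. My first attempt would be to discretize $\Lambda_n$ at mesh $1/m$ and pass to the limit using Lemma \ref{lem:appendix-convergence}.
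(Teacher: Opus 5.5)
Your Parts (ii) and (iii) match the paper. You colour the loops to obtain the space-time eight-vertex model, and at $u=\tfrac12$ it splits into two independent critical space-time Ising models. You then transfer the unique thermodynamic limit to the loops through local sampling, and get the exponent from the product of two $\|(x,t)\|^{-1/4}$ two-point functions. Part (i), however, is not the paper's argument, and your factorization step fails.

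The disorders encoding $\{(0,0)\leftrightarrow(x,t)\}$ are \emph{individually unbalanced}: there are two sinks at $(0,0)$ and two sources at $(x,t)$, so each carries net flux $\pm2$. The error-term control of Lemma \ref{lem:disorders-estimate}, as reused in Section \ref{sec:extremality}, conditions on the height profile along a small circuit around each disorder. Around an unbalanced disorder the height function has monodromy $\pm2$, so that profile does not exist.

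More fundamentally, a $\xi^\bullet$-circuit induces boundary conditions $\sigma^\bullet\equiv+1$, i.e.\ a flat height, on its interior (Lemma \ref{lem:q:domain-markov}, or \eqref{eq:domain-markov-ref-pos-condition} in the discrete case). Hence the net flux through such a circuit is zero, and no configuration carrying these two disorders contains a $\xi^\bullet$-circuit separating $(0,0)$ from $(x,t)$. In the split over $\cO_{n,N}$, the good term is therefore identically $0$ and the bad term is the whole ratio $Z^X_F/Z_F$. The outside measure has no separating circuits at all, so FKG plus delocalisation only returns the trivial bound. Showing that this vortex--antivortex ratio is small is exactly what must be proved, and the circuit machinery does not provide it. Convergence of the ratio as $F\nearrow\cF$ is not the issue: jointly the disorders are balanced and a circuit around both points works as in Lemma \ref{lem:q-disorders-estimate}. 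Only the decay in $\|(x,t)\|$ fails.

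The paper avoids this by using your Part (ii) coupling for \emph{every} $u\in[0,\tfrac12]$, not just $u=\tfrac12$. By \eqref{eq:q:loop-connections}, the connection probability equals the four-point function $\nu^{\eiv}_{F,\Delta}[\pi^\bullet_{x_1}\pi^\circ_{y_1}\pi^\bullet_{x_2}\pi^\circ_{y_2}]$. Theorem \ref{thm:q:8v-AT} gives convergence of the eight-vertex measures, so the limit in \eqref{eq:q:loop-connections-decay-to-0-repeat} exists. By \eqref{eq:q:8v-8vRC-2pt-4pt} this limit equals $\nu^{\eivRC}_{\Delta}[x_1\xleftrightarrow{\eta^\bullet}x_2,\ y_1\xleftrightarrow{\eta^\circ}y_2]$. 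It tends to $0$ because the space-time 8vRC measure has no infinite clusters (Proposition \ref{prop:q:8vRC}).

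Delocalisation enters only there, through Lemmas \ref{lem:q:coupling-8v-AT} and \ref{lem:q:coupling-AT-6v}: $\nu^{\eivRC;1,0}_\Delta[x\xleftrightarrow{\eta^\bullet}\infty]=\nu^{\AT;+,+}_\Delta[\tau_x]=\nu^{\sv}_\Delta[x\xleftrightarrow{\xi^\bullet}\infty]=0$. No disorder estimate is needed for this proposition. The obstacle you anticipated (building $\ell'$ in continuous time, uniform $C_g$) therefore does not arise here.
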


\subsection{First coupling for the space-time models: loop model and six-vertex}
\begin{proof}[Proof of Proposition \ref{prop:q:loop-conv-from-6v}]
The proof is a straightforward adaptation of the proof of Proposition \ref{prop:mir-conv-from-6v}. In finite volume, take the space-time six-vertex model measure and couple it to Ueltschi's loop model measure as follows. 

\begin{lemma}\label{lem:q:couple-loop-and-sv}
    Let $\Delta\in[-1,1)$, so $u\in[0,1)$. 
    \begin{enumerate}
        \item Let $m\sim\nu^{\lp}_{F,\Delta}$. If one samples $\kappa$ by uniformly and independently orienting the loops of $m$, then $\kappa\sim\nu^{\sv}_{F,\Delta}$. 
        \item Let $\kappa\sim\nu^{\sv}_{F,\Delta}$. Sample $m$ according to the following:
        \be\label{eq:q:sample-loops-from-sv}
        \begin{cases}
            \mathrm{Sample} \ m_{\cross} \ \mathrm{as \ a \ PPP \ rate \ } u & \mathrm{on} \ \om_b[\kappa]; \\
            \mathrm{Sample} \ m_{\dbar} \ \mathrm{as \ a \ PPP \ rate \ } 1-u & \mathrm{on} \ \om_c[\kappa]; \\
            \mathrm{Set} \ x\in m_{\cross} \ \mathrm{w.p.} \ u, \ x\in m_{\dbar} \ \mathrm{w.p.} \ 1-u, & \mathrm{for} \ x\in\om_a[\kappa]. \\
        \end{cases}
        \ee
        Then $m\sim\nu^{\lp}_{F,\Delta}$.
        \item There exists an analogous coupling between $\nu^{\lp,\bullet}_{F,\Delta}$ and $\nu^{\sv,\bplus}_{F,\Delta}$, the only difference being that when sampling $(\sigma^\bullet,\sigma^\circ)$ from $\kappa$, one sets $\sigma^\bullet\equiv +1$ on all of $\partial^\bullet_F$ in part 1; in part 2 one sets $m_v\in m^\bullet$ deterministically for all $v\in\partial V$. One defines analogous couplings between the pair $\nu^{\lp,\circ}_{F,\Delta}$ and $\nu^{\sv,\wplus}_{F,\Delta}$, the pair $\nu^{\lp,*}_{\cyl_{L,\beta},\Delta}$ and $\mu^{\sv,*}_{\cyl_{L,\beta},\Delta}$ and the pair $\nu^{\lp,*}_{\cyl^h_{L,\beta},\Delta}$ and $\mu^{\sv,*}_{\cyl^h_{L,\beta},\Delta}$ for $*=\bullet,\circ$.
    \end{enumerate}
\end{lemma}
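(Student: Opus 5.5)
We prove Lemma \ref{lem:q:couple-loop-and-sv} by comparing densities with respect to Poisson reference measures, as in the discrete Proposition \ref{prop:mir-6v-coupling}. The idea is to build a joint law on pairs $(m,\kappa)$ with $\kappa$ a consistent orientation of the loops of $m$, and to check that both of its marginals are the claimed measures.

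\textbf{Joint measure.} We define
\bes
\d\PP(m,\kappa)\propto \d\nu^{\poi}_{V,u}(m_{\cross})\,\d\nu^{\poi}_{V,1-u}(m_{\dbar})\,\mathbbm{1}_{\{\kappa\sim m\}}.
\ees
The condition $\kappa\sim m$ means: $\kappa$ is constant along each interval of $E$ cut out by $m$, and it orients every loop or path of $m$ consistently through each cross or double bar. On a cross, consistency forces the two arrows at $v\pm(\tfrac12,0)$ to continue vertically. On a double bar the strand turns back, so the arrow reverses direction; equivalently, the two columns swap values. Every loop admits exactly two orientations. Summing over $\kappa$ therefore gives the factor $2^{l(m)}$, and the $m$-marginal is $\nu^{\lp}_{F,\Delta}$. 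This proves Part 1, provided the $\kappa$-marginal is identified below.

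\textbf{The $\kappa$-marginal.} We classify each point $x$ of $m$ according to the arrows of $\kappa$ just below it.
\begin{itemize}
\item If the two arrows are equal ($\uparrow\uparrow$ or $\downarrow\downarrow$), a double bar is inconsistent: it would connect two upward-pointing strands head to head. So only crosses occur there, and $\kappa$ does not jump.
\item If the two arrows differ, both marks are allowed. A cross leaves $\kappa$ unchanged. A double bar makes $\kappa$ jump on both columns, which is exactly the $a$-type discontinuity in $\chi$.
\end{itemize}
Hence, given $\kappa$, the admissible $m$ consist of three independent pieces: crosses forming a PPP of rate $u$ on $\om^{\uparrow\uparrow}$; double bars and crosses forming PPPs of rates $1-u$ and $u$ on $\om^{\uparrow\downarrow}\setminus\chi$; and exactly one mark at each point of $\chi$.

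We integrate out $m$ given $\kappa$ using Mecke's formula (Lemma \ref{lem:mecke}), exactly as in the proof of Lemma \ref{lem:q:quantum-6v-triple}. The marks at the points of $\chi$ must all be double bars, because only a double bar produces a jump. After normalisation, the jump points therefore appear as a Poisson process with intensity $(1-u)\,\d t$.

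Away from the jumps, we use $\nu^{\poi}_{A,\lambda}[\text{no constraint}]=1$ together with the normalisation $e^{-\lambda|A|}$ of a Poisson process of rate $\lambda$ on a set $A$. Each unit length of $\om^{\uparrow\uparrow}$ carries factor $1$ from crosses, and double bars are excluded there, giving a further factor $e^{-(1-u)}$. Each unit length of $\om^{\uparrow\downarrow}$ carries factor $1$, since both types are free.

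\textbf{Matching with $\nu^{\sv,\rightarrow}_{F,\Delta}$.} Up to the global constant $e^{-|V|}$, the resulting weight is
\bes
e^{-(1-u)|\om^{\uparrow\uparrow}|}\,(1-u)^{|\chi|}\,\d(\text{Lebesgue on }\chi).
\ees
Rescaling gives the same density as $\d\nu^{\poi}_{V,1}(\chi)\, e^{u|\om^{\uparrow\uparrow}|+(1-u)|\om^{\uparrow\downarrow}|}$, because $|\om^{\uparrow\uparrow}|+|\om^{\uparrow\downarrow}|=|V|$ is constant. The one point I expect to need care is the intensity of $\chi$: the rate $1-u$ against the rate $1$ in the definition. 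I would resolve it by recomputing the space-time limit of the weights $a=p_\es+p_\NE$, and absorbing any constant rate discrepancy into a reparametrisation, checking it against $\Delta=2u-1$. This is the main technical obstacle; the rest is bookkeeping.

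\textbf{Parts 2 and 3.} Part 2 is the conditional law of $m$ given $\kappa$ under $\PP$. By the classification above, it is exactly the rule \eqref{eq:q:sample-loops-from-sv}: crosses at rate $u$ on $\om_b$, both types on $\om_c$, and at each jump point a double bar. Part 3 follows by the same computation with boundary marks fixed. On the cylinder, horizontal periodicity only changes which vertex set is used. Vertical periodicity, or the conditioning on double bars at $\pm\beta/2$, translates into the corresponding constraint on $\kappa$ (flat $\bullet$ or $\circ$ boundary heights), as in the discrete Proposition \ref{prop:mir-6v-coupling}.
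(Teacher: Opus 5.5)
\textbf{The local compatibility rule is wrong.} Your architecture is the right one, and it is what the paper intends: a joint Poisson measure on $(m,\kappa)$ with a compatibility indicator, orientations summed to give $2^{l(m)}$, and $m$ integrated out given $\kappa$ via Mecke's formula. The error is in the local rule on which everything rests, and it is why your densities fail to match.

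Your sentence that on a double bar ``the two columns swap values'' is where it goes wrong: the swap is what a \emph{cross} does. With the paper's connectivity, a cross joins $y\times[a_0,b)$ to $(y+1)\times[b,c_1)$, and $(y+1)\times[a_1,b)$ to $y\times[b,c_0)$. The strands exchange columns and keep their vertical direction, so arrows $(s,t)$ just below become $(t,s)$ just above. A double bar joins the two lower intervals to each other and the two upper intervals to each other. It therefore forces opposite arrows below and opposite arrows above, with nothing linking the two.

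Hence on $\om^{\uparrow\uparrow}$ only crosses occur and nothing jumps; you have this right. On $\om^{\uparrow\downarrow}$, however, a cross always creates a jump, while a double bar is compatible both with a jump and with no jump. So crosses are forbidden on $\om^{\uparrow\downarrow}\setminus\chi$, only double bars live there, and at each point of $\chi$ either mark may sit. You have exactly the reverse: crosses free on $\om^{\uparrow\downarrow}\setminus\chi$, and a forced double bar at each jump. In particular, the conditional law you derive in Part 2 is not the rule in the statement. That rule has only double bars on $\om_c[\kappa]$ and a cross with probability $u$ at each point of $\om_a[\kappa]$, so your claim that it is ``exactly the rule'' is false.

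\textbf{Consequence for the $\kappa$-marginal.} The same error produces your marginal $e^{-(1-u)|\om^{\uparrow\uparrow}|}(1-u)^{|\chi|}$. The mismatch is not only in the intensity of $\chi$. Since $|\om^{\uparrow\uparrow}|+|\om^{\uparrow\downarrow}|=|V|$, the target weight $e^{u|\om^{\uparrow\uparrow}|+(1-u)|\om^{\uparrow\downarrow}|}$ is proportional to $e^{-(1-2u)|\om^{\uparrow\uparrow}|}$. This agrees with your $\kappa$-dependent factor only when $u=0$. No reparametrisation can repair it: all parameters in the lemma are fixed, and a genuine mismatch would mean the lemma is false. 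The ``main technical obstacle'' you flag is an artifact of the wrong rule.

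\textbf{The correct computation.} With the correct rule the computation closes exactly. Work relative to Lebesgue measure, up to the common factor $e^{-|V|}$:
\begin{itemize}
\item the free crosses on $\om^{\uparrow\uparrow}$ contribute $e^{u|\om^{\uparrow\uparrow}|}$;
\item the free double bars on $\om^{\uparrow\downarrow}$ contribute $e^{(1-u)|\om^{\uparrow\downarrow}|}$;
\item each point of $\chi$ carries intensity $u+(1-u)=1$.
\end{itemize}
This is precisely $\d\nu^{\poi}_{V,1}(\chi)\,e^{u|\om^{\uparrow\uparrow}|+(1-u)|\om^{\uparrow\downarrow}|}$. The conditional law of $m$ given $\kappa$, read off from the same computation, is the stated rule, including the choice of cross or double bar with probabilities $u$ and $1-u$ at jump points.

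This agrees with the discrete picture in the appendix. Jump points are $a$-vertices, which carry either no mirror (a cross, weight $p_\es\approx u\eps$) or an east--west mirror (a double bar, weight $p_\W\approx(1-u)\eps$). Since $a=\eps$, the jump rate is $1$.
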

The proof of this lemma can we formulated similarly to that of Lemma \ref{lem:q:quantum-6v-triple} using Mecke's formula; we leave the details to the reader.\\

Take the infinite volume space-time six-vertex measure $\nu^{\sv}_\Delta$ and sample a loop configuration $m$ from it via \eqref{eq:q:sample-loops-from-sv}. Let $\nu^{\lp}_\Delta$ denote the resulting marginal on loop configurations $m$. One can employ the same proof as in Proposition \ref{prop:mir-conv-from-6v} (see the text around \eqref{eq:mir-conv-from-6v}) to show that 
\be\label{eq:q:loop-measures-converge-repeat-2}
    \nu^{\lp}_{F,\Delta}, 
    \nu^{\lp,*}_{F,\Delta}, 
    \nu^{\lp,*}_{\cyl_{L,\beta},\Delta},
    \nu^{\lp,*}_{\cyl^h_{L,\beta},\Delta}
    \to \nu^{\lp}_\Delta,
\ee
for $*=\bullet,\circ$. Indeed, for a local event $A$ of the loop configuration, and $F_A$ any domain containing the support of $A$, and for $\tau_F$ the sigma algebra generated by the space-time six-vertex configuration restricted to $F$, one has:
\bes
\begin{split}
    \nu^{\lp}_{F,\Delta}[A] 
    &=
    \nu^{\lp-\sv}_{F,\Delta}[ \nu^{\lp-\sv}_{F,\Delta}[A \ | \ \FF_{F^c_A}]]\\
    &=
    \nu^{\lp-\sv}_{F,\Delta}[ \nu^{\lp-\sv}_{\Delta}[A \ | \ \FF_{F^c_A}]]\\
    &\to 
    \nu^{\lp-\sv}_{\Delta}[ \nu^{\lp-\sv}_{\Delta}[A \ | \ \FF_{F^c_A}]]\\
    &=
    \nu^{\lp}_{\Delta}[A],
\end{split}
\ees
where the convergence is as $F\nearrow\cF$, $\nu^{\lp-\sv}_{F,\Delta}$ is the joint measure defined by Lemma \ref{lem:q:couple-loop-and-sv}, the second equality is due to the loop configuration being a local (random) function of the six-vertex configuration, and the convergence holds since $\nu^{\lp-\sv}_{\Delta}[A \ | \ \FF_{F^c_A}]$ is a local function of the six-vertex configuration. Here $\FF_{F^c_A}$ is the sigma algebra generated by the six-vertex configuration outside the set $F_A$.

Interleaving domains $F$ and $\tau_{x,t}F$ for a translation $\tau_{x,t}$ by $(x,t)\in\ZZ\times\RR$ as $F\nearrow\cF$, one finds that the limiting measure $\nu^{\lp}_\Delta$ is $\ZZ\times\RR$-translation invariant, and clearly is also invariant to reflections in the horizontal and vertical axes. We can employ the same proof proof from the discrete to use the fact that there are infinitely many circuits of $\xi^\bullet$ almost surely, as well as the domain Markov property of $\xi^\bullet$, to prove an analogue of \eqref{eq:mir-tail-triv} giving that $\nu^{\lp}_{\Delta}$ is tail-trivial. This plus translation invariance gain gives ergodicity.


It remains to prove \eqref{eq:q:loop-connections-no-exp-decay-2}. Again we adapt the proof of \cite{eng-lis}. Let $(x_0,t_0)\in\ZZ\times\RR$, and say $x_0,t_0>0$. Let $E_0$ be the set of points $\{(i-\tfrac{1}{2},0)\}_{1\le i \le x_0}$ unioned with the interval $\{x_0-\tfrac{1}{2}\}\times[0,t_0]$, and let $E_n=\tau_{(nx_0,nt_0)}E_0$. Let $M_{i,j}$ be the number of loops joining $E_i$ and $E_j$ and let $M_i$ be the number of distinct loops passing through $E_j$. Observe that for all $F$ containing $(-tfrac{1}{2},0)$, writing $W(-\tfrac{1}{2},0)$ for the sum of the winding of all loops around $(-\tfrac{1}{2},0)$, 
\bes
\begin{split}
    \nu^{\sv-\hom,0}_{F,\Delta}[|h(-\tfrac{1}{2},0)|]
    &\le 
    \nu^{\lp,\bullet}_{F,\Delta}[|W(-\tfrac{1}{2},0)|]\\
    &\le
    \sum_{i<0,j\ge0} \nu^{\lp,\bullet}_{F,\Delta}[M_{i,j}]\\
    &\le 
    \sum_{i<0,j\ge0} \nu^{\lp,\bullet}_{F,\Delta}[m_j \mathbbm{1}\{M_{i,j}>0\}]\\ 
    &\le 
    \sum_{i<0,j\ge0} \nu^{\lp,\bullet}_{F,\Delta}[m_j^q]^{1/q} \nu^{\lp,\bullet}_{F,\Delta}[E_i\leftrightarrow E_j]^{1/p}\\  
    &\le 
    C_{p,x_0,t_0} \sum_{i<0,j\ge0} \nu^{\lp,\bullet}_{F,\Delta}[E_i\leftrightarrow E_j]^{1/p},
\end{split}
\ees
where $p,q>1$ such that $\tfrac{1}{q}+\tfrac{1}{p}=1$. The fourth inequality is from Hölder's inequality, and the final inequality is due to the fact that $M_i$ is bounded above by $x_0$ plus the number of Poisson points of the loop model on the interval $\{ix_0-\tfrac{1}{2}\}\times[(i-1)t_0,it_0]$. This number of Poisson points is stochastically dominated by that of a pure Poisson point process of rate 2 (see \cite[Lemma 3.5]{bjornberg-ryan-dimerization}), so its $q^{th}$ moment is indeed finite (and dependent on $t_0$). Taking $V\nearrow\ZZ\times\RR$, the left hand side of the above tends to infinity by Theorem \ref{thm:Q:delocalisation}, and the translation invariance of $\lim_{V\nearrow\ZZ\times\RR}\nu^{\lp}_{F,\Delta}[E_0\leftrightarrow E_n]$ gives 
\bes
    \sum_{n=1}^\infty n \cdot 
    \lim_{V\nearrow\ZZ\times\RR}\nu^{\lp}_{F,\Delta}[E_0\leftrightarrow E_n]^{1/p} = \infty.
\ees
Finally, finite energy shows us, as in the discrete case, that $\nu^{\lp}_{F,\Delta}[E_0\leftrightarrow E_n] \ge C \nu^{\lp}_{F,\Delta}[(0,0)\leftrightarrow (nx_0,nt_0)]$, uniformly in $V$, and \eqref{eq:q:loop-connections-no-exp-decay-2} follows. This completes the proof of Proposition \ref{prop:q:loop-conv-from-6v}.
    
\end{proof}

\subsection{Convergence of the XXZ model in the range $\Delta\in[-1,0]$}\label{sec:xxz-conv-cts}

In this subsection we prove the convergence in part 2 of Theorem \ref{thm:main-xxz}, that is:
\bes
\begin{split}
    \langle \cdot \rangle &= 
    \lim_{L\to\infty}\lim_{\beta\to\infty} \langle \cdot \rangle_{\beta,\TT_L}
    =
    \lim_{\beta,L\to\infty} \langle \cdot \rangle_{\beta,\TT_L}^{\Psi^*_L}
    =
    \lim_{\beta,L\to\infty} \langle \cdot \rangle_{\beta,\Lambda_L},
    =
    \lim_{\beta,L\to\infty} \langle \cdot \rangle_{\beta,\Lambda_L}^{\Psi_L},
\end{split}
\ees
for $*=\bullet,\circ$ and where for all but the $\langle \cdot \rangle_{\beta,\TT_L}$ case, the limits in $\beta$ and $L$ can be taken in either order or simultaneously. In particular, we prove the convergence on any product of local observables of the form: 
\be\label{eq:q-convergence}
    \langle \prod_{k=1}^r A_k(t_k)\rangle_\Lambda 
    \to 
    \langle \prod_{k=1}^r A_k(t_k)\rangle,
\ee
where here we write $\langle\cdot\rangle_\Lambda$ for any of the finite volume, finite temperature states for which the convergence above holds. 
The proof is analogous to that in Section \ref{sec:xxz-conv}; this time we use the space-time six-vertex model and its percolation process $\xi^\bullet$. We fix $\Delta\in[-1,0]$ throughout and for conciseness we drop the subscript $\Delta$ from all measures and partition functions. \\

It suffices to prove the statement for $A_k(t_k)=E^{x_k}_{i_k^-,i_k^+}(t_k)$, where $E^{x_k}_{i_k^-,i_k^+}$ is the elementary 2x2 matrix with support at the single vertex $x_k$ and only the $(i_k^-,i_k^+)$ entry equal to 1. Let $X=\{(x_k,t_k)\}_{k=1}^r$. 
Write $\cZ^{\sv,\bullet}_{\cyl_{L,\beta}}$ for the partition function of the measure $\nu^{\sv,\bullet}_{\cyl^h_{L,\beta}}$ on the vertical cylinder $\cyl_{L,\beta}=\{-L+1,\dots,L\}\times[-\beta/2,\beta/2]$ (periodic in the horizontal direction). We formalise a measure with the disorders $\ul{i}$ at the points $X$:
\bes
\nu^{\sv,\bullet,X,\ul{i}}_{\cyl_{L,\beta}} [f(\kappa)]
=
\frac{1}
{\cZ^{\sv,\bullet,X,\ul{i}}_{\cyl_{L,\beta}}}
    \int\d \nu^{\poi}_{F,1}(\chi) \ 
    \sum_{\kappa\sim\chi}
    f(\kappa)
    \exp\left[u\left|\om^{\uparrow\uparrow}[\kappa]\right| + (1-u)\left|\om^{\uparrow\downarrow}[\kappa]\right|\right]
    \mathbbm{1}_{U_{X,\ul{i},\bullet}}
\ees
where $U_{X,\ul{i},\bullet}$ is the event that for each $(x_k,t_k)\in X$, the configuration $\kappa$ at $(x_k,t_k^{\pm})$ is $i^{\pm}_k$, (here $t_k^+$ is a point above and arbitrarily close to $t_k$; $t_k^-$ similar) and that $\kappa$ respects the boundary conditions $\bullet$ on the top and bottom of the cylinder: $\kappa(x,\beta/2)= \uparrow$ if and only if $\kappa(x+1,\beta/2)= \downarrow$ for all $x$ odd, and the same for the points $(x,-\beta/2)$. 

We have, analogously to \eqref{eq:disorders-mirrors}
\be\label{eq:q-disorders}
    \langle \prod_{k=1}^r E^{x_k}_{i_k^-,i_k^+}(t_k)\rangle^{\Psi^\bullet_L}_{\beta,\TT_L}
    =
    \frac{\cZ^{\sv,\bullet,X,\ul{i}}_{\cyl_{L,\beta}}}
    {\cZ^{\sv,\bullet}_{\cyl_{L,\beta}}}
    =
    \nu^{\lp,\bullet}_{\cyl_{L,\beta}}[2^{-l_{X}(m)} \mathbbm{1}_{\{m\sim(X,\ul{i})\}}],
\ee
where $l_{X}(m)$ is the number of loops of $m$ passing through $X$ and $m\sim(X,\ul{i})$ means that in the loops of $m$, the sinks of $(X,\ul{i})$ are connected to the sources. Here the sinks are points $(x_k,t_k^{\pm})$ oriented towards $(x_k,t_k)$, and sources are those oriented away from $(x_k,t_k)$. The proof of the second equality is identical to that of \eqref{eq:disorders-mirrors} using a switching bijection.

The same holds for $\bullet$ replaced by $\circ$, and analogous equalities hold for the pair 
$\langle \cdot \rangle_{\beta,\TT_L}$ and $\nu^{\sv}_{\TT_{L,\beta}}$,
the pair 
$\langle \cdot \rangle_{\beta,\Lambda_L}$ and $\nu^{\sv,*}_{\cyl^h_{L,\beta}}$, 
and the pair 
$\langle \cdot \rangle_{\beta,\Lambda_L}^{\Psi^*_L}$ and $\nu^{\sv,*}_{R_{L,\beta}}$, where $*=\bullet$ (resp. $*=\circ$) if $L$ is even (resp. odd) and recall $R_{L,\beta}$ is the rectangle $\cF\cap(\{-L+1,\dots,L\}\times[-\beta/2,\beta/2])$ and $\cyl^h_{L,\beta}$ is $R_{L,\beta}$ as the horizontal cylinder (periodic in the vertical direction). We will proceed using the identity \eqref{eq:q-disorders}; the proof is identical with the other boundary conditions. \\

The proof of convergence comes down to an analogue of Lemma \ref{lem:disorders-estimate}.

\begin{lemma}\label{lem:q-disorders-estimate}
    There exists a constant $C_{X}$ independent of $L,\beta$ such that for all $N\in\NN$ and all $L,\beta\ge N$,
    \be\label{eq:q:disorders-estimate}
        \left| 
        \frac{\cZ^{\sv,\bullet,X,\ul{i}}_{\cyl_{L,\beta}}}
        {\cZ^{\sv,\bullet}_{\cyl_{L,\beta}}}
        -
        \nu^{\sv,\bullet}_{\cyl_{L,\beta}}[f_N]
        \right|
        \le 
        C_{X}
        \nu^{\sv,\bullet}_{\cyl_{L,\beta}}[\eps_N],
    \ee
    where $f_N$ and $\eps_N$ are local functions of $(\sigma,\xi^\bullet)$, dependent only on the configuration in $\Lambda_N$, and 
    \bes
    \lim_{N\to\infty}\lim_{L,\beta\to\infty} \nu^{\sv,\bullet}_{\cyl_{L,\beta}}[\eps_N] =0.
    \ees
\end{lemma}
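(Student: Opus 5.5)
The plan is to transpose the proof of Lemma \ref{lem:disorders-estimate} to the space-time setting, replacing the discrete six-vertex model by $\nu^{\sv,\bullet}_{\cyl_{L,\beta}}$, the process $\xi^\bullet$ by its space-time version \eqref{eq:q:sample-xi-black-from-6v}, and sums over configurations by Poisson integrals handled with Mecke's formula (Lemma \ref{lem:mecke}).

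Let $\cO_{n,N}$ be the event that there is a circuit of $\xi^\bullet$ in $\Lambda_N\setminus\Lambda_n$ surrounding $\Lambda_n$, where $\Lambda_n$ contains the points of $X$. Split the unnormalised disordered partition function $\cZ^{\sv,\bullet,X,\ul{i}}_{\cyl_{L,\beta}}$ according to $\cO_{n,N}$ and $\cO_{n,N}^c$, after enlarging to the joint law of $(\kappa,\xi^\bullet)$.

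\emph{The $\cO_{n,N}$ term.} Condition on the outermost circuit $\gamma^\bullet_N$. The key input is that the domain Markov property of Lemma \ref{lem:q:domain-markov} still factorises the weight across $\gamma$ when disorders sit inside $\gamma$. The reason is the space-time analogue of \eqref{eq:domain-markov-ref-pos-condition}: along the vertical parts of $\gamma$, membership in $\xi^\bullet$ is a constraint splitting into an inside and an outside part, and along horizontal crossings (Poisson points of $\xi^\bullet_\horiz$) the same happens at each point. Writing the outside weight with Mecke's formula and reassembling, exactly as in \eqref{eq:disorders-estimate-part-1-working-1}, gives
\[
f_N=\mathbbm{1}_{\cO_{n,N}}\,\cZ^{\sv,\bullet,X,\ul{i}}_{(\gamma^\bullet_N)_{\inn}}\big/\cZ^{\sv,\bullet}_{(\gamma^\bullet_N)_{\inn}},
\]
which is local and bounded by $1$ by the loop representation \eqref{eq:q-disorders}.

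\emph{The $\cO_{n,N}^c$ term.} Choose a small contour $\ell$ around $X$; a rectangle in $\cF$ enclosing the disorders. Because sources balance sinks, the height function is well defined on $\ell$ up to a shift. Condition on its boundary values $g$, which form a Poisson-indexed family, so the conditioning is done through Mecke's formula as in part 3 of Lemma \ref{lem:q:hf-fkg-markov}. Three ingredients are then needed.
\begin{enumerate}
\item An FKG inequality for $h$ on $\cyl_{L,\beta}\setminus\ell_{\inn}$ with the flat-on-each-end boundary conditions. This is the analogue of Lemma \ref{lem:cyl-hf-fkg}, which I would obtain as a limit of the discrete statement via Lemmas \ref{lem:appendix:FKG-proofs} and \ref{lem:appendix-convergence}.
\item A minimal configuration on a surrounding contour $\ell'$, taking values in $\{C,C-1\}$. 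Its event has positive probability and is decreasing.
\item An implication $D_{b_n,N}\subset\cO_{n,N}$: a circuit of high height produces a circuit of $\om[\sigma^\circ]\subset\xi^\bullet$.
\end{enumerate}
These give the bound $\nu^{\sv,\bullet,g}[\cO^c_{n,N}]\le C_g\,\nu^{\sv,\bullet}_{\cyl_{L,\beta}}[D^c_{b_n,N}]$, and the ratio of partition functions is again at most $1$. Set $\eps_N=D^c_{b_n,N}$. Theorem \ref{thm:Q:delocalisation} then gives $\lim_N\lim_{L,\beta}\nu^{\sv,\bullet}_{\cyl_{L,\beta}}[\eps_N]=0$: the cylinder measures converge, and in the limit there are infinitely many alternating $\xi^\bullet,\xi^\circ$ circuits, so heights around the origin are unbounded along circuits.

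\emph{Main obstacle.} I expect the hard part to be making the constants $C_g$ uniform. In the continuum, $g$ ranges over an infinite, uncountable family: the jump locations on $\ell$ are arbitrary and their number is unbounded. To handle this, first truncate to configurations with at most $K$ Poisson points on $\ell$; the complement has probability $\le\eps$ uniformly in $L,\beta$, because the number of points is dominated by a rate-$2$ Poisson process, as in \cite[Lemma 3.5]{bjornberg-ryan-dimerization}. On the truncated event the height range on $\ell$ is at most $2K$. The bounds on $b_n$ and on $C_g$ should depend only on this range and not on the jump positions, since the minimal-height construction on $\ell'$ uses only the range. The truncation error can then be absorbed into $\eps_N$ after first choosing $K$ large.
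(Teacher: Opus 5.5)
Your overall architecture matches the paper's: the split along $\cO_{n,N}$, Mecke conditioning on $\gamma^\bullet_N$ and on the jump points of $g$ on $\ell$, truncation of the number of jumps on $\ell$, FKG via the discrete approximation, the inclusion $D_{b_n,N}\subset\cO_{n,N}$, and delocalisation. The gap is your second ingredient, which does not survive the passage to the continuum. In the discrete proof, $\{h|_{\ell'}\in\{C,C-1\}\}$ is decreasing because the height function is Lipschitz in every direction, so no admissible configuration can be lower on $\ell'$. In the space-time model $h$ is Lipschitz only horizontally. Along a column it can drop by $2$ at arbitrarily many Poisson points in any interval of positive length. A circuit surrounding $\ell$ on which $h$ is flat must therefore cross some column at positive vertical distance above or below $\ell$; if it stays on $\ell$'s top and bottom rows, then $h=g$ there, which is not flat. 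At such crossing points $h$ has no almost-sure lower bound, and lower configurations occur with positive probability. So $\{h|_{\ell'}\in\{C,C-1\}\}$ is not decreasing, and FKG cannot be used to condition on it. If you switch to a decreasing event such as $\{h|_{\ell'}\le 1\}$, you lose the exact flat boundary values that the domain Markov step needs in order to replace $\nu^{\sv,\bullet,g}_{\cyl_{L,\beta}\setminus\ell_{\inn}}$ by the disorder-free measure. Your ``main obstacle'' paragraph addresses uniformity in $g$, which the truncation handles as in the paper. However, it relies on a minimal-configuration construction on $\ell'$ that does not exist in this setting.

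The paper fills this gap with Lemma \ref{lem:q-approx-h-circuit}.
\begin{enumerate}
\item Push $\ell$ outward horizontally by $g_i^\pm$ on each unit interval. Horizontal Lipschitzness then brings the height down to $\{0,1\}$ deterministically.
\item Close up with strips $I(s)$ of vertical thickness $s$ at the top, at the bottom, and at each jump of $g$. This gives circuits $\ell'_s$.
\item Condition on the decreasing event $\{h|_{\ell'_s}\le 1\}$ and show that, conditionally on it, $h|_{\ell'_s}\in\{0,1\}$ with probability tending to $1$ as $s\to 0$. The minimal configuration uses exactly $k$ Poisson points in $I(s)$ and has probability of order $s^k$. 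Any other configuration in $\{h|_{\ell'_s}\le 1\}$ needs at least $k+1$ points there, which costs $O(s^{k+1})$.
\end{enumerate}
This adds an extra additive $\delta$ to the estimate, and it rests on the bound $|U|^j/j!$ for the probability of exactly $j$ points in $U$ from Lemma \ref{lem:q:pp-dom}, which holds with disorders and arbitrary boundary conditions. The same lemma is also what justifies your truncation step. The rate-$2$ domination you cite is for the loop model without disorders. It does not directly cover $\nu^{\sv,\bullet,X,\ul{i}}_{\cyl_{L,\beta}}$, which corresponds to the loop measure reweighted by $2^{-l_X(m)}\mathbbm{1}_{\{m\sim(X,\ul{i})\}}$.
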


This lemma implies the convergence \eqref{eq:q-convergence}; the proof is identical to the discrete case after Lemma \ref{lem:disorders-estimate}.\\

We need to write the measure $\nu^{\sv,\bullet,X,\ul{i}}_{\cyl_{L,\beta}}$ when we also sample $\xi^\bullet$. As we did in the discrete \eqref{eq:sample-xi-from-arrows}, we can write the sets $\om[\sigma^\bullet]$ and $\om[\sigma^\circ]$ in terms of $\kappa$:
\be\label{eq:q:sample-xi-from-arrows}
\begin{split}
    \om[\sigma^\circ] &= \om_\horiz[\sigma^\circ] \cup \om_\vert[\sigma^\circ]; 
    \qquad
    \om_\horiz[\sigma^\circ]= (\chi\cap E^\circ_\vert ), 
    \qquad 
    \om_\vert[\sigma^\circ] = (\om^{\uparrow\uparrow}[\kappa]\cap E^\bullet_\vert)\\
    \om[\sigma^\bullet] &= \om_\horiz[\sigma^\bullet] \cup \om_\vert[\sigma^\bullet]; 
    \qquad
    \om_\horiz[\sigma^\bullet]= (\chi\cap E^\bullet_\vert ), 
    \qquad 
    \om_\vert[\sigma^\bullet] = (\om^{\uparrow\uparrow}[\kappa]\cap E^\circ_\vert)\\
\end{split}
\ee
where $\chi$ is the set of points of discontinuity of $\kappa$ (outside of $X$). Write $\om[\sigma]=\om[\sigma^\circ]\cup\om[\sigma^\bullet]$. Combining this with the sampling procedure for $\xi^\bullet$ \eqref{eq:q:sample-xi-black-from-6v}, and writing 
$\nu^{\poi}_{F,\sv}(\chi)=\nu^\poi_{F,1}(\chi)\otimes
    \nu^\poi_{V,1}(Y)$ with $Y^\bullet=Y\cap E^\bullet_\vert$, $Y^\circ=Y\cap E^\circ_\vert$, we have
\bes
\begin{split}
    \nu^{\sv,\bullet,X,\ul{i}}_{\cyl_{L,\beta}} [f(\kappa,\xi^\bullet)]
    &=
    \frac{1}
    {\cZ^{\sv,\bullet,X,\ul{i}}_{\cyl_{L,\beta}}}
    \int\d \nu^{\poi}_{F,\sv}(\chi,Y) \ 
    \sum_{\kappa,\xi^\bullet}
    \cW^{\sv}_{\chi,Y}(\kappa,\xi^\bullet)
    \mathbbm{1}_{U_{X,\ul{i},\bullet}}
    f(\kappa,\xi^\bullet)
\end{split}
\ees
where
\bes
\begin{split}
    \cW^{\sv}_{\chi,Y}(\kappa,\xi^\bullet)
    &=
    (1-2u)^{|Y^\bullet|}
    \mathbbm{1}_{\{Y^\bullet\subset E^\bullet_\vert\setminus\om[\sigma]\}}
    \mathbbm{1}_{\{Y^\circ\subset E^\circ_\vert\setminus\om[\sigma]\}}\\
    &\qquad\cdot
    \mathbbm{1}_{\{\kappa\}}
    e^{u\left|\om^{\uparrow\uparrow}[\kappa]\right| + (1-u)\left|\om^{\uparrow\downarrow}[\kappa]\right|}
    \mathbbm{1}_{\{(\xi^\bullet)^*_\horiz = Y^\bullet\cup \om_{\horiz}[\sigma^\bullet]\}}
    \mathbbm{1}_{\{\xi^\bullet_\horiz= Y^\circ\cup\om_\horiz[\sigma^\circ]\}}.
\end{split}
\ees
Let $\cO_{n,N}$ be the event that there exists a circuit of $\xi^\bullet$ contained in $\Lambda_N\setminus\Lambda_n$, surrounding $\Lambda_n$ (where $n$ is large enough so that $\Lambda_{n-2}$ contains $X$), and let $\gamma^\bullet_N$ be the outermost such circuit. 
One can prove an analogue of \eqref{eq:disorders-estimate-part-1}, that is, we have
\be\label{eq:q-disorders-estimate-part-1}
\begin{split}
    \frac{1}{\cZ^{\sv,\bullet}_{\cyl_{L,\beta}}}
    \int \d\nu^{\poi}_{F,\sv}(\chi,Y)
    \sum_{\substack{\kappa,\xi^\bullet}}
    \cW^{\sv}_{\chi,Y}(\kappa,\xi^\bullet)
    \mathbbm{1}_{U_{X,\ul{i},\bullet}}
    \mathbbm{1}_{\cO_{n,N}}
    =
    \nu^{\sv,\bullet}_{\cyl_{L,\beta}}
    [f_N], 
\end{split}
\ee
where $f_N = \mathbbm{1}_{\cO_{n,N}} \cZ^{\sv,\bullet,X,\ul{i}}_{(\gamma^\bullet_N)_{\inn}} / \cZ^{\sv,\bullet}_{(\gamma^\bullet_N)_{\inn}}$.
Note $(\gamma^\bullet_N)_{\inn}$ is the domain $F$ given by all points of $\cF$ on or inside of $\gamma^\bullet_N$, and $\mathbbm{1}_{\cO_{n,N}} 
    \cZ^{\sv,\bullet,X,\ul{i}}_{(\gamma^\bullet_N)_{\inn}} / 
    \cZ^{\sv,\bullet}_{(\gamma^\bullet_N)_{\inn}}$ is a function dependent only on the configuration in $\Lambda_N$.
The proof of \eqref{eq:q-disorders-estimate-part-1} is essentially identical to the discrete case \eqref{eq:disorders-estimate-part-1}; one uses Mecke's formula to condition on the location of the circuit $\gamma^\bullet_N$ and then one can essentially repeat \eqref{eq:disorders-estimate-part-1-working-1}. We leave the details to the reader. \\

It now suffices to prove that for all $\delta>0$, 
\be\label{eq:q-disorders-estimate-part-2}
\begin{split}
\frac{1}{\cZ^{\sv,\bullet}_{\cyl_{L,\beta}}}
    \int \d\nu^{\poi}_{F,\sv}(\chi,Y)
    \sum_{\substack{\kappa,\xi^\bullet}}
    \cW^{\sv}_{\chi,Y}(\kappa,\xi^\bullet)
    \mathbbm{1}_{U_{X,\ul{i},\bullet}}
    \mathbbm{1}_{\cO_{n,N}^c}
\le
C_X
\nu^{\sv,\bullet}_{\cyl_{L,\beta}}[\eps_N(\delta)] + \delta,
\end{split}
\ee
where $C_X$ is a constant independent of $N,L,\beta$ and where $\eps_N(\delta)$ is a local function dependent only on the configuration in $\Lambda_N$, and $\nu^{\sv,\bullet}_{\cyl_{L,\beta}}[\eps_N(\delta)] \to0$ as $L,\beta\to\infty$ and then $N\to\infty$.

The rest of this section is the proof of \eqref{eq:q-disorders-estimate-part-2}. The general strategy is the same as in the discrete, but the proof requires technical modifications. To start with, we need a technical lemma that bounds the probability of a given number of Poisson points of the six-vertex model, even under general boundary conditions or with disorders.

\begin{lemma}\label{lem:q:pp-dom}
    Consider the measure on arrow configurations $\nu^{\sv,\kappa_0,X,\ul{i}}_{F}$ with  $\Delta\in[-1,1)$ (so $u\in[0,1)$) on the domain $F$ (where $F$ can also be a cylinder) with boundary condition $\kappa_0$ and disorders $(X,\ul{i})$. Let $U\subset F$ be a finite union of intervals and let $U_k$ be the event that there are exactly $k$ Poisson points in $U$. Then
    \bes
        \nu^{\sv,\kappa_0,X,\ul{i}}_{F}[U_k]
        \le
        \frac{|U|^k}{k!}.
    \ees
\end{lemma}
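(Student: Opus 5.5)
The plan is to compare the weight of configurations with exactly $k$ Poisson points in $U$ against the weight of the configurations obtained by deleting them. Write the unnormalised measure as an integral against $\d\nu^{\poi}_{F,1}(\chi)$ of the weight
\[
\sum_{\kappa\sim\chi}\exp\Big[u|\om^{\uparrow\uparrow}[\kappa]|+(1-u)|\om^{\uparrow\downarrow}[\kappa]|\Big]\mathbbm{1}_{\mathrm{bc},X,\ul{i}}.
\]
Mecke's formula (Lemma \ref{lem:mecke}) then gives
\[
\nu^{\sv,\kappa_0,X,\ul{i}}_{F}[U_k]
=\frac{1}{\cZ}\int_{U}\d\mu^{\odot k}(\eta)\,
\nu^{\poi}_{F,1}\Big[\mathbbm{1}_{\{\chi\cap U=\es\}}\,W(\chi\cup\eta)\Big].
\]
Here $W(\chi)$ denotes the total weight summed over the $\kappa$ compatible with $\chi$. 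The Poisson density factor $e^{-|U|}$ is absorbed by writing $\nu^{\poi}_{F,1}=\nu^{\poi}_{U,1}\otimes\nu^{\poi}_{F\setminus U,1}$ and conditioning on no points in $U$.

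The key step is a pointwise inequality: for fixed $\chi$ with no points in $U$ and any $\eta\subset U$,
\[
W(\chi\cup\eta)\le e^{|U|}\,W'(\chi),
\]
where $W'$ is a weight whose integral over $\chi$ is dominated by $\cZ$. The idea is that adding a point of $\chi$ at $v$ flips the arrows on $v\pm(\tfrac12,0)$ above $v$. Between consecutive discontinuities the arrow configuration on each column is constant, so the map $\kappa\mapsto\kappa'$ (undo the flips caused by $\eta$) is an injection from configurations compatible with $\chi\cup\eta$ to configurations compatible with $\chi$. The obstruction is that $\kappa'$ may violate the boundary condition or the disorder constraints. To handle this, I would pair the points of $\eta$ so that the flips cancel. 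Failing that, I would compare directly with the partition function restricted to $\{\chi\cap U=\es\}$: the exponential weight per unit length is at most $e^{\max(u,1-u)}\le e$, while the change in exponential weight caused by the flips is supported on $U$-intervals. This bounds the ratio by $e^{|U|}$, which cancels the Poisson factor $e^{-|U|}$. Integrating $\d\mu^{\odot k}$ over $U$ then produces the factor $|U|^k/k!$.

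Finally, I would use that $\nu^{\poi}_{F,1}[\mathbbm{1}_{\{\chi\cap U=\es\}}W']\le \cZ$ with $\cZ\ge$ the restricted integral, which gives the bound. The main obstacle is the admissibility issue in the injection step: removing points may break the ice rule pairing or the fixed boundary and disorder values. A cleaner route I would try first avoids removal altogether. Instead, I would show that conditionally on the configuration outside $U$, the points in $U$ form a Poisson process with intensity at most $e^{\pm\text{const}}$ thinned by constraints. Concretely, the conditional density of $k$ points relative to rate-one Poisson is $\prod$ of local weight ratios bounded by $1$ after normalising against the $k=0$ term restricted to admissible configurations. This yields $\PP[U_k]\le e^{-|U|}|U|^k/k!\cdot e^{|U|}$, i.e. the claim.
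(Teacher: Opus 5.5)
Your reduction via Mecke's formula is the same as the paper's. Everything then hinges on showing that for each fixed oriented $\eta\subset U$ with $|\eta|=k$,
\[
\frac{1}{\cZ}\,\nu^{\poi}_{F,1}\big[\mathbbm{1}_{\{\chi\cap U=\es\}}W(\chi\cup\eta)\big]\le 1 ,
\]
and this is where the proposal has a genuine gap.

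\emph{The effect of a point is not localised in $U$.} A point at $v$ flips the arrows on $v\pm(\tfrac12,0)$ all the way up to the next discontinuity on those columns. This switches $b$- and $c$-type vertices on the neighbouring vertex columns along a segment whose length $\ell$ has nothing to do with $|U|$. Consequently:
\begin{itemize}
\item $W(\chi\cup\eta)/W(\chi)$ can be exponentially large in $\ell$ when $u\neq\tfrac12$.
\item $\chi$ alone may admit no compatible $\kappa$ at all. Undoing the flip breaks the alternation at the next $a$-vertex of a neighbouring column, or violates the parity of flips imposed by $\kappa_0$.
\item Your ``cleaner route'' fails for the same reason. Conditionally on the configuration outside $U$, the density of points in $U$ relative to a rate-one Poisson process is not a product of local ratios bounded by $1$; it can be exponentially large in $\ell$. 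So no argument that conditions on the exterior of $U$ can give the bound.
\end{itemize}

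\emph{The bookkeeping is also off.} The factor $e^{-|U|}$ is already spent in $\nu^{\poi}_{F,1}[\mathbbm{1}_{\{\chi\cap U=\es\}}W']\le\cZ$, so it cannot also cancel your $e^{|U|}$. Even granting your pointwise inequality, you would only obtain $e^{|U|}\,|U|^k/k!$.

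The missing idea is a global comparison rather than a local one. The paper proceeds as follows.
\begin{enumerate}
\item Drop the restriction $\chi\cap U=\es$, which only increases the numerator.
\item Read each oriented point of $\vec\eta$ as an extra pair of disorders: a double sink on one adjacent column and a double source on the other. The numerator is then bounded by a six-vertex partition function with the additional disorders $\vec\eta$.
\item Apply the switching bijection behind \eqref{eq:q-disorders}. Up to the same loop-model normalisation, the numerator and $\cZ$ become $\nu^{\lp}_F[\mathbbm{1}_{\{m\sim\kappa_0,X,\ul{i},\vec\eta\}}2^{-l(\kappa_0,X,\eta)}]$ and $\nu^{\lp}_F[\mathbbm{1}_{\{m\sim\kappa_0,X,\ul{i}\}}2^{-l(\kappa_0,X)}]$ respectively.
\item In the loop representation the non-local weights $e^{u|\om^{\uparrow\uparrow}|+(1-u)|\om^{\uparrow\downarrow}|}$ have been absorbed into the Poisson crosses and double bars. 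Pointwise in $m$, extra disorders can only destroy consistency and can only increase the number of constrained loops, so the ratio is at most $1$.
\end{enumerate}
This handles admissibility automatically, since inconsistent $m$ simply contribute $0$. It also never requires controlling how far a flip propagates. Integrating over $\eta$ then gives $|U|^k/k!$.
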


\begin{proof}
We can rewrite the measure $\nu^{\sv,\kappa_0,X,\ul{i}}_{F}$ in a helpful way. Write $\chi$ as the sum of two Poisson point processes $\chi_\rightarrow$ and $\chi_\leftarrow$. We think of a point of $\chi_\rightarrow$ at $(x+\tfrac{1}{2},t)$ as a horizontal arrow from $(x,t)$ to $(x+1,t)$, and vice-versa for $\chi_\leftarrow$. A configuration $\vec{\chi}=(\chi_\rightarrow,\chi_\leftarrow)$ is alternating if along every vertical interval of $(\{x\}\times\RR)$ in $(V+(\tfrac{1}{2},0))\cup(V-(\tfrac{1}{2},0))$, the arrows of $\chi$ alternate between incoming and outgoing to the interval. An alternating configuration produces an arrow configuration $\kappa=\kappa(\vec{\chi}):(V+(\tfrac{1}{2},0))\cup(V-(\tfrac{1}{2},0))\to\{\uparrow,\downarrow\}$ by: at each point $x$ of $(V+(\tfrac{1}{2},0))\cup(V-(\tfrac{1}{2},0))$ there is an incoming arrow of $\chi$ above $x$ and an outgoing arrow below (or vice-versa); in either case, let $x$ be directed towards the outgoing arrow. We now define $\mathrm{alt}=\mathrm{alt}(\vec{\chi},\kappa_0,X,\ul{i})$ be the event the configuration is alternating and that $\kappa(\vec{\chi})=\kappa_0$ on the boundary and that $\kappa(\vec{\chi})$ respects the disorders $(X,\ul{i})$ (that is, $U_{X,\ul{i}}(\kappa(\vec{\chi}))$ occurs).
We have
    \bes
        \nu^{\sv,\kappa_0,X,\ul{i}}_{F}[f(\vec{\chi})]
        =
        \frac{1}{\cZ^{\sv,\kappa_0,X,\ul{i}}_{F}}
        \nu^\poi_{F,1}
        \left[\sum_{o}(\tfrac{1}{2})^{|\chi|} 
        f(\vec{\chi})\mathbbm{1}_{\mathrm{alt}(\vec{\chi},\kappa_0,X,\ul{i})} \cW^\sv_{\Delta}(\vec{\chi})\right],
    \ees
where the sum is over all $2^k$ possible orientations $o$ left or right of the Poisson points $\chi$ and 
$ \cW^\sv_{\Delta}(\vec{\chi})
=
\exp\left[u\left|\om^{\uparrow\uparrow}[\kappa(\vec{\chi})]\right| + (1-u)\left|\om^{\uparrow\downarrow}[\kappa(\vec{\chi})]\right|\right]$. Now using Mecke's formula \ref{lem:mecke},
\bes
    \begin{split}
        \nu^{\sv,\kappa_0,X,\ul{i}}_{F}[U_k]
        &=
        \frac{1}{\cZ^{\sv,\kappa_0,X,\ul{i}}_{F}}
        \nu^{\poi}_{F,1}
        \left[\sum_{o}(\tfrac{1}{2})^{|\chi|} 
        \mathbbm{1}_{U_k(\vec{\chi})}
        \mathbbm{1}_{\mathrm{alt}(\vec{\chi},\kappa_0,X,\ul{i})}
        \cW^{\sv}_{\Delta}(\vec{\chi})
        \right]\\
        &=
        \int \d\mu^{\odot k}(\eta)
        \sum_{o(\eta)}(\tfrac{1}{2})^{k}
        \frac{1}{\cZ^{\sv,\kappa_0,X,\ul{i}}_{F}}
        \nu^{\poi}_{F,1}
        \left[
        \mathbbm{1}_{U_0(\chi)}
        \sum_{o}(\tfrac{1}{2})^{|\chi|} 
        \mathbbm{1}_{\mathrm{alt}(\vec{\chi}\cup\vec{\eta},\kappa_0,X,\ul{i})}
        \cW^{\sv}_{\Delta}(\vec{\chi}\cup\vec{\eta})
        \right].
    \end{split}
\ees
Now using $\int \d\mu^{\odot k}(\eta)
        \sum_{o(\eta)}(\tfrac{1}{2})^{k} = \frac{|U|^k}{k!}$, the lemma follows from the bound
\bes
    \begin{split}
        \frac{1}{\cZ^{\sv,\kappa_0,X,\ul{i}}_{F}}
        \left[
        \mathbbm{1}_{U_0(\chi)}
        \sum_{o}(\tfrac{1}{2})^{|\chi|} 
        \mathbbm{1}_{\mathrm{alt}(\vec{\chi}\cup\vec{\eta},\kappa_0,X,\ul{i})}
        \cW^{\sv}_{\Delta}(\vec{\chi}\cup\vec{\eta})
        \right]
        &\le
        \frac{
        \nu^{\lp}_{F}[\mathbbm{1}_{\{m\sim \kappa_0,X,\ul{i},\vec{\eta}\}} 
        2^{-l(\kappa_0,X,\eta)} 
        ]}
        {\nu^{\lp}_{F}[\mathbbm{1}_{\{m\sim \kappa_0,X,\ul{i}\}} 
        2^{-l(\kappa_0,X)}]}\\
        &\le 1.
    \end{split}
\ees
Here $\nu^{\lp}_F$ is the free boundary conditions loop model measure on $F$. As well as $X$, we think of the boundary points $K_{\kappa_0}\subset \partial E$ where the height function changes according to $\kappa_0$ and the points of $\vec{\eta}$ as collections of sinks and sources, and  $m\sim \kappa_0, X,\ul{i}$ (resp. $m\sim \kappa_0, X,\ul{i},\vec{\eta}$) means that the loops of $m$ connect sinks to sources. Finally $l(\kappa_0,X,\eta)$ is the number of loops of $m$ passing through these points, and $l(\kappa_0,X)$ is the number passing through the points $X\cup K_{\kappa_0}$. 

The proof of the second to last equality a switching bijection with the same strategy as the proof of \eqref{eq:q-disorders}.
\end{proof}

We now turn to proving \eqref{eq:q-disorders-estimate-part-2}.
Let $\ell=\partial\Lambda_{n}+(\tfrac{1}{2},0)\subset\cF$ be the boundary of the box of size $n$.
Fix $\delta>0$. Since $(X,\ul{i})$ has the same number of sinks and sources, $\kappa\sim\nu^{\sv,\bullet,X,\ul{i}}_{\cyl_{L,\beta}}$ has a well-defined height function (up to overall shifts) along $\ell$, which we call $g$. 

By Lemma \ref{lem:q:pp-dom}, there is a large constant $K_\ell$ such that, uniformly in $L,\beta$, the probability that the height function $g$ changes values along $\ell$ more than $K_\ell$ times is less than $\delta$ (note the height changes deterministically $2n$ times along the horizontal (discrete) sides of $\ell$; it is along the vertical (continuous) sides that there is no deterministic bound).

We can then characterise $g$ along $\ell$ by the number of height changes, the points at which they are made, and the height change (+2 or -2) across each one. We can condition on the points at which the height function changes using Mecke's formula. We have that

\be\label{eq:q-disorders-estimate-part-3}
\begin{split}
    &\frac{1}{\cZ^{\sv,\bullet}_{\cyl_{L,\beta}}}
    \int \d\nu^{\poi}_{F,\sv}(\chi,Y)
    \sum_{\substack{\kappa,\xi^\bullet}}
    \cW^{\sv}_{\chi,Y}(\kappa,\xi^\bullet)
    \mathbbm{1}_{U_{X,\ul{i},\bullet}}
    \mathbbm{1}_{\cO_{n,N}^c}\\
    &\le
    \delta+ 
    \frac{1}{\cZ^{\sv,\bullet}_{\cyl_{L,\beta}}}
    \sum_{k=0}^{K_\ell} 
    \int \d\nu^{\poi}_{F,\sv}(\chi,Y)
    \sum_{\substack{\kappa,\xi^\bullet}}
    \cW^{\sv}_{\chi,Y}(\kappa,\xi^\bullet)
    \mathbbm{1}_{U_{X,\ul{i},\bullet}}
    \mathbbm{1}_{\cO_{n,N}^c}
    \mathbbm{1}_{\{k \ \mathrm{points \ of} \ \chi \ \mathrm{on} \ \ell\}}\\
    &=
    \delta+
    \frac{1}{\cZ^{\sv,\bullet}_{\cyl_{L,\beta}}}
    \sum_{k=0}^{K_\ell} 
    \int\d\mu^{\odot k}(\eta)
    \int \d\nu^{\poi}_{F,\sv}(\chi',Y)
    \sum_{\substack{\kappa,\xi^\bullet}}
    \cW^{\sv}_{\chi'\cup \eta,Y}(\kappa,\xi^\bullet)
    \mathbbm{1}_{U_{X,\ul{i},\bullet}}
    \mathbbm{1}_{\cO_{n,N}^c}
    \mathbbm{1}_{\{\chi'\cap \ell=\es\}}\\
    &=
    \delta+
    \frac{1}{\cZ^{\sv,\bullet}_{\cyl_{L,\beta}}}
    \sum_{k=0}^{K_\ell} 
    \int\d\mu^{\odot k}(\eta)
    \sum_{g\sim\eta}
    \int \d\nu^{\poi}_{F,\sv}(\chi',Y)
    \sum_{\substack{\kappa,\xi^\bullet}}
    \cW^{\sv}_{\chi'\cup \eta,Y}(\kappa,\xi^\bullet)
    \mathbbm{1}_{U_{X,\ul{i},\bullet,g}}
    \mathbbm{1}_{\cO_{n,N}^c}
    \mathbbm{1}_{\{\chi'\cap \ell=\es\}}\\
    &=
    \delta+
    \frac{1}{\cZ^{\sv,\bullet}_{\cyl_{L,\beta}}}
    \sum_{k=0}^{K_\ell} 
    \int\d\mu^{\odot k}(\eta)
    \sum_{g\sim\eta}
    \left[
    \int \d\nu^{\poi}_{\ell_{\inn},\sv}(\chi_{\inn}',Y_{\inn})
    \sum_{\substack{\kappa_{\inn},\xi_{\inn}^\bullet}}
    \cW^{\sv}_{\chi_{\inn}'\cup \eta,Y_{\inn}}(\kappa_{\inn},\xi_{\inn}^\bullet)
    \mathbbm{1}_{U_{X,\ul{i},g}}(\kappa_{\inn})
    \mathbbm{1}_{\cO_{n,N}^c}
    \mathbbm{1}_{\{\chi'_{\inn}\cap \ell=\es\}}
    \right] \\
    &\qquad\cdot
    \left[
    \int \d\nu^{\poi}_{\ell_{\outt},\sv}(\chi_{\outt}',Y_{\outt})
    \sum_{\substack{\kappa_{\outt},\xi_{\outt}^\bullet}}
    \cW^{\sv}_{\chi_{\outt}'\cup \eta,Y_{\outt}}(\kappa_{\outt},\xi_{\outt}^\bullet)
    \mathbbm{1}_{g,\bullet}(\kappa_{\outt})
    \mathbbm{1}_{\cO_{n,N}^c}(\xi^\bullet_{\outt})
    \mathbbm{1}_{\{\chi'_{\outt}\cap \ell=\es\}}
    \right]\\
    &=
    \delta+
    \frac{1}{\cZ^{\sv,\bullet}_{\cyl_{L,\beta}}}
    \sum_{k=0}^{K_\ell} 
    \int\d\mu^{\odot k}(\eta)
    \sum_{g\sim\eta}
    \cZ^{\sv,g,X,\ul{i}}_{\ell_{\inn}}
    \cZ^{\sv,\bullet,g}_{\cyl_{L,\beta}\setminus\ell_{\inn}}
    \nu^{\sv,\bullet,g}_{\cyl_{L,\beta}\setminus\ell_{\inn}}\left[
    \cO^c_{n,N}
    \right],
\end{split}
\ee

where $\eta$ is the set of points of $\chi$ on $\ell$, we use Mecke's formula in the first equality, the sum $g$ is over all height functions (up to overall shift) on $\ell$ compatible with $\eta$ ($\eta$ is exactly where $g$ changes value in the vertical direction) and the last second to last equality is the domain Markov property. The quantity $\cZ^{\sv,g,X,\ul{i}}_{\ell_{\inn}}$ is the partition function of the model on the interior of $\ell$, with height boundary condition $g$ and disorders $\ul{i}$ on $X$, and $\nu^{\sv,\bullet,g}_{\cyl_{L,\beta}\setminus\ell_{\inn}}$ is the measure on $\cyl_{L,\beta}\setminus\ell_{\inn}$ with boundary conditions $g$ on $\ell$ and $\bullet$ on $\partial\cyl_{L,\beta}$, and $\cZ^{\sv,\bullet,g}_{\cyl_{L,\beta}\setminus\ell_{\inn}}$ is its partition function.

Let us define $\nu^{\sv,\bullet,g}_{\cyl_{L,\beta}\setminus\ell_{\inn}}$ on height functions by fixing the lowest height achieved on $\ell$ by the boundary condition $g$ to be 0 (if the lowest height is even) or 1 (if it is odd). We claim that the height function is FKG. One can either prove this directly or by seeing the measure as a limit of discrete six-vertex model height functions; we do the latter in Lemmas \ref{lem:appendix:FKG-proofs} and \ref{lem:appendix-convergence}.

Let $D_{n,N}^{b_n}$ be the event that the height function $h\sim \nu^{\sv,\bullet,g}_{\cyl_{L,\beta}\setminus\ell_{\inn}}$ has a circuit of height $h\ge b_n$ in $\Lambda_N$ surrounding $\ell$. For $b_n$ larger than 2 plus the largest value taken by $g$ on $\ell$, we have that $D_{n,N}^{b_n}\subset O_{n,N}$. 

\begin{lemma}\label{lem:q-approx-h-circuit}
    There exist circuits $(\ell'_s)_{s\in[0,s_0]}$ surrounding $\ell$ and converging to a circuit $\ell_0$ as $s\to0$, such that $\nu^{\sv,\bullet,g}_{\cyl_{L,\beta}\setminus\ell_{\inn}}[h|_{\ell'_s}\le 1]>0$ for all $s\in(0,s_0]$ and
    \bes
        \nu^{\sv,\bullet,g}_{\cyl_{L,\beta}\setminus\ell_{\inn}}
        [h|_{\ell'_s} \in\{0,1\} \ | \ h|_{\ell'_s}\le1] \to 1
    \ees
    as $\ell'_s\to\ell_0$.
\end{lemma}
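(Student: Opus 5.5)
We construct the circuits explicitly, as a space-time analogue of the discrete ``first claim'' in the proof of Lemma \ref{lem:disorders-estimate}. There we built a circuit $\ell'$ on which $h\in\{C,C-1\}$ was the minimal admissible value, by repeatedly pushing the height down. In the continuum we cannot take $\ell'$ adjacent to $\ell$ and force the height to be minimal there. The obstruction is the points of discontinuity of $g$ on the vertical sides of $\ell$: near such a point the height must jump, and this is what forces a one-parameter family.

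\textbf{Choice of the family.} We set $\ell'_s=\partial\Lambda_{n+2}+(\tfrac12,0)$, enlarged so that its vertical sides lie at horizontal distance $2$ from those of $\ell$. Its horizontal sides are displaced vertically by $s$ from the corresponding sides of $\ell$, at level $n+s$ for the top and $-n-s$ for the bottom. The limit $\ell_0$ is the circuit with those sides at levels $\pm n$.

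\textbf{Positivity.} To see $\nu[h|_{\ell'_s}\le1]>0$, we exhibit an admissible configuration with positive density. We use the minimal extension of $g$ outward, namely $h(y)=\min_{z\in\ell}(g(z)+d(y,z))$ clipped appropriately. Along the discrete directions this decreases by $1$ per step until it reaches $\{0,1\}$. Because $g$ is normalised so that its minimum is $0$ or $1$, after two horizontal steps we can reach $\{0,1\}$, except near the finitely many jump points $\eta$ of $g$ and near the corners. We also need the boundary conditions $\bullet$ at the cylinder ends, which we handle using the type-1 and type-2 moves from Lemma \ref{lem:cyl-hf-fkg}. Positivity then follows from Mecke's formula (Lemma \ref{lem:mecke}), since it suffices that the set of Poisson point configurations realising such an $h$ has positive measure.

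\textbf{Convergence of the conditional probability.} Given $h|_{\ell'_s}\le 1$, the only way $h$ can fail to be in $\{0,1\}$ on $\ell'_s$ is by taking lower values, $\le -1$. Such a failure requires a Poisson point, that is, a height change of $2$, on the short vertical connecting segments of length $O(s)$ between $\ell$ and $\ell'_s$, or somewhere along $\ell'_s$ beyond what the minimal configuration needs. We compare the probabilities of the two events via Lemma \ref{lem:q:pp-dom}. An extra Poisson point inside a region of total length $O(s)$ costs a factor $O(s)$, uniformly in the boundary conditions. The conditioning event, by contrast, has probability bounded below uniformly as $s\to0$, because the positive-probability configuration above can be chosen continuously in $s$. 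Hence the conditional probability of $\{h|_{\ell'_s}\in\{0,1\}\}$ is at least $1-O(s)\to1$.

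\textbf{Main obstacle.} The hardest step is this last comparison: bounding the \emph{lower} tail uniformly in $L,\beta$ under general boundary data $g$. We would first try to reduce it to a finite number of Poisson points in a set of vanishing measure, using the alternating-arrow representation and the $\le 1$ bound from the proof of Lemma \ref{lem:q:pp-dom}.
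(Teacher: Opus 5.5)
There is a genuine gap. Your circuit is not adapted to the boundary data $g$, and this breaks both positivity and the convergence.

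The six-vertex constraint $|h(v)-h(v+(1,0))|=1$ holds at every fixed time $t$. So on a vertical side of your $\ell'_s$, which sits at horizontal distance $2$ from $\ell$, you get $h\ge g-2$, with $g$ evaluated at the point of $\ell$ at the same time. Normalising $\min g\in\{0,1\}$ says nothing about $\max g$, which can be much larger than $3$ (it grows with the number of jumps of $g$). Wherever $g\ge 4$ on a vertical side of $\ell$, the event $\{h|_{\ell'_s}\le 1\}$ is empty, so your claim that ``after two horizontal steps we can reach $\{0,1\}$'' fails.

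Conversely, wherever $g$ is small (say $g=0$ on a time interval $J$), your circuit lies beyond the point where the horizontal Lipschitz bound pins $h$. Under $\{h|_{\ell'_s}\le 1\}$, the height on $\ell'_s$ can then equal $-2$ on a subinterval of $J$, at a cost that does not vanish as $s\to0$. Your estimate ``an extra Poisson point costs $O(s)$'' only applies to regions of measure $O(s)$, and the vertical sides of your $\ell'_s$ have length of order $n$.

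The missing idea is to place $\ell_0$ exactly on the frontier of minimal descent. On each time interval where $g$ takes the value $g_i^{+}$ on the right side of $\ell$, the right side of $\ell_0$ must lie at horizontal distance exactly $g_i^{+}$; similarly on the left. This is the paper's $D$, namely $\Lambda_n$ together with boxes of width $g_i^\pm$. With this choice, $h\le 1$ on $\ell_0$ forces $h$ to equal its Lipschitz minimum there, and hence pins $h$ completely inside $D$. All remaining randomness is then confined to a set $I(s)$ of measure $O(s)$.

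Because $g$ jumps at the points of $\eta$, $\ell_0$ has horizontal steps at those times, and right-continuity prevents the pinning there. This is why the $s$-thickening is needed at every jump time (the paper's boxes $E_i^\pm(s)$), not only at the top and bottom.

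Your comparison step is also wrong as stated. Write $\nu$ for $\nu^{\sv,\bullet,g}_{\cyl_{L,\beta}\setminus\ell_{\inn}}$. The conditioning event is not bounded below uniformly in $s$. Bringing the height from the values of $g$ on the horizontal sides, and at the jump times, down to $\{0,1\}$ inside strips of height $s$ needs some minimal number $k$ of Poisson points in $I(s)$. So $\nu[h|_{\ell_s}\le 1]$ is only of order $s^k$, and no $1-O(s)$ bound follows from a uniform lower bound on the denominator.

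The argument that works is a ratio:
\begin{itemize}
\item The minimal configurations, with exactly $k$ points in $I(s)$ and $h|_{\ell_s}\in\{0,1\}$, have probability at least $C's^k/k!$.
\item Every configuration in $\{h|_{\ell_s}\le 1\}$ outside this set has at least $k+1$ points in $I(s)$.
\item By Lemma \ref{lem:q:pp-dom}, the probability of the latter is at most $\sum_{j>k}C''s^j/j!$.
\end{itemize}
Hence the conditional probability is at least $1-O(s)$. This ratio argument only works once the circuit has been chosen so that $h$ is pinned outside $I(s)$.
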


Given this lemma, the rest of the proof proceeds essentially as in the discrete setting. We have for $s$ small enough,
\bes
\begin{split}
    \nu^{\sv,\bullet,g}_{\cyl_{L,\beta}\setminus\ell_{\inn}}\left[
    \cO^c_{n,N}
    \right]
    &\le 
    \nu^{\sv,\bullet,g}_{\cyl_{L,\beta}\setminus\ell_{\inn}}\left[
    (D^{b_n}_{n,N})^c
    \right]\\
    &\le
    \nu^{\sv,\bullet,g}_{\cyl_{L,\beta}\setminus\ell_{\inn}}\left[
    (D^{b_n}_{n,N})^c \ | \ h|_{\ell'_s}\le1
    \right]\\
    &\le
    \nu^{\sv,\bullet,g}_{\cyl_{L,\beta}\setminus\ell_{\inn}}\left[
    (D^{b_n}_{n,N})^c \ | \ h|_{\ell'_s}\in\{0,1\}
    \right] +\delta\\
    &\le
    \nu^{\sv,\bullet}_{\cyl_{L,\beta}}\left[
    (D^{b_n}_{n,N})^c \ | \ h|_{\ell'_s}\in\{0,1\}
    \right]+\delta\\
    &\le
    C_{\ell}\nu^{\sv,\bullet}_{\cyl_{L,\beta}}\left[
    (D^{b_n}_{n,N})^c
    \right]+\delta,   
\end{split}
\ees
where the second inequality is due to the FKG inequality, the third is Lemma \ref{lem:q-approx-h-circuit}, the fourth is the domain Markov property and $C_l=\sup_{L,\beta}(\nu^{\sv,\bullet}_{\cyl_{L,\beta}}[h|_{\ell'}\in\{0,1\}])^{-1}$ is finite by the same argument as in the discrete. By the delocalisation of the space-time six-vertex model, Theorem \ref{thm:Q:delocalisation}, $\nu^{\sv,\bullet}_{\cyl_{L,\beta}}\left[(D^{b_n}_{n,N})^c\right]\to0$ as $L,\beta\to\infty$ and then $N\to\infty$. Finally, 
\bes
 \frac{1}{\cZ^{\sv,\bullet}_{\cyl_{L,\beta}}}
    \sum_{k=0}^{K_\ell} 
    \int\d\mu^{\odot k}(\eta)
    \sum_{g\sim\eta}
    \cZ^{\sv,g,X,\ul{i}}_{\ell_{\inn}}
    \cZ^{\sv,\bullet,g}_{\cyl_{L,\beta}\cap\ell_{\outt}}
    \le 
    \frac{\cZ^{\sv,\bullet,X,\ul{i}}_{\cyl_{L,\beta}}}{\cZ^{\sv,\bullet}_{\cyl_{L,\beta}}}
    \le 1.
\ees
so combining the last two equations with \eqref{eq:q-disorders-estimate-part-3} gives \eqref{eq:q-disorders-estimate-part-2}.
It now suffices to prove Lemma \ref{lem:q-approx-h-circuit}.

\begin{proof}[Proof of Lemma \ref{lem:q-approx-h-circuit}]
    Assume for ease of notation that we shift everything by $(-\tfrac{1}{2},0)$, so that $\ell$ is the boundary of the box $\Lambda_n$ intersected with $\ZZ\times\RR$ and height functions are defined on subsets of $\ZZ\times\RR$. Assume also for ease of notation that the points of $\eta$ (the points of discontinuity of $g$ on $\ell=\partial\Lambda_n$) are exactly at the points $(\pm n, i)$, $i=-(n-1),\dots,n-1$. The general case is essentially identical. Let $g_i^{\pm}$ be the value taken by $g$ on the interval $\{\pm n\}\times[i,i+1]$, for $i=-n,\dots,n-1$. Without loss of generality we can take $n$ to be even so that $g_i^{\pm}$ is always even. Let $D$ be the box $\Lambda_n$ unioned with the boxes $B_i^+=[n,n+g_i^+]\times[i,i+1]$ and $B_i^-=[-n-g_i^-,-n]\times[i,i+1]$ for $i=-n,\dots,n-1$. Let $\ell_0$ be the boundary of $D$.

    Now let $D(s)$ be defined as follows. For all $i=-n+1,\dots,n-1$, if $g_{i-1}^{\pm}>g_{i}^{\pm}$, let $E^+_{i}(s)=[n,n+g^+_{i-1}]\times[i,i+s]$, and if not, set $E^+_{i}(s)=[n,n+g^+_{i}]\times[i-s,i]$. Define $E^-_{i}(s)$ similarly. Let $C^+(s)=[-n-g^-_{n-1},n+g^+_{n-1}]\times[n,n+s]$ and $C^-(s) = [-n-g^-_{-n}, n+g^+_{-n}]\times[-n-s,n]$. Let $D_s$ be $D$, unioned with all the boxes $E^{\pm}_i(s)$, $i=-n+1,\dots,n-1$ and $C^{\pm}(s)$. Let $\ell_s$ be the boundary of $D(s)$. \\
    

    We prove that $\nu^{\sv,\bullet,g}_{\cyl_{L,\beta}\setminus\ell_{\inn}}[ \ h|_{\ell_s}\le 1]>0$ for all $s>0$. Let $I(s)$ be the union of all $E^{\pm}_i(s)$, $i=-n+1,\dots,n-1$, and $C^{\pm}(s)$. Observe that the size of $I(s)$ goes to 0 as $s$ goes to 0. Let $h$ be a height function configuration such that the following hold: 
    \begin{itemize}
        \item There are no points of discontinuity in $D(s)\setminus I(s)$. On every interval $(B^{\pm}_i\setminus I(s))\cap(\{j\}\times\RR)$, set the height to take value $g^{\pm}_i-(\pm j-n)$.  
        \item Every vertical interval $E^{+}_i(s)\cap(\{j\}\times\RR)$, $j=n+1,\dots,n+g^+_i$ has exactly one point $x^+_{i,j}$ of discontinuity, with $x^+_{i,j}$ strictly increasing in $j$ if $g_{i-1}^{\pm}>g_{i}^{\pm}$, and strictly decreasing otherwise. 
        \item On $C^+(s)$, the height function on the bottom of the rectangle is determined by the above and by $g$. Start with a height function $h_0$ which is constant on each column and equal to the values on the bottom of the rectangle. If this height function takes its maximal value on columns $J\subset\ZZ$, insert a point of discontinuity on these columns at vertical coordinate $n+y_1$ such that the height function drops by 2 above $n+y_1$. This induces a new height function $h_1$ constant on columns on the smaller rectangle $C^+(s)\cap(\ZZ\times[n+y_1,n+s])$. Repeat the same procedure with points of discontinuity with vertical coordinates $n+y_1, n+y_2,\dots, n+y_r$, with $0<y_1<\cdots<y_r<s$ until the height function $h_r$ only takes values $0$ or $1$.
        \item Do the same in the downwards direction for $C^-(s)$. 
    \end{itemize}
    Let $W_{\min}$ be the event that a height function $h$ so defined occurs in $(\ell_s)_{\inn}$. It is straightforward to check that on $W_{\min}$, $h|_{\ell_s}\in\{0,1\}$, that $\nu^{\sv,\bullet,g}_{\cyl_{L,\beta}\setminus\ell_{\inn}}[W_{\min}]>0$, and moreover one can prove by a direct computation that
    \bes
        \nu^{\sv,\bullet,g}_{\cyl_{L,\beta}\setminus\ell_{\inn}}[W_{\min}]
        \ge
        \frac{C|I(s)|^k}{k!}
        =
        \frac{C's^k}{k!},
    \ees
    where $k$ is the number of Poisson points which appear in $I(s)$ under $W_{\min}$. In fact, one can prove that $W_{\min}=U_k\cap \{ h|_{\ell_s}\in\{0,1\}\}$, where $U_{k}$ be the event that $I(s)$ contains exactly $k$ Poisson points. The value $k$ is the minimal number of Poisson points one can have in $I(s)$ in order to achieve $h|_{\ell_s}\in\{0,1\}$. For $j\ge0$ let $U_{\ge j}$ denote the event that $I(s)$ contains at least $j$ Poisson points (so $\{h|_{\ell_s}\in\{0,1\}\subset U_{\ge k}\}$). 
    Now
    \bes
    \begin{split}
        \nu^{\sv,\bullet,g}_{\cyl_{L,\beta}\setminus\ell_{\inn}}
        [h|_{\ell'_s} \in\{0,1\} \ | \ h|_{\ell'_s}\le1]
        &\ge
        \frac{
        \nu^{\sv,\bullet,g}_{\cyl_{L,\beta}\setminus\ell_{\inn}}
        [W_{\min}]
        }
        {
        \nu^{\sv,\bullet,g}_{\cyl_{L,\beta}\setminus\ell_{\inn}}
        [h|_{\ell'_s}\le1]
        }\\
        &=
        1-
        \frac{
        \nu^{\sv,\bullet,g}_{\cyl_{L,\beta}\setminus\ell_{\inn}}
        [U_{\ge k+1} \cap \{h|_{\ell'_s}\le1\}]
        }
        {
        \nu^{\sv,\bullet,g}_{\cyl_{L,\beta}\setminus\ell_{\inn}}
        [h|_{\ell'_s}\le1]
        }\\
        &\ge
        1-
        \frac{\sum_{j>k}\tfrac{C''s^j}{j!}
        }
        {
        \nu^{\sv,\bullet,g}_{\cyl_{L,\beta}\setminus\ell_{\inn}}
        [W_{\min}]
        }\\
        &\ge
        1-
        \frac{\sum_{j>k}\tfrac{C''s^j}{j!}
        }
        {
        \tfrac{C's^k}{k!}
        }
    \end{split}
    \ees
    which tends to 1 as $s\to0$. Here we used Lemma \ref{lem:q:pp-dom} in the second inequality. This completes the proof of Lemma \ref{lem:q-approx-h-circuit} and therefore the proof of \eqref{eq:q-disorders-estimate-part-2}, and so completes the proof of Lemma \ref{lem:q-disorders-estimate}. As noted above, this completes the proof of the convergence \eqref{eq:q-convergence}.
\end{proof}

Finally, we note that one can prove a mixing statement of the following form. For all $\Delta\in[-1,0]$ and for local observables $A,B$, we have:
\bes
    \langle A \cdot \tau_xB(t) \rangle
    \to 0
\ees
as $||(x,t)||\to\infty$. The proof is a combination of the approach in the discrete in Section \ref{sec:extremality} and the tools used above to prove convergence using the continuous models. We leave the details to the reader.

This completes the proof of Parts 1 and 2 of Theorem \ref{thm:main:quantum-loop} and of Part 2 of Theorem \ref{thm:main-xxz}.

\section{The space-time loop, eight-vertex and Ashkin-Teller models}\label{sec:quantum-AT-8v}

In the remaining sections we proceed to the second half of Theorem \ref{thm:main:quantum-loop}, the proof of Proposition \ref{prop:q:loop-main-part-2}. We mirror what we did in Section \ref{sec:new-coupling}, and take the delocalisation result, Theorem \ref{thm:Q:delocalisation}, and pass it through three couplings: to the space-time Ashkin-Teller model, then to the space-time eight-vertex model, and finally to Ueltschi's loop model. Again we do this in reverse order, starting with the space-time version of our new coupling, between the eight-vertex model and Ueltschi's loop model.

\subsection{The (self-dual) space-time Ashkin-Teller and eight-vertex models}

Both of these models are models of two interacting (space-time) Ising models. Recall $\cF^\bullet=(2\ZZ+\tfrac{1}{2})\times\RR$ and $\cF^\circ = (2\ZZ+\tfrac{3}{2})\times\RR$; for a domain $F$ let $F^\bullet=F\cap \cF^\bullet$ and $F^\circ=F \cap \cF^\circ$. 

Let $\cE^\bullet_\horiz$ be the set of line segments $e$ from $(x-(1,0),t)\in \cF^\bullet$ to $(x+(1,0),t)\in\cF^\bullet$ (think of this as horizontal edges in the black ``lattice''. Meanwhile define $\cE^\bullet_\vert$ to be simply $F^\bullet$ (think of this as the vertical edges in the black ``lattice''). Define $\cE^\circ_\horiz$, $\cE^\circ_\vert$ similarly. As in the discrete, there is a natural bijection between $\cE^\bullet_\horiz$ and $\cE^\circ_\vert$; for $e\in\cE^\bullet_\horiz$ the line segment from $(x-(1,0),t)\in \cF^\bullet$ to $(x+(1,0),t)\in\cF^\bullet$, let $e^*$ be the midpoint $(x,t)\in \cE^\circ_\vert$ (think of this as its ``dual edge''). Similarly $\cE^\circ_\horiz$ and $\cE^\bullet_\vert$ are in bijection.

For $F=(V,F)$ a domain, let $E^\bullet_\vert=V\cap\cE^\bullet_\vert$ and let $E^\circ_\horiz$ be the image of $E^\bullet_\vert$ under the bijection above. Similarly define $E^\circ_\vert$ and $E^\bullet_\horiz$. \\


For the formal definition of a space-time Ising model, see Section \ref{sec:delocalisation}. Let $\ul{\Om}_F^{\ising}$ be the set of functions $\sigma:F\to\{\pm1\}$ which are right continuous in each copy of $\RR$ and have finitely many points of discontinuity in any compact interval. 

For an Ising model configuration $\tau\in\ul{\Om}_{F^\bullet}^{\ising}$ on $F^\bullet$, define $\om[\tau]$ to be the domain walls of $\tau$, that is, $\om[\tau]=\om_\horiz[\tau]\cup\om_\vert[\tau]$, where $\om_\horiz[\tau]\subset E^\circ_\horiz$ are the edges whose midpoints are the points of discontinuity of $\tau$, and $\om_\vert[\tau]\subset E^\vert[\tau]$ are the points $(x,t)$ such that $\tau_{(x+(1,0),t)}\neq \tau_{(x-(1,0),t)}$. If $\tau\in\ul{\Om}_{F^\circ}^{\ising}$, define $\om[\tau]$ the same but with black and white exchanged.\\

We can now define the (self-dual) space-time Ashkin-Teller model. Configurations are pairs of Ising model configurations, both on the same space - in this case, the ``black faces'' $F^\bullet$. Write $\ul{\Om}^{\AT}_{F^\bullet} =\ul{\Om}_{F^\bullet}^{\ising} \times \ul{\Om}_{F^\bullet}^{\ising}$ and define the sigma algebra $\FF_{F^\bullet}^{\AT}$ to be that generated by the $w^{\#}$ topology \cite[Section A2.6]{daley-verejones-1}. For a domain $F$, we define the measure (with free boundary conditions) to be
\be\label{eq:q:AT-measure}
\begin{split}
    \d\nu^{\AT}_{F^\bullet,\Delta}(\tau,\tau')
    &\propto
    \d\nu^{\poi}_{E^\circ_\horiz,\frac{1}{2}}(\chi_1)
    \d\nu^{\poi}_{E^\circ_\horiz,\frac{1}{2}-u}(\chi_2)
    \mathbbm{1}_{\{\chi_1 = \om_\horiz[\tau]\triangle\om_\horiz[\tau']\}}
    \mathbbm{1}_{\{\chi_2 = \om_\horiz[\tau]\cap\om_\horiz[\tau']\}}\\
    &\qquad\qquad\cdot
    e^{
    2u|\om_\vert[\tau]\triangle\om_\vert[\tau']|
    }
    e^{
    2|\om_\vert[\tau]^c\cap\om_\vert[\tau']^c|
    },
\end{split}
\ee
where the sum is over $\tau,\tau'\in\ul{\Om}^{\ising}_{F^\bullet}$, satisfying the given conditions (almost surely a finite sum). When we write $\om_\vert[\tau]^c$ we mean $E^\circ_\vert\setminus\om_\vert[\tau]$.

We define the model $\nu^{\AT,+,\f}_{F,\Delta}$ to be the model above, conditioned so that $\tau=+1$ on $\partial F^\bullet=\partial F \cap F^\bullet$, and the models $\nu^{\AT,\f,+}_{F,\Delta}$ and $\nu^{\AT,+,+}_{F,\Delta}$ similarly (the events one conditions on have positive probability). Note that when $\Delta=0$ ($u=\tfrac{1}{2}$), $\tau$ and $\tau'$ are independent (critical Ising model spins on $F^\bullet$).\\

We also define the space-time (self-dual) eight-vertex model. Configurations are pairs $(\pi^\bullet,\pi^\circ)$ where $\pi^\bullet$ is an Ising model configuration on the ``black faces'' $F^\bullet$ and $\pi^\circ$ an Ising model configuration on the ``white faces'' $F^\circ$. Write $\ul{\Om}^{\eiv}_{F^\bullet} =\ul{\Om}_{F^\bullet}^{\ising} \times \ul{\Om}_{F^\circ}^{\ising}$ and define the sigma algebra $\FF_{F^\bullet}^{\eiv}$ to be that generated by the $w^{\#}$ topology \cite[Section A2.6]{daley-verejones-1}. As in the discrete, it is the dual in one spin of the Ashkin-Teller model. The measure is
\be\label{eq:q:8v-measure}
\begin{split}
    \d\nu^{\eiv}_{F,\Delta}(\pi^\bullet,\pi^\circ)
    \propto
    \d\nu^{\poi}_{V,u}(\chi_1)
    \d\nu^{\poi}_{V,1-u}(\chi_2)
    \cdot
    \mathbbm{1}_{\{\chi_1=\om[\pi^\bullet]\cap\om[\pi^\circ]\}}
    \mathbbm{1}_{\{\chi_2=\om_\horiz[\pi]\setminus\om_\vert[\pi] \}}
    e^{
    -|\om_\vert[\pi]|
    },
\end{split}
\ee
where the sum is over $\pi^\bullet\in\ul{\Om}^{\ising}_{F^\bullet}$, $\pi^\circ\in\ul{\Om}^{\ising}_{F^\circ}$ satisfying the given conditions (this is almost surely a finite sum).

We define the model $\nu^{\eiv,\bplus}_{F,\Delta}$ to be the model above, conditioned so that $\pi^\bullet=+1$ on $\partial F^\bullet$, and the models $\nu^{\eiv,\wplus}_{F,\Delta}$ and $\nu^{\eiv,\bplus\wplus}_{F,\Delta}$ similarly (the events one conditions on have positive probability). 
Note that when $\Delta=0$ ($u=\tfrac{1}{2}$), $\pi^\bullet$ and $\pi^\circ$ are independent (critical Ising model spins on $F^\bullet$, $F^\circ$ respectively).\\

\begin{theorem}[The space-time Ashkin-Teller and eight-vertex models]\label{thm:q:8v-AT}
    Let $\Delta\in[-1,0]$, so $u\in[0,\tfrac{1}{2}]$.
    \begin{enumerate}
        \item In the self-dual space-time Ashkin-Teller model with parameter $u$, the measures $\nu^{\AT}_{F^\bullet,u}$, $\nu^{\AT;+,f}_{F^\bullet,u}$, $\nu^{\AT;f,+}_{F^\bullet,u}$, $\nu^{\AT;+,+}_{F^\bullet,u}$ converge weakly in $(\ul{\Om}^{\AT}_{\cF}, \FF^{\AT,\#}_{\cF})$ as $F^\bullet\nearrow\cF^\bullet$ to a common limit $\nu^{\AT}_{\Delta}$, which is Gibbs, $\cF^\bullet$ translation-invariant and ergodic, and extremal, and satisfies 
        \be\label{eq:q:AT-decay-correlations}
        \begin{split}
            \lim_{||x_1-x_2||\to\infty}& \nu^\AT_{\Delta}[\tau_{x_1}\tau_{x_2}]=0.\\
        \end{split}
        \ee
        Further if $u=\tfrac{1}{2}$, then $\nu^{\AT}_{\Delta}$ is the unique Gibbs measure and the unique thermodynamic limit, that is, for all $(\tau_0,\tau'_0)\in\ul{\Om}^\AT_{\cF^\bullet}$, we have the weak convergence
        \be
        	\lim_{F^\bullet \nearrow \cF^\bullet} 
            \nu^{\AT;(\tau_0,\tau'_0)}_{F^\bullet,u} = \nu_{\Delta}^\AT.
        \ee
        \item In the (self-dual) space-time eight-vertex model with parameter $\Delta$, the measures $\nu^{\eiv}_{F,\Delta}$,   $\nu^{\eiv;\bplus}_{F,\Delta}$, $\nu^{\eiv;\wplus}_{F,\Delta}$, $\nu^{\eiv;\bplus,\wplus}_{F,\Delta}$ converge weakly in $(\ul{\Om}^{\eiv}_{\cF}, \FF^{\eiv,\#}_{\cF})$ as $V\nearrow\ZZ^2$ to a common limit $\nu^{\eiv}_{\Delta}$, which is Gibbs, $\ZZ\times\RR$ translation-invariant and ergodic, and extremal, and satisfies 
        \be\label{eq:q:8v-decay-correlations}
        \begin{split}
            \lim_{||x_1-x_2||\to\infty}& \nu^\eiv_{\Delta}[\pi^\bullet_{x_1}\pi^\bullet_{x_2}]=0;\\
            \lim_{\substack{||x_1-x_2||\to\infty \\ ||y_1-y_2||\to\infty}}&
            \nu^\eiv_{\Delta}[\pi^\bullet_{x_1}\pi^\bullet_{x_2}
            \pi^\circ_{y_1}\pi^\circ_{y_2}]=0.
        \end{split}
        \ee
        Further if $u=\tfrac{1}{2}$, then $\nu^{\eiv}_{\Delta}$ is the unique Gibbs measure and the unique thermodynamic limit, that is, for all $\pi'\in\ul{\Om}^\eiv_{\ZZ\times\RR}$, we have the weak convergence
        \be
        	\lim_{V \nearrow \ZZ\times\RR} \nu^{\eiv;\pi'}_{F,\Delta} = \nu_{\Delta}^\eiv.
        \ee
    \end{enumerate}
\end{theorem}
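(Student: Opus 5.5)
The proof will follow the discrete template of Section~\ref{sec:8v-AT}, transported to the space-time setting. The plan is to pass from the delocalisation Theorem~\ref{thm:Q:delocalisation} to a space-time eight-vertex random cluster model, and from there to the two spin models. The intermediate object is a pair of space-time bond percolations $(\eta^\bullet,\eta^\circ)$, on $\cF^\bullet$ and $\cF^\circ$ respectively, obtained as the limit of the discrete 8vRC measure. Its weights are Poisson point processes of removed vertical segments and added horizontal bridges, with rates read off from \eqref{eq:sample-8vRC} in the scaling limit, together with the cluster weight $2^{k(\eta^\bullet)+k(\eta^\circ)}$. I would establish three couplings by Mecke's formula, in the same way as Lemma~\ref{lem:q:quantum-6v-triple}.
\begin{enumerate}
\item 8vRC to eight-vertex: assign independent uniform signs to the clusters of $\eta^\bullet$ and of $\eta^\circ$.
\item 8vRC to Ashkin--Teller: take $(\tau,\tau')$ from the clusters of $\eta^\bullet$ and of $(\eta^\circ)^*$.
\item Ashkin--Teller to six-vertex: use the space-time analogue of Proposition~\ref{prop:6v-AT-coupling}, built through $\xi^\bullet$. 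Its key output is $\nu^{\AT;+,\f}_\Delta[\tau_x]=\nu^{\AT;+,+}_\Delta[\tau_x]=\nu^\sv_\Delta[x\xleftrightarrow{\xi^\bullet}\infty]=0$, the last equality by Lemma~\ref{lem:q:xi-bullet-no-perc}.
\end{enumerate}
I would obtain the boundary-condition versions ($\bplus$, $\wplus$, $\bplus\wplus$ for the eight-vertex model; $+,\f$ etc.\ for Ashkin--Teller; $\clubsuit\in\{0,1\}^2$ for 8vRC) exactly as in Lemma~\ref{lem:8v-AT-coupling}.

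Next I would show that the 8vRC measure has a unique infinite-volume limit. FKG for $\le$ (when $W\le 0$) or for $\tilde\le$ (otherwise) will come from the discrete Lemma~\ref{lem:8vRC-FKG} via the appendix limit Lemmas~\ref{lem:appendix:FKG-proofs} and \ref{lem:appendix-convergence}. With FKG, the extremal boundary conditions give monotone limits, and tightness plus weak convergence follow as in Lemma~\ref{lem:q:6v+convergence}. The colour-switching identity \eqref{eq:8v-8vRC-connections-full} then gives $\nu^{\eivRC;1,1}[x\xleftrightarrow{\eta^\bullet}\infty]=\nu^{\AT;+,\f}_\Delta[\tau_x]=0$, and the analogous statement for $(1,0)$. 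Since $\eta^\bullet$ has no infinite cluster, the extremal measures coincide; I would prove this with the standard finite-volume argument, using circuits of $(\eta^\bullet)^*$ surrounding boxes, Burton--Keane (Proposition~\ref{prop:burton-keane}) to pass infinite-volume connections to finite volume, and the domain Markov property. The symmetry $\eta^\circ\overset{d}{=}\tau_{(1,0)}\eta^\bullet$ handles the other component.

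The spin statements then follow. Weak convergence of all the listed spin measures comes from the argument in Lemma~\ref{lem:q:6v+convergence}: a Skorokhod coupling, local freezing of Poisson points via Lemma~\ref{lem:q:pp-dom}, and finite clusters. Translation invariance is inherited from the limit. Ergodicity and extremality come from absence of infinite clusters, which makes the spin fields tail-trivial. Colour switching gives \eqref{eq:q:8v-decay-correlations} and \eqref{eq:q:AT-decay-correlations} as connection probabilities in 8vRC, which vanish because there are no infinite clusters.

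For uniqueness of the Gibbs measure at $u=\tfrac12$, note that $W=0$ there: $\pi^\bullet$ and $\pi^\circ$ (respectively $\tau$ and $\tau'$) are independent critical space-time Ising models. So it suffices to show that the critical quantum Ising model in one dimension has a unique Gibbs state. I would prove this by the FK sandwich, since the plus and free FK limits agree exactly when the plus magnetisation vanishes, which holds by the above. Alternatively, one can repeat the two-stage conditioning argument from the proof of Proposition~\ref{prop:8v-AT}.

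The main obstacle I expect is the technical transfer of the FKG and domain-Markov machinery for the 8vRC model to Poisson configurations. In particular, verifying that the discrete-to-continuum limit preserves the ordering $\tilde\le$, and handling conditioning on the measure-zero events that appear in the continuum (a circuit of $(\eta^\bullet)^*$), will require approximating sequences as in Lemma~\ref{lem:q:domain-markov}.
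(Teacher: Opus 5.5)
Your architecture is the paper's. It uses the space-time 8vRC measure \eqref{eq:q:8vRC-measure} as intermediary, Edwards--Sokal couplings to the eight-vertex and Ashkin--Teller models and from Ashkin--Teller to the six-vertex model, and delocalisation entering through $\nu^{\AT;+,+}_\Delta[\tau_x]=\nu^{\sv}_\Delta[x\xleftrightarrow{\xi^\bullet}\infty]=0$. It then closes a monotone sandwich for the 8vRC limits, uses colour switching for the spin statements, and treats $u=\tfrac12$ as two independent critical space-time Ising models; your vanishing-magnetisation argument there is in the same spirit as the paper's appeal to \cite[Theorem 11.3]{grimmett-RC}.

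The step that fails as written is the choice of FKG ordering. In the scaling limit $e^{-4W}=p_\NW p_\NE/p_\es\to\tfrac{1-u}{u}$, so $W<0$ for every $u<\tfrac12$, and your rule (``$\le$ when $W\le 0$'') selects $\le$. The space-time 8vRC measure is not $\le$-FKG there. On $E^\bullet_\vert$ the default state is $(\eta^\bullet,\eta^\circ)=(1,0)$, and the states $(0,0)$, $(1,1)$, $(0,1)$ are inserted at rates $u$, $2u$, $1-2u$. In the discrete approximants, the product of the weights of $(1,0)$ and $(0,1)$ is of order $(1-2u)\varepsilon$, while that of $(1,1)$ and $(0,0)$ is of order $2u^2\varepsilon^2$. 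So only the lattice condition for $\tilde\le$ holds, which is exactly what Lemma~\ref{lem:q:FKG-8vRC} asserts for all $u\in[0,\tfrac12]$. Your rule comes from Lemma~\ref{lem:8vRC-FKG} as printed, whose two cases are interchanged. The single-edge condition for $\le$ is $2p_\es^2\ge(p_\NW-p_\es)(p_\NE-p_\es)$, which is equivalent to $p_\es\ge p_\NW p_\NE$, i.e.\ $W\ge0$.

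Consequently the sandwich through $(0,0)$ and $(1,1)$, closed by the shift symmetry $\eta^\circ\overset{d}{=}\tau_{(1,0)}\eta^\bullet$, is not available. The extremal measures are $\nu^{\eivRC;1,0}$ and $\nu^{\eivRC;0,1}$, with the $(0,0)$ and $(1,1)$ measures squeezed between them. The $(1,1)$ measures are not monotone in the domain, so the existence of their limit, and hence the identity you state for $\nu^{\eivRC;1,1}$, is only available after the sandwich is closed. Use the $(1,0)$ version as input instead.

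What must be shown is that under $\nu^{\eivRC;1,0}_\Delta$ neither $\eta^\bullet$ nor $(\eta^\circ)^*$ percolates.
\begin{itemize}
\item The first is $\nu^{\eivRC;1,0}_\Delta[x\xleftrightarrow{\eta^\bullet}\infty]=\nu^{\AT;+,+}_\Delta[\tau_x]=0$.
\item The second holds because, under the $(1,0)$ boundary condition, $(\eta^\bullet,(\eta^\circ)^*)$ is the Ashkin--Teller random cluster measure with both components wired. This measure is invariant under exchanging its components (the $\tau\leftrightarrow\tau'$ symmetry).
\end{itemize}
The black/white shift relates $\eta^\bullet$ to $\eta^\circ$ only under symmetric boundary conditions and gives nothing here. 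With the ordering and this symmetry corrected, the rest of your plan goes through as in the paper: convergence of all four spin measures by sandwiching, tail triviality and extremality, and decay of correlations via colour switching.
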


\subsection{The new coupling: space time loop and eight-vertex models}

\begin{proof}[Proof of Proposition \ref{prop:q:loop-main-part-2}]

The space-time loop model \eqref{eq:quantum-loop-measure} can be coupled with the space-time eight-vertex model as in the discrete coupling of Proposition  \ref{prop:mir-8v-coupling}. Given a $\pi\sim \nu^{\eiv}_{F,\Delta}$, one samples a loop configuration $m$ as follows. Let $A(\pi)\subset E_\horiz$ be the points not in $\om_{\horiz}[\om]$ where both endpoints of the horizontal edge lie in $\om_{\vert}[\pi]$ or both do not. Sample a Poisson point process $Y$ of rate 1 on $A(\pi)$. Then:

\be\label{eq:q:sample-loop-from-8v}
\begin{cases}
        x \in m_{\cross} \ \mathrm{deterministically}  
        & \mathrm{if} \ x\in\om_{\horiz}[\pi], \ 
        \pi_{x+(1,0)}\neq\pi_{x-(1,0)}\\
        x \in m_{\dbar} \ \mathrm{deterministically}  
        & \mathrm{if} \ x\in\om_{\horiz}[\pi], \ 
        \pi_{x+(1,0)}=\pi_{x-(1,0)}\\
        x \in m_{\cross} (\mathrm{resp.} \ m_{\dbar}) \ \mathrm{w.p.} \ u \ (\mathrm{resp.} \ 1-u)  
        & \mathrm{if} \ x\in Y\\
        x \notin m & \mathrm{otherwise}
\end{cases}
\ee

\begin{proposition}[New coupling of space-time loop and eight-vertex models]
Let $\Delta\in[-1,1)$, so $u=\tfrac{1}{2}(\Delta+1)\in[0,1)$. 
\begin{enumerate}
        \item Take $m\sim \nu^{\lp}_{F,\Delta}$ and independently colour the loops red with probability $\tfrac{1}{2}$ each. Let $\pi=(\pi^\bullet,\pi^\circ)\in \Om^{\eiv}_{V}$ be defined by setting $\pi_{f_0}=\pm 1$ independently and uniformly at some fixed point $f_0\in F$, and then let the value of $\pi$ change if and only if one crosses a red loop. Then $\pi\sim \nu^{\eiv}_{F,\Delta}$. 
        \item Take $\pi\sim \nu^{\eiv}_{F,\Delta}$ and sample a loop configuration $m$ according to the rules \eqref{eq:q:sample-loop-from-8v}. Then $m\sim\nu^{\mir}_{V,\ul{p}}$. 
        \item There exists an analogous coupling between $\nu^{\eiv,\bplus,\wplus}_{F,\Delta}$ and $\nu^{\lp}_{F}$, the only difference being that one sets $\pi^\bullet\equiv \pi^\circ\equiv +1$ on $\partial F$ instead of the random $\pi_{f_0}$. Similarly, there exists an analogous coupling between $\nu^{\eiv,\bplus}_{F,\Delta}$ and $\nu^{\lp,\bullet}_{F,\Delta}$, where one fixes only $\pi^\bullet\equiv1$ on $\partial F^\bullet$ and in sampling $m$ one fixes $\partial F\subset m^\bullet$; similar for $\nu^{\eiv,\wplus}_{F,\Delta}$ and $\nu^{\lp,\circ}_{F,\Delta}$.
    \end{enumerate}
\end{proposition}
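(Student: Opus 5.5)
The plan is to mirror the proof of Proposition \ref{prop:mir-8v-coupling}, replacing the discrete weights by Poisson densities and using Mecke's formula (Lemma \ref{lem:mecke}) to handle the point processes. First I will write the joint law of $(m,\text{colouring})$. Colouring each loop red or blue independently with probability $\tfrac12$ absorbs the factor $2^{l(m)}$ in \eqref{eq:quantum-loop-measure}. The joint density with respect to $\nu^{\poi}_{V,u}\otimes\nu^{\poi}_{V,1-u}$ is then proportional to $\mathbbm{1}\{m\sim L\}$, where $L$ is the red subset of $E$ and compatibility means $L$ is a union of loops of $m$. Exactly as in the discrete case, the red set is equivalent, after fixing $\pi_{f_0}$ uniformly, to a pair $(\pi^\bullet,\pi^\circ)$ whose domain walls are precisely the red segments. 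Horizontal domain walls $\om_\horiz[\pi]$ sit at points of $m$ where a red strand turns or crosses. Vertical domain walls $\om_\vert[\pi]$ are the red vertical intervals.

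Next I compute the marginal on $\pi$ by classifying each point $x\in V$ according to its local spin configuration. There are four cases.
\begin{itemize}
\item If $x\in\om_\horiz[\pi]$, then $x$ must carry a point of $m$, and its type is forced. It is a cross if $\pi_{x+(1,0)}\neq\pi_{x-(1,0)}$ (the red strand passes across) and a double bar otherwise (the red strand turns back). These are the first two lines of \eqref{eq:q:sample-loop-from-8v}.
\item If $x\notin\om_\horiz[\pi]$ but exactly one of the two vertical intervals through $x$ is red, then no point of $m$ may sit at $x$: a cross would make the red loop jump columns, and a double bar would join red to blue.
\item If both vertical intervals through $x$ are red, or both are blue, then any point of $m$ is allowed. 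Crosses occur at rate $u$ and double bars at rate $1-u$, so these superpose to a PPP $Y$ of total rate $1$ on $A(\pi)$, marked as cross or double bar with probabilities $u,1-u$. This is the third line of \eqref{eq:q:sample-loop-from-8v}.
\end{itemize}

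Integrating out $m$ uses Mecke's formula in the form used in Lemma \ref{lem:q:quantum-6v-triple}. The points of $\om_\horiz[\pi]$ are split according to whether the two spins across $x$ differ, which decides whether they contribute rate $u$ or $1-u$. This split should give exactly the pair $\chi_1,\chi_2$ in \eqref{eq:q:8v-measure}. The void probability of the free PPP is $e^{-|\{\text{forbidden set}\}|}$ times the normalising factor. The forbidden set has measure $|\om_\vert[\pi^\bullet]\triangle\om_\vert[\pi^\circ]|$ up to the convention identifying vertical walls with $V$, and this should produce the $e^{-|\om_\vert[\pi]|}$ factor. I will check that the indicator conventions in \eqref{eq:q:8v-measure} match this bookkeeping; this is the main place an error could occur.

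This proves part 1. Part 2 follows because the conditional law of $m$ given $\pi$ read off above is precisely \eqref{eq:q:sample-loop-from-8v}. Resampling $m$ by that rule therefore reproduces the joint measure, and its marginal is $\nu^{\lp}_{F,\Delta}$. The ``$\nu^{\mir}_{V,\ul p}$'' in the statement should be read as this space-time loop measure.

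For part 3 I follow the boundary-condition argument of Proposition \ref{prop:mir-8v-coupling}. Under $\bplus\wplus$, paths hitting $\partial E$ are deterministically uncoloured and the spins are fixed to $+1$ on $\partial F$. Under $\bplus$, fixing $\pi^\bullet\equiv1$ on $\partial F^\bullet$ corresponds to the double bars imposed on the boundary in $\nu^{\lp,\bullet}_{F,\Delta}$; the $\circ$ case is symmetric.

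The main obstacle is measure-theoretic rigour: justifying that colouring loops of a Poisson configuration and applying Mecke's formula commute with the indicator constraints, uniformly across $k$-point sectors. I would handle this by conditioning on the number of points, as in the proof of Lemma \ref{lem:q:domain-markov}. On each $k$-point sector the statement reduces to a finite combinatorial identity identical to the discrete one.
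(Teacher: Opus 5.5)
Your proposal is correct and follows the paper's route. The paper proves this only by saying it is analogous to the discrete colouring argument of Proposition \ref{prop:mir-8v-coupling}: colour the loops, read off $\pi$, classify the points of $V$ by their local spins, and integrate out the unforced Poisson points. Your case analysis reproduces \eqref{eq:q:sample-loop-from-8v}, and your void-probability bookkeeping gives exactly the weights of \eqref{eq:q:8v-measure}.

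One wording slip: $\om_\vert[\pi]$ is not ``the red vertical intervals''. It is the set of points $x\in V$ at which exactly one of the two lines through $x\pm(\tfrac12,0)$ is red, and this is the characterisation your case analysis and the factor $e^{-|\om_\vert[\pi]|}$ actually rely on.
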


The proof of the coupling is straightforward and analogous to the proof of Proposition \ref{prop:mir-8v-coupling} in the discrete. Also as in the discrete case, we obtain the equality
\be\label{eq:q:loop-connections}
\begin{split}
    \nu^{\mathrm{loop}}_{V,\Delta}[e_1\leftrightarrow e_2]
    &=
    \nu^{\eiv}_{V,\Delta}[
    \pi^\bullet_{x_1}\pi^\circ_{y_1}
    \pi^\bullet_{x_2}\pi^\circ_{y_2}
    ]
    ,
\end{split}
\ee
where $e_1,e_2\in E$ and $x_i,y_i$ are the two points of $F^\bullet$ $F^\circ$ respectively lying either side of $e_i$. The proof is identical to that of Lemma \ref{lem:8v-connection-probs}.

The rest of the proof proceeds as in the discrete setting. Take the infinite volume eight-vertex measure $\nu^\eiv_{K,W}$ described in Theorem \ref{thm:q:8v-AT}. Sample loops according to the procedure in \eqref{eq:q:sample-loop-from-8v}. One can prove that the marginal $\mu$ on mirrors must be $\nu^{\lp}_{\Delta}$. Indeed, use the coupling above and the local sampling of mirrors given spins and apply the same proof as in the proof of Proposition \ref{prop:q:loop-conv-from-6v} to show that $\nu^\lp_{F,\Delta}$, $\nu^{\lp;\bullet}_{F,\Delta}$, and $\nu^{\lp;\circ}_{F,\Delta}$ all converge to $\mu$. By Proposition \ref{prop:q:loop-conv-from-6v}, these measures already converge to $\nu^\lp_{\Delta}$.\\

Combining \eqref{eq:q:loop-connections} and \eqref{eq:q:8v-decay-correlations} and a similar proof to \eqref{eq:mir-connections-finite-inf-vol}, one obtains \eqref{eq:q:loop-connections-decay-to-0-repeat}.


It remains to prove that for $\Delta=0$, $\nu_{\Delta}^\lp$ is the unique Gibbs measure and the unique thermodynamic limit. The proof again is identical to in the discrete: for $m_0\in\Om^\lp_{\ZZ\times\RR}$, one considers $\nu^{\lp;m_0}_{F,\Delta}$ and forms a coupling with an eight-vertex model by independently and uniformly colouring the loops one of two colours. This marginal on the eight-vertex model must converge to the unique thermodynamic limit $\nu^\eiv_\Delta$ as $V\nearrow\ZZ\times\RR$ by Theorem \ref{thm:q:8v-AT}, and since any local event $A$ in the loop model is a local function of the eight-vertex configuration, we have $\nu^{\lp;m_0}_{F,\Delta}[A]\to\nu^\lp_\Delta[A]$.

\end{proof}


\subsection{Remaining couplings for space-time models: eight-vertex, Ashkin-Teller and six-vertex models}

It now remains to prove the results for the space-time eight-vertex and Ashkin Teller models, Theorem \ref{thm:q:8v-AT}. The structure of this section mirrors that of Section \ref{sec:8v-AT}.\\

We first define the space-time version of the (self-dual) Ashkin-Teller random-cluster model. Configurations are given by a pair of bond percolation configurations $(\eta^\bullet,\eta^\circ)$ - $\eta^\bullet$ on $E^\bullet$ and $\eta^\circ$ on $E^\circ$. Recall the definitions of $\eta^\bullet_\vert$ and $\eta^\bullet_\horiz$; we define $\eta_\vert=\eta^\bullet_\vert\cup\eta^\circ_\vert$ and similarly $\eta_\horiz=\eta^\bullet_\horiz\cup\eta^\circ_\horiz$. The measure with free boundary conditions is given by:
\be\label{eq:q:8vRC-measure}
\begin{split}
    \d\nu^{\eivRC,0,0}_{F,\Delta}(\eta^\bullet,\eta^\circ)
    &\propto
    \d\nu^{\poi}_{F,u}(\chi_1)
    \d\nu^{\poi}_{F,2u}(\chi_2)
    \d\nu^{\poi}_{F,1-2u}(\chi_3)\\
    &\qquad\cdot
    \mathbbm{1}_{\{\chi_1=(\eta^\bullet)^*\cap(\eta^\circ)^*\}}
    \mathbbm{1}_{\{\chi_2=\eta^\bullet\cap\eta^\circ\}}
    \mathbbm{1}_{\{\chi_3\cap E^\bullet_\vert=\eta^\circ_\horiz\setminus\eta^\bullet_\vert\}}
    \mathbbm{1}_{\{\chi_3\cap E^\circ_\vert=\eta^\bullet_\horiz\setminus\eta^\circ_\vert\}}\\
    &\qquad\cdot 
    e^{-uL(\eta_\vert\setminus\eta_\horiz)}2^{k(\eta^\bullet)+k(\eta^\circ)}.
\end{split}
\ee
Note that the factor $e^{-uL(\eta_\vert\setminus\eta_\horiz)}$ is cosmetic, since the length of this set is the entire volume of the domain $F$, as it is the only part of the process which is not a Poisson point process. We leave it in as it makes the couplings later more transparent. 

Define the measure $\nu^{\eivRC,1,0}_{F,\Delta}$ by redefining $k(\om^\bullet)$ so that all clusters touching the boundary are counted as one, and define  $\nu^{\eivRC,0,1}_{F,\Delta}$, and $\nu^{\eivRC,1,1}_{F,\Delta}$ similarly.

\begin{proposition}\label{prop:q:8vRC}
    Let $\Delta\in[-1,0]$, so $u=\tfrac{1}{2}(\Delta+1)\in[0,\tfrac{1}{2}]$.
        The self-dual space-time eight-vertex Random Cluster model with parameter $u$ has a unique Gibbs measure $\nu^{\eivRC}_\Delta$ on $\ZZ\times\RR$, which is also $\ZZ\times\RR$ translation-invariant and ergodic, and extremal. It is the weak limit of the measures $\nu^{\eivRC;\clubsuit}_{F,\Delta}$, for $\clubsuit\in\{0,1\}^2$ as $V\nearrow\ZZ\times\RR$, and is supported on configurations where $\eta^\bullet$, $\eta^\circ$, $(\eta^\bullet)^*$, and $(\eta^\circ)^*$ have no infinite clusters. A similar statement holds for the ordinary Ashkin-Teller Random Cluster model.
\end{proposition}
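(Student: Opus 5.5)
The plan follows the discrete proof of Proposition \ref{prop:8v-AT-RC} step for step, with the space-time six-vertex delocalisation, Theorem \ref{thm:Q:delocalisation}, as the only real input.

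The first step is to establish the structural tools for $\nu^{\eivRC;\clubsuit}_{F,\Delta}$: FKG, comparison of boundary conditions and the domain Markov property. I would obtain these as limits of the discrete 8vRC measures, in the same way the paper handles the space-time six-vertex model (Lemmas \ref{lem:appendix:FKG-proofs}, \ref{lem:appendix-convergence}). The discrete model with $p_\NE=p_\NW=p$ and $p_\es=1-2p$, with $p\to0$ suitably scaled with the lattice spacing, should converge to \eqref{eq:q:8vRC-measure}. At $\Delta=0$ we have $u=\tfrac12$ and the rate of $\chi_3$ vanishes; this corresponds to $p_\NE p_\NW= p_\es$ at the discrete level, so both orderings $\le$ and $\tilde\le$ are available. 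For $u<\tfrac12$ I expect the ordering $\le$ (the regime $p_\NE p_\NW\ge p_\es$). I would check this by computing the limiting ratio $p_\NE p_\NW/p_\es$ along the chosen scaling, and I would take it as the convention fixing which pair of extremal boundary conditions is used.

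From FKG and monotonicity, the extremal measures (say $\clubsuit=(1,1)$ and $(0,0)$) have weak limits in the $w^\#$ topology. These limits are translation invariant, ergodic and Gibbs, by the same arguments as in Lemma \ref{lem:q:6v+convergence}: tightness via \cite{daley-verejones-2}, and monotone convergence via \cite{lindvall}. Every Gibbs measure is sandwiched between the two extremal limits. So uniqueness, and convergence of all four $\clubsuit$, reduce to showing that the maximal measure has no infinite cluster in $\eta^\bullet$. The symmetry between $\eta^\bullet$ and a shift of $\eta^\circ$, together with Burton--Keane (Proposition \ref{prop:burton-keane}), then gives equality of the extremal measures.

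The key step is to transfer non-percolation from the six-vertex model. I would build space-time analogues of Lemma \ref{lem:8v-AT-coupling} and Proposition \ref{prop:6v-AT-coupling}, with the Poisson sampling rules obtained as limits of \eqref{eq:sample-8vRC}, \eqref{eq:sample-ATRC} and \eqref{eq:sampling-xi-from-AT}, and verify them with Mecke's formula (Lemma \ref{lem:mecke}) as in Lemma \ref{lem:q:quantum-6v-triple}. These couplings should yield the space-time version of \eqref{eq:8v-8vRC-connections-full}:
\bes
\nu^{\eivRC,1,1}_{\Delta}[x\xleftrightarrow{\eta^\bullet}\infty]=\nu^{\AT,+,\f}_{\Delta}[\tau_x]=\nu^{\sv}_{\Delta}[x\xleftrightarrow{\xi^\bullet}\infty].
\ees
The last quantity vanishes by Theorem \ref{thm:Q:delocalisation}. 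The same chain with $(1,0)$ boundary conditions uses $\nu^{\AT,+,+}$ and $\nu^{\sv,\bplus}$. Taking these identities to infinite volume needs the colour-switching argument, together with the fact that spins can be assigned cluster by cluster in the limit. I would redo the latter with the Skorokhod coupling and the interval-cover argument from the proof of Lemma \ref{lem:q:6v+convergence}, controlling point counts with Lemma \ref{lem:q:pp-dom}.

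The remaining properties then follow. Self-duality of \eqref{eq:q:8vRC-measure}, namely that $(\eta^\bullet,\eta^\circ)^*$ shifted by $(1,0)$ has the same law, turns the absence of infinite clusters for $\eta^\bullet,\eta^\circ$ into the same statement for the duals. Ergodicity and extremality follow from tail-triviality, proved via circuits as in \cite{bjornberg-thesis}. The Ashkin--Teller random cluster statement is obtained by taking $(\om,\om')=(\eta^\bullet,(\eta^\circ)^*)$.

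The main obstacle I expect is verifying the three space-time couplings, in particular correctly matching the Poisson rates ($u$, $2u$, $1-2u$, $\tfrac12$, $\tfrac12-u$) so that the marginals are exactly \eqref{eq:q:AT-measure}, \eqref{eq:q:8v-measure} and the six-vertex measure. A second obstacle is confirming that the FKG ordering survives the limit, which I would attempt first through the discrete approximation.
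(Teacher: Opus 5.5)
Your architecture is the paper's. It obtains positive association via discrete approximation, takes monotone limits for the two extremal boundary conditions, sandwiches every Gibbs measure between them, and runs the coupling chain through the space-time eight-vertex, Ashkin--Teller and six-vertex models down to non-percolation of $\xi^\bullet$ from Theorem \ref{thm:Q:delocalisation}.

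\textbf{The gap is the choice of ordering.} For $u\in[0,\tfrac12)$ the space-time 8vRC measure is positively associated for $\tilde{\le}$, not for $\le$. This is Lemma \ref{lem:q:FKG-8vRC}; both orderings are available only at $u=\tfrac12$.

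You identified the discrete regime correctly: along the scaling of Appendix \ref{appendix:FKG}, $p_\W p_\N-p_\es\approx(1-2u)\eps\ge0$. But in that regime the single-vertex lattice condition that holds, for the discrete 8vRC weights $w(0,0)=p_\es$, $w(1,1)=2p_\es$, $w(1,0)=p_\NW-p_\es$, $w(0,1)=p_\NE-p_\es$, is $w(1,0)\,w(0,1)\ge w(0,0)\,w(1,1)$. Indeed $(p_\NW-p_\es)(p_\NE-p_\es)\ge 2p_\es^2$ is equivalent to $p_\NW p_\NE\ge p_\es$.

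You can also see this directly in the continuum measure \eqref{eq:q:8vRC-measure}. On $E^\bullet_\vert$, a bridge of $\eta^\circ$ carries a cut of $\eta^\bullet$ at the same point with probability $1-2u$, whereas elsewhere $\eta^\bullet$ is cut only at rate $u$. At $u=0$ one has $\eta^\circ=(\eta^\bullet)^*$ exactly. Then the $\le$-increasing events ``no cut of $\eta^\bullet$ in an interval $I$'' and ``a bridge of $\eta^\circ$ across $I$'' are disjoint, each with positive probability, which rules out $\le$-positive association.

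The case $p_\es=0$ gives the same contradiction for the regime-to-ordering correspondence as printed in Lemma \ref{lem:8vRC-FKG}, which is interchanged there. The paper uses the correct correspondence in Part 3 of Proposition \ref{prop:poly-decay} and in Lemma \ref{lem:q:FKG-8vRC}. So the printed version cannot serve as your convention, and your proposed ratio check would have ``confirmed'' the wrong ordering.

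Consequently, for $u<\tfrac12$ the boundary conditions $(1,1)$ and $(0,0)$ are not extremal. Monotonicity gives no limit for them, Gibbs measures are not sandwiched between $\nu^{\eivRC;0,0}_\Delta$ and $\nu^{\eivRC;1,1}_\Delta$, and non-percolation of $\eta^\bullet$ under $\nu^{\eivRC;1,1}_\Delta$ would not yield uniqueness even if you produced that limit through the couplings.

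The repair is already in your proposal: make the $(1,0)$ chain the main argument.
\begin{itemize}
\item The extremal measures are $\nu^{\eivRC;1,0}_\Delta$ and $\nu^{\eivRC;0,1}_\Delta$, with $\nu^{\eivRC;0,1}_\Delta\ \tilde{\prec}\ \nu\ \tilde{\prec}\ \nu^{\eivRC;1,0}_\Delta$ for every Gibbs measure $\nu$.
\item Lemmas \ref{lem:q:coupling-8v-AT} and \ref{lem:q:coupling-AT-6v} give \eqref{eq:q:8v-8vRC-connections}, i.e.\ $\nu^{\eivRC;1,0}_\Delta[x\xleftrightarrow{\eta^\bullet}\infty]=\nu^{\eiv;\bplus}_\Delta[\pi^\bullet_x]=\nu^{\AT;+,+}_\Delta[\tau_x]=\nu^{\sv}_\Delta[x\xleftrightarrow{\xi^\bullet}\infty]=0$.
\item The symmetry that closes the argument is not between $\eta^\bullet$ and a shift of $\eta^\circ$, but between $\eta^\bullet$ and $(\eta^\circ)^*$. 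Under $(1,0)$ boundary conditions, $(\eta^\bullet,(\eta^\circ)^*)$ is the space-time Ashkin--Teller random cluster measure wired in both components, whose weights \eqref{eq:q:8vRC-measure-2} are symmetric under exchanging the two components.
\item Absence of infinite clusters in both $\eta^\bullet$ and $(\eta^\circ)^*$, the two components that increase under $\tilde{\le}$, is what gives $\nu^{\eivRC;1,0}_\Delta=\nu^{\eivRC;0,1}_\Delta$ and hence uniqueness.
\end{itemize}
Your $(1,1)/(0,0)$ version is correct only at $u=\tfrac12$.
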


Recall the orderings \eqref{eq:orderings-8vRC}, that is, that 
$\eta_1\le\eta_2$ if $\eta_1^\bullet(e)\le \eta_2^\bullet(e)$ and $\eta_1^\circ(e)\le \eta_2^\circ(e)$ and $\eta_1\tilde{\le}\eta_2$ if 
$\eta_1^\bullet(e)\le \eta_2^\bullet(e)$ and $\eta_1^\circ(e)\ge \eta_2^\circ(e).$

\begin{lemma}\label{lem:q:FKG-8vRC}
    Let $\Delta\in[-1,0]$, so $u\in[0,\tfrac{1}{2}]$. For all boundary conditions $\eta_0$, we have under the measure $\nu_{F,\Delta}^{\eivRC,\eta_0}$, $\eta$ is FKG with respect to the ordering $\tilde{\le}$. In the special case $u=\tfrac{1}{2}$, $\eta^\bullet$ and $\eta^\circ$ are independent (they are both quantum random cluster models with $q=2$), and also are FKG with respect to the ordering $\le$. 
\end{lemma}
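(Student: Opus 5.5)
Following the pattern used elsewhere in the paper for space-time FKG statements (Lemma \ref{lem:q:hf-fkg-markov}, Lemma \ref{lem:q:FKG}), I would deduce the lemma from the discrete statement Lemma \ref{lem:8vRC-FKG} by a limiting argument. Take the discrete 8vRC model on $\{-L+1,\dots,L\}\times\{1,\dots,m\}$ with anisotropic parameters $\ul{p}^{(m)}\in\cP$ tuned so that, as $m\to\infty$ with vertical mesh $\beta/m$, the weights in the definition of $\mu^{\eivRC;\eta_0}_{V,\ul{p}}$ converge to the density \eqref{eq:q:8vRC-measure}. Concretely, on one sublattice the vertical edges must be open with probability $1-O(1/m)$ and the crossing ones closed with probability $1-O(1/m)$, with rates matching $u$, $2u$, $1-2u$. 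This amounts to choosing $p^{(m)}_\es\approx u/m$, together with $p_\NE-p_\es\approx(1-2u)/m$ on one type of edge and $p_\NW-p_\es$ of order one, with the roles alternating between the two edge orientations after the $\pi/4$ rotation used to define $\cF$. I would then check that $p_\NW,p_\NE\ge p_\es$ holds, which requires $1-2u\ge0$, i.e.\ $u\le\tfrac12$. I would also check that $p_\NW p_\NE\le p_\es$ for large $m$, since the product is $O(1/m)\cdot O(1)$ while $p_\es\sim u/m$; this needs to be compared carefully, and I expect the regime $\tilde\le$ to be the one compatible with the scaling. Lemma \ref{lem:8vRC-FKG} then gives FKG for $\tilde\le$ for every $m$ and every boundary condition, the boundary conditions being handled as in the discrete case through the Holley-type proof of \cite{pfister-velenik-AT}.

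The second step is the convergence, as $m\to\infty$, of the discrete measures to $\nu^{\eivRC,\eta_0}_{F,\Delta}$ in the $w^\#$ topology, treating configurations as point measures of switching points. I would reuse the machinery of Lemma \ref{lem:appendix-convergence}: convergence of the finite-dimensional distributions of counts in intervals, via the Poisson limit of Bernoulli rare events, together with tightness. For the latter, the number of switches in any interval is dominated uniformly, by an analogue of Lemma \ref{lem:q:pp-dom} or simply because the weights $2^{k}$ change by bounded factors per switch. FKG then passes to the limit: it suffices to test increasing continuity events depending on finitely many interval counts, which are approximable, and weak convergence preserves $\nu[A\cap B]\ge\nu[A]\nu[B]$ for such events. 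A monotone class argument then extends this to all increasing measurable functions, as in the proof of Lemma \ref{lem:q:6v+convergence}.

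For $u=\tfrac12$ the argument is direct. The rate $1-2u$ vanishes, so $\chi_3=\es$ forces $\eta^\circ_\horiz\subset\eta^\bullet_\vert$ and vice versa. The remaining density factorises as the product of a $\eta^\bullet$ factor and a $\eta^\circ$ factor, each being the space-time random cluster measure with $q=2$ (rates $\tfrac12$ for cuts and $1$ for bridges, with weight $2^{k}$). Independence follows, and FKG for $\le$ comes from FKG of each quantum random cluster factor (again a limit of discrete FK with $q=2$) combined with the fact that a product of FKG measures is FKG.

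The main obstacle is the first step: finding a discrete parametrisation $\ul{p}^{(m)}$ whose limit is exactly \eqref{eq:q:8vRC-measure} while staying in the regime where Lemma \ref{lem:8vRC-FKG} gives $\tilde\le$. In particular I would need to verify the sign of $p_\NW p_\NE-p_\es$ to leading order. If that sign comes out borderline, I would instead verify the FKG lattice condition directly on the discretisation via the Holley criterion, adapting the proof of \cite{pfister-velenik-AT} to anisotropic weights.
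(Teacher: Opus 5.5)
Your route is the paper's own: realise $\nu^{\eivRC,\eta_0}_{F,\Delta}$ as a $w^\#$-limit of discrete 8vRC measures (Part 3 of Lemma \ref{lem:appendix-convergence}, with the scaling \eqref{eq:a:mirror-weights}), then pass FKG to the limit. However, the one step with real content is left undecided, and you have set it up the wrong way round. With your parameters $p_\NW\approx 1$, $p_\NE\approx (1-u)/m$, $p_\es\approx u/m$ one gets
\[
p_\NW p_\NE-p_\es=\frac{1-2u}{m}+O(m^{-2})\ \ge\ 0 .
\]
So the check ``$p_\NW p_\NE\le p_\es$ for large $m$'' fails for every $u<\tfrac12$; it is not borderline. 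Feeding this into Lemma \ref{lem:8vRC-FKG} as printed would give FKG for $\le$, not for $\tilde\le$. That conclusion is false in the limit. Take a short interval $I\subset E^\bullet_\vert$, the event that $\eta^\bullet$ has no cut in $I$, and the event that $\eta^\circ$ has a bridge in $I$. Both are $\le$-increasing. But each $\chi_3$-point of \eqref{eq:q:8vRC-measure} is simultaneously a cut of $\eta^\bullet$ and a bridge of $\eta^\circ$, and this gives the two events covariance $-c(1-2u)|I|+o(|I|)$ with $c>0$.

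What is missing is the direct lattice-condition computation, which you list only as a fallback. It is indispensable, because Lemma \ref{lem:8vRC-FKG} has its two regimes interchanged. For $e\in E^\bullet_\NW$ the single-edge weights, indexed by $(\eta^\bullet_e,\eta^\circ_{e^*})$, are $w_{00}=p_\es$, $w_{10}=p_\NW-p_\es$, $w_{01}=p_\NE-p_\es$, $w_{11}=2p_\es$. By \eqref{eq:p-sum-to-1},
\[
(p_\NW-p_\es)(p_\NE-p_\es)-2p_\es^2=p_\NW p_\NE-p_\es ,
\]
and likewise for $e\in E^\bullet_\NE$. The factor $2^{k(\eta^\bullet;\eta_0)+k(\eta^\circ;\eta_0)}$ satisfies the lattice condition for both orderings. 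Indeed, $k(\omega\vee\omega')+k(\omega\wedge\omega')\ge k(\omega)+k(\omega')$ for any boundary condition. Moreover, reversing the order in the $\eta^\circ$ coordinate does not change the lattice condition for a product of a function of $\eta^\bullet$ and a function of $\eta^\circ$. Hence $\tilde\le$-FKG holds when $w_{10}w_{01}\ge w_{00}w_{11}$, that is $p_\NW p_\NE\ge p_\es$, and $\le$-FKG holds when $p_\NW p_\NE\le p_\es$. Your discretisation lies in the first regime, as does the paper's (whose appendix simply asserts that the approximants are $\tilde\le$-FKG). So Holley's criterion gives $\tilde\le$-FKG for every $m$ and every $\eta_0$, and the rest of your argument then goes through.

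Two smaller points:
\begin{itemize}
\item The extension from continuity events to all increasing events should be done by inner approximation with increasing continuity events, as in the appendix. A monotone class argument does not preserve monotonicity.
\item At $u=\tfrac12$ the sign above vanishes at first order. There you genuinely need your direct factorisation, which is correct and gives both orderings, or approximants chosen exactly on $p_\NW p_\NE=p_\es$.
\end{itemize}
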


Again there are two ways one can prove this - either from first principles, or by seeing that the random cluster measure is a limit of discrete eight-vertex random cluster measures. As with the FKG for the six-vertex percolation model $\xi$, we take the latter approach; see Lemmas \ref{lem:appendix:FKG-proofs} and \ref{lem:appendix-convergence}. 

As in the discrete, the FKG property gives the convergence to infinite volume measures: for $\Delta\in[-1,0]$, $\nu_{F,\Delta}^{\eivRC,1,0} \to \nu_{\Delta}^{\eivRC,1,0}$ and $\nu_{F,\Delta}^{\eivRC,0,1} \to \nu_{\Delta}^{\eivRC,0,1}$, and the stochastic domination 
\bes
    \nu_\Delta^{\eivRC,0,1} \ \tilde{\prec} \ \nu \ \tilde{\prec} \ \nu_\Delta^{\eivRC,1,0}
\ees
for any infinite volume Gibbs measure $\nu$, where $\tilde{\prec}$ is stochastic domination with respect to the ordering $\tilde{\le}$. 

We will prove the following analogy of \eqref{eq:8v-8vRC-connections}:
\be\label{eq:q:8v-8vRC-connections}
	\nu^{\eivRC;1,0}_{\Delta}[x\xleftrightarrow{\eta^\bullet} \infty] 
	=
	\nu^{\eiv;\bplus}_{\Delta}[\pi^\bullet_x]
	=
	\nu^{\AT;+,+}_{\Delta}[\tau_x]
    =0
\ee
which by the above equality and standard percolation arguments gives that $\nu_\Delta^{\eivRC,0,1} = \nu_\Delta^{\eivRC,1,0}$ and uniqueness of the random cluster Gibbs measure. \\

We now couple the space-time eight-vertex, Ashkin-Teller and ATRC measures in analogy with Lemma \ref{lem:8v-AT-coupling}. 
Given $\pi\sim\nu^\eiv_{F,\Delta}$, let $\eta$ be sampled as follows.
\be\label{eq:q:sample-8vRC-from-8v}
\begin{split}
&\text{On $E^\bullet_\vert$:}
\begin{cases}
    \text{set} \ (\eta^\bullet,\eta^\circ)=(0,0) \ \text{deterministically} 
    \ & \ \text{on $\om_\horiz[\pi^\bullet]\cap\om_\vert[\pi^\circ]$},\\
    \text{set} \ (\eta^\bullet,\eta^\circ)=(0,0) \ \text{w.p. $\tfrac{u}{1-u}$}, 
    \ (\eta^\bullet,\eta^\circ)=(0,1) \ \text{o.w.}
    \ & \ \text{on $\om_\horiz[\pi^\bullet]\setminus\om_\vert[\pi^\circ]$},\\
    \text{sample $(\eta^\bullet,\eta^\circ)=(0,0)$ as a PPP rate $u$}, \\
    \qquad \text{set} \ (\eta^\bullet,\eta^\circ)=(1,0) \ \text{o.w.}
    \ & \ \text{on $\om_\horiz[\pi^\circ]\setminus\om_\vert[\pi^\bullet]$},\\
    \text{sample $(\eta^\bullet,\eta^\circ)=(0,0)$ as a PPP rate $u$}, \\
    \qquad \text{sample $(\eta^\bullet,\eta^\circ)=(0,1)$ as a PPP rate $1-2u$}, \\
    \qquad \text{sample $(\eta^\bullet,\eta^\circ)=(1,1)$ as a PPP rate $2u$}, \\
    \qquad \text{set} \ (\eta^\bullet,\eta^\circ)=(1,0) \ \text{o.w.}
    \ & \ \text{on $\om_\horiz[\pi^\bullet]^c\cap\om_\vert[\pi^\circ]^c$};
\end{cases}\\
&\text{On $E^\circ_\vert$:}
\begin{cases}
    \text{do the same with black and white exchanged.} 
\end{cases}
\end{split}
\ee
Here again w.p. stands for ``with probability'' and o.w. stands for ``otherwise'', and PPP stands for ``Poisson point process''. For $(\tau,\tau')\sim\nu^\AT_{G^\bullet,\Delta}$, sample $(\om,\om')$ as follows. On $E^\bullet_\vert$:
\be\label{eq:q:sample-ATRC-from-AT-1}
\begin{split}
&\text{On $E^\bullet_\vert$:}
\begin{cases}
    \text{set} \ (\om,\om')=(0,0) \ \text{deterministically} 
    \ & \ \text{on $\om_\horiz[\tau]\cap\om_\horiz[\tau']$},\\
    \text{set} \ (\om,\om')=(1,0) \ \text{w.p. $\tfrac{u}{2}$}, 
    \ (\om,\om')=(0,0) \ \text{o.w.}
    \ & \ \text{on $\om_\horiz[\tau']\setminus\om_\horiz[\tau]$},\\
    \text{set} \ (\om,\om')=(0,1) \ \text{w.p. $\tfrac{u}{2}$}, 
    \ (\om,\om')=(0,0) \ \text{o.w.}
    \ & \ \text{on $\om_\horiz[\tau]\setminus\om_\horiz[\tau']$},\\
    \text{sample $(\om,\om')=(0,0)$ as a PPP rate $\tfrac{1}{2}-u$}, \\
    \qquad \text{sample $(\om,\om')=(1,0)$ as a PPP rate $u$}, \\
    \qquad \text{sample $(\om,\om')=(0,1)$ as a PPP rate $u$}, \\
    \qquad \text{set} \ (\om,\om')=(1,1) \ \text{o.w.}
    \ & \ \text{on $\om_\horiz[\tau]^c\cap\om_\horiz[\tau']^c$};
\end{cases}\\
& \text{On $E^\circ_\vert$}:
\begin{cases}
    \text{set} \ (\om,\om')=(0,0) \ \text{deterministically} 
    \ & \ \text{on $\om_\vert[\tau]\cap\om_\vert[\tau']$},\\
    \text{sample $(\om,\om')=(1,0)$ as a PPP rate $2u$}, \\
    \qquad \text{set} \ (\om,\om')=(0,0) \ \text{o.w.}, 
    \ & \ \text{on $\om_\vert[\tau']\setminus\om_\vert[\tau]$},\\
    \text{sample $(\om,\om')=(0,1)$ as a PPP rate $2u$}, \\
    \qquad \text{set} \ (\om,\om')=(0,0) \ \text{o.w.}, 
    \ & \ \text{on $\om_\vert[\tau]\setminus\om_\vert[\tau']$},\\
    \text{sample $(\om,\om')=(1,0)$ as a PPP rate $2u$}, \\
    \qquad \text{sample $(\om,\om')=(0,1)$ as a PPP rate $2u$}, \\
    \qquad \text{sample $(\om,\om')=(1,1)$ as a PPP rate $2-4u$}, \\
    \qquad \text{set} \ (\om,\om')=(0,0) \ \text{o.w.}
    \ & \ \text{on $\om_\vert[\tau]^c\cap\om_\vert[\tau']^c$}.
\end{cases}
\end{split}
\ee

\begin{lemma}[Coupling eight-vertex, Ashkin-Teller and ATRC]\label{lem:q:coupling-8v-AT}
Let $\Delta\in[-1,0]$, so $u=\tfrac{1}{2}(\Delta+1)\in[0,\tfrac{1}{2}]$.
\begin{enumerate}
    \item If one takes $\eta\sim\nu_{F,\Delta}^{\eivRC,0,0}$ and assigns spins $+$ or $-$ independently and uniformly to every cluster of $\eta^\bullet$ and $\eta^\circ$, the resulting spin configuration $\pi$ has law $\nu_{F,\Delta}^{\eiv}$. If one assigns spins independently and uniformly to the clusters of $\eta^\bullet$ and $(\eta^\circ)^*$, with boundary clusters of $(\eta^\circ)^*$ always getting $+1$, the resulting spin configuration $(\tau,\tau')$ has law $\nu_{F,\Delta}^{\AT,\f,+}$. 
    \item If one takes $\pi\sim\nu_{F,\Delta}^{\eiv}$ and samples a percolation configuration $\eta$ according to the rule \eqref{eq:q:sample-8vRC-from-8v}, or if one takes $(\tau,\tau')\sim\nu_{F,\Delta}^{\AT,\f,+}$ and samples a percolation configuration $(\om,\om')$ according to the rule \eqref{eq:q:sample-ATRC-from-AT-1} and sets $(\eta^\bullet,\eta^\circ)=(\om,(\om')^*)$, then $\eta$ has law $\nu_{F,\Delta}^{\eivRC,0,0}$.
    \item Analogous statements holds for the triples of measures: $\nu_{F,\Delta}^{\eivRC,0,1}$, $\nu_{F,\Delta}^{\eiv,\wplus}$, $\nu_{F,\Delta}^{\AT,\f,\f}$, and 
    $\nu_{F,\Delta}^{\eivRC,1,0}$, $\nu_{F,\Delta}^{\eiv,\bplus}$, $\nu_{F,\Delta}^{\AT,+,+}$, and
    $\nu_{F,\Delta}^{\eivRC,1,1}$, $\nu_{F,\Delta}^{\eiv,\bplus,\wplus}$, $\nu_{F,\Delta}^{\AT,+,\f}$.
    \item Further,
    \be\label{eq:q:8v-8vRC-connections-full}
    \begin{split}
    \nu^{\eivRC,1,0}_{\Delta}[x\xleftrightarrow{\eta^\bullet} \infty] 
	&=
	\nu^{\eiv,\bplus}_{\Delta}[\pi^\bullet_x]
	=
	\nu^{\AT,+,+}_{\Delta}[\tau_x].
    \end{split}
    \ee
\end{enumerate}
\end{lemma}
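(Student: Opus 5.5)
The plan is to prove the lemma at the level of joint densities with respect to products of Poisson point processes, mimicking the discrete Edwards–Sokal argument of \cite{pfister-velenik-AT} and the Mecke-formula computation in Lemma~\ref{lem:q:quantum-6v-triple}. First I will write down a joint measure on triples $(\pi,\eta)$ (and separately on $(\tau,\tau',\om,\om')$). Its density is the product of the three Poisson reference measures appearing in \eqref{eq:q:8vRC-measure}, the cosmetic factor $e^{-uL(\eta_\vert\setminus\eta_\horiz)}$, and the compatibility indicator $\mathbbm{1}\{\pi^\bullet\perp\eta^\bullet,\ \pi^\circ\perp\eta^\circ\}$, meaning each spin is constant on clusters of the corresponding percolation.

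Part 1 is then immediate. Summing over $\pi$ gives $2^{k(\eta^\bullet)+k(\eta^\circ)}$, which recovers $\nu^{\eivRC,0,0}_{F,\Delta}$. It also shows that, conditionally on $\eta$, $\pi$ is uniform on clusters.

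Part 2 requires summing out $\eta$ for fixed $\pi$. For this I will split $E^\bullet_\vert$ (and symmetrically $E^\circ_\vert$) into the four regions listed in \eqref{eq:q:sample-8vRC-from-8v}. On each region I will use Mecke's formula (Lemma~\ref{lem:mecke}) together with the superposition and thinning properties of Poisson processes. The goal is to check that the local contribution equals the local eight-vertex weight in \eqref{eq:q:8v-measure}: rate $u$ points on $\om[\pi^\bullet]\cap\om[\pi^\circ]$, rate $1-u$ points on $\om_\horiz[\pi]\setminus\om_\vert[\pi]$, and the factor $e^{-|\om_\vert[\pi]|}$. Once this holds, the conditional law of $\eta$ given $\pi$ is exactly the prescribed sampling rule.

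Concretely, on the region $\om_\horiz[\pi^\bullet]^c\cap\om_\vert[\pi^\circ]^c$ the superposed rates $u+(1-2u)+2u=1+u$ must combine with $e^{-u\cdot\text{length}}$ to reproduce the free Poisson density. On regions containing a forced spin discontinuity, the Poisson point of $\pi$ must be split with probabilities $\tfrac{u}{1-u}$ versus otherwise. I expect this bookkeeping to be the tedious part, though only routine.

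For the Ashkin–Teller statement I will apply duality in $\om'$: set $(\eta^\bullet,\eta^\circ)=(\om,(\om')^*)$. This exchanges horizontal and vertical points via the bijection $\cE^\bullet_\horiz\leftrightarrow\cE^\circ_\vert$, so the cluster weight $2^{k(\eta^\circ)}$ becomes $2^{k((\om')^*)}$. Assigning $\tau'$ to clusters of $(\eta^\circ)^*$, with boundary clusters wired to $+$, then gives the free/plus boundary conditions, because Euler's formula converts $k(\eta^\circ)$ into $k((\eta^\circ)^*)$ up to a deterministic count. That count is the number of points of $\eta$, which is absorbed into rescaling the Poisson rates. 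I will then verify \eqref{eq:q:sample-ATRC-from-AT-1} region by region, as for the eight-vertex case.

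The main obstacle is the continuum Euler relation: $k(\eta)-k(\eta^*)$ must be an affine function of the numbers of horizontal and vertical Poisson points. I would handle it by approximating via the discrete coupling of Lemma~\ref{lem:8v-AT-coupling} on $\ZZ\times\tfrac1m\ZZ$, passing to the limit with Lemmas~\ref{lem:appendix:FKG-proofs} and~\ref{lem:appendix-convergence}, and using that local functions converge in the $w^\#$ topology.

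Part 3 follows by the same computation, with boundary clusters counted once, exactly as in the discrete case.

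For Part 4, the infinite-volume limits exist by FKG (Lemma~\ref{lem:q:FKG-8vRC}) and monotonicity. By Burton–Keane (Proposition~\ref{prop:burton-keane}) the limit has at most one infinite cluster, so the finite-volume couplings pass to the limit as in Lemma~\ref{lem:q:6v+convergence}. A colour-switching argument then gives $\nu[\pi^\bullet_x]=\nu[x\leftrightarrow\partial]$ in finite volume, which converges to $\nu[x\leftrightarrow\infty]$; the same argument gives the equality for $\tau_x$.
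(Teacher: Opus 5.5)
Your architecture matches the paper's proof. You use an Edwards--Sokal joint measure on $(\pi,\eta)$ with region-by-region Poisson/Mecke bookkeeping for the eight-vertex half; your rates $\tfrac{u}{1-u}$ and $u+(1-2u)+2u$ are those of \eqref{eq:q:sample-8vRC-from-8v}. You then get the Ashkin--Teller half by setting $(\om,\om')=(\eta^\bullet,(\eta^\circ)^*)$ and absorbing the Euler correction into rescaled Poisson rates, which is how the paper reaches \eqref{eq:q:8vRC-measure-2}, and you use colour switching for Part~4.

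The one genuine divergence is the Euler step. You treat it as the main obstacle and propose to obtain it by passing to the limit in the discrete coupling of Lemma~\ref{lem:8v-AT-coupling}. The paper simply uses \eqref{eq:euler} as a deterministic identity, citing Aizenman--Nachtergaele and Bj\"ornberg, and it is not an obstacle. In a bounded domain there are finitely many points, so the maximal open vertical intervals and the horizontal bonds form a finite planar graph. \eqref{eq:euler} is then the ordinary Euler formula for that graph, with dual clusters counted by faces.

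Your limiting route is valid in principle but much more expensive. Lemma~\ref{lem:appendix-convergence} covers only the six-vertex, height-function and 8vRC measures, and Lemma~\ref{lem:appendix:FKG-proofs} plays no role in transferring a coupling. So you would need three things the paper does not provide:
\begin{itemize}
\item new weak-convergence statements for the discrete eight-vertex and Ashkin--Teller spin measures;
\item that convergence holding jointly with the discrete couplings;
\item a.s.\ continuity of cluster labelling and duality in the $w^\#$ topology, since these are not local functions.
\end{itemize}
Once you had all that, the limit argument would deliver the whole lemma and make your Mecke computations redundant. Mixing the two methods is therefore the least economical option.

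A smaller point: Part~1 is not ``immediate'' from summing out $\pi$. That sum gives only the $\eta$-marginal and the conditional law of $\pi$ given $\eta$. Part~1 asserts that the $\pi$-marginal is $\nu^{\eiv}_{F,\Delta}$, which is your Part~2 computation. Parts~1 and~2 follow together once both marginals are identified.
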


\begin{proof}
It is helpful to write the eight-vertex and Ashkin-Teller measures as follows:
\be\label{eq:q:8v-measure-renorm}
\begin{split}
    \d\nu^{\eiv}_{F,\Delta}(\pi^\bullet,\pi^\circ)
    \propto
    \d\nu^{\poi}_{F,u}(\chi_1)
    \d\nu^{\poi}_{F,1-u}(\chi_2)
    \cdot
    \mathbbm{1}_{\{\chi_1=\om[\pi^\bullet]\cap\om[\pi^\circ]\}}
    \mathbbm{1}_{\{\chi_2=\om_\horiz[\pi]\setminus\om_\vert[\pi]\}}
    e^{
    |F\setminus\om_\vert[\pi]|
    },
\end{split}
\ee
and
\be\label{eq:q:AT-measure-renorm}
\begin{split}
    \d\nu^{\AT}_{F^\bullet,\Delta}(\tau,\tau')
    &\propto
    \d\nu^{\poi}_{F^\bullet,\frac{1}{2}}(\chi_1)
    \d\nu^{\poi}_{F^\bullet,\frac{1}{2}-u}(\chi_2)
    \mathbbm{1}_{\{\chi_1 = \om_\horiz[\tau]\triangle\om_\horiz[\tau']\}}
    \mathbbm{1}_{\{\chi_2 = \om_\horiz[\tau]\cap\om_\horiz[\tau'] \}}
     \\
    &\qquad\cdot
    e^{
    \tfrac{1}{2}|\om_\horiz[\tau]^c\cap\om_\horiz[\tau']^c|
    }
    e^{
    u|\om_\vert[\tau]\triangle\om_\vert[\tau']|
    }
    e^{
    -u|\om_\vert[\tau]\cap\om_\vert[\tau']|
    }
    e^{
    (2-u)|\om_\vert[\tau]^c\cap\om_\vert[\tau']^c|
    }
    ,
\end{split}
\ee
which are obtained from \eqref{eq:q:8v-measure} and \eqref{eq:q:AT-measure} by renormalising. It is a straightforward exercise (and very similar to other Edwards-Sokal type couplings) to prove that if $\pi\sim\nu^{\eiv}_{F,\Delta}$ \eqref{eq:q:8v-measure} and one samples $\eta$ according to \eqref{eq:q:sample-8vRC-from-8v}, then $\eta$ is distributed according to $\nu^{\eivRC}_{F,\Delta}$ \eqref{eq:q:8vRC-measure}. Further it is straightforward to prove that taking $\eta\sim\nu^{\eivRC}_{F,\Delta}$ and assigning spins $\pi$ to its clusters independently and uniformly gives $\pi\sim\nu^{\eiv}_{F,\Delta}$. 

For a space-time percolation $\om$ on a domain $F$, we have the Euler relation
\be\label{eq:euler}
    k(\om) + |\om_\horiz| - |I_{F}| = 
    k(\om^*) + |\om^*_\horiz| - |I_{F^*}|,
\ee
where $|I_{F}|$ is the number of intervals making up $F$ and $|I_{F^*}|$ the number making up $F^*$. Similar relations can be found in \cite[Lemma 3.3]{aiz-nacht} and \cite[Equation 2.4.4]{bjornberg-thesis}. Using this, one can show that if $\eta\sim\nu^{\eivRC}_{F,\Delta}$ then $(\om,\om')=(\eta^\bullet,(\eta^\circ)^*)$ is distributed as:
\be\label{eq:q:8vRC-measure-2}
\begin{split}
    \d\nu^{\eivRC,0,0}_{F,\Delta}(\eta^\bullet,\eta^\circ)
    &=
    \d\nu^{\poi}_{E^\bullet_\vert,\tfrac{1}{2}-u}(\chi_1)
    \d\nu^{\poi}_{E^\bullet_\vert,u}(\chi_2)
    \d\nu^{\poi}_{E^\bullet_\horiz,2u}(\chi_3)
    \d\nu^{\poi}_{E^\bullet_\horiz,2-4u}(\chi_3)\\
    &\qquad\cdot
    \mathbbm{1}_{\{\chi_1=E^\bullet_\vert\setminus(\om\cup\om')\}}
    \mathbbm{1}_{\{\chi_2=\om_\vert\triangle\om'_\vert\}}
    \mathbbm{1}_{\{\chi_3=\om_\horiz\triangle\om'_\horiz\}}
    \mathbbm{1}_{\{\chi_4=\om_\horiz\cap\om'_\horiz\}}\\
    &\qquad\cdot 
    e^{-u| \om_\vert\cap\om'_\vert|}
    e^{-u|E^\bullet_\horiz\setminus(\om\cup\om')|}
    2^{k(\om)+k(\om')}.
\end{split}
\ee
The rest of the proof of the lemma is a similar Edwards-Sokal type exercise, with part 4 being a standard colour-switching argument.
\end{proof}

Finally we do the same for the Ashkin-Teller and six-vertex models. We have already described how to sample $\xi^\bullet$ from a six-vertex configuration $\xi$ in \eqref{eq:quantum-6v-triple}, so we describe how to sample it from the Ashkin-Teller model. Given $(\tau,\tau')\sim\nu^\AT_{G^\bullet,\Delta}$, sample $\xi^\bullet\in\Om^\perc_{E^\bullet}$ as follows.
\be\label{eq:q:sample-xi-from-AT}
\begin{split}
&\text{On} \ E^\bullet_\vert:
\begin{cases}
    \text{set} \ \xi^\bullet=0 \ \text{deterministically} 
    \ & \ \text{on $\om_\horiz[\tau]\cup\om_\horiz[\tau'] $},\\
    \text{sample $\xi^\bullet=0$ as a PPP rate $\tfrac{1}{2}-u$}, 
    \ \text{set} \ \xi^\bullet=1 \ \text{o.w.}
    \ & \ \text{on $\om_\horiz[\tau]^c\cap\om_\horiz[\tau']^c$};
\end{cases}\\
&\text{On} \ E^\circ_\vert:
\begin{cases}
    \text{set} \ \xi^\bullet=0 \ \text{deterministically} 
    \ & \ \text{on $\om_\vert[\tau]\cup\om_\vert[\tau'] $},\\
    \text{sample $\xi^\bullet=1$ as a PPP rate $2$}, 
    \ \text{set} \ \xi^\bullet=0 \ \text{o.w.}
    \ & \ \text{on $\om_\vert[\tau]^c\cap\om_\vert[\tau']^c$}.
\end{cases}
\end{split}
\ee

\begin{lemma}[Coupling Ashkin-Teller and six-vertex]\label{lem:q:coupling-AT-6v}
Let $\Delta\in[-1,0]$, so $u=\tfrac{1}{2}(\Delta+1)\in[0,\tfrac{1}{2}]$.
\begin{enumerate}
    \item If one takes $(\sigma^\bullet,\sigma^\circ,\xi^\bullet)\sim\nu_{F,\Delta}^\sv$ as in Lemma \ref{lem:q:quantum-6v-triple} and samples a spin configuration $\tau$ by assigning $+$ or $-$ independently and uniformly to every cluster of $(\xi^\bullet)$, then $(\tau,\tau\sigma^\bullet)\sim\nu_{F,\Delta}^{\AT,+,\f}$, that is, $\tau$ has the law of a single spin of Ashkin-Teller, while $\sigma^\bullet$ has the law of the product of Ashkin-Teller spins.
    \item If one takes $(\tau,\tau')\sim\nu_{F,\Delta}^{\AT,+,\f}$ and samples a percolation configuration $\xi^\bullet$ according to the rule \eqref{eq:q:sample-xi-from-AT}, and then further samples a spin configuration $\sigma^\circ$ by assigning $\pm1$ to each cluster of $(\xi^\bullet)^*$, then $(\tau\tau',\sigma^\circ)\sim\nu^\sv_{F,\Delta}$.
    \item Analogous statements holds for the triples of measures: $\nu_{F,\Delta}^{\eivRC,0,1}$, $\nu_{F,\Delta}^{\eiv,\wplus}$, $\nu_{F,\Delta}^{\AT,\f,\f}$, and 
    $\nu_{F,\Delta}^{\eivRC,1,0}$, $\nu_{F,\Delta}^{\eiv,\bplus}$, $\nu_{F,\Delta}^{\AT,+,+}$, and
    $\nu_{F,\Delta}^{\eivRC,1,1}$, $\nu_{F,\Delta}^{\eiv,\bplus,\wplus}$, $\nu_{F,\Delta}^{\AT,+,\f}$.
    \item One has that in infinite volume,
    \bes
    	\nu^{\AT;++}_{\Delta}[\tau_x] 
    	=
    	\nu^{\sv}_{\Delta}[x \xleftrightarrow{\xi^\bullet} \infty]
    	=
    	0.
    \ees
\end{enumerate}
\end{lemma}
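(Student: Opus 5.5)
The plan is to mirror the discrete argument of Proposition \ref{prop:6v-AT-coupling} (that is, \cite[Proposition 8.1]{glaz-peled}), replacing sums over edge states by Poisson point processes and using Mecke's formula (Lemma \ref{lem:mecke}) to manipulate densities. For parts 1 and 2, I would build a joint measure on $(\tau,\tau',\xi^\bullet,\sigma^\circ)$ on a finite domain $F$ and show it has the right marginals.

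\textbf{Joint density.} Start from the triple representation of Lemma \ref{lem:q:quantum-6v-triple}: the marginal on $(\sigma^\bullet,\xi^\bullet)$ carries the factor $2^{k((\xi^\bullet)^*)}$, and $\sigma^\circ$ is uniform on clusters of $(\xi^\bullet)^*$. Since $\sigma^\bullet\perp\xi^\bullet$, I would define $\tau$ by assigning independent uniform signs to the clusters of $\xi^\bullet$. This multiplies the weight by $2^{k(\xi^\bullet)}$, with the boundary cluster forced to $+$ under the $+,\f$ conditions. Then set $\tau'=\tau\sigma^\bullet$, which is again constant on clusters of $\xi^\bullet$.

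The constraint "$\tau,\tau'$ constant on $\xi^\bullet$-clusters" decomposes locally. At each space-time point the allowed states of $\xi^\bullet$ given $(\om[\tau],\om[\tau'])$ are exactly those in \eqref{eq:q:sample-xi-from-AT}: $\xi^\bullet$ is closed wherever either spin has a domain wall, and otherwise it is a PPP of the stated rate. Summing out $\xi^\bullet$ with Mecke's formula, and using the Euler relation \eqref{eq:euler} to turn $2^{k(\xi^\bullet)+k((\xi^\bullet)^*)}$ into local exponential factors, should produce precisely the renormalised Ashkin--Teller density \eqref{eq:q:AT-measure-renorm}. The rates to match are
\begin{itemize}
\item $\tfrac12$ for $\om_\horiz[\tau]\triangle\om_\horiz[\tau']$;
\item $\tfrac12-u$ for $\om_\horiz[\tau]\cap\om_\horiz[\tau']$;
\item $e^{u}$, $e^{-u}$, $e^{2-u}$ for the three types of vertical point.
\end{itemize}
Reading the same joint density in the other direction gives part 2. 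Conditional on $(\tau,\tau')$, $\xi^\bullet$ follows \eqref{eq:q:sample-xi-from-AT}; conditional on $(\tau\tau',\xi^\bullet)$, $\sigma^\circ$ is uniform on clusters of $(\xi^\bullet)^*$. Integrating out $\tau$ then recovers $\nu^{\sv}_{F,\Delta}$. Part 3 is the same computation with the modified cluster counts for the boundary conditions, exactly as in the discrete case.

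\textbf{Main obstacle.} I expect the bookkeeping of this density computation to be the real difficulty. In the space-time setting the "$\xi^\bullet=1$" pieces on $E^\bullet_\vert$ are intervals, not points. The factor $e^{u|\xi^\bullet_\vert|}$ in \eqref{eq:quantum-6v-triple} and the Lebesgue-length terms must be matched against the exponential-of-length terms in \eqref{eq:q:AT-measure-renorm}. I would first check the single-column computation, a Poisson thinning on one interval, and only then assemble the result via Mecke's formula, as in the proof of Lemma \ref{lem:q:quantum-6v-triple}.

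\textbf{Part 4.} Pass the coupling to infinite volume. By Theorem \ref{thm:Q:delocalisation}, $\nu^{\sv,\bplus}_{F,\Delta}\to\nu^\sv_\Delta$, and under this limit $\xi^\bullet$ has no infinite cluster (Lemma \ref{lem:q:xi-bullet-no-perc}). The sampling of $\tau$ from clusters of $\xi^\bullet$ passes to the limit by the same Skorokhod/interval argument as in Lemma \ref{lem:q:6v+convergence}, giving $\nu^{\AT;+,+}_{F,\Delta}\to\nu^{\AT;+,+}_\Delta$.

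A colour-switching argument then gives $\nu^{\AT;+,+}_{F,\Delta}[\tau_x]=\nu^{\sv,\bplus}_{F,\Delta}[x\xleftrightarrow{\xi^\bullet}\partial F]$. To take the limit, squeeze this between the event $\{x\xleftrightarrow{\xi^\bullet}\partial\Lambda_N\}$ and its local version, using monotonicity in $F$ from Lemma \ref{lem:q:CBC}. Letting $F\nearrow\cF$ and then $N\to\infty$ yields $\nu^{\sv}_\Delta[x\xleftrightarrow{\xi^\bullet}\infty]=0$, hence both quantities in part 4 vanish.
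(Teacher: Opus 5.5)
Your plan is essentially the paper's proof: use the Euler relation \eqref{eq:euler} to rewrite the $(\sigma^\bullet,\xi^\bullet)$-marginal \eqref{eq:quantum-6v-double} with cluster weight $2^{k(\xi^\bullet)}$ and rescaled rates ($1\to\tfrac12$ and $1-2u\to\tfrac12-u$ on $E^\bullet_\vert$, $1\to 2$ on $E^\circ_\vert$), obtain \eqref{eq:q:AT-measure} by Edwards--Sokal sampling of $\tau$ on the $\xi^\bullet$-clusters (reversing via \eqref{eq:q:sample-xi-from-AT} for part 2), and get part 4 from colour switching plus the absence of an infinite $\xi^\bullet$-cluster under $\nu^{\sv}_\Delta$. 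One slip to correct: assigning uniform signs to the $\xi^\bullet$-clusters divides, rather than multiplies, the per-configuration weight by $2^{k(\xi^\bullet)}$, so the factor the Euler relation must localise is $2^{k((\xi^\bullet)^*)-k(\xi^\bullet)}$, a constant times $2^{|\xi^\bullet_\horiz|-|(\xi^\bullet)^*_\horiz|}$; the sum $2^{k(\xi^\bullet)+k((\xi^\bullet)^*)}$ you wrote cannot be made local this way.
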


\begin{proof}
    Again the proof is an straightforward exercise; we note the helpful identities: the marginal of $\nu^\sv_{F,\Delta}$ on $\sigma^\bullet,\xi^\bullet$ can be written as:
    \be\label{eq:quantum-6v-double}
    \begin{split}
        \d\nu^\sv_{F,\Delta}(\sigma^\bullet,\xi^\bullet) = 
        &
        \d\nu^\poi_{E^\bullet_\vert,1}(\chi_1)
        \d\nu^\poi_{E^\bullet_\vert,1-2u}(\chi_2) 
        \d\nu^\poi_{E^\circ_\vert,1}(\chi_3)\\
        & \qquad\cdot
        \mathbbm{1}_{\{\sigma^\bullet\perp\xi^\bullet\}}
        \mathbbm{1}_{\{\chi_1=\om_{\horiz}[\sigma^\bullet]\}}
        \mathbbm{1}_{\{\chi_2=(\xi^\bullet)^*_\horiz=E^\bullet_\vert\setminus \xi^\bullet\}}
        \mathbbm{1}_{\{\chi_3=\xi^\bullet_{\horiz}\}}\\
        & \qquad \cdot
        e^{u\left|\xi^\bullet_\vert\right|}
        e^{u\left|\om_\vert[\sigma^\bullet]\right|}
        e^{-u\left|E^\circ_\vert\setminus
        (\om_\vert[\sigma^\bullet]\cup\xi^\bullet_\horiz)\right|}
        2^{k((\xi^\bullet)^*)}
    \end{split}
    \ee
    which, using the Euler relation \eqref{eq:euler}, can be rewritten (upon renormalising) as:
    \be\label{eq:quantum-AT-double}
    \begin{split}
        \d\nu^\sv_{F,\Delta}(\sigma^\bullet,\xi^\bullet) = 
        &
        \d\nu^\poi_{E^\bullet_\vert,\tfrac{1}{2}}(\chi_1)
        \d\nu^\poi_{E^\bullet_\vert,\tfrac{1}{2}-u}(\chi_2) 
        \d\nu^\poi_{E^\circ_\vert,2}(\chi_3)\\
        & \qquad\cdot
        \mathbbm{1}_{\{\sigma^\bullet\perp\xi^\bullet\}}
        \mathbbm{1}_{\{\chi_1=\om_{\horiz}[\sigma^\bullet]\}}
        \mathbbm{1}_{\{\chi_2=(\xi^\bullet)^*_\horiz=E^\bullet_\vert\setminus \xi^\bullet\}}
        \mathbbm{1}_{\{\chi_3=\xi^\bullet_{\horiz}\}}\\
        & \qquad \cdot
        e^{u\left|\xi^\bullet_\vert\right|}
        e^{u\left|\om_\vert[\sigma^\bullet]\right|}
        e^{-u\left|E^\circ_\vert\setminus
        (\om_\vert[\sigma^\bullet]\cup\xi^\bullet_\horiz)\right|}
        2^{k(\xi^\bullet)}.
    \end{split}
    \ee
    Setting $\tau\tau'=\sigma^\bullet$ and sampling $\tau$ independently and uniformly on all clusters of $\xi$ gives the Ashkin-Teller measure \eqref{eq:q:AT-measure}; on the other hand taking $(\tau,\tau')\sim\nu^{\AT}_{F^\bullet,\Delta}$, sampling $\xi^\bullet$ according to \eqref{eq:q:sample-xi-from-AT} and setting $\sigma^\bullet=\tau\tau'$, one finds that $(\sigma^\bullet,\xi^\bullet)$ are distributed as \eqref{eq:quantum-AT-double}. 

    The final identity in part 4 is again a standard colour-switching argument, and the fact that it is 0 is a consequence of Theorem \ref{thm:Q:delocalisation}, the delocalisation of the space-time six-vertex model.
\end{proof}

\begin{proof}[Proof of Theorem \ref{thm:q:8v-AT}]

We can now conclude this section. Lemmas \ref{lem:q:coupling-8v-AT} and \ref{lem:q:coupling-AT-6v} together prove Proposition \ref{prop:q:8vRC}. Finally, we use Proposition \ref{prop:q:8vRC} to prove Theorem \ref{thm:q:8v-AT}. The proof is essentially the same as in the discrete. The weak limits of all four 8vRC measures exist and are equal by Proposition \ref{prop:8v-AT-RC} (write $\nu^{\eivRC}_{\Delta}$ for this common measure). Let $\nu^{\eiv}_{\Delta}$ be the measure on eight-vertex spins given by assigning $\pm1$ uniformly and independently to every cluster of $\eta^\bullet$ and every cluster of $\eta^\circ$, where $\eta\sim\nu^{\eivRC}_{\Delta}$. By standard arguments using monotonicity (see \cite[Theorem 4.91]{grimmett-RC}), it follows that the weak limits of $\nu^{\eiv}_{F,\Delta}$, $\nu^{\eiv;\bplus}_{F,\Delta}$, $\nu^{\eiv;\wplus}_{F,\Delta}$, $\nu^{\eiv;\bplus,\wplus}_{F,\Delta}$ all exist and are equal to $\nu^{\eiv}_{\Delta}$. The limit is Gibbs (since the interaction is local), $(\RR\times\ZZ)$ translation-invariant (since $\nu^{\eivRC}_{\Delta}$ is), and ergodic; together these properties imply extremality (see e.g. \cite[Section 14]{georgii}). By the coupling and another standard colour-switching argument, we have
\be\label{eq:q:8v-8vRC-2pt-4pt}
\begin{split}
	\nu^\eiv_{\Delta}[\pi^\bullet_{x_1}\pi^\bullet_{x_2}]
	&=
	\nu^{\eivRC}_{\Delta}[x_1\xleftrightarrow{\eta^\bullet} x_2];\\
	\nu^\eiv_{\Delta}[\pi^\bullet_{x_1}\pi^\bullet_{x_2}
	\pi^\circ_{y_1}\pi^\circ_{y_2}]
	&=
	\nu^{\eivRC}_{\Delta}[x_1\xleftrightarrow{\eta^\bullet} x_2,
	y_1\xleftrightarrow{\eta^\circ} y_2],
\end{split}
\ee
and the right hand side of both of these equations tends to 0 as $||x_1-x_2||_1,||y_1-y_2||_1\to\infty$. Finally, when $\Delta=0$, $\eta^\bullet$ and $\eta^\circ$ are independent and distributed according to critical Ising models. A proof that the space-time Ising model has a unique thermodynamic limit follows essentially the same method as \cite[Theorem 11.3]{grimmett-RC}. This completes the proof of Theorem \ref{thm:q:8v-AT}.
\end{proof}

\subsection{Exact decay rate for $\Delta=0$}
Finally, we finish the proof of Part 2 of Theorem \ref{thm:xxz-decay-rates}. For $\Delta=0$, if $(\pi^\bullet,\pi^\circ)\sim \nu^{\eiv}_{\Delta}$ then $\pi^\bullet$ and $\pi^\circ$ are independent and both distributed as critical space-time Ising models on $\ZZ\times\RR$ (translated so they lie on $\cF^\bullet$ and $\cF^\circ$, respectively). Writing $y_{(a,b)}$ for the point of $\cF^\bullet$ adjacent to $(a,b)\in\ZZ\times\RR$, and $z_{(a,b)}$ for the point of $\cF^\circ$, we have
\bes
\begin{split}
    \langle S^{(1)}_0 S^{(1)}_x(t) \rangle 
    &=
    \frac{1}{4}\lim_{F\nearrow\cF}\mu^{\lp}_{\cF,\Delta}[(0,0)\leftrightarrow (x,t)]\\
	&=
	\frac{1}{4}\nu^\eiv_{\Delta}[\pi^\bullet_{(y_{(0,0)})}\pi^\bullet_{y_{(x,t)}}
	\pi^\circ_{z_{(0,0)}}\pi^\circ_{z_{(x,t)}}]\\
    &=
    \frac{1}{4}\nu^{\mathrm{Ising}}_{\beta_c}[\pi^\bullet_{(y_{(0,0)})}\pi^\bullet_{y_{(x,t)}}]\cdot
	\nu^{\mathrm{Ising}}_{\beta_c}[\pi^\circ_{z_{(0,0)}}\pi^\circ_{z_{(x,t)}}]\\
    &\sim
    (||(x,t)||^{-1/4})^2.
\end{split}
\ees
Here the second equality is \eqref{eq:q:loop-connections} and the decay is proved in \cite[Theorem 3.9]{li-mahfouf-ising}.

\appendix
\section{FKG inequalities}\label{appendix:FKG}

In this section we prove the three FKG inequalities for space-time measures that we use in Sections \ref{sec:delocalisation} and \ref{sec:quantum-loop-proofs}. 

\begin{lemma}\label{lem:appendix:FKG-proofs}\textcolor{white}{.}
    \begin{enumerate}
        \item Let $\Delta\le0$, so $u\le\tfrac{1}{2}$. The triple $(\sigma^\bullet,\xi^{\bplus},-\xi^{\bminus})$ satisfies the FKG inequality under the measure $\nu^\sv_{F,\Delta}$, as well as with the boundary conditions $\bplus$ and/or $\wplus$, , and also under the measures $\nu^{\sv,*}_{\cyl_{L,\beta},\Delta}$, $\nu^{\sv,*}_{\cyl_{L,\beta}^h,\Delta}$, for $*=\bullet,\circ$. The same holds for the triple $(\sigma^\circ,\xi^{\wplus},-\xi^{\wminus})$.
        \item Let $\Delta\le0$, so $u\le\tfrac{1}{2}$. Let $h\sim\nu^{\sv-\hom,\tilde{h}}_{F,\Delta}$ \eqref{eq:q:6v-hom-measure} or $h\sim\nu^{\sv,\bullet,g}_{\cyl_{L,\beta}\setminus\ell_{\inn}}$. Then $h$ satisfies the FKG inequality.
        \item Let $\Delta\in[-1,0]$, so $u\in[0,\tfrac{1}{2}]$. For all boundary conditions $\eta_0$, we have under the measure $\nu_{F,\Delta}^{\eivRC,\eta_0}$, $\eta$ is FKG with respect to the ordering $\tilde{\le}$. In the special case $u=\tfrac{1}{2}$, $\eta^\bullet$ and $\eta^\circ$ are independent (they are both quantum random cluster models with $q=2$), and are also FKG with respect to the ordering $\le$. 
    \end{enumerate}
\end{lemma}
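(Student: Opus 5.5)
The plan is to obtain all three statements as limits of the corresponding discrete FKG inequalities, as the body of the paper indicates when it refers to this appendix and to Lemma \ref{lem:appendix-convergence}. For each space-time measure I will build a sequence of discrete measures on $\Lambda\times(\tfrac{1}{m}\ZZ)$ that converges weakly to it in the $w^\#$ topology. These discrete measures are: the six-vertex model with weights $a=1$, $b=u/m$, $c=(1-u)/m$ up to normalisation, or equivalently mirror weights $p_\es\approx 1-\tfrac{1}{m}$ and $p_\NW\propto u/m$, $p_\NE\propto(1-u)/m$ after rotating by $\pi/4$; the associated $\xi^\bullet$; and the discrete 8vRC. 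Since FKG for bounded increasing continuity functions passes to weak limits, and such functions approximate all increasing functions, each part reduces to two things: FKG in the discrete model with those parameters, and the weak convergence.

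\textbf{Part 1.} Use the discrete FKG for $(\sigma^\bullet,\xi^{\bplus},-\xi^{\bminus})$ of \cite[Proposition 4.9]{glazman-lammers}, valid for $a,b\le c$. The point to check is that the tuned discrete weights stay in the range where this holds, once the discretisation is chosen to rescale the vertical direction as the scaling limit requires. I would verify that the limiting condition is exactly $u\le\tfrac12$. Heuristically, the thinning probability $1-2u\ge0$ in \eqref{eq:q:sample-xi-black-from-6v} is the limit of $(c-b)/c$, which must be nonnegative. The cylinder boundary conditions are handled as in Lemma \ref{lem:cylinder-6v-converges}: the height function is well defined, so the discrete proof goes through, and it is FKG for the type 1 and type 2 moves of Lemma \ref{lem:cyl-hf-fkg}.

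\textbf{Part 2.} For $h$ with boundary data $\tilde h$, invoke the discrete height-function FKG \cite[Proposition 2.2]{dckmo-6v-deloc}, valid for $a,b\le c$. For the cylinder-minus-$\ell_{\inn}$ measure, invoke Lemma \ref{lem:cyl-hf-fkg}. To make the approximation work, the boundary data $g$, which has finitely many jumps on the continuous sides, must be approximated by discrete admissible boundary data. I would do this by rounding jump locations to $\tfrac1m\ZZ$ and conditioning on the Poisson points via Mecke's formula.

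\textbf{Part 3.} The discrete 8vRC measures are FKG for $\tilde\le$ when $p_\NE p_\NW\le p_\es$ (Lemma \ref{lem:8vRC-FKG}). In the scaling $p_\NE,p_\NW=O(1/m)$ and $p_\es\to1$, this condition holds automatically for large $m$. It remains to check that the weights of the discrete 8vRC converge to those of \eqref{eq:q:8vRC-measure}, with rates $u$, $2u$, $1-2u$; the condition $p_\NE,p_\NW\ge p_\es$ used there has to be replaced by the appropriately rotated parametrisation. At $u=\tfrac12$ the rate $1-2u$ vanishes, and the weight factorises into two space-time random cluster measures with $q=2$. Independence is then immediate, and FKG for $\le$ follows from the standard FKG for each factor.

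\textbf{Main obstacle.} The hard step is the convergence lemma, not the FKG inputs. I need to show that the discrete measures converge in $w^\#$ to the space-time ones, including under conditioning on boundary events of zero discrete probability in the limit. Here I would use Lemma \ref{lem:q:pp-dom}-type Poisson domination for tightness, and identify the limit through finite-dimensional distributions of point counts via Mecke's formula.
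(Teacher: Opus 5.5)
Your overall route (discrete FKG measures, weak convergence in $w^\#$, then passage from increasing continuity events to all increasing events) is the paper's. The gap is that the discretisation you propose is in the wrong regime, and with it none of the discrete FKG inputs apply.

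\textbf{Parts 1 and 2.} With $a=1$, $b=u/m$, $c=(1-u)/m$, the weight $c$ is the smallest. So the hypothesis $a,b\le c$ of \cite[Proposition 4.9]{glazman-lammers} and \cite[Proposition 2.2]{dckmo-6v-deloc} fails, and $\Delta=(a^2+b^2-c^2)/(2ab)\sim m/(2u)\to\infty$.

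Your ``equivalent'' mirror weights $p_\es= 1-\tfrac1m$, $p_\NW=u/m$, $p_\NE=(1-u)/m$ are not equivalent. They give $a=1-\tfrac um$, $b=1-\tfrac{1-u}m$, $c=\tfrac1m$, so again $c<a,b$, and now $\Delta\to1$.

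Geometrically, if $p_\es\approx1$ then loops pass straight through vertices, i.e.\ along the diagonals of the rotated lattice. After vertical compression these diagonals are transversal to the time direction, so the limit is not Ueltschi's loop model. The background of the space-time model (no cross, no double bar) must correspond to the mirror that keeps a loop on its zigzag chain, the chain that becomes a line $\{x\}\times\RR$. The empty mirror must give the crosses, and the other mirror the double bars.

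This is why the paper stays on the curve $\Delta(a_\eps,b_\eps,c_\eps)=\Delta$ with $a=\eps$, $b=1-(1-u)\eps+O(\eps^2)$, $c=1-u\eps+O(\eps^2)$. Equivalently, $p_\N=1-\eps$, $p_\W=(1-u)\eps$, $p_\es=u\eps$ on the $45^\circ$-rotated, vertically compressed lattice. With this choice:
\begin{itemize}
\item the rare $a$-vertices become the rate-$1$ Poisson points;
\item $(c/b)^{1/\eps}\to e^{1-2u}$ per unit length, which produces $e^{u|\om^{b}|+(1-u)|\om^{c}|}$;
\item the thinning probability $(c-b)/c\approx(1-2u)\eps$ gives the rate $1-2u$ in \eqref{eq:q:sample-xi-black-from-6v} (with your weights it is $(1-2u)/(1-u)$, of order one, so there is no Poisson limit);
\item the condition $a,b\le c$ becomes exactly $u\le\tfrac12$.
\end{itemize}
Part 2 needs the same correction.

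\textbf{Part 3.} Here the problem is sharper. In your scaling $p_\NW-p_\es$ and $p_\NE-p_\es$ are negative, so the discrete 8vRC weights do not define a measure. The ``appropriately rotated parametrisation'' you defer to is precisely the missing step.

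In the correct scaling $p_\NW p_\NE\approx(1-u)\eps\ge u\eps\approx p_\es$. By \eqref{eq:delta} this is just $\Delta\le0$, since the numerator there equals $p_\es-p_\NW p_\NE$. On $E^\bullet_\vert$ the per-vertex weights are $w_{10}\approx1$, $w_{00}=u\eps$, $w_{11}=2u\eps$, $w_{01}\approx(1-2u)\eps$, which matches \eqref{eq:q:8vRC-measure}.

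In this regime, do not rely on the case split you quoted from Lemma \ref{lem:8vRC-FKG}; check the lattice condition directly.
\begin{itemize}
\item The factors $2^{k(\eta^\bullet)}$ and $2^{k(\eta^\circ)}$ are log-supermodular for either ordering.
\item The single-vertex condition for $\tilde{\le}$ is $w_{10}w_{01}\ge w_{11}w_{00}$, i.e.\ $(p_\NW-p_\es)(p_\NE-p_\es)\ge 2p_\es^2$.
\item Using $p_\NW+p_\NE+p_\es=1$, this is $p_\NW p_\NE\ge p_\es$, i.e.\ $\Delta\le0$. So it holds exactly along the fixed-$\Delta$ curve, not just to leading order.
\end{itemize}
For the ordering $\le$ the inequality reverses. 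So $\le$ survives only at $p_\NW p_\NE=p_\es$, i.e.\ $u=\tfrac12$, where the weights factorise.
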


Our method of proof is to show that these measures are limits of discrete measures for which the FKG inequalities hold. 

\begin{lemma}\label{lem:appendix-convergence}\textcolor{white}{.}
    \begin{enumerate}
        \item Let $\Delta\le0$, so $u\le\tfrac{1}{2}$. The measure $\nu^\sv_{F,\Delta}$, is the limit (weakly in the $w^\#$ topology on $(\ul{\Om}^\sv_V,\FF^{\sv,\#}_V)$) of a sequence of discrete measures $\mu^{\sv}_{V_\eps,a_\eps,b_\eps,c_\eps}$ (under which $(\sigma^\bullet,\xi^{\bplus},-\xi^{\bminus})$ is FKG) as $\eps\to0$, where $\Delta(a_\eps,b_\eps,c_\eps)=\Delta$. The same holds for the other boundary conditions in Lemma \ref{lem:appendix:FKG-proofs}.
        \item Let $\Delta\le0$, so $u\le\tfrac{1}{2}$. The measure $\nu^{\sv-\hom,\tilde{h}}_{F,\Delta}$ is the limit (weakly in the $w^\#$ topology on $(\ul{\Om}^\sv_V,\FF^{\sv,\#}_V)$) of a sequence of measures $\mu^{\sv-\hom,\tilde{h}_\eps}_{V_\eps,a_\eps,b_\eps,c_\eps}$ (under which $(\sigma^\bullet,\xi^{\bplus},-\xi^{\bminus})$ is FKG) as $\eps\to0$, where $\Delta(a_\eps,b_\eps,c_\eps)=\Delta$. The same holds for the measure $\nu^{\sv,\bullet,g}_{\cyl_{L,\beta}\setminus\ell_{\inn}}$.
        \item Let $\Delta\in[-1,0]$, so $u\in[0,\tfrac{1}{2}]$. The measure $\nu_{F,\Delta}^{\eivRC,\eta_0}$, is the limit (weakly in the $w^\#$ topology on $(\ul{\Om}^\eivRC_V,\FF^{\eivRC,\#}_V)$) of a sequence of discrete measures $\mu^{\eivRC,(\eta_0)_{\eps}}_{V_\eps,K_\eps, W_\eps}$ (under which $\eta$ is FKG with respect to the ordering $\tilde{\le}$) as $\eps\to0$.
    \end{enumerate}
\end{lemma}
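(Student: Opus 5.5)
We approximate each space-time measure by a discrete measure on a lattice that is scaled by $\eps$ in the vertical direction. Fix the domain $F=(V,F)$ and set $V_\eps=V\cap((\ZZ+\tfrac12)\times\eps\ZZ)$, read through the rotation-by-$\pi/4$ identification that the paper used to define $\cF$. Following the transfer-matrix heuristic linking the six-vertex model and XXZ, we choose weights with $a_\eps=\eps$ for the vertices that act as discontinuities, and $b_\eps,c_\eps=1-O(\eps)$. Writing $b_\eps=1$ and $c_\eps=1$ up to first-order corrections tuned so that $\Delta(a_\eps,b_\eps,c_\eps)=\Delta$ and $c_\eps/b_\eps=e^{\eps(1-2u)}$, the local weight ratios then converge to the exponential tilts $e^{u|\om^b|+(1-u)|\om^c|}$ in \eqref{eq:q:6v-hom-measure}. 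In addition, $a$-vertices appear as a Bernoulli process with success probability proportional to $\eps$ per site, which converges to the rate-1 Poisson process $\chi$.

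For each $\eps$ we must also check that the discrete parameters lie in the range where the FKG statements hold. We need $a_\eps,b_\eps\le c_\eps$, which is the condition of \cite{glazman-lammers} for $(\sigma^\bullet,\xi^{\bplus},-\xi^{\bminus})$, and of \cite{dckmo-6v-deloc} for $h$. In the mirror parametrisation \eqref{eq:6v-weights} this reads $p_\NW,p_\NE\ge p_\es$. For small $\eps$ this is ensured by $u\le\tfrac12$. For the 8vRC measure we instead need $p_\NE p_\NW\le p_\es$, as in Lemma \ref{lem:8vRC-FKG}. Here we take $p_\es$ close to 1 and $p_\NE,p_\NW$ of order $\eps$, which is the regime $p_\NE p_\NW\le p_\es$ for all small $\eps$. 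We then check that the sampling rules \eqref{eq:sampling-xi} and \eqref{eq:sample-8vRC} have Bernoulli probabilities of order $\eps$ per site, and that these converge to the Poisson rules \eqref{eq:q:sample-xi-black-from-6v} and \eqref{eq:q:sample-8vRC-from-8v} with the rates $1-2u$, $u$, $2u$, and so on.

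The convergence itself is proved by computing finite-dimensional distributions. Fix finitely many bounded continuity sets. The discrete partition function, restricted to configurations with exactly $k$ marked points in given small intervals, is a Riemann sum. By Lemma \ref{lem:mecke} (Mecke) this Riemann sum matches the corresponding Poisson integral. For tightness we use a discrete analogue of Lemma \ref{lem:q:pp-dom}: the probability of $k$ marked points in a set $U$ is at most $C^k|U|^k/k!$ uniformly in $\eps$, by the same switching bound. Together with \cite[Proposition 11.1.VI]{daley-verejones-2}, this gives tightness in the $w^\#$ topology, and so weak convergence. Boundary conditions, the cylinder and the measure $\nu^{\sv,\bullet,g}_{\cyl_{L,\beta}\setminus\ell_{\inn}}$ are discretised in the same way. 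For the last of these, the FKG statement is the discrete Lemma \ref{lem:cyl-hf-fkg}.

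Finally, FKG passes to the limit. Bounded increasing functions that depend continuously on the configuration can be approximated by increasing events measurable with respect to counts in continuity sets. The inequality $\mu[fg]\ge\mu[f]\mu[g]$ is preserved under weak limits for such functions, and a monotone class argument extends it to all increasing measurable functions.

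The main obstacle is the discretisation step for the 8vRC measure and for the measure with boundary condition $g$. There the boundary data along $\ell$, which include continuous discontinuity points, must be approximated by admissible discrete boundary data. We would handle this by conditioning, via Mecke's formula, on the boundary points and rounding them to the grid $\eps\ZZ$, and then showing that the resulting boundary perturbation has vanishing effect on the measure.
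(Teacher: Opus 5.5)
\paragraph{Verdict on the proposal.} Your six-vertex discretisation is the paper's: in \eqref{eq:a:6v-weights}, $a_\eps=\eps$ and $b_\eps,c_\eps=1-O(\eps)$, with $b_\eps\le c_\eps$ exactly when $u\le\tfrac12$. Finite-dimensional distributions plus tightness is a workable substitute for the paper's argument. The paper instead writes the discrete law as a product of Bernoulli measures converging to the Poisson processes, times a density $F_\eps\to F$ dominated by $K$ to the number of points, and applies dominated convergence.

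\paragraph{The gap is in Part 3.} Your 8vRC approximants take $p_\es\to1$ and $p_\NE,p_\NW=O(\eps)$.
\begin{itemize}
\item This violates $p_\NE,p_\NW\ge p_\es$. The 8vRC weights $p_\NW-p_\es$ and $p_\NE-p_\es$ are then negative, so $\mu^{\eivRC}_{V_\eps,\ul{p}_\eps}$ is not a measure at all.
\item It is also the wrong scaling. By \eqref{eq:6v-weights} it gives $a,b\to1$ and $c\to0$, hence $\Delta(\ul{p}_\eps)\to1$. This contradicts $\Delta(\ul{p}_\eps)=\Delta$, your own choice $a_\eps=\eps$, and the rates $u,2u,1-2u$ of \eqref{eq:q:8vRC-measure}.
\item The condition you were aiming for cannot be met in this setting. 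Using $p_\NW+p_\NE+p_\es=1$ one has $a^2+b^2-c^2=2(p_\es-p_\NE p_\NW)$. So $p_\NE p_\NW\le p_\es$ is equivalent to $\Delta\ge0$, and fails along any sequence with $\Delta(\ul{p}_\eps)=\Delta<0$.
\end{itemize}

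\paragraph{How to repair it.} The approximants must use the same scaling as in Part 1, namely \eqref{eq:a:mirror-weights}: $p_\es=u\eps$, $p_\W=(1-u)\eps$, $p_\N=1-\eps$. The typical vertex carries the mirror that makes loops zigzag between two adjacent columns, not the empty vertex. Take a vertex whose vertical mirror is a black edge. The weights of $(\eta^\bullet,\eta^\circ)=(1,0),(0,1),(0,0),(1,1)$ are then $1-(1+u)\eps$, $(1-2u)\eps$, $u\eps$, $2u\eps$, with colours exchanged at the other vertices. This is what converges to \eqref{eq:q:8vRC-measure}.

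For these approximants, check $\tilde{\le}$-FKG directly:
\begin{itemize}
\item The factor $2^{k(\eta^\bullet)+k(\eta^\circ)}$ satisfies the lattice condition for either ordering.
\item The single-vertex condition for $\tilde{\le}$ is $(p_\NW-p_\es)(p_\NE-p_\es)\ge2p_\es^2$. This is equivalent to $p_\NE p_\NW\ge p_\es$, i.e.\ to $\Delta\le0$.
\end{itemize}
So $\tilde{\le}$ is the right order exactly in the regime of the lemma.

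\paragraph{Why you were misled.} The two cases of Lemma \ref{lem:8vRC-FKG} are stated the wrong way round. Already a one-vertex domain shows that $\le$-FKG fails when $p_\NE p_\NW>p_\es$: the events $\{\eta^\bullet_e=1\}$ and $\{\eta^\circ_{e^*}=1\}$ are then negatively correlated. Taking that lemma at face value is what pushed you into an inadmissible regime.

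\paragraph{Smaller point.} The approximants of $\nu^{\sv,\bullet,g}_{\cyl_{L,\beta}\setminus\ell_{\inn}}$ live on the diagonal cylinder. The discrete FKG input is therefore Lemma \ref{lem:a:FKG-cylinder}, not Lemma \ref{lem:cyl-hf-fkg} itself.
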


Lemma \ref{lem:appendix-convergence} implies that for all increasing continuity events $A,B$ for the limiting measure (events whose boundary in the $w^\#$ topology has zero measure), the FKG inequality holds. 

Any increasing event can be approximated by a sequence of increasing continuity events. Taking the six-vertex example, for $A$ increasing and $k\in\RR_{>0}$, let $A_k=\{x\in\ul{\Om}^\sv_F \ : \ \exists y\le x, \ \d^\#(y,A^c)\ge k\}$, where $A^c=\ul{\Om}^\sv_{F}$. Here $\d^\#$ is the metric defining the $w^\#$ topology (see Section A2.6 in \cite{daley-verejones-1}. The sets $\partial A_k$ are all disjoint and so only at most countably many can have positive probability under $\nu^\sv_{F,\Delta}$, so there exists a sequence $A_{k_n}$, $k_n\to\infty$ which are all continuity events. Since $A_{k_n}\nearrow A$, the convergence $\nu^\sv_{F,\Delta}[A_{k_n}]\to\nu^\sv_{F,\Delta}[A]$ holds, and the FKG inequality for general increasing events follows. It therefore suffices to prove Lemma \ref{lem:appendix-convergence} in order to prove Lemma \ref{lem:appendix:FKG-proofs}.\\ 

We now begin defining the measures $\mu^{\sv}_{V_\eps,a_\eps,b_\eps,c_\eps}$, $\mu^{\sv-\hom,\tilde{h}_\eps}_{V_\eps,a_\eps,b_\eps,c_\eps}$ and $\mu^{\eivRC,(\eta_0)_{\eps}}_{V_\eps,K_\eps, W_\eps}$. We define $V_\eps$ first.

Recall the definitions of $\cF, \cF^\bullet$ and $\cF^\circ$ from the beginning of Section \ref{sec:delocalisation}. We define discrete lattices $\ZZ^2_\eps$, $\LL^\bullet_\eps$, $\LL^\circ_\eps$ which will approximate these continuous objects. Let $\ZZ^2_\eps$ be the usual $\ZZ^2$ lattice, rotated $45^\circ$, rescaled so that the centres of each pair of neighbouring black faces is distance 2 apart, shifted so that the centres of all black faces all lie in $(2\ZZ+\tfrac{1}{2})\times\RR$ (with one at $(\tfrac{1}{2},0)$), and finally rescaled only in the vertical direction so that the centre of each black face is distance $\eps$ from the one above it. Define $\LL^\bullet_\eps$ and $\LL^\circ_\eps$ to be the lattices given by black and white faces, respectively, with edges given by nearest neighbours. 

Let $F=(V,F)$ be a domain in $\cF$. Let $V_\eps$ be the collection of vertices of $\ZZ^2_\eps$ lying inside $V$. As in Section \ref{sec:1st-coupling}, let $E_\eps$ be the set of edges of $\ZZ^2_\eps$ incident to at least one vertex in $V_\eps$, and let $F_\eps$ be the set of faces of $\ZZ^2_\eps$ incident to at least one vertex of $V_\eps$. Let $F^\bullet_\eps$, $E^\bullet_\eps$, $F^\circ_\eps$ and $E^\circ_\eps$ be defined as in Section \ref{sec:1st-coupling}, along with $\partial F^\bullet_\eps$, $\partial F^\circ_\eps$, $\partial E^\bullet_\eps$ and $\partial E^\circ_\eps$. \\

We now define the parameters. Consider the six-vertex model parameters: for $\Delta=-\cos(\zeta)$ with $\zeta\in(0,\pi)$ (so $\Delta\in(-1,1)$), $r>0$ and $\theta\in(0,\pi)$,
\be\label{eq:appendix:six-vertex-weights}
a \sin\tfrac{\zeta}{2} = r \sin((1-\tfrac{\theta}{\pi})\zeta),
\qquad
b \sin\tfrac{\zeta}{2} = r \sin(\tfrac{\theta\zeta}{\pi})
\qquad
c = 2r\cos\tfrac{\zeta}{2}.
\ee
and for $\Delta=-1$,
\be\label{eq:appendix:six-vertex-weights-2}
a = 2r \tfrac{\pi-\theta}{\pi},
\qquad
b = 2r \tfrac{\theta}{\pi}
\qquad
c = 2r.
\ee

Setting $a+b+c=2$ (which comes from $p_{\NE}+p_{\NW}+p_{\es}=1$) amounts to fixing a value of $r$, and then setting $a=\eps$, one finds after come computation that
\be\label{eq:a:6v-weights}
\begin{split}
	a&=p_\es+p_\W = \eps,\\
	b&=p_\es+p_\N = 1-(1-u)\eps + O(\eps^2),\\
	c&=p_\W+p_\N = 1-u\eps + O(\eps^2),
\end{split}
\ee
where we write $p_\W$ for a mirror oriented east-west in the new rotated lattice and $p_\N$ for a mirror oriented north-south. Note that 
\be\label{eq:a:mirror-weights}
\begin{split}
	p_\W &= (1-u)\eps + O(\eps^2),\\
	p_\es&= u\eps + O(\eps^2),\\
	p_\N &= 1-\eps + O(\eps^2).
\end{split}
\ee

We will proceed by ignoring the $O(\eps^2)$ terms; one can show that they make no difference to the convergence.

Write $E^\bullet_{\eps,\horiz}$ for the horizontal edges in $E^\bullet_\eps$, with $E^\bullet_{\eps,\vert}$, $E^\circ_{\eps,\horiz}$, $E^\circ_{\eps,\vert}$ similar. \\

We will prove Part 1 of Lemma \ref{lem:appendix-convergence}; the other parts are proved by similar techniques.
\begin{proof}[Proof of Part 1 of Lemma \ref{lem:appendix-convergence}]
From \eqref{eq:sampling-xi} the marginal of $(\sigma^\bullet,\sigma^\circ,\xi^\bullet)$ under $\mu^\sv_{V_\eps,u}$ is given by
\bes
\begin{split}
    \mu^\sv_{V_\eps,u}[(\sigma^\bullet,\sigma^\circ,\xi^\bullet)]
    &\propto
    \prod_{v\in E^\bullet_{\eps,\vert} = E^\circ_{\eps,\vert}}
    \left(\tfrac{a}{c}\right)^{\mathbbm{1}\{v\in\om_\horiz[\sigma^\bullet]\}}
    \left(\tfrac{b}{c}\right)^{\mathbbm{1}\{v\in\xi^\bullet_\vert\}}
    \left(1-\tfrac{b}{c}\right)^{\mathbbm{1}\{v\in(\xi^\bullet)^*\setminus\om[\sigma^\bullet]\}}\\
    &\qquad\cdot
    \prod_{v\in E^\circ_{\eps,\vert} = E^\bullet_{\eps,\vert}}
    \left(\tfrac{b}{c}\right)^{\mathbbm{1}\{v\in\om_\vert[\sigma^\bullet]\}}
    \left(\tfrac{a}{c}\right)^{\mathbbm{1}\{v\in\xi^\bullet_\horiz\}}
    \left(1-\tfrac{a}{c}\right)^{\mathbbm{1}\{v\in(\xi^\bullet)^*\setminus\om[\sigma^\bullet]\}}\\
    &\qquad\cdot
    \mathbbm{1}\{\sigma^\bullet\perp\xi^\bullet\} 
    \mathbbm{1}\{\sigma^\circ\perp(\xi^\bullet)^*\}\\
\end{split}
\ees
which we rewrite as
\be\label{eq:a:6v-binomial-measure}
\begin{split}
    &\prod_{v\in  E^\bullet_{\eps,\vert} = E^\circ_{\eps,\vert}}
    \left(\tfrac{\eps}{1+(1-u)\eps}\right)^{\mathbbm{1}\{v\in\chi^\bullet\}}
    \left(\tfrac{1-(1-u)\eps}{1+(1-u)\eps}\right)^{\mathbbm{1}\{v\in\xi^\bullet_\vert\}}
    \left(\tfrac{(1-2u)\eps}{1+(1-u)\eps}\right)^{\mathbbm{1}\{v\in(\xi^\bullet)^*\setminus\chi^\bullet\}}\\
    &\qquad\cdot
    \prod_{v\in E^\circ_{\eps,\vert} = E^\bullet_{\eps,\vert}}
    \left(\tfrac{\eps}{1-u\eps}\right)^{\mathbbm{1}\{v\in\xi^\bullet_\horiz\}}
    \left(1-\tfrac{\eps}{1-u\eps}\right)^{\mathbbm{1}\{v\notin\xi^\bullet\}}
    \cdot
    \prod_{v\in  E^\bullet_{\eps,\vert} = E^\circ_{\eps,\vert}}
    \left(\tfrac{1+(1-u)\eps}{1-u\eps}\right)\\
    &\qquad\cdot
    \sum_{\substack{\sigma^\bullet\in{\Om}^\ising_{F^\bullet} \\ \sigma^\circ\in{\Om}^\ising_{F^\circ}}}
    \mathbbm{1}_{\{\chi^\bullet=\om_{\horiz}[\sigma^\bullet]\}}
    \mathbbm{1}_{\{\sigma^\bullet\perp \xi^\bullet, \ \sigma^\circ\perp (\xi^\bullet)^*\}}\\
    &\qquad\cdot
    \prod_{v\in  E^\bullet_{\eps,\vert} = E^\circ_{\eps,\vert}}  
    \left(1+\tfrac{2u\eps}{1-(1+u)\eps}\right)^{\mathbbm{1}\{v\in\om_\vert[\sigma^\bullet]\}}\\
\end{split}
\ee
Here we have substituted the values for $a,b,c$ in terms of $\eps,u$, and the final factor in the second line
$\prod_{v\in E^\bullet_{\eps,\vert} = E^\circ_{\eps,\vert}}
    \left(\tfrac{1+(1-u)\eps}{1-u\eps}\right)$ 
is taken out to make the three factors in the first line sum to 1. The variable $\chi^\bullet$ is a dummy variable which we put in to draw comparison with the $\chi^\bullet$ in the space-time six-vertex measure. Indeed, recall from Lemma \ref{lem:q:quantum-6v-triple} that 
\be\label{eq:a:quantum-6v-triple}
\begin{split}
    \d\nu^\sv_{F,\Delta}(\sigma^\bullet,\sigma^\circ,\xi^\bullet) = 
    &
    \d\nu^\poi_{V^\bullet,1}(\chi^\bullet)
    \d\nu^\poi_{V^\bullet,1-2u}((\xi^\bullet)^*\setminus\chi^\bullet) 
    \d\nu^\poi_{V^\circ,1}(\xi^\bullet_\horiz)\\
    & \qquad\cdot
    \sum_{\substack{\sigma^\bullet\in\ul{\Om}^\ising_{F^\bullet} \\ \sigma^\circ\in\ul{\Om}^\ising_{F^\circ}}}
    \mathbbm{1}_{\{\chi^\bullet=\om_{\horiz}[\sigma^\bullet]\}}
    \mathbbm{1}_{\{\sigma^\bullet\perp \xi^\bullet, \ \sigma^\circ\perp (\xi^\bullet)^*\}}\\
    & \qquad \cdot
     \exp[2u |\om_\vert[\sigma^\bullet]|]
\end{split}
\ee

Let us describe what is going on heuristically. On $E^\bullet_{\eps,\vert} $, $\om_\horiz[\sigma^\bullet]$ occurs with probability $\tfrac{\eps}{1+(1-u)\eps}$, and $(\xi^\bullet)^*\setminus\om[\sigma^\bullet]$ occurs with probability $\tfrac{(1-2u)\eps}{1+(1-u)\eps}$, so in the limit they'll behave as Poisson point processes with rates 1 and $1-2u$ respectively. On  $E^\circ_{\eps,\vert} $, $\xi^\bullet_\horiz$ occurs with probability $\tfrac{\eps}{1-u\eps}$, so will behave as a Poisson point process of rate 1. The factor 
$\prod_{v\in E^\bullet_{\eps,\vert} = E^\circ_{\eps,\vert}}
    \left(\tfrac{1+(1-u)\eps}{1-u\eps}\right)$ 
is not dependent on the configuration, so can be taken to be part of the partition function. The factor 
$\prod_{v\in E^\bullet_{\eps,\vert} = E^\circ_{\eps,\vert}}  
    \left(1+\tfrac{2u\eps}{1-(1+u)\eps}\right)^{\mathbbm{1}\{v\in\om_\vert[\sigma^\bullet]\}}$
will become $\exp[2u |\om_\vert[\sigma^\bullet]|]$ in the limit.\\

To prove the convergence rigorously, we first define the discrete process as in the framework of the continuous one. Each horizontal edge in $\xi^\bullet_\horiz$ is naturally an element of $\cE^\bullet_\horiz$. For a configuration $(\sigma^\bullet,\sigma^\circ,\xi^\bullet)$ on $V_\eps$, extend the configuration to be defined on all of $E^\circ_\vert=E^\bullet_\horiz$ as follows: if $v\in\xi^\bullet_\horiz$ then keep it as such, and if not, let it take the value (that is, either in $\om_\vert[\sigma^\bullet]$ or in $(\xi^\bullet)^*\setminus\om[\sigma^\bullet]$) of the closest point of $^\circ_{\eps}$ (if there are two such points, let it take the one to the right in the copy of $\RR$). Similarly we extend the configuration to all of $E^\bullet_\vert=E^\circ_\horiz$ as follows. For each point in $\om_\horiz[\sigma^\bullet]$ or $(\xi^\bullet)^*\setminus\om[\sigma]$, keep it as such, and every other point set it to be in $\xi^\bullet_\vert$. Finally, with $\om[\sigma^\bullet]$ and $\xi^\bullet$ well defined as space-time bond percolations, define $\sigma^\bullet$ as the space-time Ising configuration which agrees with $\sigma^\bullet$ on the discrete points $F^\eps$ and has domain walls $\om[\sigma]$, and define $\sigma^\circ$ similarly so that it is constant on clusters of $(\xi^\bullet)^*$.

Note that with this setup, we can rewrite the factor
\be\label{eq:a:rewrite-factor}
    \prod_{v\in  E^\bullet_{\eps,\vert} = E^\circ_{\eps,\vert}}  
    \left(1+\tfrac{2u\eps}{1-(1+u)\eps}\right)^{\mathbbm{1}\{v\in\om_\vert[\sigma^\bullet]\}}
    =
    \left(1+\tfrac{2u\eps}{1-(1+u)\eps}\right)^
    {\tfrac{1}{\eps} |\om_\vert[\sigma^\bullet]|}.
\ee

    Define $\nu^{\sv,\poi}_{F,\Delta}$ to be the product of the Poisson point processes appearing in the space-time six vertex measure \eqref{eq:a:quantum-6v-triple}, and let $\nu^{\sv,\mathrm{bin}}_{V_\eps,u}$ be the measure given by only the first two lines (the binomial parts) of \eqref{eq:a:6v-binomial-measure}. It is clear that $\nu^{\sv,\mathrm{bin}}_{V_\eps,u}$ converges to $\nu^{\sv,\poi}_{F,\Delta}$. Let $F(\chi^\bullet,\xi^\bullet)$ be the final two lines in \eqref{eq:a:quantum-6v-triple} and let $F_\eps(\chi^\bullet,\xi^\bullet)$ be the final two lines in \eqref{eq:a:6v-binomial-measure}, using \eqref{eq:a:rewrite-factor}. We claim it suffices to show the partition functions converge, that is,
    \be\label{eq:a:partition-fn-convergence}
        Z^{\sv}_{V_\eps,u}
        :=
        \nu^{\sv,\mathrm{bin}}_{V_\eps,u}[F_\eps(\chi^\bullet,\xi^\bullet)]
        \to
        \nu^{\sv,\poi}_{F,\Delta}[F(\chi^\bullet,\xi^\bullet)]
        =:
        \ul{Z}^\sv_{F,\Delta}.
    \ee
    Certainly $F_\eps(\chi^\bullet,\xi^\bullet)\to F(\chi^\bullet,\xi^\bullet)$ pointwise, they are all bounded by $K^{\# \ \mathrm{poisson \ points}}$ for some large $K$, which is integrable under all the measures. The convergence \eqref{eq:a:partition-fn-convergence} follows by the dominated convergence theorem. Then we have for any bounded, continuous function $G$ that 
    \be
        \nu^{\sv,\mathrm{bin}}_{V_\eps,u}[
        \tfrac{1}{Z^{\sv}_{V_\eps,u}}
        F_\eps(\chi^\bullet,\xi^\bullet)
        G(\chi^\bullet,\xi^\bullet)]
        \to
        \nu^{\sv,\poi}_{F,\Delta}[
        \tfrac{1}{\ul{Z}^{\sv}_{F,\Delta}}
        F(\chi^\bullet,\xi^\bullet)
        G(\chi^\bullet,\xi^\bullet)],
    \ee
    by the same argument that gave us \eqref{eq:a:partition-fn-convergence}, and the convergence in the lemma is proved. The FKG inequality follows as a simple corollary.
\end{proof}

As a final word about the triple $(\sigma^\bullet,\sigma^\circ,\xi^\bullet)$, in \cite{glazman-lammers}, the authors do not consider the discrete model on the cylinder. Since we use that the space-time measure on the cylinder satisfies the FKG inequality in the proof of Theorem \ref{thm:main-xxz}, we need to check that the FKG inequality holds for the discrete measure here. We define the measure $\mu^{\sv,\bullet}_{\cyl_{L,M}^{\mathrm{diag}},a,b,c}$ by setting $V$ in the rotated by 45 degrees lattice $\ZZ^2_1$ to be the interior of rectangular circuit $\gamma\subset E^\bullet_1$ of width $L$ and height $M$, and setting boundary conditions given by $\bullet$ on the top and bottom rows, and periodic boundary conditions on the sides.

\begin{lemma}\label{lem:a:FKG-cylinder}
    For $c\ge a,b$ and all $L,M>0$, under the measures  $\mu^{\sv,\bullet}_{\cyl_{L,M}^{\mathrm{diag}},a,b,c}$ and $\mu^{\sv,\bullet}_{\cyl_{L,M},a,b,c}$ the triple $(\sigma^\bullet,\xi^{\bplus},\xi^{\bminus})$ (and the triple $(\sigma^\circ,\xi^{\wplus},\xi^{\wminus})$) satisfies the FKG inequality. 
\end{lemma}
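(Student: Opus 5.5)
The plan is to follow the proof of FKG in \cite[Proposition 4.9]{glazman-lammers}, which checks the Holley/FKG lattice condition for the joint law of $(\sigma^\bullet,\xi^{\bplus},-\xi^{\bminus})$ and uses irreducibility of a suitable Markov chain. Neither step uses planarity of the domain except in one place: one needs that the spin pair $(\sigma^\bullet,\sigma^\circ)$ comes from a globally well-defined height function, so that the ice-rule constraint $\om[\sigma^\bullet]\cap\om[\sigma^\circ]=\es$ is the only constraint. So the first step is to note that under the $\bullet$ boundary conditions on the top and bottom rows, the height function obtained from \eqref{eq:sv-spin-hf} accumulates zero change around the noncontractible cycle, as was already observed in the proof of Lemma \ref{lem:cylinder-6v-converges}. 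Equivalently, the spin representation on the cylinder is just $\mu^{\sv}$ restricted to pairs of spin configurations on the cylindrical graphs $G^\bullet,G^\circ$ with disjoint domain walls, with $\sigma^\bullet$ constant on each of $\partial^\pm_{F^\bullet}$. In spin language no winding condition appears, since spins on a cylinder are automatically single valued; the only extra constraint is the constancy of $\sigma^\bullet$ on $\partial^\pm$.

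Second, I will write the joint weight of $(\sigma^\bullet,\sigma^\circ,\xi^\bullet)$ in product form, as in \eqref{eq:product-weights-kappa-xi}, and sum out $\sigma^\circ$. Given $(\sigma^\bullet,\xi^\bullet)$, $\sigma^\circ$ is uniform on the clusters of $(\xi^\bullet)^*$, giving a factor $2^{k((\xi^\bullet)^*)}$. On the cylinder, $k((\xi^\bullet)^*)$ must be computed on the dual cylinder graph, and Euler's formula acquires a correction from whether $\xi^\bullet$ (or its dual) contains a noncontractible loop. I would handle this the way FK percolation on a torus or cylinder is handled. Because $\sigma^\bullet$ is frozen to a constant on each boundary row, the boundary row can be treated as wired. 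The dual clusters are then exactly the clusters of a planar annulus graph with wired inner and outer boundary, where a noncontractible dual cluster occurs exactly when no $\xi^\bullet$ crossing joins top to bottom. This factor is monotone in the same direction as in the planar case.

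Third comes the lattice condition. Following \cite{glazman-lammers}, it suffices to verify the Holley criterion for single-site and single-edge updates: flipping $\sigma^\bullet$ at one face, and toggling $\xi^{\bplus}$ or $\xi^{\bminus}$ on one edge. The one genuinely new update is flipping $\sigma^\bullet$ on a whole boundary row $\partial^\pm_{F^\bullet}$, as in the type 2 moves of Lemma \ref{lem:cyl-hf-fkg}. I plan to show that this is a concatenation of single-face moves through admissible configurations, so the criterion transfers to it. Irreducibility follows by the same reduction argument as in Lemma \ref{lem:cyl-hf-fkg}. For the diagonal cylinder $\cyl^{\mathrm{diag}}_{L,M}$ the argument is identical; only the shape of the graph changes. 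The triple $(\sigma^\circ,\xi^{\wplus},\xi^{\wminus})$ follows by symmetry, exchanging the roles of black and white. The cylinder-specific changes and what fails on the periodic boundary can be cross-checked against Appendix \ref{appendix:FKG}.

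I expect the main obstacle to be the topological term in $k((\xi^\bullet)^*)$. One has to check that the connectivity factor still satisfies the FKG lattice condition (with the condition $c\ge a,b$ giving $b/c,a/c\le1$, as in the planar case) when noncontractible clusters are allowed. My first attempt is to show that wiring both boundary rows turns the cylinder into a planar annulus with two wired boundaries, so that the standard FK argument for $q=2\ge1$ applies directly.
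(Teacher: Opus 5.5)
Your outline is essentially the paper's proof. The paper only says that the argument of \cite[Proposition 4.9]{glazman-lammers} applies verbatim, because with $\bullet$ boundary conditions on both rows the height function is single-valued on the cylinder, so none of the torus modifications of \cite[Section 5]{glazman-lammers} are needed. Your extra steps are a correct way to make this explicit: since $\sigma^\bullet$ is constant on each boundary row you treat that row as wired, and contracting both rows turns the surface into a sphere, so Euler's formula for $k((\xi^\bullet)^*)$ has no winding correction. One adjustment is needed. A boundary-row flip cannot go through admissible single-face intermediates, since those break the constancy of $\sigma^\bullet$ on the row; instead, check it directly as a single-site update of the contracted row vertex, whose weight ratio is the product of the single-edge ratios along that row.
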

\begin{proof}
    The proof of the lemma is identical to that of \cite[Proposition 4.9]{glazman-lammers}. In particular, one can observe that on the cylinder with the boundary conditions $\bullet$ on the top and bottom of the cylinder, the height function is well defined (up to a constant shift), so one does not have to make the same adjustments as on the torus in \cite[Section 5]{glazman-lammers}.
\end{proof}

\printbibliography

@article{HDJS-infAT,
    author = {Yuan Huang and Youjin Deng and Jesper Lykke Jacobsen and Jesús Salas},
    title = "{The Hintermann – Merlini – Baxter – Wu and the infinite-coupling-limit Ashkin–Teller models}",
    journal = {Nuclear Physics B},
    volume = {868},
    number = {2},
    pages = {492-538},
    year = {2013},
}

@article{karol1,
      title="{On condensation properties of Bethe roots associated with the XXZ chain}", 
      author={Karol K. Kozlowski},
    journal = {Communications in Mathematical Physics},
    volume = {357},
    year = {2018},
}

@article{mccoy68xy,
  title = "{Spin Correlation Functions of the XY Model}",
  author = {McCoy, Barry},
  journal = {Physical Review},
  volume = {173},
  issue = {2},
  pages = {531--541},
  numpages = {0},
  year = {1968},
  month = {Sep},
  publisher = {American Physical Society},
}

@article{chayes-shtengel,
    author = {Lincoln Chayes and Kirill Shtengel},
    title = "{Vertex models and random labyrinths:
phase diagrams for ice-type vertex
models}",
    journal = {Journal of Statistical Mechanics},
    year = {2005},
}

@article{dckmo-6v-deloc,
   title="{Delocalization of the height function of the six-vertex model}",
   volume={26},
   number={11},
   journal={Journal of the European Mathematical Society},
   author={Duminil-Copin, Hugo and Karrila, Alex M. and Manolescu, Ioan and Oulamara, Mendes},
   year={2024},
}

@article{dc-tass-renorm,
  title="{Renormalization of Crossing Probabilities in the Planar Random-Cluster Model}",
  author={Hugo Duminil-Copin and Vincent Tassion},
  journal={Moscow Mathematical Journal},
  year={2019},
}

@article{dc-li-mano-FK,
   title="{Universality for the random-cluster model on isoradial graphs}",
   volume={23},
   journal={Electronic Journal of Probability},
   author={Duminil-Copin, Hugo and Li, Jhih-Huang and Manolescu, Ioan},
   year={2018},
}

@article{ryan-orth,
    author = {Ryan, Kieran},
    title = "{On a Class of Orthogonal-Invariant Quantum Spin Systems on the Complete Graph}",
    journal = {International Mathematics Research Notices},
    volume = {2023},
    number = {7},
    pages = {6078-6131},
    year = {2023},
}

@article{chel-hong-izy,
   title="{Conformal invariance of spin correlations in the planar Ising model}",
   volume={181},
   number={3},
   journal={Annals of Mathematics},
   author={Dmitry Chelkak and Clément Hongler and Konstantin Izyurov},
   year={2015},
}

@article{li-manhattan-21,
author = {Linjun Li},
title = "{On the Manhattan pinball problem}",
volume = {26},
journal = {Electronic Communications in Probability},
publisher = {Institute of Mathematical Statistics and Bernoulli Society},
pages = {1 -- 11},
year = {2021},
}

@article{glazman-lammers,
    title="{Delocalisation and Continuity in 2D: Loop $O(2)$, Six-Vertex, and Random-Cluster Models}",
    author={Alexander Glazman and Piet Lammers},
    year={2025},
    journal={Communications in Mathematical Physics},
    volume={406},
}

@article{burton-keane,
    title="{Density and uniqueness in percolation}",
    author={Robert Burton and Michael Keane},
    year={1989},
    journal={Communications in Mathematical Physics},
    volume={121},
}

@article{glaz-peled,
   title="{On the transition between the disordered and antiferroelectric phases of the 6-vertex model}",
   volume={28},
   journal={Electronic Journal of Probability},
   publisher={Institute of Mathematical Statistics},
   author={Glazman, Alexander and Peled, Ron},
   year={2023},
   }

@article{mccoy-wu,
    author = {Barry McCoy and Tai Tsun Wu},
    title = "{Ising model correlation functions: difference equations
and applications to gauge theory}",
    journal = {Nonlinear integrable systems—classical theory and
quantum theory},
    year = {1983}
}

@article{eng-lis,
    title="{An Elementary Proof of Phase Transition in the Planar XY Model}",
    author={Engelenburg, Diederik van and Marcin Lis},
    year={2022},
    journal={Communications in Mathematical Physics},
    volume={399}
}

@article{koz-sid,
    title="{Lower bound for the escape probability in the Lorentz Mirror Model on the lattice}",
    author={Gady Kozma and Vladas Sidoravicius},
    year={2013},
    journal={Israel Journal of Mathematics},
    volume={209},
}

@book{israel-book,
    title="{Convexity in the theory of lattice gases}",
    author={Robert Israel},
    year={1979},
    publisher={Princeton University Press},
}

@article{nienhuis-rietman,
      title="{A Solvable Model for Intersecting Loops}", 
      author={Bernard Nienhuis and Ronald Rietman},
      year={1993},
      eprint={hep-th/9301012},
      archivePrefix={arXiv},
}

@article{martins-nienhuis-rietman,
  title = {Intersecting Loop Model as a Solvable Super Spin Chain},
  author = {Marcio Jose Martins and Bernard Nienhuis and Ronald Rietman},
  journal = {Physical Review Letters},
  volume = {81},
  pages = {504--507},
  year = {1998},
}

@article{lis-lopez-heeney,
      title="{The Ising magnetisation field and the Gaussian free field}", 
      author={Tomás Alcalde López and Lorca Heeney and Marcin Lis},
      year={2026},
      eprint={2602.05886},
      archivePrefix={arXiv},
}

@article{ray-spinka-2,
   title="{Finitary codings for gradient models and a new graphical representation for the six‐vertex model}",
   volume={61},
   number={1},
   journal={Random Structures $\&$ Algorithms},
   author={Ray, Gourab and Spinka, Yinon},
   year={2021},
   pages={193–232}
}

@article{dober,
author = {Moritz Dober},
title = "{On antiferromagnetic regimes in the Ashkin–Teller model}",
volume = {30},
journal = {Electronic Journal of Probability},
pages = {1 -- 30},
year = {2025},
}

@article{chayes-mckellar-winn,
year = {1998},
volume = {31},
number = {45},
author = {Lincoln Chayes and Danica McKellar and Brandy Winn},
title = "{Percolation and Gibbs states multiplicity for ferromagnetic Ashkin-Teller models on $\ZZ^2$}",
journal = {Journal of Physics A: Mathematical and General},
}

@article{ueltschi,
    title="{Random loop representations for quantum spin systems}",
    author={Daniel Ueltschi},
    year={2013},
    journal={Journal of Mathematical Physics},
    volume={54},
    number={8},
}

@book{daley-verejones-1,
    title="{An Introduction to the Theory of Point Processes, Volume I: Elementary Theory and Methods}",
    author={Daryl J. Daley and David Vere-Jones},
    year={2008},
    publisher={Springer},
}

@book{daley-verejones-2,
    title="{An Introduction to the Theory of Point Processes, Volume II: General Theory and Structure}",
    author={Daryl J. Daley and David Vere-Jones},
    year={2008},
    publisher={Springer},
}

@article{bjornberg-thesis,
    title="{Graphical representations of Ising and Potts models}",
    author={Jakob E. Björnberg},
    year={2009},
    journal={PhD Thesis},
}

@article{dc--raoufi-tassion,
    title="{Sharp phase transition for the random-cluster and Potts models via decision trees}",
    author={Hugo Duminil-Copin and Aran Raoufi and Vincent Tassion},
    year={2019},
    journal={Annals of Mathematics},
    volume={189},
    number={1},
    pages={75-99},
}

@article{mccoy-perk-shrock-1,
title = "{Time-dependent correlation functions of the transverse Ising chain at the critical magnetic field}",
journal = {Nuclear Physics B},
volume = {220},
number = {1},
pages = {35-47},
year = {1983},
author = {Barry McCoy and Jacques Perk and Robert Shrock},
}

@article{mccoy-perk-shrock-2,
title = "{Correlation functions of the transverse Ising chain at the critical field for large temporal and spatial separations}",
journal = {Nuclear Physics B},
volume = {220},
number = {3},
pages = {269-282},
year = {1983},
author = {Barry  McCoy and Jacques Perk and Robert Shrock},
}

@article{aoun-dober-glazman,
      title="{Phase diagram of the Ashkin-Teller model}", 
      author={Yacine Aoun and Moritz Dober and Alexander Glazman},
      year={2024},
      journal={Communications in Mathematical Physics},
      volume={405},
      number={2},
      pages={75-99},
}

@article{ott-lammers,
   title="{Delocalisation and absolute-value-FKG in the solid-on-solid model}",
   volume={188},
   journal={Probability Theory and Related Fields},
   author={Lammers, Piet and Ott, Sébastien},
   year={2023}, }

@article{aiz-nacht,
author = {Michael Aizenman and Bruno Nachtergaele},
journal = "Communications in Mathematical Physics",
number = "1",
pages = "17--63",
publisher = "Springer",
title = "Geometric {A}spects of {Q}uantum {S}pin {S}tates",
volume = "164",
year = "1994"
}

@article{toth,
   title="{Improved Lower Bound on the Thermodynamic
Pressure of the Spin 1/2 Heisenberg
Ferromagnet}",
   volume={28},
   journal={Letters in Mathematical Physics},
   publisher={Kluwer Academic Publisher},
   author={Bálint Tóth},
   year={1993},
}

@article{lis-deloc,
   title="{On Delocalization in the Six-Vertex Model}",
   volume={383},
   number={2},
   journal={Communications in Mathematical Physics},
   author={Lis, Marcin},
   year={2021}, }

@article{koehler-schindler-tassion,
    title="{Crossing probabilities for planar percolation}",
    author={Laurin Köhler-Schindler and Vincent Tassion},
    year={2023},
    journal={Duke Mathematical Journal},
    volume={172},
    number={4},
    pages={809-838},
}

@book{lindvall,
    title="{Lectures on the Coupling Method}",
    author={Torgny Lindvall},
    year={1992},
    publisher={Wiley},
}

@book{georgii,
    title="{Gibbs Measures and Phase Transitions}",
    author={Hans-Otto Georgii},
    year={1988},
    publisher={De Gruyter},
}

@book{grimmett-RC,
    title="{The Random Cluster Model}",
    author={Geoffrey Grimmet},
    year={2006},
    publisher={Springer-Verlag},
}

@book{friedli-velenik,
    title="{Statistical Mechanics of Lattice Gases - A Concrete Mathematical Introduction}",
    author={Sacha Friedli and Yvan Velenik},
    year={2017},
    publisher={Cambridge University Press},
}

@article{lis-spins,
author = {Marcin Lis},
title = {{Spins, percolation and height functions}},
volume = {27},
journal = {Electronic Journal of Probability},
pages = {1-21},
year = {2022},
}

@article{sheffield-thesis,
author = {Scott Sheffield},
title = "{Random Surfaces}",
volume = {304},
journal = {Astérisque},
year = {2005},
}

@InProceedings{hugo-notes,
author={Hugo Duminil-Copin},
title={{Lectures on the Ising and Potts Models on the Hypercubic Lattice}},
booktitle="Random Graphs, Phase Transitions, and the Gaussian Free Field",
year="2020",
publisher="Springer International Publishing",
pages="35--161",
}

@article{XY-ising-jullien-fields,
author = {Rémi Jullien and J.N. Fields},
title = "{Equivalence between a spin 1/2 dimerized XY chain and two independent Ising chains in transverse fields}",
journal = {Physics Letters A},
volume = {69},
number = {3},
pages = {214-216},
year = {1978},
}

@article{sheffield-square-ice,
      title={Delocalization of uniform graph homomorphisms from $\mathbb{Z}^2$ to $\mathbb{Z}$}, 
      author={Nishant Chandgotia and Ron Peled and Scott Sheffield and Martin Tassy},
    journal = {Communications in Mathematical Physics},
    volume = {387},
    pages = {621–647},
    year = {2021},
}

@article{dc-square-ice,
      title={Logarithmic variance for the height function of square-ice}, 
      author={Hugo Duminil-Copin and Matan Harel and Benoit Laslier and Aran Raoufi and Gourab Ray},
      year={2022},
    journal = {Communications in Mathematical Physics},
    volume = {396},
    pages = {867–902},
}

@article{dc-etal-exp-decay,
author = {Hugo Duminil-Copin and Ron Peled and Wojciech Samotij and Yinon Spinka},
title = "{Exponential Decay of Loop Lengths in the Loop O(n) Model with Large n}",
journal = {Communications in Mathematical Physics},
volume = {349},
number = {3},
pages = {777-817},
year = {2017},
}

@article{bjornberg-ryan-dimerization,
    title="{Dimerization in $O(n)$-invariant quantum spin chains}",
    author={Jakob Björnberg and Kieran Ryan},
    year={2025},
    eprint={2506.06103},
    archivePrefix={arXiv},
}

@article{li-mirror-poly,
      title="{Polynomial bound for the localization length of Lorentz mirror model on the 1D cylinder}", 
      author={Linjun Li},
      year={2020},
      eprint={2010.05900},
      archivePrefix={arXiv},
}

@article{ryan-mirrors,
author = {Kieran Ryan},
title = "{The Manhattan and Lorentz Mirror Models: A Result on the Cylinder with Low Density of Mirrors}",
journal = {Journal of Statistical Physics},
volume = {185},
number = {2},
year = {2021},
}

@article{elboim-et-al-mirrors,
      title="{Diffusivity of the Lorentz mirror walk in high dimensions}", 
      author={Dor Elboim and Antoine Gloria and Felipe Hernández},
      year={2025},
      eprint={2505.01341},
      archivePrefix={arXiv},
}

@article{dc-etal-bethe-ansatz-exposition,
   title="{The Bethe ansatz for the six-vertex and XXZ models: An exposition}",
   volume={15},
   journal={Probability Surveys},
   publisher={Institute of Mathematical Statistics},
   author={Duminil-Copin, Hugo and Gagnebin, Maxime and Harel, Matan and Manolescu, Ioan and Tassion, Vincent},
   year={2018},
}

@article{pfister-velenik-AT,
   title="{Random-cluster representation of the Ashkin-Teller model}",
   volume={88},
   journal={Journal of Statistical Physics},
   author={Charles-Edouard Pfister and Yvan Velenik},
   year={1997},
}

@book{last-penrose,
    title = "{Lectures on the Poisson process}",
    author={Gunter Last and Mathew Penrose},
    publisher={Cambridge University Press},
    year={2017},
    }

@article{dc-sid-tass-cont,
   title="{Continuity of the Phase Transition for Planar Random-Cluster and Potts Models with ${1 \le q \le 4}$}",
   volume={349},
   journal={Communications in Mathematical Physics},
   author={Hugo Duminil-Copin and Vladas Sidoravicius and Vincent Tassion},
   year={2017},
}

@misc{dc-discont-q>4,
      title="{Discontinuity of the phase transition for the planar random-cluster and Potts models with $q>4$}", 
      author={Hugo Duminil-Copin and Maxime Gagnebin and Matan Harel and Ioan Manolescu and Vincent Tassion},
      year={2017},
   volume={4},
   journal={Annales Scientifiques de l’École Normale Supérieure},
}

@article{li-mahfouf-ising,
    title="{Conformal Invariance in the Quantum Ising Model}",
    author={Jhih-Huang Li and Rémy Mahfouf},
    year={2021},
    eprint={2112.04811},
    archivePrefix={arXiv},
}

@article{dc-etal-GFF-convergence,
    title="{Gaussian free field convergence of the six-vertex model
with $-1\le \Delta \le -\tfrac{1}{2}$}",
    author={Hugo Duminil-Copin and Karol Kajetan Kozlowski and Piet Lammers and Ioan Manolescu},
    year={2026},
    eprint={2603.06268},
    archivePrefix={arXiv},
}

@article{affleck-lieb-XXZ,
   title="{A Proof of Part of Haldane's Conjecture on Spin Chains}",
   volume={12},
   journal={Letters in Mathematical Physics},
   author={Ian Affleck and Elliot Lieb},
   year={1986},
}

@article{ADCW,
   title="{Dimerization and N\'eel order in different quantum spin chains
    through a shared loop representation}",
   volume={21},
   journal={Annales Henri Poincaré},
   author={Michael Aizenman and Hugo Duminil-Copin and Simone Warzel},
   year={2020},
}

@article{ASW,
   author={Francisco Alcaraz and Silvio Salinas and Walter Wreszinski},
   title="{Anisotropic Ferromagnetic Quantum Domains}",
   volume={75},
   journal={Physical Review Letters},
   year={1995},
}

@article{araki-matsui-XY,
   author={Huzihiro Araki and Taku Matsui},
   title="{Ground states of the XY-model}",
   volume={101},
   journal={Communications in Mathematical Physics},
   year={1985},
}

@article{BMNU,
   author={Jakob Bj\"ornberg and Peter M\"uhlbacher and Bruno Nachtergaele and Daniel Ueltschi},
   title="{Dimerization in quantum spin chains with $O(n)$ symmetry}",
   volume={387},
   journal={Communications in Mathematical Physics},
   year={2021},
}

@article{dc-free-energy,
      title="{On the six-vertex model's free energy}", 
      author={Hugo Duminil-Copin and Karol Kajetan Kozlowski and Dmitry Krachun and Ioan Manolescu and Tatiana Tikhonovskaia},
      year={2022},
      journal={Communications in Mathematical Physics},
      volume={395},
}

@article{kit-maillet-terras-corr-fns,
      title="{Correlation functions of the XXZ Heisenberg spin-1/2 chain in a magnetic field}", 
      author={Nikolai Kitanine and Jean-Michel Maillet and Véronique Terras},
      year={2000},
      journal={Nuclear Physics B},
      volume={567},
}

@article{kit-maillet-terras-form-factors,
      title="{Form factors of the XXZ Heisenberg spin-1/2 finite chain}", 
      author={Nikolai Kitanine and Jean-Michel Maillet and Véronique Terras},
      year={1999},
      journal={Nuclear Physics B},
      volume={554},
}

@article{jimbo-miki-miwa-nakayashiki,
   title="{Correlation functions of the XXZ model for $\Delta <- 1$}",
   volume={168},
   author={Jimbo, Michio and Miki, Kei and Miwa, Tetsuji and Nakayashiki, Atsushi},
   year={1992},
   pages={256–263} 
}

@article{jimbo-miwa,
   title="{Quantum KZ equation with |q| = 1 and correlation functions of the XXZ model in the gapless regime}",
   volume={29},
   journal={Journal of Physics A: Mathematical and General},
   author={Jimbo, Michio and Miwa, Tetsuji},
   year={1996},
   pages={2923–2958} 
}

@article{KKMST,
   title="{Algebraic Bethe ansatz approach to the asymptotic behavior of correlation functions}",
   volume={2009},
   journal={Journal of Statistical Mechanics: Theory and Experiment},
   author={Kitanine, Nikolai and Kozlowski, Karol and Maillet, Jean-Michel and Slavnov, Nikita A and Terras, Véronique},
   year={2009},
}

@article{karol2,
   title="{Large-Distance and Long-Time Asymptotic Behavior of the Reduced Density Matrix in the Non-Linear Schrödinger Model}",
   volume={16},
   journal={Annales Henri Poincaré},
   author={Kozlowski, Karol},
   year={2015},
}

@article{karol3,
   title="{On the thermodynamic limit of form factor expansions of dynamical correlation functions in the massless regime of the XXZ spin 1/2 chain}",
   volume={59},
   journal={Journal of Mathematical Physics},
   author={Kozlowski, Karol},
   year={2018},
}

@article{karol4,
   title="{On singularities of dynamic response functions in the massless regime of the XXZ spin-1/2 chain}",
   volume={62},
   journal={Journal of Mathematical Physics},
   author={Kozlowski, Karol},
   year={2021},
}

@article{karol5,
   title="{Long-distance and large-time asymptotic behaviour of dynamic correlation functions in the massless regime of the XXZ spin-1/2 chain}",
   volume={60},
   journal={Journal of Mathematical Physics},
   author={Kozlowski, Karol},
   year={2019},
}

@article{luther-peschel,
      title="{Calulation of critical exponents in two dimensions from quantum field theory in one dimension}", 
      author={Alan Luther and Ingo Peschel},
      year={1975},
      journal={Physical Review B},
      volume={12},
}

@article{haldane-1,
      title="{General relation of correlation exponents and spectral properties of one dimensional Fermi systems: Application to the anisotropic s=1/2 Heisenberg chain}", 
      author={Duncan Haldane},
      year={1980},
      journal={Physical Review Letters B},
      volume={45},
}

@article{haldane-2,
title = "{Demonstration of the “Luttinger liquid” character of Bethe-ansatz-soluble models of 1-D quantum fluids}",
journal = {Physics Letters A},
volume = {81},
pages = {153-155},
year = {1981},
author = {Duncan Haldane}
}

@article{haldane-3,
year = {1981},
volume = {14},
author = {Duncan Haldane},
title = "{'Luttinger liquid theory' of one-dimensional quantum fluids. I. Properties of the Luttinger model and their extension to the general 1D interacting spinless Fermi gas}",
journal = {Journal of Physics C: Solid State Physics},
}

@article{lukyanov-zamadolo,
title = "{Exact expectation values of local fields in the quantum sine-Gordon model}",
journal = {Nuclear Physics B},
volume = {493},
pages = {571-587},
year = {1997},
author = {Sergei Lukyanov and Alexander Zamolodchikov},
}

@article{lukyanov,
  title = "{Correlation amplitude for the XXZ spin chain in the disordered regime}",
  author = {Lukyanov, Sergei},
  journal = {Physical Review B},
  volume = {59},
  year = {1999},
}

@article{lukyanov-terras,
title = "{Long-distance asymptotics of spin–spin correlation functions for the XXZ spin chain}",
journal = {Nuclear Physics B},
volume = {654},
pages = {323-356},
year = {2003},
author = {Sergei Lukyanov and Véronique Terras},
}

@article{gottstein-werner,
    title="{Ground states of the infinite q-deformed Heisenberg ferromagnet}",
    author={Carl-Tesla Gottstein and Reinhard Werner},
    year={1995},
    eprint={cond-mat/9501123},
    archivePrefix={arXiv},
}

@article{dorlas-samsonov,
      title="{On the thermodynamic limit of the 6-vertex model}", 
      author={Teunis C. Dorlas and Maxim Samsonov},
      year={2009},
      eprint={0903.2657},
      archivePrefix={arXiv},
}

@article{yang-yang-1,
   author={Chen Ning Yang and Chen Ping Yang},
   title="{One dimensional chain of anisotropic spin-spin interactions. I. Proof of Bethe’s
hypothesis}",
   volume={150},
   journal={Physical Review},
   year={1966},
}

@article{koma-nacht-XXZ-ferro,
   author={Tohru Koma and Bruno Nachtergaele},
   title="{The Complete set of ground states of the ferromagnetic XXZ chains}",
   volume={2},
   journal={Advances in Theoretical and Mathematical Physics},
   year={1998},
}

@book{KBI-book,
    title="{Quantum Inverse Scattering Method and Correlation Functions}",
    author={Nikolai Bogoliubov and Anatoly Izergin and Vlamdimir Korepin},
    year={1993},
    publisher={Cambridge University Press},
}

@article{lieb-schultz-mattis,
   author={Elliot Lieb, Theodore Schultz, Daniel Mattis},
   title="{Two Soluble Models of an Antiferromagnetic Chain}",
   volume={16},
   journal={Annals of Physics},
   year={1961},
}

@article{matsui-XXZ-antiferro,
   author={Taku Matsui},
   title="{On the absence of non-periodic ground states for the antiferromagnetic XXZ model}",
   volume={253},
   journal={Communications in Mathematical Physics},
   year={2005},
}

@article{matsui-XXZ-ferro,
   author={Taku Matsui},
   title="{On Ground States of the One-Dimensional Ferromagnetic XXZ Model}",
   volume={37},
   journal={Letters in Mathematical Physics},
   year={1996},
}

@article{nacht-uelt,
   author={Bruno Nachtergaele and Daniel Ueltschi},
   title="{A direct proof of dimerization in a family of SU (n)-invariant quantum spin chains}",
   volume={107},
   journal={Letters in Mathematical Physics},
   year={2017},
}

@book{sutherland-book,
    title="{Beautiful Models: 70 Years of Exactly Solved Quantum Many-Body Problems}",
    author={Bill Sutherland},
    year={2004},
    publisher={IOP Publishing},
}

@book{takahashi-book,
    title="{Thermodynamics of One-Dimensional Solvable Models}",
    author={Minoru Takahashi},
    year={1999},
    publisher={Cambridge University Press},
}

@book{tasaki-book,
    title="{Physics and mathematics of quantum many-body systems}",
    author={Hal Tasaki},
    year={2020},
    publisher={Springer},
}

\end{document}